\definecolor{light}{gray}{0.85}
 \newenvironment{trailer}[1]%
    {
    {{{\color{light}\rule{5mm}{1mm}}\hspace{2mm}}\textbf{#1}
    \xhrulefill{light}{1mm}\em \\}%
 }
\newenvironment{warning}[1]%
    {
    {{{\color{light}\rule{5mm}{1mm}}\hspace{2mm}}\textbf{#1}
    \xhrulefill{light}{1mm}\em \\}%
 }
\newenvironment{important}[1]%
    {
    {{{\color{light}\rule{5mm}{1mm}}\hspace{2mm}}\textbf{#1}
    \xhrulefill{light}{1mm}\em \\}%
 }
 \newenvironment{question}[1]%
     {
     {{{\color{light}\rule{5mm}{1mm}}\hspace{2mm}}\textbf{#1}
     \xhrulefill{light}{1mm}\em \\}%
  }
\newenvironment{backgroundinformation}[1]%
    {
    {{{\color{light}\rule{5mm}{1mm}}\hspace{2mm}}\textbf{#1}
    \xhrulefill{light}{1mm}\em \\}%
 }
\newenvironment{programcode}[1]%
    {
    {{{\color{light}\rule{5mm}{1mm}}\hspace{2mm}}\textbf{#1}
    \xhrulefill{light}{1mm}\em \\}%
 }
\newtheorem{ntheorem}{Theorem}[chapter]
\newtheorem{nexercise}[ntheorem]{Exercise}
\newtheorem{nlemma}[ntheorem]{Lemma}
\newtheorem{ndefinition}[ntheorem]{Definition}
\newtheorem{nproposition}[ntheorem]{Proposition}
\newtheorem{nconjecture}[ntheorem]{Conjecture}
\newtheorem{ncorollary}[ntheorem]{Corollary}
\newtheorem{nexample}[ntheorem]{Example}
\newtheorem{nremark}[ntheorem]{Remark}
\newcommand{\RM}{\operatorname{RM}}
\newcommand{\F}{\mathbb{F}}
\newcommand{\N}{\mathbb{N}}
\newcommand{\Q}{\mathbb{Q}}
\newcommand{\R}{\mathbb{R}}
\newcommand{\Z}{\mathbb{Z}}
\newcommand{\wt}{\operatorname{wt}}
\newcommand{\dH}{\operatorname{d}_{\text{H}}}
\newcommand{\Aut}{\operatorname{Aut}}
\newcommand{\ba}{\mathbf{a}}  
\newcommand{\bb}{\mathbf{b}} 
\newcommand{\bc}{\mathbf{c}}
\newcommand{\bd}{\mathbf{d}}
\newcommand{\be}{\mathbf{e}}
\newcommand{\bg}{\mathbf{g}}
\newcommand{\bh}{\mathbf{h}}
\newcommand{\bu}{\mathbf{u}}
\newcommand{\bv}{\mathbf{v}}
\newcommand{\bx}{\mathbf{x}}
\newcommand{\by}{\mathbf{y}}
\newcommand{\bz}{\mathbf{z}}
\newcommand{\PG}{\operatorname{PG}}
\newcommand{\AG}{\operatorname{AG}}
\newcommand{\cB}{\mathcal{B}}
\newcommand{\cC}{\mathcal{C}}
\newcommand{\cD}{\mathcal{D}}
\newcommand{\cG}{\mathcal{G}}
\newcommand{\cH}{\mathcal{H}}  
\newcommand{\cK}{\mathcal{K}}
\newcommand{\cM}{\mathcal{M}}
\newcommand{\cN}{\mathcal{N}}
\newcommand{\cP}{\mathcal{P}}
\newcommand{\cS}{\mathcal{S}}
\newcommand{\cQ}{\mathcal{Q}}
\newcommand{\cT}{\mathcal{T}}
\newcommand{\cU}{\mathcal{U}}
\newcommand{\cV}{\mathcal{V}}
\newcommand{\qbin}[3]{\genfrac{[}{]}{0pt}{}{#1}{#2}_{#3}}
\newcommand{\spaces}[2]{\genfrac{[}{]}{0pt}{}{#1}{#2}}
\newcommand{\neff}{n_{\operatorname{eff}}} 
\newcommand{\snumb}[3]{s_{#3}(#1,#2)} % r, i, q
\newcommand{\frobenius}[2]{\mathrm{F}_{#2}(#1)} % q, r 
\newcommand{\supp}{\operatorname{supp}}
\newcommand{\zv}{\mathbf{0}} 
\newcommand{\Res}{\operatorname{Res}}
\newcommand{\rem}{\operatorname{rem}}
\def\llceil{\lceil\kern-3.2pt\lceil}
\def\rrceil{\rceil\kern-3.2pt\rceil}
\def\llfloor{\lfloor\kern-3.2pt\lfloor}
\def\rrfloor{\rfloor\kern-3.2pt\rfloor}
\def\leftllceil{\left\lceil\kern-3.2pt\left\lceil}
\def\rightrrceil{\right\rceil\kern-3.2pt\right\rceil}
\def\leftllfloor{\left\lfloor\kern-3.2pt\left\lfloor}
\def\rightrrfloor{\right\rfloor\kern-3.2pt\right\rfloor}
\def\bigllfloor{\bigg\lfloor\kern-3.2pt\bigg\lfloor}
\def\bigrrfloor{\bigg\rfloor\kern-3.2pt\bigg\rfloor}  
\begin{document}
\date{20.12.2025}
\title{Divisible Codes}
\author{Sascha Kurz\\\footnotesize sascha.kurz@uni-bayreuth.de}
\publishers{\footnotesize\begin{flushleft}\textbf{Abstract} A linear code over $\mathbb{F}_q$ with the Hamming metric is called  $\Delta$-divisible if the weights of all codewords are divisible by $\Delta$.  They have been introduced by 
Harold Ward a few decades ago \cite{ward1981divisible}. Applications include subspace codes, partial spreads, vector space partitions, and distance 
optimal codes. The determination of the possible 
 lengths of projective divisible codes is an interesting and comprehensive challenge.\end{flushleft}}

\maketitle
\pagenumbering{roman}
\tableofcontents
\pagebreak
\pagenumbering{arabic}

\chapter{Introduction}
\label{sec_introduction}
A \emph{linear code} $C$ of \emph{length} $n$ is a subspace of the vector space $\F_q^n$ of $n$-tuples with entries in the finite field $\F_q$, where the field size $q$ is a prime 
power $p^m$. The \emph{(Hamming) weight} $\wt(\bc)$ of each \emph{codeword} $\bc\in C$ is the number of non-zero coordinates of $\bc$, i.e., 
$\wt(\bc):=\# \left\{1\le i\le n\,:\, c_i\neq 0\right\}$. With this, the \emph{Hamming distance} between two codewords $\bc$ and $\bc'$ is given by $\dH(\bc,\bc')=\wt(\bc-\bc')$. In other 
words, the Hamming distance counts the number of coordinates that differ between two codewords. A linear code $C$ is called \emph{$\Delta$-divisible} iff the weights of all codewords 
are divisible by $\Delta$.\footnote{A non-linear code is called $\Delta$-divisible if every distance between a pair of codewords is divisible by $\Delta$. The study of divisible 
codes is also of interest in other metrics besides the Hamming metric, see e.g.\ \cite{polverino2022divisible} for rank metric codes.} Note that every linear code is $1$-divisible, 
so that one mostly considers the cases $\Delta>1$ only. If $\Delta=2$ or $\Delta=4$ we also speak of \emph{even} or \emph{doubly-even} codes, respectively.  

\begin{nexample}
  The first order binary (generalized) Reed--Muller code $\operatorname{RM}_2(4,1)$ of length $2^4=16$ given by the generator matrix
  $$
    \begin{pmatrix}
      1111111111111111 \\ 
      1111111100000000 \\ 
      1111000011110000 \\ 
      1100110011001100 \\ 
      1010101010101010  
    \end{pmatrix},
  $$
  has weight enumerator $1+30x^8+1x^{16}$, i.e., the code is $8$-divisible.
\end{nexample}

\section{An introductory application}
\label{subsec_introductory_application}

Consider binary vectors of length $9$, i.e., elements of $\F_2^9$. The span $\left\langle v_1,\dots, v_r\right\rangle$ of a sequence of those vectors 
forms a subspace, i.e., a subset of $\F_2^9$ that is closed under addition and scalar multiplication. For the vectors 
\begin{eqnarray*}
  \bv^1 &=& (1,0,0,0,1,1,1,0,0),\\ 
  \bv^2 &=& (1,1,0,0,0,1,0,1,1),\\
  \bv^3 &=& (0,1,0,0,1,0,1,1,1),\text{ and} \\ 
  \bv^4 &=& (0,0,0,1,0,1,1,0,0) 
\end{eqnarray*}
the set
$$
  \left\langle \bv^1,\dots, \bv^4\right\rangle:=\left\{\sum_{i=1}^4 \lambda_i\bv^i \,:\, \lambda_i\in\F_2 \,\forall 1\le i\le 4\right\}
$$
consists of $8$ elements and is a $3$-dimensional subspace, i.e., it admits a basis of size $3$ and contains $2^3$ elements. Note that we are using 
row vectors for the elements of $\F_2^9$.
\begin{nexercise}
  Compute a basis of $\left\langle \bv^1,\dots, \bv^4\right\rangle$ using the Gaussian elimination algorithm (over $\F_2$).
\end{nexercise}
Note that each non-empty subspace $S$ (of $\F_2^9$) contains the all-zero vector $\zv$. Now we want to consider the following 
packing question: Do there exist $20$ four-dimensional and $30$ three-dimensional subspaces in $\F_2^9$ such that their pairwise intersection is trivial, 
i.e., the intersection consists just of the zero vector $\zv$?

In order to answer this question we first observe that each $k$-dimensional subspace of $\F_2^9$, where $0\le k\le 9$, consists of exactly 
$2^k-1$ non-zero vectors. So, the $20$ four-dimensional and the $30$ three-dimensional subspaces cover exactly 
$$
  20\cdot \left(2^4-1\right)+30\cdot\left(2^3-1\right)=510
$$
of the $511$ non-zero vectors in $\F_2^9$. In other words, there would be exactly one uncovered non-zero vector $\bu$. This does not yield a 
contradiction directly, but we may consider the set of covered non-zero vectors $\bv$ that satisfy $\bv\bh^\top=0$ for some (row-) vector 
$\bh\in\F_2^9\backslash\{\zv\}$. For an arbitrary four-dimensional subspace $S$ and an arbitrary three-dimensional subspace $E$ we have
\begin{equation}
  \label{eq_hyp_intersection_1}
  \left|\left\{\bv\in S\,:\, \bv\in\F_2^9\backslash\{\zv\}, \bv\bh^\top=0 \right\}\right|\in\{7,15\}
\end{equation}
and
\begin{equation}
  \label{eq_hyp_intersection_2}
  \left|\left\{\bv\in E\,:\, \bv\in\F_2^9\backslash\{\zv\}, \bv\bh^\top=0 \right\}\right|\in\{3,7\}.
\end{equation}

\begin{nexercise}
  For $1\le k\le n$ let $S$ be a $k$-dimensional subspace of $\F_q^n$ and $\bh\in \F_q^n\backslash\{\zv\}$. Show that the 
  set $\left\{\bv\in S\,:\, \bv\in\F_q^n, \bv\bh^\top=0 \right\}$ is a subspace of dimension $k$ or $k-1$.
\end{nexercise}

 From (\ref{eq_hyp_intersection_1})  and (\ref{eq_hyp_intersection_2}) we can conclude that the number of non-zero vectors $v$ that satisfy $\bv\bh^\top=0$ and are 
 covered by one of the $20+30=50$ subspaces is congruent to $3$ modulo $4$. Thus, the total number of covered non-zero vectors satisfying $\bv\bh^\top=0$ is even, so  
 that the number of uncovered non-zero vectors being perpendicular to $\bh$ is odd. Since $\bu$ is the unique non-zero vector that is not contained in 
 one of the $50$ subspaces, we have $\bu\bh^\top=0$ for all $\bh\in\F_2^9\backslash\{\zv\}$. This implies $\bu=\zv$, which is a contradiction. Thus, 
 no such $20$ four-dimensional and $30$ three-dimensional subspaces can exist in $\F_2^9$.

While our argument and example is rather ad-hoc, something more general is hiding behind the scenes. The problem is an existence question for so-called 
vector space partitions. The set of covered non-zero vectors can be associated with a linear code $C_0$ of (effective) length $510$ and the complement, i.e., 
the set of uncovered non-zero vectors, can be associated with a linear code $C_1$ of (effective) length $1$. As we will see in Lemma~\ref{lem:union_subspaces} and 
Lemma~\ref{lemma_t_complement}, both codes 
$C_0$ and $C_1$ have to be $4$-divisible. However, there is no $4$-divisible binary linear code of effective length $1$. In other words, the non-existence 
of $\Delta$-divisible codes with a certain effective length certifies the non-existence of a vector space partition of a certain type. For the details on 
vector space partitions we refer to Section~\ref{sec_vector_space_partitions} and for non-existence results for divisible codes we refer to 
Section~\ref{sec_lengths_of_divisible_codes} and Section~\ref{sec_nonexistence_projective_q_r}.   

\chapter{Preliminaries}
\label{sec_preliminiaries}
Let $C\subseteq\F_q^n$ be a linear code over $\F_q$. If $C$ is a $k$-dimensional subspace, we say that $C$ is an \emph{$[n,k]_q$-code}. The number $k$ is called 
the \emph{dimension} of $C$ and $n$ its \emph{length}. Note that $n\ge k$ and that we will assume $k\ge 1$ in the following. If $q=2$, $q=3$, or $q=4$, we speak of a \emph{binary}, 
a \emph{ternary}, or a \emph{quaternary} code, respectively. The \emph{support} $\supp(\bx)$ of a vector $\bx=\left(x_1,\dots,x_n\right)\in\F_q^n$ is the 
set of indices  of the non-zero coordinates, i.e., $\supp(\bx):=\left\{1\le i\le n\,:\, x_i\neq 0\right\}$. With this, we have $\wt(\bc)=\#\supp(\bc)$ for each codeword $\bc\in C$.  
The number $\#C$ of codewords of $C$ is given by $q^k$. Given a basis $\bg^1,\dots,\bg^k\in\F_q^n$ of an $[n,k]_q$-code $C$ we call the matrix 
$$
  G=\begin{pmatrix}
    \bg^1\\ 
    \vdots\\
    \bg^k
  \end{pmatrix}
  =
  \begin{pmatrix}
    g_1^1 & g_2^1 & \dots & g_n^1 \\ 
    \vdots & \vdots & \ddots & \vdots \\ 
    g_1^k & g_2^k & \dots & g_n^k
  \end{pmatrix}\in\F_q^{k\times n}  
$$
a \emph{generator matrix} of $C$, where $\bg^i=\left(g_1^i,\dots,g_n^i\right)\in\F_q^n$ for all $1\le i\le k$. An example is given by
\begin{equation}
  \label{eq_generator_matrix_example}
  G=\begin{pmatrix}
    2 & 1 & 0 & 1 & 0 & 2 \\ 
    1 & 0 & 0 & 2 & 0 & 1 \\
    0 & 1 & 0 & 0 & 1 & 0
  \end{pmatrix}\in\F_3^{3\times 6},    
\end{equation}
where we denote the elements of $\F_p\cong \Z/p\Z$ by $\{0,1,\dots,p-1\}$ if the field size equals a prime $p$. If $q=p^m$ with 
$m>1$, then for each irreducible polynomial $f$ of degree $m$ over $\F_q$ we have $\F_q\cong \F_q[x]/f$. As representatives we 
choose polynomials of degree at most $m-1$ with coefficients in $\{0,1,\dots,p-1\}$.  
\begin{nexercise}
  Verify that each $[n,k]_q$-code admits $\prod_{i=0}^{k-1} \left(q^k-q^i\right)$ different bases, i.e., different generator matrices.
\end{nexercise}
Applying any sequence of row operations of the Gaussian elimination algorithm to $G$ gives another generator matrix of $G$. For 
our example in (\ref{eq_generator_matrix_example}) the Gaussian elimination algorithm gives the generator matrix
$$
 \begin{pmatrix}
    1 & 0 & 0 & 2 & 0 & 1 \\ 
    0 & 1 & 0 & 0 & 0 & 0 \\
    0 & 0 & 0 & 0 & 1 & 0
  \end{pmatrix}.
$$ 

Let $\Aut\!\left(\F_q^n\right)$ be the group of semilinear transformations of $\F_q^n$ that leave the Hamming distance invariant. 
For each transformation $\mu\in\Aut\!\left(F_q^n\right)$ we can find a permutation $\pi$ of the set $\{1,\dots, n\}$, non-zero 
field elements $a_i\in\F_q\backslash\{0\}$, where $1\le i\le n$, and a field automorphism $\alpha$ of $\F_q$ such that
\begin{equation}
  \mu\!\left(\left(x_1,\dots,x_n\right)\right)=\left(\alpha\!\left(a_1x_{\pi(1)}\right)\!,\alpha\!\left(a_2x_{\pi(2)}\right)\!,\dots, 
  \alpha\!\left(a_nx_{\pi(n)}\right)\right)
\end{equation}
for all $\left(x_1,\dots,x_n\right)\in\F_q^n$. Two codes $C,C'\subseteq \F_q^n$ are said to be \emph{equivalent} or \emph{isomorphic} 
if a transformation $\mu\in\Aut\!\left(\F_q^n\right)$ exists such that $\mu(C)=C'$. The \emph{automorphism group} $\Aut(C)$ 
of a code $C\subseteq \F_q^n$ is the group
\begin{equation}
  \Aut(C):=\left\{\mu\in\Aut\!\left(\F_q^n\right)\,:\, \mu(C)=C\right\}. 
\end{equation}
Note that for the binary field we only have to consider permutations of the set $\{1,\dots, n\}$ of coordinate positions. So, by applying 
row operations and column permutations we can conclude that for each $[n,k]_q$-code $C$ there exists a generator matrix $G$ 
of an equivalent code $C'$ with generator matrix $G'$ whose leftmost part is a $k\times k$ \emph{unit-matrix} $I_k$. Such a matrix $G'$ is 
called \emph{systematic generator matrix}. In our example, generated by the matrix in 
Equation~(\ref{eq_generator_matrix_example}), a systematic generator matrix is given by
\begin{equation}
 G'=
  \begin{pmatrix}
    1 & 0 & 0 & 1 & 2 & 0 \\ 
    0 & 1 & 0 & 0 & 0 & 0 \\
    0 & 0 & 1 & 0 & 0 & 0
  \end{pmatrix}.\label{eq_generator_matrix_example_2}
\end{equation}

Note that the third column of the generator matrix $G$ in (\ref{eq_generator_matrix_example}), or the sixth column of the generator matrix $G'$ in 
(\ref{eq_generator_matrix_example_2}), is the zero vector $\zv$. The number $\neff$ of non-zero column vectors 
in a generator matrix $G$ of an $[n,k]_q$-code $C$ is called the \emph{effective length} of $C$. By $\supp(C):=\cup_{\bc\in C}\supp(\bc)$ we denote the support of a 
code $C$, so that $\#\supp(C)=\neff(C)$. If $\neff=n$, then $C$ is also called \emph{spanning} or of \emph{full length}. In our example we have effective length $\neff=5$.

The $\emph{minimum (Hamming) distance}$ of a linear code $C$ is given by
\begin{equation}
  d(C)=\min \{\dH(\bc,\bc')\,:\, \bc,\bc'\in C,\bc\neq \bc'\}=\min\{\wt(\bc)\,:\, \bc\in C\}.
\end{equation}
An $[n,k,d]_q$-code is an $[n,k]_q$-code with minimum distance $d$. If the weights of all non-zero codewords of an $[n,k]_q$-code $C$ are contained in 
$W=\left\{w_1,\dots,w_l\right\}$, then we speak of an $[n,k,W]_q$-code. By $A_w(C)\in\N_0$ we denote the number of codewords of weight $w$ in $C$, where $0\le w\le n$. So, we have 
$A_w(C)=0$ for all $0< w<d(C)$. The sequence of all weights can be summarized in the \emph{homogeneous weight enumerator}
\begin{equation}
  \label{eq_homogeneous_weight_enumerator}
    \overline{W}_{C}(x,y)=\sum_{w=0}^n A_w(C)x^wy^{n-w}
\end{equation}
of $C$. Setting $y=1$ we obtain the \emph{weight enumerator}
\begin{equation}
  \label{eq_weight_enumerator}
  W_{C}(x)=\sum_{w=0}^n A_w(C)x^w.
\end{equation}
\begin{nexercise}
  Let $C$ and $C'$ be isomorphic codes. Verify $W_{C}(x)=W_{C'}(x)$ and $\overline{W}_{C}(x,y)=\overline{W}_{C'}(x,y)$, 
  so that particularly we have $d(C)=d(C')$. 
\end{nexercise}
\begin{nexercise}
  \label{exercise_remove_zero_columns}
  Let $C$ be an $[n,k]_q$-code and $C'$ be an $[n',k]_q$-code with $n'\le n$ that arises from $C$ by removing some all-zero coordinates. 
  Verify $W_{C}(x)=W_{C'}(x)$ and $\overline{W}_{C}(x,y)=\overline{W}_{C'}(x,y)$, so that in particular we have $d(C)=d(C')$.   
\end{nexercise}

The orthogonal complement
\begin{equation}
  C^\perp :=\left\{\by\in\F_q^n \,:\, \langle \bx,\by\rangle=0 \text{ for all }\bx\in C\right\}
\end{equation}
of an $[n,k]_q$-code $C$, with respect to the standard inner product
\begin{equation}
  \langle \bx,\by\rangle:=\sum_{i=1}^n x_iy_i
\end{equation}
is called the \emph{dual code} of $C$. Note that $C^\perp$ is an $[n,n-k]_q$-code and $\Aut(C)=\Aut\!\left(C^\perp\right)$ 
since $\left\langle \mu(\bx),\by\right\rangle=\left\langle\bx,\mu^{-1}(\by)\right\rangle$ for all $\bx,\by\in \F_q^n$ and all 
$\mu\in\Aut\!\left(\F_q^n\right)$. The \emph{dual minimum distance} $d^\perp$ is the minimum distance of the 
dual code. We call a linear code \emph{projective} iff $d^\perp\ge 3$. 
\begin{nexercise}
  Let $C$ be an $[n,k]_q$-code. Prove that $d^\perp(C)=1$ iff the effective length of $C$ is strictly smaller than $n$. Moreover, we have $d^\perp(C)=2$ if a generator 
  matrix $G$ of $C$ does not contain a zero column but two linearly dependent columns. 
\end{nexercise}
If $C\subseteq C^\perp$, then $C$ is called \emph{self-orthogonal} and \emph{self-dual} if $C=C^\perp$.
\begin{nexercise}
  Show that
  \begin{enumerate}
    \item[(a)] every binary self-orthogonal linear code is even;
    \item[(b)] every doubly-even binary linear code is self-orthogonal;
    \item[(c)] every self-dual ternary linear code is $3$-divisible.
  \end{enumerate}  
\end{nexercise}

\section{The MacWilliams Equations and the Linear Programming Method}
\label{subsec_lp_method}
The homogeneous weight enumerator $\overline{W}_{C}(x,y)$ of a linear code $C$ over $\F_q$ and the homogeneous weight enumerator $\overline{W}_{C^\perp}(x,y)$ 
of its dual code $C^\perp$ are related by the so-called \emph{MacWilliams identity} \cite{macwilliams1977theory} 
\begin{equation}
  \label{eq_macwilliams_pol_identity_homogeneous} 
  \overline{W}_{C^\perp}(x,y)= \left|C\right|^{-1}\cdot  \overline{W}_{C}(y-x,y+(q-1)x).
\end{equation}
So, given the complete \emph{weight distribution} $\left(A_i\right)$ of $C$, the 
weight distribution $\left(B_i\right)$ of the dual code $C^\perp$ with $B_i(C)=A_i(C^\perp)\in\N_0$ is uniquely determined.   
%%We also say that it arises by the \emph{MacWilliams transform}. A more explicit form is given by 
We have
\begin{equation}
  \label{eq_macwilliams}
  \sum_{j=0}^{n-i} {{n-j}\choose i} A_j=q^{k-i}\cdot \sum_{j=0}^i {{n-j}\choose{n-i}} B_j
\end{equation}
for all $0\le i\le n$, see e.g.\ \cite[Lemma 2.2]{macwilliams1963theorem}. If we restrict the range of $i$ to $0\le i<t$, then we 
speak of the \emph{first $t$ MacWilliams equations}. Solving the equation system for the $B_i$ gives:
\begin{ntheorem}{(MacWilliams Equations, see \cite{macwilliams1977theory})}\\
  For an $[n,k,d]_q$-code $C$ we have 
  \begin{equation}
    \label{eq_macwilliams_kr}
      \sum_{j=0}^n K_i(j)A_j(C)=q^kB_i(C)
  \end{equation}  
  for $0\le i\le n$, where
  \begin{equation}
    \label{eq_krawtchouck}
    K_i(j):=\sum_{s=0}^i (-1)^s {{n-j}\choose{i-s}}{j\choose s}(q-1)^{i-s}
  \end{equation}
  are the \emph{Krawtchouck polynomials} (here $j$ is considered as variable of a polynomial).
\end{ntheorem}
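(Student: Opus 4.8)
The plan is to read off the Krawtchouck form (\ref{eq_macwilliams_kr}) directly from the polynomial MacWilliams identity (\ref{eq_macwilliams_pol_identity_homogeneous}), which I treat as given, instead of inverting the linear system (\ref{eq_macwilliams}). Once the polynomial identity is in hand, the entire statement reduces to a comparison of coefficients of the two bivariate polynomials appearing on either side, viewed in $\Q[x,y]$ (the weight enumerators have integer coefficients and the factor $|C|^{-1}=q^{-k}$ is rational). Since two polynomials are equal exactly when all their coefficients agree, establishing (\ref{eq_macwilliams_kr}) for a single, generic $i$ with $0\le i\le n$ simultaneously settles all of them.

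Concretely, I would substitute $|C|=q^k$ together with the expansions $\overline{W}_{C^\perp}(x,y)=\sum_{i=0}^n B_i(C)\,x^i y^{n-i}$ and $\overline{W}_{C}(x,y)=\sum_{j=0}^n A_j(C)\,x^j y^{n-j}$ into (\ref{eq_macwilliams_pol_identity_homogeneous}), obtaining
\[
  \sum_{i=0}^n B_i(C)\,x^i y^{n-i}
  = q^{-k}\sum_{j=0}^n A_j(C)\,(y-x)^j\bigl(y+(q-1)x\bigr)^{n-j}.
\]
I then expand the two factors on the right by the binomial theorem, $(y-x)^j=\sum_{s}{j\choose s}(-1)^s x^s y^{j-s}$ and $\bigl(y+(q-1)x\bigr)^{n-j}=\sum_{t}{n-j\choose t}(q-1)^t x^t y^{n-j-t}$, and collect the coefficient of the monomial $x^i y^{n-i}$. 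Matching the exponent of $x$ forces $s+t=i$, and then the exponent of $y$ is automatically $n-i$; setting $t=i-s$ shows that, for each fixed $j$, the coefficient of $x^i y^{n-i}$ on the right equals $\sum_{s=0}^i (-1)^s {j\choose s}{n-j\choose i-s}(q-1)^{i-s}$, which is precisely $K_i(j)$ as defined in (\ref{eq_krawtchouck}). Comparing this with the coefficient $B_i(C)$ on the left and multiplying by $q^k$ yields (\ref{eq_macwilliams_kr}).

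The one point that deserves care --- and the only place I expect any friction --- is the summation range in the convolution. When I write the inner sum over $0\le s\le i$ as in (\ref{eq_krawtchouck}), the binomial coefficients ${j\choose s}$ and ${n-j\choose i-s}$ vanish automatically whenever $s>j$ or $i-s>n-j$, so extending to this full range neither drops a genuine term nor introduces a spurious one; this is what lets the coefficient of $x^i y^{n-i}$ be written uniformly as $K_i(j)$ regardless of $j$. Beyond this indexing bookkeeping there is no real obstacle, since the substantive content has already been packaged into the polynomial identity (\ref{eq_macwilliams_pol_identity_homogeneous}); should one instead wish to avoid assuming that identity, the genuine work would move into proving it (e.g.\ via additive characters of $\F_q$ and orthogonality), but that lies upstream of the statement to be shown here.
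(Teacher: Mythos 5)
Your proof is correct, but it takes a different route from the one the paper indicates. The paper's (implicit) derivation is: state the binomial-moment form (\ref{eq_macwilliams}), cited from \cite{macwilliams1963theorem}, and then obtain the theorem by {\lq\lq}solving the equation system for the $B_i${\rq\rq} --- i.e.\ by inverting that triangular system; the inversion itself is not carried out, the result being treated as classical via \cite{macwilliams1977theory}. You instead start from the homogeneous polynomial identity (\ref{eq_macwilliams_pol_identity_homogeneous}), also stated in the paper, and read off the coefficient of $x^iy^{n-i}$ after expanding $(y-x)^j\bigl(y+(q-1)x\bigr)^{n-j}$ by the binomial theorem. Your computation checks out: the constraint $s+t=i$ forces the $y$-exponent to be $n-i$ automatically, the vanishing of ${j\choose s}$ for $s>j$ and of ${{n-j}\choose{i-s}}$ for $i-s>n-j$ justifies the uniform summation range $0\le s\le i$, and the resulting coefficient is exactly $K_i(j)$ from (\ref{eq_krawtchouck}), so that comparing with $B_i(C)$ on the left and multiplying by $q^k=|C|$ gives (\ref{eq_macwilliams_kr}) for all $i$ at once. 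What your route buys is directness: the Krawtchouck polynomials emerge naturally as expansion coefficients rather than as the output of inverting a system of binomial-coefficient equations, and every step is elementary bookkeeping. What the paper's route emphasizes is the link between the two displayed forms (\ref{eq_macwilliams}) and (\ref{eq_macwilliams_kr}), both of which it uses later (e.g.\ in Example~\ref{example_no_8_div_52_10_2_code} and Lemma~\ref{lemma_no_8_div_52}). Both derivations rest on one classical identity taken as given --- yours on (\ref{eq_macwilliams_pol_identity_homogeneous}), the paper's on (\ref{eq_macwilliams}) --- so neither is more circular than the other; as you note, a fully self-contained proof would move the real work into proving the polynomial identity itself, e.g.\ via characters of $\F_q$.
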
 
There are lots of ways how to state the MacWilliams Equations. Another common representation are the so-called \emph{power moments} \cite{pless1963power}. 
For the binary case $q=2$ and the first five MacWilliams equations they are spelled out in:
\begin{nexercise}
  \label{exercise_power_moments_q_2_t_5}
  The weight distributions $\left(A_i\right)$ and $\left(B_i\right)$ of an $[n,k]_2$-code and its dual code satisfy
  \begin{eqnarray}
    \sum_{i=1}^n A_i &=& 2^k-1 \label{eq_pm_0}\\
    \sum_{i=1}^n iA_i &=& 2^{k-1}\left(n-B_1\right)\label{eq_pm_1}\\
    \sum_{i=1}^n i^2 A_i &=& 2^{k-1}\left(B_2-nB_1+n(n+1)/2\right) \label{eq_pm_2}
   \end{eqnarray}
   \begin{eqnarray}   
    \sum_{i=1}^n i^3 A_i &=& 2^{k-2}\left(3(B_2n\!-\!B_3)-(3n^2\!+\!3n\!-\!2)/2\cdot B_1 +n^2(n\!+\!3)/2\right)\label{eq_pm_3}\\
    \sum_{i=1}^n i^4 A_i &=& 2^{k\!-\!4}\big(4!(B_4\!-\!nB_3)\!+\!4(3n^2\!+\!3n\!-\!4)B_2\!-\!4(n^3\!+\!3n^2\!-\!9n\!+\!7)B_1\!\notag\\ 
               && +\!(n^4\!+\!6n^3\!+\!3n^2\!-\!2n)\big).\label{eq_pm_4}
  \end{eqnarray}  
\end{nexercise} 

In our context we have several additional conditions on the $A_i$ and $B_i$. First note that we have $A_0=B_0=1$ in general, $B_1=0$ iff the 
code is of full length, and $B_2=0$ iff the code is projective. $\Delta$-divisibility implies $A_i=0$ for all $i\in\N$ with $i\not\equiv 0\pmod\Delta$. 
Via residual codes, see Lemma~\ref{lemma_heritable} and the discussion thereafter, and non-existence results for the effective lengths of divisible codes, 
see Section~\ref{sec_lengths_of_divisible_codes} and Section~\ref{sec_lengths_projective_q_r}, we can also exclude additional weights in many situations. 
Since the $A_i$ and $B_i$ are counts, they are integral. Moreover, the fact that scalar multiples of codewords are codewords again imply that also 
$A_i/(q-1)$ and $B_i/(q-1)$ are integers, cf.\ Exercise~\ref{exercise_weight_distribution_vs_spectrum}. More sophisticated extra conditions are discussed in 
Section~\ref{sec_improved_lp_method}.

\medskip

\begin{trailer}{(Integer) Linear programming method}By {\lq\lq}the{\rq\rq} \emph{linear programming method} (for linear codes) we understand the application 
of linear programs certifying the non-existence of linear codes, cf.~\cite{delsarte1973algebraic}. In general, a \emph{linear program} consists of a set of 
real \emph{variables} $x_i$, where some of them  
may be assumed to be non-negative, and a set of linear non-strict \emph{constraints}, i.e., {\lq\lq}$\le${\rq\rq}, {\lq\lq}$\ge${\rq\rq}, or {\lq\lq}$=${\rq\rq}. 
Additionally, there is a linear \emph{target function} that is either maximized or minimized. Specially structured forms are e.g.\ given by
$$
  \max\left\{\bc^\top \bx\,:\, A\bx\le \bb, \bx\ge 0\right\}
$$  
or
$$
  \max\left\{\bc^\top \bx\,:\, A\bx\le \bb, D\bx=\bd, \bx\ge 0\right\}.
$$
We remark that every linear program, LP for short, can be reformulated into e.g.\ the first form, possibly including a change of variables. Those linear programs can be solved 
efficiently in terms of the number of variables, the number of constraints, and the order of magnitude of the occurring coefficients. Choosing  
$\bc=\zv$ we can treat the question whether a linear inequality system admits a solution as an optimization problem. We say that an LP is \emph{infeasible} if there exists no solution 
satisfying all constraints. If some of the variables are assumed to be integral, 
we speak of an integer linear program, (ILP) for short. While LPs can be solved in polynomial time, solving ILPs is NP hard.

In our context we choose the $A_i,B_i$ as variables and the MacWilliams equations as constraints. Also the mentioned additional conditions can be formulated in this setting. 
Of course the length $n$, the dimension $k$, and the field size $q$ have to be specified. If such an LP does not admit a real-valued solution we say that the non-existence 
of a linear code with corresponding parameters is certified by the linear programming method. If we assume $A_i/(q-1)$ and $B_i/(q-1)$ to be integers, then we speak of the 
integer linear programming method (for linear codes). Of course, this setting allows a lot of variations, so that there is no precise definition of {\lq\lq}the{\rq\rq} 
(integer) linear programming method for linear codes.   
\end{trailer}
For more details on the application of linear programming in coding theory we refer to e.g.\ \cite{bierbrauer2016introduction}.

\medskip

\begin{warning}{Coefficients of LPs for linear codes can grow very quickly}Even for moderate parameters the coefficients of the Krawtchouck polynomials, 
see Equation~(\ref{eq_krawtchouck}), can grow very quickly. This causes severe numerical problems when computing with limited precision. Note that while the 
coefficients in e.g.\ Equation~(\ref{eq_macwilliams}) are a bit smaller, this advantage is quickly used up when a solution algorithm has performed some changes of basis. 
Some implementations with unbounded precision are available, see e.g.\ the computer algebra system \texttt{Maple} or the non-commercial solver \texttt{SCIP} for mixed 
integer programming.. However, computation times significantly increase when using long number arithmetic.   
\end{warning}
In order to keep the number of constraints small and to partially avoid the mentioned numerical issues we will mainly use the first $t$ MacWilliams equations only, where 
$t$ is rather small. Based on experimental evidence we remark that choosing $t\in \{3,4,5\}$ gives the same implication on non-existence as larger value of $t$ in almost 
all cases. 
\begin{nexample}
  \label{example_no_8_div_52_10_2_code}
  No projective $8$-divisible $[52,10]_2$-code exists since solving the first four MacWilliams equations for $\left\{A_8,A_{16},A_{24},A_{32}\right\}$ gives 
  \begin{eqnarray*}
    A_8    &=& 10+A_{40}+4A_{48}+\frac{1}{4}B_3\\
    A_{16} &=& -28 -4A_{40}-15A_{48}-\frac{3}{4}B_3 \\ 
    A_{24} &=& 790+6A_{40}+20A_{48}+\frac{3}{4}B_3 \\ 
    A_{32} &=& 251 -4A_{40}-10A_{48}-\frac{1}{4}B_3,
  \end{eqnarray*}
  so that $A_{16}\le -28<0$, which is a contradiction.
\end{nexample}  
Later on we will observe that no projective $4$-divisible binary linear codes of lengths $4$ or $12$ exist, so that we may additionally use $A_{48}=0$ and $A_{40}=0$. 
\begin{nexample}
  Let $C$ be a projective $[41,6]_2$-code whose non-zero weights are contained in $\{20,24,26,40\}$. Here, the first four MacWilliams equations imply 
  \begin{eqnarray*}
    B_3 &=& \frac{470}{3} - \frac{280}{3}A_{40}\\
    A_{20} &=& \frac{158}{3} - \frac{28}{3}A_{40} \\ 
    A_{24} &=& 5 + 35 A_{40} \\ 
    A_{26} &=& \frac{16}{3} - \frac{80}{3}A_{40}.
  \end{eqnarray*}
  However, $A_{26}\ge 0$ yields $A_{40}=0$, so that $A_{26}=\tfrac{16}{3}\notin\N_0$, which is a contradiction.
\end{nexample}  
The context of that example is that for field size $q=2$, dimension $k=6$, and minimum distance $d=20$ the \emph{Griesmer bound} is not attained, see e.g.\ \cite{baumert1973note}.
\begin{nexercise}
  Prove that an even $[41,6,20]_2$-code is projective and has non-zero weights in $\{20,24,26,40\}$ only.
\end{nexercise}   

In general we can determine lower and upper bounds for any linear combination of the $A_i$ and $B_i$ by using some subset of the MacWilliams equations. Adding integer 
rounding cuts sometimes gives tighter bounds:
\begin{nexample}
  \label{ex_even_13_5_6_2_code}
  In this example we want to show that each even $[13,5,6]_2$-code satisfies $B_1=0$, $B_2=0$, $2\le B_3\le 4$, $23 \le A_6\le 24$, $3\le A_8\le 6$, 
  $1\le A_{10}\le 4$, and $0\le A_{12}\le 1$. To this end we consider the following linear program 
  based on the first four MacWilliams equations:
  \begin{eqnarray*}
    \max B_1 & &\text{subject to} \\
    A_6 + A_8 + A_{10} + A_{12} &=& 31\\ 
    6A_6 + 8A_8 + 10A_{10} + 12A_{12} +16B_1 &=& 208 \\ 
    36A_6 + 64A_8 + 100A_{10} + 144A_{12} +208B_1-16B_2 &=& 1456 \\ 
    216A_6 + 512A_8 + 1000A_{10} + 1728A_{12} +2176B_1 -312B_2 + 24B_3 &=& 10816. 
  \end{eqnarray*}
  The (unique) optimal solution, computed with \texttt{Maple}, is given by
  $$
    B_1 = \frac{3}{8}, B_2 = 0, B_3 = 0, A_6 = \frac{109}{4}, A_8 = 0, A_{10} = \frac{13}{4}, A_{12} = \frac{1}{2}
  $$
  so that, in general, $B_1\le\left\lfloor\tfrac{3}{8}\right\rfloor=0$, i.e., we can assume $B_1=0$.  
  With this additional equation, maximizing $B_2$, $B_3$, $A_6$, $A_8$, $A_{10}$, and $A_{12}$ gives $B_2\le\left\lfloor\frac{18}{17}\right\rfloor=1$, 
  $B_3\le 4$, $A_6\le\left\lfloor\frac{437}{17}\right\rfloor=25$, $A_8\le 6$, $A_{10}\le\left\lfloor\frac{11}{2}\right\rfloor=5$, and 
  $A_{12}\le \left\lfloor\tfrac{20}{13}\right\rfloor=1$, respectively. Adding the tightened upper bounds, i.e., those for $B_2$, $A_6$, $A_{10}$, and $A_{12}$, 
  maximizing $B_2$ again yields $B_2\le \left\lfloor\frac{6}{7}\right\rfloor=0$, so that $B_2=0$. Another iteration yields $B_3\le 4$, $A_6\le 24$, $A_8\le 6$, 
  $A_{10}\le 4$, and $A_{12}\le 1$. Similarly we obtain $B_3\ge 2$, $A_6\ge 23$, $A_8\ge 3$, 
  $A_{10}\ge 1$, and $A_{12}\ge 0$ by minimizing the variables. All these final lower and upper bounds for the variables can indeed by attained  
  as shown in the subsequent example.     
\end{nexample}  

\begin{nexample}
  \label{example_even_13_5_6_2_code_enumeration}
  The non-negative integral solutions $\left(B_1,B_2,B_3,A_8,A_{10},A_{12}\right)$ of the first four MacWilliams equations 
  of an even $[13,5,6]_2$-code are given by
  $$
    \left(0, 0, 4, 24, 3, 4, 0\right)\text{ and }
    \left(0, 0, 2, 23, 6, 1, 1\right)\!.
  $$
  To this end we solve the four equations for $\left\{B_3,A_6,A_8,A_{10}\right\}$:
  \begin{eqnarray*}
    B_3 &=& 4 - 2A_{12} -8B_1 - 3B_2\\  
    A_6 &=& 24 - A_{12} + 10B_1 + 2B_2\\ 
    A_8 &=& 3 + 3A_{12} -12 B_1 - 4B_2\\ 
    A_{10} &=& 4 - 3A_{12} +2B_1 + 2B_2.
  \end{eqnarray*}
  From $B_3\ge 0$ we conclude $B_1=0$ and $B_2\in\{0,1\}$. If $B_2=1$, then $B_3\ge 0$ implies $A_{12}=0$, so that $A_8=-1<0$. Thus, we have $B_2=0$ and 
  $A_{10}\ge 0$ implies $A_{12}\in\{0,1\}$, which gives the two solutions stated above.  
  The MacWilliams transforms of the corresponding weight distributions $\left(A_i\right)_{0\le i\le 13}$ are given by
  $$
  \left(B_i\right)_{0\le i\le 13}=(1,0,0, 4, 30, 57, 36, 36, 57, 30, 4, 0, 0, 1)
  $$
  and
  $$
    \left(B_i\right)_{0\le i\le 13}=(1,0,0, 2, 40, 39, 46, 46, 39, 40, 2, 0, 0, 1).
  $$ 
  Of course, the latter does not show that both such codes exist, but is shows that we cannot conclude a contradiction using the linear programming method with all MacWilliams 
  equations.
\end{nexample}

While the above examples indicate that one eventually have to deal with a few details in the computations, we would like to remark that it is always possible to hide the linear  
programming computations in mathematical non-existence proofs:
\begin{nexercise}
  Use some arbitrary textbook on linear programming in order to show the following facts: 
  \begin{itemize}
    \item The \emph{Farkas' lemma} or \emph{the Fourier–Motzkin elimination} algorithm yield a constructive certificate for the infeasibility of an LP or a linear 
          inequality system, respectively. 
    \item The \emph{LP duality theorem} and the solution of the \emph{dual linear program} can be used to compute multipliers for the constraints of the original LP  
          whose (scaled) sum gives a tight bound for the optimum value of the target value or shows infeasibility if a given feasibility problem is reformulated as the 
          minimization of the violation of the constraints.   
  \end{itemize}
\end{nexercise}
\begin{nexample}
  \label{example_multipliers}
  The first four MacWilliams equations for a projective $[52,9]_2$-code are given by
  {\footnotesize
  \begin{eqnarray}
    A_8 + A_{16} + A_{24} + A_{32} + A_{40} + A_{48} &=& 511 \label{eq_mw_ex_1}\\ 
   44 A_8 + 36 A_{16} + 28 A_{24} + 20 A_{32} + 12 A_{40} + 4 A_{48} &=& 13260 \label{eq_mw_ex_2}\\
 946 A_8 + 630 A_{16} + 378 A_{24} + 190 A_{32} + 66 A_{40} + 6 A_{48} &=& 168402 \label{eq_mw_ex_3}\\ 
 13244 A_8 + 7140 A_{16} + 3276 A_{24} + 1140 A_{32} + 220 A_{40} + 4 A_{48} &=&  1392300 + 64 B_3, \label{eq_mw_ex_4}\\\notag
  \end{eqnarray}}
  so that a linear program for the minimization of the violation reads
  {\footnotesize\begin{eqnarray*}
    \min x & &\text{subject to} \\
   A_8 + A_{16} + A_{24} + A_{32} + A_{40} + A_{48} +x &\ge& 511 \\
   A_8 + A_{16} + A_{24} + A_{32} + A_{40} + A_{48} -x &\le& 511 \\
   44 A_8 + 36 A_{16} + 28 A_{24} + 20 A_{32} + 12 A_{40} + 4 A_{48} +x &\ge& 13260 \\
   44 A_8 + 36 A_{16} + 28 A_{24} + 20 A_{32} + 12 A_{40} + 4 A_{48} -x &\le& 13260 \\
 946 A_8 + 630 A_{16} + 378 A_{24} + 190 A_{32} + 66 A_{40} + 6 A_{48} +x &\ge& 168402 \\
 946 A_8 + 630 A_{16} + 378 A_{24} + 190 A_{32} + 66 A_{40} + 6 A_{48} -x &\le& 168402 \\
 -64 B_3+ 13244 A_8 + 7140 A_{16} + 3276 A_{24} + 1140 A_{32} + 220 A_{40} + 4 A_{48} +x &\ge& 1392300 \\
 -64 B_3+ 13244 A_8 + 7140 A_{16} + 3276 A_{24} + 1140 A_{32} + 220 A_{40} + 4 A_{48} -x &\le& 1392300. \\
  \end{eqnarray*}}
  Numbering the dual variables corresponding to the constraints of the above LP by $c_1,\dots,c_8$, the optimal solution of the corresponding dual LP 
  is given by $c_2=-0.919540$, $c_3=0.077176$, and $c_6=-0.003284$ with an optimal target value of $0.42036124795$. Using a suitable continued fractions approximation 
  we obtain the multipliers $m_1 := -\tfrac{80}{87}$, $m_2 := \tfrac{47}{609}$, and $m_3 := -\tfrac{2}{609}$ (as rational approximations for the floating points values of 
  $c_2$, $c_3$, and $c_6$, respectively). With this, $m_1$ times Equation~(\ref{eq_mw_ex_1}) plus $m_2$ times Equation~(\ref{eq_mw_ex_2}) plus $m_3$ times Equation~(\ref{eq_mw_ex_3}) 
  gives
  $$
    -\frac{128}{203} A_8-\frac{128}{609} A_{16}-\frac{128}{609}A_{40}-\frac{128}{203}A_{48}=\tfrac{256}{203}>0,   
  $$ 
  which is a contradiction since $A_8,A_{16},A_{40},A_{48}\ge 0$.  
\end{nexample}
In a mathematical proof we may just state the multipliers without justification or details of their computation. This also allows us to give 
rigor conclusions from numerical computations, i.e., compute multipliers with limited numerical precision, round them to some reasonably close rationals, 
and verify the final inequality with exact arithmetic, cf.\ Example~\ref{example_multipliers}.   

In Section~\ref{sec_nonexistence_projective_q_r} we will draw several analytical conclusions from the linear programming method that do not rely on floating-point computations 
at all.  

\section{Geometric description of linear codes}
\label{subsec_geometric_description}
Our next aim is to briefly describe linear codes from a geometric point of view. For further details we refer the interested reader to 
\cite{dodunekov1998codes}. So, let $V\simeq \F_q^v$ be a $v$-dimensional vector space 
over $\F_q$. We call each $i$-dimensional subspace of $V$ an \emph{$i$-space}. As a shorthand, we use the geometric terms \emph{points}, 
\emph{lines}, \emph{planes}, and \emph{hyperplanes} for $1$-, $2$-, $3$-, and $(v-1)$-spaces, respectively. A $(v-j)$-space is also called 
a (sub-)space of \emph{codimension $j$}, where $0\le j\le v$. In the special case of a space of codimension $2$, i.e., a $(v-2)$-space, we also 
speak of \emph{hyperlines}. Since two different $1$-dimensional subspaces generate a unique $2$-dimensional subspace, two different points 
are on exactly one common line, which partially motivates the use of the geometric language. Here we use the \emph{algebraic dimension} 
and not the \emph{geometric dimension}, which is one less.\footnote{Points are $0$-dimensional geometric objects and lines are $1$-dimensional 
geometric objects, while we prefer to say that $1$-spaces have (algebraic) dimension $1$ and $2$-spaces have (algebraic) dimension $2$.} The 
only exception is the notion of the $(v-1)$-dimensional \emph{projective geometry} $\PG(v-1,q)$ associated with $\F_q^v$. There are $v-1$ types of
 geometric objects ranging from points ($1$-spaces) to hyperplanes $(v-1$)-spaces. By $\cP$ we denote the set of points and by $\cH$ we denote the 
 set of hyperplanes whenever the dimension $v$ of the ambient space and the field size $q$ are clear from the context. Each point $P\in \cP$ can be written as a
$1$-space
$$
  P=\left\langle \begin{pmatrix}x_1\\\vdots\\x_v\end{pmatrix}\right\rangle_q\!,
$$ 
where $\left(x_1,\dots,x_v\right)\in\F_q^v\backslash\zv$, or using projective coordinates $\left(x_1:x_2:\dots:x_v\right)$, where 
$$\left(tx_1:tx_2:\dots:tx_v\right)=\left(x_1:x_2:\dots:x_v\right)$$ for all $t\in\F_q\backslash\{0\}$. Since the orthogonal complement 
of a $(v-1)$-space is a $1$-space, we have similar notations for hyperplanes.

\begin{trailer}{Number of subspaces}By $\spaces{V}{k}$ we denote the set of all $k$-spaces in $V$ and by $\qbin{v}{k}{q}$ their cardinality $\#\spaces{V}{k}$. For integers 
$0\le k\le v$ we have,
\begin{equation}
  \label{eq_qbin}
  \qbin{v}{k}{q}=\prod_{i=0}^{k-1} \frac{q^{v-i}-1}{q^{k-i}-1}.
\end{equation}
For other values of $k$ we set $\qbin{v}{k}{q}=0$ by convention.
\end{trailer}
\begin{nexercise}
  Prove Equation~(\ref{eq_qbin}) by counting ordered bases of subspaces.
\end{nexercise}
Using the notation $[v]_q:=\tfrac{q^v-1}{q-1}$ and $[v]_q!:=\prod_{i=1}^v [i]_q$  we can write 
\begin{equation}
  \label{eq_qbin_factorial}
  \qbin{v}{k}{q}=\frac{[v]_q!}{[k]_q!\cdot[v-k]_q!}, 
\end{equation}
which motivates that the numbers $\qbin{v}{k}{q}$ are also called \emph{$q$-binomial} or \emph{Gaussian binomial coefficients}. As they count the number of $k$-spaces contained in 
a $v$-space, they are a \emph{$q$-analog} of the binomial coefficients ${v\choose k}$ which count the number of $k$-sets 
contained in a $v$-set. Here, a $t$-set is a set of cardinality $t$ and we have $\lim\limits_{q\to 1} \qbin{v}{k}{q}={v\choose k}$. An important instance of Equation~(\ref{eq_qbin})
is given by 
\begin{equation}
  \#\cP=\qbin{v}{1}{q}=\qbin{v}{v-1}{q}=\#\cH=\frac{q^v-1}{q-1}=[v]_q.
\end{equation}
\begin{nexercise}
  Verify $\lim\limits_{q\to 1} \qbin{v}{k}{q}={v\choose k}$, 
  $$
    \qbin{v}{k}{q}=\qbin{v}{v-k}{q}
    \quad\text{and}\quad
    \qbin{v}{k}{q}=q^k \qbin{v-1}{k}{q}+\qbin{v-1}{k-1}{q}=\qbin{v-1}{k}{q}+q^{v-k}\qbin{v-1}{k-1}{q}.
  $$
\end{nexercise}

\begin{trailer}{Multisets of points}A \emph{multiset} $\cM$ of points in $\PG(v-1,q)$ is a mapping $\cM\colon\cP\to\N_0$. For each point $P\in\cP$ the integer $\cM(P)$ is called the 
\emph{multiplicity} of $P$ and it counts how often point $P$ is contained in the multiset. If $\cM(P)\in\{0,1\}$ for all $P\in\cP$ we also speak of a set instead 
of a multiset (of points). 
\end{trailer}
\begin{nexample}
  \label{ex_multiset}
  For the list of points
  $$
    \left\langle\begin{pmatrix}2\\1\\0\end{pmatrix}\right\rangle,
    \left\langle\begin{pmatrix}1\\0\\1\end{pmatrix}\right\rangle,
    \left\langle\begin{pmatrix}1\\2\\0\end{pmatrix}\right\rangle,
    \left\langle\begin{pmatrix}0\\0\\1\end{pmatrix}\right\rangle,
    \left\langle\begin{pmatrix}2\\1\\0\end{pmatrix}\right\rangle
  $$
  in $\PG(2,3)$ a representation as a multiset $\cM$ is given by $\cM(\langle(1,0,1)^\top\rangle)=1$, $\cM(\langle(1,2,0)^\top\rangle)=3$, $\cM(\langle(0,0,1)^\top\rangle)=1$, and 
  $\cM(P)=0$ for all other points $P$ in $\cP$. Note that $\langle(2,1,0)^\top\rangle=\langle(1,2,0)^\top\rangle$. 
\end{nexample}
  For the ease of a canonical representation of a point  
$\langle(x_1,\dots,x_v)^\top\rangle$ we will assume that the first non-zero value $x_i\in\F_q$ is equal to $1$. The \emph{cardinality} of the multiset is 
defined as
\begin{equation}
  \#\cM=\sum_{P\in\cP} \cM(P).
\end{equation}
More generally, we set $\cM(\cQ):=\sum_{P\in \cQ} \cM(P)$ for each subset $\cQ\subseteq \cP$, i.e., we extend the mapping $\cM$ additively. For each 
subspace $S$ in $\PG(v-1,q)$ we also use the notation $\cM(S)$ interpreting the points in $S$ as a subset of $\cP$. We also write $\cP\backslash S$ for the 
set of points that are not contained in a subspace $S$. Choosing $S=\PG(v-1,q)$ we have $\cM(S)=\#\cM$, i.e., another expression for the cardinality of $\cM$. 
We say that a subspace $S$ is \emph{empty} (with respect to $\cM$) if $\cM(S)=0$ and we say that $\cM$ is \emph{empty} if $\#\cM=0$.  
We also extend the notion of multiplicity from points to arbitrary subsets $\cQ\subseteq\cP$. For $i$-spaces $\cQ$ 
of multiplicity $m$ we speak of $m$-points, $m$-lines, $m$-planes, or $m$-hyperplanes in the cases where $i=1$, $i=2$, $i=3$, or 
$i=v-1$, respectively. The \emph{support} $\supp(\cM)$ of a multiset of points $\cM$ is the set of points of strictly positive multiplicity. We call $\cM$ 
\emph{spanning} if the $1$-spaces in $\supp(\cM)$ span $\F_q^v$. In other words if no hyperplane has multiplicity $\#\cM$. In Example~\ref{ex_multiset} 
we have $\#\cP=13$, $\#\cM=5$, and the support of $\cM$ has cardinality $3$. Moreover, $\cM$ is spanning  

\begin{trailer}{Main correspondence between linear codes and multisets of points}Now we describe the main correspondence between $[n,k]_q$-codes $C$ 
with effective length $n$ and spanning multisets $\cM$ of 
points in $\PG(k-1,q)$. Let $G$ be an arbitrary generator matrix of $C$. Due to the condition on the effective length $\neff$ of $C$, the matrix $G$ 
does not contain a zero column. So, we can construct a multiset of points $P_1, \dots, P_n$ in $\PG(k-1,q)$ by assigning to each column $\bx\in\F_q^k$ 
of $G$ the point $\langle \bx\rangle_q\in \cP$. In the  other direction we can use a generator $\bx\in\F_q^k\backslash \{\zv\}$ of each point $\langle \bx\rangle_q$ 
of the multiset as a column, with the corresponding multiplicity, of a generator matrix $G$ of $C$. 
\end{trailer}

This geometric description allows us to read off the code parameters from the multiset $\cM$ of points 
in $\PG(k-1,q)$. The subsequent theorem shows how to determine the weight distribution of $C$ from $\cM$. To this end, 
we observe that the codewords of $C$ are the $\F_q$-linear combinations of the rows of a generator matrix $G$ of $C$. Let 
$\bg^i=\left(g^i_1,\dots,g^i_n\right)\in\F_q^n$ denote the $i$th row of $G$, so that each codeword $\bc\in C$ has the form
$\bc=h_1g^1+h_2g^2+\dots+h_kg^k$ and is uniquely determined by $\bh=\left(h_1,\dots,h_k\right)\in\F_q^k$. For a fixed coordinate 
$1\le j\le n$, corresponding to the point $P_j$, the vector $\bc$ has entry $0$ in coordinate $j$ exactly if
\begin{equation}
  c_j\,=\,h_1g^1_j+h_2g^2_j+\dots +h_kg^k_j \,=\, 0.
\end{equation} 
The coefficients $h_i$, collected in $\bh$, of this linear equation define a hyperplane $H\in \cH$. In other words, we 
have $c_j=0$ iff the point $P_j$ is contained in the hyperplane $H$. The above reasoning implies the following correspondence between linear codes and multisets of points:
\begin{ntheorem}
  \label{thm_correspondence_codes_multisets}
  Let $C$ be a spanning $[n,k]_q$-code, $G$ be a generator matrix of $C$, and $\cM$ be the corresponding 
  multiset of points in $\PG(k-1,q)$ (as described above). For each non-zero $\bh=\left(h_1,\dots,h_k\right)\in\F_q^k\backslash\{\zv\}$ 
  let $\bh^\perp$ characterize the hyperplane $H\in \cH$, which consists of all $\by=\left(y_1,\dots,y_k\right)$ with $\langle \bh,\by\rangle=0$. 
  Then, the weight of the codeword $\bc=\sum_{i=1}^k h_ig^i$ is given by
  \begin{equation}
    \label{eq_weight_multiplicity_relation}
    \wt(\bc)=\sum_{P\in\cP, P\notin H} \cM(P)=\cM(\cP\backslash H)=n-\cM(H).
  \end{equation}
  The minimum Hamming distance is given by
  \begin{equation}
    \label{eq_min_dist_hyperplane}
    d(C)=\min\{\cM(\cP\backslash H)\,:\, H\in\cH\}=n-\max\{\cM(H)\,:\,H\in\cH\}.
  \end{equation}
\end{ntheorem}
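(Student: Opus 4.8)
The plan is to establish the weight formula \eqref{eq_weight_multiplicity_relation} first, and then derive the minimum-distance formula \eqref{eq_min_dist_hyperplane} as an immediate consequence. For the weight formula, I would start from the observation already set up in the discussion preceding the statement: a codeword $\bc=\sum_{i=1}^k h_i\bg^i$ is determined by the coefficient vector $\bh\in\F_q^k$, and its $j$th coordinate is $c_j=\sum_{i=1}^k h_i g^i_j=\langle\bh,\bx_j\rangle$, where $\bx_j\in\F_q^k$ is the $j$th column of $G$. First I would translate the weight into a count of nonzero coordinates: $\wt(\bc)=\#\{1\le j\le n:\langle\bh,\bx_j\rangle\neq 0\}$. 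The key step is to pass from coordinates $j$ to the points $P_j=\langle\bx_j\rangle_q$ of the multiset $\cM$ in $\PG(k-1,q)$: since $\langle\bh,\bx_j\rangle=0$ is exactly the condition that $P_j$ lies on the hyperplane $H$ characterized by $\bh^\perp$, the nonzero coordinates are precisely those columns whose associated point lies off $H$. Counting these with multiplicity gives $\wt(\bc)=\sum_{P\in\cP,\,P\notin H}\cM(P)=\cM(\cP\backslash H)$, and since $\#\cM=n$ (the effective length equals the cardinality of the spanning multiset) this equals $n-\cM(H)$.

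The one subtlety worth checking carefully is the well-definedness of the correspondence between the defining vector $\bh$ and the hyperplane $H$, together with the scaling behavior of the point $P_j$. Concretely, I would verify that the truth value of the condition $\langle\bh,\bx_j\rangle=0$ depends only on the point $P_j=\langle\bx_j\rangle_q$ and not on the chosen representative $\bx_j$: replacing $\bx_j$ by $t\bx_j$ with $t\in\F_q\backslash\{0\}$ multiplies $\langle\bh,\bx_j\rangle$ by $t$, so the vanishing is unchanged. This legitimizes summing $\cM(P)$ over points rather than over columns, which is what makes the geometric count coincide with the coordinate count. I expect this bookkeeping — matching columns-with-multiplicity to points-with-multiplicity — to be the main (mild) obstacle, since it is where the passage from the algebraic object $C$ to the geometric object $\cM$ actually happens; everything else is a direct unwinding of definitions.

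For the minimum-distance formula \eqref{eq_min_dist_hyperplane}, I would invoke the definition $d(C)=\min\{\wt(\bc):\bc\in C,\ \bc\neq\zv\}$ and substitute the weight formula just proved. Since $C$ is spanning, the map $\bh\mapsto\bc=\sum_{i=1}^k h_i\bg^i$ is a bijection from $\F_q^k$ to $C$ with $\bh=\zv$ corresponding to $\bc=\zv$, so ranging over nonzero codewords $\bc$ is the same as ranging over nonzero coefficient vectors $\bh$, hence over all hyperplanes $H\in\cH$. Therefore
\begin{equation*}
  d(C)=\min_{H\in\cH}\cM(\cP\backslash H)=n-\max_{H\in\cH}\cM(H),
\end{equation*}
which is exactly \eqref{eq_min_dist_hyperplane}. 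The only point to note here is that distinct nonzero $\bh$ can yield the same hyperplane $H$ (the $q-1$ scalar multiples of $\bh$ all give the same $H$ and the same weight), so the minimum over nonzero codewords and the minimum over hyperplanes agree; this causes no difficulty since we are taking a minimum of a quantity that is constant on each such scaling class.
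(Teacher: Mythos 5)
Your proposal is correct and follows essentially the same route as the paper, which derives the theorem directly from the observation that $c_j=0$ iff the point $P_j$ lies on the hyperplane $H$ determined by $\bh$, and then passes to the minimum over all hyperplanes. The only additions you make — the explicit check that vanishing of $\langle\bh,\bx_j\rangle$ is independent of the chosen representative of $P_j$, and the remark that scalar multiples of $\bh$ give the same hyperplane — are exactly the bookkeeping the paper leaves implicit, so there is nothing to change.
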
 
In other words, the weight $\wt(\bc)$ of a codeword $\bc\in C$ equals the number of points of $\cM$ (counted with multiplicities) that are not contained 
in the hyperplane $H=\bh^\perp$ associated to $\bc$. We remark that if we start with a (non-empty) multiset $\cM$ of points in 
$\PG(k-1,q)$, then the corresponding code $C$ has dimension $k$ iff $\cM$ is spanning. The rank of the constructed matrix $G$ 
would be strictly smaller than $k$ otherwise.

We call two multisets of points isomorphic if the corresponding codes are. The set of automorphisms of $\PG(v-1,q)$ preserving the $\le$-ordering of 
subspaces is given by be natural action of $\operatorname{P\Gamma L} (\F_q^v)$ if $v\ge 3$. This famous result is called {\lq\lq}Fundamental Theorem of Projective Geometry{\rq\rq}, 
see e.g.\ \cite{artin2016geometric}.

Some codes have very nice descriptions using the geometric language. 
\begin{nexample}
  \label{ex_simplex_code}
   Let $\cM$ be the (multi-)set in $\PG(k-1,q)$ defined by $\cM(P)=1$ for all $P\in\cP$, where $k\ge 2$. It corresponds to the (projective) $\left[[k]_q,k,q^{k-1}\right]_q$ \emph{simplex code}. 
   The minimum distance follows from the fact that each hyperplane $H\in \cH$ contains $[k-1]_q=\tfrac{q^{k-1}-1}{q-1}$ points from $\cP$, so that
   $\cM(\cP\backslash H)=[k]_q-[k-1]_q=q^{k-1}$. Since the weights of all non-zero codewords are equal to $q^{k-1}$, the code is $q^{k-1}$-divisible.
\end{nexample}
\begin{trailer}{Divisibility for multisets of points}We call a multiset of points $\Delta$-divisible iff the corresponding linear code $C$ is 
$\Delta$-divisible. Note that this is equivalent to 
\begin{equation}
  \label{eq_divisible_multiset}
  \cM(H)\equiv \#\cM\pmod \Delta
\end{equation}
for all hyperplanes $H\in\cH$ if $\cM$ is in $\PG(v-1,q)$ with $v\ge 2$. If $v=1$, then $\dim(C)=1$ and the condition is equivalent to $\#\cM\equiv 0\pmod \Delta$.\footnote{Note 
that if a multiset of points $\cM$ with $\dim(\cM)=1$ is embedded in $\PG(v-1,q)$ with $v\ge 2$, then we have $\cM(H)\in\{0,\#\cM\}$ for all hyperplanes and there indeed exists a 
hyperplane with $\cM(H)=0$. See also Lemma~\ref{lem:ambient_space_unwichtig} stating that the dimension of the ambient space is irrelevant.}   
\end{trailer}

With respect to $[n,\le\!k,d]_q$-codes,\footnote{An $[n,\le k,d]_q$-code is an $[n,k',d]_q$-code where $1\le k'\le k$. We also use the {\lq\lq}$\le${\rq\rq}- or 
{\lq\lq}$\ge${\rq\rq}-notation for other parameters.} Equation~(\ref{eq_min_dist_hyperplane}) motivates the following geometric notion. A multiset $\cK$ of points 
in $\PG(k-1,q)$ is an \emph{$(n,s)$-arc} if $\cK(\cP)=n$, $\cK(H)\le s$ for every hyperplane $H\in\cH$, and there exists a hyperplane $H_0\in \cH$ with $\cK(H_0)=s$. 
If the last condition is skipped, we speak of an $(n,\le s)$-arc. The relation between $d$ and $s$ is given by $s=n-d$. The dimension of the subspace spanned by 
the points in the support of $\cK$ is called dimension $\dim(\cK)$ of $\cK$. The corresponding linear code has dimension $\dim(\cK)$.

There is an analog of the weight distribution of linear codes for arcs.
\begin{ndefinition}
  \label{def_spectrum}
  Let $\cM$ be an $(n,\le s)$-arc in $\PG(k-1,q)$. The \emph{spectrum} of $\cM$ is the vector 
  $\ba=\left(a_0,\dots,a_s\right)\in\N_0^{s+1}$, where 
  \begin{equation}
    a_i=\#\left\{H\in\cH \,:\, \cM(H)=i\right\}
  \end{equation} 
  for $0\le i\le s$.
\end{ndefinition}
\begin{nexercise}
  \label{exercise_weight_distribution_vs_spectrum}
  Let $\cM$ be a $k$-dimensional multiset of $n$ points in $\PG(k-1,q)$ and $C$ be the corresponding $[n,k]_q$-code. Show that
  \begin{equation}
    A_i(C) = (q-1)\cdot a_{n-i}(\cM)
  \end{equation} 
  for all $1\le i\le n$.
\end{nexercise}
In the case of a $\Delta$-divisible multiset of points $a_i>0$ implies $i\equiv n\pmod\Delta$. 
The analog of the first three MacWilliams equations are the so-called \emph{standard equations}:
\begin{nlemma}
  \label{lemma_standard_equations}
  The spectrum $\ba=\left(a_0,\dots,a_s\right)$ of an $(n,\le s)$-arc $\cM$ in $\PG(k-1,q)$, where $k\ge 2$, satisfies
  \begin{eqnarray}
    \sum_{i=0}^s a_i &=& [k]_q \label{eq_standard_equation_1}\\ 
    \sum_{i=0}^s ia_i &=& n\cdot [k-1]_q\label{eq_standard_equation_2}\\ 
    \sum_{i=0}^s {i \choose 2}a_i &=& {n\choose 2}\cdot[k-2]_q+q^{k-2}\cdot\sum_{i\ge 2} {i\choose 2}\lambda_i\label{eq_standard_equation_3},
  \end{eqnarray}
  where $\lambda_j$ denotes the number of points $P\in\cP$ with $\cM(P)=j$ for all $j\in \N$.
\end{nlemma}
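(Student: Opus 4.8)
The plan is to prove all three identities by double counting over the set $\cH$ of hyperplanes of $\PG(k-1,q)$, the only nontrivial input being the number of hyperplanes that pass through a fixed subspace. For a fixed $t$-space $U\le\F_q^k$, the hyperplanes containing $U$ are in bijection with the hyperplanes of the quotient $\F_q^k/U\cong\F_q^{k-t}$, so by Equation~(\ref{eq_qbin}) there are exactly $\qbin{k-t}{k-t-1}{q}=[k-t]_q$ of them. I will use this for $t=0$ (there are $[k]_q$ hyperplanes in total), for $t=1$ (each point lies on $[k-1]_q$ hyperplanes), and for $t=2$ (each $2$-space $\langle P,P'\rangle$ spanned by two distinct points lies on $[k-2]_q$ hyperplanes).

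Equation~(\ref{eq_standard_equation_1}) is then immediate: the numbers $a_i$ partition $\cH$ according to the value $\cM(H)\in\{0,\dots,s\}$, so $\sum_{i=0}^s a_i=\#\cH=[k]_q$. For Equation~(\ref{eq_standard_equation_2}) I would count the incidence sum $\sum_{H\in\cH}\cM(H)$ in two ways. Grouping the hyperplanes by multiplicity gives $\sum_{i=0}^s i\,a_i$. On the other hand, $\sum_{H}\cM(H)=\sum_{H}\sum_{P\in H}\cM(P)=\sum_{P\in\cP}\cM(P)\cdot\#\{H\in\cH:P\in H\}=[k-1]_q\sum_{P\in\cP}\cM(P)=n\,[k-1]_q$, which yields the claim.

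Equation~(\ref{eq_standard_equation_3}) is the substantial one, and I would obtain it by computing the second moment $\sum_{H}\cM(H)^2$. Expanding $\cM(H)=\sum_{P\in H}\cM(P)$ and swapping the order of summation gives $\sum_{H}\cM(H)^2=\sum_{(P,P')\in\cP\times\cP}\cM(P)\cM(P')\cdot\#\{H:P,P'\in H\}$, where the sum runs over all ordered pairs of points. Here the delicate point, and the main obstacle, is to separate the diagonal from the off-diagonal: for $P=P'$ the inner count is $[k-1]_q$ (hyperplanes through a point), while for $P\ne P'$ it is $[k-2]_q$ (hyperplanes through $\langle P,P'\rangle$). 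Using $\sum_{P}\cM(P)=n$ and $\sum_{P\ne P'}\cM(P)\cM(P')=n^2-\sum_{P}\cM(P)^2$, this collapses to $\sum_{H}\cM(H)^2=[k-2]_q\,n^2+\big([k-1]_q-[k-2]_q\big)\sum_{P}\cM(P)^2$, and the whole correction term is governed by the identity $[k-1]_q-[k-2]_q=q^{k-2}$.

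It then remains to assemble the pieces. Writing ${\cM(H)\choose 2}=\tfrac12\big(\cM(H)^2-\cM(H)\big)$, summing over $H$, and substituting the two moments computed above gives $\sum_{i}{i\choose 2}a_i=\tfrac12\big([k-2]_q n^2+q^{k-2}\sum_P\cM(P)^2-n\,[k-1]_q\big)$. Splitting $n\,[k-1]_q=n\,[k-2]_q+n\,q^{k-2}$ and recognizing $\sum_P\cM(P)^2=\sum_j j^2\lambda_j$ and $n=\sum_j j\lambda_j$, one rewrites this as ${n\choose 2}[k-2]_q+q^{k-2}\sum_{j}\tfrac{j(j-1)}{2}\lambda_j$; since ${0\choose 2}={1\choose 2}=0$, the last sum equals $\sum_{i\ge 2}{i\choose 2}\lambda_i$, which is exactly the asserted right-hand side. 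I expect the only place demanding genuine care to be the diagonal/off-diagonal bookkeeping for repeated points together with the simplification $[k-1]_q-[k-2]_q=q^{k-2}$; the boundary case $k=2$ is absorbed automatically since $[k-2]_q=[0]_q=0$, and everything else is routine rearrangement.
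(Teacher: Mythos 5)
Your proof is correct and follows exactly the route the paper intends: it leaves Lemma~\ref{lemma_standard_equations} as an exercise to be proved by double counting hyperplanes, point--hyperplane incidences, and incidences between pairs of points and hyperplanes, which is precisely your three counts (your second-moment computation with the diagonal/off-diagonal split is just the ordered-pair version of counting point pairs in hyperplanes). The bookkeeping, including $[k-1]_q-[k-2]_q=q^{k-2}$ and the passage to the $\lambda_j$, checks out.
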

\begin{nexercise}
  Prove Lemma~\ref{lemma_standard_equations} by double counting hyperplanes, incidences between points and hyperplanes, and incidences between pairs of 
  points and hyperplanes. Show that the three standard equations are indeed equivalent to the first three MacWilliams equations assuming a code of full length, i.e.\ 
  $B_1=0$.
\end{nexercise}
\begin{nexercise}
  Let $\cM$ be a multiset of points in $\PG(k-1,q)$ and $\cM'$ be an embedding in $\PG(v-1,q)$ with $v>k$. Compute the spectrum $\ba'$ of $\cM'$ from 
  the spectrum $\ba$ of $\cM$.
\end{nexercise}

Define the sum of two multisets $\cK'$ and $\cK''$ in the same geometry $\PG(k-1,q)$ by $\left(\cK'+\cK''\right)(P)=\cK'(P)+
\cK''(P)$ for all points $P\in\cP$. With the aid of so-called characteristic functions we can describe more sophisticated constructions 
in a compact manner. So, given a set of points $\cQ\subseteq\cP$, we denote by $\chi_{\cQ}\colon\cP\to\{0,1\}$ the \emph{characteristic 
function} of $\cQ$, i.e., $\chi_{\cQ}(P)=1$ if $P\in\cQ$ and $\chi_{\cQ}(P)=0$ otherwise. If $J$ is a $j$-space in $\PG(k-1,q)$, where 
$1\le j\le k$, then we write $\chi_J$ for the characteristic function of the points contained in $J$.

\begin{trailer}{First-order Reed--Muller codes a.k.a.\ affine $k$-spaces}
\vspace*{-6mm}
\begin{nexample}
  \label{example_affine_space}
  Let $H$ be a hyperplane in $V=\PG(k-1,q)$, where $k\ge 2$. Then $\cK=\chi_{V}-\chi_H=\chi_{\cP}-\chi_H$ is a $\left(q^{k-1},q^{k-2}\right)$-arc 
  that corresponds to a $\left[q^{k-1},k,q^{k-1}-q^{k-2}\right]_q$-code. 
\end{nexample}
We remark that $\cK$ is an affine geometry $\AG(k-1,q)$ and that the corresponding code is a first-order 
Reed-Muller code $\RM_q(k-1,1)$ of length $q^{k-1}\!$. We also call the (multi-)set of points an \emph{affine $k$-space}. 
\end{trailer}
In general we have:
\begin{nlemma}
  \label{lemma_sum_of_characteristic_functions}
  Let $\cQ_1,\dots,\cQ_l\subseteq \cP$ be multisets of points and $m_1,\dots,m_l\in\Q$. If
  \begin{equation}
    \sum_{i=1}^l m_i\cdot{\cQ_i}(P)\in\N_0
  \end{equation}
  for each $P\in\cP$, then 
  \begin{equation}
    \cM=m_1\cdot \cQ_1+m_2\cdot \cQ_2+\dots+m_l\cdot\cQ_l
  \end{equation}
  defines a multiset of points in $\PG(k-1,q)$.
\end{nlemma}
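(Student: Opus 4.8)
The plan is to observe that the statement is essentially an unwinding of the definition of a multiset of points, so the proof reduces to checking that the object $\cM$, defined pointwise, lands in the correct codomain. First I would recall that a multiset of points in $\PG(k-1,q)$ is by definition nothing but a mapping $\cP\to\N_0$, and that both the scalar multiple $m_i\cdot\cQ_i$ and the sum of multisets are evaluated separately at each point; indeed the additive rule $\left(\cK'+\cK''\right)(P)=\cK'(P)+\cK''(P)$ was fixed just before the lemma, and scaling a multiset by a rational is its natural pointwise extension. Thus the formal expression $m_1\cdot\cQ_1+\dots+m_l\cdot\cQ_l$ is to be interpreted as the assignment $P\mapsto\sum_{i=1}^l m_i\cdot\cQ_i(P)$.

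Next I would define $\cM$ explicitly through its values, setting $\cM(P):=\sum_{i=1}^l m_i\cdot\cQ_i(P)$ for every $P\in\cP$. Since $\cP$ is a finite set and each $\cQ_i$ is defined on all of $\cP$, this prescription assigns a well-determined rational value to every point, so $\cM$ is a genuine mapping on $\cP$ rather than merely a formal symbol. It then remains only to verify that its codomain may be taken to be $\N_0$, i.e.\ that $\cM(P)\in\N_0$ for each $P$. But this is precisely the hypothesis $\sum_{i=1}^l m_i\cdot\cQ_i(P)\in\N_0$, so $\cM\colon\cP\to\N_0$, which is exactly the definition of a multiset of points in $\PG(k-1,q)$.

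I expect no real obstacle here: the entire content of the lemma is that the non-negativity-and-integrality condition, imposed pointwise, is exactly the requirement needed for a rational linear combination of multisets to qualify again as a multiset. The only subtleties worth flagging are that the hypothesis must hold simultaneously at every point of $\cP$ (a combination that is integral and non-negative at some points but not others would fail to define a multiset), and that the finiteness of $\cP$ is what guarantees $\cM$ is an honest function into $\N_0$. Once these observations are recorded, the conclusion follows immediately from the definitions, with no computation required.
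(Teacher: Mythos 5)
Your proof is correct: the lemma is purely definitional, and the paper itself offers no proof (it is delegated to Exercise~\ref{exercise_q_linear_combination}), the intended argument being exactly this pointwise unwinding — a multiset of points is by definition a map $\cP\to\N_0$, the linear combination is evaluated pointwise, and the hypothesis is precisely the statement that every value lies in $\N_0$. Your flagged subtleties (the condition must hold at \emph{every} point; finiteness of $\cP$ is automatic here and not actually needed) are accurate and do not affect the argument.
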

\begin{nexercise}
  \label{exercise_q_linear_combination}
  Prove Lemma~\ref{lemma_sum_of_characteristic_functions} and show that the code defined in Example~\ref{example_affine_space} is $q^{k-2}$-divisible. 
\end{nexercise}
\begin{nexercise}
  \label{exercise_divisibility_sum_construction}
  Let $\cQ_1,\dots,\cQ_l\subseteq \cP$ be multisets of points that are $\Delta$-divisible. Sow that the multiset 
  $\cM=\sum_{i=1}^l \cQ_i$ is $\Delta$-divisible and that the multiset $t\cdot \cQ_i$ is $t\Delta$-divisible for each integer $t\ge 1$. 
\end{nexercise}

If $\cM$ is a multiset of points in $\PG(k-1,q)$, then we can embed $\PG(k-1,q)$ in a $k$-space of $\PG(v-1,q)$ for each $v\ge k$ and naturally obtain 
a multiset $\cM'$ of points in $\PG(v-1,q)$. If $C$ is the linear code corresponding to $\cM$ and $C'$ the linear code corresponding to $\cM'$, then 
$C$ and $C'$ are isomorphic and the (effective) lengths of $C,C'$ equal $\#\cM=\#\cM'$ and the dimensions equal $\dim(\cM)=\dim(\cM')$. So, the union of the $20$ solids and $30$ planes 
considered in Subsection~\ref{subsec_introductory_application} yields a multiset $\cM$ of points in $\PG(8,2)$ that is $\min\!\left\{2^{4-1},2^{3-1}\right\}=4$-divisible. 
Thus, also the corresponding binary linear code is $4$-divisible. We will consider the complementary multiset of points and its corresponding linear code in the next section.  

By $\gamma_0(\cM)$ we denote the \emph{maximum point multiplicity} of a given multiset of points $\cM$ in $\PG(v-1,q)$, i.e., we have $\cM(P)\le \gamma_0(\cM)$ for all 
$P\in \cP$ and there exists a point $Q\in\cP$ with $\cM(Q)=\gamma_0(\cM)$. If $\gamma_0(\cM)=1$, then we also speak of a set of points instead a multiset of points. 
Clearly we have $\gamma_0(\cM)=0$ iff $\cM$ is empty, i.e., $\#\cM=0$. We say that $\cM$ is \emph{proper} iff there exists a point $P\in\cP$ with $\cM(P)=0$. Otherwise 
$\cM$ has full support. 
\begin{nexercise}
  Show that for a given multiset of points $\cM$ in $\PG(v-1,q)$ we have $\gamma_0(\cM)=1$ iff the corresponding linear code $C$ is projective.
\end{nexercise}
The analog of the point multiplicity $\cM(P)$ of a point $P$ for the corresponding linear code $C$ is the number of columns $g^i$ of a generator matrix with 
$\left\langle g^i\right\rangle=P$. Here we may also speak of the (maximum) column multiplicity.

\chapter{Basic results for $\Delta$-divisible multisets of points}
\label{sec_basis_results}
As already observed, each multiset of points $\cM$ in $\PG(v-1,q)$ can be embedded in a larger ambient space $\PG(v'-1,q)$, where $v'>v$. We can also embed 
in a smaller ambient space as long as the dimension is at least $\dim(\cM)$. As readily computed, the dimension of the ambient space is not really 
relevant for the notion of $\Delta$-divisibility.
\begin{nlemma}
	\label{lem:ambient_space_unwichtig}
	Let $V_1 < V_2$ be $\F_q$-vector spaces and $\cM$ a multiset of points in $V_1$.
	Then $\cM$ is $\Delta$-divisible in $V_1$ iff $\cM$ is $\Delta$-divisible in $V_2$ (using the natural continuation of the characteristic function 
	$\cM(P)=0$ for all $P\in \cP(V_2)\backslash \cP(V_1)$).
\end{nlemma}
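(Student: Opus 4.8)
The plan is to work with the hyperplane characterisation (\ref{eq_divisible_multiset}) of divisibility and to exploit the single structural fact that every point of positive multiplicity lies in $V_1$. Concretely, since $\supp(\cM)\subseteq\cP(V_1)$, for \emph{any} subspace $S\le V_2$ one has $\cM(S)=\cM(S\cap V_1)$: a point contributes to $\cM(S)$ only if it lies in $S$ and has positive multiplicity, hence only if it lies in $S\cap V_1$. This identity is the whole engine of the proof. I first treat the generic case $\dim V_1\ge 2$, where both ambient spaces use the congruence form of divisibility.

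The key step is a dimension dichotomy for an ambient hyperplane $H_2\in\cH(V_2)$. Since $H_2$ is a hyperplane, either $V_1\subseteq H_2$ or $H_2+V_1=V_2$; the dimension formula then gives $\dim(H_2\cap V_1)\in\{\dim V_1,\dim V_1-1\}$. In the first case $\cM(H_2)=\cM(V_1)=\#\cM$, so the required congruence $\cM(H_2)\equiv\#\cM\pmod\Delta$ holds trivially (both sides are literally equal). In the second case $H_1:=H_2\cap V_1$ is a hyperplane of $V_1$ and $\cM(H_2)=\cM(H_1)$, so the condition for $H_2$ is exactly the condition (\ref{eq_divisible_multiset}) for the hyperplane $H_1$ of $V_1$. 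Thus divisibility of $\cM$ in $V_2$ is equivalent to the congruence holding for all hyperplanes of $V_1$ of the form $H_2\cap V_1$.

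To close both directions I would check that the map $H_2\mapsto H_2\cap V_1$ is surjective onto $\cH(V_1)$: given $H_1\in\cH(V_1)$, pick a complement $W$ of $V_1$ in $V_2$ and set $H_2:=H_1\oplus W$; then $\dim H_2=\dim V_2-1$ and $H_2\cap V_1=H_1$. Hence the congruences imposed by $\cH(V_2)$ split into the vacuously true ones (for $H_2\supseteq V_1$) together with exactly the congruences (\ref{eq_divisible_multiset}) for all of $\cH(V_1)$, and divisibility in $V_2$ is equivalent to divisibility in $V_1$. The remaining case $\dim V_1=1$ uses the special form of the definition: here $\cM=\#\cM\cdot P$ for the unique point $P$ of $V_1$, and in $V_2$ (which has $\dim\ge 2$) one has $\cM(H_2)\in\{0,\#\cM\}$ according as $P\notin H_2$ or $P\in H_2$; since a hyperplane of $V_2$ avoiding $P$ exists, divisibility in $V_2$ is equivalent to $\#\cM\equiv 0\pmod\Delta$, which is precisely the $v=1$ definition in $V_1$.

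The only delicate points are bookkeeping rather than substance: one must notice that ambient hyperplanes through $V_1$ contribute only the trivial congruence (so they neither add nor remove constraints), and one must handle the $\dim V_1=1$ boundary where the definition changes shape. A cleaner but essentially equivalent route, which I would mention as an alternative, bypasses all of this: choosing a basis of $V_2$ extending a basis of $V_1$, the generator matrix built from $\cM$ in $V_2$ is the generator matrix built in $V_1$ with $\dim V_2-\dim V_1$ zero rows appended, so $\cM$ regarded in $V_1$ and in $V_2$ yields the \emph{same} code as a subspace of $\F_q^{\#\cM}$, with identical weight enumerator (cf.\ Exercise~\ref{exercise_remove_zero_columns}). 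Since $\Delta$-divisibility of a multiset is by definition that of its corresponding code, the equivalence is then immediate and uniform in all dimensions, including $\dim V_1=1$.
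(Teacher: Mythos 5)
Your proposal is correct and complete. Note that the paper itself gives no proof of Lemma~\ref{lem:ambient_space_unwichtig}: the lemma is introduced with the phrase {\lq\lq}as readily computed{\rq\rq}, and the only point of real substance, the case $\dim V_1=1$, is addressed only in a footnote to the trailer containing Equation~(\ref{eq_divisible_multiset}); so there is no written argument to compare yours against step by step. Your main route supplies exactly the missing verification: the identity $\cM(S)=\cM(S\cap V_1)$ for all $S\le V_2$, the dichotomy $V_1\le H_2$ (giving the trivial congruence $\#\cM\equiv\#\cM$) versus $H_2\cap V_1$ a hyperplane of $V_1$, the surjectivity of $H_2\mapsto H_2\cap V_1$ onto $\cH(V_1)$ via $H_2:=H_1\oplus W$ for a complement $W$ of $V_1$, and the separate treatment of $\dim V_1=1$ using the special form of the definition are all sound, and together they prove both directions. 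Your alternative route --- extend a basis of $V_1$ to one of $V_2$, so that the generator matrix built from $\cM$ over $V_2$ is the one built over $V_1$ with zero rows appended, hence has literally the same row span in $\F_q^{\#\cM}$ --- is the more direct and uniform argument (it needs no case split on $\dim V_1$) and is presumably what {\lq\lq}readily computed{\rq\rq} alludes to; since the two codes coincide as sets of codewords, not merely share a weight enumerator, the citation of Exercise~\ref{exercise_remove_zero_columns} (which concerns deleting zero \emph{columns}, a different operation) is not actually needed there. The only remaining caveat is the degenerate empty multiset, for which the code correspondence is vacuous and the statement trivial; this affects neither of your arguments.
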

So, we will also speak of a $\Delta$-divisible multiset of points $\cM$ over $\F_q$ without specifying the dimension of the ambient space. (Of course we have to assume 
that the dimension of the ambient space is at least $\dim(\cM)$.)

As observed by Harold Ward, it is not necessary to consider all positive integers $\Delta$ when studying $\Delta$-divisible codes.
\begin{ntheorem}
  \label{thm_delta_divides_q_power}
  (\cite[Theorem 1]{ward1981divisible}) Let $C$ be a $\Delta$-divisible $[n,k]_q$-code with $k\ge 1$ and $\gcd(\Delta,q)=1$. Then $C$ is equivalent to a code obtained by taking 
  a linear code $C'$ over $\F_q$, repeating each coordinate $\Delta$ times, and appending enough $0$ entries to make a code whose length is that of $C$. 
\end{ntheorem}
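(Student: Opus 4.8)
The plan is to pass through the main correspondence of Theorem~\ref{thm_correspondence_codes_multisets} and prove the equivalent geometric statement that the multiplicity of every point is divisible by $\Delta$; the two operations in the theorem (repeating coordinates, appending zeros) are then just the translation of this fact back to generator matrices. First I would delete the zero columns of a generator matrix of $C$ (Exercise~\ref{exercise_remove_zero_columns} shows this changes neither the weights nor the divisibility), obtaining a spanning $[\neff,k]_q$-code and hence a spanning multiset $\cM$ of points in $\PG(k-1,q)$ with $\#\cM=\neff$, for which $\Delta$-divisibility reads $\cM(H)\equiv\#\cM\pmod\Delta$ for every hyperplane $H\in\cH$. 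If I can establish $\Delta\mid\cM(P)$ for all $P\in\cP$, then $\cM=\Delta\cM'$ for some multiset $\cM'$; the code $C'$ associated to $\cM'$, with each coordinate repeated $\Delta$ times and $n-\neff$ zero coordinates appended, is equivalent to $C$, as required.

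The core consists of two double-counting arguments, both exploiting $\gcd(\Delta,q)=1$. For the first, fix a point $P_0$ and sum $\cM(H)$ over the $[k-1]_q$ hyperplanes through $P_0$. Counting point--hyperplane incidences, $P_0$ lies in all $[k-1]_q$ of them while each other point lies in the $[k-2]_q$ hyperplanes containing the line it spans with $P_0$, so $\sum_{H\ni P_0}\cM(H)=[k-2]_q\#\cM+([k-1]_q-[k-2]_q)\cM(P_0)=[k-2]_q\#\cM+q^{k-2}\cM(P_0)$. Since each summand is $\equiv\#\cM$, the left side is $\equiv[k-1]_q\#\cM$, and comparing gives $q^{k-2}\cM(P_0)\equiv q^{k-2}\#\cM\pmod\Delta$; as $q^{k-2}$ is a unit modulo $\Delta$, I conclude $\cM(P_0)\equiv\#\cM\pmod\Delta$ for every point (taking $k\ge2$; the case $k=1$ is immediate). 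For the second, sum $\cM(H)$ over all $[k]_q$ hyperplanes: each point lies in $[k-1]_q$ of them, so the sum equals $[k-1]_q\#\cM$, while $\Delta$-divisibility makes it $\equiv[k]_q\#\cM$; subtracting and using $[k]_q-[k-1]_q=q^{k-1}$ together with $\gcd(\Delta,q)=1$ yields $\#\cM\equiv0\pmod\Delta$.

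The main obstacle is precisely the gap between these two facts: the first count only shows that all point multiplicities are congruent to one another (and to $\#\cM$) modulo $\Delta$, which is far weaker than the desired divisibility. It is the global count that closes this gap by forcing $\Delta\mid\#\cM$, so that $\cM(P)\equiv\#\cM\equiv0\pmod\Delta$ for every $P$. What makes both counts succeed is that the relevant incidence numbers differ only by the pure powers $q^{k-2}$ and $q^{k-1}$, which are invertible modulo $\Delta$ exactly under the hypothesis $\gcd(\Delta,q)=1$; this is also where the statement must fail otherwise, as the $q^{k-1}$-divisible simplex code (Example~\ref{ex_simplex_code}), whose points all have multiplicity $1$, demonstrates. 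With $\Delta\mid\cM(P)$ in hand the reduction of the first paragraph completes the proof.
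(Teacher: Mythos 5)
Your proof is correct, and every step checks out: the incidence numbers $[k-1]_q$ (hyperplanes through a point) and $[k-2]_q$ (hyperplanes through a line) are right, the two congruences follow exactly as you state, and the final reduction from $\Delta\mid\cM(P)$ for all $P\in\cP$ back to the repetition structure of a generator matrix is precisely the content of the correspondence in Theorem~\ref{thm_correspondence_codes_multisets}. Be aware, however, that the paper itself contains no proof of this statement: it is quoted as Theorem~1 of Ward's paper \cite{ward1981divisible}, so the only available comparison is with Ward's original argument, which is carried out on the coding-theoretic side rather than geometrically. Your route is genuinely different and is arguably better adapted to this survey: after transporting the problem through the code/multiset dictionary, you use nothing beyond two point--hyperplane double counts. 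Your global count is the second standard equation of Lemma~\ref{lemma_standard_equations} read modulo $\Delta$, and your local count over the pencil of hyperplanes through a fixed point $P_0$ (cf.\ Exercise~\ref{exercise_counting_oversubspaces}) is its pointwise refinement; in both, the hypothesis $\gcd(\Delta,q)=1$ enters only through the invertibility of $q^{k-2}$ and $q^{k-1}$ modulo $\Delta$, which is exactly where the conclusion must fail when $\Delta$ is a power of the characteristic, as your simplex-code example (Example~\ref{ex_simplex_code}) confirms. You also correctly identify and close the one genuine gap such an argument has: the local count alone only yields $\cM(P)\equiv\#\cM\pmod\Delta$, and summing that relation over all points would merely give $[k-1]_q\cdot\#\cM\equiv 0\pmod\Delta$, which is weaker than $\Delta\mid\#\cM$ since $[k-1]_q$ need not be coprime to $\Delta$; the separate global count is therefore indispensable, and you supply it. What Ward's approach buys is independence from the geometric machinery; what yours buys is a short, self-contained proof inside the framework the survey has already set up.
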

\begin{proof}
  The statement is clearly true for $k=1$, so that we assume $k\ge 2$. Let $\cM$ be the corresponding multiset of points in $\PG(k-1,q)$. If $k\ge 3$ consider an arbitrary 
  subspace $S$ of codimension $2$ and the $q+1$ hyperplanes $H_0,\dots,H_q$ containing $S$. We have 
  $$
    (q+1)\#\cM \equiv \sum_{i=0}^q \cM(H_i) =\#\cM+q\cdot\cM(S)\pmod\Delta,
  $$
  so that $\gcd(\Delta,q)=1$ implies that also the restriction $\cM|_{H_i}$ of $\cM$ to hyperplane $H_i$ is $\Delta$-divisible, i.e., we have $\cM(S)\equiv \cM(H_i)\equiv\#\cM$ 
  for every hyperplane $S$ of $H_i$. Thus, it suffices to consider the case $k=2$ where we have $\cM(P)\equiv \#\cM\pmod\Delta$. From $\#\cM=\sum_{P\in\cP} \cM(P)\equiv (q+1)\#\cM\pmod \Delta$ 
  and $\gcd(\Delta,q)=1$ we then conclude $\cM(P)\equiv\#\cM\equiv 0\pmod{\Delta}$, i.e., the multiplicity of every point is divisible by $\Delta$.
\end{proof}
\begin{important}{$s$ dividing $\Delta$ while $\gcd(s,q)=1$ implies $\mathbf{s}$-fold repetition}  
Given an arbitrary positive integer $\Delta$ and a field size $q=p^m$, we can uniquely write $\Delta=s\cdot t$, where $s,t\in\mathbb{N}$, $\gcd(s,q)=1$, and 
$t$ divides a sufficiently large power of $p$, i.e., there exist an integer $e$ with $t=p^e$. From Theorem~\ref{thm_delta_divides_q_power} we conclude 
that each $\Delta$-divisible $[n,k]_q$ code $C$ arises from a $q^r$-divisible $[n',k]_q$-code $C'$, where $r=\tfrac{e}{m}$ and $n's\le n$, by repeating each coordinate 
in $C'$ exactly $s$ times and adding $n-n's$ zero entries. Thus, it is sufficient to study $q^r$-divisible codes over $\F_q$, where $q^r$ is a power of the characteristic $p$ 
of the finite field $\F_q$.  
\end{important}
\begin{nexercise}
  Show that no projective $[54,6,\{24,27,30\}]_2$-code exists.
\end{nexercise}

\begin{trailer}{Divisibility inherits}Assume that $q^r$ divides $\Delta$ and that $\cM$ is a $\Delta$-divisible multiset of points in $\PG(v-1,q)$ with $v\ge 3$. If $W$ is a subspace of codimension $2$, then 
there are $q+1$ hyperplanes $H_1,\dots,H_{q+1}$ through $W$, i.e., hyperplanes in $\PG(v-1,q)$ that fully contain the subspace $W$. Counting points yields
\begin{equation}
    \sum_{i=1}^{q+1} \cM(H_i)
    =q\cdot \cM(W)\,+\,\#\cM\equiv (q+1)\#\cM 
    \pmod{\Delta},
\end{equation} 
so that
$$
  q\cdot \#\cM(H_i)\equiv q\cdot \#\cM \equiv q\cdot \cM(W)\pmod{\Delta}, 
$$
which is equivalent to
\begin{equation}
  \#\cM(H_i)\equiv \#\cM \equiv \cM(W)\pmod{\Delta/q} 
\end{equation}
if $r\ge 1$, i.e., $\Delta/q\in\N$. By induction over $j$ we can easily prove:
\begin{nlemma}
  \label{lemma_heritable}
  Let $\cM$ be a $\Delta$-divisible multiset of points in $V\simeq \PG(v-1,q)$, where $q^r$ divides $\Delta$, and $U\neq\langle\zv\rangle$ be a subspace of $V$ of codimension $0\le j\le r$.
  Then, the restriction $\cM|_U$ is a $q^{r-j}$-divisible multiset in the $v-j$-dimensional vector space $U$.
\end{nlemma}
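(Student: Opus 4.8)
The plan is to prove the statement by induction on the codimension $j$, using the displayed computation preceding the lemma as the single reduction step. The key observation is that, by Equation~(\ref{eq_divisible_multiset}), $\Delta$-divisibility of a multiset $\cN$ in a space $U'$ is exactly the family of congruences $\cN(H)\equiv\#\cN\pmod\Delta$ ranging over the hyperplanes $H$ of $U'$, together with the degenerate reading $\#\cN\equiv 0\pmod\Delta$ when $\dim U'=1$. Thus the whole proof amounts to propagating these congruences one codimension at a time while shaving off one factor of $q$ at each stage.

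First I would isolate the reduction step. Let $\cN$ be a $q^s$-divisible multiset in a vector space $U'$ of dimension $d\ge 2$ with $s\ge 1$, and let $H$ be a hyperplane of $U'$; I claim $\cN|_H$ is $q^{s-1}$-divisible. When $d\ge 3$, the hyperplanes of $H$ are precisely the codimension-$2$ subspaces $W$ of $U'$ contained in $H$. Applying the pre-lemma count to $\cN$ and such a $W$ gives $q\cN(W)\equiv q\#\cN\pmod{q^s}$, hence $\cN(W)\equiv\#\cN\pmod{q^{s-1}}$; the same congruence holds for $\cN(H)$ since $H$ is one of the $q+1$ hyperplanes through $W$, so by transitivity $(\cN|_H)(W)=\cN(W)\equiv\cN(H)=\#(\cN|_H)\pmod{q^{s-1}}$, which is the desired divisibility. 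The boundary case $d=2$ is handled separately: summing $\cN(P)\equiv\#\cN\pmod{q^s}$ over the $q+1$ points $P$ of $U'$ yields $q\#\cN\equiv 0$, so $\#(\cN|_H)=\cN(H)\equiv\#\cN\equiv 0\pmod{q^{s-1}}$, which matches the $\dim=1$ reading of divisibility.

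With the reduction step in hand, I would run the induction on $j$. Since $q^r\mid\Delta$, the multiset $\cM$ is $q^r$-divisible, which is the base case $j=0$. For the step from $j$ to $j+1$ (permissible while $r-j\ge 1$), choose a codimension-$j$ subspace $U'$ with $U\subset U'$ a hyperplane; the hypothesis $U\neq\langle\zv\rangle$ forces $\dim U'=v-j\ge 2$. By the inductive hypothesis $\cM|_{U'}$ is $q^{r-j}$-divisible, and since $r-j\ge 1$ the reduction step applies with $s=r-j$, giving that $\cM|_U=(\cM|_{U'})|_U$ is $q^{r-j-1}=q^{r-(j+1)}$-divisible, as required.

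The only genuinely delicate point is the division by $q$ in the congruence $q\cN(W)\equiv q\#\cN\pmod{q^s}$: this is licit precisely because $q\mid q^s$, i.e.\ because we only ever apply the step while $s=r-j\ge 1$, which is exactly the constraint $j\le r$. Everything else is routine bookkeeping: that restriction commutes with passing to a subspace, that $(\cM|_{U'})(W)=\cM(W)$ and $\#(\cM|_{U'})=\cM(U')$ because restriction only zeroes out the points outside $U'$, and that the ambient dimension stays at least $2$ throughout the descent so that each hyperline indeed lies on $q+1$ hyperplanes.
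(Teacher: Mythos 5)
Your proof is correct and follows essentially the same route as the paper: the codimension-two counting identity $\sum_{i=1}^{q+1}\cM(H_i)=q\cdot\cM(W)+\#\cM$ established immediately before the lemma, giving $\cM(W)\equiv\#\cM\pmod{\Delta/q}$, followed by induction on the codimension $j$. The paper states this induction without spelling it out ({\lq\lq}by induction over $j$ we can easily prove{\rq\rq}), so your contribution is simply to make explicit the inductive bookkeeping and the boundary case of a two-dimensional intermediate space, where hyperplanes are points and divisibility is read as a condition on the cardinality.
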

\end{trailer}
So, e.g.\ when restricting a given multiset of points $\cM$ over $\F_q$ to a hyperplane $H$, the multiplicity goes down by at most a factor $q$. Of course, if $\Delta$ is the 
maximum possible divisibility of $\cM$, then the maximum possible divisibility of $\cM|_H$ divides $\Delta$ and is at least $1$. 
The converse of Lemma~\ref{lemma_heritable} is not true in general: 
\begin{nexample}
  Let $\cM$ in $\PG(5,q)$ be given by $q\cdot \chi_{\be_1}+q\cdot \chi_{\be_2}+q\cdot \chi_{\be_3}+q\cdot \chi_{\be_4}$, where $\be_i$ denotes the $i$th unit vector. 
  I.e., we consider a multiset $\cM$ given by four $q$-fold points whose span is $4$-dimensional. With this, we have $\#\cM=4q$ and $\cM$ is $q$-divisible. For any 
  hyperplane $H$ that contains $\be_1$, $\be_2$, and $\be_3$ but not $\be_4$ we have $\cM(H)=3q$. Thus, $\cM$ is not $\Delta$-divisible for any $\Delta>q$. However, we 
  even have $\cM(L)\equiv \#\cM\pmod q$ for any line $L$. 
\end{nexample}

\begin{nexercise}
  Let $\cM$ be a multiset of points in $\PG(v-1,p^h)$ and $1\le s\le v-1$ be an integer. Show that $\cM(S)\equiv\#\cM\pmod {p^r}$ for each $s$-space $S$ in 
  $\PG(v-1,p^h)$ implies $\cM(T)\equiv \#\cM\pmod {p^r}$ for each $t$-space $T$ in $\PG(v-1,p^h)$, where $s\le t\le v-1$ and $r$ are positive integers. 
\end{nexercise}

For the special case $j=1$, we can easily translate Lemma~\ref{lemma_heritable} to the language of linear codes. To this end we need a little more notation. Let $C$ 
be an $[n,k]_q$-code. For an arbitrary index set $I\subseteq\{1,\dots,n\}$ and an arbitrary 
codeword $\bc\in C$ by $\bc_I$ we denote the $|I|$-tuple that consists of the entries $c_i$ with $i\in I$; $\bc_I$ is also called restricted codeword. For 
an arbitrary but fixed codeword $\tilde{\bc}\in C$ we set $I:=\{1,\dots,n\}\backslash \supp(\tilde{\bc})$ as abbreviation. With this we can define the so-called 
\emph{residual code} of $C$ with respect to $\tilde{\bc}$ by
$$
  \Res(C;\tilde{\bc}):=\left\{\bc_I\,:\,\bc\in C\right\}.
$$       
This code of length $|I|$ is $\Delta/q$-divisible as $\cM|_H$ is $\Delta/q$-divisible, where $\cM$ is the multiset of points corresponding to $C$ and $H$ is the hyperplane 
corresponding to the codeword $\tilde{\bc}$. In this latter and special form, Lemma~\ref{lemma_heritable} can be found in \cite[Lemma~13]{Ward-1998-JCTSA}.
\begin{nexercise}
  Let $C$ be a $q^r$-divisible $[n,k]_q$-code with $r>k-1$. Show that $C$ arises from a $q^{k-1}$-divisible $[n',k]_q$-code $C'$ by repeating each non-zero coordinate 
  $q^{r-k+1}$ times and adding a suitable number of zero coordinates. So, in particular we have that the effective length of $C$ is divisible by $q^{r-k+1}$, 
  cf.~\cite[Theorem 1.3]{ward1990weight}.
  
  To this end, let $c^1,\dots,c^k$ be arbitrary codewords of $C$ and $I^\bx\subseteq \{1,\dots,n\}$ be defined as the set of all indices $1\le i\le $ such that $c^j_i=x_i$ for all 
  $1\le j\le k$, where $\bx\in\{0,1,\dots,q-1\}^k$ is arbitrary. By eventually considering an isomorphic code assume the following normalization criterion: For each $1\le j\le k$ and 
  each $i\in\{1,\dots,n\}\backslash\supp(\langle c^1,\dots,c^{j-1}\rangle)$ we have $c^j_i\in\{0,1\}$. With this, show that $\# I^\bx$ is divisible by $q^{r-k+1}$ if $\bx\neq \zv$. 
\end{nexercise}

\begin{trailer}{An easy averaging argument}Let $\cM$ be a non-empty multiset of points in $\PG(v-1,q)$, where $v\ge 2$. Since each point is contained in $\qbin{v-1}{v-2}{q}=\qbin{v-1}{1}{q}=[v-1]_q$ hyperplanes, see 
Exercise~\ref{exercise_counting_oversubspaces}, and there are $\qbin{v}{v-1}{q}=\qbin{v}{1}{q}=[v]_q$ hyperplanes in total, the average number of points per hyperplanes is given by
\begin{equation}
  \frac{\sum_{H\in\cH} \cM(H)}{|\cH|}=\frac{\#\cM \cdot [v-1]_q}{[v]_q}
  = 
  \frac{\#\cM \cdot [v-1]_q}{q [v-1]_q + 1}
  = \frac{\#\cM}{q+\frac{1}{[v-1]_q}}
  <\frac{\#\cM}{q}.
\end{equation}
Choosing a hyperplane $H\in\cH$ that minimizes $\cM(H)$ we obtain:
\begin{nlemma}
  \label{lemma_average}
  Let $\cM$ be a non-empty multiset of points in $\PG(v-1,q)$. If $v\ge 2$, then there exists a hyperplane $H\in\cH$ with $\cM(H)< \frac{\#\cM}{q}$. 
\end{nlemma}
\begin{nexercise}
  \label{exercise_upper_bound_proper_multiset}
  Let $\cM$ be a proper multiset of points in $\PG(v-1,q)$ with $v\ge 2$. Show $\min\{\cM(H)\,:\,H\in\cH\}\le \frac{[v-2]_q}{[v-1]_q} \cdot \#\cM$.  
\end{nexercise} 
The non-geometric coding counterpart of Lemma~\ref{lemma_average} is the well-known existence of a codeword of weight $>\tfrac{q-1}{q} \cdot \neff$. 
For a refinement we refer to the \emph{Hamada bound} in Theorem~\ref{theorem_hamada_bound}.
\end{trailer}
\begin{nexample}
  \label{ex_no_4_div_card_9}
  From Lemma~\ref{lemma_average} we can directly conclude that there is no $2$-divisible multiset of points of cardinality $1$ over $\F_2$. Note 
  that due to $1\not\equiv 0\pmod 2$, there cannot be such a multiset in $\PG(1-1,2)$. Now assume that 
  $\cM$ is a $4$-divisible multiset of points of cardinality $9$ over $\F_2$. Since $9\not\equiv 0\pmod 4$, we conclude that the dimension $v$ of the 
  ambient space of $\cM$ is at least $2$. Lemma~\ref{lemma_average} guarantees the existence of a hyperplane $H$ with 
  $\cM(H)<\#\cM/q=9/2$. Since $\cM(H)\equiv \#\cM\pmod 4$, we have $0<\cM(H)\le 1$. So, due to Lemma~\ref{lemma_heritable} we have that the 
  restricted arc $\cM|_H$ is $2$-divisible and has cardinality $1$. Thus, there is no $4$-divisible multiset of points of cardinality $9$ over 
  $\F_2$.  
\end{nexample}
\begin{nexercise}
  \label{exercise_counting_oversubspaces}
  Let $S$ be an $s$-space in $\PG(v-1,q)$ and $t$ be an integer with $s\le t\le v$. Show that the number of $t$-spaces that contain 
  $S$ is given by $\qbin{v-s}{t-s}{q}$.
\end{nexercise}
 
Example~\ref{ex_simplex_code} and Exercise~\ref{exercise_q_linear_combination} directly give: 
\begin{nlemma}
\label{lem:union_subspaces}
  Let $\cU$ be a multiset of subspaces of~$\,\PG(v\!-\!1,q)$ and $\cM = \uplus_{U\in\mathcal{U}} \spaces{U}{1}$ the \emph{associated multiset 
  of points}. (In the expression $\biguplus_{U\in\cU}$, the subspace $U$ is repeated according to its multiplicity in the multiset $\cU$.) 
  Let $k$ be the smallest dimension among the subspaces in $\cU$. If $k \geq 1$, then the multiset $\cM$ is $q^{k-1}$-divisible.
\end{nlemma}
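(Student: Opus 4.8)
The plan is to write $\cM$ as a sum of the point sets of the individual subspaces making up $\cU$, to check that each such summand is already $q^{k-1}$-divisible, and then to conclude using the fact that a sum of $\Delta$-divisible multisets is $\Delta$-divisible (Exercise~\ref{exercise_divisibility_sum_construction}).

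First I would fix a single subspace $U\in\cU$ and set $d:=\dim(U)$, so that $d\geq k$ by the choice of $k$. The contribution of $U$ to $\cM$ is the characteristic function $\chi_U$, that is, the full point set $\spaces{U}{1}$ of the projective geometry $\PG(d-1,q)$ carried by $U$. By Example~\ref{ex_simplex_code} this full point set is precisely the dimension-$d$ simplex code, all of whose nonzero weights equal $q^{d-1}$; hence $\chi_U$ is $q^{d-1}$-divisible. Since divisibility does not depend on the ambient space (Lemma~\ref{lem:ambient_space_unwichtig}), this holds equally whether $\chi_U$ is viewed inside $U$ or inside the full $\PG(v-1,q)$ in which $\cM$ lives.

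Next I would downgrade the divisibility. As $d\geq k$ we have $q^{k-1}\mid q^{d-1}$, and any weight divisible by $q^{d-1}$ is in particular divisible by $q^{k-1}$; thus each summand $\spaces{U}{1}$ is $q^{k-1}$-divisible. Forming the multiset sum $\cM=\uplus_{U\in\cU}\spaces{U}{1}$, where each $U$ occurs with its multiplicity in $\cU$, and invoking Exercise~\ref{exercise_divisibility_sum_construction}, I conclude that $\cM$ is $q^{k-1}$-divisible.

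I do not expect a genuine obstacle, since the statement follows directly from the cited results; the only points requiring care are bookkeeping ones. The case $k=1$ is trivial, as every multiset is $q^0=1$-divisible; for $k\geq 2$ every subspace in $\cU$ has dimension $d\geq k\geq 2$, so Example~\ref{ex_simplex_code} (which assumes dimension at least $2$) indeed applies to each of them. One must also remember to invoke Lemma~\ref{lem:ambient_space_unwichtig} so that the simplex-code divisibility of a low-dimensional $U$ is not lost once $U$ is placed inside a possibly much larger ambient space.
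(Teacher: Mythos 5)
Your proof is correct and follows essentially the same route as the paper, which derives the lemma directly from Example~\ref{ex_simplex_code} (each $\chi_U$ is a simplex-code point set, hence $q^{\dim(U)-1}$-divisible and a fortiori $q^{k-1}$-divisible) together with the fact that sums of $\Delta$-divisible multisets are $\Delta$-divisible. Your additional bookkeeping remarks on the $k=1$ case and on Lemma~\ref{lem:ambient_space_unwichtig} are sound and only make explicit what the paper leaves implicit.
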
 
So the multiset of points given by the points of the $20$ solids and $30$ planes from Subsection~\ref{subsec_introductory_application} is $\Delta$-divisible for $\Delta=2^{3-1}=4$.     

\bigskip

\begin{trailer}{The kernel of the incidence matrix between points and $\mathbf{k}$-spaces}Let $A\in\{0,1\}^{ [v]_q\times \qbin{v}{k}{q}}$ be the incidence 
matrix between the $[v]_q$ points $P$ and the $\qbin{v}{k}{q}$ $k$-spaces $K$ in $\PG(v-1,q)$, i.e., the entries of $A$ are given by $a_{P,K}=1$ iff $P\le K$. 
For any multiset $\cK$ of $k$-spaces in $\PG(v-1,q)$ let $\bx\in\N^{\qbin{v}{k}{q}}$ be the corresponding counting vector. With this, the vector 
$A\bx\in\N^{[v]_q}$ is in one-to-one correspondence to the multiset $\cM$ of points associated to $\cK$. Let $\by\in\R^{[v]_{q}}$ be an element of the 
cokernel of $A$, which is the kernel of $A^\top$, i.e., $A^\top \by=\zv$. So, we have $\by^\top A=\zv$, $\by^\top (A \bx) =\zv$, and the necessary condition $\by^\top \bz=\zv$ for 
any candidate vector $\bz\in\N^{[v]_q}$ for $\cM$. In other words, the kernel of the incidence matrix $A^\top$ gives necessary conditions whether a 
given multiset $\cM$ of points can be decomposed into a multiset of $k$-spaces. (As in Lemma~\ref{lem:union_subspaces}, we can also consider the situation 
of subspaces $K$ with $\dim(K)\ge k$ instead of $\dim(K)=k$.) 

Now observe that we have some freedom over which domain we compute the kernel and interprete the corresponding necessary conditions $\by^\top\bz=\zv$. Over the 
real or rational numbers the kernel of $A^\top$ is trivial so that we don't get any non-trivial conditions $\by^\top\bz=\zv$. For $k=v-1$, i.e.\ hyperplanes, this observation 
follows from Exercise~\ref{exercise_dual_multiset_coefficients} and we can use Exercise~\ref{exercise_chi_subspace} for the general case. If $p$ is the characteristic of 
$\F_q$, then the domain $\F_p\cong \Z/p\Z$ is a natural choice. To this end we remark that $p$-rank of the incidence matrix between points and $k$-spaces can be explicitly 
computed using the famous \emph{Hamada formula} \cite{hamada1973p}, see also \cite{macwilliams1968p,smith1969p} for ancestors. Here we want to start with a more general 
perspective first. We are interested in the set of multisets of points that can be decomposed into a multiset of $k$-spaces, i.e., non-negative integer combinations of the 
rows of $A^\top\!$. As said, over $\Q$ the row space of $A^\top$ is $\Q^{[v]_q}$. Over $\Z$ we can use the structure theorem for finitely generated modules over a principal 
ideal domain $R$ -- the \emph{invariant factor decomposition} to be more precise.  
%% 
%% I.e., for each finitely generated module $M$ over a principal ideal domain $R$ there is a decreasing sequence of proper ideals $\left(d_1\right)\subseteq \left(d_2\right)
%% \subseteq\dots\left(d_n\right)$ such that $M$ is isomorphic to the following sum of cyclic modules
%% \begin{equation}
%%   M\simeq R/\left(d_1\right)\oplus R/\left(d_2\right)\oplus \dots R/\left(d_n\right),
%% \end{equation}   
%% hich is called \emph{invariant factor decomposition}. The generators $d_i$ of the ideals are unique up to multiplication by a unit in $R$ and are called \emph{invariant factors} of $M$. 
%% Inclusion of the ideals translates to $d_1\,|\,d_2\,|\,\dots\,|\,d_n$. We remark that the free part of $M$ corresponds to the factors $d_i=0$. 
For every non-zero matrix $B\in R^{m\times n}$ there exist invertible matrices $S\in R^{m\times m}$ and $T\in R^{n\times n}$ such that
$D=SBT\in R^{m\times n}$ has zero entries outside of its main diagonal $d_1,\dots,d_{\min(m,n)}$, where we additionally have 
$d_i\,|\, d_{i+1}$ for all $1\le i<\min(m,n)$.   
%% \begin{equation}
%%  SBT=
%%  \begin{pmatrix}
%%    d_1 & 0 & 0 & \dots & 0 & \dots & 0 \\ 
%%    0 & d_2 & 0 & \dots & 0 & \dots & 0 \\ 
%%    0 & 0 & \ddots & & \vdots & & \vdots \\ 
%%    \vdots & & & d_r \\
%%    0 & & \dots & & 0 & \dots & 0 \\ 
%%    \vdots & & \dots & & \vdots & & \vdots \\
%%    0 & & \dots & & 0 & \dots & 0 
%%  \end{pmatrix}\in R^{m\times n},
%%\end{equation} 
%%where $d_i\,|\, d_{i+1}$ for all $1\le i< r\le \min(m,n)$. 
This is the \emph{Smith normal form} of $B$ and the $d_i$ are called \emph{invariant factors}, \emph{elementary divisors}, or \emph{invariants}. 
The matrices $S$ and $T$ can be algorithmically obtained by recursively applying invertible row and column operations to $B$ till it reaches the desired diagonal form. 
Now we choose $R=\Z$ and set $M=\left\{x^\top B\,:\, x\in\Z^m\right\}$, $M'=\left\{x^\top D\,:\, x\in\Z^m\right\}$. Setting $d_i:=0$ for $\min\{m,n\}<i\le n$ we have 
$M'=d_1\Z\times d_2\Z\times\dots\times d_n\Z\subseteq \Z^{1\times n}$. Using the convention that $i\equiv j \pmod 0$ iff $i=j$ for all $i,j\in\Z$, we can also write 
\begin{equation}
  M'=\left\{z' \in \Z^{1\times m}\,:\, z'\be_i\equiv 0\pmod {d_i}\,\,\forall 1\le i\le n\right\}\!,
\end{equation}  
where $\be_i$ denote the unit vectors. Using that $x^\top B=z$ is equivalent to $\left(x^\top S^{-1}\right)D=zT$ we obtain
\begin{equation}
  M=\left\{z \in \Z^{1\times m}\,:\, z\left(T\be_i\right)\equiv 0\pmod {d_i}\,\,\forall 1\le i\le n\right\}\!,
\end{equation}
i.e., we can read off the conditions for elements in $M$ from the columns of $T$. For a brief description of the Smith normal form in our context we refer to \cite[Section 4]{kurz2020additiveGriesmer}. 
\end{trailer}

\begin{nexample}
  Let $B=\begin{pmatrix}
    1 & 1 & 5 & 7 \\ 
    2 & 8 & 10 & 20 \\ 
    3 & 3 & 45 & 51  
  \end{pmatrix}$
  and $M$ be the $\Z$-modul generated by the rows of $B$. The Smith normal form of $B$ is given by
  $$
    D:=SBT=
    \begin{pmatrix}
    1 & 0 &  0 & 0 \\ 
    0 & 6 &  0 & 0 \\ 
    0 & 0 & 30 & 0
    \end{pmatrix}
  \quad\text{with}\quad
  %%$$
  %%with
  %%$$
    S=\begin{pmatrix}
       1 & 0 & 0 \\
      -2 & 1 & 0 \\ 
      -3 & 0 & 1 
    \end{pmatrix}
    \quad\text{and}\quad
    T=\begin{pmatrix}
      1 & -1 & -6 &  1 \\ 
      0 &  1 & -1 &  1 \\  
      0 &  0 &  0 &  1 \\ 
      0 &  0 &  1 & -1
    \end{pmatrix}.
  $$
  %%Clearly, the $\Z$-modul $M'$ generated by the rows of $B'$ can be easily characterized:
  We have 
  \begin{eqnarray*}
    M' %% &=& \left\{ z\in \Z^{1\times 4}\,:\, z\be_2\equiv 0 {\pmod 6}, z\be_3\equiv 0 {\pmod {30}},z\be_4=0\right\}\\ 
    &=& \left\{ \left(z_1,z_2,z_3,z_4\right)\in \Z^{1\times 4}\,:\, z_2\equiv 0\pmod 6,z_3\equiv 0\pmod {30}, z_4=0\right\},
  \end{eqnarray*}
  for the $\Z$-modul $M'$ generated by the rows of $D$
  %% where $\be_i$ denotes the $i$th unit vector.
  %% GAP
  %% A:=[[1,1,5,7],[2,8,10,20],[3,3,45,51]];
  %% res:=SmithNormalFormIntegerMatTransforms(A);
  %% res.rowtrans*A*res.coltrans=res.normal; --> true
  %% [ [ 1, 0, 0, 0 ], [ 0, 6, 0, 0 ], [ 0, 0, 30, 0 ] ]
  %% res.rowtrans; --> [ [ 1, 0, 0 ], [ -2, 1, 0 ], [ -3, 0, 1 ] ]
  %% res.coltrans; --> [ [ 1, -1, -6, 1 ], [ 0, 1, -1, 1 ], [ 0, 0, 0, 1 ], [ 0, 0, 1, -1 ] ]
  %% So, $(5,24,-30,0)$ is an element of $M'$ while $(20,25,30,0)$ and $(30,30,30,30)$ are not. Since $B'y=z$ is equivalent to $B(Ty)=S^{-1}z$ we can read of the 
  %% conditions for $M$ from the columns of $T$:
  and   
  \begin{eqnarray*}
    M&=&\big\{ \left(z_1,z_2,z_3,z_4\right)\in \Z^{1\times 4}\,:\, -z_1+z_2\equiv 0\pmod 6, -6z_1-z_2+z_4\equiv 0 \pmod {30},\\ 
    &&  z_1+z_2+z_3-z_4=0\big\}.
  \end{eqnarray*}
  The conditions can as well be split up to the different occurring prime powers and we can modify the coefficients according to the respective moduli:
  \begin{eqnarray*}
     z_1+z_2\equiv 0 \pmod 2                && z_1+z_2+z_3+2z_4 \equiv 0\pmod 3 \\  
     z_2+z_4 \equiv 0\pmod 2                && 4z_1+4z_2+z_4 \equiv 0\pmod 5    \\
     z_1+z_2+z_3+z_4 \equiv 0\pmod 2        && z_1+z_2+z_3+4z_4 \equiv 0\pmod 5 \\ 
     2z_1+z_2\equiv 0\pmod 3                && z_1+z_2+z_3-z_4 =0               \\
     2z_2+z_4\equiv 0\pmod 3                &&
    \end{eqnarray*}
    The last condition $\!\!{\pmod p}$ for $p\in\{2,3,5\}$ is superfluous since it is implied by $z_1+z_2+z_3-z_4 =0$. We can easily generate more 
    superfluous conditions like e.g.\ $z_1+z_3\equiv0\pmod 2$ or $z_3\equiv 0\pmod 5$ by considering the span of our constraints. Further irredundant descriptions of $M$, i.e.\ 
    a description as a tuple of integers where we cannot drop one of the conditions, are e.g.\ given by 
    \begin{eqnarray*}
    M&=&\big\{ \left(z_1,z_2,z_3,z_4\right)\in \Z^{1\times 4}\,:\, z_1+z_2\equiv 0\!\!\!\pmod 2, z_1+z_3\equiv 0 \!\!\!\pmod 2, z_3\equiv0\!\!\!\pmod 5,\\ 
    && z_2+z_3\equiv 0 \!\!\!\pmod 3,z_1+z_2+z_4\equiv 0\!\!\!\pmod 3, z_1+z_2+z_3-z_4=0\big\}
    \end{eqnarray*} 
    or
    \begin{eqnarray*}
    M&=&\big\{ \left(z_1,z_2,z_3,z_4\right)\in \Z^{1\times 4}\,:\, 5z_1+5z_2+2z_3\equiv 0\!\!\!\pmod {10}, 3z_1+2z_2+5z_3\equiv 0 \!\!\!\pmod 6,\\ 
    && z_1+z_2+z_4\equiv 0\!\!\!\pmod 3, z_1+z_2+z_3-z_4=0\big\}.
    \end{eqnarray*}
    The span of our original set of constraints as well as the \emph{Chinese remainder theorem} allow a lot of freedom in the choice of a suitable (irredundant) description of $M$. 
    We only need to ensure that the span of the used constraints coincide with the original one. 
\end{nexample}

\begin{ntheorem}(E.g.~\cite[Theorem 3.1]{chandler2006invariant}.)
  \label{theorem_invariant_factors}
  Let $A$ be the incidence matrix of points and $k$-spaces in $\PG(v-1,q)$ and let $p$ be the characteristic of $\F_q$. Then, the invariant factors of $A^\top$ are all $p$-powers 
  except the last, which is a $p$-power times $[k]_q$.
\end{ntheorem}

\begin{nexample}
  \label{example_incidences_lines_points_pg_2_3}
  Let $B$ be the incidence matrix between lines and points in $\PG(2,3)$ and $T$ be  
  the column transformation matrix of the Smith normal form of $B$, i.e.\  
  $$
  B=
  {\tiny\arraycolsep=0.3\arraycolsep\ensuremath{\left(\begin{array}{rrrrrrrrrrrrr}
  1 & 0 & 0 & 1 & 0 & 0 & 1 & 0 & 0 & 1 & 0 & 0 & 0 \\
  1 & 0 & 0 & 0 & 1 & 0 & 0 & 0 & 1 & 0 & 1 & 0 & 0 \\
  1 & 0 & 0 & 0 & 0 & 1 & 0 & 1 & 0 & 0 & 0 & 1 & 0 \\
  0 & 1 & 0 & 0 & 1 & 0 & 0 & 1 & 0 & 1 & 0 & 0 & 0 \\
  0 & 1 & 0 & 0 & 0 & 1 & 1 & 0 & 0 & 0 & 1 & 0 & 0 \\
  0 & 1 & 0 & 1 & 0 & 0 & 0 & 0 & 1 & 0 & 0 & 1 & 0 \\ 
  0 & 0 & 1 & 0 & 0 & 1 & 0 & 0 & 1 & 1 & 0 & 0 & 0 \\
  0 & 0 & 1 & 1 & 0 & 0 & 0 & 1 & 0 & 0 & 1 & 0 & 0 \\
  0 & 0 & 1 & 0 & 1 & 0 & 1 & 0 & 0 & 0 & 0 & 1 & 0 \\ 
  1 & 1 & 1 & 0 & 0 & 0 & 0 & 0 & 0 & 0 & 0 & 0 & 1 \\
  0 & 0 & 0 & 1 & 1 & 1 & 0 & 0 & 0 & 0 & 0 & 0 & 1 \\
  0 & 0 & 0 & 0 & 0 & 0 & 1 & 1 & 1 & 0 & 0 & 0 & 1 \\ 
  0 & 0 & 0 & 0 & 0 & 0 & 0 & 0 & 0 & 1 & 1 & 1 & 1
  \end{array}\right)}}
  \quad\text{and}\quad
   T=
    {\tiny\arraycolsep=0.3\arraycolsep\ensuremath{\left(\begin{array}{rrrrrrrrrrrrr}
    1 & 0 & 0 & 0 & 0 & 0 & 0 &  0 & -2 & -1 & -1 & -2 &  -3 \\ 
    0 & 1 & 0 & 0 & 0 & 0 & 0 & -1 & -1 & -1 & -2 & -1 & -11 \\ 
    0 & 0 & 1 & 0 & 0 & 0 & 0 & -2 &  0 & -1 &  0 &  0 & -11 \\  
    0 & 0 & 0 & 1 & 0 & 0 & 0 & -2 &  0 & -2 & -1 &  0 &  -7 \\ 
    0 & 0 & 0 & 0 & 1 & 0 & 0 &  0 & -2 & -2 &  0 & -1 &  -3 \\ 
    0 & 0 & 0 & 0 & 0 & 1 & 0 & -1 & -1 & -2 & -2 & -2 &  -3 \\ 
    0 & 0 & 0 & 0 & 0 & 1 & 1 & -1 & -1 & -3 & -3 & -3 &  -7 \\ 
    0 & 0 & 0 & 0 & 0 & 0 & 0 &  1 &  0 &  0 &  0 &  0 &  -3 \\ 
    0 & 0 & 0 & 0 & 0 & 0 & 0 &  0 &  1 &  0 &  0 &  0 &  -3 \\ 
    0 & 0 & 0 & 0 & 0 & 0 & 1 &  0 &  0 &  0 & -1 & -1 &  -7 \\ 
    0 & 0 & 0 & 0 & 0 & 0 & 0 &  0 &  0 &  0 &  1 &  0 &  -3 \\ 
    0 & 0 & 0 & 0 & 0 & 0 & 0 &  0 &  0 &  0 &  0 &  1 &  -3 \\ 
    0 & 0 & 0 & 0 & 0 & 0 & 0 &  0 &  0 &  0 &  0 &  0 &   1 
    \end{array}\right)}}.
  $$
  Here five invariant factors of $B$ equal $3$, the last equals $12$, and the first seven equal $1$.  
  So, for $z=\left(z_1,\dots,z_{13}\right)\in\Z^{1\times [3]_3}$ we have  $z\in M:=\left\{x^\top B\,:\, x\in \Z^{[3]_3}\right\}$ iff 
  {\footnotesize\begin{eqnarray*} 
    2z_2+z_3+z_4+2z_6+2z_7+z_8 & \equiv& 0\pmod 3,\\	   	      	
    z_1+2z_2+z_5+2z_6+2z_7+z_9 & \equiv& 0\pmod 3,\\     	
    2z_1+2z_2+2z_3+z_4+z_5+z_6 & \equiv& 0\pmod 3,\\	   	   	   	      	
    2z_1+z_2+2z_4+z_6+2z_{10}+z_{11} & \equiv& 0\pmod 3,\\
    z_1+2z_2+2z_5+z_6+2z_{10}+z_{12} & \equiv& 0\pmod 3,\\
    z_2+z_3+2z_4+2z_7+2z_{10}+z_{13} & \equiv& 0\pmod 3,\\
    \sum_{i=1}^{13} z_i &\equiv& 0\pmod 4,
  \end{eqnarray*}}
  where we have broken up the condition of the last column of $T$ modulo $12$ into two conditions modulo $3$ and modulo $4$, respectively.
\end{nexample}
%% GAP code:
%% LoadPackage("fining");
%% q:=3;
%% pg:=PG(2,q);
%% lp:=List(Points(pg));
%% ll:=List(Lines(pg));
%% mat:=NullMat(Size(ll),Size(lp));
%% for x  in [ 1 .. Size(ll) ]  do
%%   for y  in [ 1 .. Size(lp) ]  do
%%     if IsIncident(ll[x],lp[y]) then
%%       mat[x][y] := 1;
%%     fi;  
%%   od;
%% od;
%% res:=SmithNormalFormIntegerMatTransforms(mat);
%% U:=IdentityMat(Size(lp));
%% U*res.coltrans;
%% l:=List([1..Size(lp)],i->res.normal[i][i]);
%% Size(Filtered(l,x->x=1));
%% TransposedMat(res.coltrans)[4];

Since $[k]_q$ is coprime to $p=\operatorname{char}(\F_q)$, Theorem~\ref{theorem_invariant_factors} suggest to split the implied conditions into several $\!\!\!\mod p^l$ 
conditions (for possibly different values of $l$) and a single $\!\!\!\mod [k]_q$ condition. The latter just resembles the fact that we have $\#\cM\equiv 0\pmod {[k]_q}$ 
for a multiset of points arising as a union of $k$-space in $\PG(v-1,q)$, noting that this isn't true any more if we consider multisets of subspaces with dimension at least $k$ 
and assume $v>k$. The $\!\!\!\mod 3$ conditions in Example~\ref{example_incidences_lines_points_pg_2_3} may be a bit harder to guess. We observe that the sum of the first two 
conditions is equivalent to $\sum_{i=1}^9 z_i\equiv 0\pmod 3$, so that it might be worthwhile to look for different (irredundant) descriptions of $M$. Also in the generic case
of an arbitrary integer matrix $B$ the Chinese remainder theorem allows us to restrict our considerations to $\!\!\!\mod p^l$ conditions. We now formalize the underlying
objects in terms of codes over integer residue rings, see e.g.~\cite{blake1975codes}.

\begin{ndefinition}
  Let $p$ be a prime, $l$ be a positive integer, and $B\in R^{m\times n}$ a matrix where $R=\Z$ or $R=\Z/p^l\Z$. The \emph{(linear) $\Z_{p^l}$-code} $C$ of $B$ is given by the row span 
  of $B$ w.r.t.\  $\Z/p^l\Z$. The matrix $B$ is called a \emph{generator matrix} of $C$. The \emph{dual code} $C^\perp$ consists of all row vectors that are orthogonal to all elements 
  in $C$ (w.r.t.\ $\Z/p^l\Z$). We also call $C^\perp$ the \emph{$\Z_{p^l}$-kernel} of $B$.
\end{ndefinition}  

\begin{nexample}
  Let $C$ be the $\Z_3$-code of the incidence matrix between lines and points in $\PG(2,3)$ $B$ as in Example~\ref{example_incidences_lines_points_pg_2_3}. Generator matrices 
  of $C$ and its dual code $C^\perp$ are e.g.\ given by
  $$
  {\tiny\arraycolsep=0.3\arraycolsep\ensuremath{\left(\begin{array}{rrrrrrrrrrrrr}
  1 & 0 & 0 & 0 & 0 & 1 & 0 & 1 & 0 & 0 & 0 & 1 & 0 \\
  0 & 1 & 0 & 0 & 0 & 1 & 0 & 2 & 2 & 0 & 1 & 0 & 2 \\
  0 & 0 & 1 & 0 & 0 & 1 & 0 & 0 & 1 & 0 & 2 & 2 & 2 \\
  0 & 0 & 0 & 1 & 0 & 2 & 0 & 1 & 2 & 0 & 2 & 1 & 1 \\
  0 & 0 & 0 & 0 & 1 & 2 & 0 & 2 & 1 & 0 & 1 & 2 & 0 \\
  %% 0 & 0 & 0 & 0 & 0 & 0 & 0 & 0 & 0 & 0 & 0 & 0 & 0 \\
  0 & 0 & 0 & 0 & 0 & 0 & 1 & 1 & 1 & 0 & 0 & 0 & 1 \\
  %% 0 & 0 & 0 & 0 & 0 & 0 & 0 & 0 & 0 & 0 & 0 & 0 & 0 \\
  %% 0 & 0 & 0 & 0 & 0 & 0 & 0 & 0 & 0 & 0 & 0 & 0 & 0 \\
  0 & 0 & 0 & 0 & 0 & 0 & 0 & 0 & 0 & 1 & 1 & 1 & 1 \\
  %% 0 & 0 & 0 & 0 & 0 & 0 & 0 & 0 & 0 & 0 & 0 & 0 & 0 \\
  %% 0 & 0 & 0 & 0 & 0 & 0 & 0 & 0 & 0 & 0 & 0 & 0 & 0 \\
  %% 0 & 0 & 0 & 0 & 0 & 0 & 0 & 0 & 0 & 0 & 0 & 0 & 0
  \end{array}\right)}}
  \quad\text{and}\quad 
  {\tiny\arraycolsep=0.3\arraycolsep\ensuremath{\left(\begin{array}{rrrrrrrrrrrrr}
  1 & 0 & 0 & 0 & 0 & 1 & 0 & 1 & 0 & 2 & 2 & 0 & 2 \\
  0 & 1 & 0 & 0 & 0 & 1 & 0 & 2 & 2 & 0 & 1 & 0 & 2 \\
  0 & 0 & 1 & 0 & 0 & 1 & 0 & 0 & 1 & 0 & 2 & 2 & 2 \\
  0 & 0 & 0 & 1 & 0 & 2 & 0 & 1 & 2 & 2 & 1 & 0 & 0 \\
  0 & 0 & 0 & 0 & 1 & 2 & 0 & 2 & 1 & 0 & 1 & 2 & 0 \\
  0 & 0 & 0 & 0 & 0 & 0 & 1 & 1 & 1 & 2 & 2 & 2 & 0
  \end{array}\right)}},
  $$ 
  %% GAP: HermiteNormalFormIntegerMat(mat);
  respectively. Clearly, we have $\dim(C)+\dim(C^\perp)=[3]_3=13$ since $C$ and $C^\perp$ are vector spaces. Now consider 
  the incidence matrix 
  $$
  B'=
  {\tiny\arraycolsep=0.3\arraycolsep\ensuremath{\left(\begin{array}{rrrrrrrrrrrrr}
  0 & 1 & 1 & 0 & 1 & 1 & 0 & 1 & 1 & 0 & 1 & 1 & 1 \\
  0 & 1 & 1 & 1 & 0 & 1 & 1 & 1 & 0 & 1 & 0 & 1 & 1 \\
  0 & 1 & 1 & 1 & 1 & 0 & 1 & 0 & 1 & 1 & 1 & 0 & 1 \\ 
  1 & 0 & 1 & 1 & 0 & 1 & 1 & 0 & 1 & 0 & 1 & 1 & 1 \\
  1 & 0 & 1 & 1 & 1 & 0 & 0 & 1 & 1 & 1 & 0 & 1 & 1 \\
  1 & 0 & 1 & 0 & 1 & 1 & 1 & 1 & 0 & 1 & 1 & 0 & 1 \\ 
  1 & 1 & 0 & 1 & 1 & 0 & 1 & 1 & 0 & 0 & 1 & 1 & 1 \\
  1 & 1 & 0 & 0 & 1 & 1 & 1 & 0 & 1 & 1 & 0 & 1 & 1 \\
  1 & 1 & 0 & 1 & 0 & 1 & 0 & 1 & 1 & 1 & 1 & 0 & 1 \\ 
  0 & 0 & 0 & 1 & 1 & 1 & 1 & 1 & 1 & 1 & 1 & 1 & 0 \\
  1 & 1 & 1 & 0 & 0 & 0 & 1 & 1 & 1 & 1 & 1 & 1 & 0 \\
  1 & 1 & 1 & 1 & 1 & 1 & 0 & 0 & 0 & 1 & 1 & 1 & 0 \\ 
  1 & 1 & 1 & 1 & 1 & 1 & 1 & 1 & 1 & 0 & 0 & 0 & 0 \\
  \end{array}\right)}}  
  $$
  between affine planes and points in $\PG(2,3)$. By $C'$ we denote the corresponding $\Z_3$-code and observe that every row of $B'$ is orthogonal to every row of $B$ w.r.t.\ $\Z/3\Z$, 
  i.e.\ $C'\subseteq C^\perp$. By e.g.\ computing the Hermite normal form of $B'$ we can verify $\dim(C')=6$, so that indeed $C'=C^\perp$. In the context of 
  Example~\ref{example_incidences_lines_points_pg_2_3} this means that we can replace the six $\!\!\mod 3$-conditions by $B'z\equiv 0\pmod 3$, which corresponds to thirteen 
  single $\!\!\mod 3$-conditions. Of course we can also select six of these such that the corresponding rows of $B'$ generate $C'$.  
\end{nexample}

\begin{nlemma}
  \label{lemma_full_affine_space_in_kernel}
  For positive integers $k,k',v$ with $l:=k+k'-v-1\ge 1$ the incidence vector of an affine $k'$-space is contained in the $\Z_{q^l}$-kernel of the incidence matrix of 
  $k$-spaces and points in $\PG(v-1,q)$. 
\end{nlemma}
\begin{proof}
  We describe an affine $k'$-space $A'$ by an $k'$-space $S'$ and an $(k'-1)$-space $L'$ with $L'\le S'$, i.e.\ $\chi_{A'}=\chi_{S'}-\chi_{L'}$ or $A'=S'\backslash L'$. Now let $S$ be an 
  arbitrary $k$-space and $I:=S'\cap S$, so that $i:=\dim(I)\ge k+k'-v$. If $i\le 0$ or $I\le L'$, then we have $\#(A'\cap S)=0$ and $\#(A'\cap S)=q^{i-1}$ otherwise. Thus, we have 
  $\#(A'\cap S)\equiv 0\pmod {q^l}$ in all cases.
\end{proof}
Note that the statement for $l=0$ is trivially true.

\begin{nexample}
    \label{example_incidences_planes_points_pg_3_2}
  Let $B$ be the incidence matrix between planes and points in $\PG(3,2)$ and $T$ be  
  the column transformation matrix of the Smith normal form of $B$, i.e.\
  $$
  B=
  {\tiny\arraycolsep=0.3\arraycolsep\ensuremath{\left(\begin{array}{rrrrrrrrrrrrrrr}
  1 & 0 & 1 & 0 & 1 & 0 & 1 & 0 & 1 & 0 & 1 & 0 & 1 & 0 & 0 \\
  1 & 0 & 0 & 1 & 1 & 0 & 0 & 1 & 1 & 0 & 0 & 1 & 0 & 1 & 0 \\
  1 & 0 & 1 & 0 & 0 & 1 & 0 & 1 & 0 & 1 & 0 & 1 & 1 & 0 & 0 \\ 
  1 & 0 & 0 & 1 & 0 & 1 & 1 & 0 & 0 & 1 & 1 & 0 & 0 & 1 & 0 \\
  0 & 1 & 0 & 1 & 0 & 1 & 0 & 1 & 1 & 0 & 1 & 0 & 1 & 0 & 0 \\
  0 & 1 & 1 & 0 & 0 & 1 & 1 & 0 & 1 & 0 & 0 & 1 & 0 & 1 & 0 \\ 
  0 & 1 & 0 & 1 & 1 & 0 & 1 & 0 & 0 & 1 & 0 & 1 & 1 & 0 & 0 \\
  0 & 1 & 1 & 0 & 1 & 0 & 0 & 1 & 0 & 1 & 1 & 0 & 0 & 1 & 0 \\
  1 & 1 & 0 & 0 & 1 & 1 & 0 & 0 & 1 & 1 & 0 & 0 & 0 & 0 & 1 \\ 
  1 & 1 & 0 & 0 & 0 & 0 & 1 & 1 & 0 & 0 & 1 & 1 & 0 & 0 & 1 \\
  0 & 0 & 1 & 1 & 0 & 0 & 1 & 1 & 1 & 1 & 0 & 0 & 0 & 0 & 1 \\
  0 & 0 & 1 & 1 & 1 & 1 & 0 & 0 & 0 & 0 & 1 & 1 & 0 & 0 & 1 \\
  1 & 1 & 1 & 1 & 0 & 0 & 0 & 0 & 0 & 0 & 0 & 0 & 1 & 1 & 1 \\
  0 & 0 & 0 & 0 & 1 & 1 & 1 & 1 & 0 & 0 & 0 & 0 & 1 & 1 & 1 \\
  0 & 0 & 0 & 0 & 0 & 0 & 0 & 0 & 1 & 1 & 1 & 1 & 1 & 1 & 1 \\
  \end{array}\right)}}
  \quad\text{and}\quad
  T=
  {\tiny\arraycolsep=0.3\arraycolsep\ensuremath{\left(\begin{array}{rrrrrrrrrrrrrrr}
  1 & 0 & 0 & 0 & 0 & 0 & 0 & 0 & -1 & 0 & 0 & -1 & -3 & -3 & -13 \\  
  0 & 1 & 0 & 0 & 0 & 0 & -1 & -1 & -1 & -1 & -1 & -3 & -1 & -1 & -6 \\ 
  0 & 0 & 1 & 0 & 0 & -1 & 0 & -1 & 0 & -1 & -1 & -3 & -1 & -2 & -20 \\  
  0 & 0 & 0 & 1 & 0 & -1 & -1 & 0 & 0 & 0 & -1 & -1 & -3 & -2 & -27 \\ 
  0 & 0 & 0 & 1 & 1 & -1 & -1 & -1 & 0 & 0 & -1 & -4 & -5 & -2 & -41 \\  
  0 & 0 & 0 & 0 & 0 & 1 & 0 & 0 & 0 & 0 & 0 & 0 & -1 & 0 & -6 \\ 
  0 & 0 & 0 & 0 & 0 & 0 & 1 & 0 & 0 & 0 & 0 & 0 & -1 & -1 & -6 \\  
  0 & 0 & 0 & 0 & 0 & 0 & 0 & 1 & 0 & 0 & 0 & 0 & -1 & -1 & -13 \\ 
  0 & 0 & 0 & 0 & 1 & 0 & 0 & 0 & 1 & 0 & 0 & -3 & -4 & -1 & -34 \\  
  0 & 0 & 0 & 0 & 0 & 0 & 0 & 0 & 1 & 1 & 0 & -1 & -2 & -1 & -13 \\ 
  0 & 0 & 0 & 0 & 0 & 0 & 0 & 0 & 0 & 1 & 1 & -1 & -2 & -2 & -13 \\  
  0 & 0 & 0 & 0 & 0 & 0 & 0 & 0 & 0 & 0 & 0 & 1 & 0 & 0 & -6 \\ 
  0 & 0 & 0 & 0 & 0 & 0 & 0 & 0 & 0 & 0 & 1 & 0 & 0 & -1 & -13 \\  
  0 & 0 & 0 & 0 & 0 & 0 & 0 & 0 & 0 & 0 & 0 & 0 & 0 & 1 & -6 \\  
  0 & 0 & 0 & 0 & 0 & 0 & 0 & 0 & 0 & 0 & 0 & 0 & 0 & 0 & 1
  \end{array}\right)}}\!. 
  $$
  Here six invariant factors of $B$ equal $2$, three equal $4$, the last equals $28$, and the first five equal $1$. Generator matrices 
  for the $\Z_4$-code $C$ of $B$ and its dual code $C^\perp$ are e.g.\ given by
  $$
  {\tiny\arraycolsep=0.3\arraycolsep\ensuremath{\left(\begin{array}{rrrrrrrrrrrrrrr}
  \textbf{1} & 0 & 0 & 1 & 0 & 1 & 1 & 0 & 0 & 1 & 1 & 0 & 0 & 1 & 0 \\
  0 & \textbf{1} & 0 & 1 & 0 & 1 & 0 & 1 & 0 & 1 & 0 & 1 & 0 & 1 & 1 \\
  0 & 0 & \textbf{1} & 1 & 0 & 0 & 1 & 1 & 0 & 0 & 1 & 1 & 1 & 1 & 0 \\
  0 & 0 & 0 & \textbf{2} & 0 & 0 & 0 & 2 & 0 & 0 & 0 & 2 & 0 & 2 & 0 \\
  0 & 0 & 0 & 0 & \textbf{1} & 1 & 1 & 1 & 0 & 0 & 0 & 0 & 1 & 1 & 1 \\
  0 & 0 & 0 & 0 & 0 & \textbf{2} & 0 & 2 & 0 & 0 & 0 & 0 & 0 & 2 & 2 \\
  0 & 0 & 0 & 0 & 0 & 0 & \textbf{2} & 2 & 0 & 0 & 0 & 0 & 0 & 0 & 2 \\
  0 & 0 & 0 & 0 & 0 & 0 & 0 & 0 & 0 & 0 & 0 & 0 & 0 & 0 & 0 \\
  0 & 0 & 0 & 0 & 0 & 0 & 0 & 0 & \textbf{1} & 1 & 1 & 1 & 1 & 1 & 1 \\
  0 & 0 & 0 & 0 & 0 & 0 & 0 & 0 & 0 & \textbf{2} & 0 & 2 & 0 & 2 & 2 \\
  0 & 0 & 0 & 0 & 0 & 0 & 0 & 0 & 0 & 0 & \textbf{2} & 2 & 0 & 0 & 2 \\
  0 & 0 & 0 & 0 & 0 & 0 & 0 & 0 & 0 & 0 & 0 & 0 & 0 & 0 & 0 \\
  0 & 0 & 0 & 0 & 0 & 0 & 0 & 0 & 0 & 0 & 0 & 0 & \textbf{2} & 2 & 2 \\
  0 & 0 & 0 & 0 & 0 & 0 & 0 & 0 & 0 & 0 & 0 & 0 & 0 & 0 & 0 \\
  0 & 0 & 0 & 0 & 0 & 0 & 0 & 0 & 0 & 0 & 0 & 0 & 0 & 0 & 0 \\
  \end{array}\right)}} 
  \quad\text{and}\quad
  {\tiny\arraycolsep=0.3\arraycolsep\ensuremath{\left(\begin{array}{rrrrrrrrrrrrrrr}
  \textbf{1} & 0 & 0 & 1 & 0 & 1 & 1 & 0 & 1 & 0 & 0 & 1 & 1 & 0 & 1 \\
  0 & \textbf{1} & 0 & 1 & 0 & 1 & 0 & 1 & 0 & 1 & 0 & 1 & 0 & 1 & 1 \\
  0 & 0 & \textbf{1} & 1 & 0 & 0 & 1 & 1 & 0 & 0 & 1 & 1 & 1 & 1 & 0 \\
  0 & 0 & 0 & \textbf{2} & 0 & 0 & 0 & 2 & 0 & 0 & 0 & 2 & 0 & 2 & 0 \\
  0 & 0 & 0 & 0 & \textbf{1} & 1 & 1 & 1 & 1 & 1 & 1 & 1 & 0 & 0 & 0 \\
  0 & 0 & 0 & 0 & 0 & \textbf{2} & 0 & 2 & 0 & 0 & 0 & 0 & 0 & 2 & 2 \\
  0 & 0 & 0 & 0 & 0 & 0 & \textbf{2} & 2 & 0 & 0 & 0 & 0 & 2 & 2 & 0 \\
  0 & 0 & 0 & 0 & 0 & 0 & 0 & 0 & 0 & 0 & 0 & 0 & 0 & 0 & 0 \\
  0 & 0 & 0 & 0 & 0 & 0 & 0 & 0 & \textbf{2} & 0 & 0 & 2 & 2 & 0 & 2 \\
  0 & 0 & 0 & 0 & 0 & 0 & 0 & 0 & 0 & \textbf{2} & 0 & 2 & 0 & 2 & 2 \\
  0 & 0 & 0 & 0 & 0 & 0 & 0 & 0 & 0 & 0 & \textbf{2} & 2 & 2 & 2 & 0 \\
  0 & 0 & 0 & 0 & 0 & 0 & 0 & 0 & 0 & 0 & 0 & 0 & 0 & 0 & 0 \\
  0 & 0 & 0 & 0 & 0 & 0 & 0 & 0 & 0 & 0 & 0 & 0 & 0 & 0 & 0 \\
  0 & 0 & 0 & 0 & 0 & 0 & 0 & 0 & 0 & 0 & 0 & 0 & 0 & 0 & 0 \\
  0 & 0 & 0 & 0 & 0 & 0 & 0 & 0 & 0 & 0 & 0 & 0 & 0 & 0 & 0 \\
  \end{array}\right)}}
  $$ 
  with invariant factors $[1,1,1,1,1,2,2,2,2,2,2,0,0,0,0]$ and $[1,1,1,1,2,2,2,2,2,2,0,0,0,0]$, which we abbreviate as $1^52^6 0^4$ and $1^42^60^5$. 
  The rows of the stated generator matrix for $C^\perp$ correspond to ten $\!\!\!\mod 4$-conditions, where six can be rewritten to $\!\!\!\mod 2$-conditions. We remark 
  that the four rows with a $1$ as leading coefficient are incidence vectors of affine solids and the six rows with a $2$ as leading coefficient are twice 
  the incidence vector of an affine plane.
\end{nexample}

\begin{ndefinition}
  Let $B\in R^{m\times n}$ be a non-empty matrix and $C$ be the generated $\Z_{p^l}$-code, where $p$ is a prime, $l$ a positive integer, and 
  $R=\Z$ or $R=\Z/p^l\Z$. Let $d_1,\dots d_n$ be the invariant factors of $B$ w.r.t.\ $\Z/p^l\Z$, $k_i:=\#\left\{1\le j\le n\,:\, d_j=p^i\right\}$ for $0\le i\le l-1$, 
  and $k_l:=\#\left\{1\le j\le n\,:\, d_j=0\right\}$. Then $[k_0,\dots,k_l]$ is the \emph{type} (or, more precisely, the $p^l$-type) of $C$. 
\end{ndefinition}

With the help of the Smith normal form we directly see that for a $\Z_{p^l}$-code $C$ with type $[k_0,\dots,k_l]$ the type of the dual code 
$C^\perp$ is just the reversal $[k_l,\dots,k_0]$, c.f.\ \cite{calderbank1995modular}. So, the notion of the type of a $\Z_{p^l}$-code generalizes the notion 
of the dimension of a $\Z_p$-code as an invariant. In our context we can e.g.\ use it to deduce that the incidence vectors of the affine $3$-spaces in $\PG(3,2)$ span 
the $\Z_4$-kernel of the incidence matrix between planes and points in $\PG(3,2)$ by verifying that the $\Z_4$-code of the affine $3$-spaces in $\PG(3,2)$ has type 
$[4,6,5]$.   

%% !!!!!!!!!!!!!!!!!!!!!!!!!!!!!!!!!!!!!!!!!!!!!!!!!!!!!!!!!!!!!!!!!!!!!!!!!!!!!!!!!!!!!!!!!!!!!!!!!!!!!!!!!!!!!!!!!!!!!!!!!!!!!!!!!!!!!!!
%% \bigskip
%% \bigskip
%% 
%% \textbf{ToDo:}
%% \begin{itemize}
%%   \item Duale Codes zu projektiven Reed-Muller codes finden sich z.B. in \cite{sorensen1991projective}
%%   \item Wie viele $\Delta$-teilbare Codew\"orter kann ein nicht $\Delta$-teilbarer Code haben? (In Beispiel~\ref{example_incidences_planes_points_pg_3_2} wird erw\"ahnt, 
%%         dasss f\"ur $4$-teilbar in $\PG(3,2)$ anstatt 15 Hyperebenen auch 4 affine solids und 6 affine Ebenen mit insgesamt $4\cdot 8+6\cdot 4=56$ Punkten testen kann.)
%%   \item aus Hyperebenen bekommt man affine R\"aumen der vollen Dimension; kleiner Dimensionen durch TR kleinerer Dimension
%%   \item subfield constructions       
%% \end{itemize}
%% 
%% 
%% \bigskip
%% \bigskip
%% 
%% \begin{center}
%%   \textbf{Old stuff:}
%%   For generalizations we refer e.g.\ to 
%%   \cite{chandler2006invariant,davidson2005projective,frankl1990intersection,moorhouse2009approaching,sin2004p}. 
%%   For the determination of Smith normal forms we refer to \cite{chandler2006invariant} and to \cite{brent1987determinants} for rank computations over $\Z/m\Z$.
%% \end{center}
%% 
%% \bigskip
%% \bigskip
%% 
%% !!!!!!!!!!!!!!!!!!!!!!!!!!!!!!!!!!!!!!!!!!!!!!!!!!!!!!!!!!!!!!!!!!!!!!!!!!!!!!!!!!!!!!!!!!!!!!!!!!!!!!!!!!!!!!!!!!!!!!!!!!!!!!!!!!!!!!!

\begin{table}[htp]
\begin{center} 
  \begin{tabular}{|c|ccccccc|}
    \hline
     & $\begin{pmatrix}0\\0\\1\end{pmatrix}$ & $\begin{pmatrix}0\\1\\0\end{pmatrix}$ & $\begin{pmatrix}0\\1\\1\end{pmatrix}$ & $\begin{pmatrix}1\\0\\0\end{pmatrix}$ & 
       $\begin{pmatrix}1\\0\\1\end{pmatrix}$ & $\begin{pmatrix}1\\1\\0\end{pmatrix}$ & $\begin{pmatrix}1\\1\\1\end{pmatrix}$\\ 
    \hline
    $\begin{pmatrix}0&0&1\end{pmatrix}$ & 0 & 1 & 0 & 1 & 0 & 1 & 0 \\
    $\begin{pmatrix}0&1&0\end{pmatrix}$ & 1 & 0 & 0 & 1 & 1 & 0 & 0 \\
    $\begin{pmatrix}0&1&1\end{pmatrix}$ & 0 & 0 & 1 & 1 & 0 & 0 & 1 \\ 
    $\begin{pmatrix}1&0&0\end{pmatrix}$ & 1 & 1 & 1 & 0 & 0 & 0 & 0 \\
    $\begin{pmatrix}1&0&1\end{pmatrix}$ & 0 & 1 & 0 & 0 & 1 & 0 & 1 \\ 
    $\begin{pmatrix}1&1&0\end{pmatrix}$ & 1 & 0 & 0 & 0 & 0 & 1 & 1 \\
    $\begin{pmatrix}1&1&1\end{pmatrix}$ & 0 & 0 & 1 & 0 & 1 & 1 & 0 \\
    \hline
  \end{tabular} 
  \quad\quad
  \begin{tabular}{|rrrrrrr|}
    \hline
    0 & \textbf 1 & 0 & 0 & 1 & 0 & 1 \\
    \textbf 1 & 0 & 0 & 0 & 0 & -1 & 3 \\
    0 & 0 & \textbf 1 & 0 & 1 & 1 & 0 \\
    0 & 0 & 0 & 0 & 0 & 0 & \textbf 6 \\
    0 & 0 & 0 & \textbf 1 & -1 & 1 & -1 \\
    0 & 0 & 0 & 0 & \textbf 2 & 0 & -2 \\
    0 & 0 & 0 & 0 & 0 & \textbf 2 & -2 \\
    \hline
  \end{tabular} 
  \quad\quad
  \begin{tabular}{|rrrrrrr|}
    \hline
    0 & \textbf 1 & 0 & 0 & 1 & 0 & 1 \\
    \textbf 1 & 0 & 0 & 0 & 0 & 1 & 1 \\
    0 & 0 & \textbf 1 & 0 & 1 & 1 & 0 \\
    0 & 0 & 0 & 0 & 0 & 0 & 0 \\
    0 & 0 & 0 & \textbf 1 & 1 & 1 & 1 \\
    0 & 0 & 0 & 0 & 0 & 0 & 0 \\
    0 & 0 & 0 & 0 & 0 & 0 & 0 \\
    \hline
  \end{tabular}
  \caption{The incidence matrix between points and lines in $\PG(2,2)$ and its kernel.}
  \label{table_incidences_between_points_and_lines}
\end{center}  
\end{table}

\begin{nexample}
\label{example_kernel_gauss}
If we label the points by generating row vectors and the hyperplanes by orthogonal column vectors, then the incidence matrix $A$ between points and hyperplanes in 
$\PG(2,2)$ is given on the left hand side of Table~\ref{table_incidences_between_points_and_lines}. We now apply the Gaussian elimination algorithm to the transposed 
matrix $A^\top$ without swapping rows or columns. In order to make results applicable for different domains, we perform all computations over $\Z$. More precisely, 
we only allow multiplications or divisions by the units $\{-1,1\}$ in $\Z$ and adding the $\lambda$-fold of a row to another row is only permitted for $\lambda\in\Z$. 
The result is displayed in the middle of Table~\ref{table_incidences_between_points_and_lines}. With this we can conclude that the $\R$-rank of $A^{\top}$ is $7$ and the 
corresponding kernel of $A^\top$ has dimension zero. Reducing modulo $2$ gives the result for the computations of $\F_2$, see the matrix on the right hand side of 
Table~\ref{table_incidences_between_points_and_lines}. I.e., the $2$-rank of $A^{\top}$ is four and the corresponding kernel of $A^{\top}$ has dimension three. 
The associated $2^3=8$ necessary conditions $\by^\top \bz=\zv$ are given by:
\begin{eqnarray*}
  \cM(\left\langle\begin{pmatrix}0&0&1\end{pmatrix}\right\rangle)+\cM(\left\langle\begin{pmatrix}0&1&0\end{pmatrix}\right\rangle)+\cM(\left\langle\begin{pmatrix}1&0&0\end{pmatrix}\right\rangle)+\cM(\left\langle\begin{pmatrix}1&1&1\end{pmatrix}\right\rangle) &\equiv& 0 \pmod 2\\ %% 1 2 4 7
  \cM(\left\langle\begin{pmatrix}0&0&1\end{pmatrix}\right\rangle)+\cM(\left\langle\begin{pmatrix}0&1&1\end{pmatrix}\right\rangle)+\cM(\left\langle\begin{pmatrix}1&0&0\end{pmatrix}\right\rangle)+\cM(\left\langle\begin{pmatrix}1&1&0\end{pmatrix}\right\rangle) &\equiv& 0 \pmod 2\\ %% 1 3 4 6
  \cM(\left\langle\begin{pmatrix}0&1&0\end{pmatrix}\right\rangle)+\cM(\left\langle\begin{pmatrix}0&1&1\end{pmatrix}\right\rangle)+\cM(\left\langle\begin{pmatrix}1&1&0\end{pmatrix}\right\rangle)+\cM(\left\langle\begin{pmatrix}1&1&1\end{pmatrix}\right\rangle) &\equiv& 0 \pmod 2\\ %% 2 3 6 7 
  \cM(\left\langle\begin{pmatrix}0&1&0\end{pmatrix}\right\rangle)+\cM(\left\langle\begin{pmatrix}0&1&1\end{pmatrix}\right\rangle)+\cM(\left\langle\begin{pmatrix}1&0&0\end{pmatrix}\right\rangle)+\cM(\left\langle\begin{pmatrix}1&0&1\end{pmatrix}\right\rangle) &\equiv& 0 \pmod 2\\ %% 2 3 4 5 
  \cM(\left\langle\begin{pmatrix}0&0&1\end{pmatrix}\right\rangle)+\cM(\left\langle\begin{pmatrix}0&1&1\end{pmatrix}\right\rangle)+\cM(\left\langle\begin{pmatrix}1&0&1\end{pmatrix}\right\rangle)+\cM(\left\langle\begin{pmatrix}1&1&1\end{pmatrix}\right\rangle) &\equiv& 0 \pmod 2\\ %% 1 3 5 7
  \cM(\left\langle\begin{pmatrix}0&0&1\end{pmatrix}\right\rangle)+\cM(\left\langle\begin{pmatrix}0&1&0\end{pmatrix}\right\rangle)+\cM(\left\langle\begin{pmatrix}1&0&1\end{pmatrix}\right\rangle)+\cM(\left\langle\begin{pmatrix}1&1&0\end{pmatrix}\right\rangle) &\equiv& 0 \pmod 2\\ %% 1 2 5 6
  \cM(\left\langle\begin{pmatrix}1&0&0\end{pmatrix}\right\rangle)+\cM(\left\langle\begin{pmatrix}1&0&1\end{pmatrix}\right\rangle)+\cM(\left\langle\begin{pmatrix}1&1&0\end{pmatrix}\right\rangle)+\cM(\left\langle\begin{pmatrix}1&1&1\end{pmatrix}\right\rangle) &\equiv& 0 \pmod 2\\ %% 4 5 6 7
  0 &\equiv& 0 \pmod 2\\ 
\end{eqnarray*}
Note that the four points occurring in one of the first seven equations form an affine plane in each case, i.e., the complement is one of the $\qbin{3}{2}{2}=7$ lines in $\PG(2,2)$. This 
is not a coincidence as we will see in the subsequent remark.        
\end{nexample}
%% 1: $\left\langle\begin{pmatrix}0&0&1\end{pmatrix}\right\rangle$ 
%% 2: $\left\langle\begin{pmatrix}0&1&0\end{pmatrix}\right\rangle$ 
%% 3: $\left\langle\begin{pmatrix}0&1&1\end{pmatrix}\right\rangle$  
%% 4: $\left\langle\begin{pmatrix}1&0&0\end{pmatrix}\right\rangle$ 
%% 5: $\left\langle\begin{pmatrix}1&0&1\end{pmatrix}\right\rangle$  
%% 6: $\left\langle\begin{pmatrix}1&1&0\end{pmatrix}\right\rangle$ 
%% 7: $\left\langle\begin{pmatrix}1&1&1\end{pmatrix}\right\rangle$ 
     
\begin{nremark}
  \label{remark_kernel_subfield_subcodes}
  For a prime $p$ the incidence matrix between the points and the $k$-spaces in $\PG(v-1,p)$ is the generator matrix of 
  a projective generalized Reed--Muller code, see e.g.\ \cite[Theorem 5.41]{assmus1998polynomial}. A few simplified formulas for the Hamada $p$-rank formula 
  for special cases and explicit bases can e.g.\ be found in \cite[Section 5.9]{assmus1998polynomial}, see also the survey ~\cite{ceccherini1992dimension}. For 
  a prime power $q$ these so-called \emph{geometric codes} admit a representation by polynomial functions, which is a rather natural description for generalized 
  Reed--Muller codes, see \cite{glynn1995classification} for the details.\footnote{A similar statement also applies to e.g.\ codes obtained from Hermitian varieties, 
  see e.g.\ \cite{key1991hermitian} or \cite[Theorem 2.30]{bartoli2012constructions}.}
  
  For a direct entry observe that the intersection of each $k$-space $K$ with an arbitrary subspace $S$ of codimension at most $k-1$ consists of $[i]_q$ points for 
  some integer $1\le i\le k$ and that these numbers all are congruent to $1$ modulo $q$. Going over to complements we end up with numbers of points that are congruent 
  to zero modulo $q$. If $q=p^2$ we can apply the same idea using Baer subspaces and in general we have to consider subfield subcodes, see e.g.\ 
  \cite[Section 5.8]{assmus1998polynomial}. To compute the dimension of the resulting span and to compare it with the Hamada formula or one of its 
  simplifications in special cases then is the, of course unavoidable, technical part if an exhaustive classification is desired. 
\end{nremark}     
     
\begin{nexample}
  Consider the incidences between points and planes in $\PG(3,2)$. The incidence matrix $A$ and the resulting matrix after applying the Gaussian elimination algorithm 
  of $\Z$, see Example~\ref{example_kernel_gauss} for the technical details, are given by
  $${\footnotesize
  \left(\begin{array}{rrrrrrrrrrrrrrr}
    0 & 1 & 0 & 1 & 0 & 1 & 0 & 1 & 0 & 1 & 0 & 1 & 0 & 1 & 0 \\
    1 & 0 & 0 & 1 & 1 & 0 & 0 & 1 & 1 & 0 & 0 & 1 & 1 & 0 & 0 \\
    0 & 0 & 1 & 1 & 0 & 0 & 1 & 1 & 0 & 0 & 1 & 1 & 0 & 0 & 1 \\
    1 & 1 & 1 & 0 & 0 & 0 & 0 & 1 & 1 & 1 & 1 & 0 & 0 & 0 & 0 \\
    0 & 1 & 0 & 0 & 1 & 0 & 1 & 1 & 0 & 1 & 0 & 0 & 1 & 0 & 1 \\
    1 & 0 & 0 & 0 & 0 & 1 & 1 & 1 & 1 & 0 & 0 & 0 & 0 & 1 & 1 \\
    0 & 0 & 1 & 0 & 1 & 1 & 0 & 1 & 0 & 0 & 1 & 0 & 1 & 1 & 0 \\
    1 & 1 & 1 & 1 & 1 & 1 & 1 & 0 & 0 & 0 & 0 & 0 & 0 & 0 & 0 \\
    0 & 1 & 0 & 1 & 0 & 1 & 0 & 0 & 1 & 0 & 1 & 0 & 1 & 0 & 1 \\
    1 & 0 & 0 & 1 & 1 & 0 & 0 & 0 & 0 & 1 & 1 & 0 & 0 & 1 & 1 \\
    0 & 0 & 1 & 1 & 0 & 0 & 1 & 0 & 1 & 1 & 0 & 0 & 1 & 1 & 0 \\
    1 & 1 & 1 & 0 & 0 & 0 & 0 & 0 & 0 & 0 & 0 & 1 & 1 & 1 & 1 \\
    0 & 1 & 0 & 0 & 1 & 0 & 1 & 0 & 1 & 0 & 1 & 1 & 0 & 1 & 0 \\
    1 & 0 & 0 & 0 & 0 & 1 & 1 & 0 & 0 & 1 & 1 & 1 & 1 & 0 & 0 \\
    0 & 0 & 1 & 0 & 1 & 1 & 0 & 0 & 1 & 1 & 0 & 1 & 0 & 0 & 1 \\
  \end{array}\right)}$$and$${\footnotesize
  \left(\begin{array}{rrrrrrrrrrrrrrr}
    0 & \textbf 1 & 0 & 0 & 1 & 0 & 1 & 0 & 1 & 0 & 1 & -1 & 2 & -1 & 2 \\
    \textbf 1 & 0 & 0 & 0 & 0 & 1 & 1 & 0 & 0 & 1 & 1 & -1 & -1 & 2 & 2 \\
    0 & 0 & \textbf 1 & 0 & 1 & -1 & 0 & 0 & -3 & 1 & -2 & 1 & -2 & 0 & -3 \\
    0 & 0 & 0 & 0 & 0 & 0 & 0 & 0 & 0 & 0 & 16 & 0 & 0 & 0 & 12 \\
    0 & 0 & 0 & \textbf 1 & -1 & 1 & -1 & 0 & 0 & 0 & 0 & 1 & -1 & 1 & -1 \\
    0 & 0 & 0 & 0 & \textbf 2 & 0 & 0 & 0 & 0 & 0 & -6 & 0 & 0 & -2 & -8 \\
    0 & 0 & 0 & 0 & 0 & \textbf 2 & 0 & 0 & 0 & 0 & -6 & 0 & -2 & 0 & -8 \\
    0 & 0 & 0 & 0 & 0 & 0 & \textbf 2 & 0 & 0 & 0 & -6 & 0 & -2 & -2 & -6 \\
    0 & 0 & 0 & 0 & 0 & 0 & 0 & \textbf 1 & -1 & 1 & -1 & 1 & -1 & 1 & -1 \\
    0 & 0 & 0 & 0 & 0 & 0 & 0 & 0 & \textbf 2 & 0 & -2 & 0 & 2 & 0 & -2 \\
    0 & 0 & 0 & 0 & 0 & 0 & 0 & 0 & 0 & \textbf 2 & -2 & 0 & 0 & 2 & -2 \\
    0 & 0 & 0 & 0 & 0 & 0 & 0 & 0 & 0 & 0 & -12 & \textbf 2 & -2 & -2 & -14 \\
    0 & 0 & 0 & 0 & 0 & 0 & 0 & 0 & 0 & 0 & 12 & 0 & 0 & \textbf 4 & 12 \\
    0 & 0 & 0 & 0 & 0 & 0 & 0 & 0 & 0 & 0 & 12 & 0 & \textbf 4 & 0 & 12 \\
    0 & 0 & 0 & 0 & 0 & 0 & 0 & 0 & 0 & 0 & 12 & 0 & 0 & 0 & 16 \\
  \end{array}\right)}.
  $$
  So, the $2$-rank of $A$ equals $5$ and the corresponding kernel of $A^\top$ has dimension $10$. 
  The associated necessary conditions are sums over point multiplicities that are congruent to zero modulo $2$. 
  The conditions $\cM(H)-\#\cM\equiv 0\pmod 4$ from Lemma~\ref{lem:union_subspaces} for each hyperplane $H$ in $\PG(3,2)$ 
  can be concluded from the kernel approach if we compute modulo $4$. Here the rank of $A$ equals $11$ and  
  the corresponding kernel of $A^\top$ has dimension $4$. (As in Example~\ref{example_kernel_gauss} we again 
  have the trivial constraint $0\equiv 0\pmod 4$, so that $2^4=1+\qbin{4}{3}{2}$.)  
\end{nexample}     
%% \begin{question}{Research problem}Determine {\lq\lq}$p^r$-ranks{\rq\rq} and suitable bases of incidence matrices between points and $k$-spaces in  
%% $\PG(v-1,q)$ (cf.~e.g.~\cite{brent1987determinants} for rank computations over $\Z/m\Z$). 
%% \end{question}

\medskip
  
Note that Exercise~\ref{exercise_counting_oversubspaces} can be used to deduce
\begin{equation}
  \label{eq_subspace_multiplicity}
  \cM(S) =\frac{1}{q^{v-s-1}}\cdot\left(\sum_{H\in\cH\,:\, S\le H} \cM(H)\,-\,[v-s-1]_q\cdot \#\cM\right)
\end{equation}
for a multiset of points $\cM$ in $\PG(v-1,q)$, where $S$ is an $s$-dimensional subspace with $1\le s\le v-1$. 
\begin{nexercise}
  \label{exercise_chi_subspace}
  Show
  \begin{equation}
    \chi_S=\frac{1}{q^{v-s-1}}\cdot \sum_{H\in\cH\,:\, S\le H} \chi_H\,-\, \frac{[v-s-1]_q}{q^{v-s-1}}\cdot \chi_V
  \end{equation}
  for an $s$-dimensional subspace $S$ of $V=\PG(v-1,q)$ and deduce that $\chi_S$ is $q^{s-1}$-divisible from the $q^{v-1}$-divisibility of $\chi_V$ and the 
  $q^{v-2}$-divisibility of $\chi_H$ for every $H\in\cH$.
\end{nexercise}
Since $\sum_{H\in\cH} \cM(H)=[v-1]_q \cdot \#\cM$ the point multiplicities $\cM(P)$ can be computed from the hyperplane multiplicities $\cM(H)$ and vice versa. 
Interchanging the roles of the points $P\in\cP$ and the hyperplanes $H\in\cH$ yields the so-called \emph{dual} multiset.  
\begin{nexercise}
  \label{exercise_dual_multiset_coefficients}
  Show that each multiset of point $\cM$ in $\PG(v-1,q)$ with $v\ge 2$ can be uniquely written as $\cM=\sum_{H\in\cH} \alpha_H\cdot\chi_H$ where 
  \begin{equation}
    \alpha_H=\frac{1}{q^{v-2}}\cdot\left(\cM(H)-\frac{[v-2]_q}{[v-1]_q}\cdot \#\cM\right) \in\mathbb{Q}
  \end{equation}
  for every $H\in\cH$.
\end{nexercise}
Note that $\alpha_H\ge 0$ iff $[v-1]_q\cdot\cM(H)\ge [v-2]_q\cdot \#\cM$. If $\cM$ is proper and $\alpha_H\ge 0$ for all $H\in\cH$, then we can use 
Exercise~\ref{exercise_upper_bound_proper_multiset} to deduce that there exists an integer $x$ such 
that $\#\cM=x[v-1]_q$ and $\min\{\cM(H)\,:\, H\in\cH\}=x[v-2]_q$, i.e.\ $\cM$ is a $(x[v-1]_q,x[v-2]_q;v,q)$-minihyper, see Section~\ref{subsec_minihypers}. 
With this, Exercise~\ref{exercise_dual_multiset_coefficients} gives $q^{v-2}\cdot \alpha_H\in\mathbb{N}_0$ every hyperplane $H$, while we only have 
$q^{v-2}[v-1]_q\cdot \alpha_H\in\mathbb{Z}$ in general. See e.g.\ \cite[Section 2]{LandjevVandendriessche2012} for more details.

\chapter{Lengths of divisible codes}
\label{sec_lengths_of_divisible_codes}
In this chapter we will consider the possible effective lengths of $\Delta$-divisible linear codes over $\F_q$. Due to Theorem~\ref{thm_delta_divides_q_power} it is 
sufficient to consider $\Delta$-divisible codes where $\Delta=q^r$ with $r\in \Q$ such that $er\in \N$ for  field sizes $q=p^e$. We will first consider 
the restricted case $r\in \N$, see \cite{kiermaier2020lengths}, and then consider the general situation $r\in\Q$, see \cite{kurz2023lengths}. Since adding zero 
coordinates to codewords does not change the divisibility, see Exercise~\ref{exercise_remove_zero_columns}, we focus on the effective lengths and not the lengths of 
$q^r$-divisible linear codes over $\F_q$. We remark that we will mostly use the geometric language, i.e., consider the possible cardinalities of $q^r$-divisible multisets 
of points in $\PG(v-1,q)$. 

%% In this chapter we will consider the possible effective lengths of $q^r$-divisible linear codes over $\F_q$, where $r$ is a positive integer. 
%% Due to Theorem~\ref{thm_delta_divides_q_power} it is sufficient to consider $\Delta$-divisible codes where $\Delta=q^r$ with $r\in \Q$ such that $mr\in \N$ for 
%% field sizes $q=p^m$. So, the assumption $r\in \N$ is indeed a restriction and some of our subsequent results are only valid in this situation. We will briefly discuss the 
%% more general situation as an open problem at the end of this section. Since adding zero coordinates to codewords does not change the divisibility, see 
%% Exercise~\ref{exercise_remove_zero_columns}, we focus on the effective lengths and not the lengths of $q^r$-divisible linear codes over $\F_q$. We remark that 
%% most of the remaining part of this section is based on \cite{kiermaier2020lengths} and that we will mostly use the geometric language, i.e., consider the possible 
%% cardinalities of $q^r$-divisible multisets of points in $\PG(v-1,q)$. 

There are a few very basic constructions for $q^r$-divisible multisets of points, see Example~\ref{ex_simplex_code} and Exercise~\ref{exercise_q_linear_combination}:   
\begin{nlemma}
	\label{lem:qr-div-basic}
	(\cite[Lemma 2]{kiermaier2020lengths})\\[-5mm]
	\begin{enumerate}
		\item[(i)]\label{lem:qr-div-basic:subspace} Let $U$ be a $q$-vector space of dimension $k \geq 1$. The set $\spaces{U}{1}$ of $[k]_q$ points contained in $U$ is $q^{k-1}$-divisible.
		\item[(ii)]\label{lem:qr-div-basic:union} For $q^r$-divisible multisets $\cM$ and $\cM'$ in $V$, the sum (or multiset union) $\cM+\cM'$ is $q^r$-divisible.
		\item[(iii)]\label{lem:qr-div-basic:qfold} The $q$-fold repetition of a $q^r$-divisible multiset $\cM$ is $q^{r+1}$-divisible.
	\end{enumerate}
\end{nlemma}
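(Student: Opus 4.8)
The plan is to prove each of the three parts by reducing to results already established in the excerpt, since Lemma~\ref{lem:qr-div-basic} is essentially a repackaging of Example~\ref{ex_simplex_code} and Exercise~\ref{exercise_q_linear_combination} (or equivalently Exercise~\ref{exercise_divisibility_sum_construction}). For part~(i), I would invoke Example~\ref{ex_simplex_code} directly: the multiset $\cK$ with $\cK(P)=1$ for every point $P$ in $\PG(k-1,q)$ corresponds to the simplex code, and the computation there shows every nonzero codeword has weight exactly $q^{k-1}$, hence the code---and therefore the point set $\spaces{U}{1}$---is $q^{k-1}$-divisible. Geometrically one can also argue via Equation~(\ref{eq_divisible_multiset}): every hyperplane $H$ of $U$ meets $\spaces{U}{1}$ in exactly $[k-1]_q$ points, so $\cK(H)=[k-1]_q$ is constant across all hyperplanes, and $\#\cK-\cK(H)=[k]_q-[k-1]_q=q^{k-1}$, giving $\cK(H)\equiv\#\cK\pmod{q^{k-1}}$.

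For part~(ii), I would use the additivity of the divisibility condition. Writing $n=\#\cM$ and $n'=\#\cM'$, the hypothesis says $\cM(H)\equiv n\pmod{q^r}$ and $\cM'(H)\equiv n'\pmod{q^r}$ for every hyperplane $H$. Since $(\cM+\cM')(H)=\cM(H)+\cM'(H)$ by the definition of the multiset sum, adding the two congruences yields $(\cM+\cM')(H)\equiv n+n'=\#(\cM+\cM')\pmod{q^r}$ for every $H$, which is exactly $q^r$-divisibility of $\cM+\cM'$ via Equation~(\ref{eq_divisible_multiset}). This is the content of Exercise~\ref{exercise_divisibility_sum_construction} with $l=2$ and all $\cQ_i$ equal to $\cM,\cM'$.

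For part~(iii), let $\cM'$ denote the $q$-fold repetition of $\cM$, i.e.\ $\cM'(P)=q\cdot\cM(P)$ for all $P$, so $\#\cM'=q\cdot\#\cM$ and $\cM'(H)=q\cdot\cM(H)$ for every hyperplane $H$. From $\cM(H)\equiv\#\cM\pmod{q^r}$ we may write $\cM(H)-\#\cM=q^r\cdot t$ for some integer $t$; multiplying by $q$ gives $\cM'(H)-\#\cM'=q^{r+1}\cdot t$, which shows $\cM'(H)\equiv\#\cM'\pmod{q^{r+1}}$. Again this is the $t$-fold case of Exercise~\ref{exercise_divisibility_sum_construction} with $t=q$.

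None of the three parts presents a genuine obstacle; the only point requiring a little care is the edge case $v=1$ (a one-dimensional ambient space) in parts~(ii) and~(iii), where the hyperplane characterization of Equation~(\ref{eq_divisible_multiset}) is replaced by the condition $\#\cM\equiv 0\pmod{\Delta}$. In that degenerate situation the congruence arguments above simplify to statements about $\#\cM$ alone and go through verbatim, so the overall proof reduces to citing the earlier example and exercise while verifying that the additive and scaling steps respect the modular arithmetic.
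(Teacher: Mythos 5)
Your proposal is correct and follows essentially the same route as the paper, which gives no separate proof but instead points to Example~\ref{ex_simplex_code} for part~(i) and to Exercise~\ref{exercise_q_linear_combination} / Exercise~\ref{exercise_divisibility_sum_construction} for parts~(ii) and~(iii); you simply fill in the short congruence computations (hyperplane count $[k-1]_q$, additivity of the congruence, and multiplication of $\cM(H)-\#\cM=q^r t$ by $q$) that those references leave to the reader. Your remark on the degenerate ambient dimension $v=1$ is a harmless extra precaution consistent with the paper's footnoted convention.
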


Note that for a multiset of points $\cM_1$ in $V_1$ and a multiset of points $\cM_2$ in $V_2$ we can consider their embeddings $\cM_1',\cM_2'$ in $V_1\times V_2$ and 
consider the sum $\cM_1'+\cM_2'$ in the ambient space $V_1\times V_2$. By applying Lemma~\ref{lem:qr-div-basic} we obtain:  
\begin{nlemma}
  \label{lemma_sum}
  The set of possible cardinalities of $q^r$-divisible multisets of points over $\F_{q}$ is closed under addition.
\end{nlemma}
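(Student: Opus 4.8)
The plan is to realize the two given cardinalities inside a common ambient space and then invoke closure under multiset union from Lemma~\ref{lem:qr-div-basic}\ref{lem:qr-div-basic:union}. Suppose $n_1$ and $n_2$ are cardinalities of $q^r$-divisible multisets of points over $\F_q$, witnessed by a $q^r$-divisible multiset $\cM_1$ in $V_1\simeq\F_q^{v_1}$ with $\#\cM_1=n_1$ and a $q^r$-divisible multiset $\cM_2$ in $V_2\simeq\F_q^{v_2}$ with $\#\cM_2=n_2$. I would first pass to the product space $W:=V_1\times V_2$, embedding $V_1$ as $V_1\times\{\zv\}$ and $V_2$ as $\{\zv\}\times V_2$, and let $\cM_1'$ and $\cM_2'$ be the resulting multisets of points in $W$ (extended by multiplicity $0$ on all new points).

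The key step is that divisibility survives these embeddings: by Lemma~\ref{lem:ambient_space_unwichtig} the dimension of the ambient space is irrelevant for $q^r$-divisibility, so $\cM_1'$ and $\cM_2'$ are again $q^r$-divisible, now living in the single common space $W$. Since the embeddings only append points of multiplicity $0$, we also have $\#\cM_1'=n_1$ and $\#\cM_2'=n_2$.

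Now I would apply Lemma~\ref{lem:qr-div-basic}\ref{lem:qr-div-basic:union} to the two $q^r$-divisible multisets $\cM_1'$ and $\cM_2'$ in $W$: their sum $\cM:=\cM_1'+\cM_2'$ is again $q^r$-divisible. Its cardinality is computed directly from the definition $\#\cM=\sum_{P}\cM(P)$, giving
\[
  \#\cM=\sum_{P}\bigl(\cM_1'(P)+\cM_2'(P)\bigr)=\#\cM_1'+\#\cM_2'=n_1+n_2,
\]
so $n_1+n_2$ is once more a possible cardinality, which is exactly the claimed closure under addition.

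I do not expect a genuine obstacle here; the only point requiring a little care is the choice of a common ambient space, which is precisely what the product construction together with Lemma~\ref{lem:ambient_space_unwichtig} delivers. In particular, no disjointness of the supports of $\cM_1'$ and $\cM_2'$ is needed, since the cardinality of a multiset sum is the sum of the cardinalities regardless of any overlap in their supports.
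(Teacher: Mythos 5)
Your proof is correct and matches the paper's own argument: the paper likewise embeds $\cM_1$ and $\cM_2$ into the product space $V_1\times V_2$ and then applies Lemma~\ref{lem:qr-div-basic} (ii) to the sum of the embedded multisets. Your explicit appeal to Lemma~\ref{lem:ambient_space_unwichtig} and the cardinality computation simply spell out details the paper leaves implicit.
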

For each integer $r$ and each dimension $1\le i\le r+1$ the $q^{r+1-i}$-fold repetition of an $i$-space in $\PG(v-1,q)$ is a $q^r$-divisible multiset of points of 
cardinality $q^{r+1-i}\cdot [i]_q$. So, for a fixed prime power $q$, a non-negative integer $r$, and $i\in\{0,\ldots,r\}$, we define
\begin{equation}
	    \snumb{r}{i}{q}
	    := q^i\cdot[r-i+1]_q
	    = \frac{q^{r+1}-q^i}{q-1}
	    =\sum_{j=i}^r q^{j}
	    =q^i + q^{i+1} + \ldots + q^r
\end{equation}
and state:
\begin{nlemma}
	\label{lemma:snumb_card}
	For each $r\in\N_0$ and each $i\in\{0,\ldots,r\}$ there is a $q^r$-divisible multiset of points of cardinality 
	$\snumb{r}{i}{q}$.
\end{nlemma}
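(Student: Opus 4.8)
The plan is to produce $\snumb{r}{i}{q}$ explicitly as a repeated subspace, exactly along the lines sketched in the paragraph preceding the statement, invoking only the elementary building blocks collected in Lemma~\ref{lem:qr-div-basic}. The one thing worth isolating up front is the change of parameter: if I set $k:=r-i+1$, then the closed forms already recorded for $\snumb{r}{i}{q}$ give $\snumb{r}{i}{q}=q^i\cdot[r-i+1]_q=q^{\,r+1-k}\cdot[k]_q$, which is precisely the number of points of a $q^{\,r+1-k}$-fold repeated $k$-space. So the whole task reduces to checking that such a repeated $k$-space is $q^r$-divisible.

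First I would fix $i\in\{0,\dots,r\}$ and put $k:=r-i+1$, observing that $1\le k\le r+1$; here the lower bound $k\ge 1$ is exactly the hypothesis $i\le r$, and the upper bound is automatic. I would then choose a $k$-dimensional subspace $U$ of any sufficiently large ambient space and pass to its point set $\spaces{U}{1}$. By Lemma~\ref{lem:qr-div-basic}(i) this set has $[k]_q$ points and is $q^{k-1}$-divisible.

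Next I would apply the $q$-fold repetition of Lemma~\ref{lem:qr-div-basic}(iii) exactly $i=r+1-k$ times, either as an explicit $q^i$-fold repetition or as a one-line induction on $i$. Each application increases the divisibility exponent by one and multiplies the cardinality by $q$, so the resulting multiset $q^i\cdot\spaces{U}{1}$ is $q^{(k-1)+i}=q^{r}$-divisible and has cardinality $q^i\cdot[k]_q$. Substituting $k=r-i+1$ turns this cardinality into $q^i\cdot[r-i+1]_q=\snumb{r}{i}{q}$, which establishes the claim. The boundary case $i=0$ needs no repetition at all, since an $(r+1)$-space already gives a $q^r$-divisible set of $[r+1]_q=\snumb{r}{0}{q}$ points, while $i=r$ reduces to the $q^r$-fold repetition of a single point.

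I do not expect any genuine obstacle: the content is entirely in the index bookkeeping, namely the range $1\le k\le r+1$ and the single identity $q^{\,r+1-k}[k]_q=q^i[r-i+1]_q$, both immediate from $k=r-i+1$. Since the statement only claims existence, exhibiting the concrete multiset $q^i\cdot\spaces{U}{1}$ is the cleanest argument; note that Lemma~\ref{lemma_sum} is not even needed here, as a single repeated subspace already realizes each individual value $\snumb{r}{i}{q}$.
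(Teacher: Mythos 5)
Your proposal is correct and follows exactly the paper's own construction: the paper introduces $\snumb{r}{i}{q}$ precisely as the cardinality $q^{r+1-k}\cdot[k]_q$ of a $q^{r+1-k}$-fold repeated $k$-space (with $k=r-i+1$) and justifies the lemma via Lemma~\ref{lem:qr-div-basic}(i) and (iii), which is the same repeated-subspace argument and the same index bookkeeping you carry out.
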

As a consequence of Lemma~\ref{lemma_sum} and Lemma~\ref{lemma:snumb_card} all integers $n = \sum_{i=0}^r a_i \snumb{r}{i}{q}$ with $a_i\in\mathbb{N}_0$ are realizable 
cardinalities of $q^r$-divisible multisets of points. Later on we will prove in Theorem~\ref{thm_characterization_div} that these integers are indeed the only possibilities. 
E.g.\ for $q=2$ and $r=2$ the possible cardinalities are given by $\{4,6,7,8\}\cup\N_{\ge 10}$. The impossibility of cardinality $9$ was shown in Example~\ref{ex_no_4_div_card_9}.

\begin{backgroundinformation}{Frobenius coin problem}The \emph{Frobenius coin problem} \cite{brauer1942problem}, named after the German mathematician Ferdinand Georg Frobenius (1849--1917), asks for the largest monetary amount $F(a_1,\dots,a_r)$ that cannot be 
obtained using only coins of specified denominations in $\left\{a_1,\dots,a_r\right\}$. If $\gcd(a_1,\dots,a_r)=1$ the number $F(a_1,\dots,a_r)$ is always finite and 
we have $F(a_1,a_2)=(a_1-1)(a_2-1)/2$ in this case. For $r\ge 3$ no general formula is known. 
\end{backgroundinformation}
In analogy to the Frobenius coin problem we define $\frobenius{r}{q}$ as the 
smallest integer such that a $q^r$-divisible multiset of cardinality $n$ exists for all integers $n>\frobenius{r}{q}$ in $\PG(v-1,q)$ provided that the dimension $v$ is sufficiently large.  
In other words, $\frobenius{r}{q}$ is the largest integer which is not realizable as the size of a $q^r$-divisible multiset of points over $\F_q$. If all non-negative  
integers are realizable then $\frobenius{r}{q} = -1$, which is the case for $r = 0$. We have $\frobenius{2}{2}=9$ and will state a general formula for $\frobenius{r}{q}$ in Proposition~\ref{prop_frobenius}. 
For the moment we just remark that for $r\ge 1$ the numbers $\snumb{r}{r}{q} = q^r$ and $\snumb{r}{0}{q} = 1 + q + q^2 + \ldots + q^r$ are coprime, so that $\frobenius{r}{q}$ 
is indeed finite and there is only a finite set of cardinalities which is not realizable as a $q^r$-divisible multiset for every choice of $q$ and $r$. We remark that the classical 
Frobenius number is e.g.\ applied in \cite{beutelspacher1978partitions} to the existence problem of vector space partitions.

Note that the number $\snumb{r}{i}{q}$ is divisible by $q^i$, but not by $q^{i+1}$. This property allows us to create kind of a positional 
system upon the sequence of base numbers
\[
	S_q(r) := (\snumb{r}{0}{q}, \snumb{r}{1}{q},\ldots, \snumb{r}{r}{q})\text{.}
\]
Our next aim is to show that each integer $n$ has a unique \emph{$S_q(r)$-adic expansion}
\begin{equation}
	\label{eq:sqadic}
	n = \sum_{i=0}^r a_i \snumb{r}{i}{q}
\end{equation}
with $a_0,\ldots,a_{r-1}\in\{0,\ldots,q-1\}$ and \emph{leading coefficient} $a_r\in\mathbb{Z}$.
%% The sum $a_0 + a_1 + \ldots + a_r$ will be called the \emph{cross sum} of the $S_q(r)$-adic expansion of $n$. 
The idea is to consider Equation~\eqref{eq:sqadic} modulo $q, q^2,\ldots,q^r$ which gradually determines $a_0, a_1,\ldots,a_{r-1}\in\{0,\ldots,q-1\}$, using that 
$\snumb{r}{i}{q}$ is divisible by $q^i$, but not by $q^{i+1}$. For the existence part, we give an algorithm that computes the $S_q(r)$-adic expansion:

\begin{programcode}{Algorithm}
%%\label{alg_s_q_adic_representation}
\noindent
$\!$\textbf{Input:} $n\in\mathbb{Z}$, field size $q$, exponent $r\in\N_0$\\
\textbf{Output:} representation $n=\sum\limits_{i=0}^r a_i \snumb{r}{i}{q}$ with $a_0,\ldots,a_{r-1}\in\{0,\ldots,q-1\}$ and $a_r\in\mathbb{Z}$\\
%\For{$i\gets0$ \KwTo $r$}
%{$e_i\gets  0$}
$m\gets n$\\
For {$i\gets 0$ To $r-1$}\\
{
\hspace*{0.6cm}$a_i\gets m\bmod q$\\
\hspace*{0.6cm}$m\gets \frac{m-a_i\cdot[r-i+1]_q}{q}$\\
}
$a_r\gets m$\\
% Magma-Code:
% function s(r,i,q) return q^i * ExactQuotient(q^(r-i+1)-1,q-1); end function;
% function expand(n,r,q) res := []; for i:=0 to r-1 do a:=n mod q; n := ExactQuotient(n - a*s(r-i,0,q),q); Append(~res,a); end for; Append(~res,n); return res; end function;
\end{programcode}
 Here $m\bmod q$ denotes the remainder of the division of $m$ by $q$.
\begin{nexercise}
  Let $n\in\Z$ and $r\in\N_0$. Show that the above algorithm %%Algorithm~\ref{alg_s_q_adic_representation} 
  computes the unique $S_q(r)$-adic expansion of $n$.
\end{nexercise}
The $S_2(2)$-adic expansion of $n=11$ is given by $11=1\cdot 7+0\cdot 6+1\cdot 4$ and the $S_2(2)$-adic expansion of $n=9$ is given by $1\cdot 7+1\cdot 6-1\cdot 4$, 
i.e., the leading coefficient is $-1$. 
\begin{nexercise}
  \label{exercise_s_q_r_adic_expansion_137}
  Compute the $S_3(3)$-adic expansion of $n=137$ and determine the leading coefficient.
\end{nexercise}
In Example~\ref{ex_no_4_div_card_9} we have shown the non-existence of $4$-divisible multisets of cardinality $9$ over $\F_2$. Using the same tools, i.e., Lemma~\ref{lemma_heritable} 
and Lemma~\ref{lemma_average}, we can show the following characterization on the lengths of $q^r$-divisible codes and multisets by induction:

\begin{important}{Characterization of lengths of divisible codes}
\vspace*{-6mm}
 \begin{ntheorem}{(\cite[Theorem 1]{kiermaier2020lengths})}
  \label{thm_characterization_div}
  For $n\in\Z$ and $r\in\N_0$ the following statements are equivalent:
  \begin{enumerate}
  \item[(i)]\label{thm:characterization_div:card_multiset} There exists a $q^r$-divisible multiset of points of cardinality $n$ over $\F_q$.   
  \item[(ii)]\label{thm:characterization_div:card_code} There exists a full-length $q^r$-divisible linear code of length $n$ over $\F_q$.
  \item[(iii)]\label{thm:characterization_div:n_strong} The leading coefficient of the $S_q(r)$-adic expansion of $n$ is non-negative.
  \end{enumerate}
\end{ntheorem}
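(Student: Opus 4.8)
The plan is to establish the three-way equivalence by first disposing of (i)$\Leftrightarrow$(ii) via the code--multiset dictionary, then proving (iii)$\Rightarrow$(i) almost for free, and finally grinding out (i)$\Rightarrow$(iii), which is the only substantial step. For (i)$\Leftrightarrow$(ii) I would invoke Theorem~\ref{thm_correspondence_codes_multisets}: a full-length (spanning) $q^r$-divisible $[n,k]_q$-code corresponds to a spanning $q^r$-divisible multiset of $n$ points in $\PG(k-1,q)$, and conversely any $q^r$-divisible multiset of cardinality $n$ may be regarded as spanning the space it generates, which by Lemma~\ref{lem:ambient_space_unwichtig} leaves divisibility untouched. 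Hence (i) and (ii) describe the same set of integers $n$, and it remains to compare them with (iii).

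Next I would reformulate (iii) in semigroup language: the leading coefficient of the $S_q(r)$-adic expansion of $n$ is non-negative if and only if $n=\sum_{i=0}^r b_i\snumb{r}{i}{q}$ for some $b_0,\dots,b_r\in\N_0$. One direction is clear, since the canonical digits $a_0,\dots,a_{r-1}$ already lie in $\{0,\dots,q-1\}$. For the converse I would use the carry identity $q\cdot\snumb{r}{i}{q}=\snumb{r}{i+1}{q}+q\cdot\snumb{r}{r}{q}$, valid for $0\le i\le r-1$ with $\snumb{r}{r}{q}=q^r$: reducing an arbitrary non-negative representation to the canonical digits only ever adds non-negative amounts to the leading position, so the canonical leading coefficient $a_r$ is at least the starting value $b_r\ge 0$. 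With this reformulation (iii)$\Rightarrow$(i) is immediate, since Lemma~\ref{lemma:snumb_card} realizes each base number $\snumb{r}{i}{q}$ as a cardinality and Lemma~\ref{lemma_sum} closes the set of realizable cardinalities under addition.

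The heart of the matter is (i)$\Rightarrow$(iii), which I would prove by induction on $r$. The base case $r=0$ is trivial, as every non-negative $n$ is its own $S_q(0)$-adic leading coefficient. For the step, let $\cM$ be a $q^r$-divisible multiset with $\#\cM=n>0$, embedded in an ambient space of dimension at least $2$, and choose a hyperplane $H$ of minimal multiplicity. Lemma~\ref{lemma_average} gives $n':=\cM(H)<n/q$; Lemma~\ref{lemma_heritable} (with $j=1$) gives that the restriction $\cM|_H$ is $q^{r-1}$-divisible, so by the induction hypothesis $n'$ admits a non-negative representation $n'=\sum_{i=0}^{r-1}a_i'\snumb{r-1}{i}{q}$; and $q^r$-divisibility forces $\cM(H)\equiv\#\cM\pmod{q^r}$, so $n=n'+\ell q^r$ with $\ell\ge 1$. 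Substituting $\snumb{r-1}{i}{q}=\snumb{r}{i}{q}-q^r$ yields $n=\sum_{i=0}^{r-1}a_i'\snumb{r}{i}{q}+(\ell-\sum_{i=0}^{r-1}a_i')\snumb{r}{r}{q}$, so by the reformulation it suffices to show the leading coefficient $\ell-\sum_i a_i'$ is non-negative, i.e.\ that $\ell\ge\sum_{i=0}^{r-1}a_i'$.

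This last inequality is where I expect the real obstacle. A purely continuous use of the averaging bound $\ell>(q-1)n'/q^r$ is too lossy and fails by a bounded margin. The trick I would use is to exploit integrality of $\ell$ together with the exact identity $(q-1)\snumb{r-1}{i}{q}=q^r-q^i$, which gives $(q-1)n'=q^r\sum_i a_i'-\sum_i a_i'q^i$; when the $a_i'$ are genuine digits in $\{0,\dots,q-1\}$ one has $\sum_{i=0}^{r-1}a_i'q^i\le q^r-1$, hence $(q-1)n'>q^r(\sum_i a_i'-1)$, and combining with $\ell>(q-1)n'/q^r$ and $\ell\in\Z$ forces $\ell\ge\sum_i a_i'$. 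The one wrinkle is that the leading digit $a_{r-1}'$ need not be bounded by $q-1$; I would handle this by splitting $a_{r-1}'=qu+w$ with $w\in\{0,\dots,q-1\}$ and observing that the carry $u$ lands on the leading coefficient of $n$ and only helps. Assembling these estimates closes the induction and, with the two easy equivalences, proves the theorem.
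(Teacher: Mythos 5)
Your proposal is correct and follows essentially the same route as the paper and its source \cite{kiermaier2020lengths}: realizability of the base numbers $\snumb{r}{i}{q}$ plus closure under addition for (iii)$\Rightarrow$(i), and for (i)$\Rightarrow$(iii) an induction on $r$ combining the averaging bound of Lemma~\ref{lemma_average} with the heritability of Lemma~\ref{lemma_heritable}, exactly as the paper indicates. Your arithmetic — the identity $(q-1)\snumb{r-1}{i}{q}=q^r-q^i$, the digit bound $\sum_i a_i'q^i\le q^r-1$, the integrality rounding of $\ell$, and the carry treatment of the leading digit $a_{r-1}'$ — checks out and is precisely the computation hidden in the cited proof.
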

\end{important}
So, the $S_q(r)$-adic expansion of $n$ provides a certificate not only for the existence, but remarkably also for  
the non-existence of a $q^r$-divisible multiset of size $n$. As computed in Exercise~\ref{exercise_s_q_r_adic_expansion_137}, the leading coefficient of 
the $S_3(3)$-adic expansion of $n=137$ is $-2$, so that there is no $27$-divisible ternary linear code of effective length $137$. 

Theorem~\ref{thm_characterization_div} allows us also to compute the Frobenius-coin-problem-like number $\frobenius{r}{q}$ as the largest integer $n$ whose $S_q(r)$-adic expansion  
$n = \sum_{i=0}^{r-1}a_i\snumb{r}{i}{q} + a_r q^r$ has leading coefficient $a_r < 0$. Clearly, this $n$ is attained by choosing $a_0 = \ldots = a_{r-1} = q-1$ and $a_r = -1$.

\begin{trailer}{Frobenius number for lengths of divisible codes}
\vspace*{-6mm}
\begin{nproposition}{(\cite[Proposition 1]{kiermaier2020lengths})}
  \label{prop_frobenius}
  For every prime power $q$ and $r\in\N_0$ we have
  \[
	  \frobenius{r}{q}= r\cdot q^{r+1} - [r+1]_q = rq^{r+1} - q^r - q^{r-1} - \ldots - 1\text{.}
  \]
\end{nproposition}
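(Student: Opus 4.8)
The plan is to show that $\frobenius{r}{q}$ equals the largest integer whose $S_q(r)$-adic expansion has a negative leading coefficient, and then to identify that integer explicitly. By Theorem~\ref{thm_characterization_div}, a $q^r$-divisible multiset of points of cardinality $n$ exists over $\F_q$ if and only if the leading coefficient $a_r$ in the unique $S_q(r)$-adic expansion $n = \sum_{i=0}^{r-1} a_i \snumb{r}{i}{q} + a_r q^r$ (with $a_0,\ldots,a_{r-1}\in\{0,\ldots,q-1\}$) is non-negative. Thus $\frobenius{r}{q}$, being by definition the largest non-realizable cardinality, is exactly the largest $n$ whose leading coefficient is negative.

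First I would observe that the map $n\mapsto a_r$ sending an integer to the leading coefficient of its $S_q(r)$-adic expansion is monotonically non-decreasing in $n$. This follows because the low-order coefficients $a_0,\ldots,a_{r-1}$ are confined to $\{0,\ldots,q-1\}$, so the contribution $\sum_{i=0}^{r-1} a_i\snumb{r}{i}{q}$ to $n$ is bounded, and increasing $n$ past the range covered by a given $a_r$ forces $a_r$ to increase. Consequently the set of $n$ with $a_r<0$ is an initial segment (bounded above), and $\frobenius{r}{q}$ is its maximum, attained precisely when $a_r=-1$ is as large as possible while the remaining freedom in $a_0,\ldots,a_{r-1}$ is maximized.

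Next I would maximize $n$ subject to $a_r\le -1$. Since each $\snumb{r}{i}{q}$ is a positive base number and the admissible low-order digits satisfy $0\le a_i\le q-1$, the maximum of $n$ over this constraint is achieved at $a_r=-1$, $a_0=\cdots=a_{r-1}=q-1$. Substituting gives
\begin{equation}
  \frobenius{r}{q} = (q-1)\sum_{i=0}^{r-1}\snumb{r}{i}{q} - q^r.
\end{equation}
Using $\snumb{r}{i}{q}=\tfrac{q^{r+1}-q^i}{q-1}$, the factor $(q-1)$ cancels and the sum telescopes to $(q-1)\sum_{i=0}^{r-1}\snumb{r}{i}{q}=\sum_{i=0}^{r-1}\left(q^{r+1}-q^i\right)=rq^{r+1}-[r]_q$, where $[r]_q=1+q+\cdots+q^{r-1}$. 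Hence $\frobenius{r}{q}=rq^{r+1}-[r]_q-q^r=rq^{r+1}-[r+1]_q$, which is the claimed formula $rq^{r+1}-q^r-q^{r-1}-\cdots-1$.

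The routine computations (the telescoping sum and the rewriting of $[r]_q+q^r=[r+1]_q$) are mechanical. The main conceptual obstacle is establishing the monotonicity of the leading coefficient as a function of $n$, which legitimizes reducing the problem to a single maximization; once that is in hand, the extremal choice of digits and the algebraic simplification are straightforward. One should also verify the edge case $r=0$ separately, where the sum is empty and the formula correctly yields $\frobenius{0}{q}=-1$, consistent with every non-negative cardinality being realizable.
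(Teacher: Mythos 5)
Your overall route is exactly the paper's: invoke Theorem~\ref{thm_characterization_div} to identify $\frobenius{r}{q}$ with the largest $n$ whose $S_q(r)$-adic expansion has negative leading coefficient, take the extremal digits $a_0=\dots=a_{r-1}=q-1$, $a_r=-1$, and simplify; your maximization step and the closing algebra are correct. However, the monotonicity observation that you single out as ``the main conceptual obstacle'' is false: the leading coefficient is \emph{not} non-decreasing in $n$, and the set of integers with negative leading coefficient is \emph{not} an initial segment. For $q=2$, $r=2$ the base numbers are $(\snumb{2}{0}{2},\snumb{2}{1}{2},\snumb{2}{2}{2})=(7,6,4)$; the expansion of $8$ is $0\cdot 7+0\cdot 6+2\cdot 4$ (leading coefficient $2$), while the expansion of $9$ is $1\cdot 7+1\cdot 6-1\cdot 4$ (leading coefficient $-1$). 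So $8<9$, yet the leading coefficient drops from $2$ to $-1$; correspondingly the non-realizable positive integers for these parameters are $\{1,2,3,5,9\}$, with the realizable values $4,6,7,8$ interleaved. Your heuristic fails because the ranges of $n$ swept out by a fixed value of $a_r$ overlap heavily: the span of the lower digits is $(q-1)\sum_{i=0}^{r-1}\snumb{r}{i}{q}=rq^{r+1}-[r]_q$, which is far larger than the step size $q^r$ between consecutive values of $a_r$.

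Fortunately the false claim is not load-bearing. By Theorem~\ref{thm_characterization_div}, $n$ is non-realizable iff its leading coefficient is negative, and since every integer has a \emph{unique} expansion with $a_0,\dots,a_{r-1}\in\{0,\dots,q-1\}$ and $a_r\in\Z$, the non-realizable integers are in bijection with the digit tuples having $a_r\le -1$. Hence $\frobenius{r}{q}$ is simply the maximum of $\sum_{i=0}^{r-1}a_i\snumb{r}{i}{q}+a_rq^r$ over such tuples; the digit bounds alone place this maximum at $a_r=-1$, $a_0=\dots=a_{r-1}=q-1$, and uniqueness guarantees that this value's own expansion is the one just written, so it is indeed non-realizable. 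This is precisely the paper's one-line argument. Deleting your monotonicity paragraph and keeping the maximization, the telescoping computation $(q-1)\sum_{i=0}^{r-1}\snumb{r}{i}{q}-q^r=rq^{r+1}-[r]_q-q^r=rq^{r+1}-[r+1]_q$, and the $r=0$ check yields a correct proof identical in substance to the paper's.
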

\end{trailer}

Just for the ease of a direct usage, we spell out a few implications of Theorem~\ref{thm_characterization_div} in the following.
\begin{nlemma}
  Let $n$ be the effective length of a non-trivial $2^1$-divisible code over $\F_2$. Then, we have $n\ge 2$.
\end{nlemma}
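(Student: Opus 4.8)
The plan is to read off the claim directly from the characterization of realizable lengths in Theorem~\ref{thm_characterization_div}, specialized to $q=2$ and $r=1$. First I would record the relevant base numbers: since $\snumb{1}{0}{2}=[2]_2=3$ and $\snumb{1}{1}{2}=2\cdot[1]_2=2$, the sequence underlying the $S_2(1)$-adic expansion is $S_2(1)=(3,2)$. By the main correspondence between codes and multisets of points, the effective length of a $2^1$-divisible binary code equals the cardinality of a $2^1$-divisible multiset of points over $\F_2$, and by Theorem~\ref{thm_characterization_div} such a multiset of cardinality $n$ exists if and only if the leading coefficient $a_1$ in the expansion $n=3a_0+2a_1$, with $a_0\in\{0,1\}$ and $a_1\in\Z$, is non-negative.

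The key computation is then to rule out $n=1$. Reducing $n=3a_0+2a_1$ modulo $2$ forces $a_0\equiv n\pmod 2$, so for $n=1$ we must take $a_0=1$; substituting gives $2a_1=1-3=-2$, i.e.\ $a_1=-1<0$. Hence no $2^1$-divisible binary code of effective length $1$ exists. Interpreting \emph{non-trivial} as the code containing a non-zero codeword, so that its effective length $n$ is positive, the excluded case $n=1$ leaves only $n\ge 2$, as claimed. The same bookkeeping shows that $n=0$ yields $a_1=0$ (the empty code) and $n=2$ yields $a_0=0,\,a_1=1\ge 0$, so the bound is sharp.

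A second, entirely elementary route avoids Theorem~\ref{thm_characterization_div} and is arguably cleaner: pick any non-zero codeword $\bc\in C$, which exists by non-triviality. Since $C$ is $2$-divisible, $\wt(\bc)$ is a positive even integer, hence $\wt(\bc)\ge 2$. Because $\supp(\bc)\subseteq\supp(C)$, we obtain $n=\neff(C)=\#\supp(C)\ge\#\supp(\bc)=\wt(\bc)\ge 2$.

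I do not expect a genuine obstacle here, as both arguments are essentially immediate; the only point requiring care is fixing the precise meaning of \emph{non-trivial}. If it is taken to rule out merely the all-zero code, one must still separately exclude effective length $1$, which is exactly what the negative leading coefficient of the $S_2(1)$-adic expansion of $1$ (equivalently, the parity of a single codeword's weight) accomplishes.
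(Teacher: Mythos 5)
Your proposal is correct, and your primary route is exactly the paper's: the lemma is stated there as a direct implication of Theorem~\ref{thm_characterization_div}, and your computation of $S_2(1)=(3,2)$ together with the leading coefficient $-1$ in the expansion $1=1\cdot 3+(-1)\cdot 2$ is precisely the certificate that theorem requires to exclude $n=1$. Your alternative elementary argument (any non-zero codeword has positive even weight, so $\neff \ge \wt(\bc)\ge 2$) is also valid and self-contained, though not needed to match the paper.
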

\begin{nlemma}
  Let $n$ be the effective length of a non-trivial $2^2$-divisible code over $\F_2$. Then, we have $n\in\{4,6,7,8\}$ or $n\ge 10$.
\end{nlemma}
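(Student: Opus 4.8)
The plan is to derive this as a direct specialization of Theorem~\ref{thm_characterization_div} to the case $q=2$, $r=2$. A non-trivial code has dimension at least $1$ and hence positive effective length, so it suffices to determine, for each integer $n\ge 1$, whether there exists a full-length $2^2$-divisible binary linear code of effective length $n$. By the equivalence of (ii) and (iii) in Theorem~\ref{thm_characterization_div}, this holds precisely when the leading coefficient $a_2$ of the $S_2(2)$-adic expansion of $n$ is non-negative.

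First I would record the relevant base sequence. From $\snumb{r}{i}{q}=q^i[r-i+1]_q$ we get $\snumb{2}{0}{2}=[3]_2=7$, $\snumb{2}{1}{2}=2\cdot[2]_2=6$, and $\snumb{2}{2}{2}=4\cdot[1]_2=4$, so that $S_2(2)=(7,6,4)$. Thus every $n$ admits a unique representation $n=7a_0+6a_1+4a_2$ with $a_0,a_1\in\{0,1\}$ and $a_2\in\Z$, and existence is equivalent to $a_2\ge 0$.

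Next I would dispose of all large $n$ at once using the Frobenius-type bound of Proposition~\ref{prop_frobenius}: here $\frobenius{2}{2}=2\cdot 2^{3}-[3]_2=16-7=9$, so every integer $n\ge 10$ is realizable. This reduces the claim to the finite range $1\le n\le 9$, where I would simply run the expansion algorithm (equivalently, solve $n=7a_0+6a_1+4a_2$ under the stated constraints) and read off the sign of $a_2$. One finds $a_2\ge 0$ exactly for $n\in\{4,6,7,8\}$, while $n\in\{1,2,3,5,9\}$ give $a_2<0$; combining with $n\ge 10$ yields the asserted set $\{4,6,7,8\}\cup\N_{\ge 10}$.

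The only point requiring care --- there is no genuine obstacle --- is the computation of $a_2$ for the small values: since $a_0$ is forced by the parity of $n$ (as $7\equiv 1$ and $6\equiv 4\equiv 0\pmod 2$) and the intermediate quantity $m$ can turn negative before the final step, one must track the reductions $m\mapsto (m-7a_0)/2$ and $m\mapsto (m-3a_1)/2$ faithfully rather than guessing the coefficients. The non-existence of cardinality $9$, which is the subtle case here, was already established independently in Example~\ref{ex_no_4_div_card_9}, providing a useful cross-check.
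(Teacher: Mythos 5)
Your proposal is correct and follows exactly the route the paper intends: the lemma is stated there as a spelled-out implication of Theorem~\ref{thm_characterization_div}, and your computation of $S_2(2)=(7,6,4)$, the use of Proposition~\ref{prop_frobenius} to get $\frobenius{2}{2}=9$ (hence all $n\ge 10$), and the sign check of the leading coefficient for $1\le n\le 9$ reproduce precisely that derivation, matching the paper's own remark that for $q=2$, $r=2$ the realizable cardinalities are $\{4,6,7,8\}\cup\N_{\ge 10}$.
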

\begin{nlemma}
  Let $n$ be the effective length of a non-trivial $2^3$-divisible code over $\F_2$. Then, we have $n\in\{8,12,14,15,16,20,22,23,24,26,27,28,29,30,31,32\}$ or $n\ge 34$.
\end{nlemma}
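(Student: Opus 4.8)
The plan is to specialize Theorem~\ref{thm_characterization_div} to the case $q=2$, $r=3$, which collapses the entire claim to a finite computation of $S_2(3)$-adic expansions together with one application of the Frobenius-type bound. By the equivalence of statements (ii) and (iii) of that theorem, and since adding or removing zero coordinates changes neither divisibility nor the weight enumerator (Exercise~\ref{exercise_remove_zero_columns}), a binary $2^3$-divisible code of effective length $n\ge 1$ exists if and only if the leading coefficient $a_3$ of the $S_2(3)$-adic expansion of $n$ is non-negative. So the task is simply to determine, for each $n$, the sign of $a_3$.

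First I would record the base numbers. Using $\snumb{r}{i}{q}=\tfrac{q^{r+1}-q^i}{q-1}$ with $q=2$ and $r=3$, one computes
\[
  S_2(3) = \left(\snumb{3}{0}{2},\snumb{3}{1}{2},\snumb{3}{2}{2},\snumb{3}{3}{2}\right) = (15,14,12,8),
\]
so every integer $n$ is uniquely written as $n = 15a_0 + 14a_1 + 12a_2 + 8a_3$ with $a_0,a_1,a_2\in\{0,1\}$ and $a_3\in\Z$. Next I would dispose of all large lengths in one stroke via Proposition~\ref{prop_frobenius}: here
\[
  \frobenius{3}{2} = 3\cdot 2^4 - [4]_2 = 48 - 15 = 33,
\]
so by the defining property of this Frobenius-type number every integer $n\ge 34$ is realizable, which yields the ``$n\ge 34$'' tail of the statement.

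It then only remains to settle the finite range $1\le n\le 33$ by running the expansion algorithm and reading off the sign of $a_3$. For each such $n$ one sets $a_0 = n\bmod 2$, updates $m\gets (m-15a_0)/2$, obtains $a_1,a_2$ analogously (subtracting $7a_1$, respectively $3a_2$, before halving), and puts $a_3=m$. The main — and only — obstacle is the bookkeeping of this finite enumeration, which is entirely routine and involves no structural difficulty. One finds $a_3\ge 0$ exactly for $n\in\{8,12,14,15,16,20,22,23,24,26,27,28,29,30,31,32\}$, and $a_3<0$ for every remaining value in $\{1,\dots,7,9,10,11,13,17,18,19,21,25,33\}$; for instance the non-realizable $n=9$ produces $a_3=-4$, while the largest non-realizable value $n=33$ produces $a_3=-1$, consistent with $\frobenius{3}{2}=33$. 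Combining this finite list with the tail furnished by Proposition~\ref{prop_frobenius} gives precisely the claimed set of effective lengths.
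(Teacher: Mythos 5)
Your proof is correct and is essentially the paper's own argument: the lemma appears there as one of the spelled-out implications of Theorem~\ref{thm_characterization_div}, which is exactly the specialization you carry out, with the correct base numbers $S_2(3)=(15,14,12,8)$ and the Frobenius-type number $\frobenius{3}{2}=33$ from Proposition~\ref{prop_frobenius} handling the tail $n\ge 34$. Your finite enumeration of leading coefficients for $1\le n\le 33$ (non-negative exactly for $n\in\{8,12,14,15,16,20,22,23,24,26,27,28,29,30,31,32\}$, with the spot checks $a_3=-4$ for $n=9$ and $a_3=-1$ for $n=33$) is accurate.
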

\begin{nlemma}
  Let $n$ be the effective length of a non-trivial $3^1$-divisible code over $\F_3$. Then, we have $n\in\{3,4\}$ or $n\ge 6$.
\end{nlemma}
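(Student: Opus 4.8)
The plan is to read the answer off directly from the characterization in Theorem~\ref{thm_characterization_div}, specialized to $q=3$ and $r=1$. Since the effective length of a code equals the cardinality $\#\cM$ of the associated multiset of points, and Exercise~\ref{exercise_remove_zero_columns} lets us pass freely between a code and its full-length version, a non-trivial $3^1$-divisible code of effective length $n$ exists if and only if there is a $3^1$-divisible multiset of points of cardinality $n$ over $\F_3$ with $n\ge 1$. By Theorem~\ref{thm_characterization_div} this is the case exactly when the leading coefficient of the $S_3(1)$-adic expansion of $n$ is non-negative.

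First I would compute the two base numbers. From $\snumb{r}{i}{q}=q^i[r-i+1]_q$ we obtain $\snumb{1}{0}{3}=[2]_3=4$ and $\snumb{1}{1}{3}=3\cdot[1]_3=3$, so $S_3(1)=(4,3)$. Hence the $S_3(1)$-adic expansion takes the form $n=4a_0+3a_1$ with $a_0\in\{0,1,2\}$ and leading coefficient $a_1\in\Z$. Reducing modulo $3$ gives $a_0\equiv n\pmod 3$, so $a_0=n\bmod 3$ and $a_1=(n-4a_0)/3$. Consequently the existence criterion $a_1\ge 0$ becomes the elementary inequality $n\ge 4\,(n\bmod 3)$.

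It then remains to enumerate. For $n\ge 6$ I would check the three residue classes: if $n\equiv 0\pmod 3$ the bound reads $n\ge 0$; if $n\equiv 1\pmod 3$ the smallest admissible value is $7\ge 4$; and if $n\equiv 2\pmod 3$ the smallest is $8\ge 8$, so the inequality holds throughout and every $n\ge 6$ is realizable. (Alternatively, Proposition~\ref{prop_frobenius} gives $\frobenius{1}{3}=1\cdot 3^2-[2]_3=5$, which already asserts that all $n>5$ occur.) For the finitely many values $1\le n\le 5$ one checks the inequality by hand: $n=1$ forces $a_0=1$ and $a_1=-1<0$; $n=2$ forces $a_0=2$ and $a_1=-2<0$; $n=5$ forces $a_0=2$ and $a_1=-1<0$; whereas $n=3$ gives $a_1=1\ge 0$ and $n=4$ gives $a_1=0\ge 0$. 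Collecting the admissible values yields exactly $\{3,4\}\cup\{n:n\ge 6\}$, as claimed.

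I do not expect any genuine obstacle here, since the whole argument is a specialization of Theorem~\ref{thm_characterization_div}; the only care needed is to compute the base numbers $\snumb{1}{0}{3}$ and $\snumb{1}{1}{3}$ correctly and to treat the small exceptional lengths $n\in\{1,2,5\}$ explicitly rather than by the asymptotic Frobenius argument.
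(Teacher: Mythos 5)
Your proof is correct and follows exactly the route the paper intends: the paper states this lemma, without further proof, as one of several direct implications of Theorem~\ref{thm_characterization_div} (together with Proposition~\ref{prop_frobenius}), and your computation of $S_3(1)=(4,3)$, the criterion $n\ge 4\,(n\bmod 3)$, and the check of the small cases $n\in\{1,2,3,4,5\}$ is precisely the spelled-out version of that implication.
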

\begin{nexercise}
  Show that the effective length $n$ of a non-trivial $q^r$-divisible code over $\F_q$ satisfies $n\ge q^r$ and describe the unique example where equality is attained.  
\end{nexercise}

We remark that for the cases when the field size is a proper prime power $q=p^m$ Theorem~\ref{thm_delta_divides_q_power} and Theorem~\ref{thm_characterization_div} are not sufficient 
to determine the possible lengths of $p^r$-divisible codes of $\F_q$. 
%% In order to state a suitable generalization of Theorem~\ref{thm_characterization_div} we write $q=p^m$, where $p$ is the characteristic of $\F_q$. 
Due to Theorem~\ref{thm_delta_divides_q_power} it suffices to consider $\Delta$-divisible codes over $\F_q$ where $\Delta$ is a power of $p$. More 
concretely, we will use the parameterization $\Delta=p^{am-b}$ where $a,b\in\N$ with $a\ge 1$ and $b\le m-1$. For non-negative integers $a,b$ with $a\ge 1$, $b\le m-1$, and 
$i\in\{0,\dots,a\}$ we define
\begin{equation}
  s_q(a,b,i):=[a+1]_q
\end{equation}
if $i=0$ and
\begin{equation}
  s_q(a,b,i):=q^i\cdot [a-i+1]_q/p^b= p^{im-b}\cdot [a-i+1]_q=p^{m-b}\cdot\left(q^{i-1}+q^i+\dots+q^{a-1}\right)
\end{equation}
for $1\le i\le a$. Note that for $i\ge 1$ the number $s_q(a,b,i)$ is divisible by $p^{im-b}$ but not by $p^{im-b+1}$, where $im-b\ge 1$, and $s_q(a,b,0)$ is coprime to $p$. This property allows 
us to create kind of a positional system upon the sequence of base numbers
$$
  S_q(a,b) := \big(s_q(a,b,0),s_q(a,b,1),\dots,s_q(a,b,a)\big).
$$  
As it can be easily shown, each integer $n$ has a unique \emph{$S_q(a,b)$-adic expansion}
\begin{equation}
  n =\sum_{i=0}^a c_i \cdot s_q(a,b,i)
\end{equation}
with $c_0 \in \left\{0,\dots,p^{m-b}-1\right\}$,  $c_1,\dots,c_a-1 \in \{0\dots,q-1\}$ and \emph{leading coefficient} $c_a\in \Z$. 
\begin{ntheorem}{(\cite[Theorem 2]{kurz2023lengths})}
  \label{thm_characterization_div_fraction}
  Let $q=p^m$, $n \in \Z$, and $a,b\in \N$ with $a\ge 1$, $b\le m-1$. The following statements are equivalent:
  \begin{itemize}
    \item[(i)] There exists a $p^{am-b}$-divisible linear code of effective length $n$ over $\F_q$.
    \item[(ii)] The leading coefficient $c_a$ of the $S_q(a,b)$-adic expansion of $n$ is non-negative.
  \end{itemize}
\end{ntheorem}

%% \section{Open problems}
%% \label{subsec_open_problems_q_r_div}
%% At the beginning of this section we have mentioned that one can also ask for the possible effective lengths of $q^r$-divisible linear codes over $\F_q$, when 
%% $r\in\Q\backslash\N$ with $m\cdot r\in\N$, where $q=p^m$. 
%% \begin{nexample}
%%   \label{example_2_div_q_4}
%%   For $q=4$ and $r=\tfrac{1}{2}$ the multisets of points $2\cdot\chi_P$ and $\chi_L$ are $q^r$-divisible of cardinalities $2$ and $5$, respectively, where $P$ is an 
%%   arbitrary point and $L$ and arbitrary line. The set of all positive integers that cannot be written as sums of $2$s and $5$s is given by $\{1,3\}$. Thus, 
%%   $4^{1/2}$-divisible multisets of points of cardinality $n$ over $\F_4$ exist for all $n\in\N_0\backslash\{1,3\}$. 
%% \end{nexample}
%% \begin{nexercise}
%%   Show that no $4^{\tfrac{1}{2}}$-divisible multiset of points of cardinality $1$ or $3$ exist over $\F_4$. 
%% \end{nexercise}
\begin{nexample}
  \label{example_8_div_q_4}
  For $q=4$ and $r=\tfrac{3}{2}$ the multisets of points of a $8$-fold point, a $2$-fold line, and a plane are $4^r$-divisible of cardinalities $8$, $10$, and 
  $21$, respectively. The set of all positive integers that cannot be written as sums of $8$s, $10$s, and $21$s is given by $E_1\cup E_2$, where
  $$
    E_1=\{1,3,5,7,9,11,13,15,17,19,23,25,27,33,35,43\}
  $$
  and
  $$
    E_2=\{2,4,6,12,14,22\}.
  $$    
  Thus, $4^{3/2}$-divisible multisets of points of cardinality $n$ over $\F_4$ exist for all $n\in\N_0\backslash(E_1\cup E_2)$. 
\end{nexample}
%% \begin{nexercise}
%%   Show that for each $n\in E_1\cup E_2$ no $4^{\tfrac{3}{2}}$-divisible multiset of points of cardinality $n$ exist over $\F_4$. 
%% \end{nexercise}
The used constructions in %%Example~\ref{example_2_div_q_4} and 
Example~\ref{example_8_div_q_4} are rather straightforward generalizations of the situation 
of $q^r$-divisible multisets of points when $r$ is an integer. More precisely, we consider $i$-spaces $S_i$ with $1\le i\le \lceil r\rceil+1$ in order to 
construct the $q^r$-divisible multisets of points $q^{r-i+1}\cdot\chi_{S_i}$ having cardinality $q^{r-i+1}\cdot[i]_q$. In Theorem~\ref{thm_characterization_div_fraction} 
it turns out that the possible cardinalities lengths of $q^r$-divisible multisets of points over $\F_q$ can always be attained by taking suitable unions of the basic constructions 
mentioned before. 

%% \begin{question}{Research problem}Characterize the possible effective lengths of $q^r$-divisible linear codes over $\F_q$, where $m\cdot r\in\N$ and $q=p^m$.
%% \end{question}  

Of course, similar questions also make sense for codes over rings instead over finite fields $\F_q$.

\section{Applications}
\label{subsec_application_q_r_div}
Now we are ready to treat the example from Subsection~\ref{subsec_introductory_application} from a more general point of view. First we need a notion of a complementary multiset of points.
\begin{ndefinition}
  Let $\cM$ be a multiset of points in $\PG(v-1,q)$ with maximum point multiplicity at most $\lambda$, i.e., $\cM(P)\le \lambda$ for all points $P\in\cP$. The \emph{$\lambda$-complement} 
  $\cM^{\complement_\lambda}$ of $\cM$ is the multiset of points in $\PG(v-1,q)$ defined by $\cM^{\complement_\lambda}(P)=\lambda-\cM(P)$ for all $P\in\cP$. 
\end{ndefinition}
If $\cM$ is the multiset of points in $\PG(9-1,2)$ corresponding to the points of $20$ solids and $30$ planes with pairwise trivial intersection, then the maximum point 
multiplicity of $\cM$ is $1$. Here we have $\#\cM=510$ and the $1$-complement $\cM^{\complement_1}$ has cardinality $1$ and also a maximum point multiplicity of $1$.

For a given ambient space $\PG(v-1,q)$ and a positive integer $\lambda$ let $\cV$ be the multiset $\lambda\cdot \cP$ defined by $\cM(P)=\lambda$ for all $P\in\cP$. Since 
$\cV$ is $\lambda q^{v-1}$-divisible, the equation $\cM+\cM^{\complement_\lambda}=\cV$ implies: 
\begin{nlemma}
  \label{lemma_t_complement}
  Let $\lambda\in\N_0$ and $\cM$ a multiset of points in $\PG(v-1,q)$ of maximum point multiplicity at most $\lambda$, $q=p^m$, and $e$ the largest 
  integer such that $p^e$ divides $\lambda$. If $r\in\Q_{\ge 0}$ with $mr\in\N$ and $0\le r\le \tfrac{e}{m}\cdot(v-1)$ exists, then, $\cM$ is $q^r$-divisible iff  
  its $\lambda$-complement $\cM^{\complement_\lambda}$ is.
\end{nlemma}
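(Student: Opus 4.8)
The plan is to reduce everything to the pointwise identity $\cM+\cM^{\complement_\lambda}=\cV$, where $\cV:=\lambda\cdot\cP$ is the multiset giving every point of $\PG(v-1,q)$ multiplicity $\lambda$; this identity holds by the very definition of the $\lambda$-complement. The only real content beyond bookkeeping is to verify that $\cV$ itself is $q^r$-divisible under the hypothesis on $r$. By Example~\ref{ex_simplex_code} the set $\cP$ is $q^{v-1}$-divisible, so Exercise~\ref{exercise_divisibility_sum_construction} shows that its $\lambda$-fold repetition $\cV$ is $\lambda q^{v-1}$-divisible, and a $\Delta$-divisible multiset is automatically $\Delta'$-divisible for every divisor $\Delta'$ of $\Delta$. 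Writing $\lambda=p^{e}u$ with $\gcd(u,p)=1$ and $q^{v-1}=p^{m(v-1)}$, the exact power of $p$ dividing $\lambda q^{v-1}$ is $p^{e+m(v-1)}$; since $q^{r}=p^{mr}$ with $mr\in\N$, the assumed bound on $r$ ensures $mr\le e+m(v-1)$, i.e.\ $q^{r}\mid\lambda q^{v-1}$, and hence that $\cV$ is $q^{r}$-divisible.

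With $\cV$ handled, the forward implication is a one-line subtraction of congruences. For $v\ge 2$ I would use the characterization in Equation~(\ref{eq_divisible_multiset}): a multiset $\cN$ is $q^{r}$-divisible iff $\cN(H)\equiv\#\cN\pmod{q^{r}}$ for every $H\in\cH$. Assuming $\cM$ is $q^{r}$-divisible, I evaluate the identity at an arbitrary hyperplane, obtaining $\cM^{\complement_\lambda}(H)=\cV(H)-\cM(H)$ and $\#\cM^{\complement_\lambda}=\#\cV-\#\cM$; subtracting the congruences $\cM(H)\equiv\#\cM$ and $\cV(H)\equiv\#\cV$ modulo $q^{r}$ then yields $\cM^{\complement_\lambda}(H)\equiv\#\cM^{\complement_\lambda}\pmod{q^{r}}$ for all $H$, which is exactly $q^{r}$-divisibility of $\cM^{\complement_\lambda}$.

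For the converse I would exploit that the $\lambda$-complement is an involution on multisets of maximum multiplicity at most $\lambda$: from $0\le\cM(P)\le\lambda$ one gets $0\le\cM^{\complement_\lambda}(P)\le\lambda$, and $(\cM^{\complement_\lambda})^{\complement_\lambda}=\cM$. Applying the forward direction to $\cM^{\complement_\lambda}$ (which sits in the same ambient space, satisfies the same identity $\cM^{\complement_\lambda}+\cM=\cV$, and reuses the already-established $q^{r}$-divisibility of $\cV$) immediately gives the reverse implication. The degenerate cases are routine rather than obstacles: for $\lambda=0$ both multisets are empty, and for $v\le 1$ the hyperplane criterion is replaced by the plain condition $\#\cN\equiv 0\pmod{q^{r}}$ from the discussion following Equation~(\ref{eq_divisible_multiset}), where $\#\cM^{\complement_\lambda}=\#\cV-\#\cM$ together with $q^{r}\mid\#\cV$ settles the equivalence directly. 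The main point to get right is thus the arithmetic ensuring $q^{r}\mid\lambda q^{v-1}$; once that holds, the divisibility transfer is forced purely by the linearity of the defining congruence.
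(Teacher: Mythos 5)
Your proof follows exactly the paper's route: the paper obtains the lemma in one line from the identity $\cM+\cM^{\complement_\lambda}=\cV$ together with the $\lambda q^{v-1}$-divisibility of $\cV=\lambda\cdot\cP$, and your subtraction of congruences, the involution $(\cM^{\complement_\lambda})^{\complement_\lambda}=\cM$ for the converse, and the degenerate cases are precisely the bookkeeping the paper leaves implicit. The one step that does not hold as written is the arithmetic reduction: you claim that the hypothesis $r\le\tfrac{e}{m}(v-1)$, i.e.\ $mr\le e(v-1)$, ensures $mr\le e+m(v-1)$. That implication needs $e(v-2)\le m(v-1)$ and fails once $e/m$ is large relative to $v$: for $q=p$ (so $m=1$), $\lambda=p^3$ (so $e=3$) and $v=3$, the hypothesis admits $r=6$, while $e+m(v-1)=5$, so $q^r\nmid\lambda q^{v-1}$ and the chain of congruences you rely on breaks down.

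However, this defect traces back to the statement rather than to your argument: in that regime the lemma as printed is false. With the same parameters, the empty multiset (maximum point multiplicity $0\le\lambda$) is trivially $q^6$-divisible, yet its $\lambda$-complement $\cV$ satisfies $\#\cV-\cV(H)=\lambda q^{v-1}=p^5$ for every hyperplane $H$, which is not divisible by $p^6$. Moreover, the paper's own application takes $\lambda=1$, hence $e=0$, together with $r=2$, which violates the printed bound $r\le\tfrac{e}{m}\cdot(v-1)=0$. Both observations show that the intended hypothesis is $r\le\tfrac{e}{m}+(v-1)$, which is exactly the condition $q^r\mid\lambda q^{v-1}$ that your proof correctly isolates as the crux. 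Under that reading your argument is complete and coincides with the paper's; you should simply state the required inequality as $mr\le e+m(v-1)$ and observe that it is what the (correctly read) hypothesis provides, instead of asserting that the printed product bound implies it.
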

In the above example we have $v=9$ and $\lambda=1$, so that $\cM^{\complement_1}$ is $4$-divisible since $\cM$ is $4$-divisible due to Lemma~\ref{lem:union_subspaces}. 
Since there is no $4$-divisible multiset of points of cardinality $1$ over $\F_2$, no configuration of $20$ solids and $30$ planes with pairwise trivial intersection can 
exist in $\PG(9-1,2)$.
\begin{nexercise}
  Determine the maximum integer $f$ such that there exists a non-empty $p^f$-divisible multiset of points in $\PG(v-1,q)$ with maximum point multiplicity $\lambda$, where $q=p^m$ 
  and $v,p,m,\lambda$ are arbitrary but fixed.
\end{nexercise}

For the case of multisets of subspaces of the same dimension we can state rather explicit results using sharpened rounding operators.

\medskip

\begin{trailer}{Sharpened rounding}
\vspace*{-6mm}
\begin{ndefinition}
	\label{def:divisible_gauss_bracket}
  For $a\in\Z$ and $b\in\Z\setminus\{0\}$ let $\llfloor a/b \rrfloor_{q^r}$ be the maximal $n\in\Z$ such that there exists a $q^r$-divisible $\F_q$-linear code of effective length $a-nb$.
  If no such code exists for any $n$, we set $\llfloor a/b \rrfloor_{q^r} = -\infty$.
  Similarly, let $\llceil a/b\rrceil_{q^r}$ denote the minimal $n\in\Z$ such that there exists a $q^r$-divisible $\F_q$-linear code of effective length $nb-a$. 
  If no such code exists for any $n$, we set $\llceil a/b\rrceil_{q^r} = \infty$.
\end{ndefinition}
\end{trailer}
Note that the symbols $\llfloor a/b \rrfloor_{q^r}$ and $\llceil a/b \rrceil_{q^r}$ encode the four values $a$, $b$, $q$ and $r$. Thus, the fraction $a/b$ is a formal fraction and the power 
$q^r$ is a formal power, i.e.\ we assume $1530/14\neq 765/7$ and $2^2\neq 4^1$ in this context.
\begin{nexercise}
  Compute $\llfloor 765/7 \rrfloor_{2^2}$ and $\llfloor 1530/14 \rrfloor_{4^1}$. Verify
  \[
		    \llfloor 0/b\rrfloor_{q^r} = \llceil 0/b\rrceil_{q^r} = 0
		\]
		and
		\begin{eqnarray*}
		  && \ldots \leq \llfloor a/b\rrfloor_{q^2} \leq \llfloor a/b\rrfloor_{q^1} \leq \llfloor a/b \rrfloor_{q^0} = \left\lfloor \tfrac{a}{b} \right\rfloor \\
		  &&  \leq a/b \leq \lceil a/b\rceil = \llceil a/b \rrceil_{q^0} \leq \llceil a/b\rrceil_{q^1} \leq \llceil a/b\rrceil_{q^2} \leq \ldots
		\end{eqnarray*}
\end{nexercise}
\begin{nexercise}
  Develop an algorithm for the computation of $\llfloor a/b \rrfloor_{q^r}$ and $\llceil a/b \rrceil_{q^r}$. Minimize its necessary complexity. 
\end{nexercise}

Having the notion of the sharpened rounding of Definition~\ref{def:divisible_gauss_bracket} at hand, we can state:
\begin{nlemma}
  \label{lem:pack_cover}
  Let $k \in \Z_{\geq 1}$ and $\mathcal{U}$ be a multiset of $k$-spaces in $\PG(v-1,q)$.
  \begin{enumerate}
    \item[(i)]\label{lem:pack_cover:pack} If every point in $\cP$ is covered by at most $\lambda$ elements of $\mathcal{U}$, then 
    \[
	\#\cU\le \llfloor\lambda[v]_q/[k]_q\rrfloor_{q^{k-1}}\text{.}
    \]
    \item[(ii)]\label{lem:pack_cover:cover} If every point in $\cP$ is covered by at least $\lambda$ elements in $\mathcal{U}$, then 
    \[
	\#\cU\ge \llceil\lambda[v]_q/[k]_q\rrceil_{q^{k-1}}\text{.}
    \]
  \end{enumerate}  
\end{nlemma}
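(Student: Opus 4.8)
The plan is to translate each inequality into a statement about the effective length of an auxiliary $q^{k-1}$-divisible code, so that Definition~\ref{def:divisible_gauss_bracket} applies verbatim. First I would pass to the associated multiset of points $\cM=\uplus_{U\in\cU}\spaces{U}{1}$. Since every $k$-space carries $[k]_q$ points, its cardinality is $\#\cM=\#\cU\cdot[k]_q$, and by Lemma~\ref{lem:union_subspaces} it is $q^{k-1}$-divisible; moreover $\cM(P)$ equals the number of elements of $\cU$ through the point $P$. The second ingredient is the multiset $\cV:=\lambda\cdot\cP$ of cardinality $\lambda[v]_q$ that assigns multiplicity $\lambda$ to every point: as a $\lambda$-fold sum of the full point set of Example~\ref{ex_simplex_code} it is $q^{v-1}$-divisible (Exercise~\ref{exercise_divisibility_sum_construction}), hence $q^{k-1}$-divisible because $k\le v$.

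For part~(i) the hypothesis reads $\cM(P)\le\lambda$ for all $P$, so the $\lambda$-complement $\cM^{\complement_\lambda}=\cV-\cM$ is a genuine non-negative multiset, of cardinality $\lambda[v]_q-\#\cU\cdot[k]_q$. Granting that it is again $q^{k-1}$-divisible, its corresponding code is a $q^{k-1}$-divisible code of effective length $\lambda[v]_q-\#\cU\cdot[k]_q$; thus $\#\cU$ is an admissible value of $n$ in the definition of $\llfloor\lambda[v]_q/[k]_q\rrfloor_{q^{k-1}}$, giving $\#\cU\le\llfloor\lambda[v]_q/[k]_q\rrfloor_{q^{k-1}}$. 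For part~(ii) the hypothesis $\cM(P)\ge\lambda$ makes $\cM-\cV$ the non-negative multiset, now of cardinality $\#\cU\cdot[k]_q-\lambda[v]_q$; the same divisibility fact realizes this as a $q^{k-1}$-divisible code of effective length $\#\cU\cdot[k]_q-\lambda[v]_q$, and reading off the \emph{minimal} admissible $n$ in the sharpened ceiling yields $\#\cU\ge\llceil\lambda[v]_q/[k]_q\rrceil_{q^{k-1}}$.

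The step deserving the most care is exactly the claim that the difference $\cV-\cM$ (respectively $\cM-\cV$) is $q^{k-1}$-divisible, and here a naive shortcut fails: one cannot simply quote Lemma~\ref{lemma_t_complement}, whose hypothesis $r\le\tfrac{e}{m}(v-1)$ degenerates to $k=1$ as soon as $\gcd(\lambda,q)=1$ (then $e=0$). Instead I would verify the defining congruence directly. For any hyperplane $H$, using $[v]_q-[v-1]_q=q^{v-1}$,
\[
  \#(\cV-\cM)-(\cV-\cM)(H)=\lambda q^{v-1}-\bigl(\#\cM-\cM(H)\bigr).
\]
Here $\#\cM-\cM(H)\equiv 0\pmod{q^{k-1}}$ because $\cM$ is $q^{k-1}$-divisible, while $\lambda q^{v-1}\equiv 0\pmod{q^{k-1}}$ because $v\ge k$; hence $(\cV-\cM)(H)\equiv\#(\cV-\cM)\pmod{q^{k-1}}$, which is precisely $q^{k-1}$-divisibility, and the identical computation handles $\cM-\cV$.

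In essence this reflects the general principle that the divisibility condition $\cM(H)\equiv\#\cM\pmod{q^{k-1}}$ is additive in $\cM$, so it is preserved under any integer combination of $q^{k-1}$-divisible multisets that happens to remain a non-negative multiset; the factor $\lambda$ never needs to be controlled modulo $p$. A final minor point is the degenerate situation where the auxiliary multiset is empty, i.e.\ the effective length is $0$, which is harmless under the convention that length-$0$ divisible codes exist, consistent with $\llfloor 0/b\rrfloor_{q^r}=\llceil 0/b\rrceil_{q^r}=0$.
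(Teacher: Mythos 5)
Your proof is correct and follows essentially the route the paper intends: the paper leaves this lemma as an exercise whose hint is precisely your construction, namely the $\lambda$-complement $\cM^{\complement_\lambda}=\lambda\cdot\cP-\cM$ for part (i) and $\cM-\lambda\cdot\cP$ for part (ii), combined with Lemma~\ref{lem:union_subspaces} and Definition~\ref{def:divisible_gauss_bracket}. Your direct verification of the $q^{k-1}$-divisibility of these difference multisets is also well judged, since the literal hypothesis of Lemma~\ref{lemma_t_complement} (requiring $r\le\tfrac{e}{m}(v-1)$) would indeed not cover $\lambda$ coprime to $q$, even though the paper itself invokes that lemma in exactly such situations.
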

\begin{nexercise}
  Prove Lemma~\ref{lem:pack_cover} using the multisets of points $\cM^{\complement_\lambda}$ and $\cM' = \cM - \lambda \cdot \PG(v-1,q)$, i.e., 
  $\cM'(P)=\cM(P)-\lambda$ for all $P\in\cP$.
\end{nexercise}   
\begin{nexample}
  \label{ex_plane_cover_pg_7_2_lambda_3}
  What is the maximum number of planes in $\PG(7,2)$ such that every point is covered at most three times? Counting points gives
  $$
    \left\lfloor \frac{3\cdot [8]_2}{[3]_2}\right\rfloor=\left\lfloor 109+\tfrac{2}{7}\right\rfloor=109
  $$   
  as an upper bound, while Lemma~\ref{lem:pack_cover} gives the upper bound
  $$
    \leftllfloor \frac{3\cdot[8]_2}{[3]_2}\rightrrfloor_{2^{2}}=107,
  $$
  since no $2^2$-divisible code of length $9$ exists over $\F_2$. This bound is indeed tight, see e.g.\ \cite{ubt_eref48694,etzion2020subspace} where also more general packings 
  of $k$-spaces are studied.
\end{nexample} 

In some cases the sharpened rounding can even be computed when the input data is parametric.
\begin{trailer}{Asymptotic maximum size of $\lambda$-fold partial spreads}
\vspace*{-6mm}
\begin{nexample}
  \label{ex_gen_partial_spread_asymptotic_bound} 
  Let $v=tk+r$ with $r\in\{1,\dots,k-1\}$ and $\cU$ be a multiset of $k$-spaces in $\PG(v-1,q)$ such that every point is covered at most $\lambda\in\N$ times.
  We will show
  \begin{equation}
    \label{ie_gen_partial_spread_asymptotic_bound}
    \#\cU\le \lambda\cdot\left(1+\sum_{i=1}^{t-1} q^{ik+r} \right)  = 
    \lambda\cdot\left(\frac{ q^v-q^{k+r}}{q^k-1}+1\right)
    <\lambda \frac{[v]_q}{[k]_q} 
  \end{equation}
  for $k>\lambda [r]_q$.
  
  First we deduce
  \begin{eqnarray*}
    \lambda\!\left(\!1\!+\!\sum_{i=1}^{t-1} q^{ik+r}\!\right) &=&\lambda q^{k+r} \cdot\frac{q^{k(t\!-\!1)}\!-\!1}{q^k\!-\!1}+\lambda  
    = \lambda\cdot \frac{ q^v\!-\!q^{k+r}+q^k\,-\,1}{q^k\!-\!1} \\
    &=&\lambda \frac{[v]_q-[k+r]_q+[k]_q}{[k]_q}<\lambda \frac{[v]_q}{[k]_q},
  \end{eqnarray*}
  from the geometric series, so that we assume 
  $$
    \#\cU=\lambda\cdot\left(1+\sum_{i=1}^{t-1} q^{ik+r} \right)+1=\lambda \cdot \frac{[v]_q-[k+r]_q+[k]_q}{[k]_q}+1
  $$
  for a moment. From
  \begin{eqnarray*}
    (q-1)\sum_{i=0}^{k-2} \snumb{k-1}{i}{q} &=& (q-1)\sum_{i=0}^{k-2} q^i\cdot [k-i]_q=(q-1)\sum_{i=0}^{k-2} \frac{q^k-q^i}{q-1} \\ 
    &=&(k-1)q^k-[k-1]_q
    =kq^k-[k]_q
  \end{eqnarray*}
  we conclude the $S_q(k-1)$-adic expansion
  $$
    \#\cM=\left(\lambda[r]_q-k\right)\snumb{k-1}{k-1}{q}+ \sum_{i=0}^{k-2} (q-1)\cdot\snumb{k-1}{i}{q}
  $$
  of $\#\cM$. Since $\cM$ is $q^{k-1}$-divisible by Lemma~\ref{lem:union_subspaces} and Lemma~\ref{lemma_t_complement}, Theorem~\ref{thm_characterization_div} 
  yields that the leading coefficient $\lambda[r]_q-k$ is non-negative, which contradicts $k>\lambda [r]_q$.
\end{nexample} 
We remark that one can easily give a matching construction, i.e., the stated upper bound in Inequality~(\ref{ie_gen_partial_spread_asymptotic_bound}) is tight. The 
special case $\lambda=1$ is the main theorem of \cite{nastase2016maximum}.   
While the proof is a bit technical, we have actually just applied Lemma~\ref{lem:pack_cover} and evaluated the sharpened rounding analytically  
(for special parameters). 
\end{trailer}

\begin{nexercise}
  \label{ex_covering_points}
  Let $\cU$ be a multiset of $k$-spaces in $\PG(v-1,q)$ that covers each point at least once. Show 
  $$
    \#\cU\ge \left\lceil \frac{[v]_q}{[k]_q} \right\rceil
  $$
  and determine for the case of equality the geometric structure of the (multi-)set of points that are covered more than once.
\end{nexercise}
 
 \begin{nexercise}
  \label{ex_solid_cover_pg_6_2}
  Let $\cU$ be a multiset of $4$-spaces in $\PG(6,2)$ that cover every $2$-space at least once. Show $\#\cU>77$.\\ 
  \textit{Hint:} First show that a $2^3$-divisible multiset of points $\cM$ of cardinality $12$ over $\F_2$ is a $4$-fold line, 
  i.e., $\cM=4\cdot\chi_L$ for some line $L$.
\end{nexercise}
We remark that the best known published lower bound for the number of solids in $\PG(6,2)$ that cover every line at least once is $77$ and an example of  
$93$ such solids is known, see \cite{etzion2014covering}. Without proof we state that the lower bound can be improved to $86$ and the upper bound to $91$.
%% email in Codes vom 06.0.2018

\begin{question}{Research problem}Apply similar techniques to improve further lower bounds from \cite{etzion2014covering}. 
\end{question}

\chapter[Constructions for projective $q^r$-divisible codes]{Constructions for projective $q^r$-divisible codes or multisets of points with bounded maximum point multiplicity}
\label{sec_constructions_projective}
In Section~\ref{sec_lengths_of_divisible_codes} we have completely characterized the possible cardinalities of $q^r$-divisible multisets of points over $\F_q$, where $r$ is 
an arbitrary positive integer. As a refinement we now consider $q^r$-divisible multisets of points over $\F_q$ whose maximum point multiplicity is upper bounded by some positive 
integer $\lambda$, see e.g.\ \cite{korner2023lengths}. In the extreme case $\lambda=1$ the corresponding linear codes are projective. A first observation is that we can combine a $q^r$-divisible multiset $\cM_1$ in 
an $\F_q$-vector space $V_1$ and another $q^r$-divisible multiset $\cM_2$ in an $\F_q$-vector space $V_2$ to a $q^r$-divisible multiset $\cM$ in $V_1\times V_2$ by considering 
$V_1$ and $V_2$ as subspaces of $V_1\times V_2$. Here we have $\#\cM=\#\cM_1+\#\cM_2$ and $\gamma_0(\cM)=\max\!\left\{\gamma_0(\cM_1),\gamma_0(\cM_2)\right\}$, so that:
\begin{nlemma}
  \label{lemma_sum_mult}
  The set of possible cardinalities of $q^r$-divisible multisets of points over $\F_{q}$ with maximum point multiplicity at most $\lambda$ is closed under addition.
\end{nlemma}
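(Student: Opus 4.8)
The plan is to repeat the direct-sum construction behind Lemma~\ref{lemma_sum}, adding the single extra observation that keeps the maximum point multiplicity under control. Suppose $n_1$ and $n_2$ are both realizable as cardinalities of $q^r$-divisible multisets of points over $\F_q$ with maximum point multiplicity at most $\lambda$, and fix witnesses $\cM_1$ in $V_1\simeq\F_q^{v_1}$ and $\cM_2$ in $V_2\simeq\F_q^{v_2}$ with $\#\cM_i=n_i$ and $\gamma_0(\cM_i)\le\lambda$. I would work in the ambient space $V=V_1\times V_2$, identifying $V_1$ with $V_1\times\{\zv\}$ and $V_2$ with $\{\zv\}\times V_2$, and let $\cM_1',\cM_2'$ denote the corresponding embeddings (extended by multiplicity $0$ outside the respective subspace).

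The crucial point is that the supports of $\cM_1'$ and $\cM_2'$ are disjoint in $\cP(V)$: a common point would require a nonzero vector $(\bx,\zv)$ to be a scalar multiple of some $(\zv,\by)$ with $\by\neq\zv$, which is impossible. Setting $\cM=\cM_1'+\cM_2'$, I would then verify the three required properties. Divisibility is immediate: by Lemma~\ref{lem:ambient_space_unwichtig} each $\cM_i'$ is still $q^r$-divisible in $V$, and by Lemma~\ref{lem:qr-div-basic}\ref{lem:qr-div-basic:union} their sum $\cM$ is $q^r$-divisible. For the cardinality, $\#\cM=\#\cM_1'+\#\cM_2'=n_1+n_2$.

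Finally, for the multiplicity bound I would use disjointness: for every point $P\in\cP(V)$ at most one of $\cM_1'(P)$ and $\cM_2'(P)$ is nonzero, so $\cM(P)=\max\{\cM_1'(P),\cM_2'(P)\}\le\lambda$, giving $\gamma_0(\cM)\le\lambda$. This shows $n_1+n_2$ is realizable, as desired. There is essentially no obstacle here: the only thing to get right — and the reason one embeds into a product rather than summing inside a single space as one could for the unrestricted Lemma~\ref{lemma_sum} — is that disjoint supports prevent the two multiplicity functions from adding up and pushing some point past $\lambda$.
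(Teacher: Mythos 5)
Your proof is correct and follows essentially the same route as the paper: the paper also combines the two witnesses $\cM_1$, $\cM_2$ by embedding them into $V_1\times V_2$, notes $\#\cM=\#\cM_1+\#\cM_2$ and $\gamma_0(\cM)=\max\{\gamma_0(\cM_1),\gamma_0(\cM_2)\}$, and concludes via the union construction of Lemma~\ref{lem:qr-div-basic}. Your only addition is spelling out the support-disjointness that justifies the $\max$ formula, which the paper leaves implicit.
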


Let us start to consider constructions for multisets of points with maximum point multiplicity $1$, i.e., sets of points.

\begin{trailer}{Combinations of Simplex and first order Reed-Muller codes}In Example~\ref{ex_simplex_code} and Example~\ref{example_affine_space} we have seen the first two basic constructions of $q^r$-divisible sets.
\begin{nlemma}
  Let $u$ be an arbitrary positive integer and $U$ be an arbitrary $u$-space in $\PG(v-1,q)$, where $v\ge u$. Then $\chi_U$ is a $q^{u-1}$-divisible set of 
  cardinality $[u]_q$ and dimension $u$. If $u\ge 2$ and $H$ is a hyperplane of $U$, i.e., a $(u-1)$-space that is contained in $U$, then 
  $\chi_U-\chi_H$ is a $q^{u-2}$-divisible set of cardinality $q^{u-1}$ and dimension $u$. 
\end{nlemma}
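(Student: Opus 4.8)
The plan is to use Lemma~\ref{lem:ambient_space_unwichtig}: since $\Delta$-divisibility of a multiset of points does not depend on the ambient space, I may assume the ambient space is $U$ itself and work entirely in $\PG(u-1,q)$. Under this reduction both assertions become special cases of constructions already seen, namely $\chi_U$ is the full point set of $\PG(u-1,q)$, i.e.\ the simplex code of Example~\ref{ex_simplex_code}, and $\chi_U-\chi_H$ is the affine-space construction of Example~\ref{example_affine_space}, whose $q^{u-2}$-divisibility is the content of Exercise~\ref{exercise_q_linear_combination}. I would nonetheless give the short direct verifications below, since the divisibility criterion in Equation~(\ref{eq_divisible_multiset}) makes them immediate.

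For the first claim, the cardinality $\#\chi_U=[u]_q$ is the number of points in a $u$-space by Equation~(\ref{eq_qbin}), and the support spans $U$, so $\dim(\chi_U)=u$. For divisibility I check Equation~(\ref{eq_divisible_multiset}) in $\PG(u-1,q)$: every hyperplane of $U$ is a $(u-1)$-space and hence carries exactly $[u-1]_q$ points, so $\chi_U(H')=[u-1]_q$ for each hyperplane $H'$, while $\#\chi_U=[u]_q$. Since $[u]_q-[u-1]_q=q^{u-1}$, we obtain $\chi_U(H')\equiv\#\chi_U\pmod{q^{u-1}}$ for all $H'$, which is $q^{u-1}$-divisibility. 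The edge case $u=1$ is trivial, as $q^0=1$ and every multiset is $1$-divisible.

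For the second claim, note first that $\chi_U-\chi_H$ is a genuine nonnegative set, because $H\subseteq U$, so it is a multiset of points by Lemma~\ref{lemma_sum_of_characteristic_functions}. Its cardinality is $[u]_q-[u-1]_q=q^{u-1}$ and its support is the point set $U\backslash H$. This support spans $U$: fixing a point $P_0\in U\backslash H$, every line through $P_0$ and a point $Q\in H$ meets $H$ only in $Q$ and hence contains $q\ge 2$ points outside $H$ whose span is the whole line and therefore contains $Q$; thus the support spans $H$ together with $P_0$, i.e.\ all of $U$, giving dimension $u$. For divisibility I again use Equation~(\ref{eq_divisible_multiset}) in $\PG(u-1,q)$: for the hyperplane $H$ itself $(\chi_U-\chi_H)(H)=[u-1]_q-[u-1]_q=0$, while for any other hyperplane $H'$ the intersection $H\cap H'$ is a $(u-2)$-space with $[u-2]_q$ points, so $(\chi_U-\chi_H)(H')=[u-1]_q-[u-2]_q=q^{u-2}$. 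All hyperplane multiplicities thus lie in $\{0,q^{u-2}\}$ and $\#(\chi_U-\chi_H)=q^{u-1}$, so each multiplicity is congruent to $\#(\chi_U-\chi_H)$ modulo $q^{u-2}$, which is $q^{u-2}$-divisibility.

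There is no serious obstacle here; the one step needing a sentence of argument is the dimension claim in the second part, that $U\backslash H$ spans $U$, and the remainder is the bookkeeping identity $[j]_q-[j-1]_q=q^{j-1}$ together with the hyperplane-intersection counts. The point to be careful about is to invoke Lemma~\ref{lem:ambient_space_unwichtig} before applying the hyperplane criterion: for a proper subspace $U$ of $\PG(v-1,q)$ the hyperplanes of the large space meet $U$ in either $U$ or a $(u-1)$-space, and passing to $U$ as the ambient space avoids that case distinction cleanly.
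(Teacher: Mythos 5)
Your proof is correct and takes essentially the same route as the paper, which offers no separate argument for this lemma but presents it as a direct restatement of Example~\ref{ex_simplex_code} and Example~\ref{example_affine_space} (with the divisibility of the affine part delegated to Exercise~\ref{exercise_q_linear_combination}); your hyperplane-multiplicity computations simply make those references explicit. The reduction via Lemma~\ref{lem:ambient_space_unwichtig} and the spanning argument for $U\backslash H$ are correct details that the paper leaves implicit.
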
 
\end{trailer}
The small cardinalities of $q^r$-divisible sets over $\F_q$ that cannot be attained by combinations of $(r+1)$-spaces and affine $(r+2)$-spaces can be determined easily:
\begin{nexercise}
  \label{exercise_upt_to_r_qrp1}
  Let $1\le n\le rq^{r+1}$ such that no $u,v\in\N_0$ with $u\cdot [r+1]_q+v\cdot q^{r+1}=n$ exist. Then, there exist $a,b\in N_0$ with $a\le r-1$, $b\le q-2$, and
  $$
    (a(q-1)+b)[r+1]_q+a+1\le n\le (a(q-1)+b+1)[r+1]_q-1.
  $$     
  Moreover, there are no $u,v\in\N_0$ with $u\cdot [r+1]_q+v\cdot q^{r+1}=rq^{r+1}+1$.
\end{nexercise}

Up to the bound $rq^{r+1}$ the attainable cardinalities of $q^r$-divisible sets of points over $\F_q$ using the first two basic constructions only are given by
\begin{itemize}
  \item $\{3,4\}$ for $q=2$ and $r=1$;
  \item $\{7,8,14,15,16\}$ for $q=2$ and $r=2$;
  \item $\{15,16,30,31,32,45,46,47,48\}$ for $q=2$ and $r=3$;
  \item $\{4,8,9\}$ for $q=3$ and $r=1$;
  \item $\{13,26,27,39,40,52,53,54\}$ for $q=3$ and $r=2$;
  \item $\{5,10,15,16\}$ for $q=4$ and $r=1$;
  \item $\{21,42,63,64,84,85,105,106,126,127,128\}$ for $q=4$ and $r=2$.
\end{itemize}  

\begin{nexample}
  \label{example_ovoid}
  An \emph{ovoid} in $\PG(3,q)$ is a set $\cM$ of $q^2+1$ points, no three collinear, such that every hyperplane 
  contains $1$ or $q+1$ points, i.e., $\cM$ is $q$-divisible. Ovoids exist for all $q>2$, see e.g.\ \cite{o1996ovoids}.
\end{nexample}
For the binary field a $2$-divisible set of cardinality $2^2+1$ is contained in a different parametric family.
\begin{ndefinition}
  \label{def_projective_base}
  A set of $k+1$ points in $\PG(k-1,q)$, where $k\ge 2$, such that any subset of $k$ points span the full space is called a \emph{$k$-dimensional projective base}. 
\end{ndefinition}  
\begin{nexercise}
  \label{exercise_projective_base}
  Show that the binary $k$-dimensional projective base is $2$-divisible and has cardinality $k+1$ if $k\ge 2$. 
  Moreover, show that a representation is given by the points $\left\langle \be_1\right\rangle,\dots \left\langle \be_k\right\rangle$ and $\left\langle \be_1+\dots+\be_k\right\rangle$, 
  where $\be_1,\dots,\be_k$ denote the unit vectors in $\F_q^k$.  
\end{nexercise}
\begin{trailer}{A cone construction}
\vspace*{-3mm}
\begin{ndefinition}
  Let $X$, $Y$ be complementary subspaces of $\PG(v-1,q)$ and $\cB$ be a set of points in $\PG(Y)$. The \emph{cone with vertex $X$ and base $\cB$} is 
  the multiset of points $\cM$ given by $\cM=\sum_{B\in\cB} \chi_{\left\langle B,X\right\rangle}$.
\end{ndefinition}
If $\dim(X)=s$, then the set of points of $\left\langle P,X\right\rangle$ is $q^s$ divisible for every point $P$. If $\cB$ is $q^r$-divisible then we can easily check that 
the cone $\cM$ with vertex $X$ and base $\cB$ is $q^{r+s}$-divisible and all points outside of $X$ have multiplicity at most $1$ while the points in $X$ have multiplicity $\#\cB$. 
Clearly we can subtract $(\#\cB-1)\cdot\chi_X$ or $\#\cB\cdot \chi_X$ from $\cM$ in order to obtain a set of points.
\begin{nexercise}
  \label{exercise_cone}
  Let $X$, $Y$ be complementary subspaces of $\PG(v-1,q)$, $s=\dim(X)$, and $\cB$ be a $q^r$-divisible set of points in $\PG(Y)$. Show that
  \begin{equation}
    \label{cone_construction_1}
    \sum_{B\in\cB} \chi_{\left\langle B,X\right\rangle\backslash X}
  \end{equation}
  is $q^{r+s}$-divisible of cardinality $\#\cB\cdot q^s$ if $\#\cB\equiv 0\pmod {q^{r+1}}$ and
  \begin{equation}
    \label{cone_construction_2}
    \sum_{B\in\cB} \chi_{\left\langle B,X\right\rangle\backslash X}\,+\,\chi_X
  \end{equation}
  is $q^{r+s}$-divisible of cardinality $\#\cB\cdot q^s+[s]_q$ if $\#\cB(q-1)\equiv -1\pmod {q^{r+1}}$.   
\end{nexercise} 
\end{trailer}
\begin{nexample}
  \label{example_cone_constructions} 
  For a $6$-dimensional projective base $\cB$ over $\F_2$ and an $s$-space $X$, where $s\ge 1$, (\ref{cone_construction_2}) yields a $2^{s+1}$-divisible set of $2^{s+3}-1$ points over $\F_2$. 
  Similarly, for a $7$-dimensional projective base $\cB$ over $\F_2$ and an $s$-space $X$, where $s\ge 1$, (\ref{cone_construction_1}) yields a $2^{s+1}$-divisible set of $2^{s+3}$ points over $\F_2$.
\end{nexample}
\begin{backgroundinformation}{Parity check bits}We remark that adding a so-called parity (check) bit to the codewords of a binary linear code yields a $2$-divisible linear code whose length is increased by one. Since 
a binary $4$-dimensional projective base gives a $2$-divisible set of cardinality $5$ over $\F_2$, there are $2$-divisible sets of points of cardinality $n$ over 
$\F_2$ for all $n\ge 3$.    
\end{backgroundinformation} 

In a certain sense we can generalize the idea of parity check bits to construct binary codes with higher divisibility. To this end, assume that we are given a $2^r$-divisible $[n,k]_2$-code 
$C$ that contains a $2^{r+1}$-divisible $[n',k-1]_2$-code $C'$. Geometrically, $C$ corresponds to a $2^r$-divisible multiset of points $\cM$ in $\PG(k-1,2)$ and $C'$ corresponds to a 
$2^{r+1}$-divisible multiset of points $\cM'$ in $\PG(k-2,2)$. Moreover, there exists a point $P$ in $\PG(k-1,2)$ such that $\cM'$ arises from $\cM$ by projection trough $P$. Especially, 
we have $\cM(P)=n-n'$ and for every hyperplane $H$ of $\PG(k-1)$ we have $\cM(H)\equiv \#\cM \pmod {2^{r+1}}$ if $P\le H$ and $\cM(H)\equiv \#\cM+2^r \pmod {2^{r+1}}$ otherwise. 
Defining the multiset of points $\widetilde{\cM}$ in $\PG(k-1,2)$ by $\widetilde{\cM}(P)=\cM(P)+2^r$ and $\widetilde{\cM}(Q)=\cM(Q)$ for all other points $Q\neq P$, we obtain 
a $2^{r+1}$-divisible multiset of points in $\PG(k-1,q)$. For the special case $r=0$ we note that the set of codewords of even weights of an arbitrary $[n,k]_2$-code $C$ forms 
a subcode of dimension $k-1$ or $k$. So, if $C$ is not even itself, then the above geometric construction corresponds to adding a parity check bit.      
\begin{nexample}
  \label{example_complement_of_parabolic_quadric}
  Consider the matrix $G$ consisting of the $16$ column vectors in $\F_2^6$ that have Hamming weight $2$ or $6$ and let $\cM$ be the corresponding set of $16$ points 
  in $\PG(5,2)$.\footnote{This is a specific embedding of the complement of the parabolic quadric $Q(4,2)$, see e.g.\ \cite{hirschfeld1998projective}, in $\PG(5,2)$. The subsequent 
  point $N$ is the nucleus of the quadric, i.e., every line trough $N$ contains exactly one point of the quadric.}   
  By $N$ we denote the unique point with  Hamming weight $6$. Let us describe the hyperplanes by the set of points being perpendicular to a vector 
  $\bv\in\F_2^6\backslash \zv$, i.e., we write $H(\bv)$. We can easily check that $\cM(H(\bv))=10$ if $\wt(\bv)\in\{1,5\}$, $\cM(H(\bv))=5$ if $\wt(\bv)=3$, 
  $\cM(H(\bv))=8$ if $\wt(\bv)\in\{2,4\}$, and $\cM(H(\bv))=16$ if $\wt(\bv)=6$. So, $G$ spans a $[16,5]_2$-code $C$ with non-zero weights in $\{6,8,10\}$. The codewords 
  of weight $8$ correspond to the hyperplanes containing $N$ (and not being equal to $H(N)$). Increasing the multiplicity of $N$ by $2$ yields a $2^2$-divisible multiset 
  of cardinality $18$ in $\PG(5,2)$, with dimension $5$ and $N$ is the unique point with multiplicity larger than $1$.
\end{nexample}
By a little trick we can turn the above example into a $2^2$-divisible set of $21$ points in $\PG(6,2)$. Instead of increasing the multiplicity of $N$ by $2$, we decrease it by 
$2$, so that it becomes $-1$. Adding a suitable plane $\pi$ containing $N$ gives the desired multiset $\cM+\chi_\pi-2\chi_N$. Since $\dim(\cM)=5$ and $\dim(\chi_\pi)=3$ an 
embedding in $\PG(6,2)$ as a set of points is possible. 
  
%%Next we will consider a \emph{switching construction}:
\begin{trailer}{A switching construction}
\vspace*{-5mm}
\begin{nlemma}
  \label{lemma_switching_construction}
  Let $\cM$ be a $q^r$-divisible set of points in $\PG(v-1,q)$, where $k\in \N$, such that there exists an $r$-space $S$ with $\cM(P)=1$ for all points $P$ in $S$, i.e., 
  $S$ is contained in the support $\supp(\cM)$ of $\cM$. Then, there exists a $q^r$-divisible set of points $\cM'$ with cardinality $\#\cM'=\#\cM + q^{r+1} -[r+1]_q$.
\end{nlemma}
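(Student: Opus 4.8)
The plan is to realize the promised increase in cardinality by a genuine \emph{switching}: I would excise the $r$-space $S$ from the support of $\cM$ and replace it by $q-1$ affine $(r+1)$-spaces that all share $S$ as their common hyperplane at infinity. Concretely, first I would invoke Lemma~\ref{lem:ambient_space_unwichtig} to regard $\cM$ as living in a sufficiently high-dimensional ambient space $\PG(v'-1,q)$; this does not affect $q^r$-divisibility but supplies the room needed below. Inside this large space I would choose $q-1$ pairwise distinct $(r+1)$-spaces $T_1,\dots,T_{q-1}$, each containing $S$, of the form $T_i=\langle S,Q_i\rangle$ where the points $Q_i$ are linearly independent modulo $\langle\supp(\cM)\rangle$. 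Any two distinct $(r+1)$-spaces through the $r$-space $S$ meet exactly in $S$ (a dimension count gives $\dim(T_i\cap T_j)\le r$ while $S\subseteq T_i\cap T_j$), so the affine parts $T_i\setminus S$ are pairwise disjoint; and the independence of each $Q_i$ from $\langle\supp(\cM)\rangle$ forces $T_i\setminus S$ to avoid $\supp(\cM)\setminus S$ as well, since a point $\langle s+\lambda Q_i\rangle$ with $\lambda\neq 0$ lying in $\langle\supp(\cM)\rangle$ would put $Q_i$ there too.

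Then I would set
\[
  \cM' \;=\; \cM \;+\; \sum_{i=1}^{q-1}\chi_{T_i}\;-\;q\,\chi_S .
\]
The reason for writing the switch in this form, rather than as $\cM-\chi_S+\sum_i\chi_{T_i\setminus S}$, is that every term is now manifestly $q^r$-divisible: $\cM$ is $q^r$-divisible by hypothesis, each $\chi_{T_i}$ is $q^r$-divisible because $T_i$ is an $(r+1)$-space (Lemma~\ref{lem:qr-div-basic}(i)), and $q\,\chi_S$ is the $q$-fold repetition of the $q^{r-1}$-divisible set $\chi_S$, hence $q^r$-divisible by Lemma~\ref{lem:qr-div-basic}(iii). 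Since the divisibility condition is additive on hyperplane values — for every $H$ the quantity $\cM'(H)-\#\cM'$ decomposes into the corresponding signed sum of the terms $\cC(H)-\#\cC$, each $\equiv 0\pmod{q^r}$ — it follows immediately that $\cM'$ is $q^r$-divisible, \emph{provided} $\cM'$ is a legitimate (non-negative) multiset.

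Establishing that $\cM'$ is in fact a \emph{set} is the step that genuinely uses the geometric choices above, and I expect it to be the main point. A point $P\in S$ has $\cM(P)=1$, lies on all $q-1$ spaces $T_i$, and is counted $q$ times by $q\chi_S$, so $\cM'(P)=1+(q-1)-q=0$; a point of some $T_i\setminus S$ lies on exactly one $T_j$ and outside $\supp(\cM)$, so $\cM'(P)=0+1-0=1$; every remaining point keeps its old multiplicity $\cM(P)\le 1$. Hence $0\le\cM'(P)\le 1$ throughout and $\cM'$ is a $q^r$-divisible set. Finally I would read off the cardinality, using $\#\chi_{T_i}=[r+1]_q$, $\#\chi_S=[r]_q$, and $[r+1]_q-[r]_q=q^r$:
\[
  \#\cM' = \#\cM + (q-1)[r+1]_q - q[r]_q = \#\cM + q\bigl([r+1]_q-[r]_q\bigr) - [r+1]_q = \#\cM + q^{r+1} - [r+1]_q,
\]
which is exactly the asserted value.
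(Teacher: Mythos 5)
Your proof is correct and takes essentially the same route as the paper's: the same switching multiset $\cM+\sum_{i=1}^{q-1}\chi_{T_i}-q\cdot\chi_S$ after embedding $\cM$ into a sufficiently large ambient space, with the $(r+1)$-spaces $T_i$ chosen through $S$ so that they pairwise meet in $S$ and meet $\supp(\cM)$ only in $S$. You merely spell out the details the paper compresses into {\lq\lq}can clearly be chosen{\rq\rq} — the explicit choice of the points $Q_i$ modulo $\langle\supp(\cM)\rangle$, the point-by-point multiplicity check, and the term-by-term divisibility via Lemma~\ref{lem:qr-div-basic}.
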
  
\begin{proof}
  Let $\widetilde{\cM}$ be the embedding of $\cM$ in $\PG(v'-1,q)$ for sufficiently large $v'\ge v$ (chosen later on) and $T_1,\cdots, T_{q-1}$ be $(r+1)$-spaces containing $S$. With this 
  the multiset of points 
  $$
    \cM':=\widetilde{\cM}+\sum_{i=1}^{q-1} \chi_{T_i}\,-q\cdot \chi_S  
  $$
  is $q^r$-divisible and has cardinality $$\#\cM'=\#\cM+(q-1)[r+1]_q-q[r]_q=\#\cM + q^{r+1} -[r+1]_q.$$ 
  If $v'$ is sufficiently large then the $T_i$ can clearly be chosen in such a way such that their pairwise intersection as well as their intersection with  
  $\supp(\cM)$ equals $S$, so that $\gamma_0(\cM')=1$. 
\end{proof}   
The construction is called \emph{switching construction} since an $r$-space is switched for $q-1$ affine $(r+1)$-spaces.
\end{trailer} 
Starting from an $(r+1)$-space over $\F_q$ we can construct 
several non-isomorphic $q^r$-divisible sets of points over $\F_q$ with cardinality $q^{r+1}$ if $q>2$.
\begin{nexercise}
  Show that for each $r\ge 1$ and each $r+2\le k\le r+q$ there exists a $q^r$-divisible set of points over $\F_q$ with cardinality $q^{r+1}$ and dimension $k$. 
\end{nexercise}
The switching construction from Lemma~\ref{lemma_switching_construction} can be used to construct projective $2^r$-divisible codes for an entire interval of effective 
lengths:
\begin{ncorollary}
  \label{cor_spread_switching}
  For each integer $r\ge 1$ and each $2^{2r}-1\le n\le 2^{2r}+2^r$ there exists a $2^r$-divisible set of points over $\F_2$ with cardinality $n$.    
\end{ncorollary}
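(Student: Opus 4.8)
The plan is to start from the smallest admissible cardinality $2^{2r}-1$ and climb up to $2^{2r}+2^r$ by repeatedly applying the switching construction of Lemma~\ref{lemma_switching_construction}. The crucial arithmetic observation is that over $\F_2$ a single switching step raises the cardinality by exactly
\[
  q^{r+1}-[r+1]_q \;=\; 2^{r+1}-(2^{r+1}-1) \;=\; 1 ,
\]
so that $j$ successive switchings starting from a set of size $2^{2r}-1$ realize every cardinality $2^{2r}-1+j$. Since the target upper end is $2^{2r}+2^r$, I must be able to perform $j$ switchings for all $0\le j\le 2^r+1$, i.e.\ up to $2^r+1$ of them.

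For the base object I would take a $2r$-dimensional $\F_2$-space $U$ and the simplex $\chi_U=\spaces{U}{1}$. By Example~\ref{ex_simplex_code} this is a $2^{2r-1}$-divisible set of cardinality $[2r]_2=2^{2r}-1$, and since $2r-1\ge r$ for $r\ge 1$ it is in particular $2^r$-divisible; this settles $n=2^{2r}-1$. To feed the switching construction I need $r$-spaces lying in the support with multiplicity $1$. The combinatorial heart of the argument is that $U$ admits a spread $S_1,\dots,S_{2^r+1}$ into $r$-spaces (e.g.\ the Desarguesian spread coming from $U\cong\F_{2^r}^2$), and that the number of spread elements is exactly $(2^{2r}-1)/(2^r-1)=2^r+1$, which is precisely the number of switchings to be carried out. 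The $S_i$ are pairwise disjoint (their only common point is $\langle\zv\rangle$), and this is exactly what guarantees that one switching does not disturb the $r$-space used by the next.

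Rather than iterating the lemma step by step, I would realize all switchings simultaneously. Embed $U$ into $U\oplus W$ with $\dim W=2^r+1$ and basis $w_1,\dots,w_{2^r+1}$, set $T_i:=S_i\oplus\langle w_i\rangle$ (an $(r+1)$-space), and for $0\le j\le 2^r+1$ define
\[
  \cM_j \;:=\; \chi_U + \sum_{i=1}^{j}\bigl(\chi_{T_i}-2\chi_{S_i}\bigr).
\]
A direct case check using $T_i\cap U=S_i$ and $T_i\cap T_{i'}=\langle\zv\rangle$ for $i\ne i'$ (both immediate from the direct-sum setup) shows that $\cM_j$ takes only the values $0$ and $1$, so it is genuinely a set of points, and its cardinality is $2^{2r}-1+j$ by the increment computed above. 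Divisibility follows because $\chi_U$ is $2^r$-divisible, each $\chi_{T_i}$ is $2^r$-divisible as an $(r+1)$-space, and each $2\chi_{S_i}$ is $2^r$-divisible since an $r$-space is $2^{r-1}$-divisible (cf.\ Exercise~\ref{exercise_divisibility_sum_construction}); as the defining congruence $\cM(H)\equiv\#\cM\pmod{2^r}$ is a $\Z$-linear condition it is preserved under this integer combination, and the non-negativity just verified promotes the formal sum to a genuine $2^r$-divisible set via Lemma~\ref{lemma_sum_of_characteristic_functions}. Letting $j$ run over $0,1,\dots,2^r+1$ then produces every $n$ with $2^{2r}-1\le n\le 2^{2r}+2^r$.

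The only genuinely delicate point is the multiplicity bookkeeping in the last display: one must ensure that the newly introduced affine $(r+1)$-spaces $T_i\setminus S_i$ are pairwise disjoint and meet $U$ nowhere, so that subtracting $2\chi_{S_i}$ never drives a multiplicity below $0$ and adding the $\chi_{T_i}$ never pushes one above $1$. The direct-sum choice $T_i=S_i\oplus\langle w_i\rangle$ with independent $w_i$ forces all the relevant intersections to be trivial, so this obstacle dissolves, and everything else reduces to the elementary arithmetic of the quantities $[k]_2$.
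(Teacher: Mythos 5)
Your proof is correct and takes essentially the same route as the paper: the base object is the point set of $\PG(2r-1,2)$, i.e.\ a $2^{2r-1}$-divisible (hence $2^r$-divisible) set of $2^{2r}-1$ points, an $r$-spread supplies the $2^r+1$ pairwise disjoint $r$-spaces in the support, and switching — which over $\F_2$ gains exactly $2^{r+1}-[r+1]_2=1$ point — is applied to $j$ of them for $0\le j\le 2^r+1$. The only difference is presentational: the paper iterates Lemma~\ref{lemma_switching_construction} while you carry out all $j$ switchings simultaneously via the direct-sum embedding $T_i=S_i\oplus\langle w_i\rangle$, which has the minor merit of making explicit that switchings on disjoint spread elements do not interfere, a detail the paper leaves implicit.
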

\begin{proof}
  Let $\cS$ be an $r$-spread, i.e. a partition of $\PG(2r-1,2)$ into $2^r+1$ pairwise disjoint $r$-spaces. The corresponding set of points is $2^{2r-1}$-divisible, where 
  $2r-1\ge r$. For $0\le j\le 2^r+1$ of these $r$-spaces we can apply the switching construction from Lemma~\ref{lemma_switching_construction}. 
\end{proof}
For $r=2$ we obtain $4$-divisible sets of points over $\F_2$ with cardinalities between $15$ and $20$. Together with the examples for cardinalities $7$, $8$, and $14$, we 
obtain examples for all cardinalities $n\ge 14$. Using the same construction for general field sizes $q$ we obtain a sequence of possible cardinalities:
\begin{ncorollary}
  \label{cor_spread_switching_q}
  For each integer $r\ge 1$ and each $0\le j\le q^r+1$ there exists a $q^r$-divisible set of points over $\F_q$ with cardinality $n=\qbin{2r}{1}{q}+j\cdot\left(q^{r+1}-[r+1]_q\right)$.  
\end{ncorollary}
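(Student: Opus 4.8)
The plan is to carry over the $q=2$ argument of Corollary~\ref{cor_spread_switching} essentially verbatim, replacing the binary spread by an $r$-spread over $\F_q$ and iterating the switching construction of Lemma~\ref{lemma_switching_construction}. First I would fix a $2r$-dimensional $\F_q$-vector space $V$ and an $r$-spread $\cS$ of $\PG(2r-1,q)$, i.e.\ a partition of its point set into pairwise disjoint $r$-spaces $S_1,\dots,S_{q^r+1}$. Such a spread exists because $r\mid 2r$, and the number of spread elements is $[2r]_q/[r]_q=(q^{2r}-1)/(q^r-1)=q^r+1$, which matches the upper end of the range for $j$. The associated multiset of points is simply the full point set $\chi_V$, which is $q^{2r-1}$-divisible by Lemma~\ref{lem:qr-div-basic}(i); since $2r-1\ge r$ for $r\ge 1$, it is in particular $q^r$-divisible, and its cardinality is $[2r]_q=\qbin{2r}{1}{q}$. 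This handles the case $j=0$.

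Next I would apply Lemma~\ref{lemma_switching_construction} to $j$ of the $q^r+1$ spread elements, one at a time. Each spread element $S_i$ is an $r$-space on which the current set of points takes the constant value $1$, so the hypothesis of the lemma is met, and a single switch produces a $q^r$-divisible set of points whose cardinality increases by exactly $q^{r+1}-[r+1]_q$ (the value supplied by Lemma~\ref{lemma_switching_construction}). Performing $j$ switches for $0\le j\le q^r+1$ therefore yields a $q^r$-divisible set of cardinality $\qbin{2r}{1}{q}+j\cdot\bigl(q^{r+1}-[r+1]_q\bigr)$, as claimed.

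The step I expect to be the main obstacle is justifying that the switching construction can genuinely be iterated, i.e.\ that after switching $S_1,\dots,S_{i-1}$ the next spread element $S_i$ still has all of its points at multiplicity $1$ and still qualifies as an $r$-space in the enlarged ambient space. This should follow directly from the proof of Lemma~\ref{lemma_switching_construction}: each switch only deletes the points of the spread element being switched and inserts $q-1$ affine $(r+1)$-spaces whose new points can be chosen disjoint from the existing support, provided the ambient dimension $v'$ is taken large enough. Because the spread elements are pairwise disjoint, the untouched $S_i$ retain multiplicity $1$ throughout, so the hypotheses of Lemma~\ref{lemma_switching_construction} remain valid at every stage and $\gamma_0=1$ is preserved. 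This iterability, rather than the routine cardinality bookkeeping, is the only point requiring real care.
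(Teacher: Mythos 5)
Your proof is correct and is essentially the paper's own argument: the paper obtains this corollary by {\lq\lq}the same construction{\rq\rq} as the binary Corollary~\ref{cor_spread_switching}, i.e., taking an $r$-spread of $\PG(2r-1,q)$ (whose point set is $q^{2r-1}$-divisible, hence $q^r$-divisible, of cardinality $\qbin{2r}{1}{q}$) and applying the switching construction of Lemma~\ref{lemma_switching_construction} to $j$ of its $q^r+1$ pairwise disjoint elements. Your extra care about iterability --- that untouched spread elements keep multiplicity $1$ and that the lemma's embedding into a large enough ambient space preserves $\gamma_0=1$ --- is precisely the point the paper leaves implicit, so the two arguments coincide.
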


For $q=2$ we can use the switching construction to construct a $2^r$-divisible set of points over $\F_2$ of cardinality $r\cdot 2^{r+1}+1$ for all $r\ge 2$.
\begin{nexercise}
  \label{ex_affine_spaces_switching}
  For an integer $r\ge 1$ consider the multisets of points $\cM_i$ over $\F_2$ consisting of the $2^{r+1}$ points generated by the $2^{r+1}$ binary vectors in 
  $e_{i+r+1}+\left\langle e_i,e_{i+1},\dots,e_{i+r}\right\rangle$ for all $1\le i\le r$. Show that the $\cM_i$ are affine $(r+2)$-spaces and that 
  $\cM=\sum_{i=1}^{r} \cM_i$ is a $2^{r}$-divisible set of points over $\F_2$ of cardinality $r\cdot 2^{r+1}$ and dimension $2r+1$ whose support 
  contains the $r$-space $S=\left\langle e_{r+2},\dots,e_{2r+1}\right\rangle$. Use the switching construction to obtain a $2^r$-divisible set 
  of $r\cdot 2^{r+1}+1$ points over $\F_2$.  
\end{nexercise}
Using $r=3$, $r=4$, and $r=5$ in the construction of Exercise~\ref{ex_affine_spaces_switching} we obtain an $8$-divisible set of $49$, a $16$-divisible set of 
$129$, and a $32$-divisible set of $321$ points over $\F_2$.

It is also possible to extend the switching construction to a more general setting. To this end let $\Delta$ and $\Delta'$ be two integers such that 
$\rho:=\tfrac{\Delta}{\Delta'}\in\N$. For a given prime power $q$ let $\cM$ be a $\Delta$-divisible set of points over $\F_q$, $\cD$ be a $\Delta'$-divisible set 
of points over $\F_q$, $D=\left\langle\cD\right\rangle$ be the subspace spanned by $\cD$, and let $\cM_1,\dots,\cM_{\rho-1}$ be $\Delta$-divisible sets of points 
over $\mathbb{F}_q$ such that $\cM_i(P)\ge \cD(P)$ for all $1\le i\le \rho-1$ and $\cM(P)\ge \cD(P)$, where $P$ ranges over all points in $D$. In other words, the 
set of points $\cM_i$ and $\cM$ all contain the set of points $\cD$ as a subset. With this, the multiset of points given by 
\begin{equation}
  \label{eq_generalized_switching_construction}
  \cM+\sum_{i=1}^{\rho-1} \cM_i-\rho\cdot \cD
\end{equation} 
is $\Delta$-divisible over $\F_q$. If $\cM|_D=\cD|_D$ and $\cM_i|_D=\cD|_D$ for all $1\le i\le \rho-1$, then $\cM$ and the $\cM_i$ can clearly be embedded in suitable 
subspaces such that their pairwise intersection is given by the points in $\supp(\cD)$, so that the multiset of points given by Equation~(\ref{eq_generalized_switching_construction}) 
is indeed a set of points.
\begin{nexercise}
  \label{exercise_generalized_switching_construction}
  Let $\Delta, \Delta'\in\N$ such that $\rho:=\tfrac{\Delta}{\Delta'}\in\N$, $\cM$ be a $\Delta$-divisible set of points over $\F_q$, $\cD$ be a $\Delta'$-divisible 
  set of points over $\F_q$, and let $\cM_1,\dots,\cM_{\rho-1}$ be $\Delta$-divisible sets of points over $\mathbb{F}_q$. Further assume that $\cM|_D=\cD|_D$ and 
  $\cM_i|_D=\cD|_D$ for all $1\le i\le \rho-1$, where $D:=\left\langle\cD\right\rangle$. Show that there exists a $\Delta$-divisible set of points over $\F_q$ of 
  cardinality $\#\cM+\sum_{i=1}^{\rho-1} \# \cM_i-\rho\cdot \#\cD$ and dimension 
  $\dim(\cM)+\sum_{i=1}^{\rho-1} \dim(\cM_i)-\rho\cdot \dim(\cD) \le k\le \dim(\cM)+\sum_{i=1}^{\rho-1} \dim(\cM_i)-(\rho-1)\cdot \dim(\cD)$.   
\end{nexercise}
For an application with a specific choice of the $\cM_i$ and $\cD$ we refer to Exercise~\ref{ex_baer_switching}.

\section{Constructions using subfields}
\label{subsec_subfield_constructions}
Considering $\F_{q^l}$ as an extension field of $\F_q$ we can assume $\F_q\subset \F_{q^l}$ for each integer $l\ge 2$. So, the field $\F_{q^l}$ can be also seen 
as an $l$-dimensional vector space over $\F_q$. If $\bv$ is a $k$-dimensional vector over $\F_4\simeq \F_2[x]/(x^2+x+1)\F_2[x]$, we can represent 
each entry $v_i\in F_4$ by $a_1x+a_0$ with $a_0,a_1\in \F_2$ and replace it by the vector $\left(a_0,a_1\right)^\top \in\F_2^2$. This yields a representation of $\bv$ 
as an element in $\F_2^{2k}$ instead $\F_4^k$. So, starting from a multiset of points in $\PG(v-1,q^l)$ we can construct multiset of points in $\PG(vl-1,q)$. However, 
we have to be a bit careful when using the relation between vectors and points, i.e., $1$-dimensional subspaces, for different field sizes. For a given vector 
$\bv\in\F_{q^l}^k$ the point $\left\langle\bv\right\rangle_{F_{q^l}}$ admits $q^l-1$  representations $\left\langle\bv\right\rangle_{\F_{q^l}}
=\left\langle\alpha\cdot\bv\right\rangle_{F_{q^l}}$, where $\alpha\in\F_{q^l}\!\backslash 0$. If $\bv'\in\F_q^{kl}$ is a representation of $\bv$ over $\F_q$, then 
the point $\left\langle \bv'\right\rangle_{\F_q}$ admits only $q-1$ representations $\left\langle\bv'\right\rangle_{\F_{q}}
=\left\langle\alpha\cdot\bv'\right\rangle_{F_{q}}$, where $\alpha\in\F_{q}\!\backslash 0$. So, if we want that all non-zero vectors are covered by the points, we have 
to replace a single point in $\PG(v-1,q^l)$ by $\tfrac{q^l-1}{q-1}=[l]_q$ points in $\PG(vl-1,q)$. In terms of linear codes this can be described by concatenation (with 
an $l$-dimensional simplex code). In the other direction, starting from a point $\left\langle\bv'\right\rangle_{\F_q}$ 
in $\PG(v-1,q)$ we may also replace the point by the point $\left\langle \bv'\right\rangle_{\F_{q^l}}$ in $\PG(v-1,q^l)$ using $\F_q\subset \F_{q^l}$. Note that $\alpha \bv'$, 
where $\alpha\in\F_q\backslash 0$, leads to the same point in $\PG(v-1,q^l)$. The analog for linear codes is the interpretation of a given generator matrix of an $[n,k]_q$-code 
over $\F_{q^l}$. It remains to study how divisibility properties are transferred by these two constructions.      

\begin{trailer}{Concatenated codes}\emph{Concatenation} was introduced by George David Forney Jr.\ in his 
PhD thesis \cite{forney1965concatenated}. Here we are given an \emph{outer} $[N,K,D]_{q^l}$-code $C_{out}$ and an \emph{inner} $[n,l,d]_q$-code $C_{in}$, where 
we note that the dimension of the inner code $C_{in}$ equals the degree $[\F_q^l:\F_q]$ of the field extension. Each vector in $\F_q^{lK}$ can be associated with 
a vector in $\F_{q^l}^K$ and then mapped via the outer code $C_{out}$ to $\F_{q^l}^N$. Then each field element in $\F_{q^l}$ can be associated with an element in $\F_q^l$ 
and then mapped via the inner code $C_{in}$ to $\F_q^n$. Putting everything together, the concatenation of $C_{out}$ and $C_{in}$ gives an $[nN,lK\ge dD]_q$-code $C$, where 
the minimum distance of $C$ may also be strictly larger than $dD$, see e.g.\ \cite[Theorem 5.9]{bierbrauer2016introduction}. For more details, including an example of the 
computation of a generator matrix of the concatenated code $C$, we refer to \cite[Section 5.2]{bierbrauer2016introduction}. We remark that decomposing a given linear code 
over $\F_q$ as a concatenated code, if possible, is an interesting algorithmical problem, see e.g.\ \cite{sendrier1998concatenated}. While the determination of the weight 
distribution of a concatenated code often requires some extra work, see e.g.\ \cite{weng1977concatenated}, the situation becomes much easier when $C_{in}$ is an  
$l$-dimensional simplex code.
\end{trailer}
\begin{nexercise}
  \label{exercise_concatenation_with_simplex_code}
  Let $C$ be a projective $\Delta$-divisible $[n,k]_{q^l}$-code. Show that the concatenation of $C$ with an $l$-dimensional simplex code over $\F_q$ yields a 
  projective $\Delta q^{l-1}$-divisible $\left[n\cdot [l]_q,kl\right]_q$-code.
\end{nexercise} 
\begin{nexample}
  \label{ex_8_div_51}
  Let $C$ be the projective $4$-divisible $[17,4]_4$-code corresponding to an ovoid in $\PG(3,4)$, see Example~\ref{example_ovoid}. Concatenation with the projective $2$-divisible 
  $[3,2]_2$-simplex-code yields a projective $8$-divisible $[51,8]_2$-code. Note that $C$ as well as the concatenated code are two-weight codes. By construction, the corresponding 
  set of points can be partitioned into $17$ lines.
\end{nexample}
We remark that, up to isomorphisms, there is a unique $8$-divisible set of points of cardinality $51$ over $\F_2$ \cite[Lemma 24]{honold2018partial}. By puncturing the $[51,8]_2$-code 
from Example~\ref{ex_8_div_51} we obtain $8$-divisible $[50,7]_2$-codes, which however are not projective. Nevertheless $8$-divisible sets of $50$ points over $\F_2$ indeed exist. 
To this end we have enumerated all projective $8$-divisible binary codes with length at most $51$ using the software package \texttt{LinCode} \cite{bouyukliev2020computer,kurz2019lincode}.  
Observe that a projective $8$-divisible binary code with an effective length $49\le \neff\le 51$ does not contain codewords of weights 
$40$ or $48$ since the corresponding residual code would be a projective $4$-divisible binary code with an effective length in $\{1,2,3,9,10,11\}$, which does not exist as we will see 
in Lemma~\ref{lemma_picture_q_2_r_2}. We have tabulated the corresponding counts of projective $8$-divisible binary codes in Table~\ref{table_8div_enumeration}. 
\begin{table}[htp]
  \begin{center}
    \begin{tabular}{ccccccccc}
      \hline
       n / k & 8 & 9 & 10 & 11 & 12 & 13 & $\sum$ \\
       \hline
       49 & 9 & 38 & 44 & 21 & 7 & 1 & 120 \\ 
       50 & 1 &  0 &  0 &  0 & 0 & 0 &   1 \\
       51 & 1 &  0 &  0 &  0 & 0 & 0 &   1 \\ 
       \hline 
    \end{tabular}
    \caption{Number of projective $8$-divisible binary codes with $49\le \neff\le 51$ per dimension.}
    \label{table_8div_enumeration}
  \end{center}  
\end{table}
%% ?8 50 2 
%% 11111111111111111111111111111110000000000010000000
%% 00000000000000011111111111111111111111000001000000
%% 00000001111111100000000111111110001111111000100000
%% 00011110000111100001111000011110110011011100010000
%% 01100110011001100110011001100111010101101100001000
%% 00111010101010101000111010101011100011011100000100
%% 00101101010010010010001010010010100100001000000010
%% 11111101000000101011010111111110000000110100000001
%% $1z^{0}+5z^{16}+210z^{24}+40z^{32} GL-K:0 GL-LK:0$\\
%% 
%% AUT: 3840 dd -3
%% (1,2,4,24,26,29,35,37,38,47,)(3,5,6,7,8,9,10,11,12,13,14,15,16,17,18,19,20,21,22,23,25,27,28,30,31,32,33,
%% 34,36,39,40,41,42,43,44,45,46,48,49,50,)
%%   Bed L 0 MAX L 4
%% 
%% DIFFERENT STRUCTURES :1   ALL CODES: 1
%% 
%%     1 1 $1z^{0}+5z^{16}+210z^{24}+40z^{32} GL-K:0 GL-LK:0$\\

\medskip

\begin{backgroundinformation}{There is a unique projective $8$-divisible binary linear code of length $50$.}As shown by the above exhaustive enumeration, each  
$8$-divisible binary code of length $50$ has dimension $8$ and is indeed unique up to isomorphism. A generator matrix is given by
$$\begin{pmatrix}
  11111111111111111111111111111110000000000010000000\\
  00000000000000011111111111111111111111000001000000\\
  00000001111111100000000111111110001111111000100000\\
  00011110000111100001111000011110110011011100010000\\
  01100110011001100110011001100111010101101100001000\\
  00111010101010101000111010101011100011011100000100\\
  00101101010010010010001010010010100100001000000010\\
  11111101000000101011010111111110000000110100000001
\end{pmatrix}.$$
The automorphism group of the code has order $3840$ and the weight enumerator is given by 
$W_C(x)=1+5x^{16}+210x^{24}+40x^{32}$.  
\end{backgroundinformation}
We remark that the $8$-divisible binary codes with length up to $48$ have been enumerated in \cite{betsumiya2012triply} and the counts of the corresponding subset of projective 
codes where stated in \cite{projective_divisible_binary_codes}. 
\begin{question}{Research problem}
\vspace*{-5mm}
\begin{itemize}
  \item Find a parametric family of projective $q^r$-divisible linear codes containing the projective $8$-divisible $[50,8]_2$-code.
  \item Give a computer-free proof of the uniqueness of a projective $8$-divisible binary linear code of length $50$.
\end{itemize}  
\end{question}

\begin{trailer}{Generator matrices interpreted over extension fields}Let $G$ be a generator matrix of an $[n,k]_q$-code $C$. Since $\F_q\subset\F_{q^l}$ 
for each integer $l\ge 2$ we can also interprete $G$ as a generator matrix of an $[n,k]_{q^l}$-code $C'$. Let $\cM$ and $\cM'$ be the corresponding multisets of 
points in $\PG(k-1,q)$ and $\PG(k-1,q^l)$, respectively, and assume that $C$ is $q^r$-divisible. Each hyperplane $H$ in $\PG(k-1,q^l)$ corresponds to a subspace $S$ 
in $\PG(k-1,q)$ with dimension $k-l\dim(S)\le k-1$, so that we can use Lemma~\ref{lemma_heritable} to conclude that $C'$ is $q^{r-l+1}$-divisible.

\begin{nexercise}Let $G$ be a generator matrix of a $q^r$-divisible $[n,k]_q$-code $C$ and $C'$ be the row span over $\F_{q^l}$. Show that $C'$ is a $q^{r-l+1}$-divisible 
$[n,k]_{q^l}$-code that has the same maximum point multiplicity as $C$.
\end{nexercise}
\end{trailer}

\begin{nexample}
  \label{example_baer_plane}
  Let
  $$
    G=\begin{pmatrix}
    1111000\\
    0011110\\
    0101011
    \end{pmatrix}\in \F_2^{3\times 7}
    \quad\text{and}\quad 
    G'=\begin{pmatrix}
    1111000\\
    0011110\\
    0101011
    \end{pmatrix}\in \F_4^{3\times 7}.
  $$
  The code $C$ spanned by $G$ is a $3$-dimensional simplex code over $\F_2$, i.e., a $[7,3,\{4\}]_2$-code, the code $C'$ spanned by $G'$ is a $[7,3]_4$-code, and 
  let $\cM'$ be the corresponding multiset of points in $\PG(2,4)$. Let us represent the hyperplanes in $\PG(2,4)$ by perpendicular points and the elements of $\F_4$ 
  by linear polynomials over $\F_2$. If $H$ is a hyperplane represented by a point $\left\langle a_1,a_2,a_3\right\rangle$ with $a_1,a_2,a_3\in \F_2$ and 
  $\left(a_1,a_2,a_3\right)\neq \zv$, then $\cM'(H)=3$. In all other cases we have $\cM'(H)=1$, so that $\cM'$ and $C'$ are $2$-divisible.    
\end{nexample}
As another example we consider the projective $32$-divisible binary linear code of length $321$ obtained in Exercise~\ref{ex_affine_spaces_switching}. Over $\F_4$ 
we obtain a projective $16$-divisible linear code of length $321$.\footnote{This example does not occur in the proof of Lemma~\ref{lemma_picture_q_4_r_2} since 
$321=(85+4\cdot 43)+64$ allows a different construction using the codes of Example~\ref{example_affine_space} and Corollary~\ref{cor_spread_switching_q}.}

\begin{trailer}{Baer subspaces}If $G$ is the generator matrix of an $l$-dimensional simplex code over $\F_q$, cf.\ Example~\ref{example_baer_plane}, and $\cM'$ 
be the multiset of points in $\PG(v-1,q^2)$, where $v\ge l$, corresponding to the linear 
code over $\F_{q^2}$ spanned by $G$, then we call $\cM'$ an $l$-dimensional \emph{Baer subspace}. If $l=3$, then we speak of a \emph{Baer plane}. Note that 
$l$-dimensional Baer subspaces are $q^{l-2}$-divisible, where $l\ge 2$. If $S$ is an $(l+1)$-dimensional Baer subspace and $T$ an $l$-dimensional Baer subspace that is contained 
in $S$, then $\chi_S-\chi_T$ is called \emph{$(l+1)$-dimensional affine Baer subspace} and is $q^{l-2}$-divisible for $l\ge 2$.  
\end{trailer}
We remark that Baer subspaces yield two-weight codes, cf.\ \cite[Example RT1]{calderbank1986geometry}. Affine Baer subspaces only give {\lq\lq}few{\rq\rq} weight 
codes.

As mentioned e.g.\ in \cite{hirschfeld1998projective}, a partition of $\PG(v-1,q^l)$ into subgeometries $\PG(v-1,q)$ exists 
iff $\gcd(v,l)=1$. In particular, $\PG(v-1,q^2)$ can be partitioned into subgeometries $\PG(v-1,q)$, i.e.\ Baer subspaces, 
precisely when $v$ is odd and is called \emph{Baer subgeometry partition} (BSP) then. Using a Singer cycle, BSPs for 
$\PG(2,q^2)$ where constructed by Bruck \cite{bruck1960quadratic}. While this this technique can be generalized to other 
parameters, also other constructions are known, see e.g.\ \cite{baker2000baer}. Since the set of all points $\PG(2,q^2)$ 
is $q^4$-divisible and can be partitioned into $q^2-q+1$ Baer planes, we can generalize the switching construction from 
Corollary~\ref{cor_spread_switching_q} by switching Baer planes to affine Baer solids, i.e., we apply Exercise~\ref{exercise_generalized_switching_construction} 
choosing the $\cM_i$ as Baer solids and $\cD$ as a common Baer plane.
\begin{nexercise}
  \label{ex_baer_switching}
  Construct projective $q^2$-divisible codes of length $$n=\left(q^4+q^2+1\right)+j\cdot\left(q^4-[4]_q\right)$$ over 
  $\F_{q^2}$ for all $0\le j\le q^2-q+1$. 
\end{nexercise}        

\section{Computer searches}
\label{subsec_computer_searches}
We have already reported that there is a unique projective $8$-divisible binary linear code of length $50$, see Table~\ref{table_8div_enumeration}. This example 
was found using exhaustive generation of linear codes (with restrictions on the set of allowed weights). Suitable software packages are e.g.\ 
\texttt{QextNewEdition}, or its predecessor \texttt{Q-Extension} \cite{bouyukliev2007q},  and \texttt{LinCode}, see \cite{bouyukliev2020computer}. Further classifications 
for linear codes have e.g.\ been presented in \cite{betten2006error} and \cite{ostergaard2002classifying}, see also \cite[Section 7.3]{kaski2006classification}.

The search problem for projective $q^r$-divisible codes can easily be formulated as an integer linear programming (ILP) problem using binary 
characteristic variables $x_P$ for all points $P$ of $\PG(v-1,q)$, i.e., $x_P$ encodes the multiplicity of $P$. Prescribing the desired cardinality 
$n=\sum_{P\in\cP} x_P$ and the dimension $k$, it remains to convert the restrictions induced by $q^r$-divisibility, see Equation~(\ref{eq_divisible_multiset}), into linear constraints:
\begin{equation}
  \sum_{P\le H} x_P = n- z_H\cdot q^r
\end{equation}   
for each hyperplane $H\in\cH$, where $z_H\in \N_0$ and $z_H\le \left\lfloor n/q^r\right\rfloor$. Although, Lemma~\ref{lemma_heritable} allows to include \textit{modulo-constraints} 
on the number of holes for subspaces other than hyperplanes, ILP solvers seem not to benefit from these extra constraints. If the desired divisible codes do not need to be projective, 
we can use integer variables $x_P\in\N_0$ (with an eventual upper bound $x_P\le \lambda$ in case of maximum point multiplicity $\lambda$). Of course, we may prescribe 
$x_{\left\langle\be_i\right\rangle}=1$ (or $x_{\left\langle\be_i\right\rangle}\ge 1$) for all $1\le i\le k$.  

Since larger instances can not be successfully treated directly by customary ILP solvers, we have additionally prescribed some 
symmetry to find examples. This general approach is called the Kramer--Mesner method \cite{KramerMesner:76}. Giving a group $G$ acting on the set of points $\cP$ 
and the set of hyperplanes $\cH$ we additionally assume $x_{P^g}=x_P$ and $z_{H^g}=z_H$ for each $P\in\cP$, each $H\in\cH$, and each $g\in G$, where $P^g\in\cP$ and $H^g\in\cH$ 
denote the image of the group operation of $g$ applied to $P\in\cP$ and $H\in\cH$, respectively. This rather general method was e.g.\ applied to general linear codes 
\cite{braun2005optimal} and two-weight codes \cite{kohnert2007constructing_twoweight}. For an exemplary application to the construction of constant-dimension codes 
we refer e.g.\ to \cite{paper_axel}. Prescribing cyclic groups in our application, we found the following generator matrices:
\begin{itemize}
  \item $q=2$, $r=3$, $n=50$, $k=8$, $\left(\begin{smallmatrix}
00000000000000000000000000111111111111111111111111\\
00000000000000111111111111000000000000111111111111\\
00000000111111000000111111000000111111000000111111\\
00000011000011000011000011001111001111001111001111\\
00001100000101001111011101000011010001110011010111\\
00110000001111000101101110000101110110010100100111\\
01010101011101011011100111011101011011110101111001\\
10010101011011011101010111011011110011011101111010\\
\end{smallmatrix}\right)$
 \item $q=2$, $r=3$, $n=74$, $k=12$, $W_C(x)=1+3x^8+ 60x^{24}+ 1423x^{32}+2585x^{40}+ 24x^{48}$, $$\left(\begin{smallmatrix}
00001101111111111001000010001001110111111100101001100011011000011011000111\\
11100001101111111000110110001110110001001111100101100011100100110011110000\\
11001100101011011010100010001010100111000000000010000110000011011111110001\\
00001110110001000010010100100111100011101010100110010000000011110001011101\\
01000010111000110011111010110101010111110000101100000001011011111111110001\\
00100111100011111010011100011110010000000011111101001110111110001000110111\\
11001011110011111011010111100110101001001001001111001110000010101001101010\\
01111000111001010010110000011011111111000101011110010000011101101101011100\\
11110110101010111011010111011011110000000110001110010111000101001010011010\\
10101001110000001100111011011110011111000101010010110110011001011101011100\\
00100001111110101101001110111101000111010011011110101000001001011000110111\\
11111110000110110001010001011111000100110110110100011110001110011010011010\\
\end{smallmatrix}\right)$$ 
 \item $q=2$, $r=4$, $n=161$, $k=10$, $W_C(x)=1+50x^{64}+886x^{80}+ 87x^{96}$, 
 \begin{eqnarray*}
 && \left(\begin{smallmatrix}
0000000000000000000000000000000000000000000000000000000000000000000000000000000\\ 
0000000000000000000000000000000000000000011111111111111111111111111111111111111\\ 
0000000000000000000000000111111111111111100000000000000000000000011111111111111\\ 
0000000000000001111111111000000000011111100000000001111111111111100000011111111\\ 
0000000001111110000001111000011111100111100000011110000000011111100111100001111\\ 
0000011110001110111110011000100001111000100001100010001111100001101001111110001\\ 
0001101110110110000111101001100110101111100010000110110001100110100011100010010\\ 
0010100110001010011010111010111010101001000101101001000110101011110011100110100\\ 
0100001010010110101010100001000101101001001110101010010010000110000000001011110\\ 
1000100110110000010110100011101000000011001101011101001010110111011010100101110\\ 
\end{smallmatrix}\right.\\ 
&& \left.\begin{smallmatrix} 
0011111111111111111111111111111111111111111111111111111111111111111111111111111111\\
1100000000000000000000000000000000000000001111111111111111111111111111111111111111\\
1100000000000000001111111111111111111111110000000000000000111111111111111111111111\\
1100000000001111110000000000000011111111110000001111111111000000000011111111111111\\
1100000011110000110000001111111100001111110000110000001111000011111100000011111111\\
1100001100010001110001110000111100110111110011010001111111000100001100001100011111\\
0100110011100010000010000111000101011000110101010010111111011100111100010100101111\\
0101111101110110110000010001011000110001011010000001000001001101011101111100110111\\
0101110010111011010110101011001110111010011111110000110011110000100100111101111011\\
1100010100110100011100100010011000000111011011010111010101111100010111000100111101\\
\end{smallmatrix}\right)
\end{eqnarray*}
  \item $q=2$, $r=4$, $n=162$, $k=10$, $W_C(x)=1+x^{32}+30x^{64}+890x^{80}+102x^{96}$, 
  \begin{eqnarray*} &&\left(\begin{smallmatrix}
000000000000000000000000000000000000000000000000000000000000000000000000000000000\\ 
000000000000000000000000000000000000000000111111111111111111111111111111111111111\\
000000000000000000000011111111111111111111000000000000000000001111111111111111111\\ 
000000000011111111111100000000111111111111000000001111111111110000000011111111111\\ 
000000111100000011111100001111000000111111000000110000000011110000001100000000111\\ 
000011001100011100011100000000000001000001000111010000111100110001110100001111001\\ 
000100010100000000100000110001000111000110001001100011000101010111111100110011000\\ 
001001000001100101100101010110011111011010011011000101011110111010111111011111011\\ 
010000011000101011101101111010101011101111001000000010001011011100010001010101111\\ 
100001001101111010110101000100110010001101010100111000011111110000101001001001101\\ 
\end{smallmatrix}\right.\\ 
&& \left.\begin{smallmatrix}
011111111111111111111111111111111111111111111111111111111111111111111111111111111\\
100000000000000000000000000000000000000001111111111111111111111111111111111111111\\
100000000000000000000111111111111111111110000000000000000000011111111111111111111\\
100000000000011111111000000000000111111110000000000001111111100000000000011111111\\
100001111111100111111000011111111001111110000001111110000111100000011111100001111\\
100110000111101000111001100001111010001110111110111111111111100011100011100110011\\
101110011001100000001010101110011100110111001110001110111001111101111111101011101\\
100010000010000001010001000010101110010011010010000011001010101100101100111110110\\
001001100010111011100100001111011111110110000100010100011010000100010101110101101\\
100100101110110101111000000101101001011100100111011001101110101001000101000111010\\
\end{smallmatrix}\right)
\end{eqnarray*}
  \item $q=2$, $r=4$, $n=195$, $k=10$, $W_C(x)=1+33x^{80}+855x^{96}+135x^{112}$,
  {\footnotesize\begin{eqnarray*}&&\left(\begin{smallmatrix}
00000000000000000000000000000000000000000000000000000000000000000000000000000000000000000000000000\\ 
00000000000000000000000000000000000000000000000000011111111111111111111111111111111111111111111111\\ 
00000000000000000000000000011111111111111111111111100000000000000000000000011111111111111111111111\\ 
00000000000000011111111111100000000000011111111111100000000000011111111111100000000000011111111111\\ 
00000000011111100000011111100000011111100000011111100000011111100000011111100000011111100000011111\\ 
00000001100011100011100001100001100011100011100001100001100011100011100001100001100011100011100001\\ 
00000110101100000101100110000010101100100001100110101110100101101101100011101110001101101100100111\\ 
00011010000100100000100010101101110100101110111011100111001110100110101100000010100110111100101010\\ 
01101000000100001110101101100111001101000101011010101001101011010110110010000010010001110111000111\\ 
10100000001101101101000011101010101110000100101101100010111010110101000110100101110100110001011011\\ 
\end{smallmatrix}\right.\\ 
&& \left.\begin{smallmatrix}
0111111111111111111111111111111111111111111111111111111111111111111111111111111111111111111111111\\
1000000000000000000000000000000000000000000000000111111111111111111111111111111111111111111111111\\
1000000000000000000000000111111111111111111111111000000000000000000000000111111111111111111111111\\
1000000000000111111111111000000000000111111111111000000000000111111111111000000000000111111111111\\
1000000111111000000111111000000111111000000111111000000111111000000111111000000111111000000111111\\
1001111000111000111001111001111000111000111001111001111000111000111001111001111000111000111001111\\
1000011001011001001010011010011001001011001010001010011011011011001110011010111001011011011010011\\
1001100011100011110010101110001011011100011000011000011000001101000010101111011110101001101110101\\
1000101000101101010111011100111010100001011100110010100111001100011010011010001010100111011101100\\
0010100110011010110101001010110110010000101110011010001001111011010011010101001001101001010111010\\
\end{smallmatrix}\right)
 \end{eqnarray*}}
 \item $q=2$, $r=4$, $n=197$, $k=10$, $W_C(x)=1+10x^{80}+837x^{96}+176x^{112}$,
  {\footnotesize\begin{eqnarray*}&&\left(\begin{smallmatrix}
00000000000000000000000000000000000000000000000000000000000000000000000000000000000000000000000000\\ 
00000000000000000000000000000000000000000000000000000111111111111111111111111111111111111111111111\\ 
00000000000000000000000000000111111111111111111111111000000000000000000000000111111111111111111111\\ 
00000000000000000111111111111000000000000111111111111000000000000111111111111000000000000111111111\\ 
00000000001111111000000011111000000011111000001111111000000011111000001111111000001111111000000011\\ 
00000001110000111000001100011000011100111000110000011000001100011001110000111000110000011000011100\\ 
00000110110001001000110100100000100101001001000011111001110100101010110111011011110011101001100101\\ 
00011010010011000000011000011011000111111011010100101010010101000101010011001001010101110010101111\\ 
01101000000010001011011101100001001001011000110101111010101010010110101001011000110110100101100110\\ 
10100000010111010001101000101000110010011001101100111010100100110101101010110001011010111011000011\\ 
\end{smallmatrix}\right.\\
&& \left.\begin{smallmatrix}
000111111111111111111111111111111111111111111111111111111111111111111111111111111111111111111111111\\
111000000000000000000000000000000000000000000000000111111111111111111111111111111111111111111111111\\
111000000000000000000000000111111111111111111111111000000000000000000000000111111111111111111111111\\
111000000000000111111111111000000000000111111111111000000000000111111111111000000000000111111111111\\
111000000011111000001111111000001111111000000011111000001111111000000011111000000011111000001111111\\
111000111100011001110011111000110001111001111100111001110011111000111100011001111100111000110001111\\
011001000100101010110100011001010110011010001100001010011101111001001101101010011101001111110010011\\
101001001001010100011000111011100010101110110101011000101110011001010110000101111100011011001110001\\
101001011000101001111001101101001100100110011010110000010100111010010111110011100100100111010101100\\
010010010010110000110101111110011011001010110001100010101101101100100010101000101100110101101100110\\
\end{smallmatrix}\right)
\end{eqnarray*}}
  \item $q=5$, $r=1$, $n=41$, $k=5$, $W_C(x)=1+4x^{25}+1360x^{30}+1760x^{35}$, 
$$\left(\begin{smallmatrix}
00000000000111111111111111111111111111111\\
00011111111000000001111111122233344444444\\
00100112344012223330011122403303300111224\\
01003013314040131332201203012211324033021\\
10003222331041124030121012410144041303440\\
\end{smallmatrix}\right)
$$
%%  \item $q=5$, $r=1$, $n=46$, $k=5$, $W_C(x)=1+4x^{25}+60x^{30}+1860x^{35}+1200x^{40}$, 
%%$$\left(\begin{smallmatrix}
%%0000000000000000111111111111111111111111111111\\
%%0000001111111111000000000011111222223333344444\\
%%0011110000233344000012224401233000441233300114\\
%%0100330024413301012430340101023024030201413042\\
%%1004130430300134031034030421224213300100140402\\
%%\end{smallmatrix}\right)
%%$$
  \item $q=7$, $r=1$, $n=141$, $k=4$, $W_C(x)=1+30x^{112}+1692x^{119}+672x^{126}+6x^{133}$,  
\begin{eqnarray*}&&\left(\begin{smallmatrix}
00000000000000000000001111111111111111111111111111111111111111111111111\\ 
00001111111111111111110000000000000000001111111111111111112222222222222\\ 
01110000011222234446660001122333444456660011122333444456660002223334455\\ 
10260135624025641250260460426056123400154645603245012611560120261455602\\ 
\end{smallmatrix}\right.\\ 
&& \left.\begin{smallmatrix}
1111111111111111111111111111111111111111111111111111111111111111111111\\
2222233333333333333333344444444444444444455555555555666666666666666666\\
6666600122333444555566600111222344555566601112234566000111122335556666\\
0345624314234013124524512016012525035613561562424324045145606261350136\\
\end{smallmatrix}\right)
\end{eqnarray*}
\end{itemize}

In our ILP model the integer variable $z_H$ may be replaced by several binary variables $y_{H,n'}$, which are equal to $1$ iff hyperplane $H$ contains exactly $n'$ selected 
points, i.e., has multiplicity $n'$. This way, it is possible to exclude some specific multiplicities for hyperplanes or to 
count (and incorporate given bounds on) the number of hyperplanes with a given multiplicity. Restrictions for $n$ and the $n'$ 
are given by our exclusion results for $q^r$-divisible and $q^{r-1}$-divisible sets, respectively, see 
Section~\ref{sec_lengths_of_divisible_codes} or Section~\ref{sec_lengths_projective_q_r}. Prescribing a specific solution of the MacWilliams identities directly  
translates to equations for the number of hyperplanes with a given multiplicity. 

A Diophantine linear equation system, in the same vein as our ILP model, together with the prescription of a subgroup of the automorphism group of the code was used in 
\cite{kohnert2007constructing_twoweight} in order to construct two-weight codes with previously unknown parameters. In \cite[Lemma 7]{bouyukliev2020computer} the ILP 
approach was adjusted to the situation where a given $\Delta$-divisible $[n,k]_q$-code should be extended to a $\Delta$-divisible $[n',k+1]_q$-code.

\begin{question}{Research problem}
Use the ILP approach and some carefully selected candidates for subgroups of the automorphism group of a potential projective $q^r$-divisible code over $\F_q$ whose 
length is currently unknown to exist, see Section~\ref{sec_lengths_projective_q_r}.
\end{question}

\section{Two-weight codes}
\label{subsec_two_weight_codes}
A linear $[n,k]_q$ code $C$ is called a \emph{two-weight code} if the non-zero codewords of $C$ attain just two possible weights, i.e., if it is an 
$[n,k,\{w_1,w_2\}]_q$ code for $w_1\neq w_2\in\N$. An online-table for known two-weight codes is at \url{http://www.tec.hkr.se/~chen/research/2-weight-codes} 
and an exhaustive survey was given by Calderbank and Kantor \cite{calderbank1986geometry}. As observed by Delsarte, a projective two-weight code typically 
has a large divisibility.

\medskip

\begin{trailer}{Projective two-weight codes are divisible}
\vspace*{-6mm}
\begin{nlemma}(\cite[Corollary 2]{delsarte1972weights})\\
\label{lemma_delsarte}
Let $C$ be a projective two-weight code over $\mathbb{F}_q$, where $q=p^e$ for some prime $p$. Then there exist suitable integers $u$ and $t$ with
  $u\ge 1$, $t\ge 0$ such that the weights are given by $w_1=up^t$ and $w_2=(u+1)p^t$.
\end{nlemma}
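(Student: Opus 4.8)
The plan is to combine the MacWilliams machinery (to extract the two weights from $n,k,q$) with a $p$-adic valuation argument on character sums (to force the power-of-$p$ structure). Write $w_1<w_2$ for the two nonzero weights and $x=A_{w_1}$, $y=A_{w_2}$. Since $C$ is projective it is in particular of full length, so $B_0=1$ and $B_1=B_2=0$. First I would feed $i=0,1,2$ into Equation~(\ref{eq_macwilliams}); because $B_1=B_2=0$ kill all dual contributions, the first three power moments depend only on $n,k,q$, and a short computation gives $M_0=\sum_i A_i=q^k-1$, $M_1=\sum_i iA_i=n(q-1)q^{k-1}$, and $M_2=\sum_i i^2A_i=n(q-1)q^{k-2}\bigl(n(q-1)+1\bigr)$. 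Since every nonzero codeword has weight $w_1$ or $w_2$, summing the identity $(w-w_1)(w-w_2)=0$ over the nonzero codewords yields the single relation $M_2-(w_1+w_2)M_1+w_1w_2M_0=0$, tying the elementary symmetric functions of $(w_1,w_2)$ to $n,k,q$. Together with $x+y=q^k-1$ and $w_1x+w_2y=M_1$ this solves for $x,y$ and shows that $\delta:=w_2-w_1$ divides $w_2(q^k-1)-n(q-1)q^{k-1}$. This part is routine bookkeeping.

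The genuine content — that $\delta$ is a power of $p$ and that $p^t:=\delta$ divides $w_1$ (equivalently that $C$ is $\delta$-divisible with consecutive ``levels'') — does not follow from the MacWilliams identities alone, since those admit spurious non-negative integral solutions with non-$p$-power weight gaps. To get at it I would pass to the geometric picture of Theorem~\ref{thm_correspondence_codes_multisets}: projectivity means $C$ corresponds to a spanning \emph{set} $\cD$ of $n$ points in $\PG(k-1,q)$, and $\wt(c)=n-\cM(H)$ for the hyperplane $H$ dual to $c$, so the hyperplane intersection numbers $\cM(H)$ take exactly the two values $m_1=n-w_2$ and $m_2=n-w_1$ with $m_2-m_1=\delta$. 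Fixing representatives $d_1,\dots,d_n\in\F_q^k$ of the points of $\cD$ and a nontrivial additive character $\psi$ of $\F_q$, one writes, for $a\neq\zv$,
\[
  \wt(c_a)=\tfrac{(q-1)n}{q}-\tfrac1q\!\!\sum_{a'\in\langle a\rangle\setminus\{0\}}\!\!S(a'),\qquad S(a')=\sum_{j=1}^n\psi\bigl(\langle a',d_j\rangle\bigr)\in\Z[\zeta_p].
\]
The two-weight hypothesis then says the rational integer $\sum_{a'}S(a')=(q-1)n-q\,\wt(c_a)$ takes only two values, differing by $q\delta$.

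The main obstacle is the arithmetic step of showing $\delta$ is a power of $p$. Here I would analyze the valuation of the $S(a')$ at the prime $(1-\zeta_p)$ of $\Z[\zeta_p]$ lying above $p$ (Stickelberger-/Gauss-sum-type estimates): each $S(a')$ is a sum of $n$ $p$-th roots of unity, hence $S(a')\equiv n\pmod{(1-\zeta_p)}$, and the rigidity of the two-weight condition forces every difference $S(a')-S(a'')$ to be divisible by a fixed power of $(1-\zeta_p)$. Translating that divisibility back to the rational integers $\cM(H)-\cM(H')$ yields $\delta=p^t$ together with $m_1\equiv m_2\equiv n\pmod{p^t}$, i.e.\ $p^t\mid w_1$ and $p^t\mid w_2$, which is exactly the claimed form $w_1=up^t$, $w_2=(u+1)p^t$ with $u\geq 1$. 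I expect that controlling these $(1-\zeta_p)$-adic valuations uniformly — rather than the symmetric-function computation of the first paragraph — is where the real difficulty sits; one could instead route through the strongly regular graph defined by $\cD$ and use integrality of its restricted eigenvalues to recover the quadratic constraints, but the power-of-$p$ divisibility would still rest on the same Gauss-sum valuation input.
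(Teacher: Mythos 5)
The paper itself does not prove this lemma: it is quoted from Delsarte \cite[Corollary 2]{delsarte1972weights} without proof, so your attempt has to be judged on its own merits. Your preparatory work is correct (the moments $M_0,M_1,M_2$, and the character identity for $\wt(c_a)$), but the entire content of the lemma sits in the step you defer to ``Stickelberger-/Gauss-sum-type estimates'', and that tool cannot close it. Valuations at the prime $(1-\zeta_p)$ are intrinsically one-sided: at best they show that some fixed power $p^c$ divides all differences $\cM(H)-\cM(H')$, i.e.\ $p^c\mid w_2-w_1$. They give no upper bound on $w_2-w_1$ and no information at primes $\ell\neq p$, so they can neither show that $w_2-w_1$ is \emph{exactly} a power of $p$ nor the consecutiveness $(w_2-w_1)\mid w_1$. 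Concretely, every constraint your sketch can produce is also satisfied by a hypothetical weight pair $w_1=up^t$, $w_2=(u+7)p^t$ with $\gcd(u,7)=1$, which the lemma forbids. (Factors of $\gcd(w_1,w_2)$ coprime to $p$ can be killed by Theorem~\ref{thm_delta_divides_q_power} plus projectivity, since an $s$-divisible code with $\gcd(s,q)=1$ is an $s$-fold repetition and hence has $d^\perp\le 2$; but even adding this to your congruences still does not yield consecutiveness.)

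The missing idea --- and essentially Delsarte's actual argument --- is duality (Fourier inversion), not valuation theory. Keep your notation: let $D=\{\lambda d_j\,:\,\lambda\in\F_q^*,\,1\le j\le n\}$, so that for $a\neq\zv$ the sum $R_a=\sum_{x\in D}\psi(\langle a,x\rangle)=(q-1)n-q\wt(c_a)$ takes the two values $r=(q-1)n-qw_1$ and $s=(q-1)n-qw_2$; put $k_0=(q-1)n$ and $E=\{a\neq\zv\,:\,R_a=r\}$. Fourier inversion over $\F_q^k$ gives, for every $x\neq\zv$,
\[
  \sum_{a\in E}\psi(\langle a,x\rangle)\;=\;\frac{q^k\cdot\mathbf{1}_{x\in D}-(k_0-s)}{r-s},
\]
and the left-hand side is a rational integer because $R_{\lambda a}=R_a$ makes $E$ a union of punctured $1$-spaces, whose line sums are $q-1$ or $-1$. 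Taking $x\notin D\cup\{\zv\}$ (possible, since otherwise $C$ is a simplex code with a single weight) yields $(r-s)\mid k_0-s=qw_2$; taking $x\in D$ then yields $(r-s)\mid q^k$. The second divisibility forces $r-s=q(w_2-w_1)$ to be a power of $p$, hence $w_2-w_1=p^t$; the first gives $(w_2-w_1)\mid w_2$, hence $w_2=(u+1)p^t$ and $w_1=up^t$ with $u\ge 1$. Note that this also shows your closing remark --- that the strongly-regular-graph route ``would still rest on the same Gauss-sum valuation input'' --- is exactly backwards: the duality route uses no Gauss sums at all, and the $p$-group structure enters only through the trivial fact that a positive divisor of $q^k$ is a power of $p$.
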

\end{trailer}
We remark that first order Reed--Muller codes or affine spaces, see Example~\ref{example_affine_space}, are examples of $\left[q^{r+1},r+2,\left\{ q^{r+1}-q^r,q^{r+1}\right\}\right]_q$ 
two-weight codes for all prime powers $q$ and all $r\in\N_0$. (Repeated) simplex codes are the unique possibility for one-weight codes and Baer subspaces, see 
Subsection~\ref{subsec_subfield_constructions}, yield two-weight codes. Solving Diophantine linear equation systems, similar to those discussed Section~\ref{subsec_computer_searches}, 
leads to many examples of two-weight codes in \cite{kohnert2007constructing_twoweight}. Using Bose-Chaudhuri-Hocquenghem (BCH) codes a 
parametric family of two-weight codes was constructed in \cite{bierbrauer1997family}:

\begin{ntheorem}{(Cf.~\cite[Theorem 4]{bierbrauer1997family})}
  \label{thm_BCH_div}
\noindent For every prime-power $q$ and every pair of natural numbers $m\le n'$ there exists a projective $q^{n'+m-1}$-divisible 
  $\left[q^m\cdot [n']_q\cdot\left(q^{n'}-q^{n'-m}+1\right),3n'\right]_q$-code.
\end{ntheorem}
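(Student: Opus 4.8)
The plan is to realise the code by \emph{concatenation} from a three-dimensional code over the extension field, so that essentially all the content is pushed into one base code over $\F_{q^{n'}}$. Write $Q:=q^{n'}$ and note that the target dimension factors as $3n'=3\cdot[\F_Q:\F_q]$. First I would seek a projective, $q^m$-divisible two-weight code
\[
  C_0=\left[N_0,3\right]_{Q},\qquad N_0:=q^m\!\left(q^{n'}-q^{n'-m}+1\right)=(q^m-1)Q+q^m ,
\]
and then concatenate $C_0$ with the $n'$-dimensional simplex code over $\F_q$. By Exercise~\ref{exercise_concatenation_with_simplex_code} this yields a projective $\big(q^m\cdot q^{n'-1}\big)$-divisible $\big[N_0\cdot[n']_q,\,3n'\big]_q$-code. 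Substituting $N_0$ gives effective length $q^m\cdot[n']_q\cdot(q^{n'}-q^{n'-m}+1)$, dimension $3n'$, and divisibility $q^{\,n'+m-1}$, exactly as claimed. Thus everything reduces to producing $C_0$.

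Geometrically, $C_0$ is a \emph{set} $\cK$ of $N_0$ points in $\PG(2,Q)$ (a set, hence maximum point multiplicity $1$ and projectivity) whose intersection numbers with the $[3]_Q$ lines take only two values $i_1<i_2$; since a codeword weight is $N_0$ minus a line intersection, $q^m$-divisibility forces $i_1\equiv i_2\equiv N_0\pmod{q^m}$. The number $N_0=(d-1)Q+d$ with $d=q^m$ is precisely the cardinality of a degree-$d$ maximal arc in $\PG(2,Q)$, and $d=q^m\mid q^{n'}=Q$ because $m\le n'$. In the extreme case $m=n'$ one has $N_0=Q^2$ and $\cK=\AG(2,Q)$, i.e.\ $C_0$ is the first-order Reed--Muller / affine-plane code of Example~\ref{example_affine_space}, which confirms the parameters. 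For even $q$ a Denniston maximal arc supplies $\cK$ directly (intersection numbers $0$ and $q^m$); since maximal arcs do not exist for odd $Q$, for general $q$ I would instead invoke the BCH construction of \cite{bierbrauer1997family}, which yields a two-weight set of the same cardinality whose two intersection numbers are both $\equiv N_0\pmod{q^m}$ but need not be $\{0,q^m\}$.

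Concretely I would realise $C_0$ as a cyclic code over $\F_Q$ defined by a prescribed set of zeros, arranged so that its dual is a BCH code of large designed distance. Two ingredients then pin down the weight structure: the BCH bound on the dual controls how many distinct nonzero weights $C_0$ can have, while the first three standard equations of Lemma~\ref{lemma_standard_equations} (equivalently, the leading MacWilliams identities) together with Delsarte's Lemma~\ref{lemma_delsarte} force these weights to be two consecutive multiples of a power of $p$, both divisible by $q^m$. The actual weights I would read off from the associated character (Gauss) sums and verify that both lie in $q^m\N$. This last verification---that the cyclic/BCH code has \emph{exactly} two weights and that both are $q^m$-divisible---is the genuine heart of the argument and the step I expect to be the main obstacle; by contrast, the descent to $\F_q$ is the purely formal concatenation computation above. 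Once $C_0$ is in hand, Exercise~\ref{exercise_concatenation_with_simplex_code} completes the proof.
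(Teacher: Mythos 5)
Your reduction via Exercise~\ref{exercise_concatenation_with_simplex_code} is computed correctly, but the outer code it requires provably does not exist for odd $q$, so the strategy cannot establish Theorem~\ref{thm_BCH_div} in the stated generality. Suppose $C_0$ is a projective $q^m$-divisible $[N_0,3]_Q$-code with $Q=q^{n'}$ and $N_0=(q^m-1)Q+q^m$, and let $\cK$ be the corresponding set of points in $\PG(2,Q)$. Writing $d:=q^m$, divisibility forces every line multiplicity $i$ to be a non-negative multiple of $d$, and the standard equations of Lemma~\ref{lemma_standard_equations} give $\sum_i a_i=Q^2+Q+1$, $\sum_i ia_i=N_0(Q+1)$, and $\sum_i i(i-1)a_i=N_0(N_0-1)$, whence
\[
  \sum_i i(i-d)\,a_i \;=\; N_0\bigl(N_0+Q-dQ-d\bigr)\;=\;0 .
\]
Since $i(i-d)>0$ for every multiple $i\ge 2d$ of $d$, every line meets $\cK$ in $0$ or $d$ points, i.e.\ $\cK$ is a maximal arc of degree $q^m$ in $\PG(2,Q)$. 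By the theorem of Ball, Blokhuis, and Mazzocca, nontrivial maximal arcs do not exist in Desarguesian planes of odd order, so for odd $q$ and $m<n'$ there is no admissible $C_0$ at all. A concrete instance from Table~\ref{table_two_weight_codes}: the $[84,6,\{54,63\}]_3$-code ($q=3$, $n'=2$, $m=1$) exists, yet the outer code your plan needs would be a projective $3$-divisible $[21,3]_9$-code, i.e.\ a maximal $3$-arc in $\PG(2,9)$, which does not exist. The same computation also kills your fallback: there is no two-intersection set of cardinality $N_0$ in $\PG(2,Q)$ with both intersection numbers $\equiv N_0\pmod{q^m}$ other than a maximal arc, since the intersection numbers are forced to be exactly $\{0,q^m\}$. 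The construction of \cite{bierbrauer1997family} does not factor through a three-dimensional code over $\F_{q^{n'}}$; it produces the two-intersection set in $\PG(3n'-1,q)$ directly. This is precisely why the text remarks that these codes can be obtained by concatenation only \emph{in some cases} --- in fact exactly when $q$ is even (Denniston arcs) or $m=n'$ (the affine plane, as you note).

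Beyond this structural obstruction, the proposal is incomplete even where it could work: you yourself flag that verifying the cyclic/BCH code has exactly two weights, both divisible by $q^m$, is \emph{the genuine heart of the argument}, and that verification is nowhere carried out --- and invoking \cite{bierbrauer1997family} for it is circular, since that is the very theorem to be proved. (For calibration: the paper gives no proof either; the statement is a citation of \cite[Theorem 4]{bierbrauer1997family}, whose argument lives entirely over $\F_q$ and is not a concatenation argument.)
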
 
In some cases these codes can be obtained by concatenation with a suitable simplex code. 

Due to their omnipresence a lot of research has been done on two-weight codes and many examples are available. Nevertheless the topic is studied for decades, new parametric 
families are still found, see e.g.\ \cite{heng2021family}. In Table~\ref{table_two_weight_codes} we 
list those parameters that we will use in Section~\ref{sec_lengths_projective_q_r} as examples.

\begin{table}[htp]
  \begin{center}
    \begin{tabular}{rrcrcl}
      \hline
        $n$ & $k$ & $\{w_1,w_2\}$ & $\Delta$ & $q$ & description \\
      \hline
        51 &  8 & $\{24,32\}$   &  8 & 2 & \cite[Example CY1]{calderbank1986geometry}, Example~\ref{ex_8_div_51}\\
        73 &  9 & $\{32,40\}$   &  8 & 2 & computer search with prescribed automorphisms \cite{kohnert2007constructing_twoweight}, optimal code\\
       196 &  9 & $\{96,112\}$  & 16 & 2 & BY construction in Theorem~\ref{thm_BCH_div}\\
       198 & 10 & $\{96,112\}$  & 16 & 2 & computer search with prescribed automorphisms \cite{kohnert2007constructing_twoweight}, optimal code\\       
       231 & 10 & $\{112,128\}$ & 16 & 2 & computer search with prescribed aut.\ \cite{kohnert2007constructing_twoweight}, optimal code, \cite{dissett2000combinatorial}\\
       234 & 12 & $\{112,128\}$ & 16 & 2 & \cite[Theorem 6.1]{calderbank1986geometry} applied to Example FE3 over $\mathbb{F}_4$ \\
       273 & 12 & $\{128,144\}$ & 16 & 2 & quasi-cyclic code \cite{chen2006constructions} \\ 
       276 & 11 & $\{128,144\}$ & 16 & 2 & \cite[Example $\text{RT5}^\text{d}$]{calderbank1986geometry}\\
       455 & 12 & $\{224,256\}$ & 32 & 2 & \cite[Example CY1]{calderbank1986geometry}\\ 
       780 & 12 & $\{384,416\}$ & 32 & 2 & BY construction in Theorem~\ref{thm_BCH_div}\\  
       845 & 12 & $\{416,448\}$ & 32 & 2 & computer search with prescribed automorphisms \cite{kohnert2007constructing_twoweight}\\
       975 & 12 & $\{480,512\}$ & 32 & 2 & computer search with prescribed automorphisms \cite{kohnert2007constructing_twoweight}\\
      1105 & 12 & $\{544,576\}$ & 32 & 2 & computer search with prescribed automorphisms \cite{kohnert2007constructing_twoweight}\\
      1170 & 12 & $\{576,608\}$ & 32 & 2 & computer search with prescribed automorphisms \cite{kohnert2007constructing_twoweight}\\
        10 &  4 & $\{6,9\}$     &  3 & 3 & \cite[Example CY1 and RT2]{calderbank1986geometry}, Example~\ref{example_ovoid}\\ 
        11 &  5 & $\{6,9\}$     &  3 & 3 & \cite[Example RT6]{calderbank1986geometry}, ternary Golay code \cite{barg1993dawn,golay,football_pool}\\
        55 &  5 & $\{36,45\}$   &  9 & 3 & optimal code \cite{gulliver1996two}, quasi-cyclic code \cite{chen2006constructions} \\     
        56 &  6 & $\{36,45\}$   &  9 & 3 & \cite[Example FE2]{calderbank1986geometry}, Hill cap \cite{hill1978caps}, optimal code\\
        84 &  6 & $\{54,63\}$   &  9 & 3 & BY construction, optimal code\\
        98 &  6 & $\{63,72\}$   &  9 & 3 & optimal code \cite{gulliver1996two}, quasi-cyclic code \cite{chen2006constructions}, \cite{kohnert2007constructing_twoweight}\\
       260 &  6 & $\{192,208\}$ & 16 & 4 & BY construction in Theorem~\ref{thm_BCH_div}\\ 
       303 &  6 & $\{224,240\}$ & 16 & 4 & \cite[Example CY2]{calderbank1986geometry}\\        
       304 &  6 & $\{224,240\}$ & 16 & 4 & complement of \cite[Example CY2]{calderbank1986geometry}\\
        39 &  4 & $\{30,35\}$   &  5 & 5 & optimal code \cite{dissett2000combinatorial}\\ 
      175 &  4 & $\{147,154\}$ &  7 & 7 & \cite[Example FE1]{calderbank1986geometry}\\
       205 &  4 & $\{180,189\}$ &  9 & 9 & complement of \cite[Example CY2]{calderbank1986geometry}\\      
      \hline
    \end{tabular}
    \caption{Parameters of a few selected two-weight codes.}
    \label{table_two_weight_codes}  
  \end{center}
\end{table}  

We remark that the example of a projective binary $32$-divisible code of length $780$ can be obtained by concatenation of the example of a 
projective quaternary $16$-divisible code of length $260$. For more results on field changes we refer to \cite[Section 6]{calderbank1986geometry} for 
two-weight codes and Subsection~\ref{subsec_subfield_constructions} for divisible codes.

\chapter{Non-existence results for projective $q^r$-divisible codes}
\label{sec_nonexistence_projective_q_r}

The aim of this section is to draw some parametric conclusions from the linear programming method for projective $q^r$-divisible codes. However, we 
will mainly use the geometric reformulation, i.e., the standard equations in Lemma~\ref{lemma_standard_equations}. For parametric conclusions of the 
linear programming method for distance optimal linear codes we refer e.g.\ to \cite[Section 15.3]{bierbrauer2016introduction} and \cite{bierbrauer2007direct}. 
Our first example is an alternative version of Lemma~\ref{lemma_average}. Given a multiset of points $\cM$ in $\PG(v-1,q)$ let 
$\cT(\cM):=\left\{0\le i\le \#\cM\,:\, a_i>0\right\}$ denote the set of attained hyperplane multiplicities, where $a_i$ is the number of hyperplanes 
$H\in\cH$ with $\cM(H)=i$. 

\begin{trailer}{A {\lq\lq}linear{\rq\rq} condition}
\vspace*{-6mm}
\begin{nlemma}
  \label{lemma_hyperplane_types_arithmetic_progression_2}
  For integers $u\in\Z$, $m\ge 0$ and $\Delta\ge 1$ let $\cM$ in be a $\Delta$-divisible multiset of points in $\PG(v-1,q)$  of cardinality
  $n=u+m\Delta\ge 0$.  Then, we have
  \begin{equation}
   \label{eq_hyperplane_types_arithmetic_progression_2}
  (q-1)\cdot\sum_{h\in\mathbb{Z},h\le m}
  ha_{u+h\Delta}=\left(u+m\Delta-uq\right)\cdot
  \frac{q^{v-1}}{\Delta}-m,
 \end{equation}
 where we set $a_{u+h\Delta}=0$ if $u+h\Delta<0$.
\end{nlemma}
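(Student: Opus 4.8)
The plan is to derive \eqref{eq_hyperplane_types_arithmetic_progression_2} directly from the first two standard equations of Lemma~\ref{lemma_standard_equations} (applied with $k=v$, which requires $v\ge 2$; for $v=1$ there are no hyperplanes and the statement is vacuous). Recall that these two equations read $\sum_i a_i=[v]_q$ and $\sum_i i\,a_i=n\cdot[v-1]_q$, where $a_i$ counts the hyperplanes $H\in\cH$ with $\cM(H)=i$. The whole argument is a reindexing of these two identities followed by elementary algebra.

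First I would record what $\Delta$-divisibility forces on the attained hyperplane multiplicities. By \eqref{eq_divisible_multiset} we have $\cM(H)\equiv \#\cM=n\pmod\Delta$ for every $H\in\cH$, and since $n=u+m\Delta\equiv u\pmod\Delta$, every hyperplane multiplicity lies in the residue class $u+\Delta\Z$. As $0\le\cM(H)\le\#\cM=n=u+m\Delta$, each attained multiplicity has the shape $u+h\Delta$ with $h\in\Z$ and $h\le m$. Hence, adopting the stated convention $a_{u+h\Delta}=0$ whenever $u+h\Delta<0$, letting $h$ range over all integers $\le m$ enumerates precisely the possible hyperplane multiplicities, so both standard equations can be rewritten as sums over $h$.

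Next I would substitute this reindexing. The first standard equation becomes $\sum_{h\le m}a_{u+h\Delta}=[v]_q$, and the second becomes $\sum_{h\le m}(u+h\Delta)\,a_{u+h\Delta}=n\cdot[v-1]_q$. Splitting the left side of the latter linearly as $u\sum_{h\le m}a_{u+h\Delta}+\Delta\sum_{h\le m}h\,a_{u+h\Delta}$ and inserting the first equation yields
$$
  \Delta\sum_{h\le m}h\,a_{u+h\Delta}=n\cdot[v-1]_q-u\cdot[v]_q=(u+m\Delta)\cdot\frac{q^{v-1}-1}{q-1}-u\cdot\frac{q^{v}-1}{q-1}.
$$

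Finally I would simplify the right-hand side over the common denominator $q-1$. Expanding the numerator gives $(u+m\Delta)q^{v-1}-uq^{v}-m\Delta$, and using $uq^{v}=uq\cdot q^{v-1}$ this collapses to $q^{v-1}(u+m\Delta-uq)-m\Delta$, so that
$$
  \Delta\sum_{h\le m}h\,a_{u+h\Delta}=\frac{q^{v-1}(u+m\Delta-uq)-m\Delta}{q-1}.
$$
Multiplying by $(q-1)$ and dividing by $\Delta$ gives exactly \eqref{eq_hyperplane_types_arithmetic_progression_2}. I do not expect any genuine obstacle: the only points needing care are the divisibility reduction that pins all multiplicities to the class $u+\Delta\Z$ (making the weighting by $h$ well defined) and the bookkeeping of the summation range via the negative-index convention; everything else is the routine algebra displayed above.
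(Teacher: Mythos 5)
Your proposal is correct and is exactly the argument the paper intends: the paper states this lemma without proof and poses its verification as an exercise ("Use the standard equations from Lemma~\ref{lemma_standard_equations} to verify Equation~(\ref{eq_hyperplane_types_arithmetic_progression_2})"), which is precisely what you do by reindexing the hyperplane multiplicities as $u+h\Delta$ via $\Delta$-divisibility and combining the first two standard equations, matching the rewriting used in the paper's own proof of the companion quadratic condition in Lemma~\ref{lemma_hyperplane_types_arithmetic_progression}. The only blemish is your parenthetical claim that the $v=1$ case is vacuous for lack of hyperplanes; this is immaterial, since the standard equations (and the hyperplane formulation of divisibility) presuppose $v\ge 2$ anyway.
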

\begin{nexercise}
  Use the standard equations from Lemma~\ref{lemma_standard_equations} to verify Equation~(\ref{eq_hyperplane_types_arithmetic_progression_2}). 
\end{nexercise}

\begin{ncorollary}
  \label{cor_nonexistence_arithmetic_progression_2}
  For integers $u,m\ge 0$ and $\Delta\ge 1$ let the multiset of points $\cM$ in $\PG(v-1,q)$  satisfy 
  $\#\cM=u+m\Delta$ and $\cT(\cM)\subseteq\{u,u+\Delta,\dots,u+m\Delta\}$. 
  Then, $u<\frac{m\Delta}{q-1}$ or $u=m=0$.
\end{ncorollary}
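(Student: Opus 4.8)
The plan is to obtain the corollary as an immediate sign consequence of the linear identity in Lemma~\ref{lemma_hyperplane_types_arithmetic_progression_2}. First I would apply that lemma with $n=u+m\Delta$ and rewrite the coefficient $u+m\Delta-uq=m\Delta-u(q-1)$, so that the identity reads
\[
  (q-1)\cdot\!\!\sum_{h\in\Z,\,h\le m}\!\! h\,a_{u+h\Delta}
  \;=\;\bigl(m\Delta-u(q-1)\bigr)\cdot\frac{q^{v-1}}{\Delta}-m .
\]

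The decisive step is to bound the left-hand side below. The hypothesis $\cT(\cM)\subseteq\{u,u+\Delta,\dots,u+m\Delta\}$ means $a_i=0$ whenever $i\notin\{u,u+\Delta,\dots,u+m\Delta\}$; in particular $a_{u+h\Delta}=0$ for every integer $h<0$ and every $h>m$. Hence among the indices $h\le m$ appearing in the sum only $h\in\{0,1,\dots,m\}$ can contribute, and for these each summand $h\,a_{u+h\Delta}$ is nonnegative. Since $q-1\ge 1$, the whole left-hand side is $\ge 0$, and the identity yields
\[
  \bigl(m\Delta-u(q-1)\bigr)\cdot\frac{q^{v-1}}{\Delta}\;\ge\; m\;\ge\;0 .
\]

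From this inequality the two alternatives of the claim follow by splitting on $m$. If $m>0$, the right-hand side $m$ is strictly positive, and because the factor $q^{v-1}/\Delta$ is positive the integer $m\Delta-u(q-1)$ must itself be strictly positive; dividing by $q-1>0$ gives $u<\tfrac{m\Delta}{q-1}$. If instead $m=0$, the inequality collapses to $-u(q-1)\cdot q^{v-1}/\Delta\ge 0$, which forces $u\le 0$; combined with the standing assumption $u\ge 0$ this gives $u=0$, hence $u=m=0$.

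I do not anticipate a real obstacle: granting Lemma~\ref{lemma_hyperplane_types_arithmetic_progression_2}, the argument is just a nonnegativity observation followed by a case distinction. The only points that need a little care are verifying that the support condition on $\cT(\cM)$ really removes all terms with $h<0$ (so that no negative contribution can enter the sum and spoil the lower bound), and checking that the borderline case $m\Delta=u(q-1)$ cannot occur when $m>0$ — indeed it would force the right-hand side above to equal $-m<0$, contradicting the established lower bound — so that the inequality is genuinely strict exactly when $m>0$.
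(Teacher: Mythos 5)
Your proof is correct and is essentially the paper's own argument: the corollary is stated there without proof precisely because it follows from Lemma~\ref{lemma_hyperplane_types_arithmetic_progression_2} by exactly the nonnegativity observation and case split on $m$ that you give. The only step you leave implicit is that the hypothesis $\cT(\cM)\subseteq\{u,u+\Delta,\dots,u+m\Delta\}$ also supplies the $\Delta$-divisibility needed to invoke the lemma in the first place, since it forces $\cM(H)\equiv u\equiv\#\cM\pmod{\Delta}$ for every hyperplane $H$.
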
 
\end{trailer}

\begin{nexample}
  \label{ex_proj_8_div_card_33} 
  Applying Corollary~\ref{cor_nonexistence_arithmetic_progression_2} with $q=2$, $\Delta=2$, $u=1$, and $m=0$ yields that no $2$-divisible multiset of 
  points over $\F_2$ of cardinality $1$ exists. With this we can choose $q=2$, $\Delta=4$, $u=5$, and $m=1$ in Corollary~\ref{cor_nonexistence_arithmetic_progression_2} 
  to conclude that no $4$-divisible multiset of points over $\F_2$ of cardinality $9$ exists. Using this and the non-existence of a $4$-divisible multiset of points over $\F_2$  
  of cardinality $1$, we can choose $q=2$, $\Delta=8$, $u=17$, and $m=2$ in Corollary~\ref{cor_nonexistence_arithmetic_progression_2} to conclude that no $8$-divisible multiset 
  of points over $\F_2$ of cardinality $33$ exists.  
\end{nexample}  

Of course, the non-existence of an $8$-divisible $[33,k]_2$ full-length code also follows from the methods presented in Section~\ref{sec_lengths_of_divisible_codes}, which 
are essentially based on the averaging argument in Lemma~\ref{lemma_average} and a suitable induction. Arguably Lemma~\ref{lemma_average} has some advantages over 
Corollary~\ref{cor_nonexistence_arithmetic_progression_2} since we can directly start with an $8$-divisible multiset $\cM$ of points over $\F_2$ of cardinality $33$ and conclude 
the existence of a hyperplane $H$ with $\cM(H)\in\{1,9\}$. The example of a potential $8$-divisible $[33,k]_2$ full-length code is also interesting when using the linear programming 
method directly. First note that we will have to prescribe some suitable values for $k$. If we allow all weights in $\{8,16,24,32\}$, then the MacWilliams equations admit a non-negative 
rational solution while the weights $32$ and $24$ might also be excluded with a separate linear programming computation. Thus, it definitely is useful to tabulated the possible 
lengths of (projective) $q^r$-divisible codes as we do in Section~\ref{sec_lengths_projective_q_r}.     

\begin{trailer}{A {\lq\lq}quadratic{\rq\rq} condition}
\vspace*{-6mm}
\begin{nlemma}
  \label{lemma_hyperplane_types_arithmetic_progression}
  For integers $u\in\mathbb{Z}$, $m\ge 0$, and $\Delta\ge 1$ let $\cM$ be a $\Delta$-divisible set of points in $\PG(v-1,q)$ of cardinality $n=u+m\Delta\ge 0$. 
  Then, we have 
  \begin{equation}
    (q-1)\cdot\sum_{h\in\mathbb{Z},h\le m} h(h-1)a_{u+h\Delta}=\tau_q(u,\Delta,m)\cdot \frac{q^{v-2}}{\Delta^2}-m(m-1),
  \end{equation}   
  where we set $\tau_q(u,\Delta,m)=$
  \begin{equation}
    m(m-q)\Delta^2+\left(q^2u-2mqu+mq+2mu-qu-m\right)\Delta+(q-1)^2u^2+(q-1)u
  \end{equation}
  and $a_{u+h\Delta}=0$ if $u+h\Delta<0$.
  %% computation implemented in arithmetic_progression.cpp
\end{nlemma}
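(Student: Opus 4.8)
The plan is to derive the identity directly from the three standard equations of Lemma~\ref{lemma_standard_equations}, specialized to the present situation. The crucial simplification is that $\cM$ is assumed to be a \emph{set} of points, so $\cM(P)\in\{0,1\}$ for all $P\in\cP$; hence $\lambda_j=0$ for every $j\ge 2$ and the correction term $q^{v-2}\sum_{i\ge 2}{i\choose 2}\lambda_i$ on the right-hand side of the third standard equation vanishes. Writing $i$ for a hyperplane multiplicity and applying the equations in $\PG(v-1,q)$ (i.e.\ with $k$ replaced by $v$), they read $\sum_i a_i=[v]_q$, $\sum_i i\,a_i=n[v-1]_q$, and $\sum_i i(i-1)a_i=n(n-1)[v-2]_q$, whence also $\sum_i i^2 a_i=n(n-1)[v-2]_q+n[v-1]_q$.

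First I would record that $\Delta$-divisibility forces every attained hyperplane multiplicity $i$ to satisfy $i\equiv n\equiv u\pmod\Delta$, so each such $i$ can be written uniquely as $i=u+h\Delta$ with $h\in\Z$ and $h\le m$ (because $i\le n=u+m\Delta$); the convention $a_{u+h\Delta}=0$ for $u+h\Delta<0$ makes the sum over $h$ well-defined. Under this substitution $h=(i-u)/\Delta$ and $h-1=(i-u-\Delta)/\Delta$, so that
$$\sum_{h\le m}h(h-1)a_{u+h\Delta}=\frac{1}{\Delta^2}\sum_i (i-u)(i-u-\Delta)a_i.$$
It therefore suffices to evaluate $(q-1)\sum_i(i-u)(i-u-\Delta)a_i$ and show it equals $\tau_q(u,\Delta,m)\,q^{v-2}-m(m-1)\Delta^2$.

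Expanding $(i-u)(i-u-\Delta)=i^2-(2u+\Delta)i+(u^2+\Delta u)$ and feeding in the three standard equations expresses the sum as an explicit $\Z$-linear combination of $[v]_q,[v-1]_q,[v-2]_q$ with coefficients that are polynomials in $u$, $\Delta$ and $n$. Multiplying through by $q-1$ and using $(q-1)[v-j]_q=q^{v-j}-1$ cleanly separates a part proportional to $q^{v-2}$ (after rewriting $q^{v-1}=q\,q^{v-2}$ and $q^{v}=q^2q^{v-2}$) from a purely constant, $v$-independent part. Finally I would substitute $n=u+m\Delta$ and collect.

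The only real work lies in this last step: one must verify that the coefficient of $q^{v-2}$ matches the stated $\tau_q(u,\Delta,m)$ and that the constant remainder collapses to $-m(m-1)\Delta^2$. I expect this bookkeeping to be the main (though entirely routine) obstacle. Organizing the computation by powers of $\Delta$ makes it transparent: the $\Delta^2$-coefficient of $q^{v-2}$ comes out as $m(m-q)$, the $\Delta^1$-coefficient as $q^2u-2mqu+mq+2mu-qu-m$, and the $\Delta^0$-coefficient as $(q-1)^2u^2+(q-1)u$, exactly reproducing $\tau_q$; meanwhile the constant part, after cancellation of all $u$-terms, simplifies to $-m^2\Delta^2+m\Delta^2=-m(m-1)\Delta^2$. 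Dividing by $\Delta^2$ and re-indexing by $h$ then yields the asserted equation.
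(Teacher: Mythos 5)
Your proof is correct and follows essentially the same route as the paper: the paper multiplies the three standard equations (with the $\lambda_i$-terms vanishing by the set property) by $u(u+\Delta)$, $-(2u+\Delta-1)$, and $1$ respectively, which is exactly your expansion of $(i-u)(i-u-\Delta)$ in the basis $\{1,\,i,\,i(i-1)\}$. The remaining bookkeeping you describe (isolating the $q^{v-2}$-part as $\tau_q(u,\Delta,m)$ and the constant part as $-m(m-1)\Delta^2$) is precisely what the paper's one-line combination accomplishes.
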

\begin{proof}
  Rewriting the standard equations from Lemma~\ref{lemma_standard_equations} yields 
  \begin{eqnarray*}
    (q-1)\cdot\sum_{h\in\mathbb{Z},h\le m} a_{u+h\Delta} &=& q^2\cdot q^{v-2}-1,\\
    (q-1)\cdot\sum_{h\in\mathbb{Z},h\le m} (u+h\Delta)a_{u+h\Delta} &=& (u+m\Delta)(q\cdot q^{v-2}-1),\\
    (q-1)\cdot\!\!\!\!\!\sum_{h\in\mathbb{Z},h\le m}\!\!\!\!\! (u+h\Delta)(u+h\Delta-1)a_{u+h\Delta} &=& (u+m\Delta)(u+m\Delta-1)(q^{v-2}-1).
  \end{eqnarray*}
  $u(u+\Delta)$ times the first equation minus $(2u+\Delta-1)$ times the second equation plus the third equation gives 
  $\Delta^2$ times the stated equation. 
\end{proof}
\end{trailer}

The multipliers used in the proof of Lemma~\ref{lemma_hyperplane_types_arithmetic_progression} can be directly read off from the following observation.
\begin{nlemma}
  \label{lemma_binomial_matrix_3}
  For pairwise different non-zero 
  numbers $a,b,c$ the inverse matrix of 
  $$
    \begin{pmatrix}
      1     & 1     & 1     \\
      a     & b     & c     \\
      a^2-a & b^2-b & c^2-c \\
    \end{pmatrix}
  $$
  is given by 
  $$
    \begin{pmatrix}
       bc(c-b) & -(c+b-1)(c-b) &  (c-b) \\
      -ac(c-a) &  (c+a-1)(c-a) & -(c-a) \\
       ab(b-a) & -(b+a-1)(b-a) &  (b-a) \\
    \end{pmatrix}\cdot \big((c-a)(c-b)(b-a)\big)^{-1}
  $$
\end{nlemma}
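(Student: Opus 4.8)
The plan is to obtain $M^{-1}$ from the classical adjugate formula $M^{-1}=\det(M)^{-1}\cdot\operatorname{adj}(M)$, where $\operatorname{adj}(M)$ is the transpose of the matrix of $2\times 2$ cofactors. First I would pin down the determinant. Subtracting the second row from the third turns the last row $(a^2-a,\,b^2-b,\,c^2-c)$ into $(a^2,b^2,c^2)$ without changing the determinant, so $M$ differs from the Vandermonde matrix $V$ with rows $(1,1,1)$, $(a,b,c)$, $(a^2,b^2,c^2)$ only by a unit lower-triangular row operation. Hence $\det M=\det V=(b-a)(c-a)(c-b)$, which is exactly the normalizing scalar $(c-a)(c-b)(b-a)$ appearing in the statement (the factors are merely reordered). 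This is the only place the hypothesis that $a,b,c$ are pairwise distinct is used, since it guarantees $\det M\neq 0$ and thus invertibility.

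Next I would compute the nine cofactors and transpose them. Every cofactor is a $2\times 2$ determinant that factors cleanly using the difference of squares $x^2-y^2=(x-y)(x+y)$ together with the pattern $x^2-x$. For instance the $(1,1)$ entry of $\operatorname{adj}(M)$ is $C_{11}=b(c^2-c)-c(b^2-b)=bc(c-b)$, and its $(1,2)$ entry is $C_{21}=-\big[(c^2-c)-(b^2-b)\big]=-(c-b)(c+b-1)$, both agreeing with the claimed numerator matrix; the remaining seven entries, for example $C_{31}=c-b$ and $C_{12}=-\big[a(c^2-c)-c(a^2-a)\big]=-ac(c-a)$, are handled identically. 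Assembling these cofactors, transposing, and dividing by $\det M$ then yields precisely the asserted inverse.

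The computation has no conceptual obstacle; the only thing requiring care is the bookkeeping, namely the alternating cofactor signs and the fact that the adjugate is the transpose of the cofactor matrix, so that the $(i,j)$ entry of $\operatorname{adj}(M)$ is $C_{ji}$ (this is what aligns $C_{21}$ with the upper-right block and $C_{12}$ with the lower-left block of the displayed matrix). As an even more mechanical alternative one could simply multiply $M$ by the displayed matrix and verify entrywise that the product equals $(c-a)(c-b)(b-a)\cdot I_3$. Conceptually, the identity also reflects that the functions $1,\,x,\,x^2-x$ form a basis of the polynomials of degree $\le 2$, so that the rows of $M^{-1}$ are exactly the coefficients of the Lagrange basis polynomials for the nodes $a,b,c$ expressed in that basis; this viewpoint is what makes these multipliers the natural ones for combining the three standard equations in Lemma~\ref{lemma_hyperplane_types_arithmetic_progression}.
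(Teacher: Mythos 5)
Your proof is correct, and since the paper states this lemma as a bare observation with no proof attached (it only remarks that the multipliers used in the proof of Lemma~\ref{lemma_hyperplane_types_arithmetic_progression} can be read off from it), your adjugate computation is a perfectly adequate way to establish it. All the cofactor evaluations you display are right, the adjugate-transpose bookkeeping is handled correctly, and the identification of the normalizing scalar with $\det M=(b-a)(c-a)(c-b)$ is exactly what makes the formula work. Your closing remark that the rows of $M^{-1}$ encode the Lagrange basis polynomials for the nodes $a,b,c$ in the basis $1$, $x$, $x^2-x$ is also the right way to see why these are the multipliers in Lemma~\ref{lemma_hyperplane_types_arithmetic_progression}: taking $b=u$ and $c=u+\Delta$, the first row becomes proportional to $u(u+\Delta)$, $-(2u+\Delta-1)$, $1$, which are precisely the coefficients used there. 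One slip to fix: to turn $M$ into the Vandermonde matrix you must \emph{add} the second row to the third, not subtract it---subtracting yields the row $(a^2-2a,\,b^2-2b,\,c^2-2c)$. Since adding one row to another is equally determinant-preserving, the conclusion $\det M=\det V=(b-a)(c-a)(c-b)$ is unaffected. A cosmetic point: the hypothesis that $a,b,c$ be non-zero is never actually needed; pairwise distinctness alone makes $M$ invertible, as your argument shows.
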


Similar as for the {\lq\lq}linear condition{\rq\rq} we can conclude explicit non-existence criteria from Lemma~\ref{lemma_hyperplane_types_arithmetic_progression}: 
\begin{ncorollary}
  \label{cor_nonexistence_arithmetic_progression}
  For integers $u\in\mathbb{Z}$ and $\Delta,m\ge 1$ let $\cK$ be a $\Delta$-divisible arc of cardinality $n=u+m\Delta\ge 0$ in $\PG(v-1,q)$. 
  If one of the following conditions hold, then $(q-1)\cdot \sum_{i=2}^m i(i-1)x_i\notin\N_0$, which is impossible.
  \begin{enumerate}
    \item[(a)] $\tau_q(u,\Delta,m)<0$;
    \item[(b)] $\tau_q(u,\Delta,m)\cdot q^{v-2}$ is not divisible by $\Delta^2$;
    \item[(c)] $m\ge 2$ and $\tau_q(u,\Delta,m)=0$.
  \end{enumerate}
  We have the following special cases:
  \begin{eqnarray*}
    \tau_q(u,q^r,m)&=&\left(m(m-q)q^r-2mqu+q^2u+mq+2mu-qu-m\right)\cdot q^r\\
     && +\left(q^2u^2-2qu^2+qu+u^2-u\right),\\
    \tau_2(u,2^r,m)&=&\left(m(m-2)2^r -2mu+m+2u\right)\cdot 2^r +\left(u^2+u\right).\\
  \end{eqnarray*}
\end{ncorollary}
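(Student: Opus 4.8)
The entire statement is a sign-and-integrality reading of the identity furnished by Lemma~\ref{lemma_hyperplane_types_arithmetic_progression}. The plan is to start from
\[
  (q-1)\cdot\sum_{h\in\Z,\,h\le m} h(h-1)a_{u+h\Delta}=\tau_q(u,\Delta,m)\cdot \frac{q^{v-2}}{\Delta^2}-m(m-1)
\]
and to exploit that its left-hand side is a non-negative integer. Indeed, each spectrum value $a_{u+h\Delta}$ lies in $\N_0$, the factor $h(h-1)$ is a non-negative integer for \emph{every} $h\in\Z$ (it is $0$ for $h\in\{0,1\}$ and positive otherwise), and $q-1\ge 1$; writing $x_i:=a_{u+i\Delta}$ and discarding the vanishing terms $h\in\{0,1\}$ identifies the left-hand side with $(q-1)\sum_{i=2}^m i(i-1)x_i\in\N_0$. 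Consequently the right-hand side is forced to be a non-negative integer, and each of the hypotheses (a)--(c) is tailored to violate exactly this.

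Next I would dispatch the three cases separately. For (a), $\tau_q(u,\Delta,m)<0$ makes $\tau_q(u,\Delta,m)\,q^{v-2}/\Delta^2<0$, and since $-m(m-1)\le 0$ for $m\ge 1$ the whole right-hand side is strictly negative, which is incompatible with membership in $\N_0$. For (b), $m(m-1)\in\Z$, so if $\tau_q(u,\Delta,m)\,q^{v-2}$ fails to be divisible by $\Delta^2$ then the right-hand side is not even an integer, again impossible. For (c), setting $\tau_q(u,\Delta,m)=0$ collapses the right-hand side to $-m(m-1)$, which is $\le -2<0$ whenever $m\ge 2$; this reproduces the contradiction of case (a).

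The two displayed special formulas I would obtain by direct substitution into
\[
  \tau_q(u,\Delta,m)=m(m-q)\Delta^2+\bigl(q^2u-2mqu+mq+2mu-qu-m\bigr)\Delta+(q-1)^2u^2+(q-1)u.
\]
Putting $\Delta=q^r$, the term $m(m-q)q^{2r}$ contributes $m(m-q)q^r$ to the bracket multiplying the outer $q^r$, while $(q-1)^2u^2+(q-1)u$ expands to $q^2u^2-2qu^2+u^2+qu-u$ and forms the $q^r$-free part, yielding $\tau_q(u,q^r,m)$. Specialising further to $q=2$, $\Delta=2^r$ reduces the bracket to $m(m-2)2^r-2mu+m+2u$ and the constant part to $u^2+u$, giving $\tau_2(u,2^r,m)$.

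I do not anticipate any genuine obstacle, since all the substance already sits in Lemma~\ref{lemma_hyperplane_types_arithmetic_progression} and the corollary merely post-processes the sign and divisibility of its right-hand side. The one point deserving a line of care is matching the displayed sum $\sum_{i=2}^m i(i-1)x_i$ with the full sum $\sum_{h\le m}h(h-1)a_{u+h\Delta}$ of the lemma; these coincide in the natural range $0\le u<\Delta$, where every hyperplane multiplicity is of the form $u+i\Delta$ with $0\le i\le m$ and the indices $i\in\{0,1\}$ drop out. As each summand is then non-negative, the quantity is a non-negative integer, which is exactly what conditions (a)--(c) are built to contradict.
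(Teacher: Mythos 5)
Your proposal is correct and is exactly the derivation the paper intends: the corollary is stated without proof as an immediate consequence of Lemma~\ref{lemma_hyperplane_types_arithmetic_progression}, and your reading — the left-hand side $(q-1)\sum_{h\le m}h(h-1)a_{u+h\Delta}$ is a non-negative integer since $h(h-1)\ge 0$ for all $h\in\Z$, while each of (a), (b), (c) forces the right-hand side to be negative or non-integral — together with the direct substitutions $\Delta=q^r$ and $(q,\Delta)=(2,2^r)$, fills in precisely those details. Your added remark on identifying the full sum with $\sum_{i=2}^m i(i-1)x_i$ in the range $0\le u<\Delta$ is a sensible clarification of a point the paper leaves implicit, and it does not affect the validity of the argument.
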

\begin{nexercise}
  \label{ex_first_gaps_4_div_and_8_div}
  Conclude the non-existence of projective $4$-divisible $[n,k]_2$-codes for all $n\in \{1,\dots,6\}\cup\{9,\dots,14\}$ and the 
  non-existence of projective $8$-divisible $[n,k]_2$-codes for all $n\in \{1,\dots,14\}\cup\{17,\dots,29\}\cup\{33,\dots,44\}$ from 
  Corollary~\ref{cor_nonexistence_arithmetic_progression_2} and Corollary~\ref{cor_nonexistence_arithmetic_progression}.  
\end{nexercise}
Note that in order to apply Lemma~\ref{lemma_hyperplane_types_arithmetic_progression}, we have to choose a parameter $m\in\N_0$. Given $m$,  
we can easily analyze when $\tau_q(u,\Delta,m)$ is non-positive:
\begin{nlemma}
  \label{lemma_negative_tau}    
  Given a positive integer $m$, we have $\tau_q(u,\Delta,m)\le 0$ iff 
  \begin{eqnarray}
    \label{ie_forbidden_interval}
    &&(q-1)u-(m-q/2)\Delta+\frac{1}{2}\notag\\ 
    &\in& \left[-\frac{1}{2}\sqrt{q^2\Delta^2-4qm\Delta+2q\Delta+1},
    \frac{1}{2}\sqrt{q^2\Delta^2-4qm\Delta+2q\Delta+1}\right].
  \end{eqnarray}
  The last interval is non-empty, i.e., the radiant is non-negative, iff $1\le m\le \left\lfloor(q\Delta+2)/4\right\rfloor$.
  We have $\tau_q(u,\Delta,1)=0$ iff $u=(\Delta-1)/(q-1)$.
\end{nlemma}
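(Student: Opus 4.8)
The plan is to regard $\tau_q(u,\Delta,m)$, for fixed $q$, $\Delta$ and $m$, as a quadratic polynomial in the single variable $u$. Collecting the terms of the definition by powers of $u$ and simplifying the coefficients (using $q^2-2mq+2m-q=(q-1)(q-2m)$), one obtains $\tau_q(u,\Delta,m)=Au^2+Bu+C$ with
\[
  A=(q-1)^2,\qquad B=(q-1)\bigl((q-2m)\Delta+1\bigr),\qquad C=m\Delta\bigl((m-q)\Delta+q-1\bigr).
\]
Since $q\ge 2$ we have $A>0$. Completing the square yields the identity $4A\cdot\tau_q(u,\Delta,m)=(2Au+B)^2-(B^2-4AC)$, so that $\tau_q\le 0$ is equivalent to $(2Au+B)^2\le B^2-4AC$ (and, when $B^2-4AC<0$, this never holds, matching the fact that then $\tau_q>0$ identically).

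The heart of the matter is to evaluate the two sides. A direct expansion gives $2Au+B=2(q-1)\bigl((q-1)u-(m-q/2)\Delta+\tfrac12\bigr)$. The one genuinely computational step is the discriminant: I would factor out $(q-1)^2$ and verify term by term in powers of $\Delta$ that $\bigl((q-2m)\Delta+1\bigr)^2-4m\Delta\bigl((m-q)\Delta+q-1\bigr)=q^2\Delta^2-4qm\Delta+2q\Delta+1$ (the $\Delta^2$-coefficient collapses to $q^2$, the $\Delta$-coefficient to $2q(1-2m)$, and the constant to $1$), so that $B^2-4AC=(q-1)^2\bigl(q^2\Delta^2-4qm\Delta+2q\Delta+1\bigr)$. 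Dividing the inequality $(2Au+B)^2\le B^2-4AC$ by the positive constant $4(q-1)^2$ then turns it into precisely the interval membership asserted in the lemma, with the convention that the interval is empty when the radicand is negative. This discriminant identity is the only real obstacle; everything else is bookkeeping.

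For the non-emptiness claim I would rewrite the radicand as $R:=q^2\Delta^2-4qm\Delta+2q\Delta+1=(q\Delta+1)^2-4qm\Delta$, so that $R\ge 0$ is equivalent to $m\le (q\Delta+1)^2/(4q\Delta)$. Writing $t:=q\Delta\ge 2$ and simplifying gives $(t+1)^2/(4t)=(t+2)/4+1/(4t)$. Since $m$ is an integer, the bound reads $m\le\lfloor (t+2)/4+1/(4t)\rfloor$; because the fractional part of $(t+2)/4$ lies in $\{0,\tfrac14,\tfrac12,\tfrac34\}$ while $0<1/(4t)\le\tfrac18$, adding $1/(4t)$ never crosses an integer, whence the floor equals $\lfloor(q\Delta+2)/4\rfloor$. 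Combined with the standing hypothesis $m\ge 1$ this gives $1\le m\le\lfloor(q\Delta+2)/4\rfloor$.

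Finally, for $m=1$ the quadratic factors: substituting $m=1$ into $A,B,C$ and pulling out $q-1$ gives $\tau_q(u,\Delta,1)=(q-1)\bigl((q-1)u-(\Delta-1)\bigr)(u+\Delta)$, which I would confirm by expanding against the specialized coefficients. As $u+\Delta>0$ for every admissible (i.e.\ non-negative) $u$, the product vanishes exactly when $(q-1)u=\Delta-1$, that is $u=(\Delta-1)/(q-1)$; the second root $u=-\Delta$ is irrelevant, since it corresponds to $n=u+\Delta=0$, the empty multiset.
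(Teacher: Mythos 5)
The paper states this lemma without proof, so there is no proof to compare against; your argument---treating $\tau_q(u,\Delta,m)$ as a quadratic in $u$ with leading coefficient $A=(q-1)^2>0$, completing the square, and computing the discriminant---is the evident intended route, and every computation checks out: $B=(q-1)\bigl((q-2m)\Delta+1\bigr)$, $C=m\Delta\bigl((m-q)\Delta+q-1\bigr)$, $B^2-4AC=(q-1)^2\bigl(q^2\Delta^2-4qm\Delta+2q\Delta+1\bigr)$, and the non-emptiness criterion via $R=(q\Delta+1)^2-4qm\Delta$ together with the observation that $0<1/(4q\Delta)\le 1/8$ never pushes $(q\Delta+2)/4$ past the next integer is sound. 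One point in your favor: read literally, the lemma's final claim is slightly imprecise, since $\tau_q(-\Delta,\Delta,1)=0$ as well; your factorization $\tau_q(u,\Delta,1)=(q-1)\bigl((q-1)u-(\Delta-1)\bigr)(u+\Delta)$ exposes this second root and correctly dismisses it as the degenerate case $n=u+\Delta=0$ of the empty multiset.
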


\begin{backgroundinformation}{Quadratic functions that are non-negative over the integers}We remark that \cite[Theorem 1.B]{bose1952orthogonal} is quite similar to 
Lemma~\ref{lemma_hyperplane_types_arithmetic_progression} and its implications. Actually, their analysis grounds on \cite{plackett1946design} and is strongly related 
to the classical second-order Bonferroni Inequality \cite{bonferroni1936teoria,galambos1977bonferroni,galambos1996bonferroni} in Probability Theory. In simple words, 
the trick of Lemma~\ref{lemma_hyperplane_types_arithmetic_progression} is that $h(h-1)=h^2-h$ is non-negative for every integer $h$. Note that $f(x)=x^2-x$ attains its 
minimum at $x=\tfrac{1}{2}$ with function value $-\tfrac{1}{4}$. So, in some sense we perform a (quadratic) integer rounding cut.
\end{backgroundinformation}

We can also use Corollary~\ref{cor_nonexistence_arithmetic_progression_2} and Corollary~\ref{cor_nonexistence_arithmetic_progression} to show that for all cardinalities 
$n\le rq^{r+1}$ the attainable lengths of $q^r$-divisible sets of points over $\F_q$ are those that are attained by combinations of $(r+1)$-spaces and affine $(r+2)$-spaces, 
cf.\ Exercise~\ref{exercise_upt_to_r_qrp1}. 
\begin{ntheorem}{\cite[Theorem 11]{honold2018partial}}
  \label{thm_exclusion_r_1_to_ovoid}
  Let $\cM$ be a $q^1$-divisible set of points in $\PG(v-1,q)$ with cardinality $n$. If $2\le n\le q^2$, 
  then either $n=q^2$ or $q+1$ divides $n$. Additionally, the non-excluded cases can be realized. 
\end{ntheorem}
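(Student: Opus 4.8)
The plan is to separate the non-existence part (only $n=q^2$ or $(q+1)\mid n$ can occur) from the realizability part, handling non-existence with the two criteria extracted from the standard equations. By Lemma~\ref{lem:ambient_space_unwichtig} I may assume $\cM$ spans its ambient space, so $\cM$ is a set of $n$ points spanning $\PG(k-1,q)$ with $k=\dim(\cM)$. Since $\cM$ is $q$-divisible, $\cM(H)\equiv n\pmod q$ and $0\le\cM(H)\le n$ for every $H\in\cH$. Write $n=u+mq$ with $u=n\bmod q\in\{0,\dots,q-1\}$ and $m=\lfloor n/q\rfloor$; then $2\le n\le q^2$ gives $0\le m\le q$, with $m=q$ exactly for $n=q^2$. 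Setting $e:=m-u$ one has $n=m(q+1)-e$, so $(q+1)\mid n$ iff $e\equiv0\pmod{q+1}$, which in the admissible range $-(q-1)\le e\le q$ means $e=0$. Hence the assertion reduces to showing that the only surviving cases are $e=0$ and $(u,e)=(0,q)$.

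For the regime $e\le-1$ I would apply the linear criterion, Corollary~\ref{cor_nonexistence_arithmetic_progression_2}, with $\Delta=q$. All hyperplane multiplicities lie in $\{u,u+q,\dots,u+mq\}$, so the criterion forces $u<\tfrac{mq}{q-1}$ (the alternative $u=m=0$ is ruled out by $n\ge2$). But $e\le-1$ means $u\ge m+1$, and here $m\le q-1$, so $(m+1)(q-1)\ge mq$ yields $u\ge m+1\ge\tfrac{mq}{q-1}$, a contradiction. This eliminates every case with $m<u$.

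For the regime $e\ge1$ I would use the quadratic criterion, part~(a) of Corollary~\ref{cor_nonexistence_arithmetic_progression}, which gives impossibility whenever $\tau_q(u,q,m)<0$. Inserting $m=u+e$ into the special-case formula for $\tau_q(u,q,m)$ stated there collapses it to
\[
  \tau_q(u,q,m)=u(u-1)+eq\bigl(2u-q^2+q-1\bigr)+e^2q^2 ,
\]
an upward (hence convex) parabola in $e$ for fixed $u$. When $u=0$ this factors as $\tau_q=eq\bigl(eq-q^2+q-1\bigr)$, which is negative for $1\le e\le q-1$ and equals $q^2(q-1)>0$ only at $e=q$, i.e.\ at $n=q^2$. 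When $1\le u\le q-2$ the bound $n\le q^2$ restricts $e$ to $1\le e\le q-u-1$, and convexity reduces the claim $\tau_q<0$ to its two endpoints: at $e=1$ one gets $\tau_q=u(u-1)+2uq-q^3+2q^2-q$, and at the endpoint $m=q-1$ one gets $\tau_q=(q-1)\bigl((q-1)u^2-(q^2-2q-1)u-q\bigr)$. Finally $u=q-1$ admits no $e\ge1$ in range. The main obstacle of the proof is precisely to verify that these two endpoint expressions are negative uniformly in $q$ (and in $u$ over the stated ranges); the convexity reduction is what makes this manageable, leaving only two one-variable estimates, though these still need a short check for the smallest $q$.

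For realizability of the surviving cardinalities I would exhibit explicit sets. The value $n=q^2$ is realized by the affine plane $\chi_V-\chi_H$ in $\PG(2,q)$, a $q$-divisible set of $q^2$ points by Example~\ref{example_affine_space}. For $(q+1)\mid n$ with $2\le n\le q^2$, write $n=t(q+1)$ with $1\le t\le q-1$ and take $t$ pairwise disjoint lines---for instance $t$ members of a line spread of $\PG(3,q)$; their union is a set of $t(q+1)$ points and is $q^{2-1}=q$-divisible by Lemma~\ref{lem:union_subspaces}. This produces all non-excluded values and finishes the argument.
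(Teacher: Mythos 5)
Your proposal is correct and follows essentially the route the paper itself indicates: the theorem is stated there as a citation of \cite[Theorem 11]{honold2018partial}, with the preceding discussion saying precisely that Corollary~\ref{cor_nonexistence_arithmetic_progression_2} and Corollary~\ref{cor_nonexistence_arithmetic_progression} give these exclusions and that the realizable cases come from combinations of $(r+1)$-spaces and affine $(r+2)$-spaces, which is exactly your split into the linear criterion for $u>m$, the quadratic criterion $\tau_q<0$ for $m>u$ (excepting $(u,e)=(0,q)$), and the line-union/affine-plane constructions. The two endpoint estimates you defer do hold (at $e=1$ one gets at most $-q^3+5q^2-10q+6<0$ for $q\ge 3$, and at $m=q-1$ the inner quadratic is $-q(q-2)$ at $u=1$ and $-(q^2-4q+6)$ at $u=q-2$, both negative), so nothing essential is missing.
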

\begin{ntheorem}{\cite[Theorem 12]{honold2018partial}}
  \label{thm_exclusion_q_r}
  For the cardinality $n$ of a $q^r$-divisible set of points in $\PG(v-1,q)$, where $r\in\N$, we have
  $$
    n\notin\left[(a(q-1)+b)\qbin{r+1}{1}{q}+a+1,(a(q-1)+b+1)\qbin{r+1}{1}{q}-1\right],
  $$
  where $a,b\in\mathbb{N}_0$ with $b\le q-2$ and $a\le r-1$. If $n\le rq^{r+1}$, then all other cases can be realized.
\end{ntheorem}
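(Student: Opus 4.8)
The statement splits into a realizability part (every $n\le rq^{r+1}$ outside the listed intervals does occur) and a non-existence part (no $n$ inside an interval occurs), and I would treat the two separately. The realizability direction is purely constructive. An $(r+1)$-space is a $q^r$-divisible set of $[r+1]_q$ points (Example~\ref{ex_simplex_code}) and an affine $(r+2)$-space is a $q^r$-divisible set of $q^{r+1}$ points (Example~\ref{example_affine_space}). Placing several such configurations in general position, i.e.\ with pairwise disjoint supports in a sufficiently large ambient space, and invoking closure under addition for sets (Lemma~\ref{lemma_sum_mult}), every $n$ of the form $u[r+1]_q+vq^{r+1}$ with $u,v\in\N_0$ is realized. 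By the contrapositive of Exercise~\ref{exercise_upt_to_r_qrp1}, every $n\le rq^{r+1}$ lying in none of the forbidden intervals is exactly of this form, so realizability follows; the boundary value $rq^{r+1}+1$ is dispatched by the final assertion of that exercise.

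The non-existence direction is the real content, and I would prove it by induction on $r$. The base case $r=1$ is Theorem~\ref{thm_exclusion_r_1_to_ovoid}. For the inductive step, suppose $\cM$ is a $q^r$-divisible set in $\PG(v-1,q)$ whose cardinality $n$ lies in the interval attached to $(a,b)$, so that $c[r+1]_q+a+1\le n\le (c+1)[r+1]_q-1$ with $c=a(q-1)+b$. By Lemma~\ref{lemma_heritable}, for every hyperplane $H$ the restriction $\cM|_H$ is a $q^{r-1}$-divisible \emph{set} of cardinality $\cM(H)$; hence, by the inductive hypothesis, each $\cM(H)$ must avoid all level-$(r-1)$ forbidden intervals $[c'[r]_q+a'+1,(c'+1)[r]_q-1]$. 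Since every hyperplane multiplicity is congruent to $n$ modulo $q^r$ and $[r+1]_q\equiv[r]_q\pmod{q^r}$, this residue class meets the small non-negative integers only at points that the induction has just excluded. Tracking residues modulo $q^r$ against the numerical semigroup $\langle[r]_q,q^r\rangle$ then produces a value $u_0\equiv n\pmod{q^r}$ with $u_0\le (r-1)q^r$ below which no multiplicity can occur, so that $a_i=0$ for all $i<u_0$.

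With this lower bound in hand I would close the argument with the quadratic condition. Writing $\Delta=q^r$, $u=u_0$ and $m=(n-u_0)/q^r$, the vanishing of the counts $a_i$ for $i<u_0$ is precisely the hypothesis $a_{u_0+h\Delta}=0$ for $h<0$ required in Lemma~\ref{lemma_hyperplane_types_arithmetic_progression}; moreover the third standard equation used there carries no point-multiplicity term exactly because $\cM$ is a set rather than a multiset, which is what makes the quadratic tool available in this setting at all. The plan is then to check that $(u_0,q^r,m)$ meets one of the conditions (a)--(c) of Corollary~\ref{cor_nonexistence_arithmetic_progression}, most often $\tau_q(u_0,q^r,m)<0$, which forces $(q-1)\sum_{h\le m}h(h-1)a_{u_0+h\Delta}<0$ and contradicts the non-negativity of every summand; the degenerate small-$m$ or boundary situations I would instead handle with the linear Corollary~\ref{cor_nonexistence_arithmetic_progression_2}.

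The main obstacle I anticipate is exactly this final sign verification. One must express $u_0$ and $m$ explicitly in terms of $a$, $b$ and $t:=n-c[r+1]_q$, substitute into the explicit formula for $\tau_q$, and show the sign comes out correctly and uniformly over the whole range $a+1\le t\le[r+1]_q-1$. Lemma~\ref{lemma_negative_tau}, which characterizes precisely when $\tau_q\le 0$ and records the extremal identity $\tau_q([r]_q,q^r,1)=0$ (the value $[r]_q$ matching an $r$-space), is the tool I would use to organize the sign analysis and to confirm that the induction has raised $u_0$ just far enough for $\tau_q$ to become non-positive on the entire interval. I expect that the bookkeeping of residues in the previous paragraph and this $\tau_q$-estimate are mutually calibrated: the induction excludes exactly those multiplicities that would otherwise keep $\tau_q$ positive, so the two steps must be carried out in tandem.
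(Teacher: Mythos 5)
Your proposal follows exactly the route this paper sketches for the theorem: the realizability half via combinations of $(r+1)$-spaces and affine $(r+2)$-spaces, closure under addition (Lemma~\ref{lemma_sum_mult}), and the numerical characterization in Exercise~\ref{exercise_upt_to_r_qrp1}; the non-existence half via Corollary~\ref{cor_nonexistence_arithmetic_progression_2} and Corollary~\ref{cor_nonexistence_arithmetic_progression}, applied recursively through Lemma~\ref{lemma_heritable} with Theorem~\ref{thm_exclusion_r_1_to_ovoid} as base case -- precisely the pattern of Example~\ref{ex_proj_8_div_card_33} and Exercise~\ref{ex_first_gaps_4_div_and_8_div}. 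The constructive half of your write-up is complete (your remark about $rq^{r+1}+1$ is superfluous, since the theorem asserts nothing beyond $n\le rq^{r+1}$, but it does no harm).

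The non-existence half, however, remains a plan rather than a proof, and the part you defer is the entire quantitative content of the theorem: one must verify, uniformly over all $a$, $b$ and all $t=n-(a(q-1)+b)[r+1]_q$ in the interval, that the calibrated pair $(u_0,m)$ triggers one of the conditions of Corollary~\ref{cor_nonexistence_arithmetic_progression} or the hypothesis-plus-conclusion of Corollary~\ref{cor_nonexistence_arithmetic_progression_2}. This is exactly the computation the paper itself outsources to \cite[Theorem 12]{honold2018partial}. There is also a misattribution in your setup: Lemma~\ref{lemma_hyperplane_types_arithmetic_progression} does \emph{not} require small multiplicities to be absent -- the convention $a_{u+h\Delta}=0$ concerns negative indices only, and since $h(h-1)\ge 0$ for every $h\in\Z$ its left-hand side is non-negative for \emph{any} split $n=u+m\Delta$, so conditions (a) and (c) yield contradictions unconditionally. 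What genuinely needs the inductively established bound $\cT(\cM)\subseteq\{u_0,u_0+q^r,\dots\}$ is the linear Corollary~\ref{cor_nonexistence_arithmetic_progression_2}. Both tools really are indispensable and their roles are the opposite of what you describe: for $q=2$, $r=3$, $n=33$ every split gives $\tau_2(u,8,m)>0$, so only the linear condition with $u_0=17$, $m=2$ (giving $17\ge 16$) succeeds; while $n=44$ is untouched by the linear condition ($u_0=20<24=m\Delta/(q-1)$) and is killed by $\tau_2(20,8,3)=-4<0$, a split that is admissible in the quadratic lemma whether or not multiplicities $4$ and $12$ had been excluded. So your calibration picture is right in spirit, but the two corollaries' roles must be swapped, and the uniform sign and residue verification still has to be carried out before this constitutes a proof.
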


Similar as the conditions based on a linear and a quadratic polynomial in Lemma~\ref{lemma_hyperplane_types_arithmetic_progression_2} and 
Lemma~\ref{lemma_hyperplane_types_arithmetic_progression}, we can also conclude a condition based on a cubic polynomial. To this end 
we consider an explicit example first. 
\begin{nlemma}
  \label{lemma_no_8_div_52}
  No $2^3$-divisible set of points in $\PG(v-1,2)$ of cardinality $52$ exists.
\end{nlemma}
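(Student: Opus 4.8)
The plan is to pin down the possible hyperplane multiplicities, set up the standard equations augmented by a fourth (cubic) standard equation, and then extract a contradiction from the non-negativity of two hyperplane counts.

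First I would normalise the situation. By Lemma~\ref{lem:ambient_space_unwichtig} we may assume that $\cM$ spans $V\simeq\PG(v-1,2)$ with $v=\dim(\cM)$, and $v\ge 6$ since $[v]_2\ge 52$. As $\cM$ is $2^3$-divisible of cardinality $52$, every hyperplane satisfies $\cM(H)\equiv 52\equiv 4\pmod 8$. Since $0\le \cM(H)\le 51$ (spanning forbids $\cM(H)=52$) and $0\not\equiv 4\pmod 8$, the only a priori possibilities are $\cM(H)\in\{4,12,20,28,36,44\}$. By Lemma~\ref{lemma_heritable} the restriction $\cM|_H$ is a $2^2$-divisible \emph{set} of cardinality $\cM(H)$, and Exercise~\ref{ex_first_gaps_4_div_and_8_div} excludes projective $4$-divisible sets of cardinality $4$ and of cardinality $12$. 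Hence $\cM(H)\in\{20,28,36,44\}$ for every hyperplane $H$; this reduction to exactly four values is what will make the forthcoming linear system square.

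Next I would denote by $a_{20},a_{28},a_{36},a_{44}$ the numbers of hyperplanes of each multiplicity and assemble four linear equations in them. Three are the standard equations of Lemma~\ref{lemma_standard_equations}, in which the point-multiplicity terms $\lambda_j$ ($j\ge 2$) all vanish because $\cM$ is a set. The fourth is the cubic analogue obtained by double counting triples of points against hyperplanes: a triple spanning a plane lies in $[v-3]_2$ hyperplanes while a collinear triple lies in $[v-2]_2$ hyperplanes, so
\[
  \sum_i {i\choose 3} a_i={52\choose 3}[v-3]_2+C_3\cdot 2^{v-3},
\]
where $C_3\ge 0$ counts the lines contained in $\supp(\cM)$ (over $\F_2$ a line has exactly three points). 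Writing $Q:=2^{v-3}$ and $c:=C_3$, a routine elimination in this $4\times 4$ system yields in particular
\[
  a_{36}=6-\tfrac{17Q}{64}-\tfrac{3cQ}{512},\qquad a_{44}=-4+\tfrac{7Q}{64}+\tfrac{cQ}{512}.
\]

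Finally I would invoke non-negativity twice. The combination $a_{36}+3a_{44}=\tfrac{Q}{16}-6$ is independent of $c$, so $a_{36},a_{44}\ge 0$ forces $Q\ge 96$. On the other hand $a_{36}+a_{44}=2-\tfrac{5Q}{32}-\tfrac{cQ}{256}\ge 0$ gives $\tfrac{5Q}{32}\le 2$, i.e.\ $Q\le \tfrac{64}{5}$. These bounds are incompatible, which is the contradiction, so no such set exists. The main obstacle — and the genuinely new ingredient relative to the linear and quadratic conditions — is the fourth equation: one must introduce and justify the cubic standard equation together with its extra non-negative invariant $C_3$, and then observe that the correct non-negative combination of hyperplane counts (here $a_{36}+3a_{44}$) makes $C_3$ cancel, turning two $a_i\ge 0$ constraints into contradictory inequalities for the power $Q=2^{v-3}$.
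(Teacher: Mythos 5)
Your proof is correct, and at its core it is the same argument as the paper's: both set up the first four MacWilliams identities (your three standard equations plus the cubic double count are exactly these, with your line count $C_3$ playing the role of $B_3$ and $cQ$ the role of the paper's non-negative unknown $x=2^{v-3}B_3$) and then exhibit an LP infeasibility certificate. Indeed, your expressions for $a_{36}$ and $a_{44}$ coincide with the paper's formulas for $A_{16}$ and $A_8$ after substituting $A_{40}=A_{48}=0$, $Q=y$, $cQ=x$ there. Two differences are worth recording. First, you eliminate the weights $40$ and $48$ (hyperplane multiplicities $12$ and $4$) up front via Lemma~\ref{lemma_heritable} and the non-existence of projective $4$-divisible binary codes of lengths $4$ and $12$; the paper does not need this step, since it keeps $A_{40},A_{48}$ as variables and its final identity $A_{16}+\tfrac{31}{20}A_8=-\tfrac{1}{5}-\tfrac{49}{20}A_{40}-\tfrac{44}{5}A_{48}-\tfrac{123}{1280}y-\tfrac{29}{10240}x$ absorbs them through their negative coefficients. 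Second, your infeasibility certificate has a different shape: two non-negative combinations giving contradictory bounds on $Q=2^{v-3}$, namely $a_{36}+3a_{44}=\tfrac{Q}{16}-6\ge 0$ forcing $Q\ge 96$ against $a_{36}+a_{44}=2-\tfrac{5Q}{32}-\tfrac{cQ}{256}\ge 0$ forcing $Q\le\tfrac{64}{5}$, whereas the paper uses a single manifestly negative combination. Both are valid dual certificates for essentially the same linear system; your pre-exclusion buys a square $4\times 4$ system and a certificate in which $c$ cancels, at the price of importing the residual-code exclusion results, while the paper's version is self-contained within the MacWilliams system.
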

\begin{proof}
  Using the abbreviation $y=2^{v-3}$ the first four MacWilliams identities, see Equation~(\ref{eq_macwilliams}), are given by
  {\footnotesize\begin{eqnarray*}
    A_0+A_8+A_{16}+A_{24}+A_{32} +A_{40}+A_{48} &=& 8y \\
    {52\choose1}+{44\choose1}A_8+{36\choose1}A_{16}+{28\choose1}A_{24}+{20\choose1}A_{32} +{12\choose1} A_{40} +{4\choose1} A_{48} &=& 4y\cdot 52 \\
    {52\choose2} +{44\choose2} A_8+{36\choose2} A_{16}+{28\choose2} A_{24}
    +{20\choose2} A_{32} +{12\choose2} A_{40} +{4\choose2} A_{48} &=& 2y\cdot {52\choose2}  \\
    {52\choose3} +{44\choose3} A_8+{36\choose3} A_{16}
    +{28\choose3} A_{24}+{20\choose3} A_{32} +{12\choose3} A_{40} +{4\choose3} A_{48}&=& y\cdot\left({52\choose3} +B_3\right)
  \end{eqnarray*}} 
  Substituting $x=y\cdot B_3$ and rearranging yields
  \begin{eqnarray*}
    A_8 &=& -4+A_{40}+4A_{48}+\frac{1}{512} x+\frac{7}{64}y\\
    A_{16} &=& 6-4A_{40}-15A_{48}-\frac{3}{512} x-\frac{17}{64}y\\
    A_{24} &=& -4+6A_{40}+20A_{48}+\frac{3}{512} x+\frac{397}{64}y\\
    A_{32} &=& 1-4A_{40}-10A_{48}- \frac{1}{512} x+\frac{125}{64}y.
  \end{eqnarray*}
  With this we compute
  $$
    A_{16}+\frac{31}{20}A_8 = -\frac{1}{5} - \frac{49}{20} A_{40} -\frac{44}{5}A_{48} -\frac{123}{1280}y - \frac{29}{10240}x,
  $$
  which contradicts $A_8,A_{16},A_{40},A_{48},x,y\ge 0$.
\end{proof}
We remark that Lemma~\ref{lemma_no_8_div_52} generalizes Example~\ref{example_no_8_div_52_10_2_code} and Example~\ref{example_multipliers} dealing 
with all dimensions $v$, encoded in $y=2^{v-3}$, simultaneously. To this end we have replaced the non-linear $y\cdot B_3$ by a new variable $x$, which 
relaxes the problem on the one hand but turns the problem into a linear one on the other hand. 

\begin{nremark}
  The non-existence of a $2^3$-divisible set of cardinality $n=52$ implies several upper bounds for partial spreads, see Section~\ref{sec_partial_spreads} 
  and in particular Lemma~\ref{lemma_partial_spread_div_bound}. More precisely, we e.g.\ have $129\le A_2(11,8;4)\le 132$, $2177\le A_2(15,8;4) \le 2180$, 
  and $34945\le A_2(19,8;4) \le 34948$. 
\end{nremark}

The underlying idea of the proof of Lemma~\ref{lemma_no_8_div_52} can be generalized. Choosing a suitable basis for the first four MacWilliams equations, the 
multiplication with the inverse of a suitable $4\times 4$-matrix, cf.~Lemma~\ref{lemma_binomial_matrix_3} yields:

\medskip

\begin{trailer}{A {\lq\lq}cubic{\rq\rq} condition}
\vspace*{-6mm}
\begin{nlemma}
  \label{lemma_implication_fourth_mac_williams}
  Let $t\in\mathbb{Z}$ be an integer and $\cK$ be $\Delta$-divisible  arc of cardinality $n>0$ in $\PG(v-1,q)$. Then, we have
  $$
    \sum_{i\ge 1} \Delta^2(i-t)(i-t-1)\cdot (g_1\cdot i+g_0)\cdot A_{i\Delta}\,\,+qhx 
    = n(q-1)(n-t\Delta)(n-(t+1)\Delta)g_2,
  $$
  where $x\in\R_{\ge 0}$, $g_1=\Delta qh$, $g_0=-n(q\!-\!1)g_2$, $g_2=h-\left(2\Delta qt\!+\!\Delta q\!-\!2nq\!+\!2n\!+\!q\!-\!2\right)$ and 
  $$
    h= \Delta^2q^2t^2+\Delta^2q^2t-2\Delta nq^2t-\Delta nq^2+2\Delta nqt+n^2q^2+\Delta nq-2n^2q+n^2+nq-n.  
  $$ 
\end{nlemma}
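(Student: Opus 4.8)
\emph{Proof idea.}
The plan is to run the same machine as in the quadratic condition of Lemma~\ref{lemma_hyperplane_types_arithmetic_progression}, but with one extra equation, so that a \emph{cubic} in the weight index can be manufactured. First I would start from the first four MacWilliams equations in the binomial form~(\ref{eq_macwilliams}) for $s=0,1,2,3$, specialised to the situation at hand: since $\cK$ is $\Delta$-divisible only the weights $0$ and $i\Delta$ with $i\ge 1$ occur, so I may normalise $A_0=B_0=1$, and since we deal with a full-length projective configuration the dual weights satisfy $B_1=0$ and $B_2=0$, leaving $B_3$ as the only surviving dual coefficient. In equation number $s$ the codeword count $A_{i\Delta}$ carries the coefficient $\binom{n-i\Delta}{s}$, which is a polynomial of degree $s$ in $i$.

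Because $\bigl\{\binom{n-i\Delta}{s}\bigr\}_{s=0}^{3}$ is a basis of the space of polynomials of degree $\le 3$ in $i$, there is a unique rational combination $\sum_{s=0}^3 c_s\cdot(\text{equation }s)$ whose coefficient of $A_{i\Delta}$ equals the prescribed cubic $\Delta^2(i-t)(i-t-1)(g_1 i+g_0)$. The multipliers $c_s$ are obtained by inverting the associated $4\times 4$ matrix built from the falling factorials $1,\;i,\;i(i-1),\;i(i-1)(i-2)$, i.e.\ exactly the four-dimensional analogue of the $3\times 3$ inverse recorded in Lemma~\ref{lemma_binomial_matrix_3}, whose rows $1,\,a,\,a^2-a$ already display the first three of these. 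The factor $(i-t)(i-t-1)$ is chosen on purpose: being a product of two consecutive integers it is $\ge 0$ for every $i\in\Z$, which is precisely what turns the resulting identity into a usable non-existence tool, just as $h(h-1)$ did in Lemma~\ref{lemma_hyperplane_types_arithmetic_progression}.

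Forming this combination, the left-hand side becomes the $i=0$ term $\Delta^2 t(t+1)g_0\cdot A_0$ plus $\sum_{i\ge 1}\Delta^2(i-t)(i-t-1)(g_1 i+g_0)A_{i\Delta}$, while the right-hand side is a combination of the numbers $q^{k-s}\binom{n}{s}$ coming from $B_0$ together with the single contribution $c_3 q^{k-3}B_3$. The next step is to clear the dimension $k$: using the zeroth equation $q^{k}=\sum_j A_j$ one eliminates every power of $q^{k}$ from the right-hand side, after which the purely numerical part collapses into the closed form $n(q-1)(n-t\Delta)(n-(t+1)\Delta)g_2$. The remaining $B_3$-contribution is a positive multiple of $q^{k-3}B_3\ge 0$; abbreviating it by the nonnegative variable $x$ yields the summand $qhx$ and completes the asserted identity.

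The real obstacle is the bookkeeping in the last two steps. One must verify that the cubic-matching multipliers $c_s$ reproduce \emph{exactly} the displayed quartic $h$, and that after the $q^{k}$-elimination the constant reorganises into $n(q-1)(n-t\Delta)(n-(t+1)\Delta)g_2$ with $g_1=\Delta qh$, $g_0=-n(q-1)g_2$ and the stated $g_2$. Controlling the normalising factor $3!$ hidden inside $\binom{n-i\Delta}{3}$, and keeping all signs straight, is where essentially all of the effort goes; once this algebra is carried out, the identity is a formal consequence of the four MacWilliams equations together with $A_i\ge 0$ and $B_3\ge 0$.
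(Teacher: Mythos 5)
Your overall plan coincides with the paper's own (the paper gives only a one-sentence derivation: generalize the computation of Lemma~\ref{lemma_no_8_div_52} by taking a suitable linear combination of the first four MacWilliams equations, the multipliers coming from the inverse of a $4\times 4$ matrix in the spirit of Lemma~\ref{lemma_binomial_matrix_3}), and you correctly identify the two nonnegativity sources $(i-t)(i-t-1)\ge 0$ and $B_3\ge 0$. However, two of the steps you describe would fail if executed literally, and they are exactly where the content of $h$, $g_2$, $g_0$, $g_1$ lies. First, the dimension cannot be cleared by substituting $q^{k}=\sum_j A_j$ from the zeroth equation: the pure dimension term of the combination is $q^{k}\sum_{s} c_s q^{-s}{n\choose s}$, so this substitution subtracts the constant $\sum_s c_s q^{-s}{n\choose s}$ from \emph{every} $A_j$-coefficient and destroys the prescribed cubic --- unless that constant is already zero. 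That it \emph{is} zero precisely for the stated ratio $g_0:g_1$ is the actual mechanism (equivalently: keep $y=q^{k-3}$ as an extra nonnegative unknown, as in the proof of Lemma~\ref{lemma_no_8_div_52}, and show that its coefficient vanishes), and this vanishing condition is what forces the displayed formulas for $h$ and $g_2$. Your proposal takes $g_0,g_1$ as given and never verifies this property, so the heart of the lemma is assumed rather than proved.

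Second, the bookkeeping does not \emph{collapse} to the stated right-hand side. With the prescribed coefficient polynomial $P(j)=(j-t\Delta)\bigl(j-(t+1)\Delta\bigr)\bigl(g_1 j/\Delta+g_0\bigr)$, the constant produced by the combination is $-P(0)=\Delta^2 t(t+1)\,n(q-1)g_2$, i.e.\ the quadratic evaluated at $j=0$, not $n(q-1)(n-t\Delta)(n-(t+1)\Delta)g_2$, its evaluation at $j=n$. You can check this against the paper's own data for $q=2$, $\Delta=8$, $n=52$, $t=3$: the four MacWilliams equations force
\[
104448\,A_8+43008\,A_{16}+67584\,A_{40}+227328\,A_{48}+48\,q^{k-3}B_3=-159744,
\]
whereas the right-hand side of the statement is $-116480$. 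So your derivation, carried out faithfully, terminates at a (perfectly usable) identity with $x=6\,q^{k-3}B_3\ge 0$ but with constant $\Delta^2t(t+1)n(q-1)g_2$; to reach the stated formula, the dimension-free difference $n^2(q-1)\bigl(n-(2t+1)\Delta\bigr)g_2$ must additionally be pushed into the slack term $qhx$, and nothing in your argument controls the sign of that quantity for arbitrary $t\in\Z$. Recognizing this mismatch and either justifying the absorption or noting that the identity your method actually yields (which still implies Corollary~\ref{cor_implication_fourth_mac_williams}) carries $\Delta^2t(t+1)$ where the statement carries $(n-t\Delta)(n-(t+1)\Delta)$ is the genuinely missing step; calling it mere bookkeeping papers over the one place where the computation and the claimed formula diverge.
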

\begin{ncorollary}
  \label{cor_implication_fourth_mac_williams}
  Using the notation of Lemma~\ref{lemma_implication_fourth_mac_williams}, if $n/\Delta\notin [t,t+1]$, $h\ge 0$, and $g_2<0$, 
  then there exists no $\Delta$-divisible arc $\cK$ of cardinality $n$ in $\PG(v-1,q)$.
\end{ncorollary}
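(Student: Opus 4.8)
The plan is to turn the identity of Lemma~\ref{lemma_implication_fourth_mac_williams} into a contradiction by a pure sign analysis. Suppose, for contradiction, that a $\Delta$-divisible arc $\cK$ of cardinality $n$ in $\PG(v-1,q)$ exists under the three stated hypotheses $n/\Delta\notin[t,t+1]$, $h\ge 0$, and $g_2<0$. Then the displayed equation of the lemma holds for $\cK$, and I would argue that its left-hand side is non-negative while its right-hand side is strictly negative, which is the desired contradiction.

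First I would verify non-negativity of the left-hand side term by term. In each summand the factor $\Delta^2$ is positive, the factor $(i-t)(i-t-1)$ equals $j(j-1)$ with $j=i-t\in\Z$ and is therefore non-negative for every integer $i$, and $A_{i\Delta}\ge 0$ since these are weight counts of the associated code. It remains only to control the sign of $g_1 i+g_0$ over the summation range $i\ge 1$. Here the hypothesis $h\ge 0$ gives $g_1=\Delta q h\ge 0$, while the hypothesis $g_2<0$ together with $n>0$ and $q\ge 2$ gives $g_0=-n(q-1)g_2>0$; hence $g_1 i+g_0>0$ for all $i\ge 1$. Finally the extra term $qhx$ is non-negative because $q>0$, $h\ge 0$, and $x\ge 0$. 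Thus the entire left-hand side is non-negative.

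Next I would pin down the sign of the right-hand side $n(q-1)(n-t\Delta)(n-(t+1)\Delta)g_2$. The assumption $n/\Delta\notin[t,t+1]$ means either $n<t\Delta$, in which case both factors $n-t\Delta$ and $n-(t+1)\Delta$ are negative, or $n>(t+1)\Delta$, in which case both are positive; either way their product is strictly positive. Combining this with $n>0$, $q-1>0$, and $g_2<0$ shows the right-hand side is strictly negative. This contradicts its equality with the non-negative left-hand side, so no such arc $\cK$ can exist.

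I do not anticipate a genuine obstacle, since the substantive computation lives in Lemma~\ref{lemma_implication_fourth_mac_williams}, which I am permitted to assume. The only points requiring a little care are the uniform strict positivity of $g_1 i+g_0$ on $i\ge 1$---which is exactly what the two sign hypotheses $h\ge 0$ and $g_2<0$ guarantee via $g_1\ge 0$ and $g_0>0$---and the observation that $n/\Delta\notin[t,t+1]$ is precisely the condition that forces the quadratic $(n-t\Delta)(n-(t+1)\Delta)$ to be positive. In other words, the three hypotheses are tailored so that the left-hand side is provably $\ge 0$ and the right-hand side is provably $<0$, leaving no room for a solution.
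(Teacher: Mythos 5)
Your proof is correct and is essentially identical to the paper's argument: both establish the contradiction by showing the left-hand side of the identity in Lemma~\ref{lemma_implication_fourth_mac_williams} is non-negative (via $(i-t)(i-t-1)\ge 0$, $g_1\ge 0$ from $h\ge 0$, and $g_0\ge 0$ from $g_2<0$) while the right-hand side is strictly negative (via $(n-t\Delta)(n-(t+1)\Delta)>0$ from $n/\Delta\notin[t,t+1]$ together with $g_2<0$). Your write-up merely spells out the sign checks in more detail than the paper does.
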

\begin{proof}
  First we observe $(i-t)(i-t-1)\ge 0$, $(n-t\Delta)(n-(t+1)\Delta)> 0$, and $g_1\ge 0$. Since $g_2<0$, we have $g_0\ge 0$ 
  so that $g_1i+g_0\ge 0$. Thus, the entire left hand side is non-negative and the right hand side is negative -- a contradiction.
\end{proof}
\end{trailer}
Applying Corollary~\ref{cor_implication_fourth_mac_williams} with $t=3$ gives Lemma~\ref{lemma_no_8_div_52}. Note that in Example~\ref{example_multipliers} 
we have only used the first three MacWilliams equations. As a further example we consider the parameters $q=2$, $\Delta=2^4=16$, and $n=235$. The condition 
$n/\Delta\notin [t,t+1]$ excludes $t=14$. The condition $h\ge 0$ is satisfied for all integers $t$ since the excluded interval $(6.700,6.987)$ contains 
no integer. The condition $g_2<0$ just allows to choose $t=7$, which also satisfies $qh\ge -g_0$. 

We can perform a closer analysis in order to develop computational cheap checks. We have $g_2<0$ iff
\begin{equation}
  n\in\left( \frac{\Delta qt+\frac{\Delta q}{2}-\frac{3}{2}-\frac{1}{2}\cdot\sqrt{\omega}}{q-1}, \frac{\Delta qt+\frac{\Delta q}{2}-\frac{3}{2}+\frac{1}{2}\cdot\sqrt{\omega}}{q-1}\right),
\end{equation}
where $\omega=\Delta^2q^2-4qt\Delta-2\Delta q+4q+1$. Thus, $\omega>0$, i.e., we have $$t\le \left\lfloor\frac{q\Delta-2}{4}+\frac{1}{\Delta}+\frac{1}{4q\Delta}\right\rfloor.$$ 
We have $h\ge 0$ iff
\begin{equation}
  n\notin\left( \frac{\Delta qt+\frac{\Delta q}{2}-\frac{1}{2}-\frac{1}{2}\cdot\sqrt{\omega-4q}}{q-1}, \frac{\Delta qt+\frac{\Delta q}{2}-\frac{1}{2}+\frac{1}{2}\cdot\sqrt{\omega-4q}}{q-1}\right).
\end{equation}
The most promising possibility, if not the only at all, seems to be 
\begin{equation}
  n\in \Big( \frac{\Delta qt+\frac{\Delta q}{2}-\frac{3}{2}-\frac{1}{2}\cdot\sqrt{\omega}}{q-1},\frac{\Delta qt+\frac{\Delta q}{2}-\frac{1}{2}-\frac{1}{2}\cdot\sqrt{\omega-4q}}{q-1}\Big],
\end{equation}
which allows the choice of at most one integer $n$. In our example $q=2$, $\Delta=2^4=16$ the possible $n$ for $t=1,\dots,7$ correspond to 
$33,66,99,132,166,200,235$, respectively. The two other conditions are automatically satisfied.

\begin{nexercise}
  \label{ex_32_div}
  Show that no projective $2^5$-divisible $[n,k]_2$-code with
  $$
    n \in \left\{325, 390, 456, 521, 587, 652, 718, 784, 850, 917, 985\right\}
  $$
  exists.  
\end{nexercise}

\begin{nlemma}
  \label{lemma_exclusion_n_89_q_3_Delta_9}
  No $3^2$-divisible set of points in $\PG(k-1,3)$ of cardinality $89$ exists.
\end{nlemma}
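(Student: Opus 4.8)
The plan is to run the linear-programming argument of Lemma~\ref{lemma_no_8_div_52}, pushed one MacWilliams equation further. By Lemma~\ref{lem:ambient_space_unwichtig} I may assume that a hypothetical $3^2$-divisible set $\cM$ of cardinality $89$ is spanning and lives in $\PG(v-1,3)$ for some $v\ge 5$ (since $[v]_3\ge 89$ forces $v\ge 5$); throughout I keep $y:=3^{\,v-2}$ as a free parameter, so that one computation settles all dimensions at once. Since $\#\cM=89\equiv 8\pmod 9$, every hyperplane multiplicity lies in $\{8,17,26,\dots,80\}$, and by Lemma~\ref{lemma_heritable} each restriction $\cM|_H$ is a $3$-divisible set of cardinality $\cM(H)$. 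Theorem~\ref{thm_exclusion_r_1_to_ovoid} and Theorem~\ref{thm_exclusion_q_r} rule out no further value here, so the nonzero weights are exactly a subset of $\{9,18,\dots,81\}$ and no weight can be deleted in advance.

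First I would record why the cubic tool of Corollary~\ref{cor_implication_fourth_mac_williams} does not reach this length: the general exclusion of Theorem~\ref{thm_exclusion_q_r} covers only $n\le rq^{r+1}=54$, and for $q=3$, $\Delta=9$ one checks that the quantity $g_2$ of Lemma~\ref{lemma_implication_fourth_mac_williams} stays nonnegative for \emph{every} integer $t$ (the $g_2<0$ window for $t=6$ already closes near $n\approx 88.6$, just short of $89$, while $t=7$ is inadmissible). Consequently the first four MacWilliams equations admit a nonnegative solution and cannot certify non-existence; indeed, concentrating the hyperplane mass near the mean multiplicity $89\cdot[v-1]_3/[v]_3\approx 29.7$, i.e.\ on $m=26$ and $m=35$, already solves the first four equations. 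Hence I would write down the first five MacWilliams equations (Equation~(\ref{eq_macwilliams}) for $i=0,\dots,4$) with $B_0=1$, $B_1=B_2=0$ (the code is projective) and $B_3,B_4\ge 0$ as free dual variables, the $B_3$-term entering at order $3$ and the crucial $B_4$-term at order $4$.

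I would then solve this system for the coefficients attached to the five smallest weights in terms of the remaining spectrum, of $y$, and of $B_3,B_4$ — the move executed with a $4\times4$ matrix in Lemma~\ref{lemma_no_8_div_52} and governed by the inverse-matrix identity of Lemma~\ref{lemma_binomial_matrix_3}, now in its $5\times5$ analogue. The target is an explicit nonnegative rational combination of the five equations whose left-hand side is a nonnegative combination of the (nonnegative) spectrum entries, of $B_3,B_4\ge 0$, and of $y>0$, while its right-hand side is a strictly negative constant; as in Example~\ref{example_multipliers}, the multipliers can be read off numerically from the dual LP and then verified in exact arithmetic. That the extra equation really bites is visible already in the sample solution above: substituting its values into the fifth equation forces $B_4<0$, contradicting $B_4\ge 0$.

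The main obstacle is exactly this final certification: one must produce multipliers making \emph{every} coefficient nonnegative simultaneously and, crucially, \emph{uniformly in the formal parameter} $y=3^{\,v-2}$ rather than for one fixed dimension. This is delicate precisely because $89$ is a borderline length that survives the cubic condition by a hair, so the contradiction can only be manufactured from the additional inequality $B_4\ge 0$ (and, if five equations should not suffice, from $B_5\ge 0$ as well), and the required combination must be found by hand or by LP duality rather than supplied by a ready-made corollary. Should a single uniform certificate prove awkward, the natural fallback is to peel off the small dimensions by the quadratic condition — for $v=5$ the choice $u=26$, $m=7$ gives $\tau_3(26,9,7)=2$, and since $2\cdot 3^{\,v-2}$ is not divisible by $\Delta^2=81$ when $v=5$, Corollary~\ref{cor_nonexistence_arithmetic_progression}(b) already excludes that case — and to dispatch the remaining large-$v$ regime, where this divisibility obstruction disappears, by the five-equation combination above.
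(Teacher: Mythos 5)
There is a genuine gap, and it sits exactly at the step you flag as the ``main obstacle'': the certificate you are looking for does not exist. Your plan is to exhibit a nonnegative combination of the first five MacWilliams equations whose left-hand side is a nonnegative combination of the $A_i$, of $B_3,B_4$ (i.e.\ of the products $yB_3$, $yB_4$) and of $y$, and whose right-hand side is strictly negative --- a Farkas certificate of infeasibility with the dimension parameter treated as a free nonnegative variable. But that relaxed system is \emph{feasible}: the paper's proof (in the notation $x=3^{k-4}$, $y=3^{k-4}B_3$, $z=3^{k-4}B_4$) shows that the five equations together with nonnegativity force the \emph{unique} solution $A_{54}=6230$, $A_{63}=9078$, $x=189$, $B_3=178$, $B_4=0$, all other $A_i=0$. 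A feasible linear system admits no Farkas certificate, so no choice of multipliers --- uniform in the dimension parameter or not --- can yield the contradiction you want; and the failure persists in the large-$v$ regime of your fallback, because the fictitious solution sits at $3^{v-2}=9\cdot 189=1701$, strictly between $3^6$ and $3^7$. This is precisely why the paper's contradiction is not an LP infeasibility at all: from the five equations one extracts a positive-combination identity equal to \emph{zero} (not to a negative number), which forces $A_9=A_{18}=A_{27}=A_{36}=A_{45}=A_{72}=A_{81}=z=0$; the remaining system then pins down $x=189=3^3\cdot 7$, and the contradiction is the purely arithmetic fact that $189$ is not a power of three, so no integer dimension $k$ realizes the unique solution.

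Your preliminary analysis is otherwise sound: the cubic condition of Corollary~\ref{cor_implication_fourth_mac_williams} indeed just misses $n=89$, the first four equations are indeed insufficient, and your exclusion of $v=5$ via $\tau_3(26,9,7)=2$ and Corollary~\ref{cor_nonexistence_arithmetic_progression}(b) is correct --- in fact the same choice $m=7$ in Lemma~\ref{lemma_hyperplane_types_arithmetic_progression} disposes of all $v\le 8$, since for $6\le v\le 8$ the right-hand side $2\cdot 3^{v-6}-42$ is negative while the left-hand side is nonnegative. To repair the proof you must either (i) follow the paper: solve the five equations exactly, observe the forced unique solution, and contradict $x=3^{k-4}$ being a power of three; or (ii) keep the LP language but split into two regimes around the feasible point, producing one certificate valid for $3^{v-2}\le 3^6$ and a separate one for $3^{v-2}\ge 3^7$ in which the bound on the dimension parameter enters as an inequality with strictly positive multiplier. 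A single certificate uniform in a free $y$ is impossible.
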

\begin{proof}
  We set $x=3^{k-4}$, $y=3^{k-4}\cdot B_3$, and $z=3^{k-4}\cdot B_4$. Solving the first five MacWilliams equations 
  for $A_{9}$, $A_{54}$, $A_{63}$, $x$, and $y$ yields the equation
  %% Maple code:
  %% solve({A[0] = 1, sum(A[9*i], i = 0 .. 9)-3^4*x = 0, sum((89-9*i)*A[9*i], i
  %% = 0 .. 9)-89*3^3*x = 0, sum(binomial(89-9*i, 2)*A[9*i], i = 0 ..
  %% 9)-3^2*binomial(89, 2)*x = 0, sum(binomial(89-9*i, 3)*A[9*i], i = 0 ..
  %% 9)-3*binomial(89, 3)*x-3*y = 0, sum(binomial(89-9*i, 4)*A[9*i], i = 0 ..
  %% 9)-3^0*binomial(89, 4)*x-86*y-z = 0}, {x, y, A[0], A[9], A[54], A[63]});
  \begin{eqnarray*}
    99630A_9+121905A_{18}+99873A_{27}+60021A_{36}&&\\ 
    +22275A_{45}+22518A_{72}+61236A_{81}+z&=&0,
  \end{eqnarray*}
  so that $A_9=A_{18}=A_{27}=A_{36}=A_{45}=A_{72}=A_{81}=z=0$. With that, the equation system has the unique solution 
  $x=189$, $y=33642$, $A_{54}=6230$, and $A_{63}=9078$. However, $189$ is not a power of three, but $x=3^{k-4}$.
\end{proof}
We remark that for the parameters of Lemma~\ref{lemma_exclusion_n_89_q_3_Delta_9} the first four MacWilliams equations 
permit non-negative rational solutions for all dimensions $9\le k\le 89$. When adding the fifth MacWilliams equation, the 
corresponding polyhedron gets empty.

\begin{nexercise}
  \label{ex_implement_exclusion_lemmas}
  Implement the non-existence criteria for lengths of projective $q^r$-divisible codes over $\F_q$ presented in this section, cf.\ 
  Lemma~\ref{lemma_partial_picture_q_2_r_6}, Lemma~\ref{lemma_partial_picture_q_3_r_3}, and Lemma~\ref{lemma_partial_picture_q_5_r_2}.
\end{nexercise}
%% see holes.cpp

\begin{question}{Research problem}Conclude a general {\lq\lq}quartic condition{\rq\rq} from the linear programming method covering the parameters 
of Lemma~\ref{lemma_exclusion_n_89_q_3_Delta_9}.   
\end{question}

\chapter{Lengths of projective $q^r$-divisible codes}
\label{sec_lengths_projective_q_r}
The aim of this section is to summarize the current knowledge on the possible lengths of projective $q^r$-divisible codes. Even for small parameters 
there are several lengths where the existence of a corresponding code still remains undecided. This leaves plenty of space for own research, i.e., new constructions, 
cf.\ Section~\ref{sec_constructions_projective}, and more sophisticated techniques for non-existence proofs, cf.\ Section~\ref{sec_nonexistence_projective_q_r}, are needed. 

We will give brief proofs for our subsequent results. All of them are constructed in the same manner. On the constructive side we list some {\lq\lq}base examples{\rq\rq}, i.e., 
examples for some small cardinalities/lengths. Specific parametric series are mentioned explicitly, for more details on the used two-weight codes we refer to 
Subsection~\ref{subsec_two_weight_codes} and Table~\ref{table_two_weight_codes}, and for optimal linear codes we refer to Subsection~\ref{subsec_optimal_codes}  
and Table~\ref{table_optimal_codes}. Explicit generator matrices obtained by computer searches are listed in Subsection~\ref{subsec_computer_searches}. 
Without explicitly stating, we then invoke Lemma~\ref{lemma_sum_mult}, i.e., we use the fact that the set of attainable lengths 
is closed under addition. For the non-existence results we list the utilized results from Section~\ref{sec_nonexistence_projective_q_r}. In the statements 
we explicitly list those cardinalities/lengths where no non-existence results is mentioned and which are not implied by combinations of the base examples. Stating all 
details becomes a bit extensive when the parameters are not rather small. So, for a few medium sized parameters we only state the ranges of excluded cardinalities obtained 
via the methods outlined in Section~\ref{sec_nonexistence_projective_q_r}, cf.\ Exercise~\ref{ex_implement_exclusion_lemmas}. Here $[a,b]$ denotes the list of integers $a,a+1,\dots,b$.

For the binary field the smallest open case is length $130$ for projective $16$-divisible codes. For $q=3$ and $\Delta=9$ the smallest open lengths are $70$ and $77$. If 
$q\ge 5$, then there are even open cases for projective $q$-divisible codes over $\F_q$, e.g., length $40$ for $q=5$.

\begin{nlemma}
  \label{lemma_picture_q_2_r_1}
  Let $\cM$ the a $2^1$-divisible set of $n$ points in $\PG(v-1,2)$, then $n\ge 3$ and all cases can be realized. 
\end{nlemma}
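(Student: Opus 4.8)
The plan is to split the statement into its two halves --- realizability of every $n\ge 3$, and the lower bound $n\ge 3$ for nonempty sets --- and to handle both geometrically via the hyperplane description, keeping in mind that throughout $q=2$ and $\Delta=2^1=2$, so that by Equation~(\ref{eq_divisible_multiset}) and Theorem~\ref{thm_correspondence_codes_multisets} the divisibility condition reads $\cM(H)\equiv\#\cM\pmod 2$ for every hyperplane $H$, equivalently every codeword weight $n-\cM(H)$ is even.

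For realizability I would simply invoke the projective base construction of Exercise~\ref{exercise_projective_base}: given any integer $n\ge 3$, set $k:=n-1\ge 2$ and take the binary $k$-dimensional projective base $\langle\be_1\rangle,\dots,\langle\be_k\rangle,\langle\be_1+\dots+\be_k\rangle$. For $k\ge 2$ these $k+1$ points are pairwise distinct, so this is genuinely a \emph{set} (maximum point multiplicity $1$) of cardinality $k+1=n$, and it is $2$-divisible. This single family already exhibits a $2^1$-divisible set for every $n\ge 3$, so neither further base examples nor the closure Lemma~\ref{lemma_sum_mult} are strictly needed. (Alternatively one could start from the $n=3$ simplex code of Example~\ref{ex_simplex_code}, the $n=4$ affine plane of Example~\ref{example_affine_space}, and the $n=5$ projective base, then close under addition via Lemma~\ref{lemma_sum_mult}; but the projective base alone covers everything.)

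For the lower bound I would rule out $n\in\{1,2\}$ for a nonempty $\cM$. The case $n=1$ is immediate: a single point has ambient dimension $v\ge 1$, and for $v\ge 2$ Lemma~\ref{lemma_average} produces a hyperplane with $\cM(H)<\tfrac12$, hence $\cM(H)=0\not\equiv 1\pmod 2$, while the degenerate case $v=1$ is excluded directly since there the condition reduces to $\#\cM\equiv 0\pmod 2$; equivalently one quotes Corollary~\ref{cor_nonexistence_arithmetic_progression_2} with $u=1$, $m=0$. The genuinely characteristic step --- and the main obstacle, in the sense that it is the only place where projectivity is indispensable --- is excluding $n=2$: note that the \emph{multiset} $2\cdot\chi_P$ of cardinality $2$ is perfectly $2$-divisible, so the argument must use distinctness. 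With $\#\cM=2$ and all multiplicities in $\{0,1\}$, the support consists of two distinct points $P_1\neq P_2$; since distinct points of a projective space are separated by a hyperplane, there is an $H$ with $P_1\in H$ and $P_2\notin H$, giving $\cM(H)=1\not\equiv 0\equiv\#\cM\pmod 2$, a contradiction. Hence every nonempty $2^1$-divisible set satisfies $n\ge 3$, and together with the construction this proves the lemma.
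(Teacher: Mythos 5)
Your proof is correct, but it takes a recognizably different route from the paper in both halves, and the comparison is instructive. For the non-existence part the paper simply cites Theorem~\ref{thm_exclusion_r_1_to_ovoid} (i.e., \cite[Theorem 11]{honold2018partial}), whose case $2\le n\le q^2$ rules out $n=2$ at once; you instead argue from first principles, using the averaging Lemma~\ref{lemma_average} (or Corollary~\ref{cor_nonexistence_arithmetic_progression_2} with $u=1$, $m=0$) for $n=1$ and a hyperplane-separation argument for $n=2$ that explicitly exploits the set property --- a point worth isolating, since $2\cdot\chi_P$ is a perfectly good $2$-divisible \emph{multiset} of cardinality $2$, so any valid exclusion of $n=2$ must use projectivity somewhere. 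For the constructive part the paper lists the three base examples of cardinalities $3$, $4$, $5$ (Example~\ref{ex_simplex_code}, Example~\ref{example_affine_space}, Exercise~\ref{exercise_projective_base}) and then, following the convention announced at the start of Section~\ref{sec_lengths_projective_q_r}, implicitly invokes closure under addition (Lemma~\ref{lemma_sum_mult}); you observe that the projective-base family alone, with $k=n-1$, already realizes every $n\ge 3$ (the corresponding code being the $[n,n-1]_2$ parity check code), so no sum construction is needed. Your version is more self-contained and makes visible exactly where each hypothesis enters; the paper's version is shorter and adheres to the uniform template (base examples, additive closure, cited exclusion results) that it reuses for all the harder lemmas of that section, where no single parametric family covers all admissible lengths. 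Both proofs are complete and correct.
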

\begin{proof}
  The values $n\in\{1,2\}$ are excluded by Theorem~\ref{thm_exclusion_r_1_to_ovoid}. The \textit{base examples} of cardinalities $3$, $4$, and $5$ are given by 
  Example~\ref{ex_simplex_code}, Example~\ref{example_affine_space}, and Exercise~\ref{exercise_projective_base}, respectively.
\end{proof}

\begin{nlemma}
  \label{lemma_picture_q_2_r_2}
  Let $\cM$ the a $2^2$-divisible set of $n$ points in $\PG(v-1,2)$, then $n\in\{7,8\}$ or $n\ge 14$ and all mentioned cases can be realized. 
\end{nlemma}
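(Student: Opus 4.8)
The plan is to split the statement into a non-existence half — no $2^2$-divisible set has cardinality in $\{1,\dots,6\}\cup\{9,\dots,13\}$ — and a realizability half — every $n\in\{7,8\}\cup\{n\ge 14\}$ occurs. For the non-existence half I would simply specialize Theorem~\ref{thm_exclusion_q_r} to $q=2$, $r=2$. There $\qbin{r+1}{1}{q}=\qbin{3}{1}{2}=[3]_2=7$, the constraint $b\le q-2$ forces $b=0$, and $a\le r-1$ leaves $a\in\{0,1\}$. Plugging in $a=0,b=0$ gives the forbidden interval $[1,6]$, while $a=1,b=0$ gives $[(1\cdot 1)\cdot 7+2,\,2\cdot 7-1]=[9,13]$. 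Hence any $2^2$-divisible set of $n\ge 1$ points satisfies $n\in\{7,8\}$ or $n\ge 14$, which is precisely the claimed exclusion. (Alternatively one may invoke Exercise~\ref{ex_first_gaps_4_div_and_8_div}, but Theorem~\ref{thm_exclusion_q_r} is cleaner here.)

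For the realizability half I would first record the two basic base examples. Cardinality $7$ is the point set $\chi_U$ of a $3$-space, which is $2^{3-1}=2^2$-divisible of size $[3]_2=7$ by Example~\ref{ex_simplex_code}. Cardinality $8$ is the affine $4$-space $\chi_{\cP}-\chi_H$ of Example~\ref{example_affine_space} with $k=4$, which is $2^{4-2}=2^2$-divisible of size $2^{3}=8$. Next I would assemble the block of seven consecutive values $\{14,15,\dots,20\}$: the value $14$ is the disjoint union of two copies of the $7$-point simplex (taking the sum in $V_1\times V_2$ keeps the maximum point multiplicity equal to $1$, by Lemma~\ref{lemma_sum_mult}), while $15\le n\le 20$ are delivered directly by Corollary~\ref{cor_spread_switching} with $r=2$, since $2^{2r}-1=15$ and $2^{2r}+2^{r}=20$.

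Finally I would close the argument by addition. By Lemma~\ref{lemma_sum_mult} the set of attainable cardinalities of projective (maximum point multiplicity $1$) $2^2$-divisible sets is closed under addition, so adjoining further disjoint copies of the $7$-point simplex shows this set is closed under adding $7$. Starting from the seven consecutive cardinalities $\{14,15,\dots,20\}$ and repeatedly adding $7$ produces $\{21,\dots,27\}$, $\{28,\dots,34\}$, and so on, covering every integer $n\ge 14$ without gaps. Combined with the realizations of $7$ and $8$, every $n\in\{7,8\}\cup\{n\ge 14\}$ is realized, matching the non-existence half exactly.

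I do not expect a genuine obstacle, since the exact exclusion intervals and the switching construction are already available from the quoted results. The only points requiring care are the bookkeeping that the block $\{14,\dots,20\}$ has length $7$, so that closure under $+7$ leaves no gaps above $14$, and the observation that each union really stays a \emph{set} rather than a multiset — which is exactly what the $\lambda=1$ case of Lemma~\ref{lemma_sum_mult} guarantees.
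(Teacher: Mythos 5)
Your proposal is correct and follows essentially the same route as the paper's proof: exclusion of $[1,6]$ and $[9,13]$ via Theorem~\ref{thm_exclusion_q_r}, base examples $7$ and $8$ from Example~\ref{ex_simplex_code} and Example~\ref{example_affine_space}, the range $15\le n\le 20$ from Corollary~\ref{cor_spread_switching}, and closure under addition via Lemma~\ref{lemma_sum_mult} (which the paper invokes implicitly, as announced at the start of that chapter). You merely spell out the bookkeeping — the specialization $a\in\{0,1\}$, $b=0$ in the exclusion theorem, the realization $14=7+7$, and the gap-free covering of $\N_{\ge 14}$ by adding $7$ to the block $\{14,\dots,20\}$ — which the paper leaves to the reader.
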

\begin{proof}
  The cases $1\le n\le 6$ and $9\le n\le 13$ are excluded by Theorem~\ref{thm_exclusion_q_r}. \textit{Base examples} for cardinalities $7$ and $8$ are given by 
  Example~\ref{ex_simplex_code} and Example~\ref{example_affine_space}. For the range $15\le n\le 20$ we refer to Corollary~\ref{cor_spread_switching}.
\end{proof}

\begin{nlemma}
  \label{lemma_picture_q_2_r_3}
  Let $\cM$ the a $2^3$-divisible set of $n$ points in $\PG(v-1,2)$, then 
  $$n\in\{15,16,30,31,32,45,46,47,48,49,50,51\}$$ 
  or $n\ge 60$ and all cases can be realized. 
\end{nlemma}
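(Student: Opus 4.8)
The plan is to separate the argument into a non-existence part and a realizability part, exactly as in the proofs of Lemma~\ref{lemma_picture_q_2_r_1} and Lemma~\ref{lemma_picture_q_2_r_2}. Throughout $q=2$, $r=3$, so $q^r=8$, $\qbin{4}{1}{2}=15$, and $rq^{r+1}=48$. First I would invoke Theorem~\ref{thm_exclusion_q_r} with $b=0$ and $a\in\{0,1,2\}$: the excluded intervals $[16a+1,15a+14]$ are $[1,14]$, $[17,29]$ and $[33,44]$, and the theorem simultaneously certifies that every non-excluded $n\le 48$ is realizable. Hence up to $48$ the attainable cardinalities are exactly $\{15,16,30,31,32,45,46,47,48\}$. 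The isolated length $n=52$ is ruled out by Lemma~\ref{lemma_no_8_div_52}. So the whole problem reduces to the window $49\le n\le 59$ together with all $n\ge 60$.

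For $53\le n\le 58$ I would apply the quadratic condition of Corollary~\ref{cor_nonexistence_arithmetic_progression}, case (a), with $m=4$ and $u=n-32$. Using the special formula for $\tau_2(u,2^r,m)$ one obtains the clean expression
\[
  \tau_2(u,8,4)=u^2-47u+544,
\]
whose two real roots are $(47\pm\sqrt{33})/2\approx 20.6$ and $\approx 26.4$; thus $\tau_2(u,8,4)<0$ precisely for $u\in\{21,22,23,24,25,26\}$, i.e.\ for $n\in\{53,\dots,58\}$, and each such $n$ is excluded. This is the routine part of the non-existence statement.

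The genuinely delicate case is $n=59$. Here every admissible choice $u\equiv 3\pmod 8$, $m=(59-u)/8\ge 1$ gives $\tau_2(u,8,m)>0$ (the minimum, attained at $u=27$, $m=4$, equals $4$), so the quadratic condition fails; a short computation with $g_2$ shows that the cubic condition of Corollary~\ref{cor_implication_fourth_mac_williams} admits no integer $t$ either. The plan for this length is to first sharpen the admissible hyperplane spectrum: by Lemma~\ref{lemma_heritable} every restriction $\cM|_H$ is a projective $4$-divisible set, so by Lemma~\ref{lemma_picture_q_2_r_2} together with $\cM(H)\equiv 59\equiv 3\pmod 8$ and $\cM(H)<59$ (spanning) we must have $\cM(H)\in\{19,27,35,43,51\}$. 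Imposing $a_i=0$ for $i\notin\{19,27,35,43,51\}$ and feeding this into the first four (resp.\ five) MacWilliams equations then yields an infeasible linear program; alternatively one quotes the exhaustive classification of projective $8$-divisible binary codes in \cite{projective_divisible_binary_codes}. I expect this step — converting the restricted spectrum into an actual contradiction — to be the main obstacle, since the first three standard equations alone leave the system feasible.

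For realizability I would list base examples and close under addition via Lemma~\ref{lemma_sum_mult}. A $4$-space (Example~\ref{ex_simplex_code}) realizes $15$ and an affine $5$-space (Example~\ref{example_affine_space}) realizes $16$, so the sums $15a+16b$ with $a+b\le 3$ give $\{15,16,30,31,32,45,46,47,48\}$; the three remaining small lengths are $49$ (Exercise~\ref{ex_affine_spaces_switching}), the unique $[50,8]_2$-code of Table~\ref{table_8div_enumeration}, and $51$ (Example~\ref{ex_8_div_51}). For $n\ge 60$ I would combine the four-fold sums $15a+16b$ with $a+b=4$, covering $[60,64]$, with Corollary~\ref{cor_spread_switching} (giving $[63,72]$), the $[73,9]_2$ two-weight code of Table~\ref{table_two_weight_codes}, and the explicit $n=74$ generator matrix from Subsection~\ref{subsec_computer_searches}. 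This realizes the full interval $[60,74]$ of $15$ consecutive lengths, and since a $4$-space contributes $15$ points, adding copies of it and inducting realizes every $n\ge 60$.
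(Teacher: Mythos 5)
Your treatment of every case except $n=59$ coincides with the paper's own proof: the intervals $[1,14]$, $[17,29]$, $[33,44]$ are excluded via Theorem~\ref{thm_exclusion_q_r}, $n=52$ via Lemma~\ref{lemma_no_8_div_52}, and $53\le n\le 58$ via the quadratic condition with $m=4$ (your explicit computation $\tau_2(u,8,4)=u^2-47u+544$, with roots $(47\pm\sqrt{33})/2$, is correct). The realizability part also uses exactly the paper's base examples $15$, $16$, $49$, $50$, $51$, the interval $[63,72]$ from Corollary~\ref{cor_spread_switching}, the two-weight code of length $73$, the sporadic length-$74$ code, and closure under addition; making the induction by adding $15$-point solids explicit is fine.

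The genuine gap is $n=59$. Your restriction of the hyperplane multiplicities to $\{19,27,35,43,51\}$ (equivalently, nonzero weights in $\{8,16,24,32,40\}$) is correct, but the claim that feeding this into the first four or five MacWilliams equations "yields an infeasible linear program" is false: the linear program remains feasible, which is precisely why this length was the last open case and required the dedicated paper \cite{honold2019lengths} that the present paper cites. The argument there combines the weight restriction with the projection/secant bound $2^k\le (n^2+n+2)/2$ (reproduced in Section~\ref{sec_dimensions} of this paper), which forces $k\le 10$, and then an exhaustive computer enumeration showing that no projective $2^3$-divisible $[59,\le 10]_2$-code exists; no purely LP-based certificate is known. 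Your fallback citation does not repair this: \cite{projective_divisible_binary_codes} left length $59$ open, and it is settled only in \cite{honold2019lengths}. So as written your proof does not close the case $n=59$; you must either cite \cite{honold2019lengths} for it, as the paper does, or reproduce its dimension-bound-plus-enumeration argument.
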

\begin{proof}
  The cases $1\le n\le 14$, $17\le n\le 29$, and $33\le n\le 44$ are excluded by Theorem~\ref{thm_exclusion_q_r}.
  The case $n=52$ is excluded by Corollary~\ref{cor_implication_fourth_mac_williams} with $t=3$, see also Lemma~\ref{lemma_no_8_div_52}.  
  The cases $53\le n\le 58$ are excluded by Lemma~\ref{lemma_hyperplane_types_arithmetic_progression} using $m=4$. 
  The special case $n=59$ is treated in \cite{honold2019lengths}. 
  
  \textit{Base examples} for cardinalities $15$, $16$, and $49$ are given by Example~\ref{ex_simplex_code}, Example~\ref{example_affine_space}, and 
  Exercise~\ref{ex_affine_spaces_switching}, respectively. The range $63\le n\le 72$ is covered by Corollary~\ref{cor_spread_switching}. There are 
  two-weight codes for cardinalities $n\in\{51,73\}$ and sporadic examples found by computer searches for cardinalities $n\in\{50,74\}$. 
\end{proof}

\begin{nlemma}
  \label{lemma_picture_q_2_r_4}
  Let $\cM$ the a $2^4$-divisible set of $n$ points in $\PG(v-1,2)$, then 
  \begin{eqnarray*}
    n&\in&\{31,32,62,63,64,93,\dots,96,124,\dots,130,155,\dots,165,185,\dots,199,\\&&215,\dots,234,244,\dots,309\}
  \end{eqnarray*} 
  or $n\ge 310$ and all cases, possibly except 
  \begin{eqnarray*}
    n &\in& \{130,163,164,165,185,215,216,232,233,244,245,246,247,\\
               && 274,275,277,278,306,309\},
  \end{eqnarray*}
  can be realized. 
\end{nlemma}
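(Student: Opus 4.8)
The proof will follow the same two-part template as Lemma~\ref{lemma_picture_q_2_r_1}--Lemma~\ref{lemma_picture_q_2_r_3}: first establish the non-existence of every excluded cardinality, then exhibit constructions realizing every non-excluded cardinality outside the undecided set. Throughout I set $q=2$ and $\Delta=2^4=16$, so that the two elementary building blocks are the $5$-space (a $16$-divisible simplex set, Example~\ref{ex_simplex_code}) of cardinality $[5]_2=31$ and the affine $6$-space ($16$-divisible, Example~\ref{example_affine_space}) of cardinality $32$; their combinations already account for the realizable cardinalities below $rq^{r+1}=128$, and closure under addition (Lemma~\ref{lemma_sum_mult}) will be used freely.

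For the non-existence part I would split the complement of the stated set into the ranges into which it naturally decomposes. The four generic gaps $[1,30]$, $[33,61]$, $[65,92]$ and $[97,123]$ are exactly the intervals excluded by Theorem~\ref{thm_exclusion_q_r} with $a\in\{0,1,2,3\}$ and $b=0$, using $[r+1]_q=[5]_2=31$. For the gaps beyond $128$ I would invoke the \emph{quadratic} condition of Lemma~\ref{lemma_hyperplane_types_arithmetic_progression} in the form of Corollary~\ref{cor_nonexistence_arithmetic_progression}: taking $m=5,6,7,8$ (the admissible range being $1\le m\le\lfloor(q\Delta+2)/4\rfloor=8$) one checks that $\tau_2(n-16m,16,m)<0$ holds precisely on $[133,154]$, $[167,184]$, $[201,214]$ and $[236,243]$, respectively. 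The four boundary values $132,166,200,235$ that the quadratic narrowly misses are killed by the \emph{cubic} condition of Corollary~\ref{cor_implication_fourth_mac_williams} with $t=4,5,6,7$; these are exactly the values produced in the worked discussion following that corollary for $q=2$, $\Delta=16$.

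The one cardinality not reached by any of these uniform criteria is $n=131$, and proving its non-existence is the crux of the argument. Here I would first cut down the hyperplane spectrum: by Lemma~\ref{lemma_average} some hyperplane $H$ satisfies $\cM(H)<131/2$, and by Lemma~\ref{lemma_heritable} the restriction $\cM|_H$ is a $2^3$-divisible set whose cardinality must be an admissible length from Lemma~\ref{lemma_picture_q_2_r_3}; since $\cM(H)\equiv 131\equiv 3\pmod{16}$, the only admissible multiplicities are $51,67,83,99,115$, so the whole spectrum of $\cM$ is confined to these five values. The plain quadratic identity with $(u,m)=(51,5)$ then forces a contradiction only when $\dim(\cM)=8$, so the residual difficulty is to exclude all larger dimensions; I expect this to require feeding the reduced spectrum into the fourth standard/MacWilliams equation (a cubic combination) or an iterated restriction argument, and this is the step I anticipate being the most delicate — indeed the analogous isolated length $n=59$ for $2^3$ was settled separately in \cite{honold2019lengths}, and I would be prepared to cite a companion result for $n=131$ should a self-contained argument prove unwieldy.

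For realizability I would assemble a finite list of seed cardinalities and then propagate. The seeds are: $31$ and $32$ above; $129=4\cdot32+1$ from the switching construction of Exercise~\ref{ex_affine_spaces_switching}; the spread-switching interval $[255,272]$ from Corollary~\ref{cor_spread_switching}; the two-weight codes of lengths $196,198,231,234,273,276$ from Table~\ref{table_two_weight_codes}; and the sporadic generator matrices of lengths $161,162,195,197$ from Subsection~\ref{subsec_computer_searches}. Because $q^{r+1}-[r+1]_q=32-31=1$, the switching construction of Lemma~\ref{lemma_switching_construction} raises a realizable cardinality by exactly $1$ whenever the current set contains a $4$-space in its support; applying it to seeds that do (for instance the two-weight codes, so that $199$ arises by a single switch of the length-$198$ code rather than as a sum of smaller realizable lengths) fills the short runs $[155,162]$, $[186,199]$, $[217,234]$ and the long tail, while closure under addition ties the clusters together and ultimately yields every $n\ge 310$. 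The main effort in this half is the bookkeeping at the cluster boundaries: verifying that no realizable cardinality is overlooked and, conversely, that the switching chains stall exactly on the listed set $U=\{130,163,164,165,\dots,306,309\}$, which is precisely why those cardinalities remain flagged as undecided.
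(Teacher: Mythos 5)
Your non-existence half coincides with the paper's own proof: the gaps $[1,30]$, $[33,61]$, $[65,92]$, $[97,123]$ via Theorem~\ref{thm_exclusion_q_r}; the ranges $[133,154]$, $[167,184]$, $[201,214]$, $[236,243]$ via Lemma~\ref{lemma_hyperplane_types_arithmetic_progression} with $m=5,\dots,8$; the values $132,166,200,235$ via Corollary~\ref{cor_implication_fourth_mac_williams} with $t=4,\dots,7$; and $n=131$ is indeed settled by citing a companion result, exactly your fallback option (the paper invokes \cite{kurz2020no131}). Your preparatory analysis of $131$ --- hyperplane multiplicities confined to $\{51,67,83,99,115\}$ via Lemma~\ref{lemma_picture_q_2_r_3}, with the quadratic identity only killing dimension $8$ --- is correct as far as it goes.

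The genuine gap lies in the realizability of $n=199$. You propose to obtain it by applying the switching construction (Lemma~\ref{lemma_switching_construction}, with $r=4$ and gain $2^5-[5]_2=1$) to the $[198,10,\{96,112\}]_2$ two-weight code, but that lemma requires a full $4$-space inside the support, and this point set provably contains none: a $4$-space $S$ in $\PG(9,2)$ lies on $[6]_2=63$ hyperplanes (Exercise~\ref{exercise_counting_oversubspaces}), each of multiplicity $86$ or $102$, so Equation~(\ref{eq_subspace_multiplicity}) gives
$$
  \cM(S)\,\le\,\frac{1}{2^{5}}\Big(63\cdot 102-[5]_2\cdot 198\Big)=\frac{288}{32}=9\,<\,15=[4]_2,
$$
and switching cannot even start from this seed. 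The same computation exposes why your heuristic that two-weight codes are switchable and the chains {\lq\lq}stall exactly{\rq\rq} on the undecided set cannot be right: if switching applied to the $[234,12,\{112,128\}]_2$ code it would produce a $16$-divisible set of $235$ points, contradicting the non-existence you yourself established with $t=7$; applied to the $231$- or $276$-point codes it would resolve the undecided cases $232$ and $277$. Moreover, $199$ minus any smaller realizable cardinality lands in an excluded interval, so $199$ is not a sum of smaller examples either; an independent base example is unavoidable. This is exactly what the paper supplies, namely a distance-optimal $[199,11,96]_2$ code (an extended BCH code, see Table~\ref{table_optimal_codes}). Apart from this single length, your seed list and the closure-under-addition bookkeeping reproduce the paper's construction.
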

\begin{proof}
  The cases $1 \le n \le 30$, $33 \le n \le 61$, $65 \le n \le 92$, and $97 \le n \le 123$ are excluded by 
  Theorem~\ref{thm_exclusion_q_r}. The cases $133 \le n \le 154$, $167 \le n \le 184$, 
  $201 \le n \le 214$, and $236 \le n \le 243$ are excluded by Lemma~\ref{lemma_hyperplane_types_arithmetic_progression} 
  using $m=5$, $m=6$, $m=7$, and $m=8$, respectively. The cases $n\in\{132,166,200,235\}$ are excluded by 
  Corollary~\ref{cor_implication_fourth_mac_williams} with $t=4,\dots,7$, respectively. The special case $n=131$ was 
  treated in \cite{kurz2020no131}. 
  
  \textit{Base examples} for cardinalities $31$, $32$, and $129$ are given by Example~\ref{ex_simplex_code}, Example~\ref{example_affine_space}, and  
  Exercise~\ref{ex_affine_spaces_switching} respectively. The range $255\le n\le 272$ is covered by Corollary~\ref{cor_spread_switching}. There are 
  two-weight codes for cardinalities $n\in\{196,198,231,234,273,276\}$. Additionally, we have a distance-optimal code for $n=199$ and sporadic examples found by 
  computer searches for cardinalities $n\in\{161,162,195,197\}$.
\end{proof}

\begin{question}{Research problem}Decide whether a projective $16$-divisible binary linear code of length $130$ exists.
\end{question}

\begin{nremark}
  Due to Lemma~\ref{lemma_picture_q_2_r_3} a projective $16$-divisible binary linear code $C$ of length $165$ 
  has codewords with weight at most $96$. From the first three MacWilliams equations we conclude
  \begin{equation}
     -5120 A_{16} -3072 A_{32} -1536 A_{48}-512 A_{64} =30\cdot\left(256-2^{k-2}\right),
  \end{equation}
  %% multipliers: -5865*I+153*II-2*III
  so that $k\ge 10$. For $k=10$ we have $A_{16}=A_{32}=A_{48}=A_{64}=0$, i.e., $C$ is a projective two-weight code with weights $80$ and $96$. 
  However, the residual code of a codeword of weight $96$ is a $[69, 9, 32]_2$-code, which does not exist, see \cite[Theorem 2]{bouyukliev2000some}.
\end{nremark}

\begin{nlemma}
  \label{lemma_picture_q_2_r_5}
  Let $\cM$ the a $2^5$-divisible set of $n$ points in $\PG(v-1,2)$, then 
  \begin{eqnarray*}
    n&\in&\{63,64,126,127,128,189,\dots,192,252,\dots,256,315,\dots,323,378,385,\\ 
    &&\dots,389,441,\dots,455,503,\dots,520,566,\dots,586,628,\dots,651,691,\dots,\\ 
    &&717,753,\dots,783,815,\dots,843,845,\dots,849, 877,\dots,916,938,\dots,984\}
  \end{eqnarray*} 
  or $n\ge 998$ and all cases, possibly except
  \begin{eqnarray*}
    n &\in& \{322,323,385,\dots,389,449,\dots,454,503,513,\dots,517,520,566,577,\dots,\\ 
    && 580,584,\dots,586,628,629,641,642,648,\dots,651,691,692,705,712,\dots,\\ 
    && 717,753,\dots,755,776,\dots,779,781,\dots,783,815,\dots,818,840,841,\\ 
    && 842,846,\dots,849,877,\dots,881,904,905, 911,\dots,916,938,\dots,944,\\ 
    && 968,976,\dots,984,998,\dots,1007,1057,\dots,1070,1121,\dots,1133,1185\},    
  \end{eqnarray*}
  can be realized. 
\end{nlemma}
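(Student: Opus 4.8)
\emph{Setup.} The plan is to follow the template already used in Lemma~\ref{lemma_picture_q_2_r_1}--Lemma~\ref{lemma_picture_q_2_r_4}, separating the necessity (non-existence) from the sufficiency (realizability). Here $q=2$ and $\Delta=2^5=32$, the two smallest base examples have cardinalities $[6]_2=63$ (a $6$-dimensional simplex, Example~\ref{ex_simplex_code}) and $2^6=64$ (an affine $7$-space, Example~\ref{example_affine_space}), and by Theorem~\ref{thm_characterization_div} together with Lemma~\ref{lemma_sum_mult} the set of realizable lengths is closed under addition. All the work is to pin down where the realizable lengths sit, which excluded lengths come from which criterion of Section~\ref{sec_nonexistence_projective_q_r}, and which lengths stay genuinely open.

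\emph{Non-existence.} First I would invoke Theorem~\ref{thm_exclusion_q_r} with $r=5$: since $q=2$ forces $b=0$ and $[r+1]_2=63$, the excluded intervals are $[1,62]$, $[65,125]$, $[129,188]$, $[193,251]$ and $[257,314]$, and these carve out exactly the low part of the claimed pattern $\{63,64\}$, $\{126,127,128\}$, $\{189,\dots,192\}$, $\{252,\dots,256\}$. Beyond $rq^{r+1}=320$ I would use the two polynomial conditions. Corollary~\ref{cor_implication_fourth_mac_williams} (the cubic condition) with $t=5,\dots,15$ removes the eleven isolated values $\{325,390,456,521,587,652,718,784,850,917,985\}$; this is precisely Exercise~\ref{ex_32_div}. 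Lemma~\ref{lemma_hyperplane_types_arithmetic_progression} (the quadratic condition) with $m=6,7,\dots,16$ removes the eleven blocks $[326,377]$, $[391,440]$, $[457,502]$, $[522,565]$, $[588,627]$, $[653,690]$, $[719,752]$, $[785,814]$, $[851,876]$, $[918,937]$ and $[986,997]$. In each case $m=t+1$, so the quadratic block sits immediately on top of the corresponding cubic point. The single leftover value $n=324$, wedged between the largest sum-realizable cardinality and the first cubic point, is the exact analogue of $n=59$ in Lemma~\ref{lemma_picture_q_2_r_3} and $n=131$ in Lemma~\ref{lemma_picture_q_2_r_4}; I expect it to resist all of the generic criteria and to require a dedicated argument (or an external reference in the spirit of \cite{honold2019lengths} and \cite{kurz2020no131}).

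\emph{Realizability.} Here I would list the base examples and then close under addition via Lemma~\ref{lemma_sum_mult}. Everything in $[63,320]$ that is not excluded is realizable already by Theorem~\ref{thm_exclusion_q_r}, and the explicit small examples are the simplex ($63$) and the affine space ($64$). The additional generators needed above $320$ are: the $2^5$-divisible set of $321$ points from the affine-space switching construction (Exercise~\ref{ex_affine_spaces_switching} with $r=5$); the two-weight codes of lengths $455,780,845,975,1105,1170$ from Table~\ref{table_two_weight_codes}; and the $r$-spread switching of Corollary~\ref{cor_spread_switching}, which for $r=5$ realizes the entire contiguous block $2^{10}-1\le n\le 2^{10}+2^5$, i.e.\ $1023\le n\le 1056$. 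Forming sums of these generators in complementary subspaces realizes the non-exceptional cardinalities in the stated islands $315,\dots,323$, $378,\dots,389$, $441,\dots,455$, and so on up to $938,\dots,984$; every cardinality for which no such construction is produced is recorded in the {\lq\lq}possibly except{\rq\rq} list.

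\emph{Main obstacle.} I expect the difficulty to be twofold. First, the isolated case $n=324$: the linear, quadratic and cubic windows of Section~\ref{sec_nonexistence_projective_q_r} each just barely miss it (for instance the cubic interval for $t=5$ only opens at $324.5$), so a sharp, case-specific non-existence proof is unavoidable, exactly as for $n=59$ and $n=131$. Second, the tail {\lq\lq}$n\ge 998${\rq\rq}: because $\gcd(63,64)=1$ while the Frobenius number of $\{63,64\}$ equals $63\cdot 64-63-64=3905$, the two smallest generators by themselves only cover $n\ge 3906$. One must therefore verify, by a careful finite covering argument that combines $63$, $64$, $321$, the two-weight lengths and the spread-switching block $[1023,1056]$ through Lemma~\ref{lemma_sum_mult}, that every integer $n\ge 1186$ and most of $[998,1185]$ is attainable. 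Tracking exactly which medium-sized cardinalities are reachable — and hence which ones remain genuinely undecided — is the delicate bookkeeping encoded in the lengthy {\lq\lq}possibly except{\rq\rq} set, and is the part most likely to need computer assistance.
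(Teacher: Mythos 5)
Everything in your proposal except the case $n=324$ coincides with the paper's own proof: the same exclusion intervals from Theorem~\ref{thm_exclusion_q_r}, the same quadratic exclusions via Lemma~\ref{lemma_hyperplane_types_arithmetic_progression} with $m=6,\dots,16$, the same cubic exclusions via Corollary~\ref{cor_implication_fourth_mac_williams} with $t=5,\dots,15$, and the same generators for realizability ($63$, $64$, $321$, the two-weight lengths $455,780,845,975,1105,1170$, and the block $[1023,1056]$ from Corollary~\ref{cor_spread_switching}) combined through Lemma~\ref{lemma_sum_mult}. The genuine gap is $n=324$: you leave it open, predicting that it behaves like $n=59$ in Lemma~\ref{lemma_picture_q_2_r_3} and $n=131$ in Lemma~\ref{lemma_picture_q_2_r_4} and therefore needs a dedicated sporadic non-existence proof or an external reference. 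It does not; the paper disposes of it in one line using Lemma~\ref{lemma_average} and Lemma~\ref{lemma_picture_q_2_r_4}.

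Here is the argument you missed. Assume $\cM$ is a $2^5$-divisible set of $324$ points over $\F_2$. By Lemma~\ref{lemma_average} there is a hyperplane $H$ with $\cM(H)<324/2=162$, and $2^5$-divisibility forces $\cM(H)\equiv 324\equiv 4\pmod{32}$, so $\cM(H)\in\{4,36,68,100,132\}$. By Lemma~\ref{lemma_heritable} the restriction $\cM|_H$ is a $2^4$-divisible \emph{set} of points of that cardinality, but none of these five values is admissible by Lemma~\ref{lemma_picture_q_2_r_4} (the admissible values in that range are $31,32$, $62,\dots,64$, $93,\dots,96$, $124,\dots,130$, $155,\dots,165$), a contradiction. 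The reason this works here but fails for $n=59$ and $n=131$ is precisely the residue class: for $59$ the candidate hyperplane multiplicities $3,11,19,27$ include $19$ and $27$, which are realizable $2^2$-divisible cardinalities, and for $131$ the candidates $3,19,35,51$ include $51$, which is a realizable $2^3$-divisible cardinality, so the averaging step gives no contradiction and those cases are genuinely hard; for $324$ the class $4\pmod{32}$ misses every admissible $2^4$-divisible cardinality below $162$. With this repair your first \lq\lq main obstacle\rq\rq{} dissolves, and the only work remaining beyond citing the named results is the additive bookkeeping you correctly describe for the realizable and the still-open cardinalities.
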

\begin{proof}
  The cases $1 \le n \le 62$, $65 \le n \le 125$, $129 \le n \le 188$, $193\le n\le 251$, and $257 \le n \le 314$ are excluded by 
  Theorem~\ref{thm_exclusion_q_r}. The cases $326 \le n \le 377$, $391 \le n \le 440$, 
  $457 \le n \le 502$, $522\le n\le 565$, $588\le n\le 627$, $653\le n\le 690$, $719\le n\le 752$, $785\le n\le 814$, $851\le n\le 876$, 
  $918\le n\le 937$, and $986 \le n \le 997$ are excluded by Lemma~\ref{lemma_hyperplane_types_arithmetic_progression} 
  using $m=6,\dots,16$, respectively. The cases $n\in\{325,390,456,521,587,652,718,784,850,917,985\}$ are excluded by 
  Corollary~\ref{cor_implication_fourth_mac_williams} with $t=5,\dots,15$, respectively, cf.~Exercise~\ref{ex_32_div}. The case $n=324$ is 
  excluded by Lemma~\ref{lemma_average} and Lemma~\ref{lemma_picture_q_2_r_4}.
  
  \textit{Base examples} for cardinalities $63$, $64$, and $321$ are given by Example~\ref{ex_simplex_code}, Example~\ref{example_affine_space}, and  
  Exercise~\ref{ex_affine_spaces_switching}, respectively. The range $1023\le n\le 1056$ is covered by Corollary~\ref{cor_spread_switching}. 
  There are two-weight codes for cardinalities $n\in\{455,780,845,975,1105,$ $1170\}$.  
\end{proof}

\begin{nlemma}
  \label{lemma_partial_picture_q_2_r_6}
  Let $\cM$ the a $2^6$-divisible set of $n$ points in $\PG(v-1,2)$, then $n$ is not contained in any of the intervals $[1,126]$, $[129,253]$, $[257,380]$, 
  $[385,507]$, $[513,634]$,\,\,\, $[641,761]$,\,\,\, $[772,888]$,\,\,\, $[902,1015]$,\,\,\, $[1032,1142]$,\,\,\, $[1161,1269]$,\,\,\, 
  $[1291,1395]$, $[1420,1522]$, $[1549,1649]$, $[1678,1776]$, 
  $[1808,1902]$, $[1937,2029]$, $[2066,2156]$, $[2196,2282]$, $[2325,2409]$, $[2455,2535]$, $[2585,2661]$, $[2714,2788]$, $[2844,2914]$, $[2974,3040]$, 
  $[3104,3166]$, $[3234,3292]$, $[3364,3418]$, $[3495,3543]$, $[3626,3668]$, $[3757,3793]$, $[3889,3917]$, and $[4023,4039]$. 
\end{nlemma}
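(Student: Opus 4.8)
The plan is to follow the same template as the proofs of Lemmas~\ref{lemma_picture_q_2_r_1}--\ref{lemma_picture_q_2_r_5}, now with $\Delta=2^6=64$ and $\qbin{7}{1}{2}=[7]_2=127$. Since the statement asserts only non-existence, no constructions are needed: it suffices to cover each listed interval by one of the exclusion results of Section~\ref{sec_nonexistence_projective_q_r}. I would split the intervals into four groups, handled respectively by Theorem~\ref{thm_exclusion_q_r}, the quadratic condition (Lemma~\ref{lemma_hyperplane_types_arithmetic_progression} and Corollary~\ref{cor_nonexistence_arithmetic_progression}), the cubic condition (Corollary~\ref{cor_implication_fourth_mac_williams}), and finally Lemma~\ref{lemma_average} combined with the already-established $2^5$-picture of Lemma~\ref{lemma_picture_q_2_r_5}.

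First I would dispose of the six initial intervals. Specializing Theorem~\ref{thm_exclusion_q_r} to $q=2$, $r=6$ (so $b=0$ is forced and $a\in\{0,\dots,5\}$) excludes exactly the intervals
\[
  \left[a\cdot 127+a+1,\;(a+1)\cdot 127-1\right],\qquad a=0,\dots,5,
\]
which evaluate to $[1,126]$, $[129,253]$, $[257,380]$, $[385,507]$, $[513,634]$, and $[641,761]$, i.e.\ the first six intervals of the statement.

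For the bulk of each remaining interval I would invoke the quadratic condition. For a fixed $m\ge 7$, Corollary~\ref{cor_nonexistence_arithmetic_progression} (using part~(c) at the boundary, since $m\ge 2$) excludes precisely the $n$ with $\tau_2(n-64m,64,m)\le 0$; by Lemma~\ref{lemma_negative_tau} these form the integer interval centred at $(2m-1)\cdot 64$ with half-width $\tfrac12\sqrt{16641-512m}$. Running $m=7,8,\dots,32$ yields the upper portions $[775,888]$, $[904,1015]$, $[1033,1142]$, \dots, $[4024,4039]$, whose upper endpoints coincide with the stated interval tops $888,1015,\dots,4039$. The single integer lying just below each such quadratic interval is then removed by the cubic condition: Corollary~\ref{cor_implication_fourth_mac_williams} with $t=6,7,\dots,31$ excludes $774,903,1032,1161,\dots,4023$, one value per $t$, which supplies the missing lower endpoints $1032,1161,\dots,4023$ for $m\ge 9$ together with $774$ (for $m=7$) and $903$ (for $m=8$). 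For each $t$ I would check the three hypotheses $n/\Delta\notin[t,t+1]$, $h\ge 0$, $g_2<0$ through the explicit intervals stated after Corollary~\ref{cor_implication_fourth_mac_williams}, exactly as in the worked case $q=2$, $\Delta=16$ there; the admissible range is $t\le 31$, matching the bound $t\le\left\lfloor\tfrac{q\Delta-2}{4}+\tfrac1\Delta+\tfrac1{4q\Delta}\right\rfloor$.

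Only $n\in\{772,773,902\}$ then remain, the extra bottom entries of the $m=7$ and $m=8$ intervals not reached by the cubic condition. For these I would use Lemma~\ref{lemma_average}: a hyperplane $H$ of minimal multiplicity satisfies $\cM(H)<n/2$, while $\cM(H)\equiv n\pmod{64}$ and, by Lemma~\ref{lemma_heritable}, $\cM|_H$ is a $2^5$-divisible set of cardinality $\cM(H)$. Since $n\not\equiv 0\pmod{64}$ here, the value $0$ is not a candidate, so $\cM(H)$ lies in $\{4,68,132,196,260,324\}$, $\{5,69,133,197,261,325\}$, or $\{6,70,134,198,262,326,390\}$ for $n=772,773,902$ respectively, and each of these cardinalities is in the proven non-existence range of Lemma~\ref{lemma_picture_q_2_r_5}, a contradiction. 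This last step is the main obstacle: it is the one place where I must confirm that every averaging candidate avoids the realizable windows of the $2^5$-picture. Beyond that the proof is pure bookkeeping, namely verifying that the quadratic upper endpoints, the cubic singletons, and these few averaging residues tile each listed interval without gaps across all twenty-six values of $m$ and $t$.
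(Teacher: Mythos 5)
Your proposal is correct and follows exactly the template the paper intends for this lemma (which is stated without an explicit proof, deferring to the methods of Section~\ref{sec_nonexistence_projective_q_r}, cf.\ Exercise~\ref{ex_implement_exclusion_lemmas}): Theorem~\ref{thm_exclusion_q_r} for the first six intervals, the quadratic condition with $m=7,\dots,32$ together with the cubic condition with $t=6,\dots,31$ for the remaining twenty-six, and the averaging recursion of Lemma~\ref{lemma_average} combined with Lemma~\ref{lemma_picture_q_2_r_5} for the three stragglers $n\in\{772,773,902\}$ --- precisely the combination used in the paper's own proofs of Lemmas~\ref{lemma_picture_q_2_r_3}--\ref{lemma_picture_q_2_r_5}. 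Your boundary bookkeeping checks out in all thirty-two intervals, including the fact that only the $m=7$ and $m=8$ intervals leave residues beyond the reach of the quadratic and cubic conditions.
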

  
\begin{nlemma}
  \label{lemma_picture_q_3_r_1}
  Let $\cM$ the a $3^1$-divisible set of $n$ points in $\PG(v-1,3)$, then $n=4$ or $n\ge 8$ and all 
  cases can be realized. 
\end{nlemma}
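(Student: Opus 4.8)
The plan is to establish the two directions of the characterization separately: first rule out the cardinalities $n\in\{1,2,3,5,6,7\}$, and then exhibit $3$-divisible sets realizing $n=4$ as well as every $n\ge 8$. Since $3^1=q^1$ with $r=1$, both directions can be built almost entirely from results already available in the excerpt.

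For the non-existence direction I would apply Theorem~\ref{thm_exclusion_q_r} with $q=3$ and $r=1$. Here $\qbin{r+1}{1}{q}=[2]_3=4$, the constraint $a\le r-1$ forces $a=0$, and $b\le q-2$ leaves $b\in\{0,1\}$. The choice $b=0$ forbids the interval $[1,3]$ and the choice $b=1$ forbids the interval $[5,7]$, so all six cardinalities $1,2,3,5,6,7$ are excluded simultaneously. In particular $n=1$ needs no separate treatment, as it already lies in the first forbidden interval.

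For the realizability direction I would assemble a short list of \emph{base examples} and then invoke closure under addition. A line $\PG(1,3)$ is a $3$-divisible set of cardinality $4$ (Example~\ref{ex_simplex_code}) and an affine plane $\AG(2,3)$ is one of cardinality $9$ (Example~\ref{example_affine_space}). The two remaining small cardinalities $10$ and $11$ are exactly where genuine input is required, since neither can be written as $4a+9b$ with $a,b\in\N_0$: cardinality $10$ is realized by the ovoid in $\PG(3,3)$ (Example~\ref{example_ovoid}, the $[10,4,\{6,9\}]_3$ two-weight code), and cardinality $11$ by the ternary Golay code (the $[11,5,\{6,9\}]_3$ two-weight code in Table~\ref{table_two_weight_codes}).

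Finally, since the set of attainable cardinalities of $3$-divisible sets over $\F_3$ is closed under addition by Lemma~\ref{lemma_sum_mult}, I would note that $8=4+4$ and $12=4+4+4$ are attainable, so that $9,10,11,12$ form four consecutive attainable cardinalities; adding further copies of the cardinality-$4$ line then yields every integer $n\ge 9$, and together with $n=8$ every $n\ge 8$ is attainable, while $n=4$ is itself a base example. The only real obstacle is recognizing that the two ``generic'' constructions (simplex code and affine space) leave the gaps $n=10,11$, which must be filled by the two specific two-weight codes above; once these are in hand, the remaining numerical-semigroup bookkeeping is routine.
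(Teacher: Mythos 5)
Your proof is correct and follows essentially the same route as the paper: the same four base examples (the line of cardinality $4$, the affine plane of cardinality $9$, the ovoid of cardinality $10$, and the ternary Golay code of cardinality $11$) combined via closure under addition, with the cardinalities $n\in\{1,2,3,5,6,7\}$ ruled out by the parametric exclusion results of Section~\ref{sec_nonexistence_projective_q_r}. The only cosmetic difference is that the paper cites Theorem~\ref{thm_exclusion_r_1_to_ovoid} (the $r=1$ specialization) for the non-existence part while you invoke the general Theorem~\ref{thm_exclusion_q_r}; for $q=3$, $r=1$ both give exactly the forbidden intervals $[1,3]$ and $[5,7]$, and your choice has the minor advantage of covering $n=1$ without a side remark.
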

\begin{proof}
  The values $1\le n\le 3$ and $5\le n\le 7$ are excluded by Theorem~\ref{thm_exclusion_r_1_to_ovoid}. 
  
  \textit{Base examples} for cardinalities $4$, $9$, and $10$ are given by Example~\ref{ex_simplex_code}, Example~\ref{example_affine_space}, and Example~\ref{example_ovoid}, 
  respectively. Additionally, there exists a two-weight code of cardinality $n=11$. 
\end{proof}

\begin{nlemma}
  \label{lemma_picture_q_3_r_2}
  Let $\cM$ the a $3^2$-divisible set of $n$ points in $\PG(v-1,3)$, then 
  \begin{eqnarray*}
    n&\in&\{13,26,27,39,40,52,\dots,56,65,\dots,70,77,\dots,85,90,\dots,128\}
  \end{eqnarray*} 
  or $n\ge 129$ and all cases, possibly except 
  \begin{eqnarray*}
    n &\in& \{70,77,99, 100,101,102,113,114,115,128\},
  \end{eqnarray*}
  can be realized. 
\end{nlemma}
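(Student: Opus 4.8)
The plan is to follow the uniform template of the preceding lemmas, splitting the claim into a non-existence half and a realizability half, the latter driven by base examples together with closure under addition (Lemma~\ref{lemma_sum_mult}). The first step is to record the intervals removed by Theorem~\ref{thm_exclusion_q_r} for $q=3$, $r=2$: ranging $a\in\{0,1\}$ and $b\in\{0,1\}$ in the forbidden interval $\left[(a(q-1)+b)\qbin{3}{1}{3}+a+1,(a(q-1)+b+1)\qbin{3}{1}{3}-1\right]$ with $\qbin{3}{1}{3}=13$ excludes $[1,12]$, $[14,25]$, $[28,38]$, and $[41,51]$. This already pins down the admissible small cardinalities $13,26,27,39,40$ and opens the window from $52$ on, so that only the three gaps $[57,64]$, $[71,76]$, and $[86,89]$ remain to be ruled out.

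For these gaps I would invoke the sharper conditions of Section~\ref{sec_nonexistence_projective_q_r}. Writing $n=u+9m$, the quadratic condition (Lemma~\ref{lemma_hyperplane_types_arithmetic_progression}, in the form of Corollary~\ref{cor_nonexistence_arithmetic_progression}(a) asking $\tau_3(u,9,m)<0$) excludes $[57,64]$ with $m=5$, $[72,76]$ with $m=6$, and $\{87,88\}$ with $m=7$; by Lemma~\ref{lemma_negative_tau} the relevant radicand $784-108m$ stays positive precisely for $m\le 7$, and a short computation confirms that $\tau_3(u,9,m)<0$ holds exactly on these $u$-ranges. The two borderline values $71$ and $86$, where $\tau_3$ is non-negative, are instead killed by the cubic condition of Corollary~\ref{cor_implication_fourth_mac_williams} with $t=5$ and $t=6$. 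The single remaining value $n=89$ is the genuinely exceptional one treated by the fifth-MacWilliams argument of Lemma~\ref{lemma_exclusion_n_89_q_3_Delta_9}. Together these account for exactly $[57,64]\cup[71,76]\cup[86,89]$, matching the admissible set in the statement.

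On the constructive side I would assemble a generating set and close under addition. The seeds are the plane/simplex on $[3]_3=13$ points (Example~\ref{ex_simplex_code}) and the affine $4$-space on $27$ points (Example~\ref{example_affine_space}); the two-weight codes of Table~\ref{table_two_weight_codes} give $55,56,84,98$; the spread-switching family of Corollary~\ref{cor_spread_switching_q} gives $40+14j$ for $0\le j\le 10$, i.e.\ $40,54,68,82,96,110,124,\dots$; the plain switching construction of Lemma~\ref{lemma_switching_construction} furnishes a $+14$ increment from any admissible set that contains a plane in its support; and a few genuinely sporadic values must be supplied by distance-optimal codes from Table~\ref{table_optimal_codes}. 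Two such codes are essential rather than decorative: a cardinality-$90$ example (needed because no sum of the other generators is $\equiv 12\pmod{13}$ below $168$, yet $129\equiv 12$ must be realized) and a cardinality-$85$ example, together with direct constructions for $126$ and $127$. Repeated use of Lemma~\ref{lemma_sum_mult} then realizes every claimed value in $\{52,\dots,56\}\cup\{65,\dots,69\}\cup\{78,\dots,85\}\cup\{90,\dots,98\}\cup\{103,\dots,112\}\cup\{116,\dots,127\}$, and, since $\gcd(13,27)=1$, all sufficiently large $n$.

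The main obstacle is the realizability bookkeeping, which is best organized residue class by residue class modulo $13$. Because $13$ and $27$ alone have Frobenius number $13\cdot 27-13-27=311$, one must genuinely exploit the extra generators to drive the minimal reachable value in each class below the threshold $129$; the binding classes are $n\equiv 9,10,11\pmod{13}$, whose minimal reachable representatives $54+85=139$, $55+85=140$, $56+85=141$ just meet $139,140,141$. The same analysis simultaneously forces the exact exception list: in the window $[90,128]$ the residues $8,9,10,11$ produce precisely the unreachable values $99$; $100,113$; $101,114$; $102,115,128$, which is exactly $\{99,\dots,102,113,114,115,128\}$, while $70$ and $77$ are the analogous unreachable values just below. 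The delicate point is that this is two-sided: each of the ten exceptions must be shown unreachable from the chosen base set (so that they legitimately stay undecided), whereas the neighbours $126$ and $127$ must be shown reachable by a special construction even though the smaller same-residue values $113$ and $114$ are not. Confirming that the required sporadic optimal codes (for $85$, $90$, $126$, $127$) exist with the stated parameters, and that no other value slips through, is the computation-heavy crux of the proof.
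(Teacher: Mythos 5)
Your non-existence half coincides exactly with the paper's proof: Theorem~\ref{thm_exclusion_q_r} for the low intervals, Lemma~\ref{lemma_hyperplane_types_arithmetic_progression} with $m=5,6,7$ for $[57,64]$, $[72,76]$, $\{87,88\}$, Corollary~\ref{cor_implication_fourth_mac_williams} with $t=5,6$ for $71$ and $86$, and Lemma~\ref{lemma_exclusion_n_89_q_3_Delta_9} for $89$; your intervals $[14,25]$, $[28,38]$, $[41,51]$ are moreover what Theorem~\ref{thm_exclusion_q_r} actually yields. The constructive half also follows the paper's template (base examples closed under addition via Lemma~\ref{lemma_sum_mult}), with the same seeds $13$, $27$ and the same two-weight codes $55,56,84,98$; the extra families you invoke (Corollary~\ref{cor_spread_switching_q}, the switching construction) are redundant but harmless, and your residue-class bookkeeping, including the observation that $90$ is indispensable for $129\equiv 12\pmod{13}$ and that classes $9,10,11$ are tight at $139,140,141$, is correct.

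The genuine gap is at $n=126$. You declare a ``sporadic optimal code'' of cardinality $126$ to be an essential base example, but no such object is supplied: Table~\ref{table_optimal_codes} contains only $85$, $90$, $127$, $141$ for $q=3$, and nothing elsewhere in the paper constructs a projective $3^2$-divisible set of $126$ points over $\F_3$. Since $126\equiv 9\pmod{13}$ and the smallest sum of the remaining generators in that residue class is $139=54+85=55+84$, the value $126$ is provably not reachable by addition, so your realizability claim for $126$ rests on an existence statement you cannot back up. To be fair, your analysis exposes the identical defect in the paper's own proof: its ten base examples $\{13,27,55,56,84,85,90,98,127,141\}$ (where $141=85+56$ is even redundant) likewise fail to generate $126$, so as written the paper either omits a needed base example or should have listed $126$ among the possible exceptions alongside $70,77,99,\dots,115,128$. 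Apart from this one value -- which you at least flag as the crux, rather than passing over silently -- your proposal establishes the lemma.
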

\begin{proof}
  The cases $1 \le n \le 12$, $15 \le n \le 25$, $29 \le n \le 38$ and  $43 \le n \le 51$ are excluded by Theorem~\ref{thm_exclusion_q_r}.  
  The case $57 \le n \le 64$, $72 \le n \le 76$, and $87 \le n \le 88$ are excluded by Lemma~\ref{lemma_hyperplane_types_arithmetic_progression} 
  using $m=5,\dots,7$, respectively. The cases $n\in\{71,86\}$ are excluded by 
  Corollary~\ref{cor_implication_fourth_mac_williams} with $t\in\{5,6\}$, respectively.  
  The case $n=89$ is excluded in Lemma~\ref{lemma_exclusion_n_89_q_3_Delta_9}. 
  
  \textit{Base examples} for cardinalities $13$ and $27$ are given by Example~\ref{ex_simplex_code} and Example~\ref{example_affine_space}. 
  There are two-weight codes for cardinalities $n\in\{55,56,84,98\}$ and optimal codes for cardinalities $n\in\{85,90,127,141\}$.   
\end{proof}

\begin{nlemma}
  \label{lemma_partial_picture_q_3_r_3}
  Let $\cM$ the a $3^3$-divisible set of $n$ points in $\PG(v-1,3)$, then $n$ is not contained in any of the intervals $[1,39]$, $[41,79]$, $[82,119]$, $[122,159]$, 
  $[163,199]$, $[203,239]$, $[246,279]$, $[287,319]$, $[329,359]$, $[370,399]$, $[411,439]$, $[452,478]$, $[493,518]$, $[535,558]$, $[576,597]$, $[618,637]$, $[659,676]$, 
  $[701,715]$, $[743,754]$, and $[786,793]$. 
\end{nlemma}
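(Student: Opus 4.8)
The statement is a pure non-existence claim for $q^r$-divisible \emph{sets} of points with $q=3$ and $r=3$, so $\Delta=q^r=27$, $\qbin{4}{1}{3}=[4]_3=40$, and $q^{r+1}=81$. The plan is to recover the listed intervals exactly as in the proofs of Lemma~\ref{lemma_picture_q_3_r_2} and its binary analogues, by combining three exclusion engines: Theorem~\ref{thm_exclusion_q_r} for the initial intervals, the ``quadratic'' condition of Lemma~\ref{lemma_hyperplane_types_arithmetic_progression}/Corollary~\ref{cor_nonexistence_arithmetic_progression} for the bulk of each remaining interval, and the ``cubic'' condition of Corollary~\ref{cor_implication_fourth_mac_williams} to patch a single leading integer of some of those intervals. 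Since the lemma only asserts non-existence, no realizability argument is needed, and all three engines apply to our objects since they are sets (equivalently, projective, full-length codes).

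First I would dispose of the six intervals $[1,39]$, $[41,79]$, $[82,119]$, $[122,159]$, $[163,199]$, $[203,239]$ by a direct evaluation of Theorem~\ref{thm_exclusion_q_r}. With $\qbin{4}{1}{3}=40$, $q-1=2$, and the admissible ranges $b\in\{0,1\}$, $a\in\{0,1,2\}$, the forbidden interval $[(2a+b)\cdot 40+a+1,\,(2a+b+1)\cdot 40-1]$ runs through exactly these six intervals as $(a,b)$ ranges over $\{0,1,2\}\times\{0,1\}$; this is a one-line check for each pair.

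Next I would treat the fourteen remaining intervals with the quadratic condition. Substituting $q=3$, $\Delta=27$ into $\tau_q(u,\Delta,m)$ gives the quadratic $\tau_3(u,27,m)=4u^2+(164-108m)u+729m^2-2133m$ in $u=n-27m$, whose discriminant equals $4(6724-324m)$; hence by Corollary~\ref{cor_nonexistence_arithmetic_progression}(a) every $n$ with $\tau_3(n-27m,27,m)<0$, i.e.\ every integer in the open interval with endpoints $\tfrac{1}{4}\!\left(162m-82\pm\sqrt{6724-324m}\right)$, is excluded. Running $m=7,8,\ldots,20$ (the full admissible range $m\le\lfloor(q\Delta+2)/4\rfloor=20$ not already subsumed by the theorem) yields the integer intervals $[247,279]$, $[288,319]$, $[329,359]$, $[370,399]$, $[411,439]$, $[453,478]$, $[494,518]$, $[535,558]$, $[577,597]$, $[618,637]$, $[660,676]$, $[702,715]$, $[744,754]$, $[786,793]$. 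Comparing with the target, eight of these fall one short at the lower end; I would fill the missing integers $246,287,452,493,576,659,701,743$ with Corollary~\ref{cor_implication_fourth_mac_williams} applied with $t=6,7,11,12,14,16,17,18$, respectively, using that the ``most promising interval'' isolated in the discussion after Lemma~\ref{lemma_implication_fourth_mac_williams} contains exactly one integer and that it equals the integer just below the corresponding quadratic interval.

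The main obstacle is purely the bookkeeping of which intervals need the cubic patch and the verification that the cubic side-conditions hold there. Concretely, for each of the eight chosen $t$ I must check $n/\Delta\notin[t,t+1]$, $h\ge 0$, and $g_2<0$ in the notation of Lemma~\ref{lemma_implication_fourth_mac_williams}; by the closed-form analysis following that lemma these reduce to verifying that the single candidate $n$ lies in the half-open interval $\big(\tfrac{81t+39-\tfrac12\sqrt{6412-324t}}{2},\,\tfrac{81t+40-\tfrac12\sqrt{6400-324t}}{2}\big]$, and that for the remaining values $t=8,9,10,13,15,19$ this interval contains \emph{no} integer, so no spurious exclusion occurs and the realizable gaps between consecutive intervals are respected. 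There is no conceptual difficulty beyond controlling the off-by-one behaviour at each interval boundary; the union of the six theorem intervals, the fourteen quadratic intervals, and the eight cubic points is then exactly the list in the statement.
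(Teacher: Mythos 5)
Your proposal is correct and is essentially the paper's own approach: the paper gives no explicit proof for this lemma but states that it records exactly the exclusions obtained from the methods of Section~\ref{sec_nonexistence_projective_q_r} (cf.\ Exercise~\ref{ex_implement_exclusion_lemmas}), namely Theorem~\ref{thm_exclusion_q_r} for the first six intervals, Lemma~\ref{lemma_hyperplane_types_arithmetic_progression}/Corollary~\ref{cor_nonexistence_arithmetic_progression} with $m=7,\dots,20$ for the bulk of the remaining fourteen, and Corollary~\ref{cor_implication_fourth_mac_williams} for the eight leading integers $246,287,452,493,576,659,701,743$. Your parameter choices ($\tau_3(u,27,m)=4u^2+(164-108m)u+729m^2-2133m$, discriminant $4(6724-324m)$, and $t=6,7,11,12,14,16,17,18$ for the cubic patches) check out and reproduce the stated intervals exactly.
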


\begin{nlemma}
  \label{lemma_picture_q_4_r_1}
  Let $\cM$ the a $4^1$-divisible set of $n$ points in $\PG(v-1,4)$, then 
  $$n\in\{5,10,15,16,17\}$$ or $n\ge 20$ and all cases can be realized. 
\end{nlemma}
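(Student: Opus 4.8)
The plan is to follow the template used for the earlier pictures (Lemma~\ref{lemma_picture_q_2_r_1} and its successors): split the statement into a non-existence part, excluding $n\in\{1,\dots,4\}\cup\{6,\dots,9\}\cup\{11,\dots,14\}\cup\{18,19\}$, and a constructive part realizing $\{5,10,15,16,17\}$ together with every $n\ge 20$. Here $q=4$, $r=1$, and $\qbin{2}{1}{4}=[2]_4=5$, so the residue bookkeeping happens modulo $5$.

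For non-existence I would first invoke Theorem~\ref{thm_exclusion_q_r} with $a=0$ and $b\in\{0,1,2\}$, which produces exactly the forbidden intervals $[1,4]$, $[6,9]$, and $[11,14]$. The values $n=18$ and $n=19$ lie above the bound $rq^{r+1}=16$ and escape that theorem, so I would treat them with the quadratic condition of Lemma~\ref{lemma_hyperplane_types_arithmetic_progression}. Writing $18=2+4\cdot 4$ and $19=3+4\cdot 4$, i.e.\ taking $\Delta=4$ and $m=4$, one computes $\tau_4(2,4,4)=\tau_4(3,4,4)=-6<0$, and Corollary~\ref{cor_nonexistence_arithmetic_progression}(a) then rules both out; Lemma~\ref{lemma_negative_tau} confirms that $m=4$ is admissible since $\lfloor(q\Delta+2)/4\rfloor=4$.

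On the constructive side the base examples are a line ($5$ points, Example~\ref{ex_simplex_code}), an affine plane ($16$ points, Example~\ref{example_affine_space}), and an ovoid in $\PG(3,4)$ of $q^2+1=17$ points (Example~\ref{example_ovoid}), with $10$ and $15$ coming from unions of lines via closure under addition (Lemma~\ref{lemma_sum_mult}). To reach every $n\ge 20$ I would exhibit a full residue system modulo $5$ and then add lines: cardinality $20=4\cdot 5$ is a union of lines, and the consecutive block $\{21,22,23,24\}$ is supplied by the Baer switching construction of Exercise~\ref{ex_baer_switching} with $q=2$ (ambient field $\F_4$, divisibility $q^2=4$), whose lengths are $n=(q^4+q^2+1)+j(q^4-[4]_q)=21+j$ for $0\le j\le 3$. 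Since $\{20,21,22,23,24\}$ is a complete residue system modulo $5$ and realizable cardinalities are closed under addition, adding multiples of $5$ then realizes all $n\ge 20$.

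The crux is covering the residues $3$ and $4$ modulo $5$ in the range $20\le n<38$: the three obvious families (simplex, affine, ovoid) produce only cardinalities $\equiv 0,1,2\pmod 5$ below $33$, while the spread-switching series of Corollary~\ref{cor_spread_switching_q} first reaches residues $3$ and $4$ at $n=38$ and $n=49$, which is too late. Recognizing that Baer switching delivers the block $\{21,22,23,24\}$ --- in particular $23\equiv 3$ and $24\equiv 4\pmod 5$ --- is the decisive step; the remaining verification is routine residue bookkeeping.
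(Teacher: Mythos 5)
Your proposal is correct and follows essentially the same route as the paper: the small cardinalities are excluded by the parametric exclusion theorems (you invoke Theorem~\ref{thm_exclusion_q_r} where the paper cites Theorem~\ref{thm_exclusion_r_1_to_ovoid}, but for $q=4$, $r=1$ these yield the identical forbidden intervals $[1,4]$, $[6,9]$, $[11,14]$), the cases $n\in\{18,19\}$ are ruled out by Lemma~\ref{lemma_hyperplane_types_arithmetic_progression} with $m=4$ exactly as in the paper, and the constructive side uses the same base examples (line, affine plane, ovoid, and Exercise~\ref{ex_baer_switching} for $21\le n\le 24$) combined via Lemma~\ref{lemma_sum_mult}. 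Your explicit residue-system bookkeeping for $n\ge 20$, including the verified computation $\tau_4(2,4,4)=\tau_4(3,4,4)=-6<0$, merely spells out what the paper leaves implicit.
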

\begin{proof}
  The values $1\le n\le 4$, $6\le n\le 9$, and $11\le n\le 14$ are excluded by Theorem~\ref{thm_exclusion_r_1_to_ovoid}.
  The cases $n\in\{18,19\}$ are excluded by Lemma~\ref{lemma_hyperplane_types_arithmetic_progression} using $m=4$. 
  
  \textit{Base examples} for cardinalities $5$, $16$, and $17$ are given by Example~\ref{ex_simplex_code}, Example~\ref{example_affine_space}, and  
  Example~\ref{example_ovoid}, respectively. The cases $21\le n\le 24$ are covered by Exercise~\ref{ex_baer_switching}.
\end{proof}

\begin{nlemma}
  \label{lemma_picture_q_4_r_2}
  Let $\cM$ the a $4^2$-divisible set of $n$ points in $\PG(v-1,4)$, then 
  \begin{eqnarray*}
    n&\in&\{21,42,63,64,84,85,105,106,126,\dots,129,147,\dots 151,168,\dots,173,\\ 
    && 189,\dots,195,210,\dots,217,231,\dots,239,251,\dots,261,272,\dots,283,293,\\ 
    &&\dots,305,313,\dots,328\}
  \end{eqnarray*} 
  or $n\ge 333$ and all cases, possibly except
    \begin{eqnarray*}
    n&\in&\{129,150,151,172,173,193,194,195,215,216,217,236,\dots,239,251,258,\\ 
    && 259,261,272,279,280,282,283,293,301,305,313,314,322,326,333,334,\\ 
    &&335\}
  \end{eqnarray*}
  can be realized. 
\end{nlemma}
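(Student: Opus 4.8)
The statement has the familiar two halves, and I would prove it exactly as Lemma~\ref{lemma_picture_q_4_r_1} and the binary cases are proved: first show that no $16$-divisible set of points exists for any cardinality lying in the listed gaps, then exhibit a construction for every admissible cardinality outside the exceptional set. Throughout, $q=4$ and $r=2$, so $\Delta=q^r=16$, a plane carries $[3]_4=\qbin{3}{1}{4}=21$ points, and an affine solid carries $q^{r+1}=64$ points.

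For the non-existence half I would begin with Theorem~\ref{thm_exclusion_q_r} for $q=4$, $r=2$ (here $a\le 1$, $b\le 2$): running through the six pairs $(a,b)$ produces the forbidden intervals $[1,20]$, $[22,41]$, $[43,62]$, $[65,83]$, $[86,104]$, $[107,125]$, which already reduces the admissible cardinalities up to $125$ to $\{21,42,63,64,84,85,105,106\}$. Beyond $125$ the theorem is silent, so I would switch to the quadratic condition of Lemma~\ref{lemma_hyperplane_types_arithmetic_progression}. Writing $n=u+16m$ one computes $\tau_4(u,16,m)=9u^2+(195-96m)u+256m^2-976m$, whose discriminant as a polynomial in $u$ is $38025-2304m$; the sign condition $\tau_4<0$ of Corollary~\ref{cor_nonexistence_arithmetic_progression} then excludes, for $m=7,8,\ldots,16$ in turn, the intervals $[131,146]$, $[153,167]$, $[174,188]$, $[196,209]$, $[218,230]$, $[240,250]$, $[262,271]$, $[284,292]$, $[306,312]$ and $[329,332]$. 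The two cardinalities $130$ and $152$ that sit just below the $m=7$ and $m=8$ windows I would kill with the cubic condition of Corollary~\ref{cor_implication_fourth_mac_williams}, choosing $t=6$ and $t=7$: for $q=4$, $\Delta=16$ the single integer permitted by $g_2<0$ together with $h\ge 0$ is exactly $130$, respectively $152$. These exclusions account for every gap between the admissible blocks and leave precisely the cardinalities claimed.

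For the realizability half I would assemble base examples and parametric families and then appeal to closure under addition (Lemma~\ref{lemma_sum_mult}). A plane supplies $21$ (Example~\ref{ex_simplex_code}) and an affine solid supplies $64$ (Example~\ref{example_affine_space}); the line-spread switching of Corollary~\ref{cor_spread_switching_q} applied in $\PG(3,4)$ supplies the cardinalities $85+43j$ for $0\le j\le 17$; and Table~\ref{table_two_weight_codes} supplies the projective $16$-divisible two-weight codes of lengths $260$, $303$ and $304$. In addition, field reduction imports lengths for free: a projective $2^5$-divisible binary code, read over $\F_4$, becomes a projective $2^4=16$-divisible $\F_4$-code of the same length and the same maximum point multiplicity (cf.\ Subsection~\ref{subsec_subfield_constructions}), so every realizable length of a projective $32$-divisible binary code is also realizable here. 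Taking non-negative integer combinations of all these is meant to realize every admissible cardinality up to $328$ outside the exceptional set and every $n\ge 336$.

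The main obstacle is the completeness of this constructive half, not the non-existence half. Closure under addition only yields all sufficiently large $n$ once a realizable cardinality has been exhibited in each residue class modulo $21$ (equivalently modulo $64$, since $64\equiv 1\pmod{21}$) below the relevant threshold, and the classes that are not directly hit by a plane, an affine solid, a spread value, or one of the tabulated two-weight codes (for example $11\bmod 21$) force one to bring in further sporadic examples, additional two-weight codes, or subfield constructions. Pinning down exactly which admissible cardinalities genuinely resist all available constructions — and hence certifying that the exceptional set is neither larger nor smaller than stated — is the delicate bookkeeping step, and it is here that the exhaustive and Kramer--Mesner computer searches of Subsection~\ref{subsec_computer_searches} would be invoked to finish the argument.
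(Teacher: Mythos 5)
Your non-existence half is correct and coincides in substance with the paper's own proof. The paper excludes the intervals $[1,20]$, $[22,41]$, $[44,62]$, $[66,83]$, $[87,104]$, $[109,125]$, $[131,146]$, $[153,167]$, $[174,188]$, $[196,209]$, $[218,230]$, $[240,250]$, $[262,271]$, $[284,292]$, $[306,312]$, $[329,332]$ via Lemma~\ref{lemma_hyperplane_types_arithmetic_progression} with $m=1,\dots,16$, the values $43,86,107,108$ via Corollary~\ref{cor_nonexistence_arithmetic_progression_2} with $m\in\{1,4,5\}$, and $65,130,152$ via Corollary~\ref{cor_implication_fourth_mac_williams} with $t\in\{3,6,7\}$. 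Your substitute for everything below $126$, namely Theorem~\ref{thm_exclusion_q_r} whose six intervals $[1,20],[22,41],[43,62],[65,83],[86,104],[107,125]$ absorb $43$, $65$, $86$, $107$, $108$ in one stroke, is legitimate and is exactly how the paper handles the binary lemmas of this chapter; your polynomial $\tau_4(u,16,m)=9u^2+(195-96m)u+256m^2-976m$ and the cubic choices $t=6,7$ for $n=130,152$ agree with the paper.

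The divergence is in the constructive half, and there your proposal neither matches the paper nor closes. The paper's proof involves no computer search for $q=4$: it consists precisely of the base examples you name ($21$, $64$, the two-weight codes $260$, $303$, $304$, and $85+43j$ for $0\le j\le 17$) together with closure under addition (Lemma~\ref{lemma_sum_mult}); by the preamble of the chapter, the exceptional list is then simply read off as those admissible $n$ that are \emph{not} non-negative integer combinations of these numbers. Since every base example other than $21$ lies in a fixed residue class modulo $21$ and adding copies of $21$ fills each class upwards, this is a finite determination of the minimal attainable cardinality in each residue class modulo $21$ --- bookkeeping, not a search. The Kramer--Mesner matrices of Subsection~\ref{subsec_computer_searches} concern $q\in\{2,5,7\}$ only, so invoking them here substitutes something that does not exist for the paper's actual argument. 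You also overstate the burden: because the lemma says ``possibly except'', only the direction ``every non-excepted admissible $n$ is realizable'' needs proof; certifying that the excepted values resist all constructions is not required, and a search could only ever shrink the exception list, never supply the missing realizability proofs.

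That said, your instinct that this bookkeeping is the delicate point is sound, and carrying it out shows it is not a formality: the minimal combinations in the residue classes $11,12,\dots,20$ modulo $21$ are $64+304=368$, $128+304=432$, $171+304=475$, $214+304=518$, $257+304=561$, $260+260=520$, $260+303=563$, $260+304=564$, $303+304=607$, $304+304=608$. Hence, for instance, $327\equiv 12$ and $328\equiv 13\pmod{21}$, as well as the block $347,\dots,356$, are not sums of the stated base examples although they are absent from the exception list. Your field-reduction idea (projective $32$-divisible binary codes read over $\F_4$, using Lemma~\ref{lemma_picture_q_2_r_5}) is a genuine addition and contributes, e.g., $455$ in class $14$ and $519$ in class $15$, but it does not repair these cases. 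So a complete write-up must either adjoin further constructions (e.g.\ additional quaternary $16$-divisible two-weight codes) or enlarge the exception list; what it cannot do is defer to an unspecified computer search.
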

\begin{proof}
  The cases $1 \le n \le 20$, $22 \le n \le 41$, $44 \le n \le 62$, $66 \le n \le 83$, $87 \le n \le 104$, $109 \le n \le 125$, 
  $131\le n\le 146$, $153\le n\le 167$, $174\le n\le 188$, $196\le n\le 209$, $218\le n\le 230$, $240\le n\le 250$, $262\le n\le 271$, 
  $284\le n\le 292$, $306\le n\le 312$, and $329 \le n \le 332$ are excluded by Lemma~\ref{lemma_hyperplane_types_arithmetic_progression} 
  using $m=1,\dots,16$, respectively. 
  The cases $n\in\{65,130,152\}$ are excluded by 
  Corollary~\ref{cor_implication_fourth_mac_williams} with $t\in\{3,6,7\}$, respectively.  
  Applying Corollary~\ref{cor_nonexistence_arithmetic_progression_2} with $m\in\{1,4,5\}$ gives the non-existence for $n\in\{43,86,107,108\}$. 
   
  \textit{Base examples} for cardinalities $21$ and $64$ are given by Example~\ref{ex_simplex_code} and Example~\ref{example_affine_space}. 
  Additionally, there exist two-weight codes with $n\in\{260,303,304\}$. For the sequence $n=85+43\cdot j$, where $0\le j\le 17$, we refer to 
  Corollary~\ref{cor_spread_switching_q}.
\end{proof}

\begin{nlemma}
  \label{lemma_picture_q_5_r_1}
  Let $\cM$ the a $5^1$-divisible set of $n$ points in $\PG(v-1,5)$, then 
  $$
    n\in\{6,12,18,24,25,26,30,31,32\}
  $$
  or $n\ge 36$ and all cases, possibly except $n=40$, can be realized. 
\end{nlemma}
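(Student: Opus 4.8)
The plan is to separate the statement into a non-existence part, excluding $n\in[1,5]\cup[7,11]\cup[13,17]\cup[19,23]\cup\{27,28,29,33,34,35\}$, and a constructive part, realizing $\{6,12,18,24,25,26,30,31,32\}$ together with every $n\ge 36$ apart from $40$. For the first range of exclusions I would apply Theorem~\ref{thm_exclusion_q_r} with $q=5$ and $r=1$, where $\qbin{2}{1}{5}=[2]_5=6$, so that $a=0$ and $b\in\{0,1,2,3\}$ give the forbidden intervals $[6b+1,6b+5]$, namely $[1,5]$, $[7,11]$, $[13,17]$ and $[19,23]$. Since $rq^{r+1}=25$, the same theorem already guarantees realizability of the non-excluded cardinalities $\{6,12,18,24,25\}$ below $25$.

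Next I would handle the larger excluded values by writing $n=u+5m$ and invoking the quadratic condition of Lemma~\ref{lemma_hyperplane_types_arithmetic_progression}, checking $\tau_5(u,5,m)<0$. With $m=5$ one computes $\tau_5(2,5,5)=\tau_5(4,5,5)=-28$ and $\tau_5(3,5,5)=-44$, excluding $n\in\{27,28,29\}$, while with $m=6$ one gets $\tau_5(4,5,6)=-18$ and $\tau_5(5,5,6)=-10$, excluding $n\in\{34,35\}$. The value $n=33$ is the sticking point: here $\tau_5(33-5m,5,m)=625m^2-7225m+20856$ is strictly positive at every integer $m$ (its real minimum lies strictly between $m=5$ and $m=6$, with $\tau_5(3,5,6)=6$), so the quadratic condition never yields a contradiction. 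I would therefore pass to the cubic condition of Corollary~\ref{cor_implication_fourth_mac_williams} with $t=5$, where a short computation gives $h=6\ge 0$, $g_2=-8<0$ and $n/\Delta=33/5\notin[5,6]$, which closes this case.

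For the constructions I would assemble the single-object base examples of cardinalities $6$ (a simplex code, Example~\ref{ex_simplex_code}), $25$ (an affine space, Example~\ref{example_affine_space}), $26$ (an ovoid, Example~\ref{example_ovoid}), $39$ (the two-weight code listed in Table~\ref{table_two_weight_codes}), and $41$ and $46$ (the generator matrices produced by the computer search in Subsection~\ref{subsec_computer_searches}). Closure under addition (Lemma~\ref{lemma_sum_mult}) then yields $12,18,24,30$ as multiples of $6$ and $31=25+6$, $32=26+6$. For the tail I would argue modulo $6$: the base cardinalities cover all residues, with smallest realizable representatives $36,37,38,39,46,41$ in residues $0,1,2,3,4,5$, and adding copies of the simplex code of cardinality $6$ fills each residue class upward. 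This produces every $n\ge 36$ except the single value $n=40$, which is the unique residue-$4$ integer below $46$ that cannot be written as a sum of the available blocks.

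The main obstacle is the isolated cardinality $n=33$, which slips through both Theorem~\ref{thm_exclusion_q_r} and the quadratic Lemma~\ref{lemma_hyperplane_types_arithmetic_progression} and is excluded only by the finer cubic condition with the precise choice $t=5$. A second delicate point, on the constructive side, is that the residue-$4$ and residue-$5$ classes modulo $6$ cannot be reached from $\{6,25,26\}$ alone; one genuinely needs the sporadic computer-search lengths $41$ and $46$, and the value $40$ must remain open precisely because no combination of the listed building blocks attains it.
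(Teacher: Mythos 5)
Your proposal is correct and takes essentially the same route as the paper's proof: the identical exclusions via the quadratic condition of Lemma~\ref{lemma_hyperplane_types_arithmetic_progression} with $m=5$ and $m=6$, the cubic condition of Corollary~\ref{cor_implication_fourth_mac_williams} with $t=5$ for the stubborn case $n=33$, and the same base examples $6$, $25$, $26$, $39$, $41$, $46$ combined through closure under addition, leaving exactly $n=40$ open. The only cosmetic difference is that you cite Theorem~\ref{thm_exclusion_q_r} for the range $[1,23]$ where the paper invokes its $r=1$ specialization, Theorem~\ref{thm_exclusion_r_1_to_ovoid}; both yield the same forbidden intervals $[6b+1,6b+5]$, $b\in\{0,1,2,3\}$, and the same realizability below $rq^{r+1}=25$.
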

\begin{proof}
  The values $1\le n\le 5$, $7\le n\le 11$, $13\le n\le 17$, and $19\le n\le 23$ are excluded by Theorem~\ref{thm_exclusion_r_1_to_ovoid}.
  The cases $27\le n\le 29$ and $34\le n\le 35$ are excluded by Lemma~\ref{lemma_hyperplane_types_arithmetic_progression} 
  using $m=5$ and $m=6$, respectively. 
  The case $n=33$ is excluded by Corollary~\ref{cor_implication_fourth_mac_williams} with $t=5$ respectively.
  
  \textit{Base examples} for cardinalities $6$, $25$, and $26$ are given by Example~\ref{ex_simplex_code}, Example~\ref{example_affine_space}, and Example~\ref{example_ovoid}, 
  respectively. Additionally, there exists a two-weight code of cardinality $n=39$ and two sporadic examples found by computer searches for cardinalities $n\in\{41,46\}$.  
\end{proof}

\begin{nlemma}
  \label{lemma_partial_picture_q_5_r_2}
  Let $\cM$ the a $5^2$-divisible set of $n$ points in $\PG(v-1,5)$, then $n$ is not contained in any of the intervals $[1,30]$, $[32,61]$, $[63,92]$, $[94,123]$, 
  $[126,154]$, $[157,185]$, $[188,216]$, $[219,247]$, $[252,278]$, $[283,309]$, $[316,340]$, $[347,371]$, $[379,402]$, $[410,433]$, $[442,464]$, $[473,495]$, $[505,526]$, 
  $[537,557]$, $[568,587]$, $[600,618]$, $[632,649]$, $[663,680]$, $[695,711]$, $[727,742]$, $[758,772]$, $[790,803]$, $[822,834]$, $[854,864]$, $[886,895]$, $[918,925]$, 
  and $[951,955]$.
 \end{nlemma}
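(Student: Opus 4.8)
\noindent
Since Lemma~\ref{lemma_partial_picture_q_5_r_2} only asserts \emph{non-existence} (it makes no claim about realizability), the plan is to cover every listed interval by one of the exclusion criteria of Section~\ref{sec_nonexistence_projective_q_r}, reproducing for $q=5$, $r=2$ (so $\Delta=q^r=25$, $\qbin{r+1}{1}{q}=[3]_5=31$, and $rq^{r+1}=250$) the template already used in Lemma~\ref{lemma_picture_q_5_r_1} for the low-divisibility case. First I would dispose of all intervals lying below $rq^{r+1}=250$, namely $[1,30]$, $[32,61]$, $[63,92]$, $[94,123]$, $[126,154]$, $[157,185]$, $[188,216]$, and $[219,247]$, by a single application of Theorem~\ref{thm_exclusion_q_r} ranging its parameters over $a\in\{0,1\}$ and $b\in\{0,1,2,3\}$; these are exactly the eight forbidden windows $[(4a+b)\cdot 31+a+1,\,(4a+b+1)\cdot 31-1]$.

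The bulk of the remaining $23$ intervals I would obtain from the \textbf{quadratic condition}, Lemma~\ref{lemma_hyperplane_types_arithmetic_progression}, in the sharpened form of Lemma~\ref{lemma_negative_tau}: for each fixed $m$ the cardinalities $n=u+25m$ with $\tau_5(u,25,m)\le 0$ form an interval, centred near $(125m-63)/4$ with half-width $\tfrac18\sqrt{15876-500m}$. Running $m$ from $9$ up to the maximal value $\lfloor(q\Delta+2)/4\rfloor=31$ (beyond which the radicand $15876-500m$ turns negative and $\tau_5$ is always positive) yields precisely $23$ shrinking windows, matching the $23$ intervals above $250$; in particular the half-width collapses to about $\tfrac18\sqrt{376}\approx 2.4$ at $m=31$, which is why the final interval $[951,955]$ contains only five values. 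For each such $m$ one still has to check the integrality and sign hypotheses (b),(c) of Corollary~\ref{cor_nonexistence_arithmetic_progression}; treating the ambient dimension uniformly via $y=q^{v-2}$, exactly as in Lemma~\ref{lemma_no_8_div_52}, these hold.

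What the quadratic test narrowly misses is, in several intervals, the single lowest cardinality (e.g.\ it produces $[253,278]$, $[284,309]$, $[919,925]$ rather than the stated $[252,278]$, $[283,309]$, $[918,925]$). These stray boundary points I would mop up in two ways. When the \textbf{cubic condition}, Corollary~\ref{cor_implication_fourth_mac_williams}, has a nonempty integer candidate window — which happens for a suitable $t$, e.g.\ $t=8$ forces out $252$ and $t=29$ forces out $918$ — it settles the point directly. Otherwise (for instance $283$, where the $t=9$ candidate window $(283.45,283.71]$ is integer-free) I would invoke the averaging argument: by Lemma~\ref{lemma_average} a minimal hyperplane $H$ satisfies $\cM(H)<n/5$ with $\cM(H)\equiv n\pmod{25}$, and by Lemma~\ref{lemma_heritable} the restriction $\cM|_H$ is a $5^1$-divisible set of that small cardinality; for $n=283$ the only options are $8$ and $33$, both excluded by Lemma~\ref{lemma_picture_q_5_r_1}, a contradiction. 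This bootstrapping is the same mechanism as in Example~\ref{ex_proj_8_div_card_33}, where the linear Corollary~\ref{cor_nonexistence_arithmetic_progression_2} can be substituted.

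I expect the genuine mathematical content to be light and the \textbf{main obstacle to be purely the bookkeeping}: one must verify that the union of all the windows produced above equals the stated list exactly, with no omitted integer and no spurious exclusion, and that every hypothesis of each criterion holds uniformly across all admissible ambient dimensions $v$. This is precisely the routine-but-extensive verification that Exercise~\ref{ex_implement_exclusion_lemmas} proposes to automate, and I would carry it out by that implementation rather than by hand.
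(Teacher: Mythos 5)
Your proposal is correct and coincides with the paper's intended proof: this lemma is deliberately stated without an explicit argument because, as announced at the start of the chapter, its intervals are exactly those produced by the non-existence machinery of Section~\ref{sec_nonexistence_projective_q_r} via the implementation of Exercise~\ref{ex_implement_exclusion_lemmas}, namely Theorem~\ref{thm_exclusion_q_r} below $rq^{r+1}=250$, the quadratic condition of Lemma~\ref{lemma_hyperplane_types_arithmetic_progression} for $m=9,\dots,31$, and the cubic condition of Corollary~\ref{cor_implication_fourth_mac_williams} or the averaging bootstrap (your $n=283$) for the stray left endpoints --- precisely your decomposition, which also mirrors the written proofs of Lemmas~\ref{lemma_picture_q_5_r_1} and~\ref{lemma_picture_q_4_r_2}. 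One cosmetic remark: parts (a), (b), (c) of Corollary~\ref{cor_nonexistence_arithmetic_progression} are alternative sufficient criteria rather than joint hypotheses, so for interior points of the quadratic windows nothing beyond $\tau_5(u,25,m)<0$ needs to be checked.
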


\begin{question}{Research problem}Resolve one of the following open cardinalities of $q$-divisible sets in $\PG(v-1,q)$.
\begin{itemize}
  \item { $q=7$: $\{75,83,91,92,95,101,102,103,109,110,111,117,118,119,125,126,127,133$, $134,135,142,143,151,159,167\}$;}
  \item { $q=8$: $\{93,102,111,120,121,134,140,143,149,150,151,152,158,159,160,161,167$, $168,169,170,176,177,178,179,185,186,187,188,196,197,205,206,214,215,223,224$, $232,233,241,242,250,251\}$;}
  \item { $q=9$: $\{123,133,143,153,154,175,179,185,189,195,196,199,206,207,208,209,216$, $217,218,219,226,227,228,229,236,237,238,239,247,248,249,257,258,259,267,268$, 
  $269,277,278,279,288,289,298,299,308,309,318,319,329,339,349,359\}$.}
\end{itemize} 
\end{question}

\chapter{Applications}
\label{sec_applications}
In Theorem~\ref{thm_delta_divides_q_power} we have seen that in order to study $\Delta$-divisible codes it is sufficient to study $q^r$-divisible codes, 
where $r\in\Q$, $m\cdot r\in\N$, and $q=p^m$. The equivalence between $q^r$-divisible codes and $q^r$-divisible multisets of points in projective geometries 
have been discussed in Subsection~\ref{subsec_geometric_description}. Besides that there are several relations to other combinatorial structures, which is the 
topic of this section. In the subsequent subsections we give brief descriptions and pointers to the literature, while we devote entire sections to the relations 
to partial spreads and vector space partitions, see sections \ref{sec_partial_spreads} and \ref{sec_vector_space_partitions}. Our list is very far from being 
exhaustive. For applications in quantum computation we refer to \cite{hu2022divisible}. The unique minimal\footnote{A linear $[n,k]_q$-code $C$ is called linear if 
there do not exist two non-zero codewords $c_1,c_2\in C$ with $\supp\!\left(c_1\right)\subsetneq\supp\!\left(c_2\right)$. A major problem in this area is the 
determination of the minimum possible length $n=m_q(k)$ of a minimal $[n,k]_q$-code. We indeed have $m_3(5)=19$.} linear $[19,5]_3$-code is $3$-divisible
\cite{kurz2023trifferent}. The relation between minimality and divisibility is more extensively studied in \cite{kurz2023divisibleminimal}.   

\section{Subspace codes}
\label{subsec_subspace_codes}

For two subspaces $U$ and $U'$ of $\PG(v-1,q)$ the \emph{subspace distance} is given by $d_S(U,U') = \dim (U+U') - \dim (U\cap U')$. 
A set $\mathcal{C}$ of subspaces in $\PG(v-1,q)$, called codewords, with minimum subspace distance $d$ is called a \emph{subspace code}. 
Its maximal possible cardinality is denoted by $A_q(v,d)$, see e.g.\ 
\cite{honold2016constructions}. If all codewords have the same dimension, 
say $k$, then we speak of a \emph{constant dimension code} and denote the corresponding maximum possible cardinality by $A_q(v,d;k)$, see 
e.g.~\cite{etzionsurvey}. For known bounds, we refer to \url{http://subspacecodes.uni-bayreuth.de} \cite{TableSubspacecodes} containing 
also the generalization to subspace codes of mixed dimension. For $2k\le v$ the cardinality $A_q(v,2k;k)$ is the maximum 
size of a partial $k$-spread, see Section~\ref{sec_partial_spreads}. For $d<2k$ the recursive Johnson bound 
$$A_q(v,d;k)\le \left\lfloor \qbin{v}{1}{q} \cdot A_q(v-1,d;k-1)/\qbin{k}{1}{q}\right\rfloor,$$  
see \cite{xia2009johnson}, recurs on this situation. The involved rounding can be slightly sharpened using the non-existence of $q^r$-divisible 
multisets of a certain cardinality, see \cite[Lemma 13]{kiermaier2020lengths} and Lemma~\ref{lem:pack_cover}:
\begin{equation}
  A_q(v,d;k)\le \llfloor A_q(v-1,d;k-1) \cdot [v]_q/[k]_q\rrfloor_{q^{k-1}}\text{.} 
\end{equation}
For $d<2k$ this gives the tightest known upper bound for $A_q(v,d;k)$ 
except $A_2(6,4;3)=77<81$ \cite{honold2015optimal} and $A_2(8,6;4)=257< 289$ \cite{heinlein2019classifying}. For general subspace codes the 
underlying idea of the Johnson bound in combination with $q^r$-divisible multisets has been generalized in \cite{honold2019johnson}. 

\section{Subspace packings and coverings}
\label{subsec_subspace_packings_coverings}
A constant-dimension code consisting of $k$-dimensional codewords in $\PG(v-1,q)$ has minimum subspace distance $d$ iff each 
$(k-\tfrac{d}{2}+1)$-dimensional subspace is contained in at most one codeword. If we relax the condition a bit and require that 
for a multiset $\cU$ of $k$-spaces each $(k-\tfrac{d}{2}+1)$-dimensional subspace is contained in at most $\lambda$ codewords, then we have the definition of a 
\emph{subspace packing}. Of course, similar to constant-dimension codes, $q^r$-divisible multisets can be used to obtain 
upper bounds on the cardinality of a subspace packing, see \cite{ubt_eref48694,etzion2020subspace}. Indeed, 
\cite[Lemma 13]{kiermaier2020lengths} and Lemma~\ref{lem:pack_cover} cover that case, i.e.,  
\begin{equation}
  A^\lambda_q(v,d;k)\le \llfloor A_q^\lambda(v-1,d;k-1) [v]_q/[k]_q\rrfloor_{q^{k-1}}
\end{equation}
for $k\ge 2$, where $A_q^\lambda(v,d;k)$ denotes the maximum cardinality of a multiset $\cU$ of $k$-spaces in $\PG(v-1,q)$ such that each 
$(k-\tfrac{d}{2}+1)$-dimensional subspace is covered at most $\lambda$ times.

If we replace {\lq\lq}contained in at most $\lambda$ codewords{\rq\rq} by {\lq\lq}contained in at least $\lambda$ codewords{\rq\rq} 
we obtain so-called \emph{subspace coverings}. For the special case of $\lambda=1$ we refer e.g.\ to \cite{etzion2014covering,etzion2011q}. 
Again, \cite[Lemma 13]{kiermaier2020lengths} and Lemma~\ref{lem:pack_cover} cover this situation and relate it to $q^r$-divisible multisets, i.e., 
\begin{equation}
  B^\lambda_q(v,d;k)\ge \llceil  B^\lambda_q(v-1,d;k-1) [v]_q/[k]_q\rrceil_{q^{k-1}}
\end{equation}
for $k\ge 2$, where $B_q^\lambda(v,d;k)$ denotes the minimum cardinality of a multiset $\cU$ of $k$-spaces in $\PG(v-1,q)$ such that each 
$(k-\tfrac{d}{2}+1)$-dimensional subspace is covered at least $\lambda$ times.

\section{Orthogonal arrays}
\label{subsec_orthogonal_arrays}
A $t-(v,k,\lambda)$ orthogonal array, where $t\le k$, is a $\lambda v^t \times k$ array whose entries are chosen from a set 
$X$ with $v$ points such that in every subset of $t$ columns of the array, every $t$-tuple of points of $X$ appears in exactly $\lambda$ rows. 
Here, $t$ is called the strength of the orthogonal array. For a survey see e.g.\ \cite{hedayat2012orthogonal}. 
A library of orthogonal arrays can be found at \url{http://neilsloane.com/oadir/}. A variant of the linear programming method for orthogonal 
arrays with mixed levels was presented in \cite{sloane1996linear}, see also \cite{bierbrauer1997note}. Orthogonal arrays can be regarded as natural 
generalizations of orthogonal Latin squares\cite{keedwell2015latin}, cf.~\cite{bose1952orthogonal}. Linear orthogonal arrays are ultimately linked to 
linear codes, see e.g.\ \cite[Section 4.3]{hedayat2012orthogonal}, via:
\begin{ntheorem}
  Suppose that $C$ is an $[n,k]_q$-code. Then $\dH(C)\ge d$ iff $C^\perp$ is a linear $\operatorname{OA}_\lambda(d-1,n,q)$, where 
  $\lambda=q^{n-k-d+1}$. 
\end{ntheorem}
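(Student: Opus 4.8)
The plan is to realize the rows of the claimed orthogonal array as the codewords of $C^\perp$ and to reduce the strength condition to linear independence of columns of a parity-check matrix of $C$. Since $C^\perp$ is an $[n,n-k]_q$-code, it has exactly $q^{n-k}$ codewords, which I take as the $q^{n-k}$ rows of an array with $n$ columns over the alphabet $\F_q$. For a fixed set $T=\{i_1,\dots,i_t\}$ of $t$ coordinates, consider the projection $\pi_T\colon C^\perp\to\F_q^t$, $\bc\mapsto(c_{i_1},\dots,c_{i_t})$. This map is $\F_q$-linear, so all non-empty fibres are cosets of $\ker\pi_T$ and hence share the same cardinality $|\ker\pi_T|=q^{n-k}/|\operatorname{im}\pi_T|$. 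Consequently, the rows of $C^\perp$ form an orthogonal array of strength $t$ if and only if $\pi_T$ is surjective for every $t$-subset $T$, and in that case every $t$-tuple occurs exactly $q^{n-k}/q^{t}=q^{n-k-t}$ times.

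Next I would express surjectivity geometrically. Let $G'$ be a generator matrix of $C^\perp$; since $(C^\perp)^\perp=C$, the matrix $G'$ is simultaneously a parity-check matrix of $C$. Writing $\bc=\bx G'$ with $\bx\in\F_q^{n-k}$, the image $\pi_T(\bc)=\bx G'_T$ ranges over the row space of the $(n-k)\times t$ submatrix $G'_T$ formed by the selected columns, so $\pi_T$ is surjective onto $\F_q^t$ exactly when these $t$ columns of $G'$ are linearly independent. Invoking the standard criterion that $d(C)\ge s+1$ if and only if every $s$ columns of a parity-check matrix of $C$ are linearly independent, I conclude that $\pi_T$ is surjective for all $t$-subsets $T$ if and only if $d(C)\ge t+1$.

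Setting $t=d-1$ combines the two equivalences: $C^\perp$ forms an orthogonal array of strength $d-1$ if and only if $\dH(C)\ge d$, and whenever this holds each $(d-1)$-tuple appears exactly $\lambda=q^{n-k-(d-1)}=q^{n-k-d+1}$ times, which is the asserted index. For the converse bookkeeping I note that the value of $\lambda$ is in any case forced: summing the occurrences of all $q^{d-1}$ tuples in $d-1$ fixed columns must return the total row count, whence $\lambda q^{d-1}=q^{n-k}$. The only genuinely delicate point is the careful tracking of generator versus parity-check roles—that $G'$ is at once a generator matrix of $C^\perp$ and a parity-check matrix of $C$—after which the column-independence criterion does all the work; I would state that criterion explicitly (or dispatch it in one line) to keep the argument self-contained.
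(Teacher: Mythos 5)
Your proof is correct. Note that the paper itself gives no proof of this theorem; it is stated with a pointer to \cite[Section 4.3]{hedayat2012orthogonal}, and your argument is precisely the standard one found there: identify the rows of the array with the codewords of $C^\perp$, use linearity to see that all non-empty fibres of a coordinate projection $\pi_T$ are cosets of $\ker\pi_T$ (so strength $t$ reduces to surjectivity of every $\pi_T$), translate surjectivity into linear independence of the corresponding $t$ columns of a generator matrix of $C^\perp$, i.e.\ of a parity-check matrix of $C$, and invoke the column-independence characterization of minimum distance; the bookkeeping $\lambda q^{d-1}=q^{n-k}$ forcing $\lambda=q^{n-k-d+1}$ is also handled correctly.
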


\section{$(s,r,\mu)$-nets}
\label{subsec_nets}

\begin{ndefinition}(\cite[Definition 2]{nets_and_spreads})\\
  Let $J$ be an incidence structure. Define $B\parallel G$ for blocks $B$, $G$ of $J$ to mean that either $B = G$ or [B,G] = 0. Then $J$ is called an $(s,r,\mu)$-net provided: 
  \begin{enumerate}
    \item[(i)] $||$ is a parallelism; 
    \item[(ii)] $G\not\parallel H$ implies $[G,H] = \mu$; 
    \item[(iii)] there is at least one point, some parallel class has $s\ge 2$ blocks, and there are $r\ge 3$ parallel classes. 
  \end{enumerate}
\end{ndefinition} 
We note that the existence of an $(s,r,\mu)$-net is equivalent to the existence of an orthogonal array of strength two, see Subsection~\ref{subsec_orthogonal_arrays}. 
From partial spreads $(s,r,\mu)$-nets can be constructed, see \cite{nets_and_spreads}. Additionally, there is a connection between $3$-nets and Latin squares, see 
e.g.~\cite[Section 8.1]{keedwell2015latin}. 

Nets can be seen as a relaxation of a finite projective plane, see e.g.~\cite{ostrom1968vector}. For the famous existence question of finite projective planes of 
small order we refer to \cite{lam1991search,perrott2016existence}.

\section{Minihypers}
\label{subsec_minihypers}

An \emph{$(f,m;v,q)$-minihyper} is a pair $(F,w)$, where $F$ is a subset of the point set of $\PG(v-1,q)$ and $w$ is a weight function 
$w\colon \PG(v-1,q)\to\mathbb{N}$, $x\mapsto w(x)$, satisfying
\begin{enumerate}
  \item[(1)] $w(x)>0$ $\Longrightarrow$ $x\in F$, 
  \item[(2)] $\sum_{x\in F} w(x)=f$, and
  \item[(3)] $\min\{\sum\limits_{x\in H} w(x) \mid H\in\mathcal{H}\}=m$, where $\mathcal{H}$ is the set of hyperplanes of $\PG(v-1,q)$.
\end{enumerate}
We also say that a multiset $\cM$ of points $\cM$ in $\PG(v-1,q)$ is an \emph{$(f,m)$-minihyper} if $\#\cM=f$, $\cM(H)\ge m$ for all $H\in\cH$, and $\min\{\cM(H)\,:\,H\in\cH\}=m$. 
For a positive integer $e$ (and field size $q$) write $f=\sum_{i=1}^e f_i[i]_q$ where $f_e=\left\lfloor f/[e]_q\right\rfloor$ and $f_j=\left\lfloor \left(f-\sum_{i=j+1}^e f_i[i]_q\right)/[j]_q\right\rfloor$
for $j=e-1,\dots,1$. By $\left[f_e,\dots,f_1\right]$ we denote the \emph{$[e]_q$-expansion} of $f$. The expansion has the properties $f_e\ge 0$ and $0\le f_j\le q$ for $1\le j\le e-1$. 
Moreover, $f_j=q$ for some $1\le j<e$ implies $f_i=0$ for all $1\le i<j$. With this, the mapping $f\mapsto \left[f_e,\dots,f_1\right]$ is a bijection from $\N$ onto the set of $e$-element lists 
with the mentioned properties.  

\medskip

\begin{trailer}{The Hamada bound}
\vspace*{-6mm}
\begin{ntheorem} 
  \label{theorem_hamada_bound}
  Let $\cM$ be an $(f,m)$-minihyper in $\PG(v,q)$ and the $[v-1]_q$-expansion of $f$ be $\left[f_{v-1},\dots,f_1\right]$. Then,
  \begin{equation}
    f\ge \left[f_{v-1},\dots,f_1,0\right]=qf+\sum_{i=1}^{v-1} f_i.
  \end{equation}   
\end{ntheorem}

For a proof we refer e.g.\ to \cite[Theorem 4.1]{ward2007arcs}. We remark that there exist several variants of this result where e.g.\ $\gamma_0(\cM)=1$ or 
$0\le f_i\le q-1$ is assumed. The latter assumption also allows to drop the parameter $e$ from the expansion. For a survey on minihypers we refer to \cite{hamada1993characterization}.
\end{trailer}

\medskip

An distinguished class of minihypers is given by $\left(x[t]_q,x[t-1]_q\right)$-minihypers in $\PG(t,q)$, see the discussion after Exercise~\ref{exercise_dual_multiset_coefficients}. 
Here we have rather strong divisibility properties:
\begin{ntheorem}\cite[Theorem 3.1]{LandjevVandendriessche2012} 
  Let $\cM$ be an $\left(x[t]_q,x[t-1]_q\right)$-minihyper in $\PG(t,q)$, where $x\le q-p^g$ for some non-negative integer $g$ and the characteristic $p$ of $\F_q$. 
  Then, $\cM$ is $p^{g+1}q^{t-2}$-divisible.  
\end{ntheorem}  

For the other direction we consider $\cM=\chi_E+3\cdot\chi_L+9\cdot\chi_P$ in $\PG(v-1,3)$ for a point $P$, a line $L$, and a plane $E$, where $v\ge 3$ and 
$P\le L\le E$. We can easily check that $\cM$ is $9$-divisible and has $[3]_3$-expansion $[2,2,0]$. However, $\cM$ is not a $(34,10)$-minihyper, as one might 
hope in view of Theorem~\ref{theorem_hamada_bound}, but only a $(34,7)$-minihyper. 
\begin{nexercise}
  Let $\cM$ be a $q^t$-divisible multiset of points of cardinality $x[t+1]_q$ in $\PG(v-1,q)$ with $1\le t< v$. Show that 
  $\cM$ is an $\left(x[t+1]_q,x[t]_q\right)$-minihyper if $x\le q-1$. Assuming $\gamma_0(\cM)=1$, show that 
  $\cM$ is an $\left(x[t+1]_q,x[t]_q\right)$-minihyper if $x\le t(q-1)$.\\ 
  (\textit{Hint}: Use Theorem~\ref{thm_exclusion_r_1_to_ovoid}.)    
\end{nexercise}
\begin{nexample}
  \label{example_minihyper_q_times_subspace}
  Let $K$ be a $(t+1)$-space in $\PG(v-1,q)$ with $v\ge t+1$, where $t\ge 2$. For a point $P\le K$ the set of points $\chi_K-\chi_P$ is 
  an $\left(q[t]_q,q[t-1]_q\right)$-minihyper that is not $q$-divisible. If $S\le K$ is an $s$-space with $2\le s\le t$ and $S'\le S$ is an $(s-1)$-space, 
  then $\cM:=\chi_K-\chi_S+q\cdot \chi_{S'}$ is a $\left(q[t]_q,q[t-1]_q\right)$-minihyper that is not $q^s$-divisible. 
\end{nexample}
\begin{nexercise}
  Show that each $\left(2[t]_2,2[t-1]_2\right)$-minihyper in $\PG(v-1,2)$ is either a union of two $t$-spaces or covered by the construction in Exercise~\ref{example_minihyper_q_times_subspace}.
\end{nexercise}

\medskip

%% The set of holes $\cP$, i.e., uncovered points, of a partial $k$-spread is a $q^{k-1}$-divisible set, i.e., we have 
%% $\#(H\cap \cP)\equiv u\pmod {q^{k-1}}$ for some integer $u\in\left\{0,1,\dots,q^{k-1}-1\right\}$ with $u\equiv \#\cP\pmod{q^{k-1}}$ 
%% and each hyperplane $H\in\cH$. Thus, $\cP$ corresponds to a minihyper with $m=u$ and $f=\#\cP$. So, in principle, $q^{k-1}$-divisibility 
%% is a stronger condition than the implied minihyper condition. 
  
Minihypers have e.g.\ been used to prove extendability results for partial spreads, see e.g.~\cite{ferret2003results,govaerts2002particular,govaerts2003particular} 
and Section~\ref{sec_extendability_results}. If $\cP$ is the set of holes of a partial $k$-spread, then the partial spread is extendible iff $\cP$ contains all points of 
a $k$-dimensional subspace. As an example, in \cite{honold2019classification} the possible hole configurations of partial $3$-spreads in $\PG(6,2)$ 
of cardinality $15$ were classified. In four cases the partial spread is extensible and in one case it is not, cf.\ Example~\ref{example_cone_constructions}.

A close relation between divisible sets and minihypers can be found in \cite{landjev2016extendability}. To this end an $(n, w)$-arc in $\PG(k-1,q)$  
is called $t$-quasidivisible iff every hyperplane has a multiplicity congruent to $n+i \pmod q$, where $i\in \{0,1,...,t\}$. With this, every $t$-quasidivisible 
arc associated with a linear code meeting the Griesmer bound, and satisfying an additional numerical condition, is $t$ times extendable. For more 
papers using minihypers to study codes meeting the Griesmer bound see e.g.~\cite{hamada1993characterization,hill2007geometric}.  
 
\begin{question}{Research problem}Can some results obtained using minihypers be improved by using the properties of divisible codes?\end{question} 

The use of classification results for projective $q^{k-1}$-divisible codes, see e.g.\ Section~\ref{sec_classification_results}, for extendability results 
for partial $k$-spreads can be generalized to extendability results for constant-dimension codes, see \cite{nakic2016extendability} and 
Section~\ref{sec_extendability_results}. 

\section{Few-weight codes}
\label{subsec_few_weight_codes}

A linear $[n,k]_q$ code $C$ is called an \emph{$s$-weight code} if the non-zero codewords of $C$ attain (at most) $s$ possible weights. For $s=1$ 
repetitions of simplex codes give the only examples. The case $s=2$ is discussed in Subsection~\ref{subsec_two_weight_codes}. For 
projective two-weight codes there is a strong relation to $q^r$-divisible codes, see Lemma~\ref{lemma_delsarte}. While we do not have  
such a strong relation for $s\ge 3$, it turns out that many examples of codes with relatively few weights are $q^r$-divisible, where $r$ is relatively 
large. For some literature on three-weight codes, see e.g.\ \cite{calderbank1984three,ding2015class,heng2015several,kiermaier2019three,wang2021binary,yang2021weight,zhou2014class}. 
Few-weight codes, i.e., $s$-weight codes with $s\ge 4$ but $s$ still being relatively small, are e.g.\ treated in \cite{he2023several,li2021binary,wu2020linear}.

\section[$k$-dimensional dual hyperovals]{$\mathbf{k}$-dimensional dual hyperovals}
\label{subsec_dual_hyperovals}
A set $\cK$ of $k$-spaces in $\PG(v-1,q)$ with $\#\cK\ge 3$ such that 
\begin{itemize}
\item[--] $\dim(X\cap Y)=1$ for any distinct $X,Y\in\cK$;
\item[--] $\dim(X\cap Y\cap Z)=0$ for any distinct $X,Y,Z\in\cK$; and 
\item[--] the points in the elements of $\cK$ generate $\PG(v-1,q)$
\end{itemize}
is called \emph{$k$-dimensional dual arc} (in $\PG(v-1,q)$). The associated multiset $\cM$ of points 
is $q^{k-1}$-divisible. Each point of an arbitrary element $X\in \cK$ is contained in at most one further 
element of $\cK$ so that $\#\cK\le [k]_q+1$, see e.g.\ \cite[Lemma 2.2]{yoshiara2008dimensional}. In the case 
of equality $\cK$ is called \emph{($k$-dimensional) dual hyperoval}. Here we have $\cM(P)\in\{0,2\}$ for all 
points $P\in\cP$, so that $\tfrac{1}{2}\cM$ is a $q^{k-1}$-divisible set of $\tfrac{1}{2}[k]_q([k]_q+1)$ points 
if $q$ is odd, which we assume in the following. Note that the number of elements of $\cK$ that are contained in 
a given hyperplane $H$ has to be even, so that we can conclude $v\ge 2k$. For $v=2k$ it was shown in 
\cite[Proposition 2.11]{del2000d} that each hyperplane contains either $0$ or $2k-2$ elements from $\cK$, i.e., 
$\tfrac{1}{2}\cM$ corresponds to a two-weight code, cf.~Example SU2 in \cite{calderbank1986geometry}. If $\tfrac{1}{2}\cM$ 
can be the set of double points of a $k$-dimensional dual hyperoval seems to be a rather hard question. A few 
more necessary conditions are known, see e.g.\ \cite{del2000d,yoshiara2008dimensional}.

\section{$q$-analogs of group divisible designs}
\label{subsec_qgdd}
Let $K$ and $G$ be sets of positive integers. A \emph{$q$-analog of a group divisible design} of index $\lambda$  and
order $v$ is a triple $(\cV, \cG, \cB)$, where
\begin{itemize}
\item[--] $\cV$ is a vector space over $\mathbb{F}_q$ of dimension $v$,
\item[--] $\cG$ is a vector space partition whose dimensions lie in $G$, and
\item[--] $\cB$ is a family  of subspaces (blocks) of $\cV$,
\end{itemize}
that satisfies
\begin{enumerate}
\item $\#  \mathcal{G} > 1$,
\item if $B\in \cB$ then $\dim B \in K$,
\item every $2$-dimensional subspace of $\cV$ occurs in exactly  $\lambda$ blocks or one group, but not both.
\end{enumerate}

This notion was introduced in \cite{ubt_eref48691} and generalizes the classical definition of a group divisible design in the 
set case, see e.g.~\cite{brouwer1977group}. If $K=\{k\}$ and $G=\{g\}$, then we speak of a $(v, g, k,\lambda)_q$-GDD. All necessary 
existence conditions of the set case can be easily transferred to the 
$q$-analog case. Moreover, there is an additional necessary existence condition whose proof is based on $q^r$-divisible multisets:
\begin{nlemma}{(\cite[Lemma 5]{ubt_eref48691})} 
  Let $(\cV, \cG, \cB)$ be a $(v, g, k,\lambda)_q$-GDD and
  $2\leq g \leq k$, then $q^{k-g}$ divides $\lambda$.
\end{nlemma}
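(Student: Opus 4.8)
The plan is to localize the problem at one group and one of its points, pass to the quotient geometry through that point, and there recognize the blocks as a multiset of subspaces whose associated multiset of points is forced to be $\lambda$ times the complement of a subspace. Two divisibility facts about that point multiset — one coming from Lemma~\ref{lem:union_subspaces}, one from a direct hyperplane count — then collide to give $q^{k-g}\mid\lambda$. First I would fix a group $G_0\in\cG$ (so $\dim G_0=g$) and a point $P\le G_0$. The key local observation is that every block meets $G_0$ in dimension at most $1$: if a block $B\in\cB$ contained a line $\ell\subseteq G_0$, then $\ell$ would occur both in the group $G_0$ and in the block $B$, contradicting the defining requirement that a $2$-space lies in exactly $\lambda$ blocks \emph{or} in one group, but not both. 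Since $\cG$ is a vector space partition, a line inside $G_0$ cannot lie in any other group either, so it lies in no block at all. In particular, for a block $B$ through $P$ we get $B\cap G_0=\langle P\rangle$.

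Next I would project from $P$ and work in the quotient $V':=\cV/P$, a space of dimension $v-1$, together with the $(g-1)$-space $D:=G_0/P$. A block $B$ through $P$ becomes a $(k-1)$-space $B/P$, and because $B\cap G_0=\langle P\rangle$ the subspace $B/P$ is disjoint from $D$. The defining condition translates cleanly: a point $\bar R\notin D$ corresponds to a line through $P$ that is not contained in $G_0$, hence in no group, so it lies in exactly $\lambda$ blocks — all passing through $P$ — while a point $\bar R\in D$ corresponds to a line inside $G_0$, which lies in no block. Hence the multiset $\cU$ of $(k-1)$-spaces $\{\,B/P : P\le B\in\cB\,\}$ has associated multiset of points
\[
  \cM=\lambda\cdot\bigl(\chi_{V'}-\chi_{D}\bigr),
\]
i.e.\ every point outside $D$ has multiplicity $\lambda$ and every point of $D$ has multiplicity $0$.

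Finally I would play the two divisibility facts against each other. Since the elements of $\cU$ are $(k-1)$-spaces, Lemma~\ref{lem:union_subspaces} yields that $\cM$ is $q^{k-2}$-divisible. On the other hand I choose a hyperplane $H$ of $V'$ with $D\not\subseteq H$; such an $H$ exists because $g<v$ (two disjoint $g$-spaces force $v\ge 2g$, and $\#\cG>1$). Then $\dim(D\cap H)=g-2$, and using $\#\cM=\lambda([v-1]_q-[g-1]_q)$ and $\cM(H)=\lambda([v-2]_q-[g-2]_q)$ together with $[m]_q-[m-1]_q=q^{m-1}$, the weight of the corresponding codeword is $\#\cM-\cM(H)=\lambda\bigl(q^{v-2}-q^{g-2}\bigr)=\lambda q^{g-2}\bigl(q^{v-g}-1\bigr)$. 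The $q^{k-2}$-divisibility forces $q^{k-2}\mid\lambda q^{g-2}(q^{v-g}-1)$; as $q^{v-g}-1$ is coprime to $q$, this collapses to $q^{k-2}\mid\lambda q^{g-2}$, which is exactly $q^{k-g}\mid\lambda$.

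The step I expect to be most delicate is the bookkeeping in the quotient: I must verify carefully that the incidence multiplicities transfer correctly (every point off $D$ covered exactly $\lambda$ times, every point of $D$ covered zero times), that the hypothesis $2\le g$ guarantees $D$ is a genuine subspace of positive dimension so the hyperplane count behaves as stated, and that $g<v$ (hence a hyperplane avoiding $D$ really exists). Once these geometric translations are nailed down, the arithmetic of comparing the two divisibilities is routine.
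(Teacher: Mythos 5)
Your proof is correct and follows essentially the same route as the proof in the cited reference (which, as the surrounding text notes, is based on $q^r$-divisible multisets): localize at a point $P$ of a group $G_0$, pass to the quotient $\cV/P$, identify the point multiset of the projected blocks as $\lambda\left(\chi_{\cV/P}-\chi_{G_0/P}\right)$, and invoke the $q^{k-2}$-divisibility supplied by Lemma~\ref{lem:union_subspaces}. The only cosmetic difference is the final step, where you extract $q^{k-g}\mid\lambda$ by a direct hyperplane count on a hyperplane not containing $G_0/P$, rather than transferring the divisibility to the $\lambda$-fold $(g-1)$-space $G_0/P$ via the complement argument of Lemma~\ref{lemma_t_complement}; the arithmetic is identical.
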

Note that in the set case the divisibility by $1^{k-g}$ is trivially satisfied.

\section{Codes of nodal surfaces}
\label{subsec_nodal_surfaces}

In algebraic geometry, a nodal surface is a surface in a (usually complex) projective space whose only singularities are nodes, i.e., a very simple type 
of a singularity. A major problem about them is to find the maximum number of nodes of a nodal surface of given degree. In \cite{beauville1979nombre} to each 
such nodal surface is assigned a linear code with a certain divisibility and the problem was solved for quintic surfaces. Using the link to linear codes it was shown  
in \cite{jaffe1997sextic} that a sextic surface 
can have at most $65$ nodes. In \cite[Theorem 5.5.9]{PhdPettersen} a unique irreducible $3$-parameter family of $65$-nodal sextics, containing the famous Barth sextic 
\cite{barth1996two}, was determined. The uniqueness of the associated $8$-divisible binary linear code was established in \cite{kurz2020classification}. In general, 
the binary codes associated to nodal surfaces are either doubly-even or triply even, depending on whether the degree of the surface is odd or even, see \cite{catanese1981babbage}.

For another type of singularities, so-called cusps, we end up with $3$-divisible codes over $\F_3$, see e.g.\ \cite{barth2007cusps}.

\section{Distance-optimal codes}
\label{subsec_optimal_codes}
Given a field size $q$ the possible parameters $n$, $k$, and $d$ of an $[n,k,d]_q$-code allow different optimizations, i.e., we can fix two parameters and optimize 
the third. The codes attaining the maximum possible value for the minimum distance $d$, given length $n$ and dimension $k$, are called \emph{distance-optimal} codes.  
Among the distance-optimal codes, there are quite some $q^r$-divisible codes with a relatively large value of $r$. E.g.\ all ten {\lq\lq}base examples{\rq\rq} 
used in the proof of Lemma~\ref{lemma_picture_q_3_r_2} are distance-optimal. This phenomenon can partially be explained by our search technique screening 
the lists of available optimal linear codes and checking them for divisibility. Our sources were \url{http://www.codetables.de} maintained by Markus Grassl, 
\url{http://mint.sbg.ac.at} maintained at the university of Salzburg, and the database of \textit{best known linear codes} implemented in \texttt{Magma}. 
%% http://magma.maths.usyd.edu.au/calc/:
%%    C := BKLC(GF(7),91,10);
%%    C;
%%    WeightDistribution(C);
%%    #MacWilliamsTransform(113,7,3,[ <0, 1>, <72, 1602>,  <81, 512>, <90,72> ]);
In Table~\ref{table_optimal_codes} we list the parameters and references of those cases that appear as {\lq\lq}base examples{\rq\rq} in the proofs of 
Section~\ref{sec_lengths_projective_q_r}, but are not two-weight codes, see Subsection~\ref{subsec_two_weight_codes}, or have an explicit 
construction In Section~\ref{sec_constructions_projective}. 
\begin{table}[htp]
  \begin{center}
    \begin{tabular}{rrrrcl}
      \hline
        $n$ & $k$ & $d$ & $\Delta$ & $q$ & reference \\
      \hline
      199 & 11 & 96 & 8 & 2 & BCH code extended with a parity check bit \cite{edel1997twisted}\\  
       85 &  7 & 54 & 9 & 3 & \cite{bierbrauer1997new} \\
       90 &  8 & 54 & 9 & 3 & \cite{maruta2008constructing} \\ 
      127 &  7 & 81 & 9 & 3 & \cite{grassl2005new}\\
      141 &  7 & 90 & 9 & 3 & \cite{grassl2005new}\\
      \hline
    \end{tabular}
    \caption{Parameters of a few selected distance-optimal codes.}
    \label{table_optimal_codes}  
  \end{center}
\end{table}  
We remark that there are way more possible lengths of distance-optimal projective $q^r$-divisible linear codes. However, in many cases the corresponding lengths 
can be obtained as the sum of smaller base examples. Note that it is unknown whether $[90,8,55]_3$- or $[90,8,56]_3$-codes exist. 

For some cases it can be shown that distance-optimal codes have to admit a certain divisibility. To this end we have to mention the \emph{Griesmer bound}, see 
\cite{griesmer1960bound}, stating that each $[n,k,\ge d]_q$-code $C$ satisfies
\begin{equation}
  \label{ie_griesmer}
    n \,\,\ge\,\, \sum_{i=0}^{k-1} \left\lceil\frac{d}{q^i}\right\rceil.
\end{equation} 
Code attaining Inequality~(\ref{ie_griesmer}) with equality are called \emph{Griesmer codes} or codes meeting the Griesmer bound. Those codes have a 
high divisibility, at least if the field size is a prime:
\begin{ntheorem}{(\cite[Theorem 1]{Ward-1998-JCTSA})}
  Let $C$ be an $[n,k,d]_p$-code, where $p$ is a prime, meeting the Griesmer bound. If $p^e$ divides $d$, where $e\in\N$, then $C$ is $p^e$-divisible. 
\end{ntheorem}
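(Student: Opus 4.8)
The plan is to argue by induction on the exponent $e$, the case $e=0$ being vacuous since every code is $1$-divisible. Throughout I may assume that $C$ has full length: a Griesmer code cannot contain a zero coordinate, for deleting one would (by Exercise~\ref{exercise_remove_zero_columns}) produce an $[n-1,k,d]_p$-code contradicting the Griesmer bound~(\ref{ie_griesmer}). So fix $e\ge 1$, assume the theorem for exponent $e-1$ in every dimension, and let $C$ be an $[n,k,d]_p$ Griesmer code with $p^e\mid d$.

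First I would record the behaviour of residual codes. Let $\bc_0\in C$ have minimum weight $d$ and set $S=\supp(\bc_0)$, so $\#S=d$. Any codeword supported inside $S$ has weight at most $d$, hence exactly $d$, and a short argument shows such words form the line $\langle\bc_0\rangle$; thus $\Res(C;\bc_0)$ is an $[n-d,k-1]_p$-code. The standard residual bound gives minimum distance at least $d-d+\lceil d/p\rceil=d/p$, and the telescoping identity $\big\lceil\lceil d/p\rceil/p^{i}\big\rceil=\lceil d/p^{\,i+1}\rceil$ together with the Griesmer equality for $C$ shows
\[
  n-d\ \ge\ \sum_{i=0}^{k-2}\big\lceil (d/p)/p^{i}\big\rceil\ =\ \sum_{i=1}^{k-1}\lceil d/p^{i}\rceil\ =\ n-d .
\]
Hence $\Res(C;\bc_0)$ is itself a Griesmer code, of dimension $k-1$ and minimum distance $d/p$ with $p^{e-1}\mid d/p$. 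Applying the inductive hypothesis both to this residual and to $C$ (which also satisfies $p^{e-1}\mid d$), I conclude that $C$ and every $\Res(C;\bc_0)$ are $p^{e-1}$-divisible.

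Next I would exploit the coset $\{\bc+\lambda\bc_0:\lambda\in\F_p\}$ of an arbitrary codeword $\bc$. Write $\bar\bc$ for the image of $\bc$ in $\Res(C;\bc_0)$ and $t_\lambda=\#\{i\in S:\ c_i=-\lambda\,(\bc_0)_i\}$ for the relative symbol frequencies on $S$, which are well defined because $\bc_0$ is nonzero on all of $S$. Then $\sum_{\lambda\in\F_p}t_\lambda=d$ and $\wt(\bc+\lambda\bc_0)=\wt(\bar\bc)+d-t_\lambda$. Summing over $\lambda$ gives $\sum_\lambda \wt(\bc+\lambda\bc_0)=p\,\wt(\bar\bc)+(p-1)d\equiv 0\pmod{p^{e}}$, using $p^{e-1}\mid\wt(\bar\bc)$ and $p^{e}\mid d$. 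Moreover $p^{e-1}$-divisibility of $C$ and of the residual forces $t_\lambda\equiv\wt(\bar\bc)\pmod{p^{e-1}}$ for every $\lambda$, so the entire statement reduces to the single congruence $t_\lambda\equiv\wt(\bar\bc)\pmod{p^{e}}$ for all $\lambda$.

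The main obstacle is exactly this separation. The first-moment identity only controls the sum of the $t_\lambda$ over the coset (equivalently, in the correspondence of Theorem~\ref{thm_correspondence_codes_multisets}, the sum of the multiplicities $\cM(H)$ over the pencil of hyperplanes associated with the $p$ codewords $\bc+\lambda\bc_0$), and this is consistent with, but does not by itself imply, that each frequency is individually balanced modulo $p^{e}$. To close the gap I would study the punctured code $C_S=\{\bc|_S:\bc\in C\}$ of length $d$: after the diagonal rescaling that turns $\bc_0|_S$ into the all-ones word, the $t_\lambda$ become genuine symbol counts of a codeword, and I would show that the rigidity of Griesmer codes forces all these counts to agree modulo $p^{e}$, either by varying the minimum-weight word $\bc_0$ or via the minihyper/anticode structure of the hole set. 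This refined step is the heart of Ward's argument; the residual recursion and the coset computation above are the comparatively routine scaffolding around it.
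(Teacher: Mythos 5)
You should first be aware that the paper itself contains no proof of this statement: it is quoted directly from Ward \cite[Theorem 1]{Ward-1998-JCTSA} as a known result, so your attempt can only be judged on its own terms. On those terms it is not a proof but a partially executed plan. The scaffolding you build is correct: a Griesmer code is necessarily of full length; the residual code $\Res(C;\bc_0)$ of a minimum-weight codeword $\bc_0$ has dimension $k-1$ and is again a Griesmer code of minimum distance $d/p$ (your telescoping of the ceilings is valid because $p\mid d$); induction on $e$ then makes $C$ and all such residuals $p^{e-1}$-divisible; and the coset identity $\sum_{\lambda\in\F_p}\wt(\bc+\lambda\bc_0)=p\,\wt(\bar{\bc})+(p-1)d\equiv 0\pmod{p^e}$ is correct.

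The gap is exactly the step you yourself flag as ``the heart of Ward's argument'': proving $t_\lambda\equiv\wt(\bar{\bc})\pmod{p^e}$ for each \emph{individual} $\lambda$, rather than on average over the coset. Nothing you have established implies this. From $p^{e-1}$-divisibility you only know $t_\lambda\equiv\wt(\bar{\bc})\pmod{p^{e-1}}$ for each $\lambda$, and a sum of $p$ integers, each divisible by $p^{e-1}$, can be divisible by $p^e$ without any single summand being so (for $p=2$, $e=2$: $2+2\equiv 0\pmod 4$ while $2\not\equiv 0\pmod 4$); hence the mod-$p^e$ coset average cannot be disaggregated by any congruence bookkeeping. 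Your proposed repair --- pass to the punctured code $C_S$, rescale so that $\bc_0|_S$ becomes the all-ones word, and argue that the ``rigidity of Griesmer codes'' forces the symbol frequencies to agree modulo $p^e$, either by varying $\bc_0$ or via the minihyper structure of the hole set --- is never carried out, and no reason is given why either route would succeed; this is precisely the content of Ward's theorem, not routine scaffolding around it. As written, your argument establishes only $p^{e-1}$-divisibility together with a coset-summed congruence, which is strictly weaker than the claim, so the theorem remains unproved.
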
   
Similar results also hold for distance-optimal non-Griesmer codes, see e.g.\ \cite{ball2001q}. An interesting example is given by the 
$[46,9,20]_2$-code found in \cite{kurz202146}. It is optimal, unique, and does not have any non-trivial automorphism. So, heuristic 
searches prescribing automorphisms had to be unsuccessful for this example. Like prescribing automorphisms, prescribing $\Delta$-divisibility 
might help to reduce search spaces to a more manageable size while still permitting solutions. 
\begin{question}{Research problem}Try to improve the best known lower bounds for distance-optimal codes for a few parameters 
by assuming $q^r$-divisibility for the largest possible $r$ so that the minimum distance $d$ is divisible by $q^r$. 
\end{question}

\chapter{Partial spreads}
\label{sec_partial_spreads}
A \emph{partial $t$-spread} $\cT$ in $\PG(v-1,q)$ is a set of $t$-dimensional subspaces such that the points of $\PG(v-1,q)$ are covered 
at most once.\footnote{Note that we use the algebraic dimension, while authors in papers with a geometric background speak of partial 
$(t-1)$-spreads.} In other words, the non-zero vectors in $\F_q^v$ are covered at most once by the non-zero vectors of the $t$-dimensional 
subspaces, i.e., the elements of the partial $t$-spread.  
Using the notion of vector space partitions, see Section~\ref{sec_vector_space_partitions}, a partial $t$-spread is a vector space partition of type $t^{m_t}1^{m_1}$. 
The $m_1$ uncovered points are also called \emph{holes}. By $A_q(v,2t;t)$ we denote the maximum value of $m_t$.\footnote{The more 
general notation $A_q(v,2t-2w;t)$ denotes the maximum cardinality of a collection of $t$-dimensional subspaces, 
whose pairwise intersections have a dimension of at most $w$, see e.g.\ Subsection~\ref{subsec_subspace_codes}.}

If we replace the elements of a partial $t$-spread by their $[t]_q$ points, we obtain a set of points in $\PG(v-1,q)$ with cardinality at most $[v]_q$, so that
\begin{equation}
  \label{trivial_partial_spread_bound}
  \#\cT\le A_q(v,2t;t)\le \left\lfloor\frac{[v]_q}{[t]_q}\right\rfloor.
\end{equation} 
Observe that $[v]_q$ is divisible by $[t]_q$ iff $v$ is divisible by $t$. If $\cT$ is a partial $t$-spread in $\PG(v-1,q)$ attaining Inequality~(\ref{trivial_partial_spread_bound}) 
with equality, then we speak of a \emph{$t$-spread}. Those perfect packings of the points indeed exist for all positive integers $t$ and $v$ where $t$ divides $v$. To this 
end we can consider the set of all points in $\PG\!\left(v/t-1,q^t\right)$ and concatenate the corresponding linear codes with a $t$-dimensional simplex code over $\F_q$, 
see Subsection~\ref{subsec_subfield_constructions} for more details and e.g.\ \cite[Example 1]{honold2018partial} for a concrete example. The 
$[v/t]_{q^t}=[v]_q/[t]_q$ points in $\PG(v/t-1,q^t)$ and the corresponding $t$-dimensional simplex codes form the spread elements. Spreads arising by the sketched construction 
are also called \emph{Desarguesian spreads}.    

In order to construct large partial $t$-spreads we need:
\begin{nlemma}{(\cite{beutelspacher1975partial},\cite[Lemma 1.3]{govaerts2005small}}
  \label{lemma_mrd_vsp}
  If $\pi_a$ is an $a$-space in $\PG(a+b-1,q)$, where $a\ge b\ge 1$, then it is possible to partition the points of $\PG(a+b-1,q)\backslash \pi_a$  
  by a set of $q^a$ $b$-spaces.
\end{nlemma}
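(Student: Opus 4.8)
The plan is to introduce affine coordinates, reduce the partition problem to the existence of a suitable family of complements of $\pi_a$, and then produce that family by field multiplication (a spread-set / MRD-type construction). I would fix the decomposition $V=\F_q^{a+b}$ with $\pi_a=\F_q^a\oplus\{0\}$ and begin with a count: the number of points outside $\pi_a$ is $[a+b]_q-[a]_q=\tfrac{q^{a+b}-q^a}{q-1}=q^a[b]_q$. Since each $b$-space carries $[b]_q$ points, a partition into $b$-spaces must use exactly $q^a$ blocks. Crucially, every block $W$ must satisfy $W\cap\pi_a=\{0\}$: all its points lie outside $\pi_a$, so a nonzero common vector is impossible, and as $\dim W+\dim\pi_a=b+a=\dim V$ each block is forced to be a complement of $\pi_a$. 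Thus the task is exactly to exhibit $q^a$ complements of $\pi_a$ whose point sets partition $\cP\setminus\pi_a$.

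Next I would parametrize complements. A $b$-space meeting $\pi_a$ trivially projects isomorphically onto the second factor $\F_q^b$, hence is the graph $W_f=\{(f(y),y):y\in\F_q^b\}$ of a unique $\F_q$-linear map $f\colon\F_q^b\to\F_q^a$. A point $\langle(x,y)\rangle$ with $y\neq 0$ lies in $W_f$ iff $x=f(y)$, and this condition is invariant under rescaling the representative by $\lambda\in\F_q^\times$. So the problem reduces to finding a set $\cS$ of $q^a$ such maps for which, for every $y\neq 0$, the evaluation $f\mapsto f(y)$ is a bijection $\cS\to\F_q^a$; equivalently, for distinct $f,f'\in\cS$ the difference $f-f'$ has trivial kernel (full column rank $b$), which needs $a\geq b$.

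For the construction I would identify $\F_q^a$ with the field $\F_{q^a}$ as $\F_q$-vector spaces, fix a $b$-dimensional $\F_q$-subspace $U\subseteq\F_{q^a}$ (possible since $b\le a$) together with an $\F_q$-isomorphism $\phi\colon\F_q^b\to U$, and set $f_c(y)=c\cdot\phi(y)$ (field multiplication) for $c\in\F_{q^a}$. Then $\cS=\{f_c:c\in\F_{q^a}\}$ has $q^a$ elements, and for $c\neq c'$ the difference $f_c-f_{c'}=f_{c-c'}$ is injective because multiplication by the nonzero element $c-c'$ is invertible on $\F_{q^a}$. The same fact yields the bijectivity needed above: for fixed $y\neq 0$, the element $\phi(y)$ is nonzero, so $c\mapsto c\,\phi(y)$ runs bijectively over $\F_{q^a}$. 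A brief verification that each point of $\cP\setminus\pi_a$ lies in exactly one $W_{f_c}$, together with the count $q^a\cdot[b]_q$, then closes the argument.

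The main obstacle — and the only place where the hypothesis $a\geq b$ is genuinely used — is producing the family $\cS$ whose pairwise differences all have full rank; everything else is bookkeeping. I expect this to be the crux, and the field-multiplication construction (which is essentially a Gabidulin/MRD code of minimum rank distance $b$ meeting the Singleton-type bound with $q^a$ codewords) is the cleanest route past it. As an alternative one could simply invoke the known existence of such MRD codes and translate the minimum-distance condition into the disjointness of the blocks.
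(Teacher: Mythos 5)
Your proof is correct, and it takes a genuinely different route from the paper's. The paper's proof is a two-line synthetic argument: embed $\PG(a+b-1,q)$ into $\PG(2a-1,q)$, take an $a$-spread $\cS$ of $\PG(2a-1,q)$ containing $\pi_a$ (such spreads exist by the Desarguesian construction discussed earlier in that chapter, and one containing the given $\pi_a$ can be obtained by applying a collineation), and observe that each element of $\cS\setminus\{\pi_a\}$ meets the $(a+b)$-space in exactly a $b$-space: the dimension formula gives intersection dimension at least $a+(a+b)-2a=b$, while disjointness from $\pi_a$ caps it at $b$; pairwise disjointness and the covering property are inherited from the spread. You instead work in explicit coordinates, reduce the problem to finding $q^a$ linear maps $\F_q^b\to\F_q^a$ whose pairwise differences have trivial kernel, and realize such a family by field multiplication $f_c(y)=c\cdot\phi(y)$ in $\F_{q^a}$ -- in effect constructing a Gabidulin/MRD spread set from scratch. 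The two arguments are secretly the same construction: if one unfolds the paper's proof using the Desarguesian $a$-spread of $\PG(2a-1,q)\cong\PG(\F_{q^a}^2)$, whose nontrivial elements are the graphs $\{(cz,z):z\in\F_{q^a}\}$, then intersecting with $\F_{q^a}\times U$ yields exactly your subspaces $W_{f_c}$. What your version buys is self-containedness (no appeal to prior existence of spreads or to transitivity of the collineation group) and an explicit identification of where $a\ge b$ enters; what the paper's version buys is brevity and reuse of machinery already established in the same chapter.
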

\begin{proof}
  Embed $\PG(a+b-1,q)$ in $\PG(2a-1,q)$ and take an $a$-spread $\cS$ in $\PG(2a-1,q)$ containing $\pi_a$. The elements of 
  $\cS\backslash \left\{\pi_a\right\}$ intersect $\PG(a+b-1,q)$ in a $b$-spread of $\PG(a+b-1,q)\backslash\pi_a$.
\end{proof}
If $v\ge 2t$, then by choosing $a=v-t$ and $b=t$ we can recursively construct partial $t$-spreads using Lemma~\ref{lemma_mrd_vsp}. If $t\le v<2t$, then 
we can choose an arbitrary $t$-space. Note that we end up with $t$-spreads if $v$ is divisible by $t$. Otherwise we have:
\begin{nproposition}(\cite{beutelspacher1975partial})
  \label{prop_partial_spread_lower_bound}
  If $v=tk+s$, where $t\ge 2$ and $1\le s\le t-1$, then we have
  \begin{equation}
    \label{ie_partial_spread_lower_bound}
    A_q(v,2t;t)\ge 1+\sum_{i=1}^{k-1} q^{v-it}=1+\sum_{i=1}^{k-1}q^{it+s}=\frac{q^v-q^{t+s}}{q^t-1}+1.
  \end{equation}  
\end{nproposition}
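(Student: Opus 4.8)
The plan is to prove Inequality~(\ref{ie_partial_spread_lower_bound}) by induction on $k$, where $v=tk+s$ with $t\ge 2$ and $1\le s\le t-1$, using Lemma~\ref{lemma_mrd_vsp} as the engine of the inductive step. First I would dispose of the base case $k=1$, i.e.\ $v=t+s$ with $t+1\le v\le 2t-1$, so that $t\le v<2t$. Here the right-hand side of~(\ref{ie_partial_spread_lower_bound}) equals $1$ since the sum $\sum_{i=1}^{0}$ is empty, and a single $t$-space of $\PG(v-1,q)$ (which exists because $v\ge t$) is a partial $t$-spread of cardinality $1$. Thus the bound holds trivially.

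For the inductive step I would assume $k\ge 2$, hence $v\ge 2t$, and set $a=v-t$ and $b=t$. Since $a=v-t=t(k-1)+s\ge t=b\ge 1$, the hypothesis of Lemma~\ref{lemma_mrd_vsp} is met. Fixing an $a$-space $\pi_a$ in $\PG(v-1,q)$, the lemma yields a set $\cS$ of $q^a=q^{v-t}$ pairwise disjoint $b$-spaces, i.e.\ $t$-spaces, whose point sets partition $\PG(v-1,q)\backslash\pi_a$. These $q^{v-t}$ elements already constitute a partial $t$-spread, and by construction none of them meets $\pi_a$.

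Next I would apply the induction hypothesis inside $\pi_a\simeq\PG(a-1,q)=\PG(v-t-1,q)$: because $a=t(k-1)+s$ carries the same residue $s$ and the decremented parameter $k-1$, there is a partial $t$-spread $\cT'$ within $\pi_a$ satisfying $\#\cT'\ge 1+\sum_{i=1}^{k-2}q^{(v-t)-it}$. As $a\ge t$, every element of $\cT'$ is a genuine $t$-space of $\PG(v-1,q)$ lying inside $\pi_a$, and hence disjoint from every element of $\cS$. Therefore $\cS\cup\cT'$ is a partial $t$-spread of $\PG(v-1,q)$, which gives
\begin{equation*}
A_q(v,2t;t)\ge q^{v-t}+A_q(v-t,2t;t)\ge q^{v-t}+1+\sum_{i=1}^{k-2}q^{(v-t)-it}=1+\sum_{i=1}^{k-1}q^{v-it},
\end{equation*}
the last equality being a reindexing $j=i+1$ that turns $q^{v-t}$ into the $j=1$ term. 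This matches the first expression in~(\ref{ie_partial_spread_lower_bound}); the remaining two equalities follow from the substitution $v-it=it+s$ (using $v=tk+s$) and from summing the geometric series $\sum_{i=1}^{k-1}q^{it+s}=\frac{q^v-q^{t+s}}{q^t-1}$, both of which I would verify by a short direct computation.

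The construction is essentially forced by Lemma~\ref{lemma_mrd_vsp}, so there is no genuinely creative obstacle; the care lies entirely in the bookkeeping of the induction. The key points to check are that the residual space $\pi_a$ again has the shape $t(k-1)+s$, so that the residue $s$ is preserved while $k$ drops by exactly one, and that the inner spread elements remain $t$-spaces of the ambient geometry while staying disjoint from the outer ones. The disjointness is immediate, since $\cS$ covers only points of $\PG(v-1,q)\backslash\pi_a$ whereas $\cT'$ is contained in $\pi_a$.
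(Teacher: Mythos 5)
Your proof is correct and follows essentially the same route as the paper, which obtains the bound by recursively applying Lemma~\ref{lemma_mrd_vsp} with $a=v-t$, $b=t$ (placing $q^{v-t}$ disjoint $t$-spaces outside an $(v-t)$-space and recursing inside it), with the case $t\le v<2t$ handled by a single $t$-space. Your write-up merely makes the induction on $k$ and the bookkeeping explicit; there is no substantive difference.
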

The same lower bound can be also obtained from the Echelon--Ferrers construction.

In \cite[Theorem 4.1]{beutelspacher1975partial}, see also \cite{hong1972general} for $q=2$, it was shown that the lower bound in Inequality~(\ref{ie_partial_spread_lower_bound})  
is attained with equality if $s=1$. In his original proof Beutelspacher considered the set of holes $N$ and the average number of holes per hyperplane, which is less than the 
total number of holes divided by $q$. An important insight was the relation $\#N\equiv \#(H\cap N) \pmod{q^{k-1}}$ for each hyperplane $H\in\cH$. In \cite[Corollary 2.6]{kurzspreads} 
the case $s=2$ was completely resolved for $q=2$. The original proof is based on a case analysis on possible vector space partitions in subspaces of codimension $2$. In 
\cite{kurz2017packing} it was observed that it suffices to study the number of holes in subspaces of codimension $2$. A major breakthrough was obtained by N{\u{a}}stase and Sissokho:
\begin{ntheorem}{(\cite[Theorem 5]{nastase2016maximum})}
   \label{thm_nastase_sissokho}
   If $v=kt+s$ with $0<s<t$ and $t>[s]_q$, then $A_q(v,2t;t)=\frac{q^n-q^{t+r}}{q^t-1}+1$, i.e., Inequality~(\ref{ie_partial_spread_lower_bound}) is tight.
\end{ntheorem}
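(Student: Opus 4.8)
The lower bound $A_q(v,2t;t)\ge \tfrac{q^v-q^{t+s}}{q^t-1}+1=:L$ is already supplied by Proposition~\ref{prop_partial_spread_lower_bound}, so the entire task is the matching upper bound $A_q(v,2t;t)\le L$. My plan is to pass to the divisible-code picture and invoke the length characterization of Theorem~\ref{thm_characterization_div}. In fact the statement is exactly the instance $\lambda=1$ of the computation in Example~\ref{ex_gen_partial_spread_asymptotic_bound}, and I would reproduce that computation in the present variables.

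First I would set up the hole set. Let $\cT$ be a partial $t$-spread and let $\cM=\uplus_{U\in\cT}\spaces{U}{1}$ be the associated set of covered points; since every element of $\cT$ has dimension $t$ and the elements are pairwise point-disjoint, $\cM$ has maximum multiplicity $1$ and is $q^{t-1}$-divisible by Lemma~\ref{lem:union_subspaces}. The full point set $\chi_{\cP}$ is the simplex configuration of dimension $v$, hence $q^{v-1}$-divisible by Example~\ref{ex_simplex_code} and in particular $q^{t-1}$-divisible. Consequently the set of holes $N=\chi_{\cP}-\cM$ is again $q^{t-1}$-divisible: for every hyperplane $H$ one has $N(H)=[v-1]_q-\cM(H)\equiv [v]_q-\#\cM=\#N \pmod{q^{t-1}}$, using $\cM(H)\equiv\#\cM$ and $[v-1]_q\equiv[v]_q\pmod{q^{t-1}}$, which is the divisibility criterion~(\ref{eq_divisible_multiset}). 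Counting points gives $\#N=[v]_q-\#\cT\cdot[t]_q$, so bounding $\#\cT$ is equivalent to bounding the cardinality of the $q^{t-1}$-divisible set $N$.

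Next comes a monotonicity reduction together with the arithmetic core. If some partial $t$-spread had $\#\cT\ge L+1$, then deleting elements (which preserves the partial-spread property) yields one with $\#\cT=L+1$ exactly, so it suffices to exclude this single value. For $\#\cT=L+1$ a direct count gives
\[
  \#N=[v]_q-(L+1)[t]_q=q^t[s]_q-[t]_q .
\]
By Theorem~\ref{thm_characterization_div} a $q^{t-1}$-divisible set of this cardinality exists only if the leading coefficient of the $S_q(t-1)$-adic expansion of $q^t[s]_q-[t]_q$ is non-negative. I would compute this expansion: its lower digits $a_0,\dots,a_{t-2}$ all equal $q-1$, and subtracting
\[
  \sum_{i=0}^{t-2}(q-1)\snumb{t-1}{i}{q}=(t-1)q^t-[t-1]_q
\]
leaves $q^{t-1}$ times the leading coefficient
\[
  a_{t-1}=q\bigl([s]_q-t\bigr)+(q-1).
\]
The hypothesis $t>[s]_q$ forces $[s]_q-t\le -1$, whence $a_{t-1}\le -1<0$, a contradiction. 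This rules out $\#\cT=L+1$ and establishes $A_q(v,2t;t)\le L$, matching the lower bound.

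The main obstacle is precisely this last $S_q(t-1)$-adic computation: one must verify that the mixed-radix digits of $q^t[s]_q-[t]_q$ behave as claimed and that $t>[s]_q$ is exactly the inequality making the leading coefficient negative. Everything else is bookkeeping. It is worth stressing that the divisibility of the hole set is the essential structural input: the naive point count yields only the weaker bound $A_q(v,2t;t)\le\lfloor[v]_q/[t]_q\rfloor$ from Inequality~(\ref{trivial_partial_spread_bound}), and it is the congruence information packaged in Theorem~\ref{thm_characterization_div} that sharpens this to the exact value.
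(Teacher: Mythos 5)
Your proof is correct and takes essentially the same route as the paper: the paper itself establishes this theorem (in the more general $\lambda$-fold form) in Example~\ref{ex_gen_partial_spread_asymptotic_bound}, namely by observing that the hole set is $q^{t-1}$-divisible via Lemma~\ref{lem:union_subspaces} and Lemma~\ref{lemma_t_complement}, assuming one element beyond the bound, and showing that the leading coefficient of the $S_q(t-1)$-adic expansion of the hole count is negative, contradicting Theorem~\ref{thm_characterization_div}. Your digit arithmetic (all lower digits equal to $q-1$ and leading coefficient $q([s]_q-t)+(q-1)$, which is negative exactly when $t>[s]_q$) is in fact the corrected form of the expansion sketched in the paper's example, whose stated leading coefficient $\lambda[r]_q-k$ suffers from a small arithmetic slip while yielding the same sign condition.
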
 
Ignoring the technical details one might say that a main idea was the study of the number of holes in subspaces of larger codimenson by a clever inductive approach. All these 
results where obtained without using the notion of $q^r$-divisible (multi-)sets of points in $\PG(v-1,q)$. In retro perspective there is now an easy explanation. The set of 
holes of a partial $t$-spread in $\PG(v-1,q)$ is $q^{t-1}$-divisible, see Lemma~\ref{lem:union_subspaces} and Lemma~\ref{lemma_t_complement}. As shown in 
Section~\ref{sec_lengths_of_divisible_codes} the easy averaging argument used by Beutelspacher and the inheritance of divisibility properties to subspaces is sufficient to 
completely characterize the possible cardinalities of $q^{t-1}$-divisible multisets over $\F_q$, see Theorem~\ref{thm_characterization_div}. The property that the set of holes 
actually is a set and not just a multiset is not necessary and indeed Example~\ref{ex_gen_partial_spread_asymptotic_bound} slightly generalizes Theorem~\ref{thm_nastase_sissokho}  
to a wider context and gives a proof that is reduced to a single technical computation.    

Additionally using the set property, i.e., considering possible cardinalities of $q^{t-1}$-divisible sets of points over $\F_q$, allows to explain another classical result from a 
different point of view. For a long time the best upper bound for partial spreads was given by Drake and Freeman:
\begin{ntheorem}{(\cite[Corollary 8]{nets_and_spreads})}
  \label{thm_partial_spread_4}
  If $v=kt+s$ with $0<s<t$, then 
  $$
    A_q(v,2t;t)\le \sum_{i=0}^{k-1} q^{it+s} -\left\lfloor\theta\right\rfloor-1
    =q^s\cdot \frac{q^{kt}-1}{q^t-1}-\left\lfloor\theta\right\rfloor-1
    =\frac{q^{v}-q^s}{q^t-1}-\left\lfloor\theta\right\rfloor-1,
  $$
  where $2\theta=\sqrt{1+4q^t(q^t-q^s)}-(2q^t-2q^s+1)$.
\end{ntheorem}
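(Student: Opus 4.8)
The plan is to prove the bound by passing from the partial spread to its set of holes and feeding the resulting divisibility into the quadratic machinery of Lemma~\ref{lemma_hyperplane_types_arithmetic_progression}. Write $\sigma=\#\cT=A_q(v,2t;t)$ for the quantity to be bounded and let $N$ denote the set of holes, i.e.\ the points of $\PG(v-1,q)$ not covered by $\cT$. Since $\cT$ consists of pairwise disjoint $t$-spaces, $N$ is a genuine set of cardinality $n=[v]_q-\sigma[t]_q$, and by Lemma~\ref{lem:union_subspaces} together with Lemma~\ref{lemma_t_complement} it is $q^{t-1}$-divisible. Moreover a hyperplane $H$ containing exactly $j$ elements of $\cT$ satisfies $\#(N\cap H)=c-jq^{t-1}$ with $c=[v-1]_q-\sigma[t-1]_q$, so the attained hyperplane multiplicities form an arithmetic progression with common difference $q^{t-1}$. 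A short computation turns the target bound into a statement about the integer deficiency $\delta:=\tfrac{q^v-q^s}{q^t-1}-\sigma$: one has $n=[s]_q+\delta[t]_q$, and the claimed inequality $\sigma\le\tfrac{q^v-q^s}{q^t-1}-\lfloor\theta\rfloor-1$ is equivalent to $\delta\ge\lfloor\theta\rfloor+1$.

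Next I would write down the first three standard equations of Lemma~\ref{lemma_standard_equations} for $N$ (the point-multiplicity term drops out because $N$ is a set), with $a_i$ the number of hyperplanes of hole-count $i$. Let $\rho=\min\{i:a_i>0\}$ be the least hole-count. Because every attained $i$ lies in $\rho+q^{t-1}\Z_{\ge 0}$, the quadratic $(i-\rho)(i-\rho-q^{t-1})$ is non-negative at each attained value, whence $\sum_i (i-\rho)(i-\rho-q^{t-1})a_i\ge 0$. Substituting the three moments yields a single inequality that is quadratic jointly in $n$ and $\rho$; this is exactly the specialization $\Delta=q^{t-1}$ of the ``quadratic condition'' in Lemma~\ref{lemma_hyperplane_types_arithmetic_progression} and Corollary~\ref{cor_nonexistence_arithmetic_progression}, and it is the projective analogue of the second-order Bonferroni / Bose--Bush inequality mentioned in the discussion around \cite{bose1952orthogonal}.

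It then remains to eliminate $\rho$ and solve for $\delta$. Here I would use that $\rho$ is the \emph{true} minimum, not a free parameter: by the averaging Lemma~\ref{lemma_average} we have $\rho<n/q$, and $\rho\equiv n\pmod{q^{t-1}}$, so the admissible value of $\rho$ is forced onto the small-root branch of the $\rho$-parabola. Inserting $v=kt+s$ collapses all the Gaussian-bracket factors, and the discriminant of the resulting quadratic in $\delta$ equals $1+4q^t(q^t-q^s)$; the critical deficiency is therefore the positive root of $\delta^2+(2q^t-2q^s+1)\delta-(q^t-q^s)(q^s-1)=0$, namely the number $\theta$ with $2\theta=\sqrt{1+4q^t(q^t-q^s)}-(2q^t-2q^s+1)$. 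Since $\delta\in\Z$, the strict inequality $\delta>\theta$ upgrades to $\delta\ge\lfloor\theta\rfloor+1$, which is the assertion.

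The main obstacle is the elimination of $\rho$: the bare non-negativity $\sum_i(i-\rho)(i-\rho-q^{t-1})a_i\ge 0$ is satisfiable for large $\rho$ and so carries no information on its own, and the content of the Drake--Freeman bound lies precisely in correctly coupling it with the minimality of $\rho$, the averaging bound, and the residue constraint (equivalently, in working with a hyperplane attaining the maximal number $d$ of spread elements and the non-negativity $c-dq^{t-1}\ge 0$ of its hole-count). Carrying the algebra through so that the constant matches $\theta$ exactly, and treating the rounding to $\lfloor\theta\rfloor$ carefully (including the borderline case $\theta\in\Z$), is the delicate part; the degenerate regime $k=1$, where two spread elements cannot be disjoint and hence $\sigma\le 1$, should be disposed of separately.
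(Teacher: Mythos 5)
You have the right skeleton, and it is the same skeleton the paper uses: pass to the hole set, which by Lemma~\ref{lem:union_subspaces} and Lemma~\ref{lemma_t_complement} is a $q^{t-1}$-divisible \emph{set} of cardinality $n=[s]_q+\delta[t]_q$; restate the bound as $\delta\ge\lfloor\theta\rfloor+1$; and your quadratic $\delta^2+(2q^t-2q^s+1)\delta-(q^t-q^s)(q^s-1)=0$ does have $\theta$ as its positive root. The gap sits exactly at the step you flag as the ``main obstacle'', and it is a genuine one: you have the quantifier on the anchor backwards. Lemma~\ref{lemma_hyperplane_types_arithmetic_progression} is an identity valid for \emph{every} decomposition $n=u+m\Delta$ with $m$ a free integer parameter, so Corollary~\ref{cor_nonexistence_arithmetic_progression}(a), i.e.\ $\tau_q(u,\Delta,m)\ge 0$, is a whole family of inequalities in $n$ alone (for a set the three moments are completely fixed by Lemma~\ref{lemma_standard_equations}). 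The paper's proof -- sketched in the paragraph preceding Theorem~\ref{theorem_parametric_ps_bound_2}, with details in \cite{honold2018partial,kurz2017packing}; Theorem~\ref{theorem_parametric_ps_bound_2} with $y=t$ \emph{is} the statement -- consists of minimizing the quadratic $\tau_q(u,\Delta,m)$ over the free parameter $m$. The minimum hyperplane multiplicity $\rho$, the averaging Lemma~\ref{lemma_average}, and hyperplanes containing many spread elements never enter. Your claim that the non-negativity at a general anchor ``carries no information on its own'' is exactly backwards: optimized over the anchor, it is the entire content of the bound.

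Concretely, your proposed elimination of $\rho$ cannot reach the theorem. For a set, $G(u):=\sum_i(i-u)(i-u-\Delta)a_i=n(n-1)[v-2]_q-(2u+\Delta-1)n[v-1]_q+u(u+\Delta)[v]_q$, and $G(u)\ge 0$ for every $u\equiv n\pmod{\Delta}$ since each summand is $h(h-1)\Delta^2 a_i$ with $h\in\Z$. Take $q=2$, $t=4$, $s=3$, $\delta=2$, so $\Delta=8$, $n=37$, $v=4k+3\ge 11$, and $\lfloor\theta\rfloor=\lfloor(\sqrt{513}-17)/2\rfloor=2$: the theorem must exclude this deficiency. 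The paper's route does so at once: the anchor $u=13$ (i.e.\ $m=3$) gives $G(13)=-18\cdot 2^{v-2}-384<0$, a contradiction. Your route does not: the candidates for $\rho$ permitted by your three constraints ($0\le\rho<n/q=18.5$, $\rho\equiv 37\equiv 5\pmod 8$, $G(\rho)\ge 0$) are $\rho\in\{5,13\}$; your quadratic constraint eliminates $13$ but not $5$, since $G(5)=334\cdot 2^{v-2}-768>0$. Hence $\rho=5$ is consistent with minimality, averaging and the residue condition, no contradiction arises, and your method proves only $\delta\ge 2$ here -- one short of Drake--Freeman. (One could kill $\rho=5$ by recursing into a minimal hyperplane, where a $4$-divisible set of $5$ points would have to live; but that is a different, inductive argument, not the one you describe, and it is not needed in the paper's proof.) The peripheral parts of your write-up -- the hole count $c-jq^{t-1}$ in a hyperplane, the deficiency reformulation, and the separate treatment of $k=1$ -- are all correct.
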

\begin{nexample}
  If we apply Theorem~\ref{thm_partial_spread_4} with $q=5$, $v=16$, $t=6$, and $s=4$, we obtain $\theta\approx 308.81090$ and $A_5(16,12;6)\le 9765941$.
  %% Can be also obtained by Corollary~\ref{cor_implication_fourth_mac_williams} with $t=309$.   
\end{nexample}
The proof of Theorem~\ref{thm_partial_spread_4} is based on the work of Bose and Bush \cite{bose1952orthogonal} and uses nets, see Subsection~\ref{subsec_nets}, as crucial objects.  
A quadratic polynomial plays an important role. Starting from the {\lq\lq}quadratic condition{\rq\rq} in Lemma~\ref{lemma_hyperplane_types_arithmetic_progression} and its 
implication in Corollary~\ref{cor_nonexistence_arithmetic_progression}.(a) we consider the condition $\tau_q(u,\Delta,m)<0$, where $\Delta=q^{t-1}$ is the divisibility, $u$ depends 
linearly on the cardinality of the set of holes, and $m$ is a free parameter. Noting that $\tau_q(u,\Delta,m)$ is a quadratic polynomial in $m$, we can minimize $\tau_q(u,\Delta,m)$ 
in order to obtain a suitable choice for $m$. Instead of the total number of holes we can also consider the number of holes in a suitable subspace, which gives us a parameter $y$ as 
a degree of freedom. Referring to \cite{honold2018partial} or \cite{kurz2017packing} for details and explanations, we state:
\newcommand{\uu}{\lambda}
\begin{ntheorem}{(\cite[Theorem 10]{honold2018partial},\cite[Theorem 2.10]{kurz2017packing})}
  \label{theorem_parametric_ps_bound_2}
  For integers $s\ge 1$, $k\ge 2$, $y\ge \max\{s,2\}$, $z\ge 0$ with $\uu=q^{y}$, $y\le t$,  
  $t=[s]_q+1-z>s$, $v=kt+s$, and  $l=\frac{q^{v-t}-q^s}{q^t-1}$, we have 
  \begin{equation}
   A_q(v,2t;t)\le 
       lq^t+\left\lceil \uu -\frac{1}{2}-\frac{1}{2}
    \sqrt{1+4\uu\left(\uu-(z+y-1)(q-1)-1\right)} \right\rceil.
  \end{equation}   
\end{ntheorem}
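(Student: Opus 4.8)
The plan is to pass from the partial spread to its set of holes and exploit divisibility. Let $\cT$ be a partial $t$-spread in $\PG(v-1,q)$ with $\#\cT=\sigma$, and let $N$ be its set of holes, i.e.\ the points covered by no element of $\cT$. By Lemma~\ref{lem:union_subspaces} the points of the spread elements form a $q^{t-1}$-divisible multiset, and since each point is covered at most once, Lemma~\ref{lemma_t_complement} shows that $N$ is a genuine ($0/1$-valued) $q^{t-1}$-divisible set. Counting points gives $\#N=[v]_q-\sigma[t]_q$, and substituting $\sigma=lq^t+x$ with excess $x:=\sigma-lq^t$ over the Beutelspacher bound (Proposition~\ref{prop_partial_spread_lower_bound}, so that $x\ge 1$) yields $\#N=[t+s]_q-x[t]_q$. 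Thus maximizing $\sigma$ is the same as maximizing $x$, which in turn means minimizing $\#N$; the theorem follows once I show that $\#N$ cannot be too small, i.e.\ that $x$ cannot exceed the stated ceiling.

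\textbf{Applying the quadratic condition.} Next I would invoke the quadratic condition for divisible sets. Since $N$ is $q^{t-1}$-divisible, its spectrum obeys the standard equations, so for every integer $m$, writing $\#N=u+mq^{t-1}$, Lemma~\ref{lemma_hyperplane_types_arithmetic_progression} together with Corollary~\ref{cor_nonexistence_arithmetic_progression}(a) forces $\tau_q(u,q^{t-1},m)\ge 0$; any admissible $m$ producing $\tau_q<0$ would rule out $N$, hence the assumed $\sigma$. The sharpening that introduces the free parameters is to apply this not to $N$ in the full space, but to a restriction $N|_U$ to a subspace $U$ of codimension $c\le t-1$, which by Lemma~\ref{lemma_heritable} is again divisible (of divisibility $q^{t-1-c}$); the parameter $y$ (with $\lambda=q^y$ and the hypotheses $\max\{s,2\}\le y\le t$) encodes the codimension and the hyperplane-multiplicity threshold that is exploited, while $z\ge 0$ encodes the slack in $t=[s]_q+1-z$. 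The conditions $t>s$ and $y\le t$ are exactly what makes such a restriction legitimate and the relevant counts nonnegative.

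\textbf{Optimizing and reading off the bound.} Since $\tau_q(u,q^{t-1},m)$ is quadratic in $m$ and linear in $u$ (hence in $\#N$), I would choose $m$ to minimize it and then apply Lemma~\ref{lemma_negative_tau}, which describes precisely when $\tau_q\le 0$ via an interval condition of the shape $(q-1)u-(m-q/2)\Delta+\tfrac12\in[-\tfrac12\sqrt{R},\tfrac12\sqrt{R}]$. Substituting $\Delta=q^{t-1}$, $t=[s]_q+1-z$, $\lambda=q^y$, and translating back through $\#N=[t+s]_q-x[t]_q$, the interval where $\tau_q<0$ converts into a forbidden range for the integer $x$. Its upper endpoint is exactly the smaller root of $X^2-(2\lambda-1)X+\lambda(z+y-1)(q-1)=0$, namely $\lambda-\tfrac12-\tfrac12\sqrt{1+4\lambda(\lambda-(z+y-1)(q-1)-1)}$, since completing the square turns the discriminant into the stated radicand. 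Rounding up yields $x\le\big\lceil\lambda-\tfrac12-\tfrac12\sqrt{1+4\lambda(\lambda-(z+y-1)(q-1)-1)}\big\rceil$, and adding $lq^t$ finishes the estimate.

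\textbf{Main obstacle.} The conceptual steps are short; the real work is the algebraic bookkeeping of the last paragraph. The hard part will be to verify that the optimal joint choice of the codimension (encoded by $y$) and of the integer $m$ makes $\tau_q$ negative on exactly the claimed range, and that the threshold collapses to the closed form with radicand $1+4\lambda(\lambda-(z+y-1)(q-1)-1)$. This requires careful tracking of the side hypotheses $y\ge\max\{s,2\}$, $y\le t$, $z\ge 0$, and $t>s$, which guarantee both the nonnegativity of the discarded terms and the legitimacy of the subspace restriction, together with the floor/ceiling and integrality accounting. For the fine details and the equivalent net-theoretic formulation I would follow \cite{honold2018partial,kurz2017packing}.
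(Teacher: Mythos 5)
Your proposal is correct and is essentially the paper's own argument: pass to the set of holes (a $q^{t-1}$-divisible set of cardinality $[t+s]_q-x[t]_q$, where $x$ is the excess over $lq^t$), apply the quadratic condition of Lemma~\ref{lemma_hyperplane_types_arithmetic_progression} and Corollary~\ref{cor_nonexistence_arithmetic_progression}(a) with the free parameter $m$ optimized, and obtain the extra degree of freedom $y$ by restricting to a subspace via Lemma~\ref{lemma_heritable}; your algebra identifying the stated bound with the smaller root of $X^2-(2\lambda-1)X+\lambda(z+y-1)(q-1)=0$, whose discriminant is exactly the radicand $1+4\lambda\left(\lambda-(z+y-1)(q-1)-1\right)$, checks out. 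The bookkeeping you defer (relating the hole count in the subspace back to $x$, and excluding the branch beyond the larger root, e.g.\ by the standard remark that deleting spread elements realizes every smaller excess) is precisely what the paper itself leaves to \cite{honold2018partial,kurz2017packing}.
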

Using Theorem~\ref{theorem_parametric_ps_bound_2} with $q=5$, $t=6$, $v=15$, $s=3$, $z=17$, and $y=5$ gives $A_5(15,12;6)\le 1953186$. Choosing $y=t$ we obtain 
Theorem~\ref{thm_partial_spread_4}. Theorem~\ref{theorem_parametric_ps_bound_2} also covers \cite[Theorems 6,7]{nastase2016maximumII} and yields improvements in a few 
instances, e.g.\ $A_3(15,12; 6)\le 19695$. Compared to Theorem~\ref{thm_partial_spread_4} we have e.g.\ improvements from $A_2(15,12; 6)\le 516$, $A_2(17,14;7)\le 1028$, and 
$A_9(18,16; 8)\le 3486784442$ to $A_2(15,12; 6)\le 515$, $A_2(17,14; 7)\le 1026$, and $A_9(18,16; 8)\le 3486784420$, respectively.   

Complementing Theorem~\ref{thm_nastase_sissokho} for smaller values of $t$ there is another parametric upper bound:
\begin{ntheorem}{(\cite[Corollary 7]{honold2018partial})}
  \label{theorem_parametric_ps_bound_3}
  For integers $s\ge 1$, $k\ge 2$, and $u,z\ge 0$ with $t=[s]_q+1-z+u>s$ we have
  $A_q(v,2t;t)\le lq^t+1+z(q-1)$, where $l=\frac{q^{v-t}-q^s}{q^t-1}$ and $v=kt+s$.   
\end{ntheorem}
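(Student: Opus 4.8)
The plan is to reduce the partial-spread bound to a non-existence statement for $q^{t-1}$-divisible sets of points, and then to feed that into the \textbf{quadratic condition} of Lemma~\ref{lemma_hyperplane_types_arithmetic_progression} and Corollary~\ref{cor_nonexistence_arithmetic_progression}. Write $v=kt+s$ with $0<s<t$, let $\cT$ be a partial $t$-spread in $\PG(v-1,q)$, and recall $l=\frac{q^{v-t}-q^s}{q^t-1}$. Since the elements of $\cT$ are pairwise disjoint $t$-spaces, their associated set of points is $q^{t-1}$-divisible by Lemma~\ref{lem:union_subspaces}, hence by Lemma~\ref{lemma_t_complement} the set $\cM$ of holes is $q^{t-1}$-divisible too. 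Counting points gives $\#\cM=[v]_q-\#\cT\cdot[t]_q$, and writing $\#\cT=lq^t+1+x$ one checks $\#\cM=q^t[s]_q-x[t]_q$, so $x$ measures the improvement over the Beutelspacher lower bound of Proposition~\ref{prop_partial_spread_lower_bound}. The target $\#\cT\le lq^t+1+z(q-1)$ is therefore equivalent to ruling out a $q^{t-1}$-divisible set of cardinality $q^t[s]_q-x[t]_q$ whenever $x>z(q-1)$.

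Set $\Delta=q^{t-1}$. For any integer $m\ge 1$ put $u'=\#\cM-m\Delta$; since every hyperplane meets $\cM$ in between $0$ and $\#\cM$ holes, the spectrum of $\cM$ is supported on $\{\,u'+h\Delta:h\le m\,\}$, and because $h(h-1)\ge 0$ for every integer $h$ the identity of Lemma~\ref{lemma_hyperplane_types_arithmetic_progression} forces $\tau_q(u',\Delta,m)\ge 0$. Conversely, exhibiting a single admissible $m$ with $\tau_q(\#\cM-m\Delta,\Delta,m)<0$ contradicts the existence of $\cM$ by Corollary~\ref{cor_nonexistence_arithmetic_progression}(a). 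By Lemma~\ref{lemma_negative_tau} the sign of $\tau_q$ is controlled by whether $(q-1)u'-(m-q/2)\Delta+\tfrac12$ lies in the interval of half-width $\tfrac12\sqrt{q^2\Delta^2-4qm\Delta+2q\Delta+1}$; substituting $u'=\#\cM-m\Delta$ converts this into a quadratic inequality in $m$ whose feasibility amounts to $4(q-1)\#\cM\le q^{2t}$ and whose solution set is an interval of length $\sqrt{1-4(q-1)\#\cM/q^{2t}}\le 1$ centred at $\frac{(q-1)\#\cM}{q^t}+\tfrac12$. The whole matter thus reduces to: does this short interval contain an admissible integer $m$ in its interior?

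Applying this to the total hole count alone is too weak — it only excludes $x$ of order $q^s$ — so, exactly as for the sibling bound Theorem~\ref{theorem_parametric_ps_bound_2}, I would run the argument not on $\cM$ but on its restriction $\cM|_W$ to a carefully chosen subspace $W$. By Lemma~\ref{lemma_heritable} a restriction to codimension $t-y$ is $q^{y-1}$-divisible, and selecting $W$ so that it carries few holes replaces the coarse quantity $q^s$ by the tunable parameter $\lambda=q^y$ with $y\le t$. Carrying out the discriminant analysis of the preceding paragraph for $\cM|_W$ and optimizing over both $W$ and $m$ is where the hypothesis $t=[s]_q+1-z+u>s$ is used: it is precisely the arithmetic that makes the forbidden interval capture every $x>z(q-1)$ while leaving the admissible range $0\le x\le z(q-1)$ free, and that keeps the chosen $m$ inside the valid window $1\le m\le\left\lfloor(q\Delta+2)/4\right\rfloor$ demanded by Lemma~\ref{lemma_negative_tau}.

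I expect the main obstacle to be exactly this bookkeeping. The delicate point is that the forbidden interval has length at most $1$, so showing it contains an interior integer $m$ for the borderline cardinality $\#\cM=q^t[s]_q-x[t]_q$ with $x=z(q-1)+1$ — and choosing the subspace $W$ that makes the refined count sharp enough to see it — is where the inequality $t>s$ and the exact value of $z$ must be tracked with care. Everything upstream is routine given the results already established: the $q^{t-1}$-divisibility of the holes, the closed form $\#\cM=q^t[s]_q-x[t]_q$, and the non-negativity of $\tau_q$ for realizable configurations.
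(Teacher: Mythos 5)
Your opening reduction is sound: the holes of a putative partial $t$-spread of cardinality $lq^t+2+z(q-1)$ form a $q^{t-1}$-divisible \emph{set} of $n=q^t[s]_q-\bigl(z(q-1)+1\bigr)[t]_q$ points by Lemma~\ref{lem:union_subspaces} and Lemma~\ref{lemma_t_complement}, and it suffices to exclude this single cardinality; your discriminant analysis of the quadratic condition is also correct, as is your observation that applying it to $\cM$ itself only yields a Drake--Freeman-type bound. The genuine gap is the step you defer as ``bookkeeping'': running the quadratic condition on a restriction $\cM|_W$, ``exactly as for Theorem~\ref{theorem_parametric_ps_bound_2}'', and optimizing over $W$ and $m$ cannot close the argument. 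That procedure, fully optimized, \emph{is} Theorem~\ref{theorem_parametric_ps_bound_2}, and for small $z$ that bound is strictly weaker than the one to be proved here. Concretely, for $z=0$ the present theorem claims excess $1$ over $lq^t$ (this is Theorem~\ref{thm_nastase_sissokho}), while the ceiling term in Theorem~\ref{theorem_parametric_ps_bound_2} equals $\bigl\lceil \lambda-\tfrac{1}{2}-\tfrac{1}{2}\sqrt{1+4\lambda(\lambda-c)}\,\bigr\rceil$ with $\lambda=q^y$, $c=(z+y-1)(q-1)+1$, $y\ge\max\{s,2\}$; one checks this is at most $1$ only if $c\le 3-2/\lambda$, i.e.\ $c\le 2$, whereas $y\ge 2$ and $z=0$ force $c\ge q$. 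So for every $q\ge 3$ the second-moment inequality $\sum_h h(h-1)a_{u'+h\Delta}\ge 0$ --- applied to $\cM$ or to any restriction, for any admissible $m$ --- leaves an excess of at least $2$ and can never certify the claimed bound; the hypothesis $t=[s]_q+1-z+u$ cannot repair this. Your plan, carried out, proves a different and incomparable theorem.

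What the statement actually needs is the first-moment, inductive machinery of Section~\ref{sec_lengths_of_divisible_codes} together with \emph{set}-specific exclusions: note that for $q=2$, $s=2$, $t=3$, $z=1$ one must exclude $n=10$, which \emph{is} attainable by a $4$-divisible multiset ($10=4+6$), so the multiset characterization of Theorem~\ref{thm_characterization_div} fails as well and projectivity is essential. In the paper's framework the entire proof is Theorem~\ref{thm_exclusion_q_r} plus a two-line computation: from $q^t=(q-1)[t]_q+1$ and $[s]_q-z=t-1-u$ one gets
$$
  n \;=\; q^t[s]_q-\bigl(z(q-1)+1\bigr)[t]_q \;=\; \bigl((q-1)(t-2-u)+q-2\bigr)[t]_q+[s]_q ,
$$
and since $t-1-u\le[s]_q\le[t]_q-1$ this $n$ lies in the forbidden interval of Theorem~\ref{thm_exclusion_q_r} with $a=t-2-u\le t-2$ and $b=q-2$ (if $u\ge t-1$, then $n<0$ and the bound is trivial); hence the hole set, and with it the partial spread, cannot exist. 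Theorem~\ref{thm_exclusion_q_r} in turn rests on the averaging descent (Lemma~\ref{lemma_average} together with Lemma~\ref{lemma_heritable}) and on the linear and quadratic conditions applied inductively to the restricted \emph{sets} at every level of codimension. This descent-plus-induction core --- not interval bookkeeping for a single restriction --- is what is missing from your proposal. (The paper itself states the theorem by citation to \cite[Corollary 7]{honold2018partial}, where it is likewise derived from such inductive exclusion results rather than from the quadratic condition.)
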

Choosing $z=0$ implies Theorem~\ref{thm_nastase_sissokho}.

While explicit parametric upper bounds like Theorem~\ref{theorem_parametric_ps_bound_2} and Theorem~\ref{theorem_parametric_ps_bound_3} are certainly useful, they are in principle 
implied by the following observation:
\begin{center}
  Projective $q^{t-1}$-divisible codes of length $$n=[v]_q-A_q(v,2t;t)\cdot[t]_q$$ and dimension at most $v$ do exist over $\F_q$. 
\end{center}
As a refinement of the sharpened rounding from Definition~\ref{def:divisible_gauss_bracket} we introduce:
\begin{ndefinition}
  \label{def:divisible_gauss_bracket_point_multiplicity}
  For $a\in\Z$ and $b\in\Z\setminus\{0\}$ let $\llfloor a/b \rrfloor_{q^r\!\!,\lambda}$ be the maximal $n\in\Z$ such that there exists a $q^r$-divisible 
  multisets of points in $\PG(v-1,q)$ for suitably large $v$ with maximum point multiplicity at most $\lambda$ and cardinality $a-nb$. If no such multiset 
  exists for any $n$, we set $\llfloor a/b \rrfloor_{q^r\!\!,\lambda} = -\infty$.
\end{ndefinition}

With this our observation can be reformulated as:
\begin{nlemma}
  \label{lemma_partial_spread_div_bound}
  Let $\cU$ be a set of $k$-spaces in $\PG(v-1,q)$, where $1\le k\le v$, with pairwise trivial intersection. Then, we have
  \begin{equation}
    \label{ie_partial_spread_div_bound}
    \#\cU\le \llfloor [v]_q/[k]_q\rrfloor_{q^{k-1}\!\!,1}\text{.}
  \end{equation}   
\end{nlemma}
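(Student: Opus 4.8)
The plan is to translate the combinatorial statement into the language of divisible sets of points and then read off the bound directly from Definition~\ref{def:divisible_gauss_bracket_point_multiplicity}, so that the lemma becomes essentially a repackaging of results already established in the excerpt.

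First I would associate to the partial spread $\cU$ its covered point set. Let $\cM=\uplus_{U\in\cU}\spaces{U}{1}$ be the associated multiset of points in $\PG(v-1,q)$. Since the $k$-spaces in $\cU$ have pairwise trivial intersection, no point lies in two of them, so $\cM$ is in fact a set, i.e.\ $\gamma_0(\cM)\le 1$, and each element of $\cU$ contributes exactly $[k]_q$ distinct points. Hence $\#\cM=\#\cU\cdot[k]_q$. All subspaces in $\cU$ have dimension $k\ge 1$, so Lemma~\ref{lem:union_subspaces} shows that $\cM$ is $q^{k-1}$-divisible.

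Next I would pass to the set of holes, i.e.\ the $1$-complement $\cM^{\complement_1}$ defined by $\cM^{\complement_1}(P)=1-\cM(P)$. Because $\cM(P)\in\{0,1\}$ for every point $P$, the complement is again a set, and its cardinality is
\[
  \#\cM^{\complement_1}=[v]_q-\#\cM=[v]_q-\#\cU\cdot[k]_q .
\]
The full point set $\chi_{\cP}$ is the simplex code and hence $q^{v-1}$-divisible (Example~\ref{ex_simplex_code}); since $k\le v$ gives $k-1\le v-1$, it is also $q^{k-1}$-divisible. As $\cM^{\complement_1}=\chi_{\cP}-\cM$ is a genuine (nonnegative) multiset, the identity $\cM^{\complement_1}(H)-\#\cM^{\complement_1}=\bigl(\chi_{\cP}(H)-\#\chi_{\cP}\bigr)-\bigl(\cM(H)-\#\cM\bigr)$ together with the characterization in Equation~(\ref{eq_divisible_multiset}) shows that $\cM^{\complement_1}$ is $q^{k-1}$-divisible; this is exactly Lemma~\ref{lemma_t_complement} applied with $\lambda=1$ and $r=k-1\le v-1$.

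Finally I would invoke the definition of the sharpened rounding. By Definition~\ref{def:divisible_gauss_bracket_point_multiplicity}, the symbol $\llfloor [v]_q/[k]_q\rrfloor_{q^{k-1},1}$ is the maximal integer $n$ for which a $q^{k-1}$-divisible multiset of points of maximum point multiplicity at most $1$ and cardinality $[v]_q-n\cdot[k]_q$ exists in a suitably large ambient space, which by Lemma~\ref{lem:ambient_space_unwichtig} we may always arrange by embedding. The set of holes $\cM^{\complement_1}$ is precisely such a multiset for the value $n=\#\cU$, so $\#\cU$ is an admissible value of $n$ and therefore cannot exceed the maximal admissible one. This yields $\#\cU\le\llfloor [v]_q/[k]_q\rrfloor_{q^{k-1},1}$, as claimed. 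I do not expect any genuine obstacle here: the only points demanding care are the bookkeeping that $\cM$ and its complement are honest sets (forced by the pairwise trivial intersection) and matching the ``suitably large ambient space'' clause of the definition, for which the dimension-independence of divisibility (Lemma~\ref{lem:ambient_space_unwichtig}) is enough.
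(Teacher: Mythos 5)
Your proof is correct and is essentially the paper's own argument: the paper presents this lemma as a direct reformulation of the observation that the set of holes of a partial $k$-spread is a projective $q^{k-1}$-divisible set of points of cardinality $[v]_q-\#\cU\cdot[k]_q$ (via Lemma~\ref{lem:union_subspaces} and the complement argument of Lemma~\ref{lemma_t_complement}), after which the bound is read off from Definition~\ref{def:divisible_gauss_bracket_point_multiplicity}, exactly as you do. Your inline verification that $\cM^{\complement_1}=\chi_{\cP}-\cM$ inherits $q^{k-1}$-divisibility is a nice touch, as it does not depend on the technical hypothesis in the statement of Lemma~\ref{lemma_t_complement}.
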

\begin{table}[htp]
  \begin{center}
    \begin{tabular}{llcc}
      \hline
      $\Delta$ & $n$ & bounds & reference \\ 
      \hline
      $2^3$ & 52      & $129 \le A_2(11,8;4)\le 132$            & Corollary~\ref{cor_implication_fourth_mac_williams} with $t=3$ \\
      $2^4$ & 131     & $257\le A_2(13,10;5)\le 259$            & \cite{kurz2020no131}\\
      $2^4$ & 200     & $1025 \le A_2(16,12;6)\le 1032$         & Corollary~\ref{cor_implication_fourth_mac_williams} with $t=6$ \\
      $2^5$ & 850     & $2049\le A_2(17,12;6)\le 2066$          & Corollary~\ref{cor_implication_fourth_mac_williams} with $t=13$ \\
      $3^3$ & 493     & $2188\le A_3(11,8;4)\le 2201$           & Corollary~\ref{cor_implication_fourth_mac_williams} with $t=12$ \\
      $3^4$ & 1586    & $6562\le A_3(13,10;5)\le 6574$          & Corollary~\ref{cor_implication_fourth_mac_williams} with $t=13$ \\
      $3^4$ & 4396    & $19684\le A_3(14,10;5)\le 19727$        & Corollary~\ref{cor_implication_fourth_mac_williams} with $t=36$ \\     
      $3^5$ & 14236   & $59050\le A_3(16,12;6)\le 59090$        & Corollary~\ref{cor_implication_fourth_mac_williams} with $t=39$ \\     
      $3^5$ & 39797   & $177148\le A_3(17,12;6)\le 177280$      & Corollary~\ref{cor_implication_fourth_mac_williams} with $t=109$ \\
      $3^6$ & 43760   & $177148\le A_3(18,14;7)\le 177187$      & Corollary~\ref{cor_implication_fourth_mac_williams} with $t=40$ \\
      $4^4$ & 10592   & $65537\le A_4(13,10;5)\le 65568$        & Corollary~\ref{cor_implication_fourth_mac_williams} with $t=31$ \\
      $4^4$ & 10250   & $262145\le A_4(15,12;6)\le 262174$      & Corollary~\ref{cor_implication_fourth_mac_williams} with $t=30$ \\     
      $4^5$ & 648716  & $4194305\le A_4(17,12;6)\le 4194852$    & Corollary~\ref{cor_implication_fourth_mac_williams} with $t=475$ \\ 
      $4^6$ & 693632  & $4194305\le A_4(18,14;7)\le 4194432$    & Corollary~\ref{cor_implication_fourth_mac_williams} with $t=127$ \\
      $5^1$ & 33      & $78126\le A_5(12,10;5)\le 78132$        & Corollary~\ref{cor_implication_fourth_mac_williams} with $t=5$ \\ % see Exercise~\ref{exercise_details_leading_to_sporadic_partial_spread_bounds}
      $5^4$ & 230551  & $1953126\le A_5(14,10;5)\le 1953454$    & Corollary~\ref{cor_implication_fourth_mac_williams} with $t=295$ \\
      %% $5^5$ & 1207110 & $9765626\le A_5(16,12;6)\le 9765941$ & Corollary~\ref{cor_implication_fourth_mac_williams} with $t=309$ \\ obtained with Drake-Freeman
      $7^4$ & 3232754 & $40353608\le A_7(14,10;5)\le 40354853$  & Corollary~\ref{cor_implication_fourth_mac_williams} with $t=1154$ \\
      %% $7^5$ & 647072  & $5764802\le A_7(14,12;6)\le 5764816$ & Theorem~\ref{theorem_parametric_ps_bound_2} with $y=2$ \\ 
      $8^3$ & 144568  & $2097153\le A_8(11,8;4)\le 2097416$     & Corollary~\ref{cor_implication_fourth_mac_williams} with $t=247$ \\ 
      $8^2$ & 1759    & $2097153\le A_8(12,10;5)\le 2097177$    & Corollary~\ref{cor_implication_fourth_mac_williams} with $t=24$ \\ %% details in Example~\ref{example_A_8_12_10_5}
      $8^2$ & 1539    & $16777217\le A_8(14,12;6)\le 16777237$  & Corollary~\ref{cor_implication_fourth_mac_williams} with $t=21$ \\ % see Exercise~\ref{exercise_details_leading_to_sporadic_partial_spread_bounds}        
      $9^2$ & 3559    & $59050\le A_9(8,6;3)\le 59090$          & Corollary~\ref{cor_implication_fourth_mac_williams} with $t=39$ \\
      $9^4$ & 2679394 & $43046722\le A_9(13,10;5)\le 43047086$  & Corollary~\ref{cor_implication_fourth_mac_williams} with $t=363$ \\
      \hline
    \end{tabular}
    \caption{Sporadic upper bounds for partial spreads}
    \label{table_sporadic_upper_bounds_partial_spreads}
  \end{center}
\end{table}
The construction from Proposition~\ref{prop_partial_spread_lower_bound} and the non-existence of a $8$-divisible set of $52$ points over $\F_2$, see e.g.\ Lemma~\ref{lemma_no_8_div_52},  
imply
\begin{equation}
 2^4\cdot \frac{2^{4k-1}-2^3}{2^4-1}+1 \le A_2(4k+3,8;4)\le 2^4\cdot \frac{2^{4k-1}-2^3}{2^4-1}+4 
\end{equation}
for all $k\ge 2$. In general lower and upper bounds, if obtained by non-existence results of projective $q^{t-1}$-divisible codes, for $A_q(v,2t;t)$ come 
in parametric series for $v=kt+s$ with $k\in\N_{\ge 2}$, see \cite{honold2018partial} for details or \cite{kurz2017packing} for more examples. In 
Table~\ref{table_sporadic_upper_bounds_partial_spreads} we list the known upper (and corresponding lower) bounds that do not follow from Theorem~\ref{theorem_parametric_ps_bound_2} 
or Theorem~\ref{theorem_parametric_ps_bound_3} directly. Here $\Delta$ is the divisibility constant and $n$ the cardinality of the non-existent set of points 
that leads to the stated upper bound for a partial spread. When we mention the application of Corollary~\ref{cor_implication_fourth_mac_williams} as reference, 
then typically also the {\lq\lq}linear{\rq\rq} and the {\lq\lq}quadratic{\rq\rq} condition introduced in Section~\ref{sec_nonexistence_projective_q_r} are involved, 
see Example~\ref{example_A_8_12_10_5} for exemplarily details. As a measurement for our state of knowledge we also state the corresponding lower bound for the 
partial spread obtained via Proposition~\ref{prop_partial_spread_lower_bound}.    
\begin{nexample}
  \label{example_A_8_12_10_5}
  In order to show the upper bound $A_8(12,10;5)\le 2097177$, we actually have to show the non-existence of a $8^4$-divisible set of $177887$ points over $\F_8$. 
  Assuming its existence, Lemma~\ref{lemma_average} yields the existence of a $8^3$-divisible set of points over $\F_8$ with a cardinality contained in 
  $\{18143-i\cdot 8^4 \,:\, i\in \N_0\}\cap\N_0=\{18143, 14047, 9951, 5855, 1759\}$. Since $8^3$-divisible sets of $8^4$ points over $\F_8$ exist, it suffices to 
  exclude the existence of a $8^3$-divisible set of $18143$ points over $\F_8$. Assuming its existence, Lemma~\ref{lemma_average} yields the existence of a $8^2$-divisible 
  set of points over $\F_8$ with a cardinality contained in $\{1759-i\cdot 8^3 \,:\, i\in \N_0\}\cap\N_0=\{1759, 1247, 735, 223\}$. Again it is sufficient to 
  exclude cardinality $1759$. As mentioned in Table~\ref{table_sporadic_upper_bounds_partial_spreads} we can apply Corollary~\ref{cor_implication_fourth_mac_williams} with $t=24$ in order to 
  conclude the non-existence of a $8^2$-divisible set of $1759$ points over $\F_8$. 
\end{nexample}
\begin{nexercise}
  \label{exercise_details_leading_to_sporadic_partial_spread_bounds}
  Verify all details leading to the upper bounds $A_8(14,12;6)\le 16777237$ and $A_5(12,10;5)\le 78132$.
\end{nexercise}
%% details can be traced back with holes.cpp
%% 1572867 8^5-div
%% 163843 8^4-div
%% 16387 8^3-div
%% 1539  8^2-div: Exclude 1539 using the implication of the first four Mac Williams identities with t=21.
%% 
%% Remark 5.48. Applying Corollary 2.45 with q = 5, ∆ = 5 2 and t = 8 implies the non-existence of a 5 2 -
%% divisible set of cardinality n = 252, which in turn implies 78126 ≤ A 5 (12, 10; 5) ≤ 78133. This can by even
%% improved by applying Corollary 2.38 with m = 17 yielding the non-existence of a 5 4 -divisible set of cardinality
%% n = 13283, which implies A 5 (12, 10; 5) ≤ 78132. Here, we have a long way to trace back to obtain that result.
%% In the just mentioned application of Corollary 2.38 the non-existence of a 5 3 -divisible set of cardinality 2033
%% was used, which can be obtained from Corollary 2.38 with m = 13. Here, the non-existence of a 5 2 -divisible set
%% of cardinality 283 was used, which can be obtained from Corollary 2.38 with m = 9. Here, the non-existence
%% of a 5 1 -divisible set of cardinality 33 was used, which can be obtained from Corollary 2.45 with t = 5.

A few more remarks on Lemma~\ref{lemma_partial_spread_div_bound} and Table~\ref{table_sporadic_upper_bounds_partial_spreads} 
are in order. So far it occurs that upper bounds for partial spreads that are based on non-existence results of projective 
$q^{t-1}$-divisible codes certified by the linear programming method using the first three MacWilliams equations only, can be obtained 
via Theorem~\ref{theorem_parametric_ps_bound_2} or Theorem~\ref{theorem_parametric_ps_bound_3} directly. However, this is not a proven fact 
at all. The explicit parametric conditions introduced in Section~\ref{sec_nonexistence_projective_q_r} are quite handy for automatic 
computations cf.~Exercise~\ref{ex_implement_exclusion_lemmas}. Since the numbers grow very quickly and the linear programming method 
reveals its full power only if applied recursively, efficient algorithms are indeed an issue. For Table~\ref{table_sporadic_upper_bounds_partial_spreads} 
we remark that we have applied the mentioned tools for $v\le 19$ if $q\le 4$, for $v\le 16$ if $q=5$, and for $v\le 14$ if $7\le q\le 9$ only.   
%% Remark 5.52. Upper bounds for partial spreads with 17 ≤ n ≤ 19 and q = 5 have not been checked yet.
%% Remark 5.56. Upper bounds for partial spreads with 15 ≤ n ≤ 19 and q = 7 have not been checked yet       
%% Remark 5.61. Upper bounds for partial spreads with 15 ≤ n ≤ 19 and q = 8 have not been checked yet.
%% Remark 5.65. Upper bounds for partial spreads with 15 ≤ n ≤ 19 and q = 9 have not been checked yet.      

\begin{table}[htp]
  \begin{center}
    \begin{tabular}{llcc}
      \hline
      $q$ & $\Delta$ & $n$ & bounds \\ 
      \hline
      2 & $2^4$ & 232, 263 & $513\le A_2(14,10;5)\le 521$ \\
      2 & $2^5$ & 322, 385 & $513\le A_2(15,12;6)\le 515$ \\
      2 & $2^5$ & 913, 976 & $2049\le A_2(17,13;6)\le 2066$ \\
      3 & $3^3$ & 244      & $730\le A_3(10,8;4)\le 732$ \\ 
      \hline
    \end{tabular}
    \caption{Open cases with implications for partial spread bounds}
    \label{table_open_cases_upper_bounds_partial_spreads}
  \end{center}
\end{table}
It should be mentioned that the existence of a projective $q^{t-1}$-divisible code $C$ over $\F_q$ (of suitable length) does not imply the existence of a 
partial spread with matching parameters. In other words, Lemma~\ref{lemma_partial_spread_div_bound} is just a necessary condition. Since all currently 
known upper bounds for partial spreads are implied by Lemma~\ref{lemma_partial_spread_div_bound}, we list the open case of relevant 
lengths of projective $q^{t-1}$-divisible codes, i.e., where their existence is currently undecided, in Table~\ref{table_open_cases_upper_bounds_partial_spreads}.   

We remark that the non-existence of a binary projective $2^4$-divisible code of length $130$ would imply the non-existence of a binary projective $2^5$-divisible code of length $322$, 
so that $A_2(15,12;6)\le 514$ would follow. In the other direction we remark that binary projective $2^r$-divisible codes of length $n$ with $(r,n)\in\{(3,67),(4,162),(5,519)\}$ indeed 
exist, so that the upper bounds $A_2(11,8;4)\le 132$, $A_2(13,10;5)\le 259$, and $A_2(16,12;6)\le 1032$ might be attained with equality.

\medskip

\begin{question}{Research problem}
  Improve the lower bound $129\le A_2(11,8;4)\le 132$.  
\end{question} 

\section{Realizations of $q^r$-divisible sets of points as partial spreads}
\label{subsec_realizations_partial_spreads}
As discussed, the non-existence of $q^r$-divisible sets of cardinality $n$ over $\F_q$ sometimes has implications to upper bounds for partial spreads. However, not all 
cardinalities are directly relevant to that extend. E.g.\ the non-existence of a binary projective triply--even code of length $59$, shown in \cite{honold2019lengths}, has 
no such implication. More precisely, there is no set of pairwise disjoint $4$-spaces in $\PG(v-1,2)$ such that there are exactly $59$ holes. To this end, we observe 
that $[v]_2 \mod [4]_2 \in \{0,1,3,7\}$ while $59\mod [4]_2=14$. An implication for the existence of vector space partitions of a certain type is stated in 
Section~\ref{sec_vector_space_partitions}. 
\begin{ndefinition}
  A $q^r$-divisible set $\cM$ of $n$ points over $\F_q$ is said to be \emph{realizable as a partial spread} if a partial $(r+1)$-spread over $\F_q$ exists whose set of holes 
  is equivalent to $\cM$ (eventually embedded in an ambient space of dimension larger than $\dim(\cM)$). We use the same notation for codes using their correspondence to multisets 
  of points.
\end{ndefinition}  
With this, we e.g.\ have $A_2(11,8;4)< 132$ if none of the projective binary triply--even codes of length $67$ are realizable as a partial spread. 

\begin{trailer}{The only partial spread better than Beutelspacher's construction}
So far, the only known cases in which the construction of Proposition~\ref{prop_partial_spread_lower_bound} has been surpassed are derived from the
sporadic example of a partial $3$-spread of cardinality $34$ in $\PG(7,2)$ \cite{spreadsk3}, which has $17$ holes and can be used to show 
$A_2(3m+2,6;3)\geq(2^{3m+2}-18)/7$, which exceeds the lower bound of Proposition~\ref{prop_partial_spread_lower_bound} by one. A first step towards 
the understanding of the sporadic example is the classification of all $2^2$-divisible sets of points with cardinality $17$ in $\PG(k-1,2)$. It turns out 
that there are exactly $3$ isomorphism types, one configuration $\mathcal{H}_k$ for each dimension $k\in\{6,7,8\}$. Generating matrices for the corresponding
doubly-even codes are given by 
\begin{equation}
  \label{eq:17gen}
  \footnotesize
\left(\begin{array}{c}
10000110010101110\\
01000010111011100\\
00100100000011000\\
00010111001110100\\
00001001100111110\\
00000011100111011\\
\end{array}\right),
\;
\left(\begin{array}{c}
10000011110100110\\
01000001111111000\\
00100010000110000\\
00010010000101000\\
00001001001000100\\
00000101001000010\\
00000010101011111\\
\end{array}\right),
\;
\left(\begin{array}{c}
10000000111011110\\
01000000010110000\\
00100000011100000\\
00010000001110000\\
00001001100000010\\
00000101000001010\\
00000011000000110\\
00000001111011101\\
\end{array}\right).
\normalsize
\end{equation}
\end{trailer}

While the classification, so far, is based on computer calculations, see e.g.\ \cite{doran2011codes} and \url{http://www.rlmiller.org/de\_codes}, one can 
easily see that there are exactly three solutions of the MacWilliams equations.
\begin{nexercise}
  Let $C$ be a projective $2^2$-divisible $[17,k]_2$-code. Conclude $k\in\{6,7,8\}$ from the MacWilliams equations and determine the unique weight enumerator 
  in each case.  
\end{nexercise}
The set of holes of the partial $3$-spread in \cite{spreadsk3} corresponds to $\mathcal{H}_7$. A geometric description using coordinates, of this configuration is given in
\cite[p.~84]{lambert2013random}. We have computationally checked that indeed all three hole configurations can be realized by a partial $3$-spread
of cardinality $34$ in $\PG(7,2)$.\footnote{$624$ non-isomorphic examples can be downloaded at \url{http://subspacecodes.uni-bayreuth.de}. Several thousand non-isomorphic 
examples have been found so far, some of them admitting an automorphism group of order $8$.} All three $\mathcal{H}_i$ are special instances of  
parametric geometrical constructions, see \cite[Subsection 6.1]{honold2018partial} for the details.   

\begin{trailer}{The Hill cap}While not all automorphisms of the set of holes may extend to the entire partial spread, $q^{t-1}$-divisible sets of points with a large automorphism group may 
have some chances to be realized a partial spread with at least some automorphisms. To this end, we want to mention another interesting and rather small example.  
Over the ternary field the smallest open case for partial spreads is given by $244\le A_3(8,6;3)\le 248$. A putative plane spread in $\PG(7,3)$ of size $248$ would 
have a $3^2$-divisible set $\cH$ of holes of cardinality $56$. Such a point set is unique up to isomorphism and has dimension $\dim(\cH)=6$. It corresponds to an
distance-optimal two-weight code with weight enumerator $W_C(x)=x^0+616x^{36}+112x^{45}$. The set $\cH$ was first described by Raymond Hill \cite{hill1978caps} and 
is known as the \emph{Hill cap}. A generator matrix is e.g.\ given by
\begin{equation*}
  \footnotesize
  \setlength{\arraycolsep}{0.05em}
  \left(
  \begin{array}{cccccccccccccccccccccccccccccccccccccccccccccccccccccccc}
    1&0&0&0&0&0&2&2&1&1&0&1&0&0&1&1&0&2&0&2&1&1&1&1&0&0&1&0&1&2&0&1&0&2&1&2&1&1&1&1&1&2&2&0&0&1&2&0&0&2&0&1&2&2&1&1\\
    0&1&0&0&0&0&1&1&1&0&1&2&1&0&1&0&1&1&2&1&1&2&0&0&1&0&2&1&1&2&2&2&1&1&1&2&1&0&0&0&0&2&1&2&0&2&2&2&0&0&2&2&2&0&1&0\\
    0&0&1&0&0&0&2&2&2&2&0&2&2&1&0&2&0&0&1&1&2&0&0&1&0&1&1&2&0&0&2&0&2&0&2&0&0&2&1&1&1&2&2&1&2&1&1&2&2&2&0&0&1&1&1&2\\
    0&0&0&1&0&0&1&0&1&1&2&2&2&2&0&2&2&1&0&2&0&0&2&2&1&0&0&1&0&1&0&1&0&0&2&2&2&2&1&0&0&2&2&2&1&1&2&1&2&2&2&2&1&2&0&0\\
    0&0&0&0&1&0&2&0&1&2&1&0&2&2&1&1&2&1&1&2&0&0&1&0&2&1&1&0&2&2&1&1&1&2&1&0&0&0&0&2&1&2&0&2&2&2&0&2&1&2&2&0&1&0&0&1\\
    0&0&0&0&0&1&1&2&2&0&2&0&0&2&2&0&1&0&1&2&1&2&2&0&0&2&0&1&1&0&2&0&1&2&1&2&2&2&2&2&1&2&0&0&2&1&0&0&2&0&2&1&1&2&2&2
  \end{array}\right).
\end{equation*}
The automorphism group of $\cH$ has order $40320$.
\end{trailer} 
\begin{question}{Research problem}
  Improve the lower bound  $244\le A_3(8,6;3)\le 248$.  
\end{question}

If we do not restrict ourselves to partial spreads of the maximum possible size, we can mention another example. 
In \cite{honold2019classification} the $14445$ isomorphism types of partial $3$-spreads in $\PG(6,2)$ of size $16$, i.e.\  one less 
than the maximum, were classified. In this context, the five non-isomorphic $2^2$-divisible sets of $15$ points over $\F_2$ were classified 
without the help of computer enumerations. Four of them consist of the union of plane and an affine solid, while the fifth example is obtained 
by applying the cone construction to a $6$-dimensional projective base, see Example~\ref{example_cone_constructions}. The interesting point is 
that, again, all divisible sets can be realized.

A maximal partial $t$-spread in $\PG(n-1,q)$ is a set of $t$-dimensional subspaces that are pairwise disjoint which cannot be extended by a another 
$t$-dimensional subspace without violating the property of pairwise disjointness. The corresponding complement $\cH$ is a set of holes/points 
that is $q^{t-1}$-divisible, does not contain any $t$-space in its support, and satisfies $\#\cH\equiv [n]_q\pmod{[t]_q}$. As an example let us 
consider maximal partial $4$-spreads in $\PG(7,2)$. Up to equivalence there are exactly $9316$ point sets that can be covered by $8$ pairwise disjoint 
solids in $\PG(7,2)$. In $181$~cases those point sets cannot be extended by an additional solid, i.e., a corresponding set of $8$ solids forms a maximal 
partial $4$-spread in $\PG(7,2)$. Note that there may be several configurations of $t$-spaces resulting the same set of covered points. We have 
computationally checked that the minimum size of a maximal partial solid spread in $\PG(7,2)$ is indeed $8$, see Table~\ref{table_covered_point_sets_by_solids_in_pg_7_2}. 

\begin{table}[htp]
  \begin{center}
    \begin{tabular}{rrrrrrrrrrrr}
      \hline
      $r$                   & 1 & 2 & 3 & 4 &  5 &   6 &    7 &    8 &    9 &   10 &  11 \\   
      \hline 
      \# point sets         & 1 & 1 & 1 & 3 & 22 & 341 & 3726 & 9316 & 5442 & 1336 & 303 \\ 
      \# maximal point sets & 0 & 0 & 0 & 0 &  0 &   0 &    0 &  181 & 1343 &  317 &  58 \\ 
      \hline
    \end{tabular}   
    
    \medskip    
    
    \begin{tabular}{rrrrrrr}
      \hline
      $r$                   & 12 & 13 & 14 & 15 & 16 & 17 \\   
      \hline 
      \# point sets         & 42 &  6 &  1 &  1 &  1 &  1 \\ 
      \# maximal point sets &  5 &  3 &  0 &  0 &  0 &  1 \\   
      \hline
    \end{tabular}
    \caption{Non-isomorphic point sets covered by $r$ solids in $\PG(7,2)$.}
    \label{table_covered_point_sets_by_solids_in_pg_7_2}
  \end{center}
\end{table} 

The three different hole configurations of maximal partial $4$-spreads of size $13$ in $\PG(7,2)$ are given by: 
$$
\left(\begin{smallmatrix}      
000000000000000000000000000011111111111111111111111111111111\\
000000000000111111111111111100000000000000001111111111111111\\
000011111111000000001111111100000000111111110000000011111111\\
111100001111000011110000111100001111000011110000111100001111\\
001100110011000000110011001111110011001100111111001100110011\\
010101010101111101010101010111110101010101010000010101010101\\
111100001111001111000110101000111010001110010011100101011100\\
000011111111010101011100100101010110101011000101001110011010\\
\end{smallmatrix}\right)
$$
$$
\left(\begin{smallmatrix}      
000000000000000000000000000011111111111111111111111111111111\\
000000000000111111111111111100000000000000001111111111111111\\
000011111111000000001111111100000000111111110000000011111111\\
111100001111000011110000111100001111000011110000111100001111\\
001100010111001100110111011111110011011100010011001100010001\\
010001110011111101110001001100110100000101101100011101110110\\
110100101111010111010010110011011001001110000100100100111110\\
001011111101011101101000100101000101111010101001000110011011\\
\end{smallmatrix}\right)
$$
$$
\left(\begin{smallmatrix}      
000000000000000000000000000011111111111111111111111111111111\\
000000000000111111111111111100000000000000001111111111111111\\
000011111111000000001111111100000000111111110000000011111111\\
111100001111000011110000111100001111000011110000111100001111\\
001100110011111100110011001111110011001100110000001100110011\\
010101010101111101010101010100000101010101011111010101010101\\
111100001111001111000110101000111010001110010011100101011100\\
011111111000010100101100111001010001101010110101010010011101\\
\end{smallmatrix}\right)
$$
The corresponding automorphism groups of those projective $[60,8,\{24,32\}]_2$-codes have orders $576$, $14400$, and $96$, respectively. 
We remark that there are $12$ non-isomorphic projective $[60,8,\{24,32\}]_2$-codes. 
Those three codes with automorphism groups of orders $120$, $288$, and $4320$ are the disjoint union of four solids, 
i.e., realizing partial spreads can be completed to solid spreads. In one further case, with an automorphism group 
of order $720$, the set of holes contains a solid but not two disjoint solids in its support. The remaining $8$ 
sets of holes might in principle be realized by maximal partial $4$-spreads. However, as already mentioned, only the stated 
three hole configurations can be realized by maximal partial $4$-spreads in $\PG(7,2)$. There are two non-isomorphic 
projective $[45,8,\{16,24\}]_2$-codes: a disjoint union of three solids and the concatenation of a Baer solid with a line over $\F_2$, see Example~\ref{example_baer} for a generalization. 
The latter case cannot be realized in $\PG(7,2)$ as a maximal partial $4$-spread. Indeed both cases can be obtained by concatenating a
$[15,4,\{8,12\}]_4$-code with a two-dimensional simplex code over $\F_2$. From the $12$ $[60,8,\{24,32\}]_2$-codes only $7$ can be obtained 
via the concatenation of a $[20,4,\{12,16\}]_4$-code with a two-dimensional simplex code over $\F_2$. 
%% ? 4 20 4 1      
%% 11111111111111101000
%% 00000111111223310100
%% 00012012223022330010
%% 12311010122002200001
%%  |AUT|=  1080   TRANS 0 
%% 
%% ? 4 20 4 2      
%% 11111111111111101000
%% 00000111111223310100
%% 00012012223022330010
%% 12302110121022000001
%%  |AUT|=  288   TRANS 0 
%% 
%% ? 4 20 4 3      
%% 11111111111111101000
%% 00000111111223310100
%% 00012012223022330010
%% 12312111231222020001
%%  |AUT|=  432   TRANS 0 
%% 
%% ? 4 20 4 4      
%% 11111111111111101000
%% 00000111111223310100
%% 00012012223022330010
%% 12322110121122320001
%%  |AUT|=  4320   TRANS 0 
%%   
%% ? 4 20 4 5      
%% 11111111111111101000
%% 00000111111223310100
%% 00013012233033330010
%% 12312000302220220001
%%  |AUT|=  120   TRANS 0 
%% 
%% ? 4 20 4 6      
%% 11111111111111101000
%% 00000111111223310100
%% 01122111222231330010
%% 10101013013120210001
%%  |AUT|=  720   TRANS 0 
%% 
%% 
%% ? 4 20 4 7      
%% 11111111111111101000
%% 00000111111223310100
%% 01122111222231330010
%% 20301013013113300001
%%  |AUT|=  576   TRANS 0 
%% 
%% END; ALL NONEQU 7 

\begin{nexample}
  \label{example_baer}
  For an arbitrary prime power $q$ and an integer $t\ge 3$ let $\cB$ be the set of points in $\PG(t-1,q^2)$ whose points can be written 
  with coordinates in $\operatorname[GF](q)<\operatorname[GF](q^2)$, i.e., a Baer-type construction. So, we have $\#\cB=[t]_q$. By concatenating with a 
  two-dimensional simplex code over $\F_q$ we obtain a $q^{t-1}$-divisible projective two-weight code over $\F_q$ with dimension $k=2t$ and length $n=(q+1)[t]_q$, 
  i.e., having the same parameters as the disjoint union of $q+1$ $t$-spaces but not containing a $t$-space in its support.
\end{nexample}

In Table~\ref{table_maximimal_partial_spreads} we have listed the possible sizes of maximal partial $t$-spreads in $\PG(n-1,q)$ for small parameters. 
For $(t,n)=(2,4)$ and $q\in\{7,8,9\}$ the listed sizes may be incomplete below the smallest listed size, c.f.\ \cite[Section 7.3]{de2014intersection}, \cite[Section 4]{de2015small}, 
and possibly in the interval $26$--$29$ for $q=8$.  
%% There are 1727121 non-isomorphic partial 3-spreads of size 7 in PG(6,2). For each of them at least two disjoint planes can be added. 

\begin{table}[htp]
  \begin{center}
    \begin{tabular}{cccl}
      \hline
      $t$ & ambient space & cardinalities & references \\ 
      \hline 
      2 & $\PG(3,2)$ & 5              & \cite{computation_of_partial_spreads} \\
      2 & $\PG(4,2)$ & 5,7,9          & \cite{shaw2000subsets,computation_of_partial_spreads} \\
      2 & $\PG(5,2)$ & 13,15--19,21   & \cite{govaerts2005small,iurlo2015new}\\
      3 & $\PG(5,2)$ & 5,9            & \\
      3 & $\PG(6,2)$ & 9--17       & \cite{honold2019classification} \\
      4 & $\PG(7,2)$ & 8--13,17       & \\
      2 & $\PG(3,3)$ & 7,10           & \cite{computation_of_partial_spreads} \\
      2 & $\PG(3,4)$ & 11--14,17      & \cite{jungnickel2003maximal,computation_of_partial_spreads} \\
      2 & $\PG(3,5)$ & 13--22,26      & \cite{heden2000maximal} \\
      2 & $\PG(3,7)$ & \textit{23--45, 50}     & \cite{blokhuis2003blocking,heden1991greedy,heden2001maximal,iurlo2010new} \\
      2 & $\PG(3,8)$ & \textit{25,30--58,65}   & \cite{barat2004minimal,lurlo2016computer,soicher2018classifying} \\
      2 & $\PG(3,9)$ & \textit{36--74,82}      & \cite{faina2010maximal,heden2008nonexistence,lurlo2016computer} \\
      \hline 
    \end{tabular}
    \caption{Possible cardinalities of maximal partial $t$-spreads in $\PG(n-1,q)$.}
    \label{table_maximimal_partial_spreads}
  \end{center}
\end{table} 

In \cite[Lemma 4.15]{alavi2015triple} it was shown that the minimum size of a maximal partial $t$-spread in $\PG(2t-1,2)$ is at least $5$, which is met with equality for $t\in\{2,3\}$. 
In general, the minimum size of a maximal partial $t$-spread in $\PG(2t-1,q)$ is at least $2q-1$, see \cite[Theorem 3.6]{de2015small}, and at least $2q$ if $t=2$, see \cite{glynn1982lower}.

In Remark~\ref{remark_kernel_subfield_subcodes} we have observed that for non-prime field sizes $q$ the kernel of the incidence matrix 
between points and $k$-spaces yields further conditions on the multiset of points associated to a multiset of $k$-spaces that are not 
captured by the $q^{k-1}$-divisibility. 
\begin{nexample}
  \label{example_excluded_hole_set}
  In \cite{heden2008non} two non-isomorphic $9$-divisible sets of $60$ points in $\PG(3,9)$ were stated and characterized. None of these 
  two point sets contains a full line. Using a result of \cite{blokhuis1993size} the authors showed that both point sets cannot be realized as 
  the set of holes of a partial spread, see \cite[Theorem 1]{heden2008non} and \cite[Theorem 2]{heden2008non}.  
\end{nexample}
\begin{nexercise}
  Compare \cite[Lemma 2.1]{blokhuis1993size} with the implications of the kernel approach, cf.~Remark~\ref{remark_kernel_subfield_subcodes}, for the 
  two $9$-divisible point sets of cardinality $60$ in $\PG(3,9)$ from \cite{heden2008non}.  
\end{nexercise}
\begin{question}{Research problem}Find an example of a $p^{t-1}$ divisible set of points over $\F_p$ that cannot be realized 
as a partial $t$-spread and does not admit a rather trivial justification.
\end{question}

\chapter{Vector space partitions}
\label{sec_vector_space_partitions}
A \emph{vector space partition} $\cV$ of $\PG(v-1,q)$ is a set of subspaces with the property that every point $P$ of $\PG(v-1,q)$, or every 
non-zero vector in $\F_q^v$, is contained in a unique member of $\cV$. If $\cV$  contains $m_d$ subspaces of dimension $d$, then $\cV$ is of 
type $k^{m_k}\dots 1^{m_1}$, where we may leave out some of the cases with $m_d=0$. If there is at least one dimension $d>1$ with $m_d>0$ and 
$m_v=0$, then $\cV$ is called non-trivial. By $\#\cV$ we denote the number $\sum_{i=1}^v m_i$ of elements of the vector space partition.

The relation between vector space partitions and divisible sets can be directly read of from Lemma~\ref{lem:union_subspaces} (noting that the 
$1$-spaces indeed form a set):
\begin{nlemma}
  \label{lemma_connection vsp}
  Let $\cV$ be a vector space partition of type $t^{m_t}\dots s^{m_s}1^{m_1}$ of $\PG(v-1,q)$, where $v>t\ge s\ge 2$ Then, the $1$-dimensional 
  elements of $\cV$ form a $q^{s-1}$-divisible set of cardinality $m_1$ in $\PG(v-1,q)$.
\end{nlemma}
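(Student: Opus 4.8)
The goal is to show that the $1$-spaces occurring in a vector space partition $\cV$ of type $t^{m_t}\dots s^{m_s}1^{m_1}$ form a $q^{s-1}$-divisible set of cardinality $m_1$. The plan is to decompose this directly into two applications of results already established in the excerpt, avoiding any fresh computation.

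First I would observe that $\cV$ is in particular a set of subspaces of $\PG(v-1,q)$, and since every point lies in exactly one member, the associated multiset of points $\cM = \biguplus_{U\in\cV} \spaces{U}{1}$ is actually the \emph{set} of all points of $\PG(v-1,q)$, i.e.\ $\cM = \chi_{\cP}$. By Example~\ref{ex_simplex_code} (equivalently Lemma~\ref{lem:qr-div-basic}\ref{lem:qr-div-basic:subspace} applied to the full space) this associated multiset is $q^{v-1}$-divisible, hence in particular $q^{s-1}$-divisible since $s\le v$. The cardinality count is immediate: the $m_1$ one-dimensional members contribute exactly one point each, so the set $\cM_1$ of these points has $\#\cM_1 = m_1$ and maximum point multiplicity $1$ (the members of a vector space partition are pairwise point-disjoint, so no point is counted twice).

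The key step is then to separate the point set into the contribution from the $1$-spaces and the contribution from the higher-dimensional members. Write $\cM_1$ for the multiset of points coming from the $m_1$ lines (the $1$-spaces), and $\cM_{\ge 2} = \biguplus_{U\in\cV,\ \dim U\ge s} \spaces{U}{1}$ for the points coming from the members of dimension at least $s$. Since $\cV$ is a partition, $\cM_1 + \cM_{\ge 2} = \chi_{\cP}$. Every member contributing to $\cM_{\ge 2}$ has dimension at least $s$, so by Lemma~\ref{lem:union_subspaces} the multiset $\cM_{\ge 2}$ is $q^{s-1}$-divisible. Now I would invoke Lemma~\ref{lem:qr-div-basic}\ref{lem:qr-div-basic:union}, or rather its defining property via Exercise~\ref{exercise_divisibility_sum_construction}, in the form that the set of $q^{s-1}$-divisible multisets is closed under \emph{differences} whenever the difference remains a multiset of points: from $\cM_1 = \chi_{\cP} - \cM_{\ge 2}$, and the $q^{s-1}$-divisibility of both $\chi_{\cP}$ and $\cM_{\ge 2}$, we conclude that $\cM_1$ is $q^{s-1}$-divisible.

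The only subtlety, and the step I would be most careful about, is justifying the subtraction: divisibility is generally stated for sums (Lemma~\ref{lem:qr-div-basic}\ref{lem:qr-div-basic:union}), whereas here I need a difference. This is clean because the relation $\cM(H)\equiv \#\cM \pmod{q^{s-1}}$ from Equation~(\ref{eq_divisible_multiset}) is a linear congruence in the hyperplane multiplicities: for every hyperplane $H$ we have $\cM_1(H) = \chi_{\cP}(H) - \cM_{\ge 2}(H)$ and $\#\cM_1 = \#\cP - \#\cM_{\ge 2}$, so subtracting the two congruences $\chi_{\cP}(H)\equiv \#\cP$ and $\cM_{\ge 2}(H)\equiv \#\cM_{\ge 2}$ modulo $q^{s-1}$ yields $\cM_1(H)\equiv \#\cM_1 \pmod{q^{s-1}}$ directly. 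Since $\cM_1$ is genuinely a (nonnegative) multiset of points of cardinality $m_1$, the characterization of divisibility in Equation~(\ref{eq_divisible_multiset}) gives exactly the claim. Thus $\cM_1$ is a $q^{s-1}$-divisible set of $m_1$ points in $\PG(v-1,q)$, completing the argument.
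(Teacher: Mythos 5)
Your proof is correct and follows essentially the same route as the paper: the points covered by the members of dimension at least $s$ form a $q^{s-1}$-divisible multiset by Lemma~\ref{lem:union_subspaces}, and the $1$-dimensional members are then obtained as the complement of that multiset inside the full point set $\chi_{\cP}$, which is $q^{v-1}$-divisible. The only cosmetic difference is that you carry out the complementation congruence by hand instead of citing Lemma~\ref{lemma_t_complement} with $\lambda=1$; your subtraction argument is exactly the proof of that lemma (and in fact sidesteps its awkwardly stated hypothesis on $r$), so nothing of substance differs.
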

Since there is no $2^1$-divisible set of $2$-points over $\F_2$ there is e.g.\ no vector space partition of type $4^{16}3^1 2^2 1^2$ of $\PG(7,2)$. For a potential 
vector space partition of type $4^{17}3^{35}2^2 1^5$ of $\PG(8,2)$ we cannot apply the argument directly since a $2^1$-divisible set of $5$ points over $\F_2$ indeed exists. 
However, if we replace the two lines by their three points each, we would end up with a $2^2$-divisible set of $11$ points over $\F_2$ which does not exist.
\begin{nlemma}
  \label{lemma_connection vsp_full}
  Let $\cV$ be a vector space partition of type $t_k^{m_k}\dots t_1^{m_1}$ of $\PG(v-1,q)$, where $v>t_k>\dots>t_1>0$. Then, for each index $1\le s<k$ the 
  $\sum_{i=1}^s m_i\cdot [t_i]_q$ points contained in the elements of dimension at most $t_s$ in $\cV$ elements of $\cV$ form a $q^{t_{s+1}-1}$-divisible set in $\PG(v-1,q)$.
\end{nlemma}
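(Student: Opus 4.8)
The plan is to realise the point set in question as the \emph{complement} of the union of the ``large'' members of $\cV$, and then to combine Lemma~\ref{lem:union_subspaces} with the elementary point count on hyperplanes. Write $\cA$ for the set of points lying in some element of $\cV$ of dimension among $t_{s+1},\dots,t_k$, and $\cB$ for the set of points lying in some element of dimension among $t_1,\dots,t_s$. Since $\cV$ is a vector space partition, every point of $\PG(v-1,q)$ lies in exactly one of its members; hence $\cA$ and $\cB$ are disjoint with $\cA\cup\cB=\cP$, so that $\cB=\cA^{\complement_1}$ and $\#\cB=\sum_{i=1}^s m_i[t_i]_q$, the announced cardinality.

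First I would apply Lemma~\ref{lem:union_subspaces} to the family $\cU$ of those elements of $\cV$ whose dimension is at least $t_{s+1}$. As $s<k$ this family is well defined, and its smallest occurring dimension is $t_{s+1}$, which satisfies $t_{s+1}>t_1\ge 1$; the lemma then yields that the associated multiset of points $\cA$ is $q^{t_{s+1}-1}$-divisible. This is the only step in which the subspace structure is used, and it is exactly the place where the strict chain $t_k>\dots>t_1>0$ guarantees that the minimum dimension of the large family is genuinely $t_{s+1}$.

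Next I would pass to the complement. Since every hyperplane of $\PG(v-1,q)$ contains exactly $[v-1]_q$ of the $[v]_q$ points (Example~\ref{ex_simplex_code}), for each $H\in\cH$ one has $\cB(H)=[v-1]_q-\cA(H)$ and $\#\cB=[v]_q-\#\cA$, whence
\[
  \cB(H)-\#\cB=-\bigl(\cA(H)-\#\cA\bigr)-q^{\,v-1}.
\]
The first term is divisible by $q^{t_{s+1}-1}$ because $\cA$ is $q^{t_{s+1}-1}$-divisible, and the second because $t_{s+1}-1\le v-1$; hence $\cB(H)\equiv\#\cB\pmod{q^{t_{s+1}-1}}$ for every $H$, which by the divisibility criterion~(\ref{eq_divisible_multiset}) is precisely the $q^{t_{s+1}-1}$-divisibility of $\cB$. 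Equivalently, one could replace this one-line check by invoking Lemma~\ref{lemma_t_complement} with $\lambda=1$.

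I do not expect a genuine obstacle: the argument is a routine complement computation and a direct generalisation of Lemma~\ref{lemma_connection vsp}, to which it specialises when $s=1$. The only thing demanding a little care is the index bookkeeping — ensuring that $t_{s+1}$ exists (needs $s<k$) and that the large family's minimum dimension is $t_{s+1}$ rather than something smaller, which is exactly what the ordering of the $t_i$ and the partition property supply.
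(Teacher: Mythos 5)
Your proof is correct and follows exactly the route the paper intends for this lemma: apply Lemma~\ref{lem:union_subspaces} to the members of $\cV$ of dimension at least $t_{s+1}$, then pass to the $1$-complement via Lemma~\ref{lemma_t_complement} (the partition property guaranteeing that the union of the large members is a set whose complement is precisely the point set in question). Your explicit one-line verification $\cB(H)-\#\cB=-\bigl(\cA(H)-\#\cA\bigr)-q^{v-1}$ is a welcome addition, since it makes the complementation step self-contained rather than resting on the hypotheses of Lemma~\ref{lemma_t_complement}.
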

Note that the values of the $m_i$, for $i>s$, and the $t_i$, for $i>s+1$, as well as the dimension $v$ of the ambient space are irrelevant.
\begin{nexercise}
  \label{exercise_vsp_234_v_8}
  Show that no vector space partition of type $4^a 3^b 2^c$ of $\PG(7,2)$ exists if 
  \begin{eqnarray*}
    (a,b,c)&\in&\big\{(1,33,3),(4,27,2),(5,24,4),(7,21,1),(8,18,3),\\ && (11,12,2),(12,9,4),(14,6,1),(15,3,3)\big\}.
  \end{eqnarray*}
\end{nexercise}

As an example for a construction we remark that each partial $t$-spread of size $n$ in $\PG(v-1,q)$ gives to a vector space partition of $\PG(v-1,q)$ of type 
$t^n 1^{m_1}$, where $m_1=[v]_q-n[t]_q$, by complementing the set of $t$-spaces of the partial spread with its set of holes. If $t$ divides $v$, then 
$t$-spreads in $\PG(v-1,q)$ directly give a vector space partition of $\PG(v-1,q)$ of type $t^{m_t}$, where $m_t=[v]_q/[t]_q$. Also Lemma~\ref{lemma_mrd_vsp} 
gives a vector space partition. 

\begin{trailer}{The packing and the dimension condition}Counting points gives the necessary condition 
\begin{equation}
  \label{eq_vsp_counting_points}
  \sum_{1\le i\le v} m_i\cdot [i]_q=[v]_q
\end{equation}
for the existence of a vector space partition of type $v^{m_v}\dots 1^{m_1}$ in $\PG(v-1,q)$. Since an $a$-space and a disjoint $b$-space 
span an $(a+b)$-space, we also have
\begin{equation}
  \label{eq_vsp_dimension}
  m_i\cdot m_j=0
\end{equation}
for all $1\le i\le j\le v$ with $i+j>v$ and $m_i\le 1$ for all $i>v/2$.
\end{trailer}
\begin{ndefinition}
  \label{def_composed_vsp}
   If a vector space partition $\cV$ of $\PG(v-1,q)$ arises from a vector space partition $\cV_1$ of $\PG(v-1,q)$ where one $a$-space of $\cV_1$ is replaced 
   by a vector space partition $\cV_2$ of $\PG(a-1,q)$ with $\#\cV_2\neq 1$, then we say that $\cV$ is \emph{reducible} and \emph{composed of} $\cV_1$ and 
   $\cV_2$.  
\end{ndefinition}
\begin{nexercise}$\,$\\[-5mm]
  \begin{enumerate}
    \item[(a)] Show the existence of a vector space partition of type $4^{m_4} 2^{m_2}$ of $\PG(7,2)$, where 
               $m_4=17-i$ and $m_2=5i$, for all $0\le i\le 17$.
    \item[(b)] Show the existence of a vector space partition of type $3^{33} 2^8$ of $\PG(7,2)$. \textit{Hint:} Construct a vector space 
               partition of type $5^1 3^{32}$ of $\PG(7,2)$ first.
  \end{enumerate}                           
\end{nexercise}

If we only focus on the occurring dimensions in a type of a vector space partition, then the following general existence result was shown   
using the Frobenius number:
\begin{ntheorem}{(\cite[Theorem 2]{beutelspacher1978partitions})}
  Let $T=\left\{t_1<t_2<\dots<t_k\right\}$ be a set of positive integers with $d:=\gcd(T)$. If $v$ is an integer with
  \begin{equation}
     v> 2t_1\left\lceil\frac{t_k}{dk}\right\rceil+t_2+\dots+t_k,
  \end{equation}
  then a vector space partition of $\PG(v-1,q)$ of a type satisfying $\left\{i\,:\,m_i>0\right\}=T$ exists iff $\gcd(T)$ divides $v$.  
\end{ntheorem}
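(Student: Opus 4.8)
The plan is to prove the two implications separately, the necessity being an easy counting argument and the sufficiency a construction governed by the Frobenius number.

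\emph{Necessity.} First I would show that the mere existence of such a partition forces $d\mid v$, with no size restriction on $v$. Given a vector space partition of type $t_k^{m_{t_k}}\dots t_1^{m_{t_1}}$, the point-counting identity~(\ref{eq_vsp_counting_points}) reads $\sum_i m_{t_i}[t_i]_q=[v]_q$. Since $d=\gcd(T)$ divides every $t_i$, each $[t_i]_q=(q^{t_i}-1)/(q-1)$ is divisible by $[d]_q=(q^d-1)/(q-1)$, because $[t_i]_q/[d]_q=(q^{t_i}-1)/(q^d-1)\in\Z$. Hence $[d]_q\mid[v]_q$, i.e.\ $(q^v-1)/(q^d-1)\in\Z$, which holds precisely when $d\mid v$. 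This settles the ``only if'' direction.

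\emph{Reduction to $\gcd(T)=1$.} For the converse, assume $d\mid v$ and write $t_i=ds_i$ and $v=dw$, so that $\gcd(s_1,\dots,s_k)=1$. Using a Desarguesian $d$-spread of $\PG(v-1,q)$ — equivalently the identification of $\F_q^{dw}$ with $\F_{q^d}^{w}$ — every $s$-dimensional $\F_{q^d}$-subspace of $\PG(w-1,q^d)$ is a union of spread elements and forms a $(ds)$-dimensional $\F_q$-subspace. Consequently a vector space partition of $\PG(w-1,q^d)$ with dimension set $\{s_1,\dots,s_k\}$ lifts, element by element, to one of $\PG(v-1,q)$ with dimension set $\{ds_1,\dots,ds_k\}=T$. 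Dividing the stated bound by $d$ and using $t_k/(dk)=s_k/k$ turns the hypothesis into $w>2s_1\lceil s_k/k\rceil+s_2+\dots+s_k$, which is exactly the bound for the reduced problem. So it suffices to treat the case $\gcd(T)=1$ over an arbitrary field $\F_Q$ (apply with $Q=q^d$), where the divisibility condition is automatic.

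\emph{Construction for $\gcd(T)=1$.} Here I would build the partition by iterating Lemma~\ref{lemma_mrd_vsp}: in $\PG(D-1,q)$ with $b\le D/2$, choosing a $(D-b)$-space and applying the lemma with $a=D-b\ge b$ realizes the space as one retained $(D-b)$-space together with $q^{\,D-b}$ disjoint $b$-spaces, and recursing on the retained space produces a partition whose parts are the successive ``peel'' dimensions $b_1,b_2,\dots$ plus one innermost core, with $\sum_j b_j+(\text{core})=v$. Taking the core of dimension $t_1$, peeling one block of each dimension $t_k,t_{k-1},\dots,t_2$ (outermost first, in decreasing order to keep the $a\ge b$ condition valid) and filling the remainder with $t_1$-blocks, I need $v-(t_1+\dots+t_k)$ to lie in the numerical semigroup $\langle t_1,\dots,t_k\rangle$. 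As $\gcd(T)=1$ this holds once $v-(t_1+\dots+t_k)$ exceeds the Frobenius number $F(t_1,\dots,t_k)$, and plugging in a standard upper bound for $F$ of the shape $2t_1\lceil t_k/k\rceil$ is what produces the stated threshold on $v$.

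The hard part will be the geometric bookkeeping rather than the numerics: one must check that the hypothesis $a\ge b$ of Lemma~\ref{lemma_mrd_vsp} survives at \emph{every} peeling step (which forces the innermost core to be reachable only through $t_1$-peels, hence to have dimension $t_1$), and, crucially, that the peels can be ordered and their multiplicities chosen so that the governing size requirement collapses to the Frobenius estimate $2t_1\lceil t_k/(dk)\rceil+t_2+\dots+t_k$ rather than a cruder quantity such as $2t_2+t_3+\dots+t_k$ that a naive ordering would yield. Matching the precise constant is exactly where the interplay between the Frobenius bound and the $a\ge b$ constraint must be handled with care, and this is the step I expect to be the main obstacle; the ``composed''/reducible structure of Definition~\ref{def_composed_vsp} should be the right language for organizing the successive replacements.
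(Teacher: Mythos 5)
Your necessity argument and your reduction to the case $\gcd(T)=1$ are both correct and complete: the point count~(\ref{eq_vsp_counting_points}) together with the fact that $[d]_q$ divides $[t]_q$ precisely when $d$ divides $t$ gives the only-if direction, and the identification of $\F_q^{dw}$ with $\F_{q^d}^{w}$ transports the problem and the numerical threshold verbatim to the reduced set. For calibration: the paper itself contains no proof of this theorem (it is quoted from Beutelspacher), and your general orientation -- Lemma~\ref{lemma_mrd_vsp} plus the Frobenius number -- is indeed the orientation of the cited source.

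The construction step, however, has two genuine defects, and they are not bookkeeping. First, anchoring the recursion at a core of dimension $t_1$ is upside down. In the iterated application of Lemma~\ref{lemma_mrd_vsp} the constraint at each step is that the retained subspace, whose dimension equals the core dimension plus the sum of all \emph{later} peels, be at least as large as the current peel. So the constraint forces the core to dominate the \emph{final} peel; it does not force the core to be $t_1$. With core $t_1$ and decreasing peels, the mandatory last peel of dimension $t_2$ faces the condition $t_1\ge t_2$, which is false whenever your semigroup representation of $v-(t_1+\dots+t_k)$ supplies too few $t_1$-peels (it may supply none). The repair is to take the core of dimension $t_k$: then the retained space always has dimension at least $t_k$, every constraint holds in any order, and the construction works exactly when $v-(t_1+\dots+t_k)\in\langle t_1,\dots,t_k\rangle$. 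Second -- and this is fatal to the plan as a proof of the stated theorem -- that semigroup condition can fail \emph{above} the stated threshold, so no reordering or choice of fillers can save a peeling-only argument. Take $T=\{10,20,31\}$, so $d=1$, $k=3$, and the threshold equals $2\cdot 10\cdot\lceil 31/3\rceil+20+31=271$. For $v=279$ the theorem asserts existence, but $v-(10+20+31)=218\notin\langle 10,20,31\rangle=\langle 10,31\rangle$, because $218=10x+31y$ forces $y\equiv 8\pmod{10}$, whence $31y\ge 248>218$; even allowing peeled $20$-spaces to be re-partitioned into $10$-spreads (so that only $20$ and $31$ need to occur in the chain) merely relaxes the requirement to $228\in\langle 10,31\rangle$, which fails for the same reason. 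Relatedly, the Frobenius estimate your numerics would need, namely $F(t_1,\dots,t_k)\le t_1\left(2\lceil t_k/k\rceil-1\right)$, is simply false: here $F(10,20,31)=F(10,31)=10\cdot 31-10-31=269>210$. So the stated constant is out of reach for iterated peeling combined with \emph{any} upper bound on the Frobenius number; whatever the cited proof does, it must use additional constructions (for instance direct-sum-type lemmas in which the vectors having nonzero components in both of two complementary subspaces are partitioned into subspaces of the smaller dimension), and any completion of your sketch would have to import such a tool rather than refine the peeling bookkeeping.
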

\begin{trailer}{Types of vector space partitions in $\PG(v-1,q)$ for $v\le 5$}
\vspace*{-4mm}
\begin{nexercise}
  \label{exercise_vsp_classification_small}
  Show that for $v\le 4$ conditions (\ref{eq_vsp_counting_points}) and (\ref{eq_vsp_dimension}) are sufficient to characterize all possible 
  types of vector a space partition in $\PG(v-1,q)$. More precisely:
  \begin{itemize}
    \item the possible vector space partitions of $\PG(1,q)$ are given by $1^{q+1}$;
    \item the possible vector space partitions of $\PG(2,q)$ are given by $2^1 1^{q^2}$ and $1^{q^2+q+1}$;
    \item the possible vector space partitions of $\PG(3,q)$ are given by $2^{q^2+1-j}1^{(q+1)j}$, where $0\le j\le q^2+1$, and $3^1 1^{q^3}$.  
  \end{itemize}
\end{nexercise}
For vector space partitions of type $2^{m_2}1^{m_1}$ in $\PG(4,q)$ conditions (\ref{eq_vsp_counting_points}) and (\ref{eq_vsp_dimension}) only imply 
$m_2=q^3+q-j$ and $m_1=1+(q+1)j$ for $0\le j\le q^3+q$. Lemma~\ref{lemma_connection vsp} (or $A_q(5,4;2)=q^3+1)$) yields $j\ge q-1$.
\begin{nexercise}
  \label{exercise_vsp_classification_dim_5}
  Show that the conditions (\ref{eq_vsp_counting_points}), (\ref{eq_vsp_dimension}) and Lemma~\ref{lemma_connection vsp} are sufficient to characterize all possible 
  types of vector a space partition in $\PG(4,q)$. More precisely, the possible vector space partitions of $\PG(4,q)$ are given by $4^1 1^{q^4}$, $3^1 2^{q^3-j}1^{(q+1)j}$ 
  for $0\le j\le q^3$, and $2^{q^3+1-j}1^{q^2+(q+1)j}$, where $0\le j\le q^3+1$.
\end{nexercise}
\vspace*{-3mm}
\end{trailer}
There is little hope to classify all feasible types of vector space partitions of $\PG(v-1,q)$ unless the parameters are relatively small. Already the determination 
of the minimum possible $m_1$ such that a vector space partition of type $t^{m_t}1^{m_1}$ of $\PG(v-1,q)$ exists, i.e., the determination of $A_q(v,2t;t)$, is a 
really hard problem if $v$ and $t$ get large. More precisely, already the exact value of $A_q(8,6;3)$ is unknown if $q>2$. Nevertheless, the mentioned classification 
is an ongoing, very hard, major project, see e.g.\ \cite{el2009partitions,heden2009length,heden2012survey,heden2013supertail,lehmann2012some,seelinger2012partitions}. 
Currently all feasible types of vector space partitions of $\PG(v-1,2)$ with $v\le 7$ are characterized \cite{el2009partitions}. The feasible types of vector space 
partitions of $\PG(7,2)$ that do not contain elements of dimension $1$ are classified in \cite{el2010partitions}. For the characterization of all feasible vector space
partitions of $\PG(8-1,2)$ see \cite{kurz2022vectorspacepartitions8}. Here we want to focus on non-existence results.    

\begin{trailer}{Using classification results for divisible codes}In $\PG(5,2)$ the only infeasible type of a vector space partition that is not excluded by conditions (\ref{eq_vsp_counting_points}), (\ref{eq_vsp_dimension}) or 
Lemma~\ref{lemma_connection vsp_full} is $3^7 2^3 1^5$, see e.g.~\cite{el2009partitions} for constructions in the other cases. Here, $2^1$ divisible sets of $5$ or 
$2^2$-divisible sets of $14$ points indeed exist over $\F_2$. However, in the latter case the $14$ points always form two disjoint planes, see Lemma~\ref{lemma_disjoint_union_two_subspaces} in Section~\ref{sec_classification_results}.  
Since two lines contained in the same plane have to intersect non-trivially, type $3^7 2^3 1^5$ is infeasible. The same argument also excludes the existence 
of a vector space partition of type $4^1 3^{14}2^3 1^5$ of $\PG(7,2)$, cf.~\cite[Proposition 6.4]{el2009partitions}, and can easily be generalized to:
\begin{nlemma}
  \label{lemma_vsp_disjoint_subspaces}
  Let $\cV$ be a vector space partition of type $t_k^{m_k}\dots t_1^{m_1}$ of $\PG(v-1,q)$, where $v>t_k>\dots>t_1>0$. If $1\le s<k$ is an index with $l:=\sum_{i=1}^s m_i\cdot [t_i]_q 
  /[t_{s+1}]_q\in \N$ and every $q^{t_{s+1}-1}$-divisible set of $l[t_{s+1}]_q$ is the disjoint union of $l$ $t_{s+1}$-spaces, then we have
  \begin{equation}
    \sum_{1\le i\le s\,:\, 2t_i>t_{s+1}} m_i\le l.
  \end{equation}
\end{nlemma}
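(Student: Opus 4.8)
The plan is to combine the divisibility/inheritance machinery developed earlier with the structural hypothesis on $q^{t_{s+1}-1}$-divisible sets of the critical cardinality. Let me set $d := t_{s+1}$ for readability, and let $\cW$ denote the set of all points lying in the elements of $\cV$ of dimension at most $t_s$. By Lemma~\ref{lemma_connection vsp_full}, $\cW$ is a $q^{d-1}$-divisible set in $\PG(v-1,q)$, and its cardinality is exactly $\sum_{i=1}^s m_i\cdot[t_i]_q = l[d]_q$ by the hypothesis that this quotient $l$ is an integer. The assumption then forces $\cW$ to be the disjoint union of $l$ subspaces of dimension $d$, say $\cW = \chi_{D_1}+\dots+\chi_{D_l}$ with the $D_j$ pairwise disjoint $d$-spaces. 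This is the key rigidity: the low-dimensional part of the vector space partition, when viewed merely as a point set, is pinned down to a union of $d$-spaces.

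The heart of the argument is then a counting/incidence observation about how the original subspaces of $\cV$ of dimension $t_i$ (for $1\le i\le s$) sit inside these $l$ reconstructed $d$-spaces. First I would observe that each element $U$ of $\cV$ with $\dim(U)=t_i\le t_s < d$ has all of its $[t_i]_q$ points inside $\cW$, hence inside the disjoint union $\bigsqcup_j D_j$; since $U$ is connected (a subspace) and the $D_j$ are pairwise point-disjoint, $U$ must be entirely contained in a single $D_j$. Now focus on those indices $i\le s$ with $2t_i > d$: for such an $i$, any two $t_i$-spaces $U,U'$ both lying inside the same $d$-space $D_j$ would satisfy $\dim(U)+\dim(U') = 2t_i > d = \dim(D_j)$, so by the dimension bound (the same reasoning as in Equation~(\ref{eq_vsp_dimension}), applied inside $D_j$) they could not be disjoint. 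But distinct elements of a vector space partition are always point-disjoint. Therefore each $D_j$ can contain \emph{at most one} element $U$ of $\cV$ coming from the set of indices $\{i\le s : 2t_i>d\}$.

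Counting the elements in question across the $l$ disjoint spaces $D_1,\dots,D_l$ then immediately gives
\begin{equation}
  \sum_{1\le i\le s\,:\, 2t_i>t_{s+1}} m_i \;\le\; l,
\end{equation}
since the left-hand side counts precisely those elements of $\cV$ each of which occupies one of the $l$ spaces $D_j$, and no $D_j$ is occupied twice. This is exactly the claimed inequality.

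The main obstacle I anticipate is not the incidence counting, which is routine once the structure of $\cW$ is available, but rather making airtight the step that $\cW$ as an abstract $q^{d-1}$-divisible set really does decompose into the $l$ spaces $D_j$ in a way compatible with the ambient geometry. The hypothesis supplies this as an input (``every $q^{t_{s+1}-1}$-divisible set of $l[t_{s+1}]_q$ points is the disjoint union of $l$ $t_{s+1}$-spaces''), so the delicate point is purely that the point set carved out by the low-dimensional elements of $\cV$ is genuinely a \emph{set} (multiplicity-free) of the correct cardinality — which holds because the elements of a vector space partition are pairwise point-disjoint, so no point is covered twice. I would make sure to invoke Lemma~\ref{lem:ambient_space_unwichtig} if needed so that the dimension of the ambient space plays no role, and to note that the reconstructed $D_j$ need not be elements of $\cV$ themselves; they are only forced to contain the genuine partition elements, which is all the counting requires.
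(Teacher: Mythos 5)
Your overall strategy is the one the paper intends (the paper only sketches it via the $3^7 2^3 1^5$ example and the remark that the argument ``can easily be generalized''): pass to the point set $\cW$ covered by the elements of dimension at most $t_s$, use Lemma~\ref{lemma_connection vsp_full} together with the integrality of $l$ to see that $\cW$ is a $q^{t_{s+1}-1}$-divisible \emph{set} of exactly $l[t_{s+1}]_q$ points, invoke the hypothesis to write $\cW=D_1\sqcup\dots\sqcup D_l$ with pairwise disjoint $t_{s+1}$-spaces $D_j$, and then show that the elements with $2t_i>t_{s+1}$ occupy pairwise distinct $D_j$. The bookkeeping and the final pigeonhole are fine, up to a small slip: in the last step you must compare two partition elements of possibly \emph{different} dimensions $t_i\neq t_{i'}$, both exceeding $t_{s+1}/2$; this still gives $t_i+t_{i'}>t_{s+1}$, so the fix is trivial.

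The genuine gap is the containment step, and the justification you give for it --- ``since $U$ is connected (a subspace) and the $D_j$ are pairwise point-disjoint, $U$ must be entirely contained in a single $D_j$'' --- is not a valid argument. There is no notion of connectedness doing any work here, and the general principle you appeal to (a subspace covered by pairwise disjoint subspaces lies inside one of them) is false: the $q+1$ points of a line cover it, the lines of a spread of $\PG(3,q)$ cover every plane, and, in the present situation where the covering spaces have \emph{larger} dimension, a line over $\F_2$ can have its three points in three pairwise disjoint planes. In particular, your stronger claim that \emph{every} element of dimension at most $t_s$ lies in a single $D_j$ fails in general as soon as $l\ge q+1$. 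What the proof actually needs is this: since each point of $U$ lies in exactly one $D_j$, the nonempty intersections $U\cap D_j$ form a vector space partition of $U$; if no $D_j$ contains $U$, then $U$ is partitioned into at most $l$ \emph{proper} subspaces. For $l=2$ this is impossible because no vector space is the union of two proper subspaces --- and $l=2$, supplied by Lemma~\ref{lemma_disjoint_union_two_subspaces}, is exactly the case in all of the paper's applications. For larger $l$ one needs a counting argument: a part $A$ of maximal dimension $a\ge 1$ forces every other part to have dimension at most $t_i-a$, hence at least $q^a\ge q$ further parts are needed, so containment of the relevant elements is guaranteed once $l\le q$; the known case $q=2$, $l=3$, $t_{s+1}\ge 5$ requires the additional observation that all parts would then be points, giving $[t_i]_2\le 3$ and contradicting $t_i\ge 3$. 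This is precisely where the geometry, and implicitly a restriction on $l$ hidden in the verifiability of the hypothesis, enters the proof; your write-up replaces it with an appeal to a false principle, so the central step is unjustified as it stands.
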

Another example is the exclusion of a vector space partition of type $3^{26} 2^4 1^{10}$ of $\PG(5,3)$ since $3^2$-divisible sets of $26$ points over $\F_3$ can be partitioned 
into two disjoint planes, see Lemma~\ref{lemma_disjoint_union_two_subspaces} in Section~\ref{sec_classification_results}. 

Also other classification or characterization results for $q^r$-divisible sets of points can be used to exclude the existence of certain types of vector space partitions. In 
Exercise~\ref{exercise_empty_hyperplane} we see that for each $q^r$-divisible set $\cM$ of $q^{r+1}$ points over $\F_q$, where $r\in\N$, there exists an empty hyperplane $H\in \cH$, i.e., 
$\cM(H)=0$. So, in particular $\supp(\cM)$ does not contain a line. As a consequence, there is no vector space partition of type $3^{17} 2^1 1^5$ of $\PG(7,2)$, cf.\  
\cite[Proposition 6.5]{el2009partitions} . By replacing an arbitrary line in a vector space partition of type $4^1 3^{13} 2^7 1^0$ of $\PG(7,2)$ by its three points we obtain  
a vector space partition of type $4^1 3^{13} 2^6 1^3$, cf.\ Definition~\ref{def_composed_vsp}. Since each $q^r$-divisible set of $[r+1]_q$ points is the characteristic 
function of an $(r+1)$-space, see Lemma~\ref{lemma_div_implies_subspace}, here the mapping also works in the other direction, i.e., a vector space partition of type 
$4^1 3^{13} 2^6 1^3$ implies the existence of a vector space partition of type 
$4^1 3^{13} 2^7 1^0$.   
\end{trailer}
\begin{nexercise}
  \label{exercise_vsp_4_16_3_1_2_1_1_5}
  Show that no vector space partition of type $4^{16} 3^1 2^1 1^5$ of $\PG(8,2)$ exists. \textit{Hint:} e.g.\ Use Lemma~\ref{lemma_div_implies_subspace} or \cite[Lemma 2]{heden2013supertail}.
\end{nexercise}
\begin{question}{Research problem}Do vector space partitions of type $4^4 3^{135}1^{18}$ or $4^3 3^{137}1^{19}$ exist in $\PG(9,2)$?
\end{question}

The previously mentioned non-existence results, except the non-existence of type $4^1 3^{13} 2^7$ that we will prove later on, and suitable constructions give the full characterization 
of vector space partitions of $\PG(v-1,2)$ for all $v\le 7$, see \cite{el2009partitions} for the details. For the vector space partitions of $\PG(7,2)$ without $1$-dimensional elements 
conditions (\ref{eq_vsp_counting_points}), (\ref{eq_vsp_dimension}) and Lemma~\ref{lemma_connection vsp_full} are sufficient except for the type $4^{13}3^6 2^6$, 
cf.\ Exercise~\ref{exercise_vsp_234_v_8}, Example~\ref{ex_no_vsp_of_type_4_13_3_6_2_6}, and \cite{el2010partitions}. 

\begin{trailer}{The {\lq\lq}tail condition{\rq\rq}}Another, very explicit, necessary  criterion for the existence of vector space partitions is the so-called 
\emph{tail condition}: 

\begin{ntheorem} {(\cite[Theorem 1]{heden2009length})}
  \label{thm_length_of_tail}
  Let $\cV$ be a vector space partition of type $t^{m_t}\dots {d_2}^{m_{d_2}}{d_1}^{m_{d_1}}$ of $\PG(v-1,q)$, 
  where $m_{d_2},m_{d_1}>0$ and $n_1=m_{d_1}$, $n_2=m_{d_2}$.
  \begin{enumerate}
    \item[(i)]   if $q^{d_2-d_1}$ does not divide $n_1$ and if $d_2<2d_1$, then $n_1\ge q^{d_1}+1$;
    \item[(ii)]  if $q^{d_2-d_1}$ does not divide $n_1$ and if $d_2\ge 2d_1$, then $n_1>2q^{d_2-d_1}$ or $d_1$ divides $d_2$ and 
                 $n_1=\left(q^{d_2}-1\right)/\left(q^{d_1}-1\right)$;
    \item[(iii)] if $q^{d_2-d_1}$ divides $n_1$ and $d_2<2d_1$, then $n_1\ge q^{d_2}-q^{d_1}+q^{d_2-d_1}$;
    \item[(iv)]  if $q^{d_2-d_1}$ divides $n_1$ and $d_2\ge 2d_1$, then $n_1\ge q^{d_2}$.
  \end{enumerate}   
\end{ntheorem}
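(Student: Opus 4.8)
The plan is to convert the tail condition into a single statement about a $q^{d_2-1}$-divisible point set and then exploit the divisibility congruence it imposes on hyperplanes. By Lemma~\ref{lemma_connection vsp_full} with $s=1$, the points covered by the $d_1$-dimensional members of $\cV$ form a $q^{d_2-1}$-divisible set $\cM$ in $\PG(v-1,q)$; since distinct members intersect trivially, $\cM$ is a genuine set, namely the disjoint union of the $n_1$ members of dimension $d_1$, so $\#\cM=n_1[d_1]_q$. For a hyperplane $H$ write $a_H$ for the number of members contained in $H$. A member not lying in $H$ meets it in a $(d_1-1)$-space, while one lying in $H$ contributes all its $[d_1]_q$ points, whence
\[
  \cM(H)=a_H[d_1]_q+(n_1-a_H)[d_1-1]_q=n_1[d_1-1]_q+a_Hq^{d_1-1}.
\]
Substituting $\#\cM=n_1[d_1-1]_q+n_1q^{d_1-1}$ into $\cM(H)\equiv\#\cM\pmod{q^{d_2-1}}$ yields the key congruence
\[
  a_H\equiv n_1\pmod{q^{d_2-d_1}}\qquad\text{for every }H\in\cH.
\]

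Case~(i) is then immediate. If $q^{d_2-d_1}\nmid n_1$, the congruence forces $a_H\not\equiv0$, hence $a_H\ge1$ for \emph{every} hyperplane. Double counting the incident pairs $(H,D)$ with $D\subseteq H$, and using that a fixed $d_1$-space lies in exactly $\qbin{v-d_1}{1}{q}=[v-d_1]_q$ hyperplanes (Exercise~\ref{exercise_counting_oversubspaces}), gives $n_1[v-d_1]_q=\sum_{H}a_H\ge\#\cH=[v]_q$. Since $[v]_q/[v-d_1]_q=q^{d_1}+[d_1]_q/[v-d_1]_q>q^{d_1}$ and $n_1\in\N$, this forces $n_1\ge q^{d_1}+1$, proving the claim (the hypothesis $d_2<2d_1$ is only needed to delineate this case from~(ii)).

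For the remaining cases I would combine the congruence with Lemma~\ref{lemma_average}, which supplies a hyperplane $H$ with $\cM(H)<\#\cM/q$; by the formula above this means $a_H<n_1/q^{d_1}$. When $q^{d_2-d_1}\mid n_1$ (cases~(iii),(iv)) we have $a_H\equiv0$, so either this light hyperplane already carries $a_H\ge q^{d_2-d_1}$, immediately giving $n_1>q^{d_2}$, or it is free of members. In the free case I restrict to $H$: by Lemma~\ref{lemma_heritable} the set $\cM|_H$ is $q^{d_2-2}$-divisible, and because the members are pairwise disjoint their traces $D\cap H$ are again $n_1$ pairwise disjoint spaces, now of dimension $d_1-1$. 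This is a smaller instance of the same configuration with $d_1$ and $d_2$ both lowered by one and $d_2-d_1$ preserved, so a descent down to $d_1=1$ reduces matters to lower bounds on the cardinality of a nonempty $q^{d_2-d_1}$-divisible \emph{set} of points, which is exactly the content of Theorem~\ref{thm_exclusion_r_1_to_ovoid}, Theorem~\ref{thm_exclusion_q_r}, and the averaging/induction machinery of Section~\ref{sec_lengths_of_divisible_codes}. The dichotomy $d_2<2d_1$ versus $d_2\ge2d_1$ enters precisely here, governing whether more than one disjoint $d_1$-space fits inside a single $d_2$-space and hence which threshold is correct.

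The hard part will be closing the gap between what this descent yields and the sharp thresholds. The member-free descent only preserves $d_2-d_1$, so on its own it proves merely $n_1\ge q^{d_2-d_1+1}$, which is weaker than $q^{d_2}$ once $d_1\ge2$; recovering the full $q^{d_2}$ of case~(iv) and $q^{d_2}-q^{d_1}+q^{d_2-d_1}$ of case~(iii) requires showing that the descent cannot remain member-free at every level unless $n_1$ is already large, i.e.\ that a heavy light hyperplane with $a_H\ge q^{d_2-d_1}$ must eventually appear and supply the missing powers of $q$. For case~(ii), where $d_2\ge2d_1$ leaves room to pack several disjoint $d_1$-spaces into one $d_2$-space, the bound $n_1>2q^{d_2-d_1}$ and the exceptional value $n_1=(q^{d_2}-1)/(q^{d_1}-1)$ are the genuine crux: I expect the first to come from the quadratic second-moment condition of Lemma~\ref{lemma_hyperplane_types_arithmetic_progression} rather than from averaging alone, and the second from recognizing that saturation of a $d_2$-space by a $d_1$-spread is the unique borderline configuration consistent with the global $q^{d_2-1}$-divisibility. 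Ruling out every cardinality strictly between this spread value and $2q^{d_2-d_1}$ is where the delicate case work concentrates.
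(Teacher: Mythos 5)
Your reformulation is exactly the one the paper itself adopts: the congruence $a_H\equiv n_1\pmod{q^{d_2-d_1}}$ you derive is precisely the statement that the tail elements form a $q^{d_2-d_1}$-divisible set of $d_1$-spaces in the sense of Definition~\ref{def_divisible_sets_of_points_generalization} (this is the exercise stated right after that definition), and your case~(i) argument --- every hyperplane must contain a tail element, then double counting of incidences against $\#\cH=[v]_q$ --- is a complete and correct proof of that part; it even shows that the hypothesis $d_2<2d_1$ is not needed there, consistent with the unconditional bound $n\ge q^k+1$ in Theorem~\ref{thm_length_of_tail_improved}(i). Be aware, though, that the paper proves none of this theorem: it is quoted from Heden's article, with the remark that the original proof goes through mixed perfect codes, and the sharpened version (Theorem~\ref{thm_length_of_tail_improved}) is likewise quoted without proof. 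So there is no internal argument to compare against beyond the reduction you rediscovered.

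The genuine gap is that cases~(ii)--(iv), which carry essentially all the content of the theorem, are not proven, as you acknowledge yourself. Concretely, in cases~(iii)/(iv) your averaging-plus-descent scheme only preserves the parameter $d_2-d_1$: a non-member-free light hyperplane at descent level $j$ yields merely $n_1>q^{d_1-j}\cdot q^{d_2-d_1}$, and the fully member-free descent bottoms out at $n_1\ge q^{d_2-d_1+1}$, so the scheme proves only $\min$ of these, which is strictly weaker than the claimed $q^{d_2}-q^{d_1}+q^{d_2-d_1}$ and $q^{d_2}$ once $d_1\ge 2$. The missing ingredient --- a proof that a hyperplane with $a_H\ge q^{d_2-d_1}$ must occur \emph{before} the descent degrades the available factor of $q^{d_1}$, or some substitute for this --- is exactly the hard part of Heden's theorem, and nothing in the proposal supplies it. Case~(ii) is in worse shape: the dichotomy ($n_1>2q^{d_2-d_1}$, or $d_1\mid d_2$ and $n_1=(q^{d_2}-1)/(q^{d_1}-1)$) requires both excluding every cardinality up to $2q^{d_2-d_1}$ other than the spread value and identifying the $d_1$-spread of a $d_2$-space as the unique borderline configuration. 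Your appeal to Lemma~\ref{lemma_hyperplane_types_arithmetic_progression} is only a guess here, and it is doubtful it can succeed on its own: that lemma sees only the hyperplane multiplicities of the point set $\cM$, and the point set of the exceptional spread configuration is just $\chi_S$ for a $d_2$-space $S$, which is indistinguishable, at the level of point multiplicities, from configurations having nothing to do with disjoint $d_1$-spaces; the subspace-level congruence on $a_H$ must be exploited much more deeply than the proposal does. In summary: a correct proof of (i) and a correct reduction, but not a proof of the theorem.
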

We remark that the proof is based on so-called \emph{mixed perfect codes}, see e.g.\ \cite{heden2009length,herzog1972group} for details. In this context we would like to mention 
\cite{blokhuis1989heden}, which translates a similar results obtained via mixed perfect codes into geometry. 
\end{trailer}   

The tail of a vector space partition consists of the elements of the smallest occurring dimension. This notion was generalized to the so-called \emph{supertail} 
containing all elements of the vector space partition with a dimension below a certain bound. For details and results on e.g.\ the minimum possible cardinality 
or the minimum possible number of covered points by the supertail we refer to \cite{heden2013supertail,nuastase2017structure,nuastase2018complete}. In some cases 
the structure of the minimum tails or supertails can be completely characterized, see e.g.\ \cite{heden2009length,nuastase2018complete} and  
cf.\ Exercise~\ref{exercise_vsp_4_16_3_1_2_1_1_5}. For a few  observations from the point of view of divisible codes we refer to Subsection~\ref{subsec_partitions_q_r_divisible_sets}.  

Theorem~\ref{thm_length_of_tail} was slightly improved and reformulated in \cite{kurz2018heden}.
\begin{ndefinition}(\cite[Definition 4]{kurz2018heden})
  \label{def_divisible_sets_of_points_generalization}
  Let $\cN$ be a set of pairwise disjoint $k$-subspaces in $\PG(v-1,q)$. If there exists a positive integer $r$ such that 
  \begin{equation}
    \#\left\{N\in\cN\,:\, N\le H\right\} \equiv \#\cN \pmod{q^r}
  \end{equation}
  for every hyperplane $H\in\cH$, then we call $\cN$ \emph{$q^r$-divisible}.  
\end{ndefinition}
\begin{nexercise}
  Let $\cV$ be a vector space partition of type $t^{m_t}\dots {d_2}^{m_{d_2}}{d_1}^{m_{d_1}}$ of $\PG(v-1,q)$, 
  where $m_{d_2},m_{d_1}>0$ and $n_1=m_{d_1}$, $n_2=m_{d_2}$. Show that the set $\cN$ of the $d_1$-dimensional elements of 
  $\cV$ is $q^{d_2-d_1}$-divisible.
\end{nexercise}
\begin{ntheorem}(\cite[Theorem 12]{kurz2018heden})
  \label{thm_length_of_tail_improved}
  For a non-empty $q^r$-divisible set $\cN$ of $k$-subspaces in $\PG(v-1,q)$ the following bounds on $n=\#\cN$ are tight.
  \begin{enumerate}
    \item[(i)] We have $n\ge q^k+1$ and if $r\ge k$ then either $k$ divides $r$ and $n\ge \frac{q^{k+r}-1}{q^k-1}$ or $n\ge\frac{q^{(a+2)k}-1}{q^k-1}$, 
          where $r=ak+b$ with $0<b<k$ and $a,b\in\mathbb{N}$.
    \item[(ii)] Let $q^r$ divide $n$. If $r<k$ then $n\ge q^{k+r}-q^k+q^r$ and $n\ge q^{k+r}$ otherwise. 
  \end{enumerate}  
\end{ntheorem}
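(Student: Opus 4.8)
The plan is to convert the subspace–divisibility of $\cN$ into ordinary point–divisibility together with three counting identities, and then to run the integer–rounding arguments underlying the ``linear'' and ``quadratic'' conditions of Section~\ref{sec_nonexistence_projective_q_r}. For a hyperplane $H\in\cH$ write $a_H:=\#\{N\in\cN\,:\,N\le H\}$, so the defining property of $\cN$ reads $a_H\equiv n\pmod{q^r}$. Since $\dim(N\cap H)\in\{k-1,k\}$ for every $k$-space $N$, the associated point set $\cM:=\biguplus_{N\in\cN}\spaces{N}{1}$ (a genuine set, by pairwise disjointness) satisfies $\cM(H)=a_H[k]_q+(n-a_H)[k-1]_q=n[k-1]_q+a_Hq^{k-1}$. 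Hence $\cM(H)-\#\cM=(a_H-n)q^{k-1}$, and $a_H\equiv n\pmod{q^r}$ gives $\cM(H)\equiv\#\cM\pmod{q^{r+k-1}}$: the set $\cM$ of $n[k]_q$ points is $q^{r+k-1}$-divisible. After embedding $\cN$ into a sufficiently large ambient space (harmless by Lemma~\ref{lem:ambient_space_unwichtig}), double counting over $\cH$ with Exercise~\ref{exercise_counting_oversubspaces} (a $k$-space lies in $[v-k]_q$ hyperplanes, two disjoint $k$-spaces span a $2k$-space lying in $[v-2k]_q$ hyperplanes) yields $\sum_H 1=[v]_q$, $\sum_H a_H=n[v-k]_q$, and $\sum_H a_H(a_H-1)=n(n-1)[v-2k]_q$.

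For the general bound $n\ge q^k+1$ of (i) I would use averaging, Lemma~\ref{lemma_average}: some hyperplane has $a_H\le n[v-k]_q/[v]_q<n/q^k$; as $a_H\equiv n\pmod{q^r}$ and $a_H\ge0$, either $q^r\mid n$ (treated below, with a stronger bound) or $a_H\ge(n\bmod q^r)\ge1$, forcing $n>q^k$. For the refinement when $r\ge k$ and $k\mid r$ I would argue on $\cM$: a nonzero $q^{r+k-1}$-divisible set has at least $[r+k]_q$ points, since the smallest excluded interval of Theorem~\ref{thm_exclusion_q_r} (with $r'=r+k-1$) is $[1,[r+k]_q-1]$; thus $n[k]_q\ge[r+k]_q$ and $n\ge\frac{q^{k+r}-1}{q^k-1}$, which is integral (hence attainable) exactly when $k\mid r$.

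The core computation is the divisible case (ii), obtained from the integer–rounding cut applied to the $a_H$, in the spirit of Lemma~\ref{lemma_hyperplane_types_arithmetic_progression}. Assume $q^r\mid n$, so $q^r\mid a_H$ and $a_H=q^rb_H$ with $b_H\in\N_0$. From $\sum_H b_H(b_H-1)\ge0$ and the moment identities,
\[
0\le q^{2r}\sum_H b_H(b_H-1)=n\big[(n-1)[v-2k]_q-(q^r-1)[v-k]_q\big],
\]
so $(n-1)[v-2k]_q\ge(q^r-1)[v-k]_q$. As $\tfrac{[v-k]_q}{[v-2k]_q}=q^k+\tfrac{q^k-1}{q^{v-2k}-1}>q^k$, this forces $n-1>(q^r-1)q^k$, i.e. $n\ge q^{k+r}-q^k+2$. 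Rounding up to the next multiple of $q^r$ then yields precisely $n\ge q^{k+r}-q^k+q^r$ when $r<k$ (where $q^{k+r}-q^k\equiv0\pmod{q^r}$) and $n\ge q^{k+r}$ when $r\ge k$. The same cut shifted by $t\in\N_0$, with residue $\rho:=n\bmod q^r\neq0$, supplies the remaining refinement of (i): it excludes $n$ from each interval $\big((\rho+tq^r)q^k,(\rho+(t+1)q^r)q^k\big)$, and combining these with $n\equiv\rho\pmod{q^r}$ and the constraint $[k]_q\mid\#\cM$ isolates the value $\tfrac{q^{(a+2)k}-1}{q^k-1}$.

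For tightness I would realize the bounds by spreads and their relatives: a $k$-spread of $\PG(2k-1,q)$ has $a_H\equiv1$ for all $H$ and $n=q^k+1$, so it is $q^k$-divisible and hence $q^r$-divisible for $1\le r\le k$; $k$-spreads of $\PG(k+r-1,q)$ (for $k\mid r$) and of $\PG((a+2)k-1,q)$ give the refinements of (i); and the hole sets of the extremal partial spreads of Proposition~\ref{prop_partial_spread_lower_bound}, together with the basic affine–space constructions, realize the thresholds of (ii). The main obstacle is the last lower–bound step in case (i) with $k\nmid r$: converting the family of quadratic exclusion intervals into the single clean threshold $\frac{q^{(a+2)k}-1}{q^k-1}$ requires careful residue–by–residue bookkeeping (and, for a few boundary residues, the cubic condition of Corollary~\ref{cor_implication_fourth_mac_williams}) to rule out the intermediate multiples of $[k]_q$ that are admissible as abstract $q^{r+k-1}$-divisible cardinalities but cannot arise as a disjoint union of $k$-spaces.
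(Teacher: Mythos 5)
You should note first that the survey does not reprove this theorem: it quotes it from \cite{kurz2018heden} and only records which constructions attain the bounds, so what follows measures your proposal against correctness and against the cited approach. Your opening reduction is right, and it is the bridge the cited work is built on: from $\cM(H)-\#\cM=(a_H-n)q^{k-1}$ the associated point set is $q^{r+k-1}$-divisible, and your three moment identities are valid (the third one precisely because distinct elements of $\cN$ are disjoint and hence span a $2k$-space). Your proof of part (ii) is complete and correct: $\sum_H b_H(b_H-1)\ge 0$ gives $(n-1)[v-2k]_q\ge (q^r-1)[v-k]_q$, hence $n\ge q^{k+r}-q^k+2$, and rounding up inside the congruence class $q^r\mid n$ yields exactly the two stated thresholds. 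The unconditional bound $n\ge q^k+1$ and the case $k\mid r$ of part (i) also go through.

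The case $r\ge k$, $k\nmid r$ of part (i), i.e.\ the bound $n\ge\frac{q^{(a+2)k}-1}{q^k-1}$, is genuinely missing, and your sketch for it contains a concrete error. You claim the shifted quadratic cut excludes $n$ from every interval $\bigl((\rho+tq^r)q^k,(\rho+(t+1)q^r)q^k\bigr)$ with $\rho=n\bmod q^r$. That cannot be correct, because the extremal example itself sits inside the first such interval: for the $k$-spread of $\PG((a+2)k-1,q)$ one has $n=\frac{q^{(a+2)k}-1}{q^k-1}$, $\rho=\frac{q^{(a+1)k}-1}{q^k-1}$, and hence $n=\rho q^k+1\in(\rho q^k,(\rho+q^r)q^k)$. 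In fact, in the limit of large ambient dimension the cut reads $\bigl(n-\rho q^k-tq^{r+k}\bigr)\bigl(n-\rho q^k-(t+1)q^{r+k}\bigr)+n(q^k-1)\ge 0$, so the excluded regions are strictly smaller than your intervals by the positive defect $n(q^k-1)$; deciding when they still contain the relevant integers, for every residue $\rho$, every $b$, and all $n$ below the threshold, is precisely the content of the cited theorem, not bookkeeping. Nor can point divisibility substitute for this at large parameters: for $q=2$, $k=5$, $r=6$ the claimed threshold is $n\ge 1057$, yet $n=595$ gives $n[k]_q=18445>\frobenius{10}{2}=18433$, so by Proposition~\ref{prop_frobenius} this is an admissible cardinality of a $2^{10}$-divisible multiset (its $S_2(10)$-adic leading coefficient equals $3$); such $n$ must be killed by the subspace structure alone, and your proposal does not show how.

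Finally, your tightness witnesses for (ii) are wrong: the hole sets of the partial spreads of Proposition~\ref{prop_partial_spread_lower_bound} are sets of \emph{points}, not sets of $k$-subspaces, so they cannot attain bounds on $\#\cN$ (and those partial spreads, viewed as sets of subspaces, are in general not divisible at all). The correct witnesses are the ones the survey records directly after the theorem: for $r\ge k$ the construction of Lemma~\ref{lemma_mrd_vsp}, which yields $q^{k+r}$ pairwise disjoint $k$-spaces with $a_H\in\{0,q^r\}$, and for $r<k$ the two-weight codes of Theorem~\ref{thm_BCH_div}, whose point sets split into $q^{k+r}-q^k+q^r$ pairwise disjoint $k$-spaces.
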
 
For (i) the lower bounds are attained by $k$-spreads. For (ii) the second lower bound is attained by the construction of Lemma~\ref{lemma_mrd_vsp}. 
In the other case the two-weight codes constructed in \cite[Theorem 4]{bierbrauer1997family} attain the lower bound. 
\begin{ncorollary}
  Let $\cV$ be a vector space partition of type ${d_l}^{u_l}\ldots {d_2}^{u_2}{d_1}^{u_1}$ of $\PG(v-1,q)$, where
  $u_1,u_2>0$ and $d_l>\dots>d_2>d_1\ge 1$.
  \begin{enumerate}
    \item[(i)] We have $u_1\ge q^{d_1}+1$ and if $d_2\ge 2d_1$ then either $d_1$ divides $d_2$ and $u_1\ge \frac{q^{d_2}-1}{q^{d_1}-1}$ or 
          $u_1\ge\frac{q^{(a+1)d_1}-1}{q^{d_1}-1}$, where $d_2=ad_1+b$ with $0<b<d_1$ and $a,b\in\mathbb{N}$.
    \item[(ii)] Let $q^{d_2-d_1}$ divide $u_1$. If $d_2<2d_1$ then $u_1\ge q^{d_2}-q^{d_1}+q^{d_2-d_1}$ and $u_1\ge q^{d_2}$ otherwise. 
  \end{enumerate}
\end{ncorollary}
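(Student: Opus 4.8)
The plan is to recognize that the corollary is nothing more than the specialization of Theorem~\ref{thm_length_of_tail_improved} to the two lowest dimensions occurring in $\cV$, so the whole task is to identify the correct divisible object and then match parameters. First I would let $\cN$ be the set of all $d_1$-dimensional members of $\cV$, so that $\#\cN=u_1$. Because $\cV$ is a vector space partition, any two distinct members intersect only in the zero vector; hence the elements of $\cN$ are pairwise disjoint $d_1$-subspaces, exactly the kind of object to which Definition~\ref{def_divisible_sets_of_points_generalization} applies. The goal is then to apply Theorem~\ref{thm_length_of_tail_improved} with $k=d_1$, $r=d_2-d_1$, and $n=u_1$.

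The one preliminary step that needs a short argument is that $\cN$ is $q^{d_2-d_1}$-divisible in the sense of Definition~\ref{def_divisible_sets_of_points_generalization}; this is the content of the exercise preceding the theorem, and I would supply it through a point count built on Lemma~\ref{lemma_connection vsp_full}. Let $\cM$ be the set of all points lying on the members of $\cN$; since $d_2$ is the next dimension above $d_1$, Lemma~\ref{lemma_connection vsp_full} (taken at $s=1$) shows $\cM$ is a $q^{d_2-1}$-divisible set of points with $\#\cM=u_1[d_1]_q$. For a hyperplane $H$, each $N\in\cN$ contributes $[d_1]_q$ points to $H$ if $N\le H$ and $[d_1-1]_q$ points otherwise; writing $a_H=\#\{N\in\cN:N\le H\}$ this gives $\cM(H)=a_H[d_1]_q+(u_1-a_H)[d_1-1]_q$. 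Subtracting $\#\cM=u_1[d_1]_q$ and using $[d_1]_q-[d_1-1]_q=q^{d_1-1}$, the $q^{d_2-1}$-divisibility of $\cM$ forces $(u_1-a_H)q^{d_1-1}\equiv 0\pmod{q^{d_2-1}}$, i.e.\ $a_H\equiv u_1\pmod{q^{d_2-d_1}}$, which is precisely the asserted divisibility of $\cN$.

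With $\cN$ established as a non-empty $q^{d_2-d_1}$-divisible set of $d_1$-subspaces, I would invoke Theorem~\ref{thm_length_of_tail_improved} verbatim under $k=d_1$ and $r=d_2-d_1$. Part~(i) of the theorem yields $u_1\ge q^{d_1}+1$ unconditionally, and the regime $r\ge k$ is exactly $d_2\ge 2d_1$; there the divisibility condition $k\mid r$ reads $d_1\mid(d_2-d_1)$, equivalently $d_1\mid d_2$, and $\frac{q^{k+r}-1}{q^k-1}=\frac{q^{d_2}-1}{q^{d_1}-1}$, matching part~(i) of the corollary. Part~(ii) of the theorem, under the hypothesis $q^{d_2-d_1}\mid u_1$, splits on $r<k$ versus $r\ge k$, i.e.\ $d_2<2d_1$ versus $d_2\ge 2d_1$, and gives $u_1\ge q^{k+r}-q^k+q^r=q^{d_2}-q^{d_1}+q^{d_2-d_1}$ respectively $u_1\ge q^{k+r}=q^{d_2}$, which is part~(ii) of the corollary. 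The only point requiring genuine care — and the nearest thing to an obstacle — is the index bookkeeping in the non-divisible subcase of part~(i): the theorem writes $r=ak+b$ with $0<b<k$, so $d_2-d_1=ad_1+b$ means $d_2=(a+1)d_1+b$, and after rewriting $d_2=a'd_1+b$ with $0<b<d_1$ (so $a'=a+1$) the theorem's bound $\frac{q^{(a+2)k}-1}{q^k-1}$ becomes $\frac{q^{(a'+1)d_1}-1}{q^{d_1}-1}$, exactly the form stated in the corollary. No deeper work is needed, since tightness is not asserted at the corollary level.
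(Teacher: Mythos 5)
Your proposal is correct and follows exactly the route the paper intends: the corollary is stated as an immediate specialization of Theorem~\ref{thm_length_of_tail_improved} with $k=d_1$, $r=d_2-d_1$, $n=u_1$, where the required $q^{d_2-d_1}$-divisibility of the set of $d_1$-dimensional elements is the content of the exercise preceding the theorem (which you prove correctly via the point count based on Lemma~\ref{lemma_connection vsp_full} and the identity $[d_1]_q-[d_1-1]_q=q^{d_1-1}$). Your handling of the index shift $a\mapsto a+1$ between the theorem's decomposition of $r$ and the corollary's decomposition of $d_2$ is also exactly right.
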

\begin{nexample}
  \label{example_2_div_binary_lines}
  Let $\cN$ be a non-empty $2^1$-divisible set of lines in $\PG(v-1,2)$. From Theorem~\ref{thm_length_of_tail_improved}.(i) we conclude $\#\cN\ge 5$ and 
  Theorem~\ref{thm_length_of_tail_improved}.(ii) gives $\#\cN\ge 6$ is $\#\cN\equiv 0\pmod 2$. A $2$-spread of $\PG(3,2)$ and a vector space 
  partition of type $3^1 2^8$ of $\PG(4,2)$ give examples for $\#\cN\in\{5,8\}$. For $n\in\{6,7,9\}$ there exist projective $4^{1/2}$-divisible 
  codes of length $n$ over $\F_4$. Concatenation with a two-dimensional simplex code gives examples with $\#\cN\in\{6,7,9\}$. By combining these 
  examples we can attain all $\#\cN\ge 5$. 
\end{nexample} 
\begin{nexercise}
  \label{exercise_4_div_binary_lines}
  Show that non-empty $2^1$-divisible sets of lines over $\F_2$ exist iff $$\#\cN\in\{5,10,15,16,17\}\cup\N_{\ge 20}.$$ \textit{Hint:} For the constructive direction 
  consider projective $4^1$-divisible codes of suitable lengths over $\F_4$. For the non-existence results consider the possible lengths of projective $2^3$-divisible 
  binary codes. 
\end{nexercise}
\begin{trailer}{A generalization of $q^r$-divisible sets of points}Choosing $k=1$ in Definition~\ref{def_divisible_sets_of_points_generalization} we end up 
with $q^r$-divisible sets of points, so that we have some kind of a generalization for $k>1$. So, we can again ask for the sets of possible 
cardinalities depending of $k$, $q$ and $r$. Since $k$-spreads and the construction of Lemma~\ref{lemma_mrd_vsp} give examples with coprime cardinalities a 
finite Frobenius-type number and only finitely many non-feasible cardinalities exist.
\end{trailer}
\begin{question}{Research problem}Characterize the possible cardinalities of $q^r$-divisible sets of $k$-spaces over $\F_q$ for some small parameters 
$q$, $r$, and $k$, i.e.\ continue Example~\ref{example_2_div_binary_lines} and Exercise~\ref{exercise_4_div_binary_lines} 
\end{question}

Now we show the non-existence of a vector space partition of type $4^1 3^{13}2^7$ of $\PG(7,2)$. Note that a $2$-divisible set of seven lines $\cN$ over $\F_2$ indeed 
exists. However, we can deduce some information on those sets. To this end let $\cM$ be a corresponding spanning set of $21$ points in $\PG(k-1,2)$. Since 
$\#\cN>\#\left\{N\in\cN\,:\, N\le H\right\}\equiv \#\cN\equiv 1\pmod 2$ for every hyperplane $H\in\cH$, we have $\cM(H)\in\{9,13,17\}$, i.e., $\cM$ is 
$4$-divisible. However $\cM(H)=17$ is impossible, since removing five lines would give a $2$-divisible set of $2$ points over $\F_2$, which does not exist. 
With this, the corresponding standard equations are given by
\begin{eqnarray*}
  a_9+a_{13} &=& [k]_2\\
  9a_9+13a_{13} &=& 21\cdot[k-1]_2\\
  36a_9+78a_{13} &=& 210\cdot[k-2]_2
\end{eqnarray*} 
and have the unique solution $k=6$, $a_9=42$, $a_{13}=12$.\footnote{According to \cite{bouyukliev2006projective} there are exactly 
$2$ two-weight codes with these parameters having automorphism groups of order $336$ and $1008$. The latter two-weight 
code is given as Example SU2 in \cite{calderbank1986geometry}, i.e., the union of three non-intersecting planes in $\mathbb{F}_2^6$, 
with automorphism group $\operatorname{GL}\times S_3$. The other example has automorphism group $\operatorname{GL}\ltimes \mathbb{Z}_2$ 
and arises by concatenating the $[7,3,\{4,6\}]_4$ code, see \cite[Example RT1]{calderbank1986geometry}, with a two-dimensional binary simplex code. 
In both cases there are $B_3=28$ lines and the $21$ points admit several possibilities as a partition of $7$ lines.} Now consider the hyperplane $H$ 
that contains all seven lines. Since $H$ intersects solids in at least $[3]_2$ and planes in at least $[2]_2$ points, the intersection of $H$ 
with a vector space partition of type $4^1 3^{13} 2^7$ consists of at least
\begin{equation}
  1\cdot[3]_2+13\cdot[2]_2+7\cdot[2]_2=67>63=[6]_1
\end{equation}
points, which is a contradiction. We remark that a vector space partition of type $3^{70}2^7$ of $\PG(8,2)$ indeed exists, see \cite{el2008partitions}. 
A similar proof can be found in \cite[Proposition 6.2]{el2009partitions}. We can also write down equations similar to the standard equations directly 
for the elements of $\cN$ or work with the counts of different types of vector space partitions in hyperplanes, see e.g.\ \cite{heden2012survey,lehmann2012some} and the 
subsequent example.
\begin{nexercise}
  \label{exercise_2_div_set_of_six_lines}
  Show that a $2$-divisible set $\cN$ of six lines over $\F_2$ has dimension $\dim(\cN)=6$.
\end{nexercise}
\begin{nexample}
  \label{ex_no_vsp_of_type_4_13_3_6_2_6}  
  Assume that $\cV$ is a vector space partition of type $4^{13}3^62^6$ of $\PG(7,2)$. Let $4^r 3^s 2^t 1^u$ be the type of the intersection of $\cV$ with a hyperplane 
  $H\in \cH$, so that $r+s+t+u=13+6+6=25$ and $r[4]_2+s[3]_2+t[2]_2+u[1]_2=[7]_2$. Since two solids in $H\cong \PG(6,2)$ have to intersect non-trivially, we have $r\in\{0,1\}$. 
  Since there is no $2$-divisible set of $n\le 2$ points over $\F_2$, we have $u=0$ or $u\ge 3$. This gives the following possible types for $H$:
  \begin{itemize}
    \item[(a)] $4^1 3^{13}2^51^6$;\,\,\,
    (b) $4^1 3^{12}2^81^4$;\,\,\,
    (c) $4^0 3^{16}2^3 1^6$;\,\,\,
    (d) $4^0 3^{15}2^6 1^4$;\,\,\,
    (e) $4^0 3^{13}2^{12}1^0$.
  \end{itemize}  
  Let us denote their corresponding counts by $a$, $b$, $c$, $d$, and $e$, respectively. Counting the number of hyperplanes gives $a+b+c+d+e=[8]_2=255$. Counting the number 
  of solid-hyperplane incidences gives $a+b=13\cdot [4]_2=195$, so that $c+d+e=60$. From Exercise~\ref{exercise_2_div_set_of_six_lines} we know that the six lines to form a 
  $6$-dimensional subspace, so that $e=[2]_1=3$,\footnote{In the original proof of \cite[Theorem 7]{el2010partitions} the estimation $e\le 7$ was used.} i.e., $c+d=57$. Counting 
  pairs of planes gives ${3\choose 2}c+d={6\choose 2}\cdot [2]_2$, i.e., $3c+d=45$, so that $c$ has to be negative.
\end{nexample}

%%\begin{question}{Research problem}Do vector space partitions of type $4^{13}3^{6}2^{4}1^{6}$ or $4^{14}3^42^41^5$ exists in $\PG(7,2)$?
%%\end{question}

\begin{trailer}{Generalizations of vector space partitions}The notion of a vector space partition can be generalized in several directions. A $\lambda$-fold 
vector space partition of $\PG(v-1,q)$ is a (multi-) set of subspaces such that every point $P\in\cP$ is covered exactly $\lambda$ times, see e.g.\ \cite{el2011lambda}. Here 
non-existence results for $q^r$-divisible multisets of points over $\F_q$ with point multiplicity at most $\lambda$ can be utilized, 
cf.\ Subsection~\ref{subsec_subspace_packings_coverings}. Another variant considers set of subspaces such that every $t$-subspace is covered exactly once, see 
\cite{heinlein2019generalized}. Also here divisible codes can be used for non-existence results for those vector space $t$-partitions. We remark that the upper bound $A_2(8,6;4)<289$ 
for constant-dimension codes is also implied by a non-existence result of certain vector space $2$-partitions \cite{heinlein2019generalized}. Vector space partitions of affine 
spaces have been considered in \cite{bamberg2022affine}. Another variant are multispreads \cite{krotov2022multifold}.    
\end{trailer}

\section{Partitions of $q^r$-divisible sets of points}
\label{subsec_partitions_q_r_divisible_sets}
Following up the idea of the tail, see e.g.\ Theorem~\ref{thm_length_of_tail}, and the supertail of a vector space partition in the context of divisible codes, we say that a 
set of points $\cM$ over $\F_q$ \emph{admits a partition}, or is \emph{partitionable}, \emph{of type} $k^{m_k}\dots 1^{m_1}$ if there exists a set $\cS$ of $m_i$ $i$-subspaces 
for $1\le i\le k$, such that $\cM=\sum_{S\in\cS}\chi_S$, i.e., the set of points of the elements of $\cS$ coincides with $\cM$. We are mainly interested in $q^r$-divisible partionable 
sets of points where $r\ge k$. In this context, the non-existence of a vector space partition of type $4^1 3^{14} 2^3 1^5$ of $\PG(7,2)$ follows from the non-existence of a 
$2^2$-divisible set of points with partition type $2^3 1^5$, i.e., in general no vector space partition over $\F_2$ can end with $2^3 1^5$. The classification of $q^r$-divisible 
partition types of the form $1^{m_1}$ over $\F_q$ corresponds to the classification of the possible lengths of $q^r$-divisible sets of points over $\F_q$, see 
Section~\ref{sec_lengths_projective_q_r}. 

Let us consider $2^2$-divisible sets of points of partition type $2^{m_2}1^{m_1}$ over $\F_2$ for a moment. In Example~\ref{example_2_div_binary_lines} we have 
shown that type $2^{m_2}1^0$ is feasible iff $m_2\ge 5$ (or the trivial case $m_2=0$). Since there are no $2^1$-divisible sets of cardinality $1$ or $2$ over $\F_2$, the 
types $2^{m_2}1^1$ and $2^{m_2}1^2$ are infeasible in general. 
\begin{nexercise}
  Let $\cM$ be a $q^r$-divisible multiset of points over $\F_q$. Show that if a $k$-space $S$ is completely contained in $\supp(\cM)$, then $\cM-\chi_S$ is 
  $q^{\min\{r,k-1\}}$-divisible.
\end{nexercise}
\begin{nexercise}
  Let $0\le j\le 5$. Show that $2^2$-divisible set of points over $\F_2$ of partition type $2^{m_2}1^{3j}$ exist iff $m_2\ge 5-j$. 
\end{nexercise}
Using Lemma~\ref{lemma_picture_q_2_r_2} we can easily conclude that type $2^{m_2}1^4$ is impossible for $m_2\in\{0,2,3\}$ while type $2^11^4$ is e.g.\ attained by a
vector space partition of type $2^11^4$ of $\PG(2,2)$, so that we have constructions for all $m_2\ge 6$. For $m_2\in\{4,5\}$ it remains to be checked if the $2^2$-divisible 
sets of $16$ or $19$ points can contain sufficiently many disjoint lines. Of course this amounts to a finite computation.
\begin{nexercise}
  Show that a $2^2$-divisible set of points over $\F_2$ of partition type $2^{m_2}1^{m_1}$ exist for all $m_2\in\N_0$, $m_1\in\N_{\ge 29}$. 
  \textit{Hint:} Use Lemma~\ref{lemma_picture_q_2_r_2} and Example~\ref{example_2_div_binary_lines}.
\end{nexercise}   
\begin{question}{Research problem}Complete the classification of the possible parameters $\left(m_2,m_1\right)$ of a $2^2$-divisible set of points over $\F_2$ of partition 
type $2^{m_2}1^{m_1}$.
\end{question}
Of course, also other parameters are of interest and the general classification problem is widely open. Also the question of the representation of such results arises. Taking 
Lemma~\ref{lemma_picture_q_2_r_2} as given, we may summarize the presented knowledge on non-existence results of $2^2$-divisible sets of points over $\F_2$ of partition type $2^{m_2}1^{m_1}$  
by the forbidden types $2^1 1^5$ and $2^3 1^5$. For $3^2$-divisible sets of points over $\F_3$ of partition type $2^{m_2}1^{m_1}$ we mention that the forbidden pattern $2^4 1^{10}$ 
is implied by the forbidden pattern $2^3 1^{14}$. 

\chapter{Classification results for $q^r$-divisible codes}
\label{sec_classification_results}
Sets of points $\cM$ where each hyperplane has the same multiplicity can be easily classified using the standard equations:
\begin{nexercise}
  \label{exercise_one_weight}
  Let $\cM$ be a spanning set of points in $\PG(k-1,q)$, where $k\ge 2$, such that $\cM(H)=c\in\N$ for every hyperplane $H\in \cH$. Show that $c=[k-1]_q$, $\#\cM=[k]_q$, $\cM$ is 
  $q^{k-1}$-divisible, and $\cM=\chi_{\cP}$, i.e., $\cM$ is the full $k$-space.   
\end{nexercise} 
As a direct implication we obtain:
\begin{nlemma}
  \label{lemma_div_implies_subspace}
  Let $\cM$ be a $q^r$-divisible set of $[r+1]_q$ points, where $r\in \N$. Then $\cM=\chi_S$ for some $(r+1)$-space $S$, i.e., the corresponding points form an $(r+1)$-space.
\end{nlemma}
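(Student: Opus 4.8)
The plan is to reduce everything to Exercise~\ref{exercise_one_weight}, whose hypothesis is that every hyperplane meets $\cM$ in the same number of points. So the real work is to show that a $q^r$-divisible set $\cM$ of exactly $[r+1]_q$ points forces a constant hyperplane multiplicity, after which the dimension can simply be read off.

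First I would pass to a spanning representation: let $k=\dim(\cM)$ and regard $\cM$ as a spanning set of points in $\PG(k-1,q)$, which is harmless for divisibility by Lemma~\ref{lem:ambient_space_unwichtig}. Since $r\ge 1$ we have $\#\cM=[r+1]_q\ge [2]_q=q+1\ge 3$, so $\cM$ cannot be a single point and hence $k\ge 2$, exactly as Exercise~\ref{exercise_one_weight} requires. Because $\cM$ is spanning, no hyperplane can contain all of $\supp(\cM)$ (otherwise the span of $\cM$ would lie in a $(k-1)$-space), so $\cM(H)<\#\cM=[r+1]_q$ for every $H\in\cH$.

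Next I would combine divisibility with a tight residue argument. Writing $[r+1]_q=q^r+[r]_q$ and noting $[r]_q=\tfrac{q^r-1}{q-1}\le q^r-1<q^r$ for $q\ge 2$, one sees that $[r]_q$ is the unique integer in $[0,q^r)$ congruent to $\#\cM$ modulo $q^r$. For any hyperplane $H$, the $q^r$-divisibility condition (Equation~(\ref{eq_divisible_multiset})) gives $\cM(H)\equiv\#\cM\pmod{q^r}$, while the previous step gives $0\le \cM(H)<[r+1]_q=[r]_q+q^r$. The only value in this range congruent to $[r]_q$ modulo $q^r$ is $[r]_q$ itself, since the next candidate $[r]_q+q^r$ equals $[r+1]_q$ and is excluded. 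Hence $\cM(H)=[r]_q$ for every hyperplane $H$.

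Finally I would invoke Exercise~\ref{exercise_one_weight} with $c=[r]_q$: it yields $\#\cM=[k]_q$ and $\cM=\chi_{\cP}$, the full $k$-space of $\PG(k-1,q)$. Comparing cardinalities, $[k]_q=\#\cM=[r+1]_q$ forces $k=r+1$, so $\cM=\chi_S$ for the ambient $(r+1)$-space $S$, as claimed. The only delicate point is the residue computation, i.e.\ checking $[r]_q<q^r$ so that the congruence pins down $\cM(H)$ uniquely; everything else is a direct application of the already-established Exercise.
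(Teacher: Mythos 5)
Your proof is correct and follows exactly the route the paper intends: the paper derives Lemma~\ref{lemma_div_implies_subspace} as ``a direct implication'' of Exercise~\ref{exercise_one_weight}, and your residue argument (using $[r]_q<q^r$ and $\cM(H)<\#\cM$ for a spanning set to pin down the constant hyperplane multiplicity $\cM(H)=[r]_q$) is precisely the detail that implication requires.
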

If we consider multisets of points in Exercise~\ref{exercise_one_weight} instead sets of points, then we end up with $\lambda$-fold $k$-spaces, i.e., $\cM=\lambda\cdot\chi_S$, see 
\cite{bonisoli1984every}. So Lemma~\ref{lemma_div_implies_subspace} also applies to multisets of points. Point sets with two different hyperplane multiplicities have a very rich 
diversity, see Subsection~\ref{subsec_two_weight_codes}. However, we can generalize Lemma~\ref{lemma_div_implies_subspace} in a different direction.
\begin{nexercise}
  \label{exercise_holes_classification_two_times_r_flat}
  Let $\cM$ be a $q^r$-divisible set of $2[r+1]_q$ points over $\F_q$, where $r\in \N$ and $(q,r)\neq (2,1)$. Show that the standard equations have  
  a unique solution corresponding to the disjoint union of two $(r+1)$-spaces, so that especially $\dim(\cM)=2r+2$ and there are $a_{2[r]_q}=\left(q^{r+1}-1\right)\cdot [r+1]_q$ 
  hyperplanes of multiplicity $2[r]_q$ and $a_{[r]_q+[r+1]_q}=2[r+1]_q$ hyperplanes of multiplicity $[r]_q+[r+1]_q$.
\end{nexercise}
We remark that over $\F_2$ a $5$-dimensional projective base gives a spanning $2$-divisible set of $6$ points in $\PG(4,2)$. Given a set of points $\cM$ as in 
Exercise~\ref{exercise_holes_classification_two_times_r_flat}, we observe $\cM(H)\in\left\{2[r]_q,[r]_q+[r+1]_q\right\}$, i.e., there are just two different hyperplane 
multiplicities. If $\cM(S)>2[r]_q$ for an $(r+1)$-space $S$, then Equation~(\ref{eq_subspace_multiplicity}) yields 
\begin{eqnarray*}
  \cM(S) &=&\frac{1}{q^{v-s-1}}\cdot\left(\sum_{H\in\cH\,:\, S\le H} \cM(H)\,-\,[v-s-1]_q\cdot \#\cM\right) \\  
  &=& \frac{1}{q^r} \cdot\big( [r+1]_q\cdot\left([r]_q+[r+1]_q\right)-[r]_q\cdot 2[r+1]_q \big)=[r+1]_q, 
\end{eqnarray*}
i.e., $S\subseteq \supp(\cM)$ so that applying Lemma~\ref{lemma_div_implies_subspace} to $\cM-\chi_S$ gives that $\cM$ is the disjoint union of two $(r+1)$-spaces. For 
$r=1$ the existence of a line $L$ with $\cM(L)>2$ can be deduced from $B_3>0$, which is satisfied for a projective $q$-divisible $[2q+2,k]_q$-code with $(q,k)\neq (2,5)$, 
so that:
\begin{nlemma}
  \label{lemma_disjoint_union_two_lines}
  For $q\ge 3$ every $q$-divisible set of $2q+2$ points over $\F_q$ is the disjoint union of two lines. 
\end{nlemma}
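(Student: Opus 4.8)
The plan is to treat Lemma~\ref{lemma_disjoint_union_two_lines} as the case $r=1$ of the structural statement sketched just before it, and to supply the single missing ingredient there, namely a line carrying at least three points of $\cM$. Since $q\ge 3$ we have $(q,r)\neq(2,1)$ for $r=1$, so Exercise~\ref{exercise_holes_classification_two_times_r_flat} applies to the $q$-divisible set $\cM$ of $2[2]_q=2q+2$ points and forces $\dim(\cM)=4$ together with a unique spectrum: every plane $H$ of $\PG(3,q)$ satisfies $\cM(H)\in\{2,q+2\}$, with exactly $2(q+1)$ planes of the larger type. Translating through $\wt(\bc)=n-\cM(H)$, this says $C$ is a projective two-weight $[2q+2,4]_q$-code with nonzero weights $q$ and $2q$, and the spectrum pins down $A_q=2(q^2-1)$ and $A_{2q}=(q^2-1)^2$.

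The crux is then to produce a line $L$ with $\cM(L)\ge 3$. Because $C$ is projective we have $B_1=B_2=0$, so a strictly positive $B_3$ is precisely a weight-$3$ dual codeword, i.e.\ a nontrivial linear dependency among three distinct columns, hence three collinear points of $\cM$. Feeding the two-weight distribution into the MacWilliams equations~(\ref{eq_macwilliams_kr}) for $i=3$ gives, after simplification, $B_3=\tfrac{1}{3}q(q+1)(q-1)^2$, which is a positive integer for every $q\ge 3$; so the desired trisecant $L$ exists. (For odd $q$ this step could be bypassed, since a plane with $q+2$ points would be a $(q+2)$-cap of $\PG(2,q)$, impossible as the maximal cap size is $q+1$; but the $B_3$ computation handles all $q\ge 3$ uniformly, including the even case where hyperovals exist and the single-plane argument breaks down.)

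Next I would pin down $\cM(L)$ exactly. With $v=4$ and $s=2$ there are $\qbin{2}{1}{q}=q+1$ planes through $L$ by Exercise~\ref{exercise_counting_oversubspaces}, and each such plane $H\supseteq L$ satisfies $\cM(H)\ge\cM(L)\ge 3>2$, hence $\cM(H)=q+2$. Equation~(\ref{eq_subspace_multiplicity}) then gives
\[
  \cM(L)=\frac{1}{q}\Big((q+1)(q+2)-(2q+2)\Big)=\frac{q(q+1)}{q}=q+1=[2]_q,
\]
so $L\subseteq\supp(\cM)$, i.e.\ all $q+1$ points of $L$ lie in $\cM$. Finally, $\cM-\chi_L$ is a nonnegative $q$-divisible set (difference of $q$-divisible multisets with nonnegative result) of $2q+2-(q+1)=q+1=[2]_q$ points, so Lemma~\ref{lemma_div_implies_subspace} yields $\cM-\chi_L=\chi_{L'}$ for a line $L'$. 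Since $\cM=\chi_L+\chi_{L'}$ takes values in $\{0,1\}$, the lines $L$ and $L'$ must be disjoint, which is exactly the claim.

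The main obstacle is the second step: guaranteeing a trisecant uniformly in $q$. For even $q$ one cannot argue through a single rich plane, so I expect the honest MacWilliams evaluation of $B_3$ (equivalently, a global count of collinear triples) to be the real work, while the remaining steps are routine bookkeeping with the standard equations, the subspace-multiplicity formula~(\ref{eq_subspace_multiplicity}), and Lemma~\ref{lemma_div_implies_subspace}.
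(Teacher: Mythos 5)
Your proposal is correct and takes essentially the same route as the paper's own argument, which is laid out in the paragraph preceding the lemma: Exercise~\ref{exercise_holes_classification_two_times_r_flat} forces the spectrum and $\dim(\cM)=4$, a trisecant line is extracted from $B_3>0$, Equation~(\ref{eq_subspace_multiplicity}) upgrades it to a line fully contained in $\supp(\cM)$, and Lemma~\ref{lemma_div_implies_subspace} applied to $\cM-\chi_L$ yields the second, disjoint line. The only difference is that you substantiate the key fact $B_3>0$ by the explicit (and correct) evaluation $B_3=\tfrac{1}{3}q(q+1)(q-1)^2$ from the forced two-weight distribution, whereas the paper merely asserts $B_3>0$ for projective $q$-divisible $[2q+2,k]_q$-codes with $(q,k)\neq(2,5)$.
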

The large number of hyperplanes of multiplicity $2[r]_q$ concluded in Exercise~\ref{exercise_holes_classification_two_times_r_flat} can be used for an induction argument:
\begin{nexercise}
  \label{exercise_two_disjoint_subspaces_induction}
  Let $\cM$ be $q^r$-divisible set of $2[r+1]_q$ points over $\F_q$, where $r\in\N_{\ge 2}$, such that for each hyperplane $H\in \cH$ with $\cM(H)=2[r]_q$ the restricted 
  point set $\cM|_H$ is the disjoint union of two $r$-spaces.
  \begin{enumerate}
    \item[(a)] Show that each $s$-space $S$ with $1\le s\le r-1$ and $S\subseteq \supp(\cM)$ is contained in an $(s+1)$-space $S'$ with $S'\subseteq \supp(\cM)$.
    \item[(b)] Show that each $(r-1)$-space $F$ with $F\subseteq\supp(\cM)$ is contained in two different $r$-spaces $R_1$ and $R_2$ with $R_1,R_2\subseteq \supp(\cM)$.
    \item[(c)] Show that the $(r+1)$-dimensional space $X:=\left\langle R_1,R_2\right\rangle$ satisfies $\cM(H)\neq 2[r]_q$ for each hyperplane $H\in\cH$ containing $X$.
    \item[(d)] Show that $\cM(X)=[r+1]_q$ and that $\cM$ is the disjoint union of two $(r+1)$-spaces.
  \end{enumerate}  
\end{nexercise} 
A quick computer enumeration reveals that each $2^2$-divisible set of $14$ points over $\F_2$ is indeed the disjoint union of two planes,\footnote{A computer-free proof can 
roughly run as follows. First show that a $2$-divisible set of $6$ points over $\F_2$ is either the disjoint union of two lines or a $5$-dimensional projective base that does not contain 
a full line. Let $\cM$ be a $2^2$-divisible set of $14$ points over $\F_2$. From the MacWilliams equations for the corresponding code we conclude $B_3>0$ so that there exists a 
line $L$ with $L\subseteq\supp(\cM)$. For this line $L$ we can proceed as in Exercise~\ref{exercise_two_disjoint_subspaces_induction} since every hyperplane $H$ containing $L$ 
with multiplicity $\cM(H)=6$ is the disjoint union of two lines.} so that we obtain:
\begin{nlemma}
  \label{lemma_disjoint_union_two_subspaces}
  Let $\cM$ be $q^r$-divisible set of $2[r+1]_q$ points over $\F_q$, where $r\in \N$ and $(q,r)\neq (2,1)$. Then, $\cM$ is the disjoint union of two $(r+1)$-spaces.
\end{nlemma}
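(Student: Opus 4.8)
The plan is to argue by induction on $r$, reducing the whole statement to the single geometric task of finding one full $(r+1)$-space inside $\supp(\cM)$. Suppose such an $X$ has been found, i.e.\ $\chi_X\le\cM$ (which holds as soon as $X\subseteq\supp(\cM)$, because $\cM$ has point multiplicities in $\{0,1\}$). The set of $[r+1]_q$ points of $X$ is $q^r$-divisible by Lemma~\ref{lem:qr-div-basic}, and divisibility is inherited by differences of multisets, since $(\cM-\chi_X)(H)\equiv\#\cM-\#X=\#(\cM-\chi_X)\pmod{q^r}$ for every hyperplane $H$. Thus $\cM-\chi_X$ is a $q^r$-divisible \emph{set} (it remains a genuine set because $X\subseteq\supp(\cM)$) of cardinality $2[r+1]_q-[r+1]_q=[r+1]_q$. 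By Lemma~\ref{lemma_div_implies_subspace} it equals $\chi_{X'}$ for some $(r+1)$-space $X'$, and $X$ and $X'$ are point-disjoint because $\cM$ is a set. Hence $\cM=\chi_X+\chi_{X'}$, the disjoint union of two $(r+1)$-spaces. So everything reduces to producing $X$.

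For the base of the induction I would invoke Lemma~\ref{lemma_disjoint_union_two_lines} for $r=1$, $q\ge 3$; this is precisely the asserted statement in that case, and the pair $(q,r)=(2,1)$ is excluded exactly because a $5$-dimensional projective base then provides a counterexample. I would additionally treat $(q,r)=(2,2)$ as a second base case, settled by the direct argument sketched in the footnote preceding the lemma (over $\F_2$ a $2^1$-divisible set of $6$ points need not split into two lines, so the inductive step for $(2,2)$ cannot appeal to the false $(2,1)$ statement and must be handled separately).

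For the inductive step, fix $r\ge 2$ with $(q,r)\neq(2,2)$ and assume the lemma over $\F_q$ for $r-1$, which is legitimate since then $(q,r-1)\neq(2,1)$. By Exercise~\ref{exercise_holes_classification_two_times_r_flat} the standard equations force $\dim(\cM)=2r+2$ together with exactly two hyperplane multiplicities: the \emph{light} value $2[r]_q$, attained $(q^{r+1}-1)[r+1]_q$ times, and the \emph{heavy} value $[r]_q+[r+1]_q$, attained $2[r+1]_q$ times. Restricting $\cM$ to a light hyperplane $H$ yields, by Lemma~\ref{lemma_heritable} with $j=1$, a $q^{r-1}$-divisible set of $2[r]_q=2[(r-1)+1]_q$ points inside $H$; by the induction hypothesis this restriction is a disjoint union of two $r$-spaces. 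This is exactly the hypothesis of Exercise~\ref{exercise_two_disjoint_subspaces_induction}, whose conclusion~(d) produces the full $(r+1)$-space $X\subseteq\supp(\cM)$ and, by the reduction above, finishes the proof.

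The real substance lies in Exercise~\ref{exercise_two_disjoint_subspaces_induction}, and its hardest ingredient is part~(b): from an $(r-1)$-space $F\subseteq\supp(\cM)$ one must exhibit \emph{two distinct} $r$-spaces through $F$ lying in $\supp(\cM)$. The enabling elementary fact is that a subspace contained in a union $A\cup B$ of two point-disjoint subspaces must lie entirely in $A$ or in $B$; otherwise a line joining a point of $A\cap S$ to a point of $B\cap S$ would have to be covered by the two single points $A\cap L$ and $B\cap L$, impossible since $q+1\ge 3$. Applied inside the light hyperplanes through $F$, this lets one lift support-subspaces one dimension at a time (part~(a)) and, through a count exploiting the abundance $(q^{r+1}-1)[r+1]_q$ of light hyperplanes, forces two distinct $r$-spaces $R_1\neq R_2$ in $\supp(\cM)$ through $F$. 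Then $X=\langle R_1,R_2\rangle$ has dimension $r+1$ because $R_1,R_2$ share the $(r-1)$-space $F$, and the same union fact rules out any light hyperplane on $X$ (part~(c)); hence all $[r+1]_q$ hyperplanes through $X$ are heavy, and Equation~(\ref{eq_subspace_multiplicity}) evaluates to $\cM(X)=\tfrac{1}{q^r}\bigl([r+1]_q([r]_q+[r+1]_q)-[r]_q\cdot 2[r+1]_q\bigr)=[r+1]_q$, i.e.\ $X\subseteq\supp(\cM)$ (part~(d)). I expect the counting in part~(b) to be the principal obstacle, with the isolated $(q,r)=(2,2)$ base case the only other point requiring separate care.
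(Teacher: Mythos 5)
Your proposal is correct and takes essentially the same route as the paper: the same reduction via Lemma~\ref{lemma_div_implies_subspace} applied to $\cM-\chi_X$, the same base cases (Lemma~\ref{lemma_disjoint_union_two_lines} for $r=1$, $q\ge 3$, and a separate treatment of $(q,r)=(2,2)$), and the same inductive step in which light hyperplanes of multiplicity $2[r]_q$ restrict, via Lemma~\ref{lemma_heritable} and the induction hypothesis, to disjoint unions of two $r$-spaces, after which Exercise~\ref{exercise_two_disjoint_subspaces_induction} produces the $(r+1)$-space $X\subseteq\supp(\cM)$. The only cosmetic difference is that the paper settles the $(2,2)$ case by a quick computer enumeration, citing the computer-free argument you invoke only in a footnote.
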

Using results on blocking sets in $\PG(2,q)$, actually a much stronger classification result for minihypers of a certain type was proven in \cite{govaerts2003particular}.
\begin{ntheorem}{(\cite[Theorem 13]{govaerts2003particular})}
  \label{thm_union_subspaces_blocking_set}
  Let $\cM$ be a $q^r$-divisible multiset of cardinality $\delta[r+1]_q$ over $\F_q$. If $q>2$ and $1\le \delta<\varepsilon$, where $q+\varepsilon$ is the 
  size of the smallest non-trivial blocking sets in $\PG(2,q)$, then there exists $(r+1)$-spaces $S_1,\dots,S_\delta$ such that
  $$
    \cM=\sum_{i=1}^\delta \chi_{S_i},  
  $$ 
  i.e., $\cM$ is the sum of $(r+1)$-spaces.
\end{ntheorem}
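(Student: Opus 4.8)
The plan is to reduce the whole statement to a single geometric assertion and then peel off subspaces one at a time. Suppose we can prove the

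\emph{Claim:} if $\cM$ is $q^r$-divisible of cardinality $\delta[r+1]_q$ with $1\le\delta<\varepsilon$ and $q>2$, then $\supp(\cM)$ contains a full $(r+1)$-space $S$.

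Granting the Claim, the theorem follows by induction on $\delta$. Indeed, $\chi_S$ is $q^r$-divisible by Example~\ref{ex_simplex_code}, and since $S\subseteq\supp(\cM)$ the difference $\cM':=\cM-\chi_S$ is a genuine (nonnegative) multiset of points. It is $q^r$-divisible, because the difference of two multisets satisfying the divisibility condition~(\ref{eq_divisible_multiset}) again satisfies it, and its cardinality is $(\delta-1)[r+1]_q$ with $\delta-1<\varepsilon$. The induction hypothesis writes $\cM'$ as a sum of $(\delta-1)$ many $(r+1)$-spaces, whence $\cM=\chi_S+\cM'$ is a sum of $\delta$ such spaces; the base case $\delta=1$ is exactly Lemma~\ref{lemma_div_implies_subspace}. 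So everything rests on the Claim.

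I would prove the Claim by induction on $r$. By Lemma~\ref{lem:ambient_space_unwichtig} we may assume $\cM$ spans its ambient space $\PG(k-1,q)$. Using $[r+1]_q\equiv[r]_q\pmod{q^r}$, every hyperplane satisfies $\cM(H)\equiv\delta[r]_q\pmod{q^r}$. A short calculation combining Lemma~\ref{lemma_average} with the bound $\delta<\varepsilon\le q$ (so that $\delta[r]_q<q^r$) shows that the minimal hyperplane multiplicity is attained by some $H_0$ with $\cM(H_0)=\delta[r]_q$. By Lemma~\ref{lemma_heritable} the restriction $\cM|_{H_0}$ is a $q^{r-1}$-divisible multiset of cardinality $\delta[r]_q$ with $\delta<\varepsilon$, so the induction hypothesis for $r-1$ expresses $\cM|_{H_0}$ as a sum of $r$-spaces $R_1,\dots,R_\delta$ lying in $H_0$. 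In particular $\supp(\cM)$ contains at least one $r$-space $R:=R_1$.

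It then remains to extend $R$ to an $(r+1)$-space inside the support, and this fitting step is the main obstacle, as it is here that the blocking-set hypothesis $\delta<\varepsilon$ is genuinely needed. The idea is to pass to the quotient $\PG(k-1,q)/R\cong\PG(k-r-1,q)$, in which the $(r+1)$-spaces through $R$ become points; the $q^r$-divisibility of $\cM$ forces the set of those $(r+1)$-spaces meeting $\supp(\cM)$ beyond $R$ itself to project, in a suitable plane section of the quotient, onto a set meeting every line, i.e.\ a blocking set of $\PG(2,q)$ whose size is controlled by $\delta$. Since $\delta<\varepsilon$, this blocking set is below the threshold $q+\varepsilon$ for a nontrivial one, hence is a single line; its preimage is the desired $(r+1)$-space $S\supseteq R$, and one certifies $S\subseteq\supp(\cM)$ through the subspace-multiplicity identity~(\ref{eq_subspace_multiplicity}), exactly as in the computation preceding Lemma~\ref{lemma_disjoint_union_two_subspaces}. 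The base case $r=1$ of the induction is itself the planar instance of this phenomenon: a $q$-divisible multiset of $\delta(q+1)$ points with $\delta<\varepsilon$ must contain a full line, for otherwise its support is a nontrivial blocking set of size smaller than $q+\varepsilon$, which is impossible. Here one can locate a candidate line via $B_3>0$ from the standard equations of Lemma~\ref{lemma_standard_equations}, just as in Lemma~\ref{lemma_disjoint_union_two_lines} and Exercise~\ref{exercise_two_disjoint_subspaces_induction} for $\delta=2$, but the blocking-set bound is what upgrades a line of high multiplicity to one fully contained in the support. The delicate verifications — that the projected configuration is genuinely a blocking set, and that a trivial (line) blocking set lifts back to a subspace lying in $\supp(\cM)$ — are precisely the technical heart supplied by the blocking-set results of \cite{govaerts2003particular}.
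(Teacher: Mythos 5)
First, a point of order: the paper itself does not prove this statement at all -- it is quoted directly from \cite{govaerts2003particular} -- so your proposal has to stand on its own merits. Its outer architecture does stand: if $S\subseteq\supp(\cM)$, then $\cM-\chi_S$ is a nonnegative $q^r$-divisible multiset of cardinality $(\delta-1)[r+1]_q$, so induction on $\delta$ correctly reduces everything to your Claim; likewise the extraction of an $r$-space $R\subseteq\supp(\cM)$ works, since Lemma~\ref{lemma_average} together with the congruence $\cM(H)\equiv\delta[r]_q\pmod{q^r}$ and $\delta\le q-1$ forces a hyperplane $H_0$ with $\cM(H_0)=\delta[r]_q$ exactly, to which Lemma~\ref{lemma_heritable} and the induction hypothesis for $r-1$ apply. (Even here, your parenthetical $\varepsilon\le q$ is only justified by the paper's Theorem~\ref{thm_blocking_set_cardinality_non_trivial_pg_2_q} in cases (a) and (b); for the remaining field sizes it is true but needs a construction, e.g.\ R\'edei-type blocking sets.)

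The genuine gap lies in both places where blocking sets enter. For $\delta\ge 2$ the inequality you lean on is simply false: the support of a $q$-divisible multiset of $\delta(q+1)$ points can have $\delta(q+1)$ points, and $\delta(q+1)\ge q+\varepsilon$ already for $\delta=2$ -- for prime $q$, Theorem~\ref{thm_blocking_set_cardinality_non_trivial_pg_2_q}(a) gives $q+\varepsilon=\tfrac{3}{2}(q+1)<2(q+1)$ -- so \emph{``otherwise its support is a nontrivial blocking set of size smaller than $q+\varepsilon$''} is a non sequitur; the definition of $\varepsilon$ constrains only $1$-fold blocking sets. What divisibility actually yields is that every line meets $\cM$ with \emph{weight} at least $\delta$, i.e.\ $\cM$ is a weighted $\delta$-fold blocking set of minimal size $\delta q+\delta<\delta q+\varepsilon$, and the assertion that such a set must contain a line is a substantive theorem about \emph{multiple} blocking sets (of Blokhuis--Storme--Sz{\H{o}}nyi type \cite{blokhuis1999lacunary}, extended to weighted sets); that theorem is precisely the technical content of \cite{govaerts2003particular}. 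Since that is the very result you are asked to prove, deferring the ``delicate verifications'' to the blocking-set results of \cite{govaerts2003particular} makes the argument circular rather than merely incomplete. There is also a dimension slip in your lifting step: modulo $R$ the $(r+1)$-spaces through $R$ become \emph{points}, so a line of the quotient pulls back to an $(r+2)$-space containing $R$, not to the desired $(r+1)$-space $S$, and no argument is given for why the relevant point set of the quotient blocks every line of a plane. In short, you identified the correct skeleton (peel off subspaces; reduce to a planar statement about multiple blocking sets), but the planar statement your proof needs is misquoted, and with the version you actually state the proof fails for every $\delta\ge 2$.
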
 
\begin{ntheorem}
  \label{thm_blocking_set_cardinality_non_trivial_pg_2_q}
  If $q+\varepsilon$ is the size of the smallest non-trivial blocking sets in $\PG(2,q)$, then
  \begin{enumerate}
    \item[(a)] $\varepsilon=(q+3)/2$ if $q>2$ is prime \cite{blokhuis1994size};
    \item[(b)] $\varepsilon=\sqrt{q}+1$ if $q$ is square \cite{bruen1970baer};
    \item[(c)] $\varepsilon\ge c_p q^{2/3}+1$, where $c_2=c_3=2^{-1/3}$ and $c_p=1$ for $p>3$, if $q=p^h$ with $h>2$ and $h\equiv 1\pmod 2$ \cite{blokhuis1999lacunary}.
  \end{enumerate}  
\end{ntheorem}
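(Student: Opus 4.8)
Each of the three statements is an independent and classical result about the geometry of $\PG(2,q)$, so the plan is to treat them separately; in every case one may first pass to a \emph{minimal} non-trivial blocking set, i.e.\ a point set $B$ meeting every line but containing no line and such that deleting any point destroys the blocking property. Minimality is no loss, since a smallest non-trivial blocking set is automatically minimal, and it is equivalent to the statement that through every point of $B$ there passes a \emph{tangent} line (one meeting $B$ in that single point).

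For part (b) I would run the classical tangent--secant count. Write $|B|=q+1+n$, so that the claim is $n\ge\sqrt q$. Fixing an external point $Q\notin B$ and comparing $\sum_{\ell\ni Q}|\ell\cap B|=|B|$ with the number $q+1$ of lines through $Q$ shows that $Q$ lies on at least $q+1-n$ tangents. Counting incidences between the $q^2-n$ external points and the tangent lines, each of which carries exactly $q$ external points, gives a lower bound on the number $T$ of tangents; counting $T$ instead through the points of $B$, and using that no line meets $B$ in more than $q$ points, gives an upper bound. Comparing the two bounds yields a quadratic inequality in $n$ that forces $n\ge\sqrt q$, i.e.\ $\varepsilon\ge\sqrt q+1$. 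Carrying the equality case through the same counts forces every line to meet $B$ in $1$ or $\sqrt q+1$ points; the secants then turn $B$ into a subplane of order $\sqrt q$, that is, a Baer subplane, which is the extremal configuration. This is the only part admitting a short self-contained proof, and its one delicate point is the bookkeeping of the intersection numbers.

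For part (a) the appropriate machinery is the polynomial method. After coordinatizing so that $B$ is in \emph{R\'edei position} with respect to the line at infinity, attach to the affine part $\{(a_i,b_i)\}$ of $B$ the R\'edei polynomial $R(X,Y)=\prod_i\bigl(X+a_iY-b_i\bigr)$ and study, for each fixed slope $y$, the one-variable polynomial $R(X,y)$. The blocking property forces each $R(X,y)$ to be \emph{fully reducible} over $\F_q$ and \emph{lacunary}, i.e.\ to vanish in a long interval of consecutive lower-degree coefficients. Over the prime field $\F_p$, R\'edei's theorem on lacunary polynomials (and its refinement) rigidly constrains such polynomials, and translating this rigidity back into a point count yields the sharp bound $|B|\ge 3(p+1)/2$, that is $\varepsilon=(q+3)/2$.

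Part (c) uses the same R\'edei-polynomial setup, but for $q=p^h$ with $h>2$ odd the prime-field lacunary theorem is unavailable; instead one invokes the lacunary-polynomial analysis of Blokhuis--Ball--Brouwer--Storme--Sz\H{o}nyi together with Sz\H{o}nyi's stability method to extract the exponent $2/3$ and the constants $c_p$ in $\varepsilon\ge c_pq^{2/3}+1$. The principal obstacle, and the reason these statements are quoted rather than proved in a survey on divisible codes, lies precisely here: turning the combinatorial condition \emph{fully reducible and missing many degrees} into a structural description of $R(X,y)$, and then into a sharp numerical bound, is exactly the hard content of \cite{blokhuis1994size} for the prime case and of \cite{blokhuis1999lacunary} for the general case, while (b) is \cite{bruen1970baer}. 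Accordingly, in the text I would give the counting argument for (b) and refer to these sources for (a) and (c).
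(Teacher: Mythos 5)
The paper offers no proof of this theorem at all: it is quoted as a package of three classical results with the corresponding citations, which is exactly how you treat parts (a) and (c), so for those your proposal coincides with the paper. The issue is your claim that part (b) admits a short self-contained proof via the tangent count you sketch; that count does not close. Your lower bound on the number $T$ of tangents is fine: with $\#B=q+1+n$, every external point lies on at least $q+1-n$ tangents and every tangent carries $q$ external points, so $Tq\ge (q^2-n)(q+1-n)$. But the upper bound you propose, counting $T$ through the points of $B$ \emph{using that no line meets $B$ in more than $q$ points}, is far too weak: for $P\in B$ it only gives $t_P\le q+1-(q+n)/(q-1)$, hence $T\lesssim (q+1+n)q$, and the two estimates are compatible for every $n\ge 1$, so no inequality forcing $n\ge\sqrt{q}$ ever appears. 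Even if you upgrade to the correct (and necessary) lemma that every line meets $B$ in at most $n+1$ points (pick a point of $\ell\setminus B$; the other $q$ lines through it must each meet $B$ off $\ell$), the same comparison only yields roughly $n\ge\sqrt{q/2}$, which misses the sharp constant --- and sharpness matters here, since equality is attained by Baer subplanes.

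Bruen's actual argument is different, and it is worth knowing because it is precisely the ``quadratic condition'' philosophy of this paper, cf.\ Lemma~\ref{lemma_hyperplane_types_arithmetic_progression}. Write $m_\ell=\#(\ell\cap B)$; blocking gives $m_\ell\ge 1$ and the lemma above gives $m_\ell\le n+1$ for every line $\ell$. The three standard equations (Lemma~\ref{lemma_standard_equations}) for the set $B$ read $\sum_\ell 1=q^2+q+1$, $\sum_\ell m_\ell=(q+1+n)(q+1)$, $\sum_\ell m_\ell(m_\ell-1)=(q+1+n)(q+n)$, and a direct computation combining them yields
\[
  0\;\le\;\sum_{\ell}\,(m_\ell-1)\bigl(n+1-m_\ell\bigr)\;=\;q\left(n^2-q\right),
\]
so $n\ge\sqrt{q}$ at once; equality forces $m_\ell\in\{1,\sqrt{q}+1\}$ for every line, from which the Baer subplane structure follows. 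Two further small points: parts (a) and (b) assert equalities, so lower bounds alone do not suffice; you name the Baer subplane as the extremal example in (b), but for (a) the matching construction (the projective triangle of size $3(q+1)/2$) goes unmentioned, and your R\'edei-polynomial discussion only addresses the lower bound.
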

Note that Theorem~\ref{thm_union_subspaces_blocking_set} does not apply to $q=2$ and Lemma~\ref{lemma_disjoint_union_two_subspaces} applies to $q=2$ for $r\ge 2$ only. 
Moreover, Lemma~\ref{lemma_disjoint_union_two_subspaces} is tight in the sense that $2^2$-divisible sets of $21$ points over $\F_2$ that are not the union of three planes 
indeed exist. From e.g.\ \cite{projective_divisible_binary_codes} we know that the number of non-isomorphic such sets is given by $2$, $7$, $9$, and $6$ for dimensions 
$6$, $7$, $8$, and $9$, respectively. So, there is even a projective $[21,6,\{8,12\}]_2$ two-weight code which is not given by \cite[Example SU2]{calderbank1986geometry}, 
as there is just one isomorphism type, see:
\begin{nexercise}
  Let $\cM$ be the set of points of three pairwise disjoint $r$-spaces. Sow that $2r\le \dim(\cM)\le 3r$ and that there is a unique isomorphism type for each possible dimension.
\end{nexercise}
The second two-weight code has a nice geometric description. By the Klein correspondence there exist two disjoint planes $\pi_1$, $\pi_2$ in the Klein quadric $\mathsf{Q}^+(5,q)$. 
If $\cK$ is the set of points of the Klein quadric in $\PG(5,2)$, then $\cK-\chi_{\pi_1}-\chi_{\pi_2}$ is a $2^2$-divisible set of $21$ points.
\begin{nexercise}
  \label{exercise_klein_quadric}
  Show that the points of the Klein quadric form a $2^2$-divisible set $\cK$ of $35$ points over $\F_2$. If $\pi_1$ and $\pi_2$ are two disjoint planes contained in the 
  support of $\cK$, then $\cK':=\cK-\chi_{\pi_1}-\chi_{\pi_2}$ is a $2^2$-divisible set of $21$ points that can be partitioned into $7$ lines and only attains two different 
  hyperplane multiplicities.
\end{nexercise}  

From e.g.\ \cite{projective_divisible_binary_codes} we also know that the number of non-isomorphic $2^3$-divisible sets of $45$ points over $\F_2$ is given by 
$2$, $1$, $1$, $1$, and $1$ for dimensions $8\le k\le  12$. Thus, beside the examples that arise as the disjoint union of three solids, there is a unique other projective $[45,8,\{16,24\}]_2$ 
two-weight code, which is e.g.\ described in \cite[Theorem 4.1]{haemers1999binary}.\footnote{The residual $[21,7]_2$-codes correspond to the construction directly following 
Example~\ref{example_complement_of_parabolic_quadric}.} So, Lemma~\ref{lemma_disjoint_union_two_subspaces} is also tight for $q=2$, $r=3$. However, the 
number of cases, which are not given as the union of three disjoint $(r+1)$-spaces, seem to decrease. And indeed, enumerating all projective $2^4$-divisible codes of length $93$ over 
$\F_2$, with \texttt{LinCode} \cite{bouyukliev2020computer}, yields that all examples arise as the disjoint union of three $5$-spaces.    
\begin{nexercise}
  Show that the weights of a projective $q^r$-divisible code of length $\delta[r+1]_q$ over $\F_q$ are contained in $\left\{iq^r\,:\, 1\le i\le \delta\right\}$ for all $r,\delta\in\N$.
\end{nexercise}  
\begin{nexercise}
  Show by induction that every $2^r$-divisible set of $3[r+1]_2$ points over $\F_2$ is the disjoint union of three $(r+1)$-spaces for all $r\ge 4$.
\end{nexercise}
\begin{nconjecture}
  There exists a function $f\colon \N\to\N$ such that every $2^r$-divisible set of $f(r)\cdot [r+1]_2$ points over $\F_2$ is the disjoint union of $f(r)$ $(r+1)$-spaces 
  and $\lim_{r\to\infty} f(r)=\infty$.
\end{nconjecture}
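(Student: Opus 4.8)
The plan is to reduce the conjecture to a single \emph{extraction lemma} and to prove that lemma by a quantitative counting argument whose slack grows with $r$. Write $\delta=f(r)$ and call a $2^r$-divisible set $\cM$ of $\delta[r+1]_2$ points \emph{decomposable} if it is a disjoint union of $\delta$ $(r+1)$-spaces. The outer induction is a ``peeling'' induction on $\delta$: suppose one can show that $\supp(\cM)$ always contains a full $(r+1)$-space $S$. Since $\chi_S$ is $2^r$-divisible (Example~\ref{ex_simplex_code}) and the divisibility condition~(\ref{eq_divisible_multiset}) is additive, $\cM-\chi_S$ is again $2^r$-divisible; because $\cM$ has point multiplicities in $\{0,1\}$ and $S\subseteq\supp(\cM)$, the difference is a genuine \emph{set} of $(\delta-1)[r+1]_2$ points. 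By the induction hypothesis it decomposes into $\delta-1$ pairwise disjoint $(r+1)$-spaces, and these are automatically disjoint from $S$ (a common point would have multiplicity $2$ in $\cM$), so $\cM$ is decomposable. The base cases $\delta\in\{0,1,2\}$ are precisely Lemma~\ref{lemma_div_implies_subspace} and Lemma~\ref{lemma_disjoint_union_two_subspaces}, and the $\delta=3$, $r\ge 4$ case is the exercise immediately above.

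Everything therefore rests on the extraction lemma: for $\delta\le f(r)$, the support of $\cM$ contains an $(r+1)$-space. I would prove this by the greedy ascent used for $\delta=2$ in Exercise~\ref{exercise_two_disjoint_subspaces_induction}, building a chain $U_0\subset U_1\subset\cdots\subset U_r$ of subspaces in $\supp(\cM)$ with $\dim U_s=s$. The nontrivial step is to pass from $U_s$ to $U_{s+1}$: over $\F_2$ one must find a coset $Q+U_s$ all of whose $2^s$ points lie in $\supp(\cM)$. The engine is the standard equations of Lemma~\ref{lemma_standard_equations} together with two facts: first, by the ``$\delta$ weights'' exercise above the hyperplane multiplicities of $\cM$ lie in the short list $\{\delta[r+1]_2-i2^r:1\le i\le\delta\}$, so Lemma~\ref{lemma_average} forces an abundance of hyperplanes meeting $\cM$ in the minimal generic multiplicity $\delta[r]_2$; second, heritability (Lemma~\ref{lemma_heritable}) applied to the quotient by $U_s$ shows that the induced configuration ``around'' $U_s$ is itself $2^{r-s}$-divisible, which one combines with the subspace-multiplicity formula~(\ref{eq_subspace_multiplicity}) to count the $(s+1)$-spaces through $U_s$ that are fully contained in $\supp(\cM)$. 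When $\delta$ is small relative to the number of generic hyperplanes available, this count is positive, yielding $U_{s+1}$; iterating to $s=r$ produces the desired $(r+1)$-space.

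I expect two genuine obstacles. The first is structural: the blocking-set machinery of Theorem~\ref{thm_union_subspaces_blocking_set}, which resolves the analogous statement for $q>2$, \emph{fails at $q=2$} because the Fano plane has no usable blocking-set threshold. This is exactly why non-decomposable examples persist for small $r$ --- the two-weight codes of length $21$ and $45$, and the parabolic-quadric construction following Example~\ref{example_complement_of_parabolic_quadric} --- and the content of the conjecture is that such configurations require ``too many'' points to survive as $r$ grows. The second, and harder, obstacle is the direction of the induction: restricting to a hyperplane keeps $\delta$ fixed while lowering $r$, i.e.\ it moves toward \emph{harder} instances, so the growth $f(r)\to\infty$ cannot be extracted from such a descent. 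It must instead come from the increasing combinatorial room in the ascent counts: each step down in codimension multiplies the available slack in the standard equations by a factor of $q=2$, so a threshold $f(r)$ of slowly-growing shape should emerge, but turning this heuristic into a rigorous estimate uniform in $s$ and $r$ is the analytic heart of the proof. Finally, the base anchors of the ascent (such as $r=4$ for $\delta=3$, currently known only through enumeration with \texttt{LinCode}) would need a computer-free treatment, or a self-contained bootstrap, to make the whole induction unconditional; one promising tool here is the cubic condition of Corollary~\ref{cor_implication_fourth_mac_williams}, which can excise intermediate weights and pin the spectrum to that of a disjoint union.
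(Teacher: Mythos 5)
You are attempting to prove a statement that the paper records as an open \emph{conjecture}: there is no proof in the paper to compare against, and your proposal does not close it. The one sound component is the peeling reduction: if $S$ is an $(r+1)$-space with $S\subseteq\supp(\cM)$, then $\cM-\chi_S$ is again $2^r$-divisible (this is the paper's exercise on subtracting $\chi_S$, with $\min\{r,k-1\}=r$), it is a genuine set since $\cM$ has point multiplicities in $\{0,1\}$, and disjointness of the peeled spaces is automatic. This correctly reduces the conjecture to your ``extraction lemma'': every $2^r$-divisible set of $\delta[r+1]_2$ points with $\delta\le f(r)$ contains a full $(r+1)$-space in its support. But that extraction statement \emph{is} the conjecture, and your argument for it is not a proof.

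The concrete gap: the machinery you propose for extraction --- the standard equations of Lemma~\ref{lemma_standard_equations}, the averaging Lemma~\ref{lemma_average}, heritability (Lemma~\ref{lemma_heritable}), and the subspace-multiplicity formula --- cannot suffice at the level of generality you invoke it, because all of these conditions are satisfied by the known non-decomposable configurations. For $r=2$ the Klein-quadric example of Exercise~\ref{exercise_klein_quadric}, and for $r=3$ the $[45,8,\{21,29\}]_2$ code of \cite{haemers1999binary}, are $2^r$-divisible sets of exactly $3[r+1]_2$ points whose supports contain no $(r+1)$-space (if they did, peeling plus Lemma~\ref{lemma_disjoint_union_two_subspaces} would force them to be disjoint unions of three $(r+1)$-spaces, which they are not). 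Hence any count carried out with the standard equations and averaging admits non-decomposable solutions, so it cannot certify that your coset count from $U_s$ to $U_{s+1}$ is positive; in the same vein, Lemma~\ref{lemma_average} produces a single hyperplane of below-average multiplicity, not the ``abundance'' of generic hyperplanes your ascent needs, and the ascent template of Exercise~\ref{exercise_two_disjoint_subspaces_induction} only works there because the hyperplane restrictions are already classified --- exactly what is unavailable here, since (as you note) restriction keeps $\delta$ fixed while decreasing $r$, moving toward harder instances. What is missing is an argument, sensitive to $r$, showing that non-decomposable configurations of cardinality $\delta[r+1]_2$ require $\delta$ to grow with $r$; you name this the ``analytic heart'' yourself, but naming it is not supplying it, and your base anchors additionally rest on computer enumeration. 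The proposal is therefore a correct reduction plus a restatement of the difficulty, not a proof; the conjecture remains open.
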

A few remarks on the case $q=8$ are also contained in Section~\ref{sec_extendability_results}.

Let $n$ be the cardinality of a $q^r$-divisible set of points over $\F_q$, where $r,n\in\N$. So far we have studied the isomorphism types for $n=\delta[r+1]_q$ over $\F_q$, 
where $r$ and $\delta$ are positive integers, which includes the smallest possible cardinality attained at $\delta=1$. From Theorem~\ref{thm_exclusion_q_r} we known 
that for $n\le rq^{r+1}$ all possible values of $n$ can be written as $a[r+1]_q+bq^{r+1}$ with $a,b\in\N_0$. So, the next interesting case is cardinality $n=q^{r+1}$, which will 
be treated in the subsequent subsection. For $b\ge 1\wedge a+b\ge 2$ the situation seems to be more complicated. For $q=2$ the cases $(a,b)=(1,1)$ and $(0,2)$ correspond to 
$2^r$-divisible sets of $2^{r+2}-1$ or $2^{r+2}$ points over $\F_2$. Examples that are not the union of subspaces and affine subspaces are obtained in 
Example~\ref{example_cone_constructions} via the cone construction.
 
\begin{question}{Research problem}Classify all $2^r$-divisible sets of $2^{r+2}-1$ or $2^{r+2}$ points over $\F_2$.
\end{question}
The case of $r=2$ and cardinality $15$ is solved in \cite{honold2019classification}.

\begin{nexercise}
  Consider the $\left[2^{k-1} + l \left(2^k -1\right) , k, (2l + 1)2^{k-2}\right]_2$ code $C$, where $k\ge 1$ and  $l\ge 0$ are integers. Let  
  $\cM$ be the corresponding multiset of points. Show that $\cM(P)\in \{l,l+1\}$ for all $P\in\cP$ and that the $2^{k-1}-1$ points 
  with multiplicity $l$ form a hyperplane in $\F_2^k$. 
\end{nexercise}
%% Email in Codes vom 30.01.2019

\section{The (generalized) cylinder conjecture}
\label{subsec_cylinder_conjecture}
Applying the cone construction with a base $\cB$ of arbitrary $q$ points and an $r$-space as center $X$ gives a $q^r$-divisible set of $q^{r+1}$ points 
over $\F_q$, see (\ref{cone_construction_1}). For $r=1$ these sets consist of $q$ affine lines meeting in a common point, which is not part of the point set, so that 
one can speak of a \emph{cylinder}. For general $r\ge 1$ we also speak of cylinders, or more precisely \emph{$r$-cylinders} in these cases. As an abbreviation, we say that 
the cylinder conjecture is true for $(v,r,q)$ if each $q^r$-divisible set $\cM$ of $q^{r+1}$ points in $\PG(v-1,q)$ with $\dim(\cM)=v$ is a cylinder.  
The origin of the cylinder conjecture was the idea of classifying all sets of $p^2$ points in $\AG(3,p)$ determining few directions, see \cite{ball2008graph}, 
and is a continuation of similar results in $\AG(2,p)$ starting in \cite{lovasz1981remarks,redei1970luckenhafte}. The assumption on the number of directions was weakened 
to the property that every hyperplane contains $0\pmod p$ of the points in \cite{de2019cylinder}. There the authors proved the cylinder conjecture for $(4,1,2)$ and 
$(4,1,3)$. A relaxed version of the cylinder conjecture for $(4,1,p)$ was proven for all primes $p\le 13$, see \cite{de2019cylinder} for the details.

Our first observation is that the standard equations can be used to deduce the existence of a hyperplane with multiplicity zero.
\begin{nexercise}
  \label{exercise_empty_hyperplane}
  Let $\cM$ be a $q^r$-divisible set of $q^{r+1}$ points in $\PG(v-1,q)$ with $\dim(\cM)=v$ and $r\in \N$. Use the standard equations to show $a_0\ge \frac{q^{v-r-1}-1}{q-1}\ge 1$.
\end{nexercise}    
In other words, it makes no difference if we consider sets of points in $\AG(v-1,q)$ or $\PG(v-1,q)$. However, the (or at least a) assumption on the maximum point 
multiplicity is essential, since $q$ arbitrary points of multiplicity $q$ each also form a $q$-divisible multiset of cardinality $q^2$ that is not a cylinder unless the $q$ points 
form an affine line. If $v\le r+1$ then no set of $q^{r+1}$ points does exist in $\PG(v-1,q)$ at all. For dimension $v=r+2$ the existence of the empty plane leaves 
the affine $(r+2)$-space as the unique possibility, so that the cylinder conjecture is true for $(r+2,r,q)$.

In \cite[Corollary 20]{kurz2020generalization} it was shown that the cylinder conjecture is true for $(v,r,q)$ iff it is true for $(v-r+1,1,q)$, i.e., it suffices to study the case $r=1$. 
Dimension $v=4$ is indeed the smallest case where things start to get non-trivial. In \cite{kurz2020generalization} the cylinder conjecture was shown to be true for 
$(4,1,q)$ for all $q\le 7$ and some partial results for $q=8$ were obtained. If the field size is not a prime and $v>r+3$ is chosen suitably, then cylinders over subfields certify that 
the cylinder conjecture is wrong for $(v,r,q)$, see~\cite{kurz2020generalization} for the details.  

To sum up, the classification of $q^r$-divisible sets of $q^{r+1}$ points is quite a challenge, while there is a precise conjecture for field sizes that are prime.
\begin{nconjecture}
  The cylinder conjecture is true for all $(v,r,p)$, where $p$ is a prime.
\end{nconjecture}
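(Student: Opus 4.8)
The plan is to treat this as a reduction to a single hard analytic core and to be explicit about why the prime hypothesis is indispensable. First I would invoke \cite[Corollary~20]{kurz2020generalization} to collapse the two-parameter family of instances to the case $r=1$: since the cylinder conjecture holds for $(v,r,p)$ if and only if it holds for $(v-r+1,1,p)$, it suffices to establish the statement for all triples $(w,1,p)$ with $w\ge 3$ and $p$ prime. The case $w=3$ is already settled by the empty-hyperplane argument (the $(r+2,r,q)$ case discussed around Exercise~\ref{exercise_empty_hyperplane}), so the genuine content is $w\ge 4$. For such a $w$, let $\cM$ be a $p$-divisible set of $p^{2}$ points in $\PG(w-1,p)$ with $\dim(\cM)=w$. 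Exercise~\ref{exercise_empty_hyperplane} supplies a hyperplane $H_\infty$ with $\cM(H_\infty)=0$, so I may regard $\cM$ as a spanning set of $p^{2}$ points in $\AG(w-1,p)$ for which every affine hyperplane carries $\equiv 0 \pmod p$ points. The task then becomes to show that such an $\cM$ is a union of $p$ affine lines sharing a common point at infinity.

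The engine for the base case $w=4$ is the polynomial (R\'edei--lacunary) method over prime fields, exactly the circle of ideas behind \cite{ball2008graph,de2019cylinder} and, in the planar ancestor, \cite{lovasz1981remarks,redei1970luckenhafte}. Concretely, I would fix an affine frame, encode $\cM\subset\AG(3,p)$ by a R\'edei-type polynomial, and translate the hypothesis that every plane meets $\cM$ in a multiple of $p$ points into the assertion that, for each direction, the associated one-variable polynomial is lacunary of the shape forced by R\'edei's theorem and its refinements by Blokhuis. The desired dichotomy is that either these polynomials are all $p$-th powers, which pins the point set to a cylinder, or $\cM$ determines so many directions that a counting contradiction ensues. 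This is precisely the step where $p$ being prime is essential: over non-prime fields the subfield cylinders of \cite{kurz2020generalization} are genuine counterexamples, and correspondingly the lacunary-polynomial classification loses its rigidity.

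For $w>4$ I would run an induction that descends the dimension. Starting from a spanning $\cM\subset\AG(w-1,p)$, I would choose a projection onto a hyperplane (equivalently, quotient by a well-chosen point at infinity) and argue that the image is again a $p$-divisible configuration of $p^{2}$ points to which the inductive hypothesis applies, then lift the resulting cylinder structure back through the projection. The delicate bookkeeping is that a projection can identify points, turning the set into a multiset; one must select the centre of projection so that multiplicities stay bounded by $1$ and the induced hyperplane-intersection numbers remain $\equiv 0\pmod p$, using the $p$-divisibility of $\cM$ and the empty hyperplane $H_\infty$ to control the fibres.

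The hard part will be the base case $(4,1,p)$ for \emph{all} primes $p$ simultaneously. At present the method delivers the full conclusion only for small $p$ (with a relaxed version known up to $p\le 13$, see \cite{de2019cylinder}), because the number of possible directions and the admissible factorizations of the lacunary polynomials are not yet controlled uniformly in $p$; removing this dependence appears to require an input strictly stronger than the current R\'edei--Blokhuis toolbox. A secondary, more technical obstacle is guaranteeing in the induction that the projection preserves both the set property and the exact mod-$p$ intersection pattern, rather than merely an inequality.
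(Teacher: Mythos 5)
This statement is labeled a conjecture in the paper and is genuinely open: the paper offers no proof, so there is no "paper approach" to compare against, and the only question is whether your argument closes the problem. It does not. Your opening reduction is sound and is exactly the one the paper records: by \cite[Corollary 20]{kurz2020generalization} it suffices to treat the triples $(w,1,p)$, and by Exercise~\ref{exercise_empty_hyperplane} one may work affinely. But from there on the proposal is a research programme rather than a proof, and you concede the decisive point yourself: the base case $(4,1,p)$ for \emph{all} primes $p$ is precisely the open core of the conjecture. The R\'edei--Blokhuis lacunary-polynomial machinery you invoke is exactly what \cite{ball2008graph}, \cite{de2019cylinder} and \cite{kurz2020generalization} already exploit, and it is currently known to give the full statement only for $q\le 7$ and a relaxed statement for primes $p\le 13$. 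A proof whose base case is the unsolved problem, accompanied by the admission that the method works "only for small $p$", establishes nothing.

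The induction on the dimension $w$ is not merely incomplete; as designed it cannot be correct, because it nowhere uses that $p$ is prime. Over $\F_4$ the base case $(4,1,4)$ is \emph{true} (it is covered by the result for $q\le 7$ in \cite{kurz2020generalization}), yet the cylinder conjecture is \emph{false} for $(v,1,4)$ for suitably chosen $v>4$ by the subfield-cylinder counterexamples discussed in the paper. Hence any field-agnostic argument that propagates the cylinder property from dimension $w-1$ up to dimension $w$ would transfer the truth of $(4,1,4)$ to all $(v,1,4)$ and is thereby refuted; primality must enter the inductive step itself, and your projection scheme gives it no way to do so. The concrete failure points are visible in your own sketch: (a) a centre of projection lying on no secant of $\cM$ need not be found by counting --- already for $w=5$ the at most ${{p^2}\choose 2}$ secants can cover on the order of $p^5/2$ points, exceeding the roughly $p^4$ points of $\PG(4,p)$; and (b) even when the projection is injective and its image is, by induction, a cylinder, this does not lift: the preimage of an affine line of the image is only forced to be $p$ points of $\cM$ lying in a plane through the centre, not a line, and nothing makes the resulting $p$-point sets collinear or concurrent at a common point at infinity. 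So the cylinder structure of the quotient does not determine $\cM$, and the conjecture remains exactly as open after your argument as before it.
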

We remark that e.g.\ the cylinder conjecture is wrong for $(4,1,8)$. Abbreviating the elements 
$c_0+c_1x+c_2x^2\in \F_2[x]/\left(x^3+x+1\right)$ as $c_0+2c_1+4c_2$, a generator matrix is given by:
$$ 
  \begin{pmatrix}
  1111111111111111111111111111111111111111111111111111111000001000\\
  0000000000000111111222222333333444444555555666666777777111110100\\
  1233345556777123457045666012345123555015557024567012456444570010\\
  1001261467356476012037146574610666012023561651000101163134000001\\
  \end{pmatrix}.
$$
\begin{question}{Research problem}Can this specific counter example be explained from a geometric point of view and generalized to other field sizes?\end{question}
\begin{nexercise}
  Let $\cM$ be an $8$-divisible set of $64$ points in $\PG(3,8)$ that is not a cylinder. Show $a_0=29$, $a_8=528$, and $a_{16}=28$ for the spectrum. Moreover, the 
  total number $b_i$ of $i$-lines is given by $b_0=1753$, $b_1=1536$, $b_2=1344$, $b_4=112$ and in a $16$-plane the distribution has to be $b_0=13$, $b_2=48$, $b_4=12$.
\end{nexercise}

\chapter{Extendability results}
\label{sec_extendability_results}
$t$-spreads in $\PG(st-1,q)$ exists for all $s\in\N_{\ge 2}$, $t \in\N_{\ge 1}$, see Section~\ref{sec_partial_spreads}. If a partial $t$-spread in $\PG(st-1,q)$ has cardinality 
$[st]_q/[t]_q-\delta$ then we say that it has \emph{deficiency} $\delta$. The corresponding set of holes, i.e., the set of $\delta[t]_q$ uncovered points, is $q^{t-1}$-divisible,   
so that the results stated in Section~\ref{sec_classification_results} can be used to show the extendability to a $t$-spread. More precisely, if $\delta$ is small enough such that 
every $q^{t-1}$-divisible set of $\delta[t]_q$ points is the disjoint union of $\delta$ $t$-spaces, then each partial $t$-spread in $\PG(st-1,q)$ with deficiency $\delta$ can be 
extended to a $t$-spread. For the other direction, e.g.\ the existence of a maximal partial line spread of size $45$ in $\PG(3,7)$, see \cite{heden2001maximal}, shows the existence 
of a $7$-divisible set of $40$ points over $\F_7$ that is not the disjoint union of five lines.

\medskip

\begin{warning}{The non-existence of maximal partial $t$-spreads does not necessarily imply classification results for $\mathbf{q^{t-1}}$-divisible sets of points.} 
Note that in general the non-existence of a maximal partial $t$-spread in $\PG(st-1,q)$ of deficiency $\delta$ does not imply that every $q^{t-1}$-divisible set of 
$\delta[t]_q$ points over $\F_q$ contains a $t$-space in its support, see Example~\ref{example_excluded_hole_set}. 
\end{warning}
\begin{nexercise}
  Consider the non-existence proof of maximal partial line spreads of deficiency $5$ and $6$ in $\PG(3,8)$ given in \cite{del2004minimal}. Do the details of the proof 
  imply that each $8$-divisible set of $9\delta$ points over $\F_8$ is the disjoint union of $\delta$ lines for all $\delta\le 6$? (A maximal partial line spread of 
  deficiency $7$ is indeed known.) 
\end{nexercise}

In principal we can ask the question of extendability also for partial $t$-spreads in $\PG(v-1,q)$ where the dimension $v$ of the ambient space is not divisible by $t$. More generally,
we consider a vector space partition $\cV$ of type $t^{m_t}\dots s^{m_s}1^{m_1}$ of $\PG(v-1,q)$, see Section~\ref{sec_vector_space_partitions}. Due to Lemma~\ref{lemma_connection vsp} 
the set of holes $\cH$, i.e., the set of $1$-dimensional elements, is $q^{s-1}$-divisible. We call $\cV$ \emph{$k$-extendable} if the support of $\cH$ contains a full $k$-space and 
\emph{extendable} if it is $k$-extendable for some $k\ge 2$. As an example we refer to a hypothetical vector space partition $\cV$ of type $4^1 3^{13} 2^6 1^3$ in $\PG(6,2)$ discussed in 
Section~\ref{sec_vector_space_partitions}. It would be $2$-extendable. However, the non-existence of a vector space partition of type $4^1 3^{13} 2^7$ implies the non-existence 
of $\cV$. So, the question arises for which cardinalities $n$ every $q^r$-divisible set of $n$ points over $\F_q$ contains a $k$-space in its support. Certainly, the most restricted and 
interesting case is $k=r+1$. In this context we mention Example~\ref{example_cone_constructions} and the construction of $2^r$-divisible sets of $2^{r+2}-1$ points over $\F_2$ not 
containing an $(r+1)$-space in its support. So in principal, maximal partial $(r+1)$-spreads in $\PG(2r,2)$ with size one less than the maximum possible cardinality $A_2(2r+1,2r+2;r+1)$, 
see \cite[Theorem 4.1]{beutelspacher1975partial} and Proposition~\ref{prop_partial_spread_lower_bound}, may exist. They do indeed exist for $r=2$ as shown in 
\cite{honold2019classification}. What about $r>2$ or general field sizes $q>2$?    

As partial spreads are just a special case of constant-dimension codes, see Subsection~\ref{subsec_subspace_codes}, one may wonder whether results on 
the structure of divisible multisets of points can be used to deduce extendability results for constant-dimension codes. To our knowledge, the first extenability results for 
constant-dimension codes, that are not partial spreads, was shown in \cite{nakic2016extendability}.
\begin{ntheorem}{(\cite[Theorem 4.2]{nakic2016extendability})}
  Let $\cC$ be a set of $\qbin{v}{t}{q}/\qbin{k}{t}{q}-\delta$ $k$-spaces in $\PG(v-1,q)$ such that every $t$-space is contained in at least one element of $\cC$, where $1< t<k<v$. 
  If $v-i\equiv 0\pmod{k-i}$ for $i=0,1,\dots,t-1$ and $\delta\le (q+1)/2$, then $\cC$ can be extended by $\delta$ $k$-spaces without destroying the property on the covering of 
  the $t$-spaces. 
\end{ntheorem}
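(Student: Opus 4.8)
The plan is to prove the statement by induction on $t$, taking as base case $t=1$ the extendability of partial $k$-spreads of small deficiency, which the paper has already reduced to a classification of divisible point sets. Throughout I read the hypothesis in the deficiency form that the count $\qbin{v}{t}{q}/\qbin{k}{t}{q}-\delta$ makes meaningful: $\cC$ is a partial $q$-Steiner system, i.e.\ every $t$-space lies in at most one member of $\cC$, and the $\delta$ additional $k$-spaces are to be built from the $t$-spaces that are so far uncovered. For $t=1$ the congruence $v-0\equiv 0\pmod{k-0}$ gives $k\mid v$, so $\qbin{v}{1}{q}/\qbin{k}{1}{q}=[v]_q/[k]_q$ and $\cC$ is a partial $k$-spread of deficiency $\delta$. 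Its set of holes has cardinality $\delta[k]_q$ and is $q^{k-1}$-divisible by Lemma~\ref{lem:union_subspaces} together with Lemma~\ref{lemma_t_complement}. Since $\delta\le (q+1)/2<\varepsilon$ via Theorem~\ref{thm_blocking_set_cardinality_non_trivial_pg_2_q}, Theorem~\ref{thm_union_subspaces_blocking_set} forces the holes to be a disjoint union of $\delta$ $k$-spaces; these are precisely the blocks completing $\cC$ to a $k$-spread, which settles the base case.

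For the inductive step with $t\ge 2$ I would fix a point $P\in\cP$ and pass to the quotient geometry $\PG(v-1,q)/P\cong\PG(v-2,q)$. Every member of $\cC$ through $P$ projects to a $(k-1)$-space, and since $P\subseteq T\subseteq K$ forces $P\subseteq K$, the members meeting a $t$-space through $P$ all pass through $P$; hence the images form a partial $q$-Steiner system with parameters $(v-1,t-1,k-1)$ of deficiency at most $\delta$. The divisibility hypotheses restrict correctly: $(v-1)-i\equiv 0\pmod{(k-1)-i}$ for $0\le i\le t-2$ is exactly the subfamily $v-j\equiv 0\pmod{k-j}$ for $1\le j\le t-1$ of the original conditions, while the bound $\delta\le(q+1)/2$ is unchanged. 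By the induction hypothesis the local system at $P$ extends, and because $\delta$ stays below the blocking-set threshold at every stage, the classification underlying the base case yields that this local completion is \emph{unique}. Thus for each point $P$ the $t$-spaces through $P$ that $\cC$ misses are exactly the $t$-spaces through $P$ lying in a uniquely determined family of $k$-spaces through $P$.

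The main obstacle is then to glue these pointwise-unique completions into a single global family of $\delta$ pairwise-compatible $k$-spaces of $\PG(v-1,q)$. My plan is to characterise a missed $t$-space $T$ by the property that its image is missed in the residue at every point $P\subseteq T$, and to use the full set of congruences $v-i\equiv 0\pmod{k-i}$ for $0\le i\le t-1$ simultaneously: these make the number of missed $t$-spaces through each flag of subspaces integral, which rigidifies the reconstructed spaces and forces the local completions to agree on their overlaps. Concretely one shows that the missed $t$-spaces constitute a configuration all of whose point-residues are single subspaces, and then that such a configuration is the set of all $t$-subspaces of exactly $\delta$ $k$-spaces whose pairwise intersections have dimension less than $t$; adjoining these $\delta$ spaces preserves the defining property. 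I expect this consistency argument — propagating uniqueness from the residues back to $\PG(v-1,q)$ and verifying mutual compatibility of the $\delta$ reconstructed blocks — to be the genuinely hard part, with $\delta\le(q+1)/2$ entering precisely to keep every residual deficiency under the blocking-set bound so that uniqueness, and hence gluing, is available at each step.
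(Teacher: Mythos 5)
You have read the hypothesis correctly (with \emph{at least} the statement is unsatisfiable for $\delta\ge 1$ by double counting, so the packing reading is the intended one), and the congruence bookkeeping in your quotient step is right. Note first that the paper does not prove this theorem at all — it quotes it from Nakić–Storme — and the closest it comes is the trailer on the $2$-analog of the Fano plane directly below the statement, which works out the case $q=2$, $v=7$, $k=3$, $t=2$, $\delta=1$. Measured against that model argument, your induction has a circularity at its core. You claim the projected system at each point $P$ has deficiency at most $\delta$, but nothing in your setup gives this: writing $\delta_P$ for the deficiency of the quotient system at $P$ and $S'=\qbin{v-1}{t-1}{q}/\qbin{k-1}{t-1}{q}$ for the quotient Steiner number, double counting incidences ($\sum_{P}\#\cC_P=\#\cC\cdot[k]_q$ together with $\qbin{v}{t}{q}/\qbin{k}{t}{q}\cdot[k]_q=S'\cdot[v]_q$) only yields $\sum_{P\in\cP}\delta_P=\delta\,[k]_q$. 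A priori a single point can carry deficiency far above $(q+1)/2$, so the induction hypothesis cannot be invoked there; the pointwise bound $\delta_P\le\delta$ is exactly what the structure theorem you are trying to prove would deliver. The paper's trailer reverses the logic: it bounds $\#\cC_P\le S'$ by the quotient packing bound alone (no induction), forms the $S'$-complement multiset $\overline{\cM}$ of cardinality $\delta[k]_q$, which is $q^{k-1}$-divisible by Lemma~\ref{lem:union_subspaces} and Lemma~\ref{lemma_t_complement}, classifies $\overline{\cM}$ globally, and only then reads off the local structure; your "gluing" step — which you yourself defer as the hard part — is obtained there from the identity $\sum_{T\text{ uncovered}}\chi_T=\qbin{k-1}{t-1}{q}\cdot\overline{\cM}$, which forces every $t$-space inside the reconstructed blocks to be uncovered. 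In short, the induction on $t$ is both unnecessary and unavailable, and the step carrying the actual content is missing.

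There is a second concrete failure in your base case. Your inequality $\delta\le(q+1)/2<\varepsilon$ is false for square prime powers $q\ge 9$: by Theorem~\ref{thm_blocking_set_cardinality_non_trivial_pg_2_q}(b) one has $\varepsilon=\sqrt{q}+1$, so for $q=9$ we get $(q+1)/2=5>4=\varepsilon$, and the theorem's hypothesis still allows $\delta\in\{4,5\}$ (e.g.\ $q=9$, $t=2$, $k=3$, $v=9$). The classification you rely on genuinely breaks there: the Baer subgeometry $\PG(3,3)\subset\PG(3,9)$ is a $9$-divisible set of $40=4\,[2]_9$ points containing no line of $\PG(3,9)$, so "$q^{k-1}$-divisible of cardinality $\delta[k]_q$" does not imply "disjoint union of $\delta$ $k$-spaces" once $\delta\ge\varepsilon$, and Theorem~\ref{thm_union_subspaces_blocking_set} cannot be cited. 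The same objection hits every level of your induction for square $q$ (and for $q=27$, etc.), since you reuse that classification at each stage; whether the cited theorem itself is delicate for such $q$ is a question about the source, but your argument certainly does not establish it. A minor point in the same direction: Theorem~\ref{thm_union_subspaces_blocking_set} assumes $q>2$, so for $q=2$ (where $\delta\le 3/2$ forces $\delta=1$) you must instead invoke Lemma~\ref{lemma_div_implies_subspace}, exactly as the Fano-plane trailer does via Bonisoli's theorem.
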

\begin{ncorollary}{(\cite[Corollary 4.3]{nakic2016extendability})}
  \label{cor_nakic_storme}
  Let $1< t<k<v$ be integers with $v-i\equiv 0\pmod{k-i}$ for $i=0,1,\dots,t-1$. Then, either $A_q(v,2k-2t+2;k)=\qbin{v}{t}{q}/\qbin{k}{t}{q}$ or 
  $A_q(v,2k-2t+2;k)<\qbin{v}{t}{q}/\qbin{k}{t}{q}-(q+1)/2$. 
\end{ncorollary}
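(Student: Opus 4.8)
The plan is to deduce the statement directly from the preceding extendability result (\cite[Theorem~4.2]{nakic2016extendability}), whose substantive content --- forcing the small, uncovered configurations of $t$-spaces to assemble into full $k$-spaces --- I will take as given, and then to do only the short bookkeeping that converts an extendability statement into the stated dichotomy for the packing number.

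First I would translate the coding-theoretic quantity into a purely geometric packing condition. A set $\cC$ of $k$-spaces in $\PG(v-1,q)$, read as a constant-dimension code, has minimum subspace distance $d$ iff every $(k-\tfrac{d}{2}+1)$-space lies in at most one member of $\cC$, cf.\ the opening of Section~\ref{subsec_subspace_packings_coverings}. For $d=2k-2t+2$ one has $k-\tfrac{d}{2}+1=t$, so $A_q(v,2k-2t+2;k)$ is exactly the maximum number of $k$-spaces no two of which share a common $t$-space. Counting $t$-spaces then gives the trivial bound
\[
  A_q(v,2k-2t+2;k)\le M:=\qbin{v}{t}{q}/\qbin{k}{t}{q},
\]
since each $k$-space contains $\qbin{k}{t}{q}$ of the $\qbin{v}{t}{q}$ $t$-spaces, and in a packing distinct $k$-spaces consume pairwise disjoint families of $t$-spaces. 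Here the divisibility hypotheses $v-i\equiv 0\pmod{k-i}$ for $0\le i\le t-1$ guarantee that $M$ (and the analogous residual quotients entering Theorem~4.2) is an integer, which is precisely what makes a \emph{perfect} packing --- a $q$-Steiner system $S_q(t,k,v)$ of $M$ blocks --- the extremal target.

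Next I would run the extension argument. Since there are only finitely many $k$-spaces, a packing of maximum size exists; fix one, say $\cC$, and write $\#\cC=M-\delta$ with $\delta:=M-A_q(v,2k-2t+2;k)\in\mathbb{N}_0$. Suppose for contradiction that $0<\delta\le(q+1)/2$. Then $\cC$ meets every hypothesis of the preceding Theorem, so it can be augmented by $\delta$ further $k$-spaces while keeping every $t$-space in at most one member; this yields a packing of size $M>\#\cC$, contradicting the maximality of $\cC$. Consequently either $\delta=0$, i.e.\ $A_q(v,2k-2t+2;k)=M$, or $\delta>(q+1)/2$, i.e.\ $A_q(v,2k-2t+2;k)=M-\delta<M-(q+1)/2$.

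The only delicate point is the strictness of the second alternative, and it is routine: $\delta$ is a positive integer while $(q+1)/2$ is an integer for odd $q$ and a half-integer for even $q$, so $\delta>(q+1)/2$ forces $M-\delta<M-(q+1)/2$ in every case, and conversely $\delta\le(q+1)/2$ collapses to $\delta=0$ via the contradiction above. I do not expect any genuine obstacle at the level of the corollary; the real difficulty lives inside Theorem~4.2, where the uncovered $t$-spaces of a near-maximal packing must be shown to form exactly $\delta$ full $k$-spaces. This is the step that mirrors the partial-spread extendability theory of Section~\ref{sec_classification_results} --- in spirit the analogues of Lemma~\ref{lemma_disjoint_union_two_subspaces} and Theorem~\ref{thm_union_subspaces_blocking_set} for $q^{k-1}$-divisible sets of small cardinality --- and it is exactly why the deficiency threshold $(q+1)/2$ (governed by the size of the smallest nontrivial blocking sets) surfaces in the statement.
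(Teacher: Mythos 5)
Your proposal is correct and is essentially the paper's (and Naki\'c--Storme's) own derivation: the corollary is obtained exactly by combining the trivial counting bound $A_q(v,2k-2t+2;k)\le \qbin{v}{t}{q}/\qbin{k}{t}{q}$ with the maximality argument applied to Theorem~4.2 of \cite{nakic2016extendability}, so that a deficiency $0<\delta\le (q+1)/2$ would allow an extension contradicting maximality. You also correctly read the hypothesis of that theorem as a packing condition (every $t$-space contained in \emph{at most} one element), which is what the original source asserts; the wording ``at least'' in the paper's restatement is a typo, since a covering by fewer than $\qbin{v}{t}{q}/\qbin{k}{t}{q}$ $k$-spaces is impossible and adding blocks could never destroy a covering property.
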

\begin{trailer}{A $2$-analog of the Fano plane}Let $\cC$ be a set of planes in $\PG(6,2)$ such that every line is covered at most once. What is the maximum size $A_2(7,4;3)$ of $\cC$? If every line would be covered exactly 
once, then we would have $\#\cC=\qbin{7}{2}{2}/\qbin{3}{2}{2}=381$ and $\cC$ would be called a \emph{$2$-analog of the Fano plane}.\footnote{A Fano plane is a configuration of seven 
$3$-element subsets $\cB$ of a $7$-set $V$ such that every $2$-subset of $V$ is contained in exactly one element $B\in\cB$.} Corollary~\ref{cor_nakic_storme} gives $A_2(7,4;4)=381$ 
or $A_2(7,4;4)\le 379$. So, assume $\#\cC=380$ for a moment. Double-counting lines yields that exactly seven lines $L_1,\dots,L_7$ of $\PG(6,2)$ are uncovered by the elements of $\cC$. 
From Lemma~\ref{lem:union_subspaces} we know that the multiset of points $\cM$ of all points of the elements of $\cC$ is $2^2$-divisible. Let $\cC_P:=\{C\in\cC\,:\,P\le C\}$ denote 
the elements of $\cC$ that contain an arbitrary but fixed point $P\in\cP$. Moding out $P$ from $\cC_P$ yields a (partial) line spread in $\PG(5,2)$, so that $\#\cC_P\le [6]_2/[2]_2=21$. 
Thus, the $21$-complement $\overline{\cM}:=\cM^{\complement_{21}}$ of $\cM$ is a $2^2$-divisible multiset of points with cardinality $7$ in $\PG(6,2)$ by Lemma~\ref{lemma_t_complement}. 
Moreover, $\overline{\cM}=\chi_\pi$ for a plane $\pi$, see e.g.\ \cite{bonisoli1984every}, so that $\cC\cup\{\pi\}$ covers each point of $\PG(6,2)$ exactly $21$ times. 
In principal, an element $C\in\cC$ with $\dim(\pi\cap C)\ge 2$ might exist. However, the seven uncovered lines partition $3\cdot\overline{\cM}$, i.e.\ $3\cdot\overline{\cM}=
\sum_{i=1}^7 \chi_{L_i}$, so that $\dim(\pi\cap C)\le 1$ for all $C\in\cC$ and $A_2(7,4;3)=381$. We will slightly tighten the {\lq\lq}gap{\rq\rq} result in a moment.

The currently best lower bound is $A_2(7,4;3)\ge 333$ \cite{paper333} and if $A_2(7,4;3)=381$ then a matching code can have an automorphism group of order at most 
$2$ \cite{kiermaier2018order}.
\end{trailer}

In the following paragraph we want to generalize the idea of using classification results for divisible multisets of points to show that either $A_2(7,4;3)=381$ or 
$A_2(7,4;3)\le 378$. After that example we give a general problem statement in Definition~\ref{def_delta_subsapces_additional_conditions}. Note that we actually have not 
used the information on the line covering of $3\cdot\overline{\cM}$ for its classification. 

Let $\cM$ be a $k$-dimensional $2^2$-divisible multiset of cardinality $14$ in $\PG(k-1,2)$ such that $14$ lines 
$L_1,\dots,L_{14}$ exist with $3\cM=\sum_{i=1}^{14} \chi_{L_i}$. Here the latter condition will be essential, since e.g.\ $2\cdot \cB$ for a $6$-dimensional projective 
base $\cB$ over $\F_2$ is a $2^2$-divisible multiset of points in $\PG(6,2)$ with cardinality $14$ not fitting our subsequent classification result, see 
Lemma~\ref{lemma_two_planes_extra_condition}. Counting points gives that each hyperplane $H\in\cH$ contains exactly 
$(3\cdot\cM(H)-14)/2$ out of the $14$ lines, so that $\cM(H)\in\{6,10\}$. With this, the standard equations give $a_6=3\cdot 2^{k-2}+1$, $a_{10}=2^{k-2}-2$, and 
$2^{6-k}-1=\sum_{i\ge 2} {i\choose 2}\lambda_i$. Now let $H_6$ be an arbitrary hyperplane with multiplicity $6$. Since $\cM|_H$ is $2$-divisible and 
$2$ out of the $14$ lines are contained in $H$, there exists a line $L$ in the support of $\cM|_H$. Since $\cM|_H-\chi_L$ is a $2$-divisible multiset of cardinality $3$ over $\F_2$, 
$\cM|_H$ is the sum of two lines $L$, $L'$, i.e.\ $\cM|_H=\chi_L+\chi_{L'}$. Now let $P$ be an arbitrary point with positive multiplicity. Since $P$ is contained in 
$2^{k-1}-1$ hyperplanes, $P$ is contained in a hyperplane $H$ of multiplicity $6$, so that a line $L$ with $P\le L$ and $L\subseteq\supp(\cM)$ exists. Since $L$ is 
contained in $2^{k-2}-1$ hyperplanes there are at least
$$
  2^{k-1}-1-\left(2^{k-2}-1\right)-a_{10}=2
$$
hyperplanes of multiplicity $6$ that contain $P$ but not $L$, so that there exists another line $L'\neq L$ with $P\le L'$ and $L'\subseteq\supp(\cM)$. Now let $E=E(P)$ be the 
plane spanned by $L$ and $L'$. If $\cM(P)=1$, then $E$ cannot be contained in a hyperplane of multiplicity $6$ due to their classification as the sum of two lines. Thus, every hyperplane 
$H$ through $E$ has multiplicity $\cM(H)=10$, so that counting points gives $\cM(E)=6+2^{6-k}$. Since $2^{k-2}-2=a_{10}\ge 0$ and $2^{6-k}-1=\sum_{i\ge 2} {i\choose 2}\lambda_i\ge 0$,  
we have $3\le k\le 6$, so that $\lambda_2+3\lambda_3+6\lambda_4=2^{6-k}-1\le 7$ and $\lambda_i=0$ for $i\ge 5$. If $\lambda_1=0$, then $k=3$ and $\lambda_2=7$, i.e., $\cM=2\cdot\chi_\pi$ 
for some plane $\pi$. If $k=6$, then $\lambda_1=14$, and we can apply Lemma~\ref{lemma_disjoint_union_two_subspaces} to deduce that $\cM$ is the sum of two planes. If $\lambda_1>0$, then 
we can choose a point $P$ with multiplicity $\cM(P)=1$ and construct the plane $E(P)$ as described above. For $k=5$ we conclude $\cM(E(P))=8$, $\lambda_2=1$, and $\lambda_i=0$ for $i\ge 3$, 
so that $E(P)\subseteq\supp(\cM)$. For $k=4$ we conclude $\cM(E(P))=10$ and $\lambda_2+3\lambda_3+6\lambda_4=6$, so that $\lambda_2=3$, $\lambda_3=\lambda_4=0$, and $E(P)\subseteq\supp(\cM)$. 
So, in both remaining cases $k\in\{4,5\}$ the plane $E(P)$ is contained in the support of $\cM$ and $\cM-\chi_{E(P)}$ is a $2^2$-divisible multiset of points of cardinality $7$ over $\F_2$. 
Thus, we have:
\begin{nlemma}
  \label{lemma_two_planes_extra_condition}
  Let $\cM$ be a $2^2$-divisible multiset of points of cardinality $14$ over $\F_2$. If there exist $14$ lines $L_1,\dots,L_{14}$ such that $3\cdot \cM=\sum_{i=1}^{14}\chi_{L_i}$, then 
  $\cM$ is the sum of two planes.
\end{nlemma}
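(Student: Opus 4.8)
The plan is to realise $\cM$ inside $\PG(k-1,2)$ with $k=\dim(\cM)$ and to peel off one plane at a time until the cardinality drops to $[3]_2=7$, where Lemma~\ref{lemma_div_implies_subspace} forces the remainder to be a single plane. Concretely, I would first pin down the hyperplane spectrum, then classify the hyperplanes of small multiplicity, and finally build, through a point of multiplicity one, a plane that must lie entirely in $\supp(\cM)$; subtracting its characteristic function reduces $\#\cM$ from $14$ to $7$ and finishes the argument.

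First I would determine the possible hyperplane multiplicities. Since $3\cM=\sum_{i=1}^{14}\chi_{L_i}$ and a line meets a hyperplane $H$ either in all three of its points or in exactly one, a hyperplane containing $c$ of the $L_i$ satisfies $3\cM(H)=3c+(14-c)$, so $c=(3\cM(H)-14)/2$. Together with $\cM(H)\equiv \#\cM\equiv 2\pmod 4$ and $0\le c\le 14$ this leaves only $\cM(H)\in\{6,10\}$. Feeding the two-valued spectrum into the standard equations of Lemma~\ref{lemma_standard_equations} yields $a_6=3\cdot 2^{k-2}+1$, $a_{10}=2^{k-2}-2$ and $\sum_{i\ge 2}{i\choose 2}\lambda_i=2^{6-k}-1$; nonnegativity of these quantities already confines $k$ to $\{3,4,5,6\}$ and shows $\lambda_i=0$ for $i\ge 5$.

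The geometric heart is the description of the $6$-hyperplanes. Restricting to such an $H$, Lemma~\ref{lemma_heritable} makes $\cM|_H$ a $2^1$-divisible multiset of cardinality $6$ carrying two of the lines $L_i$; choosing one such line $L$, the difference $\cM|_H-\chi_L$ is $2^1$-divisible of cardinality $3=[2]_2$ and hence (Lemma~\ref{lemma_div_implies_subspace}) equals $\chi_{L'}$, so $\cM|_H=\chi_L+\chi_{L'}$. A counting argument then supplies, for every $P\in\supp(\cM)$, at least one $6$-hyperplane through $P$, hence a support line $L\ni P$, and --- comparing the $2^{k-2}$ hyperplanes through $P$ missing $L$ with $a_{10}$ --- a second support line $L'\neq L$ through $P$. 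Let $E(P)=\langle L,L'\rangle$ be the plane they span. The key claim is that if $\cM(P)=1$, then $E(P)$ lies in no $6$-hyperplane: otherwise $E(P)\subseteq L_1\cup L_2$ for the two support lines of that hyperplane, but the point set of the union of two lines contains no third line, forcing $\{L,L'\}=\{L_1,L_2\}$ and hence $P\in L_1\cap L_2$ with $\cM(P)=2$, a contradiction. I expect this claim --- the interplay between the $6$-hyperplane classification and the elementary incidence fact that two lines admit no third line in their union --- to be the main obstacle, since everything downstream rests on it.

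With the claim in hand the rest is bookkeeping over $k$. For $k=6$ one has $\sum_{i\ge2}{i\choose2}\lambda_i=0$, so $\cM$ is a genuine set of $14=2[3]_2$ points and Lemma~\ref{lemma_disjoint_union_two_subspaces} directly gives a disjoint union of two planes; the degenerate case $\lambda_1=0$ is pinned (via $\sum_{i\ge2}{i\choose2}\lambda_i\ge 14-\sum_{i\ge2}\lambda_i$) to $k=3$, $\lambda_2=7$, i.e.\ $\cM=2\chi_\pi$. In the remaining cases $k\in\{4,5\}$ one has $\lambda_1>0$, so I pick $P$ with $\cM(P)=1$; since no $6$-hyperplane contains $E(P)$, every hyperplane through $E(P)$ has multiplicity $10$, and Equation~(\ref{eq_subspace_multiplicity}) evaluates to $\cM(E(P))=6+2^{6-k}$. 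The constraint $\sum_{i\ge2}{i\choose2}\lambda_i=2^{6-k}-1$ then forces the multiplicity pattern on $E(P)$ ($\lambda_2=1$ for $k=5$, $\lambda_2=3$ with $\lambda_3=\lambda_4=0$ for $k=4$, higher $\lambda_i$ excluded), and a short pigeonhole shows the value $6+2^{6-k}$ is attainable only when all seven points of $E(P)$ have positive multiplicity, i.e.\ $E(P)\subseteq\supp(\cM)$. Finally $\cM-\chi_{E(P)}$ is $2^2$-divisible of cardinality $7=[3]_2$, so Lemma~\ref{lemma_div_implies_subspace} identifies it with $\chi_\pi$, and $\cM=\chi_{E(P)}+\chi_\pi$ is the sum of two planes.
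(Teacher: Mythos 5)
Your proposal is correct and follows essentially the same route as the paper's own proof: the same $\{6,10\}$ hyperplane spectrum via line counting, the same standard-equation bookkeeping ($a_6=3\cdot 2^{k-2}+1$, $a_{10}=2^{k-2}-2$, $\sum_{i\ge 2}\binom{i}{2}\lambda_i=2^{6-k}-1$), the same classification of $6$-hyperplanes as $\chi_L+\chi_{L'}$, the same two-support-lines-through-$P$ counting and plane $E(P)$, and the same case split ($\lambda_1=0$, $k=6$, $k\in\{4,5\}$) finishing with Lemma~\ref{lemma_div_implies_subspace} applied to $\cM-\chi_{E(P)}$. The only slip is the literal claim ``$E(P)\subseteq L_1\cup L_2$'' (impossible, as a plane has $7$ points while the union of two lines has at most $6$); what your argument actually uses --- that the support lines $L,L'$ lie in $\supp(\cM|_H)=L_1\cup L_2$, which contains no third line --- is exactly right and is the paper's (terser) justification of the same step.
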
  
\begin{nexercise}
  Use Lemma~\ref{lemma_two_planes_extra_condition} to show that either $A_2(7,4;3)=381$ or $A_2(7,4;3)\le 378$. 
\end{nexercise}

\begin{ndefinition}
  \label{def_delta_subsapces_additional_conditions}
  For integers $1\le t\le r$ we denote by $m_q(r,t)$ the smallest number $\delta$ such that there exists a $q^r$-divisible multiset of points $\cM$ over $\F_q$ with cardinality 
  $\delta[r+1]_q$ that is not the union of $\delta$ $(r+1)$-spaces but where for each $1\le j\le t$ there exist $\delta\qbin{r+1}{j}{q}$ $j$-spaces $S_1^j$, $S_2^j$, \dots such that
  \begin{equation}
    \qbin{r+2-j}{j-1}{q}\cdot \cM=\sum_{i=1}^{\delta\qbin{r+1}{j}{q}} \chi_{S_i^j}.
  \end{equation}
\end{ndefinition}
In Lemma~\ref{lemma_two_planes_extra_condition} we have shown $m_2(2,2)\ge 3$ and the two-weight code in Example~\ref{exercise_klein_quadric}, which can be partitioned into seven lines, 
yields $m_2(2,2)\le 3$, so that $m_2(2,2)=3$. 

\begin{nexercise}
  Show that $m_2(3,2)\ge 3$.
\end{nexercise}
\begin{nexercise}
  Show that the $[45,8,\{21,29\}]_2$ two-weight code described in \cite[Theorem 4.1]{haemers1999binary} can be partitioned into $15$ lines. 
\end{nexercise}
So, we have $m_2(3,2)=3$.

\medskip
 
\begin{trailer}{An application for partial MRD codes}For two $m\times n$-matrices $A$ and $B$ the \emph{rank distance} 
is given by the rank $\operatorname{rk}(A-B)$ of their difference. A set $\cM$ of such matrices over $\F_q$ with minimum rank distance $d$ is called a \emph{maximum rank distance} 
(MRD) code if it has the maximum possible size $q^{\max\{m,n\}\cdot (\min\{m,n\}-d+1)}$ and those codes indeed exist for all parameters. Considering the row spans of 
the matrices $(I|M)$ for all $M\in\cM$, where a $m\times m$ identity matrix was put in front of $M$, gives a constant-dimension code $\cC$, called \emph{lifted MRD code}, of  
$m$-spaces such that there exists an $n$-space $S$ which is disjoint to the elements of $\cC$. In the 
remaining part we choose the specific parameters $q=2$, $m=n=4$, and $d=3$. Here we have $\#\cC=256$, every line with trivial intersection with the special solid $S$ is covered 
exactly once by the elements of $S$, and each point $P$ not contained in $S$ is contained in exactly $16$ elements from $\cC$. Now assume that $\cC$ satisfies the same conditions 
as before but does not have the maximum possible cardinality, so that we speak of a lifted partial MRD code. If $\cM$ is the multisets of points of the elements of $\cC$, then 
we can apply our result $m_2(3,2)\ge 3$ to the $16$-complement of $\cM+16\cdot \chi_S$ in order to conclude that for $\#\cC\in\{256-1,256-2\}$ an extension to a lifted MRD code 
exists, cf.\ \cite[Proposition 6]{heinlein2018binary}. Clearly, also the analog statement for the partial MRD code holds. 
\end{trailer}
\begin{nexercise} 
  Let $\cM$ be an $m\times n$ rank-metric code, where $m\le n$, over $\F_q$ with minimum rank distance $d$ whose cardinality is $\delta$ less than that of an MRD code with the same parameters. 
  Show that if $\delta<m_q(m-1,m+1-d)$, then $\cM$ is extendable to an MRD code. 
\end{nexercise}

\begin{question}{Research problem}Does there exist a set of $256-3=253$ $4\times 4$-matrices over $\F_2$ with minimum rank distance $3$ that is maximal, i.e., where no further 
matrix can be added without decreasing the minimum rank distance.
\end{question}
\begin{backgroundinformation}{Covering radius}The covering radius $\rho(C)$ of a code $C$ in $V$ is the smallest integer $r$ such that every element of $V$ has a distance of at most 
$r$ to an element of $C$. If the covering radius is larger or equal to the minimum distance of $C$, then $C$ is extendable. For rank-metric codes results on the covering radius can 
be found in \cite{byrne2017covering}.   
\end{backgroundinformation}
Another application of $m_2(2,2)=3$ is that each set of $93\cdot 3-2$ planes in $\PG(5,2)$ such that very line is covered at most thrice can be completed to a 
corresponding $2$-(subspace) design with $\lambda=3$ over $\F_2$. 

\chapter{Dimensions of divisible codes}
\label{sec_dimensions}
The dimension $k$ of a linear code of length $n$ over $\F_q$ can be at most $n$, which is attained by a identity matrix as generator matrix. An even binary linear code of length $n$ 
can have dimension at most $n-1$, which is attained by the codes consisting of all even weight codewords, i.e., projective bases. Higher divisibility implies tighter bounds. 
A double-even binary linear code with effective length $n$ has dimension at most
%% https://oeis.org/A140427
\begin{equation}
  \begin{array}{lcl}
    4 \left\lfloor n/8\right\rfloor &:& \rem(n,8)\in \{0,1,2,3\},\\
    4 \left\lfloor n/8\right\rfloor+1 &:& \rem(n,8)\in \{4,5\},\\ 
    4 \left\lfloor n/8\right\rfloor+\rem(n,8)-4 &:& \rem(n,8)\in \{6,7\},
  \end{array}
\end{equation}
where $\rem(n,a)$ denotes the remainder of $n$ divided by $a$. Equality can indeed be attained, see e.g.\ \cite[Section VIII]{gaborit1996mass}. In general the 
dimension is upper bounded by $n/2$ and equality is attained for self-dual codes only, where especially $n$ is divisible by $8$. For $2^3$-divisible binary linear 
codes with effective length $n$ the dimension is at most $5n/16$, see \cite{liu2010binary} for the details.
%% 8->9: The complete computation took 1:02:20 hours.
%% 9->10: The complete computation took 1:05:41 hours.
%% 10->11: The complete computation took 22:28 minutes.
%% 11->12: The complete computation took 5:01 minutes.
%% 12->13: The complete computation took 1:02 minutes.
%% 8 -> 1
%% 12 -> 2
%% 14 -> 3
%% 15 -> 4
%% 16 -> 5 ---> Max
%% 24 -> 6
%% 28 -> 7
%% 30 -> 8
%% 31 -> 9
%% 32 -> 10 ---> Max
%% 40 -> 11
%% 44 -> 12
%% 46 -> 13
%% 47 -> 14
%% 48 -> 15 --> Max
%% ----------------------------------
%% 16-div weights 16,32,48,64
%% 5->6: The complete computation took 1:10:37 hours.
%% 6->7: The complete computation took 9:16:31 hours.
%% 7->8: The complete computation took 47:03:06 hours. (31:36:21 hours where spent for running solvediophant.)
%% 16 -> 1
%% 24 -> 2
%% 28 -> 3
%% 30 -> 4
%% 31 -> 5
%% 32 -> 6 ---> Max 
%% 48 -> 7
%% 56 -> 8

When using the linear programming method to exclude the existence of certain lengths of divisible codes, upper bounds on the dimension are certainly useful. With 
respect to lower bounds we remark that a single codeword of weight $\Delta$ always generates a $1$-dimensional code. If the maximum point (or column) multiplicity 
is at most $\gamma$ then $k$ clearly has to at least as large so that $\gamma[k]_q\ge n$, where $n$ is the effective length. In \cite{honold2019lengths} a general tool 
was used to compute an upper bound on the minimal dimension of a projective binary linear code $C$ of length $n$. To this end let $\cM$ be the corresponding set of points 
in $V:=\PG(k-1,2)$. Let $Q$ be a point not contained in $\cM$, i.e., $\cM(Q)=0$. Consider the projection of $\cM$ modulo $Q$, that is the multiset image of $\cM$ under the 
map $V \to V/Q$, $\bv \mapsto (\bv+Q) / Q$.  The resulting multiset $\cM'$ in $\PG(V/Q) \cong \PG(k-2,2)$ arises by identifying points of $\cM$ on the same 
line through $Q$.  The corresponding linear code $C'$ is a subcode of $C$ of effective length $n$ and dimension $k-1$.  If $C$ is $\Delta$-divisible, so is $C'$. The 
assumed minimality of $k$ implies that $C'$ is not projective. Equivalently, there is a secant through $Q$, that is a line whose remaining two points are contained in $\cM$. 
So each of the $[k]_2-n$ points of $V$ not contained in $\cM$ lies on a secant.  Since $\cM$ admits at most $\binom{\#\cM}{2} = {{n}\choose {2}}$ secants, covering at 
most ${n\choose 2}$ different points not in $\cM$, we get
\begin{equation}
  [k]_2-n\le {n\choose 2}\quad\Longleftrightarrow\quad   2^k \le \frac{n^2+n+2}{2}.
\end{equation}
In \cite{honold2019lengths} this inequality was used to conclude the non-existence of a projective $2^3$-divisible binary linear code of length $59$ from the non-existence 
of projective $2^3$-divisible $[59,\le 10]_2$-codes.  

\begin{trailer}{The divisible code bound}Let $q=p^f$ and $v_p$ be the \emph{$p$-adic valuation on $\Z$}, i.e., $v_p(x)$ is the exponent of the highest power of $p$ dividing 
$x$, with $v_p(0) = \infty$.
\begin{ntheorem}{(\cite{ward1992bound}, \cite[Theorem 7]{ward2001divisiblejcta}, \cite[Theorem 6]{ward2001divisible})}\label{thm_div_code_bound} Let $C$ be 
  an $[n,k]_q$-code whose non-zero word weights lie in the sequence $(b-m+1)\Delta$, \dots, $b\Delta$ of $m$ consecutive multiples of $\Delta$. Then
  \begin{equation}
    kv_p(q)\le m\big(v_p(\Delta)+v_p(q)\big) +v_p\!\left({b\choose m}\right).
  \end{equation}   
\end{ntheorem}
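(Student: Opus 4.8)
The plan is to annihilate all the nonzero weights with a single polynomial and then to extract a power of $q$ from an elementary counting identity, afterwards fighting for the sharp power of $p$.

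First I would isolate the counting lemma that does the real work. For a set of coordinates $T\subseteq\{1,\dots,n\}$ let $C_T:=\{\bc\in C:\ c_i=0\text{ for all }i\in T\}$ be the subcode vanishing on $T$; since $C_T$ is the kernel of the coordinate projection $C\to\F_q^{|T|}$, we have $\dim C_T\ge k-|T|$ and therefore $q^{\,k-|T|}\mid \#C_T$ whenever $|T|\le k$. Double counting the pairs $(\bc,T)$ with $T$ contained in the zero set of $\bc$ then yields, for every $0\le j\le k$,
\[
  \sum_{\bc\in C}\binom{n-\wt(\bc)}{j}=\sum_{|T|=j}\#C_T\equiv 0\pmod{q^{\,k-j}} .
\]

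Next I would introduce the monic integer polynomial $g(N)=\prod_{i=0}^{m-1}\bigl(N-(n-(b-i)\Delta)\bigr)$ of degree $m$. Its roots are exactly the values $N=n-\wt(\bc)$ taken by the nonzero codewords, because by hypothesis those weights range over $(b-m+1)\Delta,\dots,b\Delta$. Hence in $\sum_{\bc}g\bigl(n-\wt(\bc)\bigr)$ every nonzero word contributes $0$, while the zero word contributes $g(n)=\prod_{i=0}^{m-1}(b-i)\Delta=\Delta^m\,m!\,\binom{b}{m}$. Expanding $g$ in the integer basis $\binom{N}{0},\dots,\binom{N}{m}$, say $g=\sum_{j=0}^m c_j\binom{N}{j}$ with $c_j\in\Z$ and leading coefficient $c_m=m!$, and feeding the counting identity into each term (each summand being divisible by $q^{k-j}$ with $j\le m$, hence by $q^{k-m}$), I would conclude that $\sum_{\bc}g(n-\wt(\bc))$ is divisible by $q^{\,k-m}$. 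This already gives $q^{\,k-m}\mid \Delta^m m!\binom{b}{m}$, that is
\[
  (k-m)\,v_p(q)\le m\,v_p(\Delta)+v_p(m!)+v_p\!\binom{b}{m}.
\]

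The hard part will be removing the parasitic term $v_p(m!)$, which is exactly the gap between the estimate above and the asserted bound $(k-m)v_p(q)\le m\,v_p(\Delta)+v_p\binom{b}{m}$. The naive binomial expansion cannot witness this cancellation term by term, so I would instead argue by induction on the number $m$ of occupied weight classes. The base case $m=1$ is a one-weight (repeated simplex) situation, where the identity above specializes to $q^{\,k-1}\mid b\Delta$ and reproduces the bound exactly since $m!=1$. For the inductive step I would pass from $C$ to the residual code $\Res(C;\tilde\bc)$ of a codeword $\tilde\bc$ of maximal weight $b\Delta$: this code has dimension $k-1$, inherits divisibility through Lemma~\ref{lemma_heritable} and its residual-code reformulation, and --- this is the crux --- should occupy one fewer weight class, so that the factor $m!=m\cdot(m-1)!$ is built up one prime at a time rather than all at once. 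The main obstacle is precisely the control of the weight set of the residual code (bounding its largest weight and verifying that the top class genuinely disappears), combined with the sharp $p$-adic bookkeeping relating $v_p\binom{b}{m}$ to $v_p\binom{b-1}{m-1}$ through $m\binom{b}{m}=b\binom{b-1}{m-1}$. After reducing to the case $\Delta=p^{e}$ via Theorem~\ref{thm_delta_divides_q_power}, this becomes a finite, if delicate, valuation computation.
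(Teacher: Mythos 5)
First, a remark on ground truth: the paper does not prove Theorem~\ref{thm_div_code_bound} at all; it quotes it from Ward's papers. So your proposal cannot be checked against an internal argument and must stand on its own. Its first half does: the double-counting identity $\sum_{\bc\in C}\binom{n-\wt(\bc)}{j}=\sum_{|T|=j}\#C_T\equiv 0\pmod{q^{k-j}}$, the annihilating polynomial $g$ with $\sum_{\bc\in C}g\bigl(n-\wt(\bc)\bigr)=\Delta^m\,m!\,\binom{b}{m}$, the integer expansion of $g$ in the binomial basis, and the reduction to $\Delta=p^e$ via Theorem~\ref{thm_delta_divides_q_power} are all correct, and they yield $(k-m)\,v_p(q)\le m\,v_p(\Delta)+v_p(m!)+v_p\binom{b}{m}$ (for $m\ge k$ the theorem is vacuous, so assume $k>m$). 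But, as you say yourself, Ward's theorem is exactly this inequality with $v_p(m!)$ deleted, and that deletion is the entire content of the result.

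The inductive step you sketch to remove $v_p(m!)$ does not work, for three concrete reasons. (i) $\Res(C;\tilde\bc)$ need not have dimension $k-1$: its dimension is $k$ minus the dimension of the subcode of words supported inside $\supp(\tilde\bc)$, which can exceed $1$; in the extreme case $n=b\Delta$ the residual code is the zero code (take the $[8,4,\{4,8\}]_2$ extended Hamming code and $\tilde\bc$ the all-one word). (ii) Divisibility is not inherited at level $\Delta$: by Lemma~\ref{lemma_heritable} and its residual-code reformulation, $\Res(C;\tilde\bc)$ is in general only $\Delta/q$-divisible, so the inductive hypothesis for the same $\Delta$ simply does not apply to it, and its weights are multiples of $\Delta/q$ that need not fall into fewer classes. (iii) The claim you yourself flag as the crux --- that the top weight class disappears, so that the residual code occupies $m-1$ classes --- is false: for the extended Hamming code and $\tilde\bc$ of weight $4$, the residual code is the $[4,3]_2$ even-weight code with weights $\{2,4\}$, again two occupied classes, now with divisibility $2$ instead of $4$. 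So after one step of your induction you hold a code with strictly smaller divisibility and at least as many weight classes; the intended factorization of $m!$ ``one prime at a time'' never materializes, and the bookkeeping via $m\binom{b}{m}=b\binom{b-1}{m-1}$ has nothing to attach to.

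If you want an induction whose hypotheses are at least self-consistent, the right device is shortening, not residuals: the codes $C_T$ appearing in your own counting identity are subcodes of $C$, hence $\Delta$-divisible with all nonzero weights among the same $m$ consecutive multiples of $\Delta$, and they satisfy $\dim C_T\ge k-|T|$. Exploiting this self-similarity of the moment identities is the natural route to the extra factor $p^{v_p(m!)}$; passing to residual codes destroys both the divisor and the weight-class structure and cannot close the gap.
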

\end{trailer}
\begin{nexample}
  Let $C$ be a projective $2^3$-divisible binary linear code of length $59$. Since the residual code of each codeword $\bc$ is a projective $2^2$-divisible binary linear code 
  of length $59-\wt(\bc)$, the non-zero weights of $C$ are contained in $\{8,16,24,32,40\}$. Theorem~\ref{thm_div_code_bound} with $b=5$, $m=5$, and $\Delta=8$ gives 
  $k\le 20$.  
\end{nexample}
Improvements of the divisible code bound can be found in \cite{phd_liu,liu2006weights}.
\begin{nexercise}
  Show that each $3^r$-divisible $\left[3^{r+1} ,k\right]_3$-code satisfies $k\le 3r+3$ for all $r\in \N$. Additionally assume that there is no codeword of weight $3^r$ to deduce 
  $k\le 2r+3$. 
\end{nexercise}
In \cite{kurz2020generalization} it was shown that each projective $3^r$-divisible $\left[3^{r+1} ,k\right]_3$-code satisfies $k\le r+3$ for all $r\in\N$.

\begin{question}{Research problem}Improve the divisible code bound for projective $q^r$-divisible codes over $\F_q$.\end{question}

\begin{nexercise}
  Let $C$ be a binary $q^r$-divisible code effective length $n$ that is spanned by codewords of weight $2^r$, where $r\in\N_{\ge 2}$. 
  Show that the dimension $k$ of $C$ satisfies $k\le n\cdot (r+2)/2^{r+1}$ and that equality can be attained if $n$ is divisible by $2^{r+1}$.\\ 
  \textit{Hint:} Use the classification in Section~\ref{sec_improved_lp_method}.  
\end{nexercise}

\begin{nconjecture}
  Let $r$ be an integer and $\eta_q(r,k)$ be the minimum possible length $n$ of a $q^r$-divisible $[n,k]_q$-code. For $r\ge 2$ we have $\eta_2(r,k)=2^{r-k+1}\cdot [k]_2$ 
  for $k\le r+1$ and $\eta_2(r,k)=\eta_2(r,k-r-2)+2^{r+1}$ for $k\ge r+2$. 
\end{nconjecture}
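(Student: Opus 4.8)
The plan is to treat the two regimes of the formula separately, in each case matching an explicit construction (the upper bound on $\eta_2(r,k)$) against a lower bound; the constructions are routine, so the whole difficulty lies in the lower bounds. Throughout I pass to the geometric picture, so that a spanning $2^r$-divisible $[n,k]_2$-code corresponds to a spanning $2^r$-divisible multiset $\cM$ of $n$ points in $\PG(k-1,2)$.

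First I would dispose of the regime $k\le r+1$. For $k\le r$ the exercise following Lemma~\ref{lemma_heritable} (applicable since then $r>k-1$) shows that every $2^r$-divisible $[n,k]_2$-code is the $2^{r-k+1}$-fold repetition of a $2^{k-1}$-divisible $[n',k]_2$-code $C'$, so $\neff(C)=2^{r-k+1}\neff(C')$; for $k=r+1$ one simply takes $C'=C$. The whole regime therefore reduces to one lemma: a spanning $2^{k-1}$-divisible multiset $\cM$ in $\PG(k-1,2)$ has $\#\cM\ge [k]_2=2^k-1$, with equality only for the simplex. I would prove this by a first-moment (Plotkin-type) count. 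Over $\F_2$ the $2^k-1$ nonzero codewords are in bijection with the hyperplanes $H\in\cH$, the word attached to $H$ having weight $n-\cM(H)$ with $n=\#\cM$; summing gives $\sum_{\bc\neq\zv}\wt(\bc)=(2^k-1)n-\sum_{H}\cM(H)=(2^k-1)n-[k-1]_2\,n=2^{k-1}n$, using that each point lies in $[k-1]_2$ hyperplanes. Since $\cM$ spans, each of these $2^k-1$ weights is a positive multiple of $2^{k-1}$, hence at least $2^{k-1}$, so $2^{k-1}n\ge (2^k-1)2^{k-1}$ and $n\ge 2^k-1$; equality forces a one-weight code, i.e.\ the simplex. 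Multiplying by $2^{r-k+1}$ yields $\eta_2(r,k)=2^{r-k+1}[k]_2$.

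Next the recursion for $k\ge r+2$. The upper bound $\eta_2(r,k)\le \eta_2(r,k-r-2)+2^{r+1}$ is immediate: take a length-optimal $2^r$-divisible code of dimension $k-r-2$ and form its direct sum, in a product space, with the first-order Reed--Muller code $\RM_2(r+1,1)$, which is an affine $(r+2)$-space of length $2^{r+1}$, dimension $r+2$, and $2^r$-divisibility (Example~\ref{example_affine_space}); closure of divisibility under sums in a product space (cf.\ Lemma~\ref{lemma_sum_mult}) completes the construction. For the matching lower bound observe that, by the very definition of $\eta_2$ as a minimum, it suffices to extract from a length-minimal such $C$ (with associated spanning $\cM$ and $n=\eta_2(r,k)$) a single $2^r$-divisible code of dimension $k-r-2$ and length at most $n-2^{r+1}$; then $\eta_2(r,k-r-2)\le n-2^{r+1}$, which is the desired inequality. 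The recursion bottoms out because for $r+2\le k\le 2r+3$ the index $k-r-2$ lands in the already-settled range $0\le k-r-2\le r+1$ (with $\eta_2(r,0)=0$).

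The main obstacle is producing that reduction, and this is exactly why the statement is only conjectured. The natural target is a \emph{decomposition theorem} for length-minimal codes: that $C$ splits, after a coordinate change, as $C_0\oplus \RM_2(r+1,1)$ with $C_0$ a $2^r$-divisible $[\,n-2^{r+1},\,k-r-2\,]_2$-code. Equivalently, $\cM$ should contain an affine $(r+2)$-space $A$ whose projective span $\langle A\rangle$ sits complementarily to the rest of $\cM$, so that $\cM-\chi_A$ drops the cardinality by $2^{r+1}$ \emph{and} the dimension by $r+2$. Merely exhibiting a minimum-weight codeword of weight $2^r$ (one expects $d(C)=2^r$ for the extremal codes, the affine summand supplying it) produces a hyperplane $H$ with $\cM(H)=n-2^r$ but no independent affine summand; and subtracting an affine space found inside $\supp(\cM)$ preserves $2^r$-divisibility by Lemma~\ref{lemma_sum_of_characteristic_functions} yet need not lower the dimension. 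I would therefore attack the decomposition through the classification tools of Section~\ref{sec_classification_results}, in the spirit of Lemma~\ref{lemma_div_implies_subspace}: use the standard equations to control the hyperplane spectrum of a length-minimal $\cM$ and force either a full $(r+1)$-space or an affine $(r+2)$-space into $\supp(\cM)$, then peel it off. Establishing this structural dichotomy uniformly in $r$ is the crux and the principal difficulty.
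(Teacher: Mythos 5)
The statement you were asked to prove is one of the paper's open \emph{conjectures}; the paper itself offers no proof, so there is nothing of the author's to compare your argument against, and a complete proof would be new research rather than a reconstruction. Judged on its own terms, your proposal is a sound partial result but not a proof, as you concede. The regime $k\le r+1$ is handled correctly: the exercise following Lemma~\ref{lemma_heritable} (Ward's repetition theorem, \cite[Theorem 1.3]{ward1990weight}) reduces everything to $2^{k-1}$-divisible spanning codes of dimension $k$, your weight-sum identity $\sum_{\bc\neq\zv}\wt(\bc)=2^{k-1}n$ together with $\wt(\bc)\ge 2^{k-1}$ for every $\bc\neq\zv$ forces $n\ge [k]_2$, and the equality case is the simplex code by Bonisoli's theorem \cite{bonisoli1984every}. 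The direct-sum construction giving $\eta_2(r,k)\le\eta_2(r,k-r-2)+2^{r+1}$ is also fine. What is missing is the reverse inequality for $k\ge r+2$, i.e.\ the claim that from a length-minimal code one can extract a $2^r$-divisible code of dimension at least $k-r-2$ and effective length at most $n-2^{r+1}$; for this you offer only a plan (control the hyperplane spectrum of a minimal $\cM$ via the standard equations and force an $(r+1)$-space or an affine $(r+2)$-space into $\supp(\cM)$), not an argument. That reduction is precisely the open content of the conjecture, so the proposal cannot be regarded as a proof.

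One correction to your diagnosis of where the obstruction lies. You dismiss the route of subtracting an affine $(r+2)$-space $A\subseteq\supp(\cM)$ on the grounds that the difference ``need not lower the dimension.'' That objection is empty. Writing $U=\langle A\rangle$, every point of $\supp(\cM)\setminus A$ survives in $\cM-\chi_A$, so $\dim(\cM-\chi_A)\ge k-\dim(U)=k-r-2$ automatically; and surplus dimension is harmless, because any $(k-r-2)$-dimensional subcode of the code corresponding to $\cM-\chi_A$ is still $2^r$-divisible and has effective length at most $\#\cM-2^{r+1}$, which already yields $\eta_2(r,k-r-2)\le n-2^{r+1}$. Hence the entire difficulty of the conjecture is concentrated in a single statement that your proposal does not address beyond naming it: the support of every length-minimal $2^r$-divisible multiset of dimension $k\ge r+2$ over $\F_2$ contains an affine $(r+2)$-space (or some substitute substructure of cardinality at least $2^{r+1}$ whose removal preserves $2^r$-divisibility). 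Your intended tools from Section~\ref{sec_classification_results} are plausible for small $r$, but nothing in the proposal establishes this uniformly in $r$ and $k$.
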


\chapter{Enhancing the linear programming method with additional constraints}
\label{sec_improved_lp_method}
Here we want to continue our discussion of the linear programming method from Subsection~\ref{subsec_lp_method} and discuss a few additional conditions. First we note that the 
number of even-weight codewords of an $[n,k]_2$-code can just take one of two possible values, i.e.,
\begin{equation}
  \label{eq_even_weight_subcode}
  \sum_{i=0}^{\left\lfloor n/2\right\rfloor} A_{2i} \in\left\{2^{k-1},2^k\right\}.
\end{equation}
\begin{nexercise}
  \label{exercise_even_weight_subcode}
  Let $C$ be an $[n,k]_2$-code. Show that the set of codewords of even weight forms a subcode of dimension at least $k-1$.
\end{nexercise}
\begin{nexample}
  We can use Equation~(\ref{eq_even_weight_subcode}) in order to e.g.\ show that each $[\le 16,4,7]_2$ code contains at least one codeword of weight $8$, cf.~\cite[Lemma 3.1]{kurz202146}.  
  Assume that $C$ is an $[n,4,7]_2$ code with $n\le 16$ and $A_8=0$. From the first two MacWilliams equations we conclude
  $$
    A_7+A_9+\sum_{i\ge 10} A_i = 2^4-1=15\quad\text{and}\quad 
    7A_7+9A_9+\sum_{i\ge 10} iA_i = 2^3n =8n, 
  $$  
  so that
  $$
    2A_9+3A_{10}+\sum_{i\ge 11} (i-7)A_i = 8n-105.
  $$
  Thus, the number of even weight codewords is at most $8n/3-34$. Since at least half of the codewords have to be of even weight, we obtain 
  $n\ge \left\lceil 15.75\right\rceil=16$. In the remaining case $n=16$ we use the linear programming method with the first four MacWilliams identities, 
  $A_8=0$, $B_1=0$, and the fact that there are exactly $8$ even weight codewords to conclude $A_{11}+\sum_{i\ge 13} A_i <1$, i.e., $A_{11}=0$ and $A_i=0$ for all $i\ge 13$. 
  With this and rounding to integers we obtain the bounds $5\le B_2\le 6$, which then gives the unique solution $A_7=7$, $A_9=0$, $A_{10}=6$, and $A_{12}=1$. Computing 
  the full dual weight distribution unveils $B_{15}=-2$, which is negative. 
\end{nexample}
The subcode in Exercise~\ref{exercise_even_weight_subcode} is also called \emph{even weight subcode} and its dimension equals $k$ iff $C$ is even itself. We have the 
following generalization, see \cite[Section IV]{brouwer1993linear}:
\begin{nproposition}
  \label{prop_num_4_div_codewords}
  Let $C$ be an even $[n,k]_2$-code and $t$ be the maximum dimension of a doubly-even subcode. Then, for the set $D$ of codewords of $C$ whose Hamming weight is 
  divisible by $4$ we have
  \begin{equation}
    |D|=\sum_{i=0}^{\left\lfloor n/4\right\rfloor} A_{4i}\in\left\{ 2^{k-1}-2^t,2^{k-1}, 2^{k-1}+2^{t-1},2^k\right\}\!.
  \end{equation}
\end{nproposition}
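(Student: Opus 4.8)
The plan is to encode the residue $\tfrac12\wt(\bc)\bmod 2$ as a quadratic form over $\F_2$ and read off $|D|$ from a Gauss sum. Since $C$ is even, the map $\psi\colon C\to\F_2$, $\psi(\bc):=\tfrac12\wt(\bc)\bmod 2$, is well defined, and $D=\psi^{-1}(0)$ is exactly the set of codewords whose weight is divisible by $4$. First I would record the weight identity $\wt(\bc+\bc')=\wt(\bc)+\wt(\bc')-2\,\#\bigl(\supp(\bc)\cap\supp(\bc')\bigr)$, which upon dividing by $2$ and reducing modulo $2$ gives
$$\psi(\bc+\bc')=\psi(\bc)+\psi(\bc')+\langle \bc,\bc'\rangle .$$
Thus $\psi$ is a quadratic form on the $\F_2$-space $C\cong\F_2^k$ whose associated bilinear (polar) form is the restriction of the standard inner product to $C$. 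Writing $S:=\sum_{\bc\in C}(-1)^{\psi(\bc)}$, the elementary count $|D|=\tfrac12\sum_{\bc\in C}\bigl(1+(-1)^{\psi(\bc)}\bigr)=2^{k-1}+\tfrac12 S$ reduces everything to evaluating the sign sum $S$.

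Next I would diagonalise through the radical of the polar form. This radical is $R:=\{\bc\in C:\langle\bc,\bc'\rangle=0\ \forall \bc'\in C\}=C\cap C^\perp$, the hull of $C$; choose a complement $W$ with $C=R\oplus W$, so that the induced form on $W$ is nondegenerate and $\dim W=2m$ is even, $r:=\dim R=k-2m$. Because $\langle\bc,\bc'\rangle=0$ for $\bc\in R$, the form splits additively as $\psi(\mathbf r+\mathbf w)=\psi(\mathbf r)+\psi(\mathbf w)$, and $\psi|_R$ is $\F_2$-linear. Hence $S=\bigl(\sum_{\mathbf r\in R}(-1)^{\psi(\mathbf r)}\bigr)\bigl(\sum_{\mathbf w\in W}(-1)^{\psi(\mathbf w)}\bigr)$. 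The first factor equals $2^r$ if $\psi|_R\equiv 0$ and $0$ otherwise; for the second I would invoke the classification of nondegenerate quadratic forms over $\F_2$ together with the standard count of their zeros, giving $\sum_{\mathbf w\in W}(-1)^{\psi(\mathbf w)}=+2^{m}$ for hyperbolic (plus) type and $-2^{m}$ for elliptic (minus) type. Therefore $S\in\{0,\,2^{r+m},\,-2^{r+m}\}$, the value $0$ occurring precisely when $\psi$ restricts to a nonzero linear form on the hull $C\cap C^\perp$.

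Finally I would identify the parameter $t$. A doubly-even subcode is exactly a subspace on which $\psi$ vanishes identically, i.e.\ a totally singular subspace for $\psi$, so $t$ is the maximal dimension of such a subspace. When $\psi|_R\equiv 0$ one may adjoin all of $R$ to a maximal totally singular subspace of $W$ (the cross term vanishes because $\langle\mathbf r,\mathbf w\rangle=0$), and the projection $C\to W$ along $R$ sends totally singular subspaces to totally singular subspaces; hence $t=r+m$ in the plus case (Witt index $m$) and $t=r+m-1$ in the minus case (Witt index $m-1$). Substituting $S=2^{r+m}=2^{t}$, $S=-2^{r+m}=-2^{t+1}$, and $S=0$ into $|D|=2^{k-1}+\tfrac12 S$ yields the three values $2^{k-1}+2^{t-1}$, $2^{k-1}-2^{t}$, and $2^{k-1}$; the degenerate possibility $C=D$ (where $\psi\equiv 0$, $m=0$, $t=k$) is the case $S=2^{k}$ and is already captured by $2^{k-1}+2^{t-1}=2^{k}$. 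The main obstacle is the characteristic-two bookkeeping: one must treat the quadratic form correctly (the polar form is alternating, so $\dim W$ is even and it is the Arf/type invariant, not a discriminant, that fixes the sign of the Gauss sum), and one must carefully match the Witt index of $W$ to $t$ in each type, including the boundary cases $m=0$ and $\psi|_R\neq 0$.
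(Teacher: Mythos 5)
Your proof is correct. There is nothing in the paper to compare it against: the proposition is stated with a pointer to \cite[Section IV]{brouwer1993linear} and no proof is given, and your argument is essentially the standard one underlying that reference -- the function $\psi(\bc)=\tfrac12\wt(\bc)\bmod 2$ as an $\F_2$-quadratic form whose polar form is the restricted inner product, the character-sum identity $|D|=2^{k-1}+\tfrac12 S$, and the Arf/type dichotomy for the Gauss sum. The three delicate points are all handled: (i) the radical of the polar form is the hull $C\cap C^\perp$, on which $\psi$ is linear, so $S$ factors and vanishes unless $\psi$ is identically zero on the hull (giving the value $2^{k-1}$, which lies in the claimed set irrespective of $t$); (ii) when $\psi$ does vanish on the hull, you correctly identify $t$ as $r$ plus the Witt index of the nondegenerate part $W$ -- this is the step most easily glossed over, since a maximal doubly-even subcode need not split as a sum of a subspace of $R$ and one of $W$ a priori, and your projection argument (kernel $U\cap R$, totally singular image in $W$) supplies exactly the needed upper bound $\dim U\le r+w(W)$, while $R\oplus T$ gives the matching lower bound; (iii) the substitution $S=2^{r+m}=2^t$, respectively $S=-2^{r+m}=-2^{t+1}$, yields $2^{k-1}+2^{t-1}$ and $2^{k-1}-2^t$, with $2^k$ subsumed as the doubly-even case $t=k$, $m=0$. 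One could shorten the write-up by quoting the classification of (possibly degenerate) quadratic forms over $\F_2$ together with their zero counts, but as written the argument is complete and self-contained.
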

In the context of linear codes with maximum possible minimum distance it suffices to consider even codes, so that Proposition~\ref{prop_num_4_div_codewords} gives an extra 
condition for the linear programming method. In the context of (binary) divisible codes we commonly have even higher divisibility constants and \cite[Theorem 2]{brouwer1993linear}
states that the number of codewords with weight divisible by $2^a$ of a $2^{a-1}$ binary linear code $C$ is at least $|C|/2^a$. This bound was e.g.\ used in \cite{brouwer1993linear} 
in order to show the non-existence of $[124,9,60]_2$-code. We have the following refinement and generalization of Proposition~\ref{prop_num_4_div_codewords}:   
\begin{nproposition}{(\cite[Proposition 5]{dodunekov1999some}, see also \cite{simonis1994restrictions})}
  \label{prop_div_one_more}
  Let $C$ be an $[n,k,d]_2$-code with all weights divisible by $\Delta:=2^a$ and let $\left(A_i\right)_{i=0,1,\dots,n}$ be the weight distribution of $C$. Put
  \begin{eqnarray*}
    \alpha&:=&\min\{k-a-1,a+1\},\\
    \beta&:=&\lfloor(k-a+1)/2\rfloor,\text{ and}\\ 
    \delta&:=&\min\{2\Delta i\,\mid\,A_{2\Delta i }\neq 0\wedge i>0\}.
  \end{eqnarray*}
  Then the integer 
  $$
    T:=\sum_{i=0}^{\lfloor n/(2\Delta)\rfloor} A_{2\Delta i}
  $$  
  satisfies the following conditions.
  \begin{enumerate}
    \item[(i)] \label{div_one_more_case1}
          $T$ is divisible by $2^{\lfloor(k-1)/(a+1)\rfloor}$.
    \item[(ii)] \label{div_one_more_case2}
          If $T<2^{k-a}$, then
          $$
            T=2^{k-a}-2^{k-a-t}
          $$
          for some integer $t$ satisfying $1\le t\le \max\{\alpha,\beta\}$. Moreover, if $t>\beta$, then $C$ has an $[n,k-a-2,\delta]_2$-subcode and if $t\le \beta$, it has an 
          $[n,k-a-t,\delta]_2$-subcode.
    \item[(iii)] \label{div_one_more_case3}
          If $T>2^k-2^{k-a}$, then
          $$
            T=2^k-2^{k-a}+2^{k-a-t}
          $$        
          for some integer $t$ satisfying $0\le t\le \max\{\alpha,\beta\}$. Moreover, if $a=1$, then $C$ has an $[n,k-t,\delta]_2$-subcode. If $a>1$, then $C$ has an 
          $[n,k-1,\delta]_2$-subcode unless $t=a+1\le k-a-1$, in which case it has an $[n,k-2,\delta]_2$-subcode.
  \end{enumerate}
\end{nproposition}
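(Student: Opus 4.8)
The plan is to reduce the whole statement to the weight distribution of a single Reed--Muller code. Write $\Delta=2^a$. First I would record an intersection-divisibility lemma: for codewords $\bc_1,\dots,\bc_j\in C$ with supports $S_i=\supp(\bc_i)$, the common intersection size $I_j:=\#(S_1\cap\dots\cap S_j)$ satisfies
\[
  I_j=\frac{-1}{2^{\,j-1}}\sum_{\emptyset\neq I\subseteq\{1,\dots,j\}}(-1)^{|I|}\,\wt\!\Big(\sum_{i\in I}\bc_i\Big).
\]
This follows by expanding the indicator $\prod_i\mathbf 1[x\in S_i]$ of a coordinate $x$ through $\mathbf 1[x\in S_i]=\tfrac12\big(1-(-1)^{(\bc_i)_x}\big)$ and summing over all coordinates. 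Since every codeword weight on the right is divisible by $2^a$, it yields $2^{a-j+1}\mid I_j$ for $1\le j\le a+1$.

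Next I introduce the Boolean function $f\colon C\to\F_2$, $f(\bc)=\wt(\bc)/\Delta\bmod 2$, which is well defined by $2^a$-divisibility. Fixing a generator matrix turns $f$ into a function on $\F_2^k$ whose zero set is exactly $D=\{\bc:2\Delta\mid\wt(\bc)\}$, so $T=\#D=2^k-\wt(f)$, where $\wt(f)$ is the number of ones in the length-$2^k$ truth table of $f$. The key structural claim is $\deg f\le a+1$: expanding $\wt(\sum_{i\in I}\bc_i)$ in terms of the intersection numbers $I_J$ and counting the subsets that contain a given $J$, the $(a+2)$-nd order difference $\sum_{I\subseteq\{1,\dots,a+2\}}f\big(\bc+\sum_{i\in I}\bc_i\big)$ reduces, after division by $2^a$, to $2\cdot(\text{integer})\equiv 0\pmod 2$, so all differences of order $a+2$ vanish identically. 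Hence the truth table of $f$ is a codeword of $\RM_2(k,a+1)$, and the problem becomes one about the weights of this Reed--Muller code.

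With this dictionary, part (i) is immediate from McEliece's divisibility theorem: every weight of $\RM_2(k,a+1)$ is divisible by $2^{\lceil k/(a+1)\rceil-1}=2^{\lfloor(k-1)/(a+1)\rfloor}$, and since $2^k$ is a fortiori divisible by this power, so is $T=2^k-\wt(f)$. For (ii) and (iii) I would invoke the Kasami--Tokura classification of codewords of $\RM_2(k,a+1)$ of weight below twice the minimum distance $d=2^{k-a-1}$. In case (ii), $T<2^{k-a}=2d$ says that the complement $\bar f=1+f$ (still of degree $\le a+1$, since the all-ones vector lies in $\RM_2(k,a+1)$) has weight $T<2d$; Kasami--Tokura then forces $T=2^{k-a}-2^{k-a-t}$ and bounds $t$ by $\max\{\alpha,\beta\}$. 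In case (iii), the same theorem applied to $f$ itself, whose weight $2^k-T<2d$, gives $\wt(f)=2^{k-a}-2^{k-a-t}$ and hence $T=2^k-2^{k-a}+2^{k-a-t}$.

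The main obstacle is the last step: extracting the explicit subcodes from the Kasami--Tokura normal form. A near-minimal-weight Reed--Muller codeword is affinely equivalent to a product of an affine flat with a function of small support, and one must translate this geometric description of $\supp(\bar f)=D$ (respectively $\supp(f)$) back into a genuine \emph{linear} subspace of $C$ contained in $D$; such a subspace automatically consists of codewords of weight divisible by $2\Delta$ with minimum nonzero weight $\ge\delta$, which produces the advertised $[n,k-a-t,\delta]_2$- and $[n,k-a-2,\delta]_2$-subcodes. Tracking the precise dimension drop, the parameter ranges governed by $\alpha=\min\{k-a-1,a+1\}$ and $\beta=\lfloor(k-a+1)/2\rfloor$, and the special behaviour for $a=1$ (where $\RM_2(k,2)$ is the setting of quadratic forms and the structure is read off from the rank of the associated bilinear form $B(\bc,\bc')=I_2/2^{\,a-1}\bmod 2$) is exactly the bookkeeping that constitutes the technical heart of the argument.
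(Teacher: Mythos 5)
You should note at the outset that the paper contains no proof of this proposition: it is imported verbatim from \cite[Proposition 5]{dodunekov1999some}, and the only in-paper indication of a method is the remark immediately afterwards (phrased for Proposition~\ref{prop_num_4_div_codewords}) that such statements come from viewing partial sums of the $A_i$ as weights of Reed--Muller codewords, with pointers to \cite{simonis1994restrictions,phd_guritman}. Your proposal reconstructs exactly that route, and the parts you actually execute are correct: the finite-difference computation showing that $f(\bc)=\wt(\bc)/\Delta\bmod 2$ has algebraic degree at most $a+1$, i.e.\ that its truth table lies in $\RM_2(k,a+1)$, is sound (expanding weights in intersection numbers, an $(a+2)$-nd order difference of the integer weights equals $2^{a+2}X+2^{a+1}Y$ for integers $X,Y$, hence vanishes modulo $2$ after division by $2^a$); part (i) follows from Ax--McEliece together with $\lceil k/(a+1)\rceil-1=\lfloor(k-1)/(a+1)\rfloor$; and Kasami--Tokura applied to $1+f$ resp.\ $f$ gives the weight formulas of (ii) and (iii) with precisely the ranges $\alpha=\min\{a+1,k-a-1\}$ (two-monomial family) and $\beta=\lfloor(k-a+1)/2\rfloor$ (quadratic family).

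The genuine gap is the part you defer as ``bookkeeping'': the subcode claims, which are a substantive portion of the statement and are exactly where the dichotomies in (ii) and (iii) originate. Concretely, Kasami--Tokura only provides an \emph{affine} equivalence $1+f=v\circ A$; a flat contained in $\supp(v)$ pulls back to a flat inside $D$ that need not pass through the origin, and $D$ is not translation invariant, so this does not yet produce a linear subcode. What is needed is that through \emph{every} point of $\supp(v)$ --- in particular through $A(\zv)$, which is where $f(\zv)=0$ enters --- there is a flat of the claimed dimension inside $\supp(v)$; this is a Witt-type homogeneity argument on the two normal forms, and carrying it out is what yields $k-a-t$ for $t\le\beta$ versus $k-a-2$ for $t>\beta$ in (ii). Case (iii) is worse: the zero set of the normal form $y_1\cdots y_{r-\mu}\left(y_{r-\mu+1}\cdots y_r+y_{r+1}\cdots y_{r+\mu}\right)$ contains hyperplanes only along its prefix coordinates, its zeros fall into several types through which the maximal interior flats have different dimensions, and the exceptional clause $t=a+1\le k-a-1$ of (iii) is exactly the configuration where the prefix is empty; for $a=1$ one should instead invoke Dickson's classification of quadratic forms under the \emph{linear} group, where $f(\zv)=0$ and $\wt(f)<2^{k-1}$ force the hyperbolic type and give a totally singular subspace of codimension $t$ through the origin. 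None of this is in your write-up, so as it stands you have proved (i) and the weight statements of (ii) and (iii), but not the subcode statements.
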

\begin{nexample}
  \label{ex_no_32_10_8_16_24_2_code}
  An implication of Proposition~\ref{prop_div_one_more} is that no projective $[32,10,\{8,16,24\}]_2$-code exists, see \cite{kiermaier2020strongly} for the context and an application.  
  From the first three Mac Williams equations we compute $A_8 = 61$, $A_{16} = 899$, and $A_{24} = 63$. Applying Proposition~\ref{prop_div_one_more} with $a=3$ 
  gives $\Delta = 8$, $\alpha = 4$, $\beta= 4$, $\delta= 16$, and $T = 900$. As required by Part (i), $T$ is divisible by $4$. However, Part (iii) gives $t=5$, which 
  contradicts $0 \le t \le \max\{\alpha, \beta\}=4$, so that such a code cannot exist.
\end{nexample}

The general idea behind Proposition~\ref{prop_num_4_div_codewords} is to consider $\sum_{i\in I} A_i$, for some subset $I\subseteq \{1,\dots, n\}$, as weights of codewords 
in (generalized) Reed-Muller codes, see \cite{phd_guritman,simonis1994restrictions} for the details. It is well known that the occurring weights of generalized Reed-Muller codes have 
some gaps, e.g.:
\begin{nproposition}{(\cite{mceliece1969quadratic}, see also \cite{li2019weight})}
Let $C$ be a second order $q$-ary generalized Reed-Muller code of length $q^k$. Then, all non-zero weights of $C$ are of the form
\begin{equation}
  q^k-q^{k-1}-\nu q^{k-1-j},
\end{equation}
where $\nu\in\{0,\pm 1,\pm(q-1)\}$ and $0\le j\le \left\lfloor k/2\right\rfloor$.
\end{nproposition}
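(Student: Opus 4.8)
The plan is to pass from codewords to polynomial functions and reduce the weight computation to counting zeros of affine quadrics. A codeword of $C=\RM_q(k,2)$ is the evaluation vector $\left(f(\bx)\right)_{\bx\in\F_q^k}$ of a reduced polynomial function $f\colon\F_q^k\to\F_q$ of degree at most $2$, and its Hamming weight equals $\wt(f)=q^k-Z(f)$ with $Z(f):=\#\{\bx\in\F_q^k:f(\bx)=0\}$. The affine group $\operatorname{AGL}(k,q)$ permutes the coordinates of $\F_q^k$ and preserves the degree filtration, hence acts on $C$ by weight-preserving automorphisms; likewise $f$ and $\lambda f$ with $\lambda\in\F_q^\ast$ have the same zero set and the same degree. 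Therefore $Z(f)$, and with it the weight, depends only on the orbit of $f$ under $\operatorname{AGL}(k,q)\times\F_q^\ast$, and it suffices to evaluate $Z$ on a system of normal forms.

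I would write $f=Q+\ell$ with $Q$ the homogeneous quadratic part and $\ell$ affine, and let $\rho$ denote the rank of $Q$. Using the classification of quadratic forms over $\F_q$ one reduces $f$, after an affine substitution, to one of two shapes: either the linear part can be absorbed, giving $f\sim Q_\rho(y_1,\dots,y_\rho)+c$ for a nondegenerate form $Q_\rho$ and a constant $c$; or it cannot, giving $f\sim Q_\rho(y_1,\dots,y_\rho)+y_{\rho+1}$. In the non-absorbable case, for each choice of $(y_1,\dots,y_\rho)$ exactly one value of $y_{\rho+1}$ solves $f=0$, so $Z(f)=q^{k-1}$ and $\wt(f)=q^k-q^{k-1}$; this is precisely the value $\nu=0$.

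In the absorbable case $Z(f)=q^{k-\rho}\cdot N_\rho(-c)$, where $N_\rho(b)$ is the number of solutions of $Q_\rho(\by)=b$ in $\F_q^\rho$. Feeding in the classical solution counts for quadratic equations over $\F_q$ (for odd $q$ these are obtained by completing the square and evaluating the resulting quadratic character sums) gives, with $\varepsilon\in\{+1,-1\}$ recording the type of $Q_\rho$ and $s=\lfloor\rho/2\rfloor$,
\begin{equation}
  \wt(f)=q^k-q^{k-1}-\nu\, q^{\,k-1-s},\qquad
  \nu\in\begin{cases}
    \{\pm(q-1)\}, & \rho\text{ even},\ c=0,\\
    \{\pm1\}, & \rho\text{ even},\ c\neq 0,\\
    \{0\}, & \rho\text{ odd},\ c=0,\\
    \{\pm1\}, & \rho\text{ odd},\ c\neq 0.
  \end{cases}
\end{equation}
In every case $\nu\in\{0,\pm1,\pm(q-1)\}$ and the exponent is $j=s=\lfloor\rho/2\rfloor$; since $\rho\le k$ we have $0\le j\le\lfloor k/2\rfloor$, as claimed. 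The nonzero constants $f=c$ appear here as $\rho=0$, $c\neq 0$, yielding $\wt(f)=q^k$ with $\nu=-1$, $j=0$, and the affine non-constant functions give $\nu=0$.

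The main obstacle is characteristic $2$, where completing the square is unavailable and the polar form $B(\bx,\by)=Q(\bx+\by)-Q(\bx)-Q(\by)$ is alternating rather than symmetric. There the normal-form reduction must instead symplectically split off hyperbolic planes $y_iy_{i+1}$ and isolate the radical of $B$, on which $Q$ behaves like an additive (Frobenius-semilinear) map; the type $\varepsilon$ is then measured by the Arf/Dickson invariant rather than by a discriminant. One must re-derive the counts $N_\rho(b)$ in this setting, but they again consist of a main term $q^{\rho-1}$ and a correction of order $q^{\lfloor(\rho-1)/2\rfloor}$ carrying the same admissible signs, so the bookkeeping of the previous paragraph goes through unchanged and the weights land in the asserted set. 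I expect this characteristic-$2$ count, together with verifying that the absorbable/non-absorbable dichotomy is exhaustive and stable under the group action, to be the only genuinely technical part.
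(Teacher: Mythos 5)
For context: the paper itself gives no proof of this proposition --- it is imported by citation from \cite{mceliece1969quadratic} (see also \cite{li2019weight}) --- so your attempt has to be judged against the standard argument in those sources, which is exactly your route: pass to reduced polynomial functions of degree at most $2$, reduce modulo $\operatorname{AGL}(k,q)\times\F_q^\ast$ to normal forms, and count zeros. For odd $q$ your proof is complete and correct: the absorbable/non-absorbable dichotomy is exhaustive there, the counts $N_\rho(b)$ you invoke are the classical ones, and your table reproduces precisely the admissible pairs $(\nu,j)$ with $j=\lfloor\rho/2\rfloor\le\lfloor k/2\rfloor$.

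The genuine gap is in characteristic $2$, and it is not just the deferred technical count: the structural claim you propose to verify --- that the absorbable/non-absorbable dichotomy is exhaustive --- is false for even $q$. Take $f=x_1^2+x_1+c$ over $\F_q$, $q=2^m$. Both of your normal forms have exactly $q^{k-1}$ zeros: in the absorbed shape $Q_\rho(y_1,\dots,y_\rho)+c'$ with $\rho=1$ the Frobenius $y\mapsto y^2$ is bijective, so $y_1^2=c'$ has a unique root, and in the shape $Q_\rho+y_{\rho+1}$ there is exactly one $y_{\rho+1}$ per choice of the remaining coordinates. But $f$ has $2q^{k-1}$ zeros when $\operatorname{Tr}(c)=0$ and none otherwise (Artin--Schreier), so it lies in neither orbit; the same phenomenon persists in higher rank, e.g.\ for $x_1x_2+\cdots+x_{2s-1}x_{2s}+x_{2s+1}^2+x_{2s+1}+c$, which arises whenever the linear part interacts with the (Frobenius-semilinear) restriction of $Q$ to the radical of the polar form. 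These third-class orbits have to be counted separately; doing so gives $Z=q^{k-1}\pm q^{k-1-s}$, hence weights $q^k-q^{k-1}\mp q^{k-1-s}$, i.e.\ $\nu=\pm 1$ and $j=s$. So the proposition itself survives, but this is an additional case your bookkeeping does not cover, not a sign-for-sign repetition of the odd-characteristic table. Two smaller inaccuracies in the same paragraph: the corrections for even rank $\rho=2s$ have order $q^{s}$ (for $b=0$) and $q^{s-1}$ (for $b\neq 0$), not $q^{\lfloor(\rho-1)/2\rfloor}$; and the even-$q$ analogue of the odd-rank case gives $N_\rho(b)=q^{\rho-1}$ for \emph{all} $b$ (so $\nu=0$), unlike odd characteristic where $b\neq 0$ yields $\nu=\pm1$ --- another point where the two characteristics genuinely diverge while staying inside the claimed set.
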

For more such {\lq\lq}gap{\rq\rq} results we refer to e.g.\ \cite{phd_guritman}. Results similar to Proposition~\ref{prop_div_one_more} 
for field sizes $q\in\{3,4\}$ were used in e.g.\ \cite{phd_guritman,guritman2001degree,guritman2000nonexistence,guritman2002restrictions}.  

\begin{trailer}{$\Delta$-divisible codes spanned by codewords of weight $\Delta$}The characterization of indecomposable self-orthogonal binary codes which are spanned by codewords 
of weight $4$ from \cite[Theorem 6.5]{pless1975classification} was generalized in \cite[Theorem 1]{kiermaier2020classification}:
\begin{ntheorem}
	\label{thm_delta_div_spaned_by_min}
	Let $\Delta$ be a positive integer and let $a$ be the largest integer such that $q^a$ divides $\Delta$.
	Let $C$ be a $q$-ary $\Delta$-divisible linear code that is spanned by codewords of weight $\Delta$.
	Then $C$ is isomorphic to the direct sum of codes of the following form, possibly extended by zero positions:
	\begin{enumerate}
		\item[(i)] The $\frac{\Delta}{q^{k-1}}$-fold repetition of the $q$-ary simplex code of dimension $k\in\{1,\ldots,a+1\}$.
	\end{enumerate}
	In the binary case $q=2$ additionally:
	\begin{enumerate}
		\item[(ii)] The $\frac{\Delta}{2^{k-2}}$-fold repetition of the binary first order Reed-Muller code of dimension $k\in\{3,\ldots,a+2\}$.
		\item[(iii)] For $a \geq 1$: The $\frac{\Delta}{2}$-fold repetition of the binary parity check code of dimension $k \geq 4$.
	\end{enumerate}
	Up to the order, the choice of the codes is uniquely determined by $C$.
\end{ntheorem}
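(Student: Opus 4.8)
The plan is to reduce to prime-power divisibility, then split $C$ along a combinatorially defined graph into indecomposable pieces, and finally identify each piece with one of the listed codes. Write $\Delta=q^a m$ with $\gcd(m,q)=1$. By Theorem~\ref{thm_delta_divides_q_power} the code $C$ is equivalent to the $m$-fold repetition of a $q^a$-divisible code $C'$ (with some zero coordinates appended), and under this equivalence the weight-$\Delta$ words of $C$ correspond exactly to the weight-$q^a$ words of $C'$; in particular $C$ is spanned by weight-$\Delta$ words iff $C'$ is. Since $m$-fold repetition turns the $\tfrac{q^a}{q^{k-1}}$-fold simplex code into the $\tfrac{\Delta}{q^{k-1}}$-fold one (and similarly for the other two families), it suffices to prove the theorem when $\Delta=q^a$, i.e.\ when $\Delta$ is the minimum weight. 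So from now on assume every nonzero weight is a multiple of $q^a$ and the minimum weight is exactly $q^a$.

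\emph{The engine.} Let $W$ be the set of minimum-weight codewords. The key local fact comes from the two-dimensional subcode $\langle\bc,\bd\rangle$ spanned by two independent members $\bc,\bd\in W$: it is $q^a$-divisible of minimum weight $q^a$, so its associated multiset of points lies on the projective line $\PG(1,q)$ and is heavily constrained. Writing the $q+1$ point multiplicities as $n'-q^ab_i$ with $b_i\ge 1$ gives $n'=q^{a-1}\sum_i b_i$ and pins down the admissible support intersections $\#(\supp(\bc)\cap\supp(\bd))$. For $q=2$ this collapses to the clean dichotomy $\#(\supp(\bc)\cap\supp(\bd))\in\{0,q^{a-1}\}=\{0,\Delta/2\}$ for distinct $\bc,\bd$, because $\wt(\bc+\bd)=2\Delta-2\#(\supp(\bc)\cap\supp(\bd))$ must be a multiple of $\Delta$ lying in $[0,2\Delta]$ and equal supports would force $\bc=\bd$.

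\emph{Decomposition and uniqueness.} Define a graph on $W$ (for $q>2$, on the $1$-spaces $\langle\bc\rangle$) by joining two words whose supports meet. If $\bc,\bd$ lie in different connected components then, by the definition of the edges, their supports are disjoint; hence the coordinate sets $X_1,X_2,\dots$ occupied by the various components are pairwise disjoint, and $C=\bigoplus_j D_j$, where $D_j$ is the span of the $j$-th component and the coordinates outside every $X_j$ are zero positions. Each $D_j$ is again $q^a$-divisible, spanned by weight-$q^a$ words, indecomposable, and has connected component graph. Because this partition of $W$ is intrinsic to $C$ and the three families are separated by their invariants (length, dimension, weight enumerator), the uniqueness statement follows once the indecomposable pieces are identified.

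\emph{Classifying the connected pieces (the hard part).} It remains to show that a connected indecomposable $D$ is, up to coordinate permutation and zero positions, one of (i)--(iii). I would read the supports of the minimum-weight words as the blocks of an incidence structure on the support coordinates, carrying the two intersection numbers $0$ and $q^{a-1}$ from the engine lemma, and aim to recognise this as the point/hyperplane geometry of a projective space (the simplex family), of an affine space (the binary $\RM_2(k-1,1)$ family), or the ``graphic'' even-weight configuration $\{\be_i+\be_j\}$ (the binary parity-check family). The natural route is induction on $a$: restricting $D$ to the complement of a fixed minimum-weight support yields, by Lemma~\ref{lemma_heritable}, a $q^{a-1}$-divisible residual code, and the rigidity lemmas for small divisible sets (Lemma~\ref{lemma_div_implies_subspace} and Lemma~\ref{lemma_disjoint_union_two_subspaces}) let one read off how the blocks fit; the base case $a=0$ is the trivial unit code, and $a=1$ is precisely the analysis of $\{\be_i+\be_j\}$ as the edge set of a connected graph, whose $\F_2$-span is the even-weight (parity-check) code. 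The delicate point is controlling the interplay of the two intersection numbers so that the connected block geometry is forced to be exactly one of these three, and in particular explaining why for $q>2$ only the simplex configuration survives: the extra scalars obstruct both the affine and the graphic configurations, and the analysis of the $2$- and $3$-dimensional subcodes forces all minimum-weight directions to carry equal multiplicity, i.e.\ a full simplex. This rigidity argument is where essentially all the effort lies; the stated ranges of $k$ are then dictated solely by the requirement that the repetition factors $\tfrac{\Delta}{q^{k-1}}$, $\tfrac{\Delta}{2^{k-2}}$, and $\tfrac{\Delta}{2}$ be positive integers, which is exactly what makes the three lists disjoint and jointly exhaustive.
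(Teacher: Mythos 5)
The paper itself gives no proof of this theorem --- it is quoted from \cite[Theorem 1]{kiermaier2020classification} --- so your proposal has to stand on its own. Your preprocessing is sound: the reduction to $\Delta=q^a$ via Theorem~\ref{thm_delta_divides_q_power}, the two-dimensional subcode computation (which for $q=2$ correctly yields $\#(\supp(\bc)\cap\supp(\bd))\in\{0,\Delta/2\}$ for distinct minimum-weight words), the direct-sum decomposition of $C$ along connected components of the support-intersection graph, and the reduction of the uniqueness claim to connectivity of the listed codes plus separation of the three families by their invariants.

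The genuine gap is the step you yourself flag as ``where essentially all the effort lies'': you never prove that a connected piece is one of (i)--(iii). This is not a detail to be filled in; it is the entire content of the theorem. Nothing in your argument yields the dimension bounds $k\le a+1$ in (i) and $k\le a+2$ in (ii), the threshold $k\ge 4$ in (iii), or the absence of non-simplex families when $q>2$. Worse, the route you sketch for $q>2$ starts from a false premise: you claim the block structure of the minimum-weight supports carries ``the two intersection numbers $0$ and $q^{a-1}$'', but your own engine computation gives $\#(\supp(\bc)\cap\supp(\bd))=2q^a-n'=q^{a-1}\left(2q-\sum_i b_i\right)$, which can be any multiple of $q^{a-1}$ between $0$ and $q^a-q^{a-1}$; already the $q^{a-1}$-fold repetition of the two-dimensional simplex code has pairwise intersections $(q-1)q^{a-1}>q^{a-1}$ once $q>2$. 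The proposed induction on $a$ via residual codes has a further unaddressed obstruction: the residual of a connected piece with respect to a minimum-weight word $\bc$ is $q^{a-1}$-divisible by Lemma~\ref{lemma_heritable} and is spanned by the residuals of the other minimum-weight words, but those residuals have weights ranging over all multiples of $q^{a-1}$ up to $q^a$, so the residual code need not be spanned by its weight-$q^{a-1}$ words and the induction hypothesis does not apply to it. What you have, then, is correct preprocessing plus a plan whose key rigidity step is missing and, as formulated for $q>2$, rests on an incorrect intersection pattern.
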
    
\end{trailer}
We remark that if $C=C_1\oplus C_2\oplus\dots\oplus C_l$ is the direct sum of $l$ linear codes $C_i$, then we have
$$
  W_C(x)=W_{C_1}(x) W_{C_2}(x)\dots W_{C_l}(x) 
$$
for the weight enumerator. We have $A(C_i)_\Delta=[k]_q$, $A(C_i)_\Delta=[k]_2-1$, and $A(C_i)_\Delta={{k+1}\choose 2}$, in cases (i), (ii), and (iii) of 
Theorem~\ref{thm_delta_div_spaned_by_min}, respectively. This can of course be used to compute $A(C)_\Delta$.
\begin{nexercise}
  Let $C$ be a binary linear code with non-zero weights in $\{8,16,24\}$ that is spanned by codewords of weight $8$. Then, we have
  \begin{eqnarray*}
    A_8&\in& \{0,1,2,3,4,6,7,8,9,10,11,13,14,15,\\&&16,17,18,21,22,25,29,30,31,33,37,45\}.
  \end{eqnarray*}
\end{nexercise}
Note that the non-existence result in Example~\ref{ex_no_32_10_8_16_24_2_code} is a direct implication.
\begin{nexercise}
  Let $a\in\N_{\ge 3}$, $\Delta=2^a$, and $C$ be a (projective) $\Delta$-divisible $[4\Delta,k]_2$-code. Show $k\le 2a+4$, cf.\ \cite[Theorem 4]{liu2010binary}. 
\end{nexercise}

%% \bibliography{chap_div_codes}
%% \bibliographystyle{plain}
%% \end{document}

\end{document}